\numberwithin{equation}{section}
\DeclareMathOperator{\sgn}{sgn}
\DeclareMathOperator{\Tr}{Tr}
\theoremstyle{plain}
\newtheorem{theorem}{Theorem}
\newtheorem{proposition}[theorem]{Proposition}
\newtheorem{lemma}[theorem]{Lemma}
\newtheorem{corollary}[theorem]{Corollary}
\newtheorem{definition}[theorem]{Definition}
\newtheorem{example}[theorem]{Example}
\newtheorem{remark}[theorem]{Remark}
\title{Fisher-Hartwig Asymptotics \\[1ex] 
        and Log-Correlated Fields \\[1ex]
        in Random Matrix Theory}   %note \\[1ex] is a line break in the title
\author{Johannes Forkel}             %your name
\begin{document}

%this baselineskip gives sufficient line spacing for an examiner to easily
%markup the thesis with comments
\baselineskip=16pt plus1pt

%set the number of sectioning levels that get number and appear in the contents
\setcounter{secnumdepth}{3}
\setcounter{tocdepth}{3}

\maketitle                  % create a title page from the preamble info
\begin{dedication}

This thesis is dedicated to my beloved parents.

\end{dedication}        % include a dedication.tex file
\begin{acknowledgements}

During the past four years of my DPhil I was supported by many people in various ways.

First of all I want to thank my supervisor Jon Keating, for making it possible for me to do my DPhil in Oxford, for suggesting interesting problems for me to work on, for his encouragement at all stages of my DPhil, for supporting me also in exploring areas that are only remotely related to the topic of my DPhil, like attending a machine learning conference and doing an internship in finance, and for supporting me in many other ways.

Next, I want to thank Tom Claeys, Mo Dick Wong, and Isao Sauzedde, for the very pleasant collaborations, and for greatly contributing to my understanding of different areas of mathematics. In particular I want to thank Tom Claeys for inviting me to visit Louvain-la-Neuve, an enriching experience I thoroughly enjoyed.

I also want to thank my examiners Ben Hambly and Igor Krasovsky, for their careful reading of this thesis and for their valuable comments and suggestions.

Throughout the past four years I have had countless enlightening conversations about mathematics with many people, which I am very grateful for. In particular I want to thank Theo Assiotis, who in the first year of my DPhil helped me to get oriented in the vast field of random matrix theory, and Brian Conrey, with whom I had inspiring discussions at the American Institute of Mathematics.

I also want to thank all of my friends and colleagues who've made my time in Oxford so enjoyable. In particular I want to thank, in no specific order, Alexis Chevalier,  Andreas Heilmann,  Anna Glieden,  Annina Lieberherr,  Asha Vettour,  Axel Schmid,  Camila Ramos,  Constantin Kogler,  David Seiferth,  Elle Yang,  Estelle Paulus,  Francis Aznaran,  Hannah Hughes,  Hans Chan,  Haonan Xu,  Håvard Damm-Johnsen,  Jessica Ceynar,  Kirsten Smith,  Konrad Marx,  Kyung Chan Park,  Lorenz Olbrich,  Maxime Kayser,  Nils Fitzian,  Samuel Fink,  Saoud Al-Khuzaei,  Shayan Hundrieser (also for his helpful comments on a draft of this thesis),  Weilin Wang,  Wissam Ghantous, and all members of the Queen's College Boat Club.

Finally, I want to express my profound gratitude to my family for their unwavering support throughout this journey.
\end{acknowledgements}   % include an acknowledgements.tex file
\begin{abstract}
    
This thesis is concerned with establishing and studying connections between random matrices and log-correlated fields. This is done with the help of formulae, including some newly established ones, for the asymptotics of Toeplitz, and Toeplitz+Hankel determinants with Fisher-Hartwig singularities.
    
In Chapter 1, we give an introduction to the mathematical objects that we are interested in. In particular we explain the relations between the characteristic polynomial of random matrices, log-correlated fields, Gaussian multiplicative chaos, the moments of moments, and Toeplitz and Toeplitz+Hankel determinants with Fisher-Hartwig singularities. Chapter 1 is based on joint work with Jon Keating \cite{Forkel2021}, Tom Claeys and Jon Keating \cite{CFK23}, and Isao Sauzedde \cite{FS22}.

In Chapter 2, based on joint work with Jon Keating \cite{Forkel2021}, we establish formulae for the asymptotics of Toeplitz, and Toeplitz+Hankel determinants with two complex conjugate pairs of merging Fisher-Hartwig singularities. We prove those formulae using Riemann-Hilbert techniques, which are heavily inspired by the ones used by Deift, Its, and Krasovsky \cite{Deift2011}, and Claeys and Krasovsky \cite{Claeys2015}.

In Chapter 3, based on joint work with Jon Keating \cite{Forkel2021}, we complete the connection between the classical compact groups and Gaussian multiplicative chaos, by showing that analogously to the case of the unitary group first established by Webb \cite{Webb2015}, the characteristic polynomial of random orthogonal and symplectic matrices, when properly normalized, converges to a Gaussian multiplicative chaos measure on the unit circle. 

In Chapter 4, based on joint work with Tom Claeys and Jon Keating \cite{CFK23}, we compute the asymptotics of the moments of moments of random orthogonal and symplectic matrices, which can be expressed in terms of integrals over Toeplitz+Hankel determinants. The phase transitions we observe are in stark contrast to the ones proven for the unitary group by Fahs \cite{Fahs2021}. 

In Chapter 5, based on joint work with Isao Sauzedde \cite{FS22}, we establish convergence in Sobolev spaces, of the logarithm of the characteristic polynomial of unitary Brownian motion to the Gaussian free field on the cylinder, thus proving the dynamical analogue of the classical stationary result by Hughes, Keating and O'Connell \cite{Hughes2005}. 

\end{abstract}          % include the abstract
\begin{originalitylong}
    This thesis is a result of my own efforts, except where indicated otherwise.
\end{originalitylong}

\begin{romanpages}          % start roman page numbering
\tableofcontents            % generate and include a table of contents
%\listoffigures              % generate and include a list of figures
\end{romanpages}            % end roman page numbering

\chapter{Introduction} \label{chapter:Intro}

Characteristic polynomials of large random matrices are fundamental objects in random matrix theory. Besides encoding the eigenvalues of the random matrices, their statistics show remarkable similarities with those of the Riemann zeta function and other number-theoretic $L$-functions on the critical line \cite{Conrey2005, Fyodorov2012, FK14, Keating2000b, Keating2000c, Saksman2020, Arguin2019} - see \cite{Bailey2022} for a review. They are also closely connected to the theory of log-correlated fields and to Gaussian multiplicative chaos, a connection that we are concerned with in this thesis. See for example \cite{Spohn1998, Hughes2005, Bourgade2022, Webb2015, Nikula2020, Forkel2021, Chhaibi2018a, PZ18, PZ22, Najnudel2023, Lambert2021, KW22} for results on the circular ensembles, \cite{Joh98, Berestycki2018, Claeys2019, Lam20} for Hermitian ensembles and the complex Ginibre ensemble, among others. For many classical random matrix ensembles the limit, as the matrix size goes to infinity, of the logarithm of the characteristic polynomial, has been identified to be a log-correlated Gaussian field, i.e. a Gaussian random generalized function whose covariance kernel blows up logarithmically on the diagonal. Such fields cannot be defined pointwise as their value at each point would be a Gaussian with infinite variance, thus they need to be understood either as stochastic processes indexed by test functions, or as random elements in a space of generalized functions, for example Sobolev spaces of negativ regularity. However log-correlated fields can still be exponentiated in a certain sense, which gives rise to a family of random measures, called \textit{Gaussian multiplicative chaos} measures. For various random matrix ensembles it has been shown that powers of the the exponential of the real and imaginary part of the logarithm of the characteristic polynomial, after an appropriate normalization procedure, converge to such Gaussian multiplicative chaos measures, as the matrix size goes to infinity. \\

The log-correlated fields and Gaussian multiplicative chaos measures arising from the characteristic polynomials of random matrices are also closely related to the so called \textit{moments of moments} of the characteristic polynomial, which have received a lot of attention recently \cite{Fyodorov2012, FK14, Claeys2015, Fyodorov2018, Bailey2019, Assiotis2019, Assiotis2020, BarhoumiAndreani2020, Fahs2021, Andrade2022, Bailey2022, KW22, CFK23}. The term {\em moments of moments} refers to the fact that one first takes a moment of the characteristic polynomial with respect to the spectral variable, and then a moment with respect to the random matrix distribution. The moments of moments can be viewed as the moments of the partition function of the random energy landscape given by the logarithm of the characteristic polynomial, thus their asymptotics can be interpreted as the moments of the partition function of the limiting log-correlated field. Furthermore they are related to the moments of the total mass of the Gaussian multiplicative chaos measures associated to the characteristic polynomial. \\

In this thesis we prove results regarding log-correlated fields, Gaussian multiplicative chaos, and the moments of moments in random matrix theory. In our proofs we often need to evaluate certain expectations over random matrix ensembles which can be expressed in terms of Toeplitz, Hankel, or Toeplitz+Hankel determinants with Fisher-Hartwig singularities. The asymptotics of such determinants have been studied for a long time and with various techniques (see the references in Section \ref{section:FH} and Chapter \ref{chapter:FH}), and we will prove certain new asymptotic formulae for Toeplitz and Toeplitz+Hankel determinants which we will then use in the proof of our results regarding Gaussian multiplicative chaos.\\

%This thesis is based on the following three papers: "The Classical Compact Groups and Gaussian Multiplicative Chaos" \cite{Forkel2021}, which is joint work with Jon Keating. "Moments of moments of the characteristic polynomials of random orthogonal and symplectic matrices" \cite{Claeys2022}, which is joint work with Tom Claeys and Jon Keating. "Convergence of the logarithm of the characteristic polynomial of unitary Brownian motion in Sobolev space" \cite{FS22}, which is joint work with Isao Sauzedde.\\

In the following sections of this introductory chapter we define all the mathematical objects that we are concerned with, explain how they are related to each other, point out the results that we contributed, and then provide an outline to the rest of the thesis. We start by recalling some basic properties of random matrices from the classical compact groups. 

\section{Random Matrices from the Classical Compact Groups} \label{section:CCG}
In this section we follow Sections 1 and 3 of \cite{Meckes2019} and recall the definitions of the classical compact groups and state basic properties of Haar distributed matrices from those groups. We then follow \cite{FS22} for a defintion of unitary Brownian motion.
 
\begin{definition} \label{def:groups}
\begin{enumerate}
\item An $n \times n$ matrix $U$ over $\mathbb{R}$ is \textbf{orthogonal} if
\begin{equation}
UU^T = U^TU = I,
\end{equation}
where $I$ denotes the identity matrix, and $U^T$ is the transpose of $U$. The group of $n \times n$ orthogonal matrices over $\mathbb{R}$ is denoted $O(n)$. 

\item The set $SO(n) \subset O(n)$ of \textbf{special orthogonal matrices} is defined by 
\begin{equation}
SO(n):= \left\{ U \in O(n): \det (U) = 1 \right\}.
\end{equation}

\item The set $SO^-(n) \subset O(n)$ (the \textbf{negative coset}) is defined by 
\begin{equation}
SO^-(n):= \left\{ U \in O(n): \det (U) = -1 \right\}.
\end{equation}

\item An $n \times n$ matrix $U$ over $\mathbb{C}$ is \textbf{unitary} if
\begin{equation}
UU^* = U^*U = I,
\end{equation}
where $U^*$ is the conjugate transpose of $U$. The group of $n \times n$ unitary matrices over $\mathbb{C}$ is denoted $U(n)$.

\item 
The set $SU(n) \subset U(n)$ of \textbf{special unitary matrices} is defined by 
\begin{equation}
SU(n):= \left\{ U \in U(n): \det (U) = 1 \right\}.
\end{equation}

\item An $2n \times 2n$ matrix $U$ over $\mathbb{C}$ is \textbf{symplectic} if $U \in U(2n)$ and 
\begin{equation}
UJU^T = U^TJU = J,
\end{equation}
where
\begin{equation}
J = \left( \begin{array}{cc} 0 & I \\ -I & 0 \end{array} \right).
\end{equation}
The group of $2n \times 2n$ symplectic matrices over $\mathbb{C}$ is denoted $Sp(2n)$. Note that $Sp(2n)$, defined like this, is isomorphic to the group of $n \times n$ matrices $U$ with entries in the quaternions $\mathbb{H}$, that satisfy $U^*U = I$, where $U^*$ denotes the quaternionic conjugate transpose of $U$ (see \cite{Meckes2019}[p.12] for details). Recall that $\mathbb{H} = \{a + bi + cj + dk: \, a, b, c, d \in \mathbb{R} \}$, with $i,j,k$ satisfying the relation $i^2 = j^2 = k^2 = ijk = -1$, and that quaternionic conjugation is defined by 
\begin{align}
    \overline{a + bi + cj + dk} = a - bi - cj - dk.
\end{align}
Thus $Sp(2n)$ can also be defined as the isometry group of $\mathbb{H}^n$.
\end{enumerate}
\end{definition}

Let $G$ denote any of $O(n)$, $SO(n)$, $U(n)$, $SU(n)$ or $Sp(2n)$. We want to define a natural probability measure on the Borel $\sigma$-algebra of $G$. On the unit circle $U(1)$ the most natural probability measure is the uniform measure, and it can be defined as the unique rotation-invariant probability measure on $U(1)$. Similarly we want a probability measure $\mu$ on $G$ that is \textit{translation invariant}; that is, for any measurable subset $A \subset G$ and fixed $M \in G$ it holds that 
\begin{equation}
\mu(MA) = \mu(AM) = \mu(A),
\end{equation}
where $MA:= \{ MU:U \in A\}$ and $AM:= \{ UM:U \in A\}$.   

The following theorem (which is not stated here in its most general form), due to Alfr\'{e}d Haar \cite{Haa33}, states that such a measure on $G$ exists and that it is unique:

\begin{theorem} [\cite{Haa33}]
Let $G$ be any of $O(n)$, $SO(n)$, $U(n)$, $SU(n)$ or $Sp(2n)$. Then there is a unique translation-invariant probability measure (called \textbf{Haar measure}) on $G$.
\end{theorem}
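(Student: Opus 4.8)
The plan is to separate the statement into \emph{existence} and \emph{uniqueness}, and to take advantage of the fact that each of $O(n)$, $SO(n)$, $U(n)$, $SU(n)$, $Sp(2n)$ sits inside a space of matrices $\mathrm{Mat}_N(\mathbb{C})$ (with $N=n$ in the first four cases and $N=2n$ in the last), rather than invoking the general construction of Haar measure for abstract compact topological groups. For existence, the central observation is that in each case $G$ consists of matrices $M$ with $M^*M=I$ (real orthogonal matrices in the case of $O(n)$, $SO(n)$), so that both left translation $U\mapsto MU$ and right translation $U\mapsto UM$ are restrictions to $G$ of linear isometries of the ambient Euclidean space $\big(\mathrm{Mat}_N(\mathbb{C}),\ \langle A,B\rangle=\Re\Tr(AB^*)\big)$: indeed $\Re\Tr\big((MA)(MB)^*\big)=\Re\Tr\big(AB^*M^*M\big)=\Re\Tr(AB^*)$, and symmetrically for right multiplication.

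Each such $G$ is a closed and bounded subset of $\mathrm{Mat}_N(\mathbb{C})$, hence compact, and it is an embedded submanifold --- either by Cartan's closed-subgroup theorem, or by verifying directly that the defining equations cut $G$ out as a submanifold. The Euclidean inner product above therefore restricts to a Riemannian metric on $G$, and by the trace identity of the previous paragraph this metric is invariant under every left and every right translation. Since isometries preserve Riemannian volume, the Riemannian volume measure $\mathrm{vol}_G$ is likewise invariant under all left and right translations; it is finite because $G$ is compact, and it gives positive mass to every non-empty open subset. Normalising, $\mu:=\mathrm{vol}_G/\mathrm{vol}_G(G)$ is then a translation-invariant Borel probability measure on $G$, which proves existence. (Equally, one could build a bi-invariant top-degree form on $G$ by left-translating a nonzero alternating form on $T_eG$ and then observing, via the modular function, that compactness forces it to be right-invariant as well; the Riemannian route is just the shortest here.)

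For uniqueness, let $\mu$ and $\nu$ both be translation-invariant Borel probability measures on $G$, and fix $f\in C(G)$. Right-invariance of $\mu$ gives $\int_G f(gh)\,d\mu(g)=\int_G f(g)\,d\mu(g)$ for every fixed $h\in G$; integrating against $d\nu(h)$ and exchanging the order of integration (permissible by Fubini, since $\mu$ and $\nu$ are finite Borel measures on a compact metric space and $(g,h)\mapsto f(gh)$ is continuous) yields
\[
\int_G f\,d\mu \;=\; \int_G\!\int_G f(gh)\,d\mu(g)\,d\nu(h) \;=\; \int_G\!\int_G f(gh)\,d\nu(h)\,d\mu(g).
\]
For each fixed $g$, left-invariance of $\nu$ gives $\int_G f(gh)\,d\nu(h)=\int_G f(h)\,d\nu(h)$, so the last expression equals $\int_G f\,d\nu$. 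Hence $\int_G f\,d\mu=\int_G f\,d\nu$ for all $f\in C(G)$, and the Riesz representation theorem (applicable since $G$ is a compact metric space) forces $\mu=\nu$.

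All of the real content sits in the existence half, and inside it the one step doing genuine work is the bi-invariance of the induced Riemannian metric, which, as shown, reduces to a single-line trace computation; the remaining inputs --- that a closed matrix subgroup is an embedded submanifold, that isometries preserve Riemannian volume, Fubini on a compact metric space, and the Riesz representation theorem --- are standard, so I would not expect a serious obstacle for these particular groups. The only points needing mild care are the disconnectedness of $O(n)$, where $\mu$ simply splits its mass equally between $SO(n)$ and $SO^-(n)$, left multiplication by any matrix of determinant $-1$ carrying one component isometrically onto the other, and the degenerate case $SU(1)=\{I\}$, for which $\mu$ is the Dirac mass at the identity.
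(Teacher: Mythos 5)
Your proof is correct. The paper itself does not prove this theorem — it is quoted from Haar's original work \cite{Haa33} — but immediately after the statement it remarks that the Riemannian volume measure on each of these groups is translation invariant and hence is Haar measure, which is exactly the existence argument you give: the bi-invariance of the metric induced by $\langle A,B\rangle=\Re\Tr(AB^*)$ is the one-line trace computation you carry out, and compactness plus normalisation does the rest. Your uniqueness half, the standard Fubini argument pairing right-invariance of $\mu$ with left-invariance of $\nu$ and concluding via Riesz representation, is not discussed in the paper at all but is correct as written (both measures are assumed two-sided invariant, so the asymmetric use of invariance is legitimate), and your handling of the edge cases — the two components of $O(n)$ exchanged isometrically by a determinant $-1$ matrix, and the zero-dimensional groups where the measure degenerates to a point mass or normalised counting measure — closes the only places where the Riemannian-volume route needs extra care. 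In short, you have supplied a complete self-contained proof of a result the paper only cites, along the same lines as the paper's passing remark for existence.
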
 

By conditioning the Haar measure on $O(n)$ onto $SO^-(n)$ we also get a natural probability measure on $SO^-(n)$ (even though it is not a group). In the following all random matrices from the sets in Definition \ref{def:groups} are distributed according to (conditioned) Haar measure.\\

It is easy to see that the Riemannian volume measure on any of the above groups is translation invariant, thus it is Haar measure. Another way to understand Haar measure, which is useful for simulations, is to sample a matrix with i.i.d. Gaussian entries (for $U(n)$ the real and imaginary part are sampled as i.i.d. real Gaussians and similarly for the real and $i,j,k$ parts in the case $Sp(2n)$), and then perform the Gram-Schmidt algorithm on that matrix. In the quaternionic case one needs to then map to the $2n \times 2n$ complex matrix version of $Sp(2n)$. To sample from $SO(n)$, $SO^-(n)$, or $SU(n)$ one needs in an extra step to multiply the final column by the necessary scalar to ensure that the matrix has the desired determinant. \\

Since matrices from $O(n)$, $U(n)$ and $Sp(2n)$ all act as isometries of $\mathbb{C}^n$, all their eigenvalues lie on the unit circle $S^1 \subset \mathbb{C}$. In the orthogonal and symplectic cases there are some further restrictions, which are stated in the following lemma which is easy to prove:

\begin{lemma} 
Each $U \in SO(2n+1)$ has $+1$ as an eigenvalue, each $U \in SO^-(2n+1)$ has $-1$ as an eigenvalue and each $U \in SO^-(2n+2)$ has $+1$ and $-1$ as eigenvalues. The remaining $2n$ non-fixed eigenvalues of matrices in $SO(2n+1)$, $SO^-(2n+1)$ and $SO^-(2n+2)$, as well as the $2n$ eigenvalues of matrices in $SO(2n)$ and $Sp(2n)$ appear in complex conjugate pairs.
\end{lemma}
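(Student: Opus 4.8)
The plan is to analyze the real normal form of orthogonal matrices and the eigenvalue constraints coming from the symplectic condition. First, I would recall that any real orthogonal matrix $U$ is conjugate, via a real orthogonal change of basis, to a block-diagonal matrix consisting of $2 \times 2$ rotation blocks $\begin{pmatrix} \cos\theta & -\sin\theta \\ \sin\theta & \cos\theta \end{pmatrix}$ together with $1 \times 1$ blocks equal to $\pm 1$; this is the standard spectral theorem for orthogonal matrices (equivalently, block-diagonalization of a normal matrix whose eigenvalues lie on $S^1$, grouped into complex-conjugate pairs). Each $2\times 2$ rotation block contributes the conjugate pair $e^{\pm i\theta}$, so its determinant is $+1$, while a $1\times 1$ block contributes its sign to the determinant. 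Hence $\det U = (-1)^{m}$ where $m$ is the number of $-1$ blocks, and the $+1$ eigenvalues come in a number of copies equal to the number of $+1$ blocks.

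Next I would do the parity bookkeeping. For $U \in SO(2n+1)$: the total dimension is odd, the rotation blocks account for an even number of dimensions, so the number of $\pm 1$ blocks is odd; since $\det U = 1$, the number of $-1$ blocks is even, hence the number of $+1$ blocks is odd, in particular at least one, giving $+1$ as an eigenvalue. Removing that forced $+1$ leaves $2n$ dimensions worth of eigenvalues which, being those of an orthogonal matrix on an even-dimensional space with determinant $+1$, still decompose into rotation blocks and an even number of $\pm 1$ blocks — so they occur in conjugate pairs (the leftover $\pm1$'s pair up as $(+1,+1)$ or $(-1,-1)$). The case $U \in SO^-(2n+1)$ is identical with the roles of $+1$ and $-1$ swapped: now the number of $-1$ blocks is odd, forcing a $-1$ eigenvalue. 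For $U \in SO^-(2n+2)$: even dimension with $\det U = -1$ forces the number of $-1$ blocks to be odd, hence at least one; and since the total number of $\pm1$ blocks has the same parity as the dimension $2n+2$, i.e. even, having an odd number of $-1$ blocks forces an odd (hence positive) number of $+1$ blocks too, so both $+1$ and $-1$ are eigenvalues; deleting one of each leaves $2n$ dimensions with the remaining eigenvalues in conjugate pairs. For $SO(2n)$ the dimension is even and $\det U = 1$, so the eigenvalues simply come in conjugate pairs with no forced fixed eigenvalue.

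For $Sp(2n)$ I would argue directly from $U \in U(2n)$ together with $UJU^T = J$. Since $U$ is unitary its eigenvalues lie on $S^1$, and since $U$ is real-representable — more precisely, the quaternionic structure implies that the eigenvalues come in conjugate pairs — one can see this by noting that $J\bar{U}J^{-1} = U$ (a consequence of $U^*U = I$ and $UJU^T = J$, using $J^{-1} = J^T = -J$), so $U$ is conjugate to $\bar U$; hence the spectrum of $U$ is invariant under complex conjugation, and because all eigenvalues lie on the unit circle this means non-real eigenvalues pair as $e^{\pm i\theta}$. One then checks the real eigenvalues $\pm 1$ also occur with even multiplicity: the relation $J\bar U J^{-1} = U$ means that if $v$ is an eigenvector for a real eigenvalue $\lambda$, then $J\bar v$ is an independent eigenvector for the same $\lambda$ (independence follows because $J\bar{(\cdot)}$ is a quaternionic-type structure with $J\overline{J\bar v} = -v$, so it has no fixed lines). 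Thus all $2n$ eigenvalues of $U \in Sp(2n)$ come in conjugate pairs.

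The main obstacle, such as it is, is not any single hard estimate but rather getting the parity argument airtight and uniform across all five cases — in particular being careful that "the leftover $\pm 1$ eigenvalues pair up into conjugate pairs" is stated correctly (a pair $(+1,+1)$ is a degenerate conjugate pair) and that the counts of $+1$ versus $-1$ blocks have the parities claimed. I would therefore organize the proof around a single lemma: for $U$ orthogonal on $\mathbb{R}^N$, writing $p_+$ and $p_-$ for the multiplicities of $+1$ and $-1$, one has $p_+ + p_- \equiv N \pmod 2$ and $(-1)^{p_-} = \det U$; everything else follows by solving these two congruences in each case. The symplectic case is handled separately via the conjugation symmetry $U \sim \bar U$ and the fixed-point-free structure map $v \mapsto J\bar v$.
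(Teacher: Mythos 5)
Your proof is correct. The paper states this lemma without proof (it is introduced as ``easy to prove''), so there is no argument in the text to compare against; your route --- the real orthogonal canonical form (rotation blocks plus $\pm 1$ blocks) combined with the two parity constraints $p_+ + p_- \equiv N \pmod 2$ and $(-1)^{p_-} = \det U$, and, for $Sp(2n)$, the conjugation symmetry $UJ = J\bar U$ together with the antilinear map $v \mapsto J\bar v$ --- is exactly the standard argument the authors presumably have in mind, and your bookkeeping in all five cases checks out, including the observation that leftover $\pm 1$ eigenvalues of even multiplicity count as degenerate conjugate pairs.

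One spot worth tightening in a written version: in the symplectic case, ``no fixed lines'' for $\sigma(v) = J\bar v$ shows that a single eigenvector $v$ for a real eigenvalue $\lambda$ produces an independent partner $\sigma(v)$, but to conclude that the full multiplicity of $\lambda = \pm 1$ is even you should say explicitly that the eigenspace $E_\lambda$ is $\sigma$-invariant (since $\lambda$ is real, $U\sigma(v) = J\overline{Uv} = \lambda\,\sigma(v)$) and that $\sigma$ is antilinear with $\sigma^2 = -1$, so $E_\lambda$ carries a quaternionic structure and hence has even complex dimension. This is a one-line addition and your sketch already contains all the ingredients ($\sigma^2 v = J\overline{J\bar v} = -v$, and if $\sigma v = cv$ then $-v = \sigma^2 v = |c|^2 v$, a contradiction), so the gap is cosmetic rather than substantive.
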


Denote the $2n$ non-fixed eigenvalues of matrices in $SO(2n+1)$, $SO^-(2n+1)$, $SO(2n)$, $SO^-(2n+2)$ and $Sp(2n)$, by $e^{i\theta_1},...,e^{i\theta_n}, e^{-i\theta_1},...,e^{-i\theta_n}$, with $\theta_1,...,\theta_n \in [0,\pi]$. The $n$ eigenvalue angles $\theta_1,...,\theta_n$, corresponding to the eigenvalues in the upper half circle are called the \textbf{nontrivial eigenangles}. For matrices in $U(n)$ all the eigenvalue angles are considered nontrivial, as there are no automatic symmetries in this case. 

Explicit formulae for the joint distribution of the non-trivial eigenvalues of Haar-distributed random matrices from any of the above groups have been found by Weyl. They are known as the Weyl integration formulae for those groups and are given in the following two theorems:  
  
\begin{theorem} [Weyl] The unordered eigenvalues of an $n \times n$ random unitary matrix have eigenvalue density
\begin{equation} \label{thm:Weyl U}
\frac{1}{n!} \prod_{1 \leq k < j \leq n} |e^{i\theta_k} - e^{i\theta_j}|^2 \prod_{k = 1}^n \frac{\text{d}\theta_k}{2\pi},
\end{equation} 
with respect to $\text{d}\theta_1,...,\text{d}\theta_n$ on $[0,2\pi)^n$. That is, for any $g:U(n) \rightarrow \mathbb{R}$ with 
\begin{equation}
g(U) = G(VUV^*) \quad \text{for any } U,V \in U(n),
\end{equation}
(i.e. $g$ is a class function and only depends on the unordered eigenvalues), if $U$ is Haar-distributed on $U(n)$, then 
\begin{equation}
\mathbb{E}(g(U)) = \frac{1}{n!} \int_{[0,2\pi)^n} \tilde{g}(\theta_1,...,\theta_n) \prod_{1 \leq k < j \leq n} |e^{i\theta_k} - e^{i\theta_j}|^2 \prod_{k = 1}^n \frac{\text{d}\theta_k}{2\pi},
\end{equation} 
where $\tilde{g}:[0,2\pi)^n \rightarrow \mathbb{R}$ is the (necessarily symmetric) expression of $g(U)$ as a function of the eigenvalues of $U$.
\end{theorem}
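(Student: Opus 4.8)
The plan is to prove this as a change-of-variables formula for the spectral decomposition, i.e.\ to establish the Weyl integration formula for $U(n)$ directly. Since (as noted above) Haar measure on $U(n)$ is the normalised Riemannian volume, and since the set $U(n)_{\mathrm{reg}}$ of matrices with $n$ distinct eigenvalues is open, dense and of full measure, I would work throughout on $U(n)_{\mathrm{reg}}$ and discard the null complement where eigenvalues collide. On $U(n)_{\mathrm{reg}}$ the spectral theorem gives a parametrisation $U = V\Theta V^{-1}$ with $\Theta = \operatorname{diag}(e^{i\theta_1},\dots,e^{i\theta_n})$ and $VT \in U(n)/T$, $T$ the diagonal torus; this identifies $U(n)_{\mathrm{reg}}$ with $(U(n)/T)\times T_{\mathrm{reg}}$ modulo the free action of the Weyl group $N(T)/T\cong S_n$ permuting the eigenpairs, so the parametrising map is a smooth $n!$-fold covering.

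To obtain the Jacobian I would pull back the bi-invariant volume form using the Maurer--Cartan form. Writing $\Omega = V^{-1}\,dV$ for the $\mathfrak{u}(n)/\mathfrak{t}$-valued (off-diagonal) Maurer--Cartan form on $U(n)/T$, a short computation gives
\[
 U^{-1}\,dU \;=\; \operatorname{Ad}(V)\Bigl(\Theta^{-1}\,d\Theta \;+\; \bigl(\operatorname{Ad}(\Theta^{-1})-I\bigr)\Omega\Bigr),
\]
and since $\operatorname{Ad}(V)$ is an orthogonal (hence volume-preserving) transformation of $\mathfrak{u}(n)$, the bi-invariant volume form is, up to a universal constant, the wedge of the components of $\Theta^{-1}\,d\Theta + (\operatorname{Ad}(\Theta^{-1})-I)\Omega$ in an orthonormal basis of $\mathfrak{u}(n)$. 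The diagonal block $\Theta^{-1}\,d\Theta = \operatorname{diag}(i\,d\theta_1,\dots,i\,d\theta_n)$ contributes $\bigwedge_k d\theta_k$, while on the $(k,j)$ off-diagonal entry the operator $\operatorname{Ad}(\Theta^{-1})-I$ acts as multiplication by the scalar $e^{i(\theta_j-\theta_k)}-1$, which scales the corresponding real area $2$-form by $|e^{i(\theta_j-\theta_k)}-1|^2 = |e^{i\theta_k}-e^{i\theta_j}|^2$. Taking the product over $k<j$ produces the Jacobian factor $\prod_{1\le k<j\le n}|e^{i\theta_k}-e^{i\theta_j}|^2$.

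Assembling these ingredients, and using that Haar measure on the compact group $U(n)$ disintegrates along the fibration $U(n)\to U(n)/T$ as the product of the invariant probability measure $\nu$ on $U(n)/T$ and normalised Haar measure $\prod_k d\theta_k/(2\pi)$ on the fibre $T$, one gets, for a class function $g$ with eigenvalue expression $\tilde g$,
\[
 \mathbb{E}\bigl(g(U)\bigr) \;=\; \frac{1}{n!}\int_{U(n)/T}\!\!\int_{[0,2\pi)^n}\! \tilde g(\theta_1,\dots,\theta_n)\!\!\prod_{1\le k<j\le n}\!\!|e^{i\theta_k}-e^{i\theta_j}|^2\,\frac{d\theta_1\cdots d\theta_n}{(2\pi)^n}\,d\nu(VT).
\]
The integrand does not depend on $V$, so the $V$-integral equals $\nu(U(n)/T)=1$ and the stated formula follows; the single overall normalisation constant is then forced by evaluating both sides at $g\equiv 1$, equivalently by the classical identity $\tfrac{1}{n!(2\pi)^n}\int_{[0,2\pi)^n}\prod_{k<j}|e^{i\theta_k}-e^{i\theta_j}|^2\,d\theta_1\cdots d\theta_n=1$.

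I expect the main obstacle to be bookkeeping rather than any conceptual hurdle: carefully justifying that the spectral parametrisation really is a smooth $n!$-fold covering of $U(n)_{\mathrm{reg}}$ with the collision set genuinely null, and tracking the normalisation constants (the $\sqrt{2}$ factors relating matrix-entry coordinates to an orthonormal basis, and the compatibility of the invariant measures on $U(n)$, $T$ and $U(n)/T$) so that the change of variables is an honest identity of probability measures. The remaining content is the linear-algebra Jacobian computation above, which is exactly Weyl's integration formula specialised to the root system of type $A_{n-1}$, where the roots $e_k-e_j$ ($k\neq j$) produce the Vandermonde-type factor $\prod_{k<j}|e^{i\theta_k}-e^{i\theta_j}|^2$.
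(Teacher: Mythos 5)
The paper does not actually prove this statement: it is quoted as a classical result (following Meckes \cite{Meckes2019}), so there is no internal proof to compare against. Your proposal is the standard Lie-theoretic proof of Weyl's integration formula, and it is correct in outline. The key computation is right: writing $U=V\Theta V^{-1}$, the identity $U^{-1}dU=\operatorname{Ad}(V)\bigl(\Theta^{-1}d\Theta+(\operatorname{Ad}(\Theta^{-1})-I)\Omega\bigr)$ holds, $\operatorname{Ad}(V)$ preserves the bi-invariant metric, and on the real two-dimensional root space attached to a pair $k<j$ the operator $\operatorname{Ad}(\Theta^{-1})-I$ acts as complex multiplication by $e^{i(\theta_j-\theta_k)}-1$, hence with real determinant $|e^{i\theta_k}-e^{i\theta_j}|^2$; the product over $k<j$ gives the Vandermonde-squared Jacobian, the factor $1/n!$ comes from the $n!$-fold covering $(U(n)/T)\times T_{\mathrm{reg}}\to U(n)_{\mathrm{reg}}$ by the free Weyl-group action, and fixing the constant by testing $g\equiv1$ legitimately sidesteps tracking the orthonormal-basis normalisations. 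Two small points of care: the set of matrices with repeated eigenvalues must be shown to be Haar-null (it is, being a lower-dimensional closed set, but this is one of the bookkeeping items you flag), and the sentence about Haar measure "disintegrating along the fibration $U(n)\to U(n)/T$" is not really what the argument uses --- the invariant measure $\nu$ on $U(n)/T$ enters because the pullback of Haar under the conjugation covering map is, up to the constant you later fix, $d\nu\otimes\bigl(\text{Jacobian}\bigr)\,d\theta_1\cdots d\theta_n$, which is exactly what the Maurer--Cartan computation delivers; so that remark is redundant rather than wrong. With those justifications written out, the argument is a complete and standard proof of the theorem.
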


\begin{theorem} [Weyl]
Let $U$ be a Haar-distributed random matrix in $G$, where $G$ denotes either $SO(2n+1)$, $SO^-(2n+1)$, $SO(2n)$, $SO^-(2n+2)$ and $Sp(2n)$. Then a function $g$ of $U$ for which $g(VUV^T) = g(U)$ for all $V \in G$, is associated with a function $\tilde{g}:[0,\pi)^n \rightarrow \mathbb{R}$ (of the unordered nontrivial eigenangles) which is invariant under permutation of coordinates, and if $U$ is Haar-distributed on $G$ then 
\begin{equation}
\mathbb{E}\left(g(U)\right) = \int_{[0,\pi)^n} \tilde{g} \mu_{G}^W,
\end{equation}
where the measures $\mu_{G}^W$ on $[0,\pi)^n$ have densities with respect to $\text{d}\theta_1,...,\text{d}\theta_n$ as follows:
\begin{center}
\def\arraystretch{3}
\begin{tabular}{ll}
\hline
$G$ & $\mu_G^W$ \\ 
\hline 
\hline
$SO(2n)$ & $\frac{2}{n!(2\pi)^n} \prod_{1 \leq j < k \leq n} (2\cos(\theta_k) - 2\cos(\theta_j))^2$ \\
\hline 
$SO(2n+1)$ & $\frac{2^n}{n!\pi^n} \prod_{1 \leq j \leq n} \sin^2\left(\frac{\theta_j}{2}\right) \prod_{1 \leq j < k \leq n} \left(2\cos(\theta_k) - 2\cos(\theta_j)\right)^2$ \\
\hline 
$SO^-(2n+1)$ & $\frac{2^n}{n!\pi^n} \prod_{1 \leq j \leq n} \cos^2\left(\frac{\theta_j}{2}\right) \prod_{1 \leq j < k \leq n} \left(2\cos(\theta_k) - 2\cos(\theta_j)\right)^2$ \\
\hline
$Sp(2n)$ & $\frac{2^n}{n!\pi^n} \prod_{1 \leq j \leq n} \sin^2(\theta_j) \prod_{1 \leq j < k \leq n} \left(2\cos(\theta_k) - 2\cos(\theta_j)\right)^2$ \\
\hline
\end{tabular}
\end{center}
\end{theorem}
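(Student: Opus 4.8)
To prove the Weyl integration formulae for the orthogonal and symplectic groups, the plan is to deduce every case from the Weyl integration formula for compact Lie groups, handling the connected groups $SO(2n)$, $SO(2n+1)$, $Sp(2n)$ directly and the non-group cosets $SO^-(2n+1)$, $SO^-(2n+2)$ by translation inside the ambient orthogonal group. First I would record the reduction: both sides see $g$ only through the $S_n$-invariant function $\tilde g$ of the unordered nontrivial eigenangles, so it is enough to identify the push-forward of (conditioned) Haar measure under the map sending a matrix to its unordered tuple of nontrivial eigenangles in $[0,\pi)^n$. For a connected compact group $G$ with maximal torus $T$, Weyl group $W$ and positive roots $\Phi^+$, the Weyl integration formula gives, for a class function $f$,
\begin{equation}
\int_G f\,\mathrm{d}\mu_{\mathrm{Haar}} \;=\; \frac{1}{|W|}\int_T f(t)\,\bigl|\delta_G(t)\bigr|^2\,\mathrm{d}\mu_T(t), \qquad \delta_G(t) \;=\; \prod_{\alpha \in \Phi^+}\bigl(e^{\alpha(t)/2}-e^{-\alpha(t)/2}\bigr).
\end{equation}
I would take the standard maximal torus of block rotations $R_{\theta_1},\dots,R_{\theta_n}$, together with the fixed eigenvalue $+1$ required by the Lemma in the odd orthogonal case, so that $\mu_T=\prod_{j=1}^n\mathrm{d}\theta_j/2\pi$ on $[0,2\pi)^n$ and the nontrivial eigenangles are exactly the $\theta_j$ (folded into $[0,\pi)$), and read off the root data: type $D_n$, $\Phi^+=\{e_j\pm e_k:j<k\}$, $|W|=2^{n-1}n!$ for $SO(2n)$; type $B_n$, $\Phi^+=\{e_j\pm e_k:j<k\}\cup\{e_j\}$, $|W|=2^n n!$ for $SO(2n+1)$; type $C_n$, $\Phi^+=\{e_j\pm e_k:j<k\}\cup\{2e_j\}$, $|W|=2^n n!$ for $Sp(2n)$.

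The one genuine computation is the identity
\begin{equation}
\bigl|e^{i\theta_j}-e^{i\theta_k}\bigr|^2\,\bigl|e^{i\theta_j}-e^{-i\theta_k}\bigr|^2 \;=\; \bigl(2\cos\theta_j-2\cos\theta_k\bigr)^2,
\end{equation}
which follows at once from $(e^{i\theta_j}-e^{i\theta_k})(e^{i\theta_j}-e^{-i\theta_k}) = e^{i\theta_j}(2\cos\theta_j-2\cos\theta_k)$ and turns the $\{e_j\pm e_k\}$-part of $|\delta_G|^2$ into $\prod_{j<k}(2\cos\theta_j-2\cos\theta_k)^2$. The short roots $e_j$ of $B_n$ contribute an extra $\prod_j|e^{i\theta_j/2}-e^{-i\theta_j/2}|^2=\prod_j 4\sin^2(\theta_j/2)$, and the long roots $2e_j$ of $C_n$ contribute $\prod_j|e^{i\theta_j}-e^{-i\theta_j}|^2=\prod_j 4\sin^2\theta_j$. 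Finally, all of these factors, as well as $\tilde g$, are invariant under each reflection $\theta_j\mapsto 2\pi-\theta_j$, so folding each coordinate from $[0,2\pi)$ onto $[0,\pi)$ contributes a factor $2^n$; multiplying $2^n/|W|$ by the $4^n$ (respectively $2^n$) coming from the extra roots and by the $(2\pi)^{-n}$ from $\mu_T$ reproduces exactly the constants $2/(n!(2\pi)^n)$ for $SO(2n)$ and $2^n/(n!\pi^n)$ for $SO(2n+1)$ and $Sp(2n)$, together with the stated densities.

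For the cosets I would argue by translation. Since $\det(-I_{2n+1})=-1$, conditioned Haar measure on $SO^-(2n+1)$ is the push-forward of Haar measure on $SO(2n+1)$ under the left translation $U\mapsto -U$, which sends the pair $e^{\pm i\theta_j}$ to $e^{\pm i(\pi-\theta_j)}$; substituting $\theta_j\mapsto\pi-\theta_j$ in the $SO(2n+1)$ density replaces $\sin^2(\theta_j/2)$ by $\cos^2(\theta_j/2)$ and leaves $\prod_{j<k}(2\cos\theta_j-2\cos\theta_k)^2$ unchanged, giving the $SO^-(2n+1)$ row. Since $-I\in SO(2n+2)$, for $SO^-(2n+2)$ I would instead use that every such matrix is conjugate in $O(2n+2)$ to $\mathrm{diag}(R_{\theta_1},\dots,R_{\theta_n},1,-1)$ and apply the Weyl-type integration formula for the non-identity component of $O(2n+2)$ — equivalently, split off the forced $\pm1$ eigenspaces and identify the induced measure on the remaining $2n$-dimensional rotation block; the resulting twisted Weyl denominator is $\prod_j 4\sin^2\theta_j\prod_{j<k}(2\cos\theta_j-2\cos\theta_k)^2$, so the density coincides with that of $Sp(2n)$. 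As a consistency check, via the double cover $SU(2)\times SU(2)\to SO(4)$ one sees directly that the single nontrivial angle of a Haar element of $SO^-(4)$ has density proportional to $\sin^2\theta$, matching the $n=1$ case of $Sp(2n)$.

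I expect the main obstacle to be the careful bookkeeping of the multiplicative constants — keeping straight $|W|$, the normalisation of $\mu_T$, and the factor $2^n$ produced by folding $[0,2\pi)^n$ down to $[0,\pi)^n$ — together with making the $SO^-(2n+2)$ case fully rigorous, where one must either invoke the integration formula for the non-connected group $O(2n+2)$ or verify by a direct Jacobian computation that deleting the two deterministic eigenvalues leaves a measure of the asserted form; the Lie-theoretic input for the three connected groups is entirely standard (see \cite{Meckes2019}).
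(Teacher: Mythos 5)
The paper does not prove this theorem at all: it is quoted as a classical result of Weyl, following Meckes's book \cite{Meckes2019}, so there is no internal proof to compare yours against. Judged on its own terms, your outline is a correct and essentially standard route. The root data you quote for $D_n$, $B_n$, $C_n$ are right, the identity $(e^{i\theta_j}-e^{i\theta_k})(e^{i\theta_j}-e^{-i\theta_k})=e^{i\theta_j}(2\cos\theta_j-2\cos\theta_k)$ does convert the $\{e_j\pm e_k\}$ part of the Weyl denominator into $\prod_{j<k}(2\cos\theta_j-2\cos\theta_k)^2$, and the bookkeeping closes: for $SO(2n)$, $\tfrac{2^n}{2^{n-1}n!}(2\pi)^{-n}=\tfrac{2}{n!(2\pi)^n}$, while for $SO(2n+1)$ and $Sp(2n)$ the extra roots contribute a factor $4^n$ (note: it is $4^n$ in \emph{both} the $B_n$ and $C_n$ cases, so your parenthetical ``respectively $2^n$'' is a slip, though your final constants $\tfrac{2^n\cdot 4^n}{2^n n!(2\pi)^n}=\tfrac{2^n}{n!\pi^n}$ are correct). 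The translation $U\mapsto -U$ for $SO^-(2n+1)$, sending $\theta_j\mapsto\pi-\theta_j$ and hence $\sin^2(\theta_j/2)\mapsto\cos^2(\theta_j/2)$, is exactly the standard argument.

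Two caveats you have partly anticipated. First, $SO^-(2n+2)$ (listed in the hypothesis but missing from the table) genuinely requires either the Weyl integration formula for the non-identity component of $O(2n+2)$ (conjugation twisted by the outer automorphism folding $D_{n+1}$ to a $C_n$-type denominator) or a direct Jacobian computation; your claim that the resulting density agrees with the $Sp(2n)$ row is correct, but as written it is an assertion rather than a proof, so this is the one step you would still have to carry out in full. Second, a subtlety inherited from the statement itself: for $G=SO(2n)$ an $SO(2n)$-conjugation-invariant $g$ is not in general a function of the unordered eigenangles alone (classes with no eigenvalue $\pm1$ split in $SO(2n)$); your Lie-theoretic argument computes $\mathbb{E}(g(U))$ correctly for any class function, so the formula survives, but if you want the literal statement about $\tilde g$ you should either average over the split classes or note, as Meckes does, the appropriate invariance hypothesis. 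Neither point undermines the strategy; the connected cases are complete modulo routine detail.
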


One can see that the eigenvalues repel each other, i.e. they are very unlikely to be close to each other. The factors $\sin^2\left(\frac{\theta_j}{2} \right)$, $\cos^2\left(\frac{\theta_j}{2}\right)$, and $\sin^2 (\theta_j)$ are due to the deterministic eigenvalues at $\pm 1$. Figure \ref{fig:eigenvalues} shows samples of eigenvalues from the classical compact groups, as well as points sampled independently on the unit circle. Due to the independent points not repelling each other they are much more clustered, and large gaps are much more common. \\

The following famous theorem relates the moments of traces of powers of random unitary matrices to the moments of complex Gaussians, and we will need it at various points in the following sections and chapters:

\begin{theorem}(Diaconis, Shahshahani \cite{Diaconis1994} and Diaconis, Evans \cite{Diaconis2001}) \label{thm:DiaconisShahshahani}
    Let $U$ be a Haar-distributed random matrix in $U(n)$, $\ell\in \mathbb{N}$, and let $Z_1,...,Z_l$ be i.i.d. standard complex Gaussian random variables, i.e. their real and imaginary parts are independent real Gaussians with variance $1/2$. Let $a_k, b_k \in \mathbb{N}$, $k = 1,...,l$, and let $n \in \mathbb{N}$ be such that
    \begin{align}
        \max \left\{ \sum_{j = 1}^k ja_j, \sum_{j = 1}^k jb_j \right\} \leq n.
    \end{align}
    Then
    \begin{align}
        \mathbb{E} \left( \prod_{k = 1}^\ell (\text{Tr}(U^k))^{a_k} \overline{(\text{Tr}(U^k))^{b_k}} \right) = \prod_{k = 1}^\ell \delta_{a_k b_k} k^{a_k} a_k! = \mathbb{E} \left( \prod_{k = 1}^\ell (\sqrt{k} Z_k)^{a_k} \overline{(\sqrt{k} Z_k)^{b_k}} \right).
    \end{align}
\end{theorem}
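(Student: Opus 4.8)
The plan is to recognise the left-hand side as an inner product of power-sum symmetric functions and evaluate it using the Schur basis. First I would note that, writing $e^{i\theta_1},\dots,e^{i\theta_n}$ for the eigenvalues of $U$, we have $\Tr(U^k)=p_k(e^{i\theta_1},\dots,e^{i\theta_n})$, the $k$-th power-sum symmetric polynomial. Hence, if $\mu$ denotes the partition having $a_k$ parts equal to $k$ and $\nu$ the partition having $b_k$ parts equal to $k$, the quantity to be computed is $\mathbb{E}\bigl[p_\mu(U)\,\overline{p_\nu(U)}\bigr]$, and the hypothesis says exactly that $|\mu|=\sum_k k a_k\le n$ and $|\nu|=\sum_k k b_k\le n$.

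Next I would invoke the classical Schur--Weyl expansion $p_\mu=\sum_{\lambda\vdash|\mu|}\chi^\lambda_\mu\, s_\lambda$, where $s_\lambda$ is the Schur polynomial and $\chi^\lambda_\mu$ is the value of the irreducible character $\chi^\lambda$ of the symmetric group $S_{|\mu|}$ on the conjugacy class of cycle type $\mu$. Every partition $\lambda$ of $|\mu|$ has at most $|\mu|\le n$ parts, so each $s_\lambda(U)$ occurring here is nonzero and is, by the highest-weight theory of $U(n)$ (equivalently Schur--Weyl duality), precisely the character of an irreducible polynomial representation of $U(n)$; the same holds for the $\nu$-expansion. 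Schur orthogonality on $U(n)$ (or, if one wants to stay self-contained, a direct computation from the Weyl integration formula for $U(n)$ together with the Andr\'eief/Cauchy--Binet identity and the determinantal form of $s_\lambda$) then gives $\mathbb{E}\bigl[s_\lambda(U)\,\overline{s_\rho(U)}\bigr]=\delta_{\lambda\rho}$ whenever both partitions have at most $n$ parts. Combining this with the expansion, and using that symmetric-group characters are integers so $\overline{\chi^\lambda_\nu}=\chi^\lambda_\nu$, I obtain $\mathbb{E}\bigl[p_\mu(U)\,\overline{p_\nu(U)}\bigr]=\sum_{\lambda}\chi^\lambda_\mu\chi^\lambda_\nu$, an empty sum unless $|\mu|=|\nu|$. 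When $|\mu|=|\nu|$, the column (second) orthogonality relation for the character table of $S_{|\mu|}$ yields $\sum_{\lambda\vdash|\mu|}\chi^\lambda_\mu\chi^\lambda_\nu=\delta_{\mu\nu}\,z_\mu$ with $z_\mu=\prod_k k^{a_k}a_k!$. Since $\mu=\nu$ as partitions precisely when $a_k=b_k$ for all $k$, this is $\prod_k\delta_{a_kb_k}\,k^{a_k}a_k!$, the asserted value.

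For the Gaussian side I would simply compute directly: for a standard complex Gaussian $Z$ (so $\mathbb{E}|Z|^2=1$), passing to polar coordinates shows $\mathbb{E}\bigl[Z^{a}\overline{Z}^{b}\bigr]=\delta_{ab}\,a!$, the angular integration forcing $a=b$ and the radial integration producing $a!$. Hence $\mathbb{E}\bigl[(\sqrt{k}\,Z_k)^{a_k}\overline{(\sqrt{k}\,Z_k)^{b_k}}\bigr]=\delta_{a_kb_k}\,k^{a_k}a_k!$, and independence of the $Z_k$ gives the product $\prod_k\delta_{a_kb_k}\,k^{a_k}a_k!$, matching the random-matrix computation.

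The main obstacle is the orthonormality identity $\mathbb{E}\bigl[s_\lambda(U)\,\overline{s_\mu(U)}\bigr]=\delta_{\lambda\mu}$ for partitions of length at most $n$ — i.e.\ the identification of Schur polynomials in $n$ variables with the irreducible characters of $U(n)$ — since everything downstream is bookkeeping with standard symmetric-function identities and character orthogonality. One also has to be mildly careful that the length bound $\ell(\lambda)\le n$ genuinely holds for every partition appearing, the worst case being $\lambda=(1^{|\mu|})$; this is exactly what the hypothesis $\sum_k k a_k\le n$ (and its counterpart for the $b_k$) guarantees, and it is the only place the hypothesis is used.
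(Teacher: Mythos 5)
Your argument is correct: the reduction of $\prod_k(\Tr U^k)^{a_k}\overline{(\Tr U^k)^{b_k}}$ to $\mathbb{E}\bigl[p_\mu(U)\overline{p_\nu(U)}\bigr]$, the Frobenius expansion $p_\mu=\sum_{\lambda\vdash|\mu|}\chi^\lambda_\mu s_\lambda$, the use of the hypothesis $\max\{|\mu|,|\nu|\}\le n$ to ensure every $\lambda$ appearing has at most $n$ parts so that Schur orthogonality $\mathbb{E}\bigl[s_\lambda(U)\overline{s_\rho(U)}\bigr]=\delta_{\lambda\rho}$ applies, the column orthogonality of $S_m$-characters producing $z_\mu=\prod_k k^{a_k}a_k!$, and the polar-coordinate computation of the complex Gaussian moments are all sound. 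Note that the thesis does not prove this statement at all — it is quoted from Diaconis--Shahshahani and Diaconis--Evans — and your route is essentially the classical proof given in those cited works, so there is nothing in the paper to contrast it with.
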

Note that from Theorem \ref{thm:DiaconisShahshahani} it follows that for any $\ell \in \mathbb{N}$ the random vector $(\Tr (U^k))_{k = 1}^\ell$ converges to $(\sqrt{k}Z_k)_{k = 1}^\ell$ as $n \rightarrow \infty$. Note that no normalization is needed for the traces of powers to converge, which is due the repulsion between the eigenvalues.\\

\begin{figure} \label{fig:eigenvalues}
    \centering
    \includegraphics[scale=0.66]{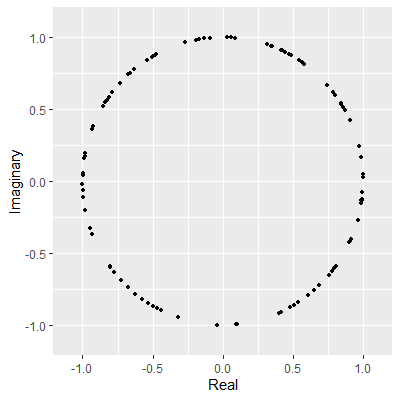}
    \includegraphics[scale=0.66]{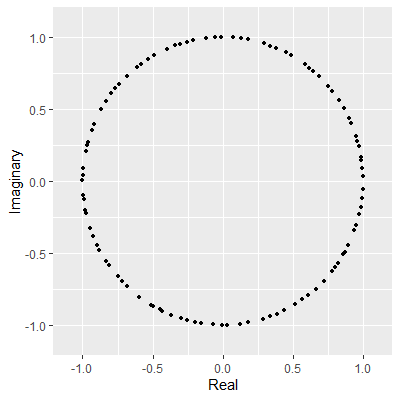}
    \includegraphics[scale=0.66]{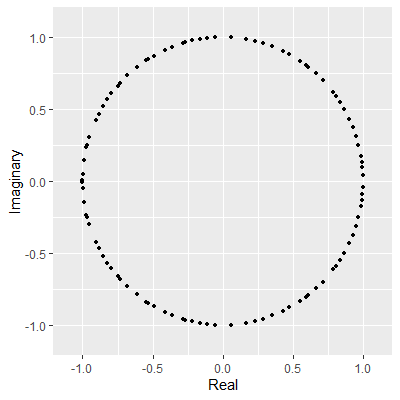}
    \includegraphics[scale=0.66]{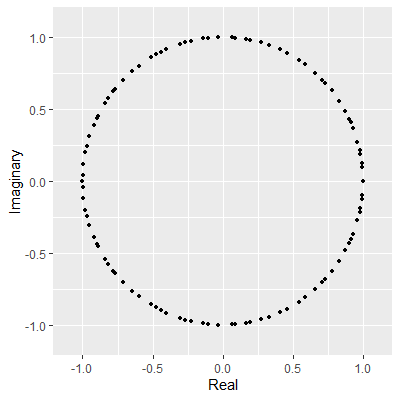}
    \caption{Top left: 100 i.i.d. points drawn uniformly. Top right: Eigenvalues of a random $U \in U(100)$. Bottom left: Eigenvalues of a random $U \in SO(100)$. Bottom right: Eigenvalues of a random $U \in SO^-(100)$.}
\end{figure}

When instead of looking at a random variable taking values in the unitary group one looks at a stochastic process whose state space is the unitary group, the natural dynamic analogue to Haar distributed random unitary matrices is Brownian motion on the unitary group, which is usually referred to as \textit{unitary Brownian motion}. We define unitary Brownian motion $(U_t)_{t \geq 0}$ on $U(n)$ as the diffusion governed by the stochastic differential equation 
\begin{align} \label{eqn:UBM SDE}
    \text{d}U_t = \sqrt{2} U_t \text{d}B_n(t) - U_t \text{d}t,
\end{align}
with $(B_n(t))_{t \geq 0}$ denoting a Brownian motion on the space of $n \times n$ skew-Hermitian matrices. That is
\begin{align}
    B_n(t) = \sum_{k = 1}^{n^2} X_k \tilde{B}^{(k)}(t),
\end{align}
where $\tilde{B}^{(k)}$, $k = 1,...,n^2$, are independent one-dimensional standard Brownian motions, and where the matrices $X_k$, $k = 1,...,n^2$, are an orthonormal basis of the real vector space of skew-Hermitian matrices w.r.t. the scalar product $\langle A, B \rangle := n \Tr (AB^*)$. One such basis is given by the matrices $\frac{1}{\sqrt{2n}}(E_{k,l} - E_{l,k})$, $\frac{i}{\sqrt{2n}}(E_{k,l} + E_{l,k})$, $1 \leq k < \ell \leq n$, and $\frac{i}{\sqrt{n}} E_{k,k}$, $1 \leq k \leq n$, where $(E_{k,l})_{mn} = \delta_{km} \delta_{ln}$.

\begin{remark}
    Unitary Brownian motion is usually defined using a different normalisation, i.e. satisfying the SDE $\text{d} \tilde{U}_t = \tilde{U}_t \text{d} B_n(t) - \frac{1}{2}\tilde{U}_t \text{d}t$. With this normalisation the generator is given by one half times the Laplacian on $U(n)$, which is the usual definition of Brownian motion on a Riemannian manifold. The relation between the two normalisations is $\tilde{U}_{2t} = U_t$. We chose our normalisation to be consistent with the work of Spohn \cite{Spohn1998}, and Bourgade and Falconet \cite{Bourgade2022}, to which our work in Chapter \ref{chapter:UBM} is strongly related.
\end{remark}

The study of unitary Brownian motion was initiated by Dyson \cite{Dyson1962}, who showed that the eigenangles $\theta_1,...,\theta_n \in [0,2\pi)$ of unitary Brownian motion satisfy the stochastic differential equation
\begin{equation} \label{eqn:UBM eigenvalues}
    \text{d}\theta_j(t) = \frac{1}{n} \sum_{k \neq j} \cot \left( (\theta_j(t) - \theta_k(t))/2 \right) \text{d}t + \sqrt{\frac{2}{n}} \text{d} B^{(j)}(t),
\end{equation}
where $B^{(j)}$, $j = 1,...,n$, are one-dimensional standard Brownian motions. In the same work he also showed that analogously the eigenvalues of a Brownian motion on the set of Hermitian matrices satisfy a similar stochastic differential equation. \\

Since the normalized Riemannian volume measure is the stationary distribution of Brownian motion on any compact Riemannian manifold, and since the normalized Riemannian volume measure on $U(n)$ equals Haar measure, we see that Haar measure on $U(n)$ is the stationary distribution of Unitary Brownian motion. Thus unitary Brownian motion started at Haar measure is reversible and can in turn be defined for all $t \in \mathbb{R}$. In this thesis we always consider unitary Brownian motion at its equilibrium, so that $U_t$ is Haar distributed for all $t \in \mathbb{R}$. This is the stationary setting we consider in Chapter \ref{chapter:UBM}.

\section{Log-correlated fields}
%\todo{perhaps explain how log-correlated fields arise from the empirical spectral measure, linear statistics, to suspect an underlying Gaussian field}
We say a field $h$ on some metric space $(D,d)$ is a centered log-correlated Gaussian field if its covariance kernel $K$ has a logarithmic singularity on the diagonal, i.e. 
\begin{align}
    K(x,y) = \mathbb{E}\left( h(x) h(y) \right) = - \log d (x,y) + \mathcal{O}(1), \quad \text{as } x \rightarrow y.
\end{align} 
Note that due to the logarithmic singularity on the diagonal such a field cannot be understood as a function, as its value at any point would be a Gaussian with infinite variance. Thus it has to be interpreted as a centered generalised Gaussian field, which we define as follows: a centered generalized Gaussian field $h$ with kernel $K:D \times D \rightarrow \mathbb{R} \cup \{\pm \infty \}$ on a metric space $(D,d)$ with Radon measure $\sigma$, is a centered Gaussian process $h = \left( h(f) \right)_{f \in \mathcal{F}}$, where the index set $\mathcal{F}$ is some family of test functions (e.g. smooth functions with compact support in the case of $D$ being a smooth manifold), whose covariance function is given by  
\begin{align} \label{eqn:Kernel Gaussian field}
    \mathbb{E}\left( h(f)h(g) \right) = \langle f, g\rangle_K := \int_D \int_D f(x)g(y) K(x,y) \sigma(\text{d}x) \sigma(\text{d}y).
\end{align}
For any kernel $K$ which is positive definite on $\mathcal{F}$ in the sense that the r.h.s. of (\ref{eqn:Kernel Gaussian field}) is non-negative and finite for $f = g$, for all $f \in \mathcal{F}$, such a field exists due to Kolmogorov's extension theorem.\\

In the case that $(\mathcal{F}, \langle \cdot, \cdot \rangle_K)$ is (or can be extended to) a Hilbert space, which is always the case for the fields considered in this thesis, the Gaussian process $(h(f))_{f \in \mathcal{F}}$ becomes a Gaussian Hilbert space, i.e. a closed subspace of $L^2$ consisting of centered Gaussian variables (see \cite{Jan97} for more details on Gaussian Hilbert spaces). One can choose an ordered orthonormal basis $(\phi_j)_{j \in \mathbb{N}}$ of $\mathcal{F}$ (for simplicity we here assume $\mathcal{F}$ to be infinite-dimensional but separable), and interpret $h$ as the formal Fourier series (or "infinite-dimensional standard Gaussian")
\begin{align}
    h := \sum_{j = 1}^\infty \mathcal{N}_j \phi_j,
\end{align}
where $\mathcal{N}_j$, $i \in \mathcal{I}$, is a family of i.i.d. standard Gaussians. Defined like this, $h$ is almost surely not an element of $\mathcal{F}$, as 
\begin{align}
    \langle h,h\rangle_K = \sum_{j = 1}^\infty \mathcal{N}_j^2  = \infty \quad \text{almost surely}.
\end{align}
However when choosing an $f \in \mathcal{F}$ and setting 
\begin{align}
    h(f) := \langle h,f\rangle_K = \sum_{j = 1}^\infty \langle \phi_j, f\rangle_K \mathcal{N}_j,
\end{align}
one can see that this sum converges in $L^2$ and almost surely, and that defined like this the field $h$ has the right covariance structure: 
\begin{align}
    \mathbb{E}\left( h(f)h(g) \right) = \sum_{j = 1}^\infty \langle \phi_j, f \rangle_K \langle \phi_j, g \rangle_K = \langle f, g \rangle_K. 
\end{align}
%One natural choice of orthonormal basis for this formal Fourier series %representation of $h$ is to use an appropriate generalized version of %Mercer's theorem to write $K$ as 
%\begin{align}
%    K(x,y) = \sum_{k = 1}^{\infty} \lambda_k \phi_k(x) \phi_k(y),
%\end{align}
%where $\phi_k$, $k \in \mathbb{N}$, are the continuous eigenfunctions %to the non-zero eigenvalues of the integral operator 

We now explain in which way logarithmically correlated fields appear as the limit of the logarithm of the characteristic polynomial of random matrices. Let $G(n) \in \left\{ U(n), \, O(n), \, SO(n), \, SO^-(n), \, Sp(2n) \right\}$. We define the characteristic polynomial of a random $U \in G(n)$ as (note that for $G(n) = Sp(2n)$ the product goes up to $2n$)
\begin{equation}
p_{G(n)}(\theta) := \text{det} \left(I-e^{-i\theta}U\right) = \prod_{k = 1}^n (1-e^{i(\theta_k -\theta)}), \quad \theta \in [0,2\pi).
\end{equation}
Defined like this $p_{G(n)}(\cdot)$ is a random function on $[0,2\pi)$, and its zeroes are the phases $\theta_k$, $k = 1,...,n$ of the eigenvalues of $U$. Further we define (note again that for $G(n) = Sp(2n)$ the sum goes up to $2n$)
\begin{align}
\begin{split}
\log p_{G(n)}(\theta) := & \sum_{k = 1}^n \log (1-e^{i(\theta_k -\theta)}), 
\end{split}
\end{align}
with the branches on the RHS being the principal branches, such that for $\phi, \theta \in [0,2\pi)$ it holds that
\begin{equation}
\Im \log (1-e^{i(\phi - \theta)}) =  \begin{cases} - \frac{\pi}{2} + \frac{\phi-\theta}{2} & 0 \leq \theta < \phi < 2\pi \\ \frac{\pi}{2} + \frac{\phi-\theta}{2} & 0 \leq \phi \leq \theta < 2\pi \end{cases} \in \left( - \frac{\pi}{2}, \frac{\pi}{2} \right],
\end{equation}
with $\Im \log 0 := \pi/2$. Since
\begin{equation}
\log (1 - z) = - \sum_{k = 1}^\infty \frac{z^k}{k}
\end{equation}
for $|z| \leq 1$, where for $z = 1$ both sides equal $-\infty$, and by the identity $\log \det = \Tr \log$, we see that the Fourier expansion of $\log p_n$ is given as follows:
\begin{align}
\begin{split}
\log p_{G(n)}(\theta) =& - \sum_{k = 1}^{\infty} \frac{\Tr (U^k)}{k} e^{-ik\theta}.
\end{split}
\end{align}
By Theorem \ref{thm:DiaconisShahshahani} it follows that for any $\ell \in \mathbb{N}$ the random variables $k^{-1}\Tr (U^{k})$, $k = 1,...,l$, converge in distribution to independent complex Gaussians $A_k$, $k = 1,...,l$, whose real and imaginary parts are independent centered real Gaussians with variance $1/(2k)$. Thus we see that $\lim_{n \rightarrow \infty} \log p_{G(n)}(\theta)$ cannot exist for fixed $\theta$, as it would have to be a Gaussian with infinite variance. In fact, Hughes, Keating and O'Connell \cite{Hughes2005} proved that for any fixed $\theta$ it holds that 
\begin{align}
    \frac{\log p_{U(n)}(\theta)}{\sqrt{\log n}} \overset{D}{\implies} \mathcal{N}^{\mathbb{C}}(0,1).
\end{align}
In the same work Hughes, Keating and O'Connell also proved though that without any normalization $\log p_{U(n)}(\cdot)$ has a limit, however only when considered as a random variable taking values in the space of generalised functions $H_0^{-\epsilon}(S^1)$, $\epsilon > 0$, where for $s \in \mathbb{R}$ the space $H_0^{s}(S^1)$ is defined as
\begin{align}
    H_0^s(S^1) = \left\{ f(\theta) = \sum_{k \in \mathbb{Z}} f_k e^{ik\theta} : \sum_{k \in \mathbb{Z}} k^{2s} |f_k|^2 < \infty, f_0=0 \right\}.
\end{align}
The spaces $H_0^s(S^1)$ consist of (formal for $s < 0$) Fourier series whose Fourier coefficents decay at certain rates, and when endowed with the scalar products $\langle f, g \rangle_s := \sum_{k \in \mathbb{Z}\setminus \{0\}} f_k \overline{g_k}$, they are Hilbert spaces. 

\begin{theorem} [Hughes, Keating, O'Connell \cite{Hughes2005}] \label{thm:HughesKeatingOConnell1}
    For any $\epsilon > 0$, the sequence of pairs of fields $\left( \Re \log p_{U(n)}(\cdot), \Im \log p_{U(n)}(\cdot) \right)_{n \in \mathbb{N}}$ converges in distribution in $H_0^{-\epsilon}(S^1) \times H_0^{-\epsilon}(S^1)$ to the pair of generalized Gaussian fields $(\Re Z, \Im Z)$, where  
    \begin{align}
        Z(\theta) =& \sum_{k = 1}^\infty A_k e^{-ik\theta},
    \end{align}
    with $A_k$ being complex Gaussians whose real and imaginary parts are independent centered Gaussians with variance $1/(2k)$.
\end{theorem}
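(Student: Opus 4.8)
The plan is to pass to Fourier coefficients and then apply the standard two-ingredient criterion for weak convergence of random elements of a separable Hilbert space: (i) convergence of every finite-dimensional (here, low-frequency) projection, and (ii) a tightness estimate in the form of a uniform bound on the high-frequency Fourier tails. Since $\Re\log p_{U(n)}$ and $\Im\log p_{U(n)}$ are both real-linear functionals of the $\mathbb{C}$-valued field $\log p_{U(n)}$, it costs nothing to treat the pair at once, and likewise for $(\Re Z,\Im Z)$.

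For the setup I would use the expansion $\log p_{U(n)}(\theta)=-\sum_{k\geq1}k^{-1}\Tr(U^k)e^{-ik\theta}$ recorded above. Taking real and imaginary parts, the nonzero Fourier coefficients of $\Re\log p_{U(n)}$ and of $\Im\log p_{U(n)}$ are, up to explicit constants of modulus $\leq1$ and the reflection $k\mapsto-k$, equal to $k^{-1}\Tr(U^k)$, $k\geq1$; the same holds for $\Re Z,\Im Z$ with $k^{-1}\Tr(U^k)$ replaced by $A_k$. Hence both $\|\Re\log p_{U(n)}\|_{H_0^{-\epsilon}}^2$ and $\|\Im\log p_{U(n)}\|_{H_0^{-\epsilon}}^2$ are bounded by a universal constant times $\sum_{k\geq1}k^{-2-2\epsilon}|\Tr(U^k)|^2$, with the analogous statement for the limit in terms of $\sum_{k\geq1}k^{-2-2\epsilon}|A_k|^2$. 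Using $\mathbb{E}|A_k|^2=1/k$ gives $\mathbb{E}\|Z\|_{H_0^{-\epsilon}}^2<\infty$, so $(\Re Z,\Im Z)$ almost surely lies in $H_0^{-\epsilon}(S^1)\times H_0^{-\epsilon}(S^1)$; and using the classical moment identity $\mathbb{E}|\Tr(U^k)|^2=\min(k,n)$ (see e.g. \cite{Diaconis2001}) gives $\sup_n\mathbb{E}\|\log p_{U(n)}\|_{H_0^{-\epsilon}}^2<\infty$, so the fields $\log p_{U(n)}$ themselves also take values in the claimed space.

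For (i), fix $J\in\mathbb{N}$: the projection onto the modes $|k|\leq J$ is a fixed continuous linear map of the random vector $(\Tr(U^k))_{k=1}^{J}$, which by Theorem~\ref{thm:DiaconisShahshahani} converges in distribution to $(\sqrt{k}Z_k)_{k=1}^{J}$, so the continuous mapping theorem gives convergence of the low-frequency projections of $(\Re\log p_{U(n)},\Im\log p_{U(n)})$ to those of $(\Re Z,\Im Z)$; that the two limits agree is exactly the statement that $-k^{-1}\Tr(U^k)\Rightarrow A_k$ jointly over $k\leq J$ with the $A_k$ independent. For (ii), by Markov's inequality together with the description of relatively compact subsets of $H_0^{-\epsilon}(S^1)$ as bounded sets with uniformly small Fourier tails, it suffices to show that $\lim_{J\to\infty}\limsup_{n\to\infty}\sum_{k>J}k^{-2-2\epsilon}\,\mathbb{E}|\Tr(U^k)|^2=0$. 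Inserting $\mathbb{E}|\Tr(U^k)|^2=\min(k,n)$, for $n>J$ this sum equals $\sum_{J<k\leq n}k^{-1-2\epsilon}+n\sum_{k>n}k^{-2-2\epsilon}\leq\sum_{k>J}k^{-1-2\epsilon}+C\,n^{-2\epsilon}$; letting $n\to\infty$ and then $J\to\infty$ sends this to $0$. Assembling (i) and (ii) through Prokhorov's theorem yields weak convergence in $H_0^{-\epsilon}(S^1)\times H_0^{-\epsilon}(S^1)$.

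The \emph{main obstacle} is the tail estimate, and inside it the need for a second-moment bound on $\Tr(U^k)$ that holds uniformly in $n$ at frequencies $k$ comparable to or larger than $n$. The trivial deterministic bound $|\Tr(U^k)|\leq n$ only yields $\mathbb{E}|\Tr(U^k)|^2\leq n^2$, which makes the contribution of the modes $k>n$ of order $n^{1-2\epsilon}$ and is useless for $\epsilon<1/2$; everything else in the argument (continuous mapping, Markov, Prokhorov) is soft. Thus the real content beyond the Diaconis--Shahshahani low-frequency asymptotics is the control of the high-frequency traces, supplied by the fact that $\Tr(U^k)$ never exceeds its Gaussian limit in variance, i.e. $\mathbb{E}|\Tr(U^k)|^2=\min(k,n)\leq k$ for all $n$.
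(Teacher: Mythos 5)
Your proposal is correct, and it is worth noting that the thesis does not prove this theorem at all: it is quoted from Hughes--Keating--O'Connell \cite{Hughes2005}, so the only internal point of comparison is the proof of the dynamical analogue (Theorem \ref{thm:FS}) in Chapter \ref{chapter:UBM}, which follows the same architecture as yours — convergence of finite-dimensional (Fourier-mode) distributions to identify any subsequential limit, a uniform second-moment bound resting on $\mathbb{E}|\Tr(U^k)|^2=\min(k,n)\leq k$, and Prokhorov to upgrade to weak convergence in the Sobolev space. The one structural difference is how tightness is organized: you invoke the characterization of relatively compact subsets of $H_0^{-\epsilon}(S^1)$ as bounded sets with uniformly small Fourier tails and verify $\lim_{J}\limsup_n\sum_{k>J}k^{-2-2\epsilon}\mathbb{E}|\Tr(U^k)|^2=0$ directly, whereas the thesis (in the dynamical setting) instead bounds $\sup_n\mathbb{E}\|\cdot\|^2$ in a slightly stronger space $H_0^{-\epsilon'}$ with $\epsilon'<\epsilon$ and uses compactness of the embedding $H_0^{-\epsilon'}(S^1)\hookrightarrow H_0^{-\epsilon}(S^1)$ together with Markov. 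The two devices are interchangeable here and both isolate the same genuine content that you correctly identify: the high-frequency control $\mathbb{E}|\Tr(U^k)|^2\leq k$ uniformly in $n$, without which the modes $k\gtrsim n$ are not controlled for $\epsilon<1/2$. One cosmetic slip: since the Fourier coefficients of $Z$ are the $A_k$ themselves (the limits of $k^{-1}\Tr(U^k)$), the norm of the limit involves $\sum_{k\geq1}k^{-2\epsilon}|A_k|^2$ rather than $\sum_{k\geq1}k^{-2-2\epsilon}|A_k|^2$; with $\mathbb{E}|A_k|^2=1/k$ this is still finite for every $\epsilon>0$, so your conclusion that $(\Re Z,\Im Z)$ lies a.s. in $H_0^{-\epsilon}(S^1)\times H_0^{-\epsilon}(S^1)$ is unaffected.
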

A straightforward formal calculation of the covariance kernel of $\Re Z$ and $\Im Z$ shows that the fields $\Re Z$ and $\Im Z$ are indeed log-correlated:
\begin{align}
    \mathbb{E} \left( \Re Z(\theta) \Re Z(\theta') \right) = \mathbb{E} \left( \Im Z(\theta) \Im Z(\theta') \right) = \frac{1}{2} \sum_{k = 1}^\infty \frac{1}{k} \cos(k(\theta - \theta')) = -\frac{1}{2} \log |e^{i\theta} - e^{i\theta'}|.
\end{align}

For various other ensembles the logarithm of the characteristic polynomial has been connected to a log-correlated Gaussian field, for example for the complex Ginibre ensemble, the limiting field has been identified as the Gaussian free field on $\mathbb{R}^2$ (conditioned to be analytic outside the unit disk) \cite{RV07}, and for the Gaussian unitary ensemble \cite{FKS13} the limiting field lives on $(-1,1)$ and has the covariance kernel $-\frac{1}{2} \log |x - y|$. For the orthogonal and symplectic groups the limiting field is given in the following theorem, which is the analogue to Theorem \ref{thm:HughesKeatingOConnell1} (to see that the fields $X$ and $\hat{X}$ are indeed log-correlated see Remark \ref{remark:cov}, in which their covariance kernels are computed):

\begin{theorem} \label{thm:Gaussian field1} (Assiotis, Keating \footnote{This result was first published in \cite{Forkel2021} with the kind agreement of Dr. Assiotis.})
Let $\mathcal{N}_j$, $j \in \mathbb{N}$, be independent real standard Gaussians, and let $\eta_j = 1_{j \text{ is even}}$, for $j \in \mathbb{N}$. Then for any $\epsilon > 0$, the sequence of pairs of fields $\left( \Re \log p_{O(n)}(\cdot), \Im \log p_{O(n)}(\cdot) \right)_{n \in \mathbb{N}}$ converges in distribution in $H^{-\epsilon}_0(S^1) \times H^{-\epsilon}_0(S^1)$ to the pair of generalized Gaussian fields $\left( X - x, \hat{X} - \hat{x} \right)$, where
\begin{align} 
\begin{split}
X(\theta) &= \frac{1}{2} \sum_{j = 1}^\infty \frac{1}{\sqrt{j}} \mathcal{N}_{j} \left(e^{-ij\theta} + e^{ij\theta}\right) = \sum_{j=1}^\infty \frac{1}{\sqrt{j}} \mathcal{N}_{j} \cos(j\theta),\\
\hat{X}(\theta) &= \frac{1}{2i} \sum_{j = 1}^\infty \frac{1}{\sqrt{j}} \mathcal{N}_{j} \left(e^{-ij\theta} - e^{ij\theta}\right) = -\sum_{j=1}^\infty \frac{1}{\sqrt{j}} \mathcal{N}_{j} \sin(j\theta),\\
x(\theta) &= \frac{1}{2} \sum_{j = 1}^\infty \frac{\eta_j}{j} \left(e^{-ij\theta} + e^{ij\theta}\right) = \sum_{j=1}^\infty \frac{\eta_j}{j} \cos(j\theta),\\
\hat{x}(\theta) &= \frac{1}{2i} \sum_{j = 1}^\infty \frac{\eta_j}{j} \left(e^{-ij\theta} - e^{ij\theta}\right) = - \sum_{j=1}^\infty \frac{\eta_j}{j} \sin(j\theta).
\end{split}
\end{align}
Similarly, for any $\epsilon > 0$, the sequence of pairs of fields $\left( \Re \log p_{Sp(2n)}(\cdot), \Im \log p_{Sp(2n)}(\cdot) \right)_{n \in \mathbb{N}}$ converges in distribution in $H^{-\epsilon}_0(S^1) \times H^{-\epsilon}_0(S^1)$ to the pair of generalized Gaussian fields $\left( X + x, \hat{X} + \hat{x} \right)$.
\end{theorem}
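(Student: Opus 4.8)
The plan is to adapt, essentially verbatim, the proof of Theorem~\ref{thm:HughesKeatingOConnell1}, replacing the unitary inputs by their orthogonal and symplectic counterparts. The starting point is the Fourier expansion $\log p_{G(n)}(\theta)=-\sum_{k\ge1}\frac{\Tr(U^k)}{k}e^{-ik\theta}$ derived above, now for $G(n)\in\{O(n),Sp(2n)\}$. For these groups every eigenvalue comes in a complex conjugate pair (together with possible deterministic eigenvalues at $\pm1$), so $\Tr(U^k)\in\mathbb{R}$ for all $k$, and hence
\begin{equation*}
\Re\log p_{G(n)}(\theta)=-\sum_{k\ge1}\frac{\Tr(U^k)}{k}\cos(k\theta),\qquad \Im\log p_{G(n)}(\theta)=\sum_{k\ge1}\frac{\Tr(U^k)}{k}\sin(k\theta).
\end{equation*}
Both fields are thus determined by the single real sequence $(\Tr(U^k))_{k\ge1}$, with Fourier coefficients of modulus $|\Tr(U^k)|/(2k)$ in modes $\pm k$, so that $\|\Re\log p_{G(n)}\|_{H_0^{-\epsilon}}^2=\|\Im\log p_{G(n)}\|_{H_0^{-\epsilon}}^2=\tfrac12\sum_{k\ge1}k^{-2-2\epsilon}\Tr(U^k)^2$. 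The theorem then reduces to two ingredients: (i) joint convergence of the low modes, and (ii) uniform-in-$n$ smallness of the high modes.

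For (i) I would invoke the Diaconis--Shahshahani/Diaconis--Evans central limit theorem for traces of powers of Haar orthogonal and symplectic matrices \cite{Diaconis1994,Diaconis2001}: for every fixed $\ell$, the vector $(\Tr(U),\dots,\Tr(U^\ell))$ converges in distribution, as $n\to\infty$, to $(\sqrt{1}\,\mathcal{N}_1+\eta_1,\dots,\sqrt{\ell}\,\mathcal{N}_\ell+\eta_\ell)$ for $G(n)=O(n)$, and to $(\sqrt{1}\,\mathcal{N}_1-\eta_1,\dots,\sqrt{\ell}\,\mathcal{N}_\ell-\eta_\ell)$ for $G(n)=Sp(2n)$; the sign of the deterministic shift $\pm\eta_j$ can be read off from the Weyl densities above (one checks $\lim_n\mathbb{E}[\Tr(U^j)]=+\eta_j$ for $O(n)$ and $-\eta_j$ for $Sp(2n)$, already on $O(2)$ and $Sp(2)$, and the contribution of the deterministic $\pm1$ eigenvalues averages to $\eta_j$ for every $n$). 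Since the map from $(\Tr(U^k))_{1\le k\le M}$ to the mode-$\le M$ truncation of $(\Re\log p_{G(n)},\Im\log p_{G(n)})$ is linear and continuous, the truncated pair converges in distribution to the corresponding truncation of $(-X-x,-\hat X-\hat x)$ for $O(n)$; as $(\mathcal{N}_j)_j\overset{d}{=}(-\mathcal{N}_j)_j$ and $X,\hat X$ are linear in $(\mathcal{N}_j)_j$ while $x,\hat x$ are deterministic, this equals in law the truncation of $(X-x,\hat X-\hat x)$, and the analogous computation with $\mp\eta_j$ yields $(X+x,\hat X+\hat x)$ in the symplectic case.

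For (ii) I would use a uniform second-moment bound $\mathbb{E}[\Tr(U^k)^2]\le Ck$, valid for all $k\ge1$ and all $n$ for Haar orthogonal and symplectic matrices --- the analogue of $\mathbb{E}|\Tr(U^k)|^2=\min(k,n)$ in the unitary case \cite{Diaconis1994,Diaconis2001}. This gives $\mathbb{E}\big[\|P_{>M}\Re\log p_{G(n)}\|_{H_0^{-\epsilon}}^2\big]=\tfrac12\sum_{k>M}k^{-2-2\epsilon}\mathbb{E}[\Tr(U^k)^2]\le\tfrac{C}{2}\sum_{k>M}k^{-1-2\epsilon}$, which tends to $0$ as $M\to\infty$ uniformly in $n$, and likewise for $\Im\log p_{G(n)}$ and for the limiting fields $X\mp x$, $\hat X\mp\hat x$ (for which $\mathbb{E}[(\sqrt{k}\,\mathcal{N}_k\mp\eta_k)^2]=k+\eta_k\le2k$, which in particular shows these fields lie in $H_0^{-\epsilon}(S^1)$ almost surely). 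Combining (i) and (ii) via the standard criterion for convergence in distribution in a separable Hilbert space --- convergence of every finite-mode projection together with uniform asymptotic negligibility of the tails --- yields convergence in $H_0^{-\epsilon}(S^1)\times H_0^{-\epsilon}(S^1)$, as claimed.

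I expect the main obstacle to be ingredient (ii): the second-moment bound on $\Tr(U^k)$ must be uniform in $n$ for \emph{all} $k$, including the regime $k\gtrsim n$ where the trivial deterministic estimate $|\Tr(U^k)|\le n$ (resp.\ $2n$) is far too weak to control the tail; pinning down (or reproving) the sharp estimate in the orthogonal and symplectic settings --- and, relatedly, getting the joint law of the real and imaginary parts and the exact sign of the $\pm\eta_j$ shifts right --- is the crux, while the remainder is a faithful transcription of the unitary argument of Hughes, Keating and O'Connell.
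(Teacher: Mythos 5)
Your proposal is correct and follows essentially the same route as the paper: the result is deduced, exactly as in Hughes--Keating--O'Connell, from the Diaconis--Shahshahani/Diaconis--Evans convergence of traces for $O(n)$ and $Sp(2n)$ together with a uniform tail estimate, and your sign bookkeeping for the $\pm\eta_j$ shifts and the use of $(\mathcal{N}_j)_j\overset{d}{=}(-\mathcal{N}_j)_j$ match the statement. The one point you flag as the crux --- the second-moment bound uniform in $n$ for all $k$ --- is not actually an obstacle: it is exactly the bound $\mathbb{E}\big[(\Tr(U_n^k))^2\big]\leq \mathrm{const}\min\{k,n\}$ of (\ref{eqn:trace bound}) in Theorem \ref{thm:traces}, which the paper takes as the key cited input.
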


In Chapter \ref{chapter:UBM} we are concerned with the connection between the logarithm of the characteristic polynomial of unitary Brownian motion and the Gaussian free field on the cylinder. In the case of unitary Brownian motion there is an extra variable $t \in \mathbb{R}$, which we include in the notation as follows: 
\begin{align}
    p_{U(n)}(t,\theta) := \det \left (I-e^{-i\theta}U_t\right) = \prod_{k=1}^n (1-e^{i(\theta_k(t)-\theta)}), \quad (t,\theta) \in \mathbb{R} \times [0,2\pi),
\end{align}
where $\theta_k(t)$, $k = 1,...,n$, are the eigenangles of $U_t$. We thus consider $\log p_{U(n)}(\cdot, \cdot)$ as a random field on the cylinder $\mathbb{R} \times [0,2\pi)$. Since Haar measure is the equilibrium distribution of unitary Brownian motion it is natural to expect that the limit of $p_{U(n)}(\cdot_{\cdot})$ (in an appropriate function space) would be given by
\begin{align} \label{eqn:Z1}
    Z(t,\theta) = \sum_{k = 1}^\infty A_k(t) e^{ik\theta},
\end{align}
where $A_k(\cdot)$, $k \in \mathbb{N}$, are complex stationary Ornstein-Uhlenbeck processes defined by the SDEs
\begin{align}
    \text{d}A_k(t) = -k A_k(t) \text{d}t + \text{d}\left(W_k(t) + i\tilde{W}_k(t) \right),
\end{align}
with $A_k(0)$ being a complex Gaussian whose real and imaginary parts are independent Gaussians with variance $1/(2k)$, and $(W_k(t))_{t \geq 0}$, $(\tilde{W}_k(t))_{t \geq 0}$, $k \in \mathbb{N}$, denoting real standard Brownian motions. In Chapter \ref{chapter:UBM} we prove the following theorem:

\begin{theorem} \label{thm:FS}
For any $s \in [0,\frac{1}{2})$, $\epsilon > s$, and $T>0$, the sequence of random fields $ \left( \log p_{U(n)}(\cdot, \cdot \right)_{n \in \mathbb{N}}$ converges in distribution in the tensor product of Hilbert spaces $H^{s}([0,T]) \otimes  H^{-\epsilon}_0(S^1)$ (see Section \ref{section:Sobolev} for a definition) to the generalized Gaussian field $Z$ in (\ref{eqn:Z1}). \\
Furthermore, those regularity parameters $s$ and $-\epsilon$ are optimal, in the sense that for $s = 1/2$ or $s \geq \epsilon \geq 0$ the sequence $ \left( \log p_{U(n)}(\cdot, \cdot) \right)_{n \in \mathbb{N}}$ almost surely does not converge in $H^{s}([0,T]) \otimes  H^{-\epsilon}_0(S^1)$, and $Z$ is almost surely not an element of  $H^{s}([0,T]) \otimes  H^{-\epsilon}_0(S^1)$. 
\end{theorem}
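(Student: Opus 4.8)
The plan is to prove the convergence asserted in Theorem \ref{thm:FS} by the classical two-step scheme --- convergence of finite-dimensional distributions together with tightness --- and then to read the optimality of the exponents $(s,-\epsilon)$ directly off the quantitative estimates used in the tightness step. Throughout one uses the Fourier expansion $\log p_{U(n)}(t,\theta)=-\sum_{k\geq 1}k^{-1}\Tr(U_t^k)\,e^{-ik\theta}$, so that in the Hilbert space $H^{s}([0,T])\otimes H^{-\epsilon}_0(S^1)$ one has $\|\log p_{U(n)}\|^2=\sum_{k\geq1}k^{-2\epsilon}\,\|k^{-1}\Tr(U_\cdot^k)\|^2_{H^{s}([0,T])}$, and similarly for $Z$ in (\ref{eqn:Z1}) with $A_k(\cdot)$ in place of $k^{-1}\Tr(U_\cdot^k)$. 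The first task (a dynamical version of Theorem \ref{thm:DiaconisShahshahani}) is to show that for each fixed $\ell$ the process $\bigl(\Tr(U_\cdot^k)\bigr)_{k=1}^{\ell}$ converges in distribution in $C([0,T])^{\ell}$ to the system of independent complex stationary Ornstein--Uhlenbeck processes of (\ref{eqn:Z1}), scaled so as to match the Fourier coefficients of $Z$. I would obtain this from Itô's formula applied to $\Tr(U_t^k)$ using the defining SDE (\ref{eqn:UBM SDE}) and the completeness relation of the chosen orthonormal basis of skew-Hermitian matrices: the drift of $\Tr(U_t^k)$ is $-k\,\Tr(U_t^k)$ plus terms vanishing as $n\to\infty$, and the predictable bracket of its martingale part converges --- using Theorem \ref{thm:DiaconisShahshahani} at fixed times together with stationarity --- to the deterministic, $k$-diagonal bracket of the limiting OU noise. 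A martingale central limit theorem then identifies the limit and gives independence across modes, and composing with the continuous map extracting finitely many Fourier coefficients yields convergence of the finite-dimensional distributions of $\log p_{U(n)}$ to those of $Z$.

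\textbf{Tightness.} The technical core is the estimate
\begin{equation*}
\mathbb{E}\,\bigl|\Tr(U_t^k)-\Tr(U_{t'}^k)\bigr|^2\;\leq\;C\,\min\bigl(k,\;k^2|t-t'|\bigr),\qquad t,t'\in[0,T],
\end{equation*}
uniform in $n$ and $k$: the cap $k$ comes from $\mathbb{E}|\Tr(U_t^k)|^2=\min(k,n)\leq k$ (the case $k\leq n$ being Theorem \ref{thm:DiaconisShahshahani}), and the factor $k^2|t-t'|$ from the bracket of the martingale part of $\Tr(U_t^k)$ being of order $k^{2}\d t$, again by the Itô computation above. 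Plugging this into the Gagliardo (Sobolev--Slobodeckij) form of the $H^{s}([0,T])$-norm and splitting the time integral at $|t-t'|\sim 1/k$ gives, for $0<s<1/2$,
\begin{equation*}
\mathbb{E}\,\bigl\|k^{-1}\Tr(U_\cdot^k)\bigr\|^2_{H^{s}([0,T])}\;\leq\;C_{s,T}\,k^{2s-1},
\end{equation*}
(the endpoint $s=0$ and the $L^2$ part being handled similarly), hence $\mathbb{E}\|\log p_{U(n)}\|^2_{H^{s}([0,T])\otimes H^{-\epsilon}_0(S^1)}\leq C_{s,T}\sum_{k\geq1}k^{2s-2\epsilon-1}$, which is finite exactly when $\epsilon>s$ and uniformly in $n$. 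To upgrade the bounded second moment to tightness I would rerun the estimate with slightly better exponents $s<s'<\min(\tfrac12,\epsilon)$ and $s'<\epsilon'<\epsilon$ (possible precisely because $\epsilon>s$), observe that $H^{s'}([0,T])\otimes H^{-\epsilon'}_0(S^1)$ embeds compactly into $H^{s}([0,T])\otimes H^{-\epsilon}_0(S^1)$, and apply Markov's inequality. The two steps together give the asserted convergence in distribution to $Z$.

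\textbf{Optimality.} For $s=1/2$ the obstruction is already present in a single Fourier mode: $t\mapsto\Tr(U_t^k)$ is a non-degenerate diffusion and hence a.s.\ not in $H^{1/2}([0,T])$ (as for Brownian paths), and the same holds for $A_k$; since the norm on $H^{1/2}([0,T])\otimes H^{-\epsilon}_0(S^1)$ dominates each modewise term, both $\log p_{U(n)}$ and $Z$ have a.s.\ infinite norm, so convergence in this space cannot hold. For $s\geq\epsilon\geq0$ the obstruction comes from the sum over modes: the estimates above, now used from below, give $\mathbb{E}\|\log p_{U(n)}\|^2\geq c\sum_{k\leq n}k^{2s-2\epsilon-1}\to\infty$, while for $Z$ the independent, nonnegative contributions $k^{-2\epsilon}\|A_k\|^2_{H^s([0,T])}$ have divergent sum of means and enough concentration that $\sum_{k}k^{-2\epsilon}\|A_k\|^2_{H^s}=\infty$ almost surely; a matching concentration estimate for $\log p_{U(n)}$ upgrades the divergence in mean to divergence in probability of the norms, which rules out convergence.

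\textbf{Main obstacle.} I expect the crux to be the uniform-in-$n$ increment bound --- controlling the $O(1/n)$ drift corrections in the Itô expansion of $\Tr(U_t^k)$ uniformly in $k$, and pinning down the bracket asymptotics $\sim k^{2}\d t$ --- since the sharp threshold $\epsilon>s$, together with the sharpness of the endpoint $s=1/2$, both rest precisely on this $k$-dependence; making the ``almost surely does not converge'' statement rigorous (rather than a statement about the limit $Z$ or about divergence in probability) is the other delicate point.
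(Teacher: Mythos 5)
Your overall architecture — finite-dimensional convergence plus tightness via a uniform second-moment bound in an intermediate space $H^{s'}\otimes H_0^{-\epsilon'}$ with $s<s'<\epsilon'<\epsilon$, compact embedding, Markov, and then optimality read off the same estimates — is exactly the paper's, and your thresholds ($\epsilon>s$, endpoint $s=1/2$) come out the same. Where you genuinely diverge is in the two key inputs. For the finite-dimensional distributions the paper does not run an It\^o/martingale-CLT argument at all: it imports the convergence of the processes $t\mapsto k^{-1}\Tr(U_t^k)$ to the Ornstein--Uhlenbeck processes $A_k$ directly from Spohn's theorem (Theorem \ref{thm:Spohn}) and the continuous mapping theorem. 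For the increment bound the paper uses the exact two-time covariance of Bourgade--Falconet (Corollary \ref{thm:multi time}), $\mathbb{E}\bigl[\Tr(U_t^k)\overline{\Tr(U_0^k)}\bigr]=e^{-k(k\vee n)t/n}\sinh(k(k\wedge n)t/n)/\sinh(kt/n)$, from which $\mathbb{E}|\Tr(U_t^k)-\Tr(U_0^k)|^2\le 4k^2t$ for $t<1/k$ and the cap $4(k\wedge n)$ follow by elementary calculus (Lemma \ref{lemma:boundedness}); this completely bypasses what you correctly identify as your crux, namely controlling the drift of $\Tr(U_t^k)$ — which contains the nonlinear hierarchy term $\tfrac{k}{n}\sum_{j<k}\Tr(U^j)\Tr(U^{k-j})$ — uniformly in $k$ up to and beyond $n$. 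Your route can be pushed through, but it needs fourth-moment bounds on traces valid outside the Diaconis--Shahshahani range $\sum_j ja_j\le n$ (the version quoted as Theorem \ref{thm:DiaconisShahshahani} does not cover $k\gtrsim n$), so it is substantially heavier than the paper's, whose only probabilistic inputs are the two cited external results. On optimality, your single-mode obstruction at $s=1/2$ and the independence-across-modes divergence for $Z$ when $s\ge\epsilon$ match the paper's Lemma \ref{lemma:optimal} in spirit (the paper makes the $s=1/2$ case rigorous via Girsanov against Brownian motion, the $s=\epsilon>0$ case via the time-changed Brownian representation and Kolmogorov's zero–one law, and $s=\epsilon=0$ via a Fubini argument); but note that your additional claims about the prelimit fields — a.s.\ infinite norms at $s=1/2$, and a ``matching concentration estimate'' upgrading divergence of $\mathbb{E}\|\log p_{U(n)}\|^2$ to divergence in probability when $s\ge\epsilon$ — are left unproved and are in fact unnecessary: once the finite-dimensional distributions identify any weak limit as $Z$, the fact that $Z$ is a.s.\ not in the space already rules out convergence there, which is precisely how the paper concludes. (Be aware, too, that for fixed $n$ and $\epsilon\le s<1/2$ the field $\log p_{U(n)}$ can itself lie in $H^s([0,T])\otimes H_0^{-\epsilon}(S^1)$; the obstruction is only in the limit $n\to\infty$, so an argument via a.s.\ non-membership of the prelimit fields would not work in that regime.)
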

As one can see by a covariance calculation, see Remark \ref{remark:covariance}, $\Re Z$ and $\Im Z$ are Gaussian free fields on the infinite cylinder $\mathbb{R} \times [0,2\pi)$. The identification of the Gaussian free field as the limit of the logarithm of the characteristic polynomial of unitary Brownian motion already follows from the much earlier work of Spohn \cite{Spohn1998}, but the convergence holds in a different way. We will expand in Section \ref{chapter:UBM} on how our result is related to Spohn's.\\

We conclude this section by mentioning that the maximum of the real and imaginary part of the logarithm of the characteristic polynomial of random matrices has attracted considerable interest, especially for random matrices from the unitary group, as in that case one can get analogous conjectures for the extreme values of the Riemann Zeta function on the critical line. Fyodorov-Hiary-Keating \cite{Fyodorov2012} made the very precise conjecture that the random variables 
\begin{align} \label{eqn:max}
    \max_{\theta \in [0,2\pi)} \log |p_{U(n)}(\theta)| - \log n + \frac{3}{4} \log \log n,
\end{align}
converge in distribution, as $n \rightarrow \infty$, towards a limiting random variable $R$, whose density is given by 
\begin{align}
    \mathbb{P}(R \in \text{d}x) = 4e^{2x} K_0(2e^x) \text{d}x,
\end{align}
where $K_\nu$ denotes the modified Bessel function of the second kind. This density is the density of the sum of two independent Gumbel random variables, as was later observed in \cite{SZ15}. As was pointed out in \cite{Fyodorov2012}, the constant $3/4$ in the subleading order is expected due to the asymptotically log-correlated structure of the field $\log |p_{U(n)}(\cdot)|$, and stands in contrast to the constant $1/4$ one obtains for the extreme values of short-range correlated random fields \cite{LLR83}. \\
The Fyodorov-Hiary-Keating conjecture has now been almost entirely proven: the first order term $\log n$ has been proven in \cite{Arguin2017}, the second order term $\frac{3}{4} \log \log n$ has been confirmed in \cite{PZ18}, tightness of the random variables in (\ref{eqn:max}) has been established in \cite{Chhaibi2018a}, and in \cite{PZ22} convergence has been proven towards a limiting random variable which is the sum of two independent random variables, one of which is Gumbel. The distribution of the second one is the last missing piece. Note that all of those results are also valid for the circular $\beta$-ensemble, denoted $C\beta E$, for a $\beta > 0$, whose joint eigenvalue density on the unit circle is given by (the case $\beta = 2$ corresponds to random unitary matrices, see Theorem \ref{thm:Weyl U})
\begin{align}
    \frac{1}{Z_{n,\beta}} \prod_{1 \leq j < k \leq n} |e^{i\theta_j} - e^{i\theta_k}|^\beta \text{d}\theta_1 \cdots \text{d}\theta_n.
\end{align}
For other ensembles there also exist conjectures and results: in \cite{FS16}, Simm and Fyodorov made a similar conjecture for the extreme values of the absolute value of the characteristic polynomial of the Gaussian unitary ensemble, and in \cite{Lam20}, Lambert did so for the complex Ginibre ensemble and proved the leading order term. Claeys, Fahs, Lambert and Webb \cite{Claeys2019} considered the extreme values of the centered eigenvalue counting function, (which is closely related to the imaginary part of the logarithm of the characteristic polynomial) of a certain class of random Hermitian matrices which includes the Gaussian unitary ensemble, and obtained the leading order term. This allowed them to prove bounds on how much the eigenvalues of those random Hermitian matrices deviate from certain deterministic location. %To prove their results Claeys et al. used the theory of Gaussian multiplicative chaos, which we now introduce.

\section{Gaussian Multiplicative Chaos}

The theory of Gaussian multiplicative chaos was initiated by Kahane in the context of turbulence in \cite{Kahane1985}, and has since found many applications ranging from finance to Liouville quantum gravity - see the review article \cite{Rhodes2013} and the references therein. The goal of the theory is to rigorously define and study the random measures (called Gaussian multiplicative chaos measures or GMC measures) 
\begin{align} \label{eqn:GMC def}
    M_\gamma(\text{d}x) = e^{\gamma h(x) - \frac{\gamma^2}{2} \mathbb{E}(h(x)^2)} \sigma (\text{d}x) = \frac{e^{\gamma h(x)}}{\mathbb{E}\left( e^{\gamma h(x)} \right)} \text{d}\sigma(x),
\end{align}
where $h$ is a real centered log-correlated Gaussian field on a locally compact metric space $(D,d)$, with covariance kernel $K(x,y) = \mathbb{E}(h(x)h(y))$, $\gamma$ is a constant, and $\sigma$ is a positive Radon measure on $D$. Since log-correlated fields cannot be defined pointwise they only exist as random distributions. Thus to make sense of the measures (\ref{eqn:GMC def}), one needs to first regularize $h$ in some way to obtain fields $h_\epsilon$, $\epsilon > 0$, which are defined pointwise and converge to $h$ in a suitable sense as $\epsilon \rightarrow 0$, then define the random measures
\begin{align}
    M_{\gamma,\epsilon}(\text{d}x) = e^{\gamma h_\epsilon(x) - \frac{\gamma^2}{2} \mathbb{E}(h_\epsilon(x)^2)} \sigma (\text{d}x) = \frac{e^{\gamma h_\epsilon(x)}}{\mathbb{E}\left( e^{\gamma h_\epsilon(x)} \right)} \text{d}x,
\end{align}
and finally set $M_\gamma = \lim_{\epsilon \rightarrow 0} M_{\gamma,\epsilon}$. Naturally the following questions arise:
\begin{itemize}
    \item In which sense does $M_{\gamma, \epsilon}$ converge to $M_\gamma$?
    \item Is the limit $M_\gamma$ independent of the choice of the regularizations $h_\epsilon$ of $h$?
    \item For which $\gamma > 0$ is the limiting field $M_{\gamma, \epsilon}$ non-trivial, i.e. not a.s. equal to the zero measure?
\end{itemize}

In Kahane's seminal work he proved the following theorem, which answers the first two questions for a certain class of covariance kernels $K$:
\begin{theorem}[Kahane \cite{Kahane1985}] \label{thm:Kahane1}
    Let $(D,d)$ be a locally compact metric space, and let $K:D\times D \rightarrow \mathbb{R} \cup \{\pm \infty\}$ be a positive definite covariance kernel that allows a decomposition
    \begin{align}
        K(x,y) = \sum_{k = 1}^{\infty} K_k(x,y), \quad x,y \in D,
    \end{align}
    where $K_k:D \times D \rightarrow \mathbb{R}$, $k \in \mathbb{N}$, are continuous and positive definite covariance kernels. Further let $g_k$, $k \in \mathbb{N}$, be independent (pointwise defined) Gaussian fields with covariance kernel $K_k$, and define the sequence of approximating fields $h_n = \sum_{k = 1}^n g_k$, $n \in \mathbb{N}$. Then given a Radon measure $\sigma$ on $D$ and $\gamma > 0$, the random measures 
    \begin{align}
        M_{\gamma,n}(\text{d}x) := e^{\gamma h_n(x) - \frac{\gamma^2}{2} \mathbb{E}(h_n(x)^2)} \sigma (\text{d}x),
    \end{align}
    converge almost surely in the space of Radon measures on $D$ (equipped with the the topology of weak convergence) to a random measure $M_\gamma$. If in addition $K_k(x,y) \geq 0$ for all $x,y \in D$ and $k \in \mathbb{N}$, then the distribution of $M_\gamma$ is independent of the decomposition of $K$ into $K_k$, $k \in \mathbb{N}$.
\end{theorem}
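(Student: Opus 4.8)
The plan is to treat the two assertions separately: first the almost-sure convergence of the approximating measures $M_{\gamma,n}$ in the weak topology on Radon measures, for an arbitrary $\sigma$-positive-definite decomposition, and then the independence of the limit law under the extra hypothesis $K_k\ge 0$. For convergence, the starting point is that the approximations form a martingale. Fix a non-negative $f\in C_c(D)$ and set $\mathcal F_n=\sigma(g_1,\dots,g_n)$. Writing $M_{\gamma,n+1}(f)=\int f(x)\,e^{\gamma h_n(x)-\frac{\gamma^2}{2}\sum_{k\le n}K_k(x,x)}\,e^{\gamma g_{n+1}(x)-\frac{\gamma^2}{2}K_{n+1}(x,x)}\sigma(\mathrm dx)$ and using that $g_{n+1}$ is independent of $\mathcal F_n$ with $\mathbb E\!\left[e^{\gamma g_{n+1}(x)-\frac{\gamma^2}{2}K_{n+1}(x,x)}\right]=1$, Tonelli's theorem gives $\mathbb E[M_{\gamma,n+1}(f)\mid\mathcal F_n]=M_{\gamma,n}(f)$; the required measurability and integrability of $h_n$ hold because the $K_k$ are continuous, and $\mathbb E[M_{\gamma,n}(f)]=\int f\,\mathrm d\sigma<\infty$. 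Thus $(M_{\gamma,n}(f))_n$ is a non-negative martingale, and Doob's convergence theorem yields a limit $M_\gamma(f)\ge 0$ almost surely.

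The second step promotes this to convergence of the measures $M_{\gamma,n}$ on a single event of full probability. Using that $D$ is locally compact and second countable, I would fix an exhaustion $F_1\subset F_2\subset\cdots$ of $D$ by compacts with $F_m\subset\operatorname{int}F_{m+1}$, cutoffs $\chi_m\in C_c(D)$ with $\mathbf 1_{F_m}\le\chi_m\le\mathbf 1_{F_{m+1}}$, and a countable family $\mathcal D\subset C_c(D)$ that is dense for the sup norm in each $\{g\in C_c(D):\operatorname{supp}g\subset F_m\}$. On the full-probability event on which $M_{\gamma,n}(g)$ converges for every $g\in\mathcal D$ and every $\chi_m$, the bound $|M_{\gamma,n}(g)|\le\|g\|_\infty M_{\gamma,n}(\chi_m)$ for $\operatorname{supp}g\subset F_m$ shows $\sup_n M_{\gamma,n}(g)<\infty$, so the limit extends to a positive linear functional on $C_c(D)$, hence by the Riesz representation theorem to a Radon measure $M_\gamma$; the same bound provides the equicontinuity in $g$ needed to pass from $\mathcal D$ to all of $C_c(D)$, which is precisely weak convergence $M_{\gamma,n}\Rightarrow M_\gamma$.

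For the uniqueness statement I would invoke Kahane's convexity (interpolation) inequality: if $X,Y$ are centered Gaussian fields over a measure space $(T,\mu)$ with $\mathbb E[X_sX_t]\le\mathbb E[Y_sY_t]$ for all $s,t$, then $\mathbb E\,F\!\left(\int_T e^{X_t-\frac12\mathbb E X_t^2}\mu(\mathrm dt)\right)\le\mathbb E\,F\!\left(\int_T e^{Y_t-\frac12\mathbb E Y_t^2}\mu(\mathrm dt)\right)$ for every convex $F$ of at most polynomial growth; this is obtained by interpolating $Z_t(u)=\sqrt u\,Y_t+\sqrt{1-u}\,X_t$ with $X\perp Y$, differentiating in $u$, and Gaussian integration by parts, using $F''\ge 0$. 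Let $K=\sum_k K_k=\sum_k\tilde K_k$ be two decompositions with all kernels non-negative, with approximating fields $h_n,\tilde h_m$, chaoses $M_{\gamma,n},\tilde M_{\gamma,m}$, and partial sums $C_n=\sum_{k\le n}K_k$, $\tilde C_m=\sum_{k\le m}\tilde K_k$. Non-negativity gives $C_n\le K$, $C_n\uparrow K$, and likewise for $\tilde C_m$, so on the compact set $\operatorname{supp}(f)^2$ one can show that for each $n$ and each $\rho\in(0,1)$ there is an $m$ with $(1-\rho)C_n\le\tilde C_m$ (possibly after an additive $\varepsilon_m\to 0$ regularization, to compensate for the non-uniform blow-up near the diagonal). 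Applying Kahane's inequality with $\mu=f\,\sigma$, $X=\gamma\sqrt{1-\rho}\,h_n$, $Y=\gamma\tilde h_m$, then letting $\rho\to 0$, $m\to\infty$, $n\to\infty$, and using the already established a.s.\ convergence on both sides, yields $\mathbb E\,F(M_\gamma(f))\le\mathbb E\,F(\tilde M_\gamma(f))$ for all such $F$; by symmetry these are equal. Taking $F(x)=e^{-tx}$, $t\ge 0$, identifies the Laplace transform of $M_\gamma(f)$, and doing so with $f=\sum_i t_i f_i$, $t_i\ge 0$, identifies the joint law of $(M_\gamma(f_i))_i$, so $M_\gamma$ and $\tilde M_\gamma$ have the same distribution.

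The step I expect to be genuinely delicate is the covariance comparison $(1-\rho)C_n\le\tilde C_m$ on $\operatorname{supp}(f)^2$: since $\tilde C_m\uparrow K$ and $K$ is $+\infty$ on the diagonal, this monotone convergence is not uniform there, so one has to combine Dini's theorem on $\{|x-y|\ge\delta\}$ with a separate argument near the diagonal that exploits the continuity and non-negativity of the individual kernels $\tilde K_k$ — this is exactly where the hypothesis $K_k\ge 0$ is used in an essential way. Everything else (the martingale property, Doob's theorem, the Riesz--representation/separability bookkeeping, and Kahane's inequality itself) is classical, the last being the only other place where a nontrivial but standard computation is needed.
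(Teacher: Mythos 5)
Your proposal is essentially correct. For the convergence part it takes the same route the thesis itself sketches directly below the theorem: by independence of the $g_k$ and Fubini/Tonelli, $(M_{\gamma,n}(f))_n$ (the paper uses compact sets $A$ instead of test functions $f$) is a non-negative martingale, Doob gives an almost sure limit, and a countable family of test functions together with the Riesz representation theorem upgrades this to almost sure vague convergence of the measures; the paper leaves this last bookkeeping, and the whole uniqueness claim, to the citation \cite{Kahane1985}. Your uniqueness argument via Kahane's convexity inequality is the standard one and goes through; two small remarks. First, the covariance comparison you flag as delicate is simpler than you fear: apply Dini's theorem on the compact set $\operatorname{supp}(f)^2$ to the continuous functions $\min\bigl(\tilde C_m-(1-\rho)C_n,0\bigr)$, which are nondecreasing in $m$ and converge pointwise to $0$ (here is exactly where $K_k,\tilde K_k\ge 0$ enter, both off and on the diagonal, since on the diagonal $\tilde C_m\to K=+\infty$ eventually beats the finite $(1-\rho)C_n$); this yields $(1-\rho)C_n\le\tilde C_m+\varepsilon_m$ with $\varepsilon_m\to0$ uniformly, and the $\varepsilon_m$ is then absorbed by your independent-Gaussian regularization, so no separate argument near the diagonal is needed. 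Second, the limits should be taken in the order $m\to\infty$ (with $\varepsilon_m\to 0$), then $\rho\to 0$, then $n\to\infty$, and it is cleanest to work directly with bounded convex $F$ such as $F(x)=e^{-tx}$, which also avoids any uniform-integrability issue attached to "polynomial growth". Finally, note that your construction of the limiting measure implicitly uses separability (second countability) of $D$, which is harmless and implicit in the intended setting of the statement.
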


The proof of convergence uses a simple martingale argument, which we outline now, as this gives insight into how the Gaussian multiplicative chaos measures in Chapter \ref{chapter:GMC} are constructed: For any compact $A \subset D$ and $n \in \mathbb{N}$ it holds by Fubini that 
\begin{align} \label{eqn:expectation of mu k}
\mathbb{E}\left( M_{\gamma,n}(A) \right) = \int_A \sigma(\text{d} x),
\end{align}
and 
\begin{align}
\mathbb{E}\left( M_{\gamma,n}(A) | \sigma(g_1,...,g_{n-1}) \right) = M_{\gamma,n-1}(A),
\end{align} 
where $\sigma(g_1,...,g_{n-1})$ is the $\sigma$-algebra generated by $g_1,...,g_{n-1}$. Thus, since also for any $n \in \mathbb{N}$ the random variable $M_{\gamma,n}(A)$ is measurable w.r.t. $\sigma(g_1,...,g_{n-1})$, it follows that the sequence $\left(M_{\gamma,n}(A) \right)_{n \in \mathbb{N}}$ is a martingale. As a non-negative martingale the sequence converges a.s. to a random variable which will be denoted by $M_\gamma(A)$. One can then show that a.s. the map $A \mapsto M_\gamma(A)$ is a measure and it holds that a.s. $M_{\gamma,n} \xrightarrow{d} M_\gamma$ in the space of Radon measures on $D$, equipped with the topology of weak convergence. When constructing $M_\gamma$ like this, Kolmogorov's zero-one law also immediately implies that $\mathbb{P}(M_\gamma \text{ is the zero measure}) \in \{0,1\}$, since the event $\{M_\gamma \text{ is the zero measure} \}$ is independent of any finite number of $g_k$, $k \in \mathbb{N}$.\\

The answer to the third question, i.e. for which $\gamma > 0$ is the limiting measure $M_\gamma$ non-trivial, depends heavily on the specific kernel and the reference measure $\sigma$. For this question of non-triviality Kahane focused on the case of log-correlated Gaussian fields whose covariance kernel can be written as
\begin{align} \label{eqn:log kernel}
    K(x,y) = \log_+ \frac{T}{d(x,y)} + g(x,y),
\end{align}
where $\log_+ x := \max (\log x, 0)$ and $g$ is continuous and bounded on $D \times D$. While he proved more general results, we restrict ourselves to stating his result for the case that $D$ is an open subset of Euclidean space, which is the case we are concerned with.

\begin{theorem}[Kahane \cite{Kahane1985}] \label{thm:Kahane2}
    If the domain $D$ is an open subset of $\mathbb{R}^d$ for some $d \in \mathbb{N}$, equipped with the Euclidean distance, $\sigma$ is the Lebesgue measure, and $K$ takes the form (\ref{eqn:log kernel}), then 
    \begin{align}
        M_\gamma \text{ is non-trivial } \iff \gamma^2 < 2d.
    \end{align}
\end{theorem}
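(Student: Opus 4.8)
\emph{Proof proposal.} I would establish the two implications separately, fixing once and for all a compact set $A\subset D$ with $|A|:=\sigma(A)>0$ and working with the non-negative martingale $\big(M_{\gamma,n}(A)\big)_{n\geq 1}$ supplied by Theorem~\ref{thm:Kahane1}, which converges a.s.\ to $M_\gamma(A)$ and has constant mean $\mathbb{E}\big(M_{\gamma,n}(A)\big)=|A|$. By the zero--one law recorded after Theorem~\ref{thm:Kahane1}, $\{M_\gamma\equiv 0\}$ has probability $0$ or $1$, so the whole statement reduces to deciding, for this single $A$, whether the mean $|A|$ survives in the limit (equivalently, whether $(M_{\gamma,n}(A))_n$ is uniformly integrable) or whether $M_{\gamma,n}(A)\to 0$. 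I may assume the kernels $K_k$ in the decomposition are pointwise non-negative, so that $\sum_{k\leq n}K_k\uparrow K$; the general case is recovered from this one by Kahane's convexity inequality, which compares $\mathbb{E}\,F\!\big(\int_A e^{\gamma h_n(x)-\frac{\gamma^2}{2}\mathbb{E}(h_n(x)^2)}\sigma(dx)\big)$ across Gaussian fields ordered by their covariance kernels, for convex $F$.

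\emph{Non-triviality when $\gamma^2<2d$.} First the $L^2$-regime $\gamma^2<d$: by Fubini,
\[
\mathbb{E}\big(M_{\gamma,n}(A)^2\big)=\int_A\!\!\int_A e^{\gamma^2\sum_{k\leq n}K_k(x,y)}\,dx\,dy\;\leq\;\int_A\!\!\int_A e^{\gamma^2 K(x,y)}\,dx\,dy\;\asymp\;\int_A\!\!\int_A\frac{dx\,dy}{|x-y|^{\gamma^2}},
\]
where the last step uses the form (\ref{eqn:log kernel}) of $K$ and the boundedness of $g$; this is finite precisely when $\gamma^2<d$, so the martingale is $L^2$-bounded, hence uniformly integrable, and $\mathbb{E}\big(M_\gamma(A)\big)=|A|>0$. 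For the remaining window $d\leq\gamma^2<2d$ the second moment diverges, and I would introduce a barrier truncation: after reindexing the decomposition so that $\mathbb{E}(h_k(x)^2)$ grows linearly in $k$, fix a slope $\alpha$ slightly larger than $\gamma$ and set
\[
M_{\gamma,n}^{\alpha}(A):=\int_A\mathbf{1}\big\{\,h_k(x)\leq\alpha k\ \text{for all}\ k\leq n\,\big\}\;e^{\gamma h_n(x)-\frac{\gamma^2}{2}\mathbb{E}(h_n(x)^2)}\,dx\;\leq\;M_{\gamma,n}(A).
\]
A Cameron--Martin shift identifies $\mathbb{E}\big(M^{\alpha}_{\gamma,n}(A)\big)$ with $|A|$ times a ballot-type probability that a Brownian path stays below a barrier of slope $\alpha-\gamma>0$, which is bounded below uniformly in $n$; and a truncated second-moment computation, in which the barrier contributes an extra factor that is a positive power of $|x-y|$ near the diagonal, shows $\sup_n\mathbb{E}\big(M^{\alpha}_{\gamma,n}(A)^2\big)<\infty$, the relevant exponent making the near-diagonal integral converge exactly when $\gamma^2<2d$ (taking $\alpha\downarrow\gamma$). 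Hence $M^\alpha_{\gamma,n}(A)$ converges in $L^1$ to a limit of strictly positive mean, and since $M^\alpha_{\gamma,n}(A)\leq M_{\gamma,n}(A)$ we conclude $\mathbb{E}(M_\gamma(A))>0$, so by the zero--one law $M_\gamma$ is a.s.\ non-trivial.

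\emph{Degeneracy when $\gamma^2\geq 2d$.} Here I would show $M_{\gamma,n}(A)\to 0$ in probability, which together with the a.s.\ convergence forces $M_\gamma(A)=0$ a.s. The mass of $M_{\gamma,n}$ sits on the set of $\gamma$-thick points, those $x$ with $h_n(x)\approx\gamma\,\mathbb{E}(h_n(x)^2)$, as one sees from the Cameron--Martin identity $\mathbb{E}\big(\int_A g(x,h)\,M_\gamma(dx)\big)=\int_A\mathbb{E}\big(g(x,h+\gamma K(x,\cdot))\big)\,dx$, which tilts the field by $\gamma K(x,\cdot)$. A first-moment bound over a cover of $A$ by boxes of side $\epsilon$ gives $\mathbb{E}\big(\#\{\text{boxes carrying a point with } h\gtrsim(\gamma-\delta)\log(1/\epsilon)\}\big)\lesssim\epsilon^{-d}\epsilon^{(\gamma-\delta)^2/2}\to 0$ whenever $(\gamma-\delta)^2>2d$, so for $\gamma^2>2d$ such points a.s.\ disappear; quantifying this, one splits $M_{\gamma,n}(A)$ into its restriction to the thick region (expectation $o(1)$, after a Cameron--Martin shift) and the complementary part (which tends to $0$ via the same barrier truncation, now with slope below $\gamma$). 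The borderline $\gamma^2=2d$, where the expected box count is only $O(1)$, requires a more careful second-moment or derivative-martingale argument; alternatively, both the critical and subcritical degeneracy can be imported wholesale by comparing, via Kahane's convexity inequality, with a multiplicative cascade whose critical parameter is exactly $\gamma_c^2=2d$ and whose nondegeneracy dichotomy is the Kahane--Peyri\`ere/branching-random-walk theorem.

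\emph{Main obstacle.} The delicate point is the full subcritical window $d\leq\gamma^2<2d$: there $M_{\gamma,n}(A)$ is no longer $L^2$-bounded and everything hinges on the barrier-truncated measure --- establishing the uniform $L^2$ bound for $M^\alpha_{\gamma,n}(A)$, which needs precise control of Gaussian barrier-crossing probabilities together with the near-diagonal behaviour of the decomposed kernel, and the matching ballot-type lower bound on its mean. The degeneracy side is comparatively soft away from criticality, being essentially a first-moment/Borel--Cantelli estimate, with $\gamma^2=2d$ the only genuine subtlety there.
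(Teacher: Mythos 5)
There is no in-paper proof to compare against here: Theorem~\ref{thm:Kahane2} is quoted from Kahane \cite{Kahane1985} and stated without proof in the thesis, which only sketches the martingale construction behind Theorem~\ref{thm:Kahane1}. Measured against Kahane's original argument, your route is genuinely different: Kahane establishes the dichotomy by comparing $M_\gamma$ with dyadic multiplicative cascades through his convexity inequality and then invoking the Kahane--Peyri\`ere criterion, whereas you argue directly on the field, via the $L^2$ computation for $\gamma^2<d$ and a barrier (thick-point) truncation with Cameron--Martin shifts for $d\leq\gamma^2<2d$, plus a first-moment thick-point count for $\gamma^2>2d$. This is essentially the modern Berestycki-style proof; it buys a more elementary, self-contained and robust argument (it does not require matching the field to a cascade structure), at the price of the Gaussian barrier estimates, while Kahane's comparison route transfers the whole dichotomy, including the critical case, from the cascade setting in one stroke.

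Two points need attention. First, the only genuine gap is the borderline $\gamma^2=2d$: your first-moment box count is only $O(1)$ there, and you explicitly defer to ``a more careful second-moment or derivative-martingale argument'' or to the cascade comparison; as written, degeneracy at criticality is asserted rather than proved, and if you fall back on Kahane's convexity inequality for it you should set up the comparison kernels carefully (the hypothesis (\ref{eqn:log kernel}) does not hand you pointwise non-negative $K_k$'s, so the sandwiching between cascade-type kernels, with bounded error, has to be made explicit --- the same remark applies to your opening reduction ``I may assume $K_k\geq 0$''). Second, a minor logical slip: the claimed dichotomy ``either the mean survives (uniform integrability) or $M_{\gamma,n}(A)\to 0$'' is not automatic for a non-negative martingale and would itself need proof; fortunately your argument never uses it --- the truncated-measure lower bound gives $\mathbb{E}(M_\gamma(A))>0$ (or a Paley--Zygmund bound giving $\mathbb{P}(M_\gamma(A)>0)>0$) directly, and the zero--one law recorded after Theorem~\ref{thm:Kahane1} finishes non-triviality, so you should simply drop that sentence.
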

Usually the phase $\gamma^2 < 2d$ is referred to as the subcritical phase, the phase $\gamma^2 = 2d$ is referred to as the critical phase, and the phase $\gamma^2 > 2d$ is referred to as the supercritical phase. Constructed as above, a GMC measure in the critical and supercritical phases is almost surely the zero measure, however there exist ways to construct non-trivial critical and supercritical GMC measures, see for example \cite{Rhodes2013}. The phase $\gamma^2 < d$ is referred to as the $L^2$-phase, as here for any bounded measurable set $A \subset D$ it holds that \cite{Kahane1985}
\begin{align}
    \mathbb{E}\left( M_{\gamma}(A)^2 \right) < \infty,
\end{align}
and more generally the $m$-th moment of the total mass of $M_\gamma$ is finite if and only if $\gamma^2 < 2d/m$. \\%\todo{supported on thick points, relate to max of underlying field, corr of GMC}

Since the theory of GMC was initiated by Kahane his results have been generalized significantly. In particular the assumption that the covariance kernel can be decomposed into a sum of positive continuous covariance kernels is very restrictive and hard to check in practice. It is for example still unknown whether in the case $D = \mathbb{R}^3$ the kernel (\ref{eqn:log kernel}) allows a decomposition as in Theorem \ref{thm:Kahane1}. Rhodes and Vargas \cite{Robert2010} got rid of this assumption and regularized the Gaussian field $h$ by picking a mollifier function $\theta$ instead, i.e. a continuous non-negative and compactly supported function that integrates to $1$, and setting $h_\epsilon(x) = h * \theta_\epsilon$, where $*$ denotes convolution and $\theta_\epsilon(\cdot) := \epsilon^{-d} \theta(\cdot/\epsilon)$ for $\epsilon > 0$. They then showed then that the sequence $M_{\gamma,\epsilon}$ of GMC measures corresponding to the continuous Gaussian fields $h_\epsilon$, $\epsilon > 0$, converges to a limiting measure $M_\gamma$, in the sense that for all measurable $A \subset D$ the sequence $M_{\gamma,\epsilon}(A)$ converges to $M_\gamma(A)$ in distribution. Furthermore the limit is independent of the choice of the mollifier function. Shamov \cite{Shamov2016} further generalized their results by showing that for very general types $h_\epsilon$ of Gaussian regularizations of $h$, the approximating GMC measures $M_{\gamma,\epsilon}$ convergence to $M_{\gamma}$ weakly in probability, and that the limit is independent of the type of regularization. Furthermore he showed that the GMC measure $M_\gamma$ is measurable w.r.t. the underlying Gaussian field $h$. Using slightly less general assumptions than Shamov, but a very short and self-contained proof, Berestycki \cite{Berestycki2017} proved convergence in probability and uniqueness of the limit when regularising the field using mollifier functions, as well as non-triviality in the entire subcritical phase. \\

Since by Theorem \ref{thm:HughesKeatingOConnell1} it holds that 
\begin{align}
    (\Re \log p_{U(n)}(\cdot), \Im \log p_{U(n)}(\cdot)) \implies (\Re X, \Im X)
\end{align}
as $n \rightarrow \infty$, one would expect that for certain $\alpha > -1/2$ and $\beta \in i\mathbb{R}$ the random measures 
\begin{align}
    \mu_{n,\alpha,\beta}(\text{d}\theta) := \frac{|p_{U(n)}(\theta)|^{2\alpha_j} e^{2i\beta_j \Im \log p_{U(n)}(\theta)}}{\mathbb{E}\left( |p_{U(n)}(\theta)|^{2\alpha_j} e^{2i\beta_j \Im \log p_{U(n)}(\theta)} \right)} \text{d} \theta 
\end{align}
converge to GMC measures associated to the field $2\alpha \Re X + 2i\beta \Im X$ (see Figure 1.2 for two samples of $\Im \log p_{U(n)}(\cdot)$ and $\mu_{n,0,i/2}$). This was conjectured by Fyodorov, Hiary, and Keating in \cite{Fyodorov2012}, and first proven to be true by Webb in \cite{Webb2015}, thus establishing the first connection between random matrix theory and the theory of Gaussian multiplicative chaos. 
\begin{theorem}[Webb \cite{Webb2015}]\label{thm:Webb}
    For $G(n) = U(n)$, $\alpha > -1/2$ and $\alpha^2 - \beta^2 < 1/2$, as $n \rightarrow \infty$ the sequence of measures $\mu_{n,\alpha,\beta}$, $n \in \mathbb{N}$, converges in distribution in the space of Radon measures equipped with the topology of weak convergence to the (non-trivial) Gaussian multiplicative chaos measure 
    \begin{align}
    \begin{split}
        \mu_{\alpha,\beta}(\text{d}\theta) :=& \frac{e^{2\alpha \Re Z(\theta) + 2i\beta \Im Z(\theta)}}{\mathbb{E}\left(e^{2\alpha \Re Z(\theta) + 2i\beta \Im Z(\theta)} \right)} \text{d}\theta \\
        =& e^{2\alpha \Re Z(\theta) + 2i\beta \Im Z(\theta) - 2\alpha^2 \mathbb{E}(\Re Z(\theta)^2) + 2\beta^2 \mathbb{E}(\Im Z(\theta)^2)} \text{d}\theta \\
        \overset{D}{=} & e^{2\sqrt{\alpha^2 - \beta^2} \Re Z(\theta) - 2(\alpha^2 - \beta^2) \mathbb{E}(\Re Z(\theta)^2)} \text{d}\theta,
    \end{split}
    \end{align}
    where the complex Gaussian field $Z$ is given in Theorem \ref{thm:HughesKeatingOConnell1}. 
\end{theorem}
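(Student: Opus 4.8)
The plan is to establish the weak convergence by the second--moment method combined with an $L^2$--coupling to a mollified version of the limiting chaos, following the strategy of Webb. Since $\beta\in i\mathbb{R}$, the factor $e^{2i\beta\Im\log p_{U(n)}(\theta)}$ is real and positive, so $\mu_{n,\alpha,\beta}$ is an honest positive random measure, its mean satisfies $\mathbb{E}\,\mu_{n,\alpha,\beta}(\mathrm{d}\theta)=\mathrm{d}\theta$ by construction, and the one--point normaliser $\mathbb{E}\big(|p_{U(n)}(\theta)|^{2\alpha}e^{2i\beta\Im\log p_{U(n)}(\theta)}\big)$ is a Toeplitz determinant with a single Fisher--Hartwig singularity, which is finite and positive precisely for $\alpha>-1/2$. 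By a standard criterion for random measures on the compact circle, together with tightness (immediate from the uniform bound on the first moment), it suffices to prove that for each $f\in C(S^1)$ the linear statistic $\langle\mu_{n,\alpha,\beta},f\rangle:=\int f\,\mathrm{d}\mu_{n,\alpha,\beta}$ converges in $L^2(\mathbb{P})$ to $\langle\mu_{\alpha,\beta},f\rangle$.

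The computational engine is that, by the Weyl integration formula for the unitary group and Heine's identity, the $k$-th moment of $\langle\mu_{n,\alpha,\beta},f\rangle$ equals $\int_{[0,2\pi)^k}\prod_j f(\theta_j)\,\frac{D_n(\theta_1,\dots,\theta_k)}{\prod_j D_n(\theta_j)}\prod_j\mathrm{d}\theta_j$, where $D_n(\theta_1,\dots,\theta_k)$ is the $n\times n$ Toeplitz determinant whose symbol carries Fisher--Hartwig singularities of parameters $(\alpha,\beta)$ at $e^{i\theta_1},\dots,e^{i\theta_k}$ and $D_n(\theta_j)$ is the one--singularity determinant. For $k=1$ this is exact and equals $\int f$; for $k=2$ the asymptotics of Fisher--Hartwig determinants give, for $\theta_1\neq\theta_2$,
\[
\frac{D_n(\theta_1,\theta_2)}{D_n(\theta_1)D_n(\theta_2)}\ \longrightarrow\ |e^{i\theta_1}-e^{i\theta_2}|^{-2(\alpha^2-\beta^2)}\qquad(n\to\infty),
\]
the would--be oscillatory prefactors cancelling because both singularities carry identical parameters, and the right--hand side is exactly the two--point function of $\mu_{\alpha,\beta}$ read off from the covariances of $\Re Z$ and $\Im Z$. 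To pass this to the limit under the integral I would invoke the uniform version of these asymptotics in the regime where $\theta_1,\theta_2$ coalesce at scale $1/n$ (asymptotics for two merging Fisher--Hartwig singularities, of Claeys--Krasovsky type), which yields a domination $D_n(\theta_1,\theta_2)/(D_n(\theta_1)D_n(\theta_2))\le C\,|e^{i\theta_1}-e^{i\theta_2}|^{-2(\alpha^2-\beta^2)}$ uniform in $n$; this bound is integrable on $[0,2\pi)^2$ exactly when $2(\alpha^2-\beta^2)<1$, which is where the hypothesis $\alpha^2-\beta^2<1/2$ is consumed. Dominated convergence then gives $\mathbb{E}\big(\langle\mu_{n,\alpha,\beta},f\rangle^2\big)\to\mathbb{E}\big(\langle\mu_{\alpha,\beta},f\rangle^2\big)<\infty$.

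Convergence of the first two moments does not by itself force convergence in distribution, and since the higher moments of $\langle\mu_{\alpha,\beta},f\rangle$ diverge throughout most of the $L^2$ phase a plain method of moments is unavailable, so the last step is an $L^2$--approximation argument. Truncating $\log p_{U(n)}(\theta)=-\sum_{k\ge1}\frac{\Tr(U^k)}{k}e^{-ik\theta}$ at level $N$ produces a measure $\mu^{(N)}_{n,\alpha,\beta}$, and the corresponding truncation $Z_N(\theta)=\sum_{k=1}^N A_k e^{-ik\theta}$ of $Z$ produces a mollified chaos $\mu^{(N)}_{\alpha,\beta}$. For fixed $N$, Theorem~\ref{thm:DiaconisShahshahani} forces the low Fourier modes to converge, so $\langle\mu^{(N)}_{n,\alpha,\beta},f\rangle\to\langle\mu^{(N)}_{\alpha,\beta},f\rangle$ as $n\to\infty$; by subcritical GMC theory (Theorems~\ref{thm:Kahane1}--\ref{thm:Kahane2}, and Berestycki's refinement) one has $\langle\mu^{(N)}_{\alpha,\beta},f\rangle\to\langle\mu_{\alpha,\beta},f\rangle$ in $L^2$ as $N\to\infty$; and the remaining task is to show $\limsup_{N\to\infty}\limsup_{n\to\infty}\mathbb{E}\big((\langle\mu_{n,\alpha,\beta},f\rangle-\langle\mu^{(N)}_{n,\alpha,\beta},f\rangle)^2\big)=0$, a second--moment estimate now involving Toeplitz determinants whose symbols carry both Fisher--Hartwig singularities and a smooth perturbation coming from the truncated field, where one again exploits the $L^2$--integrability to match cross--terms with diagonal terms in the limit. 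A diagonal argument $N=N(n)\to\infty$ then gives $\langle\mu_{n,\alpha,\beta},f\rangle\to\langle\mu_{\alpha,\beta},f\rangle$ in $L^2$, hence the claimed weak convergence; the identification of the limit with the stated chaos and the distributional identity $\overset{D}{=}e^{2\sqrt{\alpha^2-\beta^2}\Re Z(\theta)-2(\alpha^2-\beta^2)\mathbb{E}(\Re Z(\theta)^2)}\mathrm{d}\theta$ is then a covariance computation using that $\Re Z$ and $\Im Z$ are i.i.d.\ log--correlated with kernel $-\tfrac12\log|e^{i\theta}-e^{i\theta'}|$.

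The main obstacle is the pair of uniform Fisher--Hartwig estimates in the merging regime: the asymptotics for two coalescing singularities needed for the plain second moment, and the asymptotics for a symbol additionally perturbed by the exponential of the truncated Gaussian field needed for the high--mode error. These are exactly the places where the $L^2$ condition $\alpha^2-\beta^2<1/2$ enters and where the most delicate analysis (Riemann--Hilbert asymptotics with uniformity as singularities collide) lives; everything else is bookkeeping built on the moment--to--determinant dictionary and standard GMC theory.
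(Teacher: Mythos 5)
This statement is Webb's theorem, which the thesis cites rather than proves, and your outline reproduces essentially Webb's strategy — the same one the thesis itself adapts in Chapter \ref{chapter:GMC} for $O(n)$ and $Sp(2n)$: reduction to linear statistics, the exact first moment and the two-point Fisher--Hartwig asymptotics with uniform control in the merging regime (Claeys--Krasovsky) consuming the $L^2$-phase condition $\alpha^2-\beta^2<1/2$, truncation of the Fourier series of $\log p_{U(n)}$ with Diaconis--Shahshahani giving the fixed-level limit, Kahane/Berestycki giving $\mu^{(N)}_{\alpha,\beta}\to\mu_{\alpha,\beta}$, and a second-moment bound on the truncation error. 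The only slip is the claim that one concludes ``$\langle\mu_{n,\alpha,\beta},f\rangle\to\langle\mu_{\alpha,\beta},f\rangle$ in $L^2$'' via a diagonal argument: the prelimit measures and the limiting chaos do not live on a common probability space, so the three estimates you list must instead be fed into a convergence-in-distribution approximation lemma (Theorem \ref{thm:approximation}, i.e.\ Kallenberg's Theorem 4.28, exactly as in the thesis), which is a cosmetic rather than substantive fix.
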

Here $\gamma = \sqrt{2\alpha^2 - 2\beta^2}$, thus Theorem \ref{thm:Webb} only shows convergence in the $L^2$-phase, which significantly simplifies the proof. Nikula, Saksman and Webb \cite{Nikula2020} proved convergence in the rest of the $L^1$ or subcritical phase in the case $\beta = 0$, but as they mention their proof technique can be modified in a straightforward manner to include the case $\beta \neq 0$. \\

\begin{figure} \label{figure:GFF GMC}
    \centering
    \begin{minipage}{.5\textwidth}
      \centering
    \includegraphics[scale=0.45]{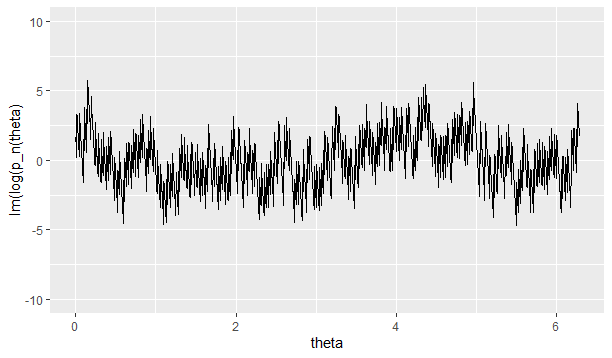}
    \includegraphics[scale=0.45]{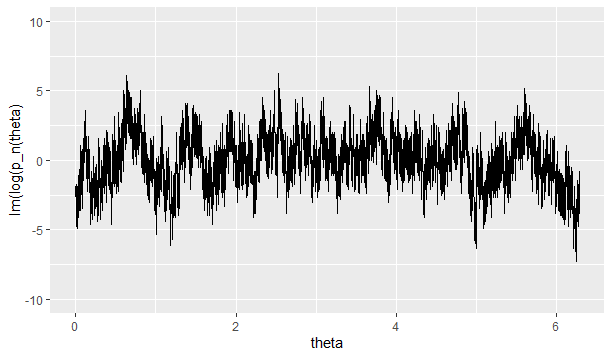}
    \end{minipage}%
    \begin{minipage}{.5\textwidth}
      \centering
    \includegraphics[scale=0.45]{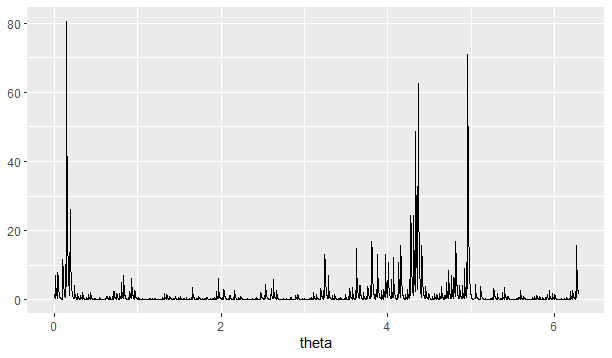}
       \includegraphics[scale=0.45]{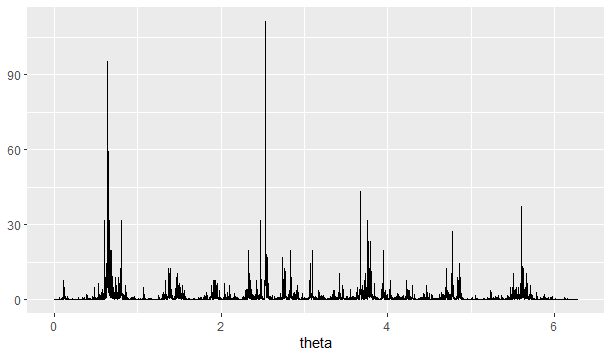}
    \end{minipage}
    \caption{A sample of $\Im \log p_{U(n)}(\theta)$ and its GMC measure $\mu_{n,0,i/2}$, for $n = 200$ (above) and $n = 500$ (below).}
\end{figure}
 
Since then the connection between the two fields has been extended to other random matrix ensembles. In \cite{Chhaibi2019}, Chhaibi and Najnudel proved convergence (in a different sense) of the characteristic polynomials of matrices drawn from the circular $\beta$-ensemble to a GMC measure on the unit circle. In \cite{Berestycki2018} Berestycki, Webb and Wong proved that, after suitable normalization, powers of the absolute value of the characteristic polynomial of a matrix from the Gaussian Unitary Ensemble converge to Gaussian multiplicative chaos measures on the real line. They proved this result in the $L^2$-phase, however it is likely to also hold in the whole $L^1$-phase. The analogous result for powers of the exponential of the imaginary part of the logarithm of the characteristic polynomial of the Gaussian Unitary Ensemble was proven in \cite{Claeys2019}, in the whole $L^1$-phase. In \cite{Lam20} Lambert connected the characteristic polynomial of the Ginibre ensemble to Gaussian multiplicative chaos measures associated to the Gaussian free field on the unit disk, however in a different sense. Recently Bourgade and Falconet \cite{Bourgade2022} have proved the corresponding result for unitary Brownian motion, that is they have shown the convergence 
\begin{align} \label{eqn:UBM GMC}
    \lim_{n \rightarrow \infty} \frac{|p_{U(n)}(\theta_t)|^{2\alpha}}{\mathbb{E}\left( |p_{U(n)}(\theta_t)|^{2\alpha} \right)} \text{d}\theta \text{d}t = \frac{e^{\gamma \Re Z(t,\theta)}}{\mathbb{E}\left( |e^{\gamma \Re Z(t,\theta)}| \right)} \text{d}\theta \text{d}t,
\end{align}
for all $\alpha$ in the subcritical phase, and as they remark a similiar result should hold for the imaginary part. This is the natural dynamical analogue to the results by Webb \cite{Webb2015} and Nikula, Saksman, and Webb \cite{Nikula2020}. The results by Bourgade and Falconet, and Lambert are special in that the measures considered are living on a two-dimensional space, instead of on a one-dimensional space as in all previously proven cases. Those results are also special in that the limiting GMC measure has a specific interpretation, since it is the GMC measure associated to the Gaussian free field, and thus equals the volume measure in two-dimensional Liouville quantum gravity. \\

Our purpose in Chapter \ref{chapter:GMC} is to prove the analogue to Webb's result for the other two classical compact groups, i.e. that 
\begin{align} \label{eqn:GMC convergence}
    \lim_{n \rightarrow \infty} \frac{|p_{G(n)}(\theta)|^{2\alpha} e^{2i\beta \Im \log p_{G(n)}(\theta)}}{\mathbb{E}\left( |p_{G(n)}(\theta)|^{2\alpha} e^{2i\beta \Im \log p_{G(n)}(\theta)} \right)} \text{d} \theta = \frac{e^{2\alpha X(\theta) + 2i\beta \hat{X}(\theta)}}{\mathbb{E}\left(e^{2\alpha X(\theta) + 2i\beta \hat{X}(\theta)} \right)} \text{d}\theta,
\end{align}
for $G(n) = O(n)$ and $G(n) = Sp(2n)$, and certain parameters $\alpha$ and $\beta$. The precise result is stated in Theorems \ref{thm:main} and \ref{thm:main2}. We thus complete the connection between the classical compact groups and Gaussian multiplicative chaos.

\section{Moments of moments} 

It is a natural question how much circle averages of the characteristic polynomial fluctuate, in particular what their moments are. Thus for $\alpha > -1/2$ and $m \in \mathbb{R}$ we define the moments of moments of $p_{G(n)}(\theta)$ by
\begin{align}
\begin{split}
    {\rm MoM}_{G(n)}(m,\alpha) :=& \mathbb{E}_{U \in G(n)} \left( \left( \frac{1}{2\pi} \int_0^{2\pi} |p_{G(n)}(\theta)|^{2\alpha} \text{d}\theta \right)^m \right),
\end{split}
\end{align}
where the term {\em moments of moments} refers to the fact that we first take a moment of the characteristic polynomial with respect to the spectral variable $\theta$, and then a moment with respect to the random matrix distribution. We are interested in the asymptotics of ${\rm MoM}_{G(n)}(m,\alpha)$ in the limit as $n \rightarrow \infty$, for fixed $\alpha$, $m$. \\
The maximum of the characteristic polynomial $\max_{\theta \in [0,2\pi)} |p_{G(n)}(\theta)|$ is connected to the moments of moments since ${\rm MoM}_{G(n)}(1/p,p/2) = \mathbb{E}_{ G(n)} \left( ||p_{G(n)}( \cdot)||_p \right)$, and since for large $p$ the $L^p$-norm approximates the $L^\infty$-norm $||p_{G(n)}(\cdot)||_{\infty} = \max_{\theta \in [0,2\pi)} |p_{G(n)}( \theta) |$. This suggests that the asymptotics of ${\rm MoM}_{G(n)}(m,\alpha)$ can be used to motivate conjectures for the maximum of the characteristic polynomials \cite{Fyodorov2012, FK14}. ${\rm MoM}_{G(n)}(m,\alpha)$ is also related to the $m$-th moment of the total mass of the Gaussian Multiplicative Chaos measures arising from the characteristic polynomial of random matrices from the classical compact groups, see \cite{Forkel2021, Nikula2020, Webb2015}. The moments of moments are also interesting because of their application to understanding the large values taken by the Riemann zeta function on the critical line and other $L$-functions \cite{Fyodorov2012, FK14} - see \cite{Bailey2022} for a review.\\

Recently there has been a good deal of attention given to ${\rm MoM}_{U(n)}(m,\alpha)$. Fyodorov, Hiary and Keating made conjectures on the large $n$ asymptotics in \cite{Fyodorov2012}, which were specified by Fyodorov and Keating in \cite{FK14}, and which were then supported by numerical computations and generalized in \cite{Fyodorov2018}. So far their conjectures have been proven only in cases where $m \in \mathbb{N}$, since then one can use Fubini's theorem to write the moments of moments as a multiple integral:
\begin{align} \label{eqn:Fubini1}
\begin{split}
{\rm MoM}_{G(n)}(m,\alpha) =& \int_0^{2\pi} \cdots \int_0^{2\pi} \mathbb{E}_{U \in G(n)} \left( \prod_{j = 1}^m |p_{G(n)}(\theta_j)|^{2\alpha} \right) \frac{\text{d}\theta_1}{2\pi} \cdots \frac{\text{d}\theta_m}{2\pi}. 
\end{split}
\end{align}
For $m = 2$ and $\alpha > -1/4$ the conjectured asymptotics of ${\rm MoM}_{U(n)}(m,\alpha)$ were proven by Claeys and Krasovsky, as an application of their calculation of the asymptotics of Toeplitz determinants with two merging singularities via a Riemann-Hilbert analysis \cite{Claeys2015} (Toeplitz determinants and Riemann-Hilbert problems will be introduced in the following Sections \ref{section:FH} and \ref{section:RH}). For $m,\alpha \in \mathbb{N}$ they were proven by Bailey and Keating using an approach based on exact identities for finite $n$ \cite{Bailey2019}. Using a combinatorial approach for $m, \alpha \in \mathbb{N}$, which involved representation theory and constrained Gelfand-Tsetlin patterns, Assiotis and Keating \cite{Assiotis2020} proved the same results as found in \cite{Bailey2019}, but with an alternative formula for the leading order coefficient (see also \cite{BarhoumiAndreani2020}, where the results in \cite{Bailey2019} and \cite{Assiotis2020} were rederived using another approach). This same combinatorial approach was then used by Assiotis, Bailey and Keating to prove asymptotic formulas for $G(n) = SO(2n)$ and $G(n) = Sp(2n)$, when $m, \alpha \in \mathbb{N}$ \cite{Assiotis2019}. For those asymptotic formulas Andrade and Best \cite{Andrade2022} then provided an alternative proof using the methods developed in \cite{Bailey2019}, again with an alternative formula for the leading order coefficient.  \\

Using the Riemann-Hilbert approach developed in \cite{Claeys2015}, Fahs \cite{Fahs2021} computed the asymptotics, up to a multiplicative constant, of Toeplitz determinants with arbitrarily many merging singularities. Using this Fahs then proved the asymptotic formula for ${\rm MoM}_{U(n)}(m,\alpha)$ for $m \in \mathbb{N}$ and general $\alpha > 0$, however without an explicit expression for the leading order coefficient:
\begin{theorem}[\cite{Fahs2021}] \label{thm:Fahs MoM}
    For $m \in \mathbb{N}$ and $\alpha > 0$, as $n \rightarrow \infty$:
    \begin{equation}
        {\rm MoM}_{U(n)}(m,\alpha) = \begin{cases} (1+o(1)) n^{m\alpha^2} \frac{G(1 + \alpha)^{2m}\Gamma(1 - m\alpha^2)}{G(1 + 2\alpha)^m \Gamma(1 - \alpha^2)^m}, & \alpha < \frac{1}{\sqrt{m}}, \\
        e^{\mathcal{O}(1)} n \log n & \alpha = \frac{1}{\sqrt{m}},\\
        e^{\mathcal{O}(1)} n^{m^2\alpha^2 + 1 - m}, & \alpha > \frac{1}{\sqrt{m}}, \end{cases}
    \end{equation}
    where $e^{\mathcal{O}(1)}$ denotes a function that is bounded and bounded away from $0$ as $n \rightarrow \infty$, and where $G(z)$ denotes the Barnes $G$-function.
\end{theorem}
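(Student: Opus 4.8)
The plan is to follow the strategy of \cite{Fahs2021}: reduce the moments of moments to the asymptotics of Toeplitz determinants with many (possibly merging) Fisher-Hartwig singularities, and then integrate that asymptotic formula over the positions of the singularities, keeping track of which region of integration dominates.

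\textbf{Step 1 (Fubini and identification with a Toeplitz determinant).} Since $m\in\mathbb{N}$, formula (\ref{eqn:Fubini1}) writes $\mathrm{MoM}_{U(n)}(m,\alpha)$ as the $m$-fold integral over $[0,2\pi)^m$ of $\mathbb{E}_{U\in U(n)}\big(\prod_{j=1}^m|p_{U(n)}(\theta_j)|^{2\alpha}\big)$. By Weyl's integration formula and a standard computation, this expectation is the $n\times n$ Toeplitz determinant $D_n(f)$ with symbol $f(e^{i\theta})=\prod_{j=1}^m|e^{i\theta}-e^{i\theta_j}|^{2\alpha}$, which has $m$ Fisher-Hartwig root-type singularities, one at each $e^{i\theta_j}$, all with jump exponent $\beta_j=0$ and amplitude exponent $\alpha_j=\alpha$.

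\textbf{Step 2 (Uniform asymptotics of $D_n(f)$).} The crucial input is the asymptotics of $D_n(f)$ that hold \emph{uniformly} in the configuration $(\theta_1,\dots,\theta_m)$, covering all patterns in which clusters of points merge at the scale $1/n$. When the $\theta_j$ are bounded away from one another the classical Fisher-Hartwig expansion gives
\begin{align}
  D_n(f)=(1+o(1))\,n^{m\alpha^2}\Big(\tfrac{G(1+\alpha)^2}{G(1+2\alpha)}\Big)^{m}\prod_{1\le j<k\le m}|e^{i\theta_j}-e^{i\theta_k}|^{-2\alpha^2},
\end{align}
whereas when several singularities coalesce the Vandermonde-type factor is regularised and the power of $n$ interpolates between $k\alpha^2$ for a cluster of $k$ separated points and $k^2\alpha^2$ for a cluster of $k$ fully merged points (which behaves like one singularity of strength $k\alpha$). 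This uniform statement is precisely the main analytic theorem of \cite{Fahs2021}, proved by a Riemann-Hilbert steepest-descent analysis in the spirit of \cite{Deift2011, Claeys2015}: local parametrices built from confluent hypergeometric functions near isolated singularities are glued with merging parametrices near clusters, with all estimates uniform as the singularities approach each other. In general this only determines $D_n(f)$ up to a factor that is bounded and bounded away from $0$, since the model problem in the merging regime has no explicitly known constant; the exact constant is recovered only in the well-separated regime.

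\textbf{Step 3 (Multiscale decomposition and competition of regions).} I would then decompose $[0,2\pi)^m$ according to the hierarchy of clusters of $\{\theta_1,\dots,\theta_m\}$, dyadically in the separations $|e^{i\theta_j}-e^{i\theta_k}|$ down to the cutoff scale $1/n$, insert the uniform asymptotics of Step 2 on each piece, and find the dominant one. The region where all $m$ points are $O(1)$-separated contributes $n^{m\alpha^2}$ times the Dyson/Morris-type integral
\begin{align}
  \frac{1}{(2\pi)^m}\int_{[0,2\pi)^m}\prod_{1\le j<k\le m}|e^{i\theta_j}-e^{i\theta_k}|^{-2\alpha^2}\,\mathrm{d}\theta_1\cdots\mathrm{d}\theta_m=\frac{\Gamma(1-m\alpha^2)}{\Gamma(1-\alpha^2)^m},
\end{align}
which is finite precisely for $\alpha<1/\sqrt m$; combined with the $m$ single-singularity constants this yields the claimed leading coefficient $G(1+\alpha)^{2m}\Gamma(1-m\alpha^2)/\big(G(1+2\alpha)^m\Gamma(1-\alpha^2)^m\big)$. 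For $\alpha\ge 1/\sqrt m$ this integral diverges at the full coalescence, so the dominant contribution comes instead from the region where all $m$ points lie within $O(1/n)$ of each other: there $D_n(f)$ is of order $n^{m^2\alpha^2}$ while the volume of the region is of order $n^{-(m-1)}$, giving $e^{\mathcal O(1)}n^{m^2\alpha^2-m+1}$; at the borderline $\alpha=1/\sqrt m$ the scale integral from $1/n$ to $O(1)$ is logarithmically divergent, producing the extra $\log n$, and since then $n^{m\alpha^2}=n$ one gets $e^{\mathcal O(1)}\,n\log n$. The power count showing that $m-1-2\alpha^2\binom{m}{2}$ changes sign exactly at $\alpha=1/\sqrt m$ confirms that these three regimes are exhaustive and that no intermediate cluster pattern ever dominates.

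\textbf{Main obstacle.} The hard part is Step 2: the Riemann-Hilbert asymptotics of Toeplitz determinants with \emph{arbitrarily many} merging singularities, uniformly over all coalescence patterns. Unlike the two-singularity case of \cite{Claeys2015}, there is a whole combinatorial family of ways the $m$ points can cluster, and one must design local parametrices and their matchings that stay valid throughout this family, with error terms uniform enough to survive the integration over $(\theta_1,\dots,\theta_m)$. A secondary difficulty is that this analysis does not produce the overall multiplicative constant in the merging regime, which is exactly why the answer for $\alpha\ge 1/\sqrt m$ is only pinned down up to $e^{\mathcal O(1)}$; on the subcritical side one additionally needs the exact evaluation of the limiting Morris-type integral and careful bookkeeping to convert the integrated $(1+o(1))$ error into the stated $(1+o(1))$.
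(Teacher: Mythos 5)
Your proposal is correct and follows essentially the same route as the actual argument: the paper does not reprove this statement but cites Fahs, whose proof proceeds exactly as you sketch — Heine--Szeg\H{o} reduction to a Toeplitz determinant with $m$ Fisher--Hartwig singularities, Riemann--Hilbert asymptotics uniform over all merging configurations (explicit constant only in the separated regime, $e^{\mathcal{O}(1)}$ otherwise), and a decomposition of $[0,2\pi)^m$ by cluster scales with the Morris integral $\Gamma(1-m\alpha^2)/\Gamma(1-\alpha^2)^m$ giving the subcritical constant and the fully coalesced region giving the supercritical exponent. This is also the same template the thesis itself implements in Chapter \ref{chapter:MoM} for the orthogonal and symplectic analogues, using the Toeplitz+Hankel asymptotics of \cite{Claeys2022} in place of Fahs' Toeplitz result.
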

Note that the phase transition happens exactly at the point $\alpha = 1/\sqrt{m}$, below which the $m$-th moment of the Gaussian multiplicative chaos measure $\mu_{\alpha,0}$ from Theorem \ref{thm:Webb} is finite. This is because of the relation 
\begin{align} \label{eqn:MoM GMC}
    {\rm MoM}_{U(n)}(m,\alpha) = (2\pi)^{-m} \mathbb{E}\left( |p_{U(n)}(0)|^{2\alpha} \right)^m \mathbb{E} \left( \left( \int_0^{2\pi} \frac{|p_{U(n)}(\theta)|^{2\alpha}}{\mathbb{E}(|p_{U(n)}(\theta)|^{2\alpha})} \text{d}\theta \right)^m \right).
\end{align}
The second expectation on the RHS converges to $\mathbb{E}(\mu_{\alpha,0}([0,2\pi)^m)$, which is finite if and only if $\alpha < 1/\sqrt{m}$, in which case it is given by the Fyodorov-Bouchaud formula \cite{FB08, Rem20}
\begin{align}
    \mathbb{E} \left( \left( \frac{1}{2\pi} \mu_{\alpha,0}([0,2\pi) \right)^m \right) = \frac{\Gamma(1 - m\alpha^2)}{\Gamma(1 - \alpha^2)^m}.
\end{align}
For $\alpha = 1/\sqrt{m}$ the second expectation on the RHS of (\ref{eqn:MoM GMC}) can be seen as the "critical" moment of $\mu_{\alpha,0}$, and it diverges like $\log n$. The explicit constant was found by Keating and Wong \cite{KW22} for $m \in \mathbb{N} \setminus \{1\}$, which allowed them to obtain an exact formula for the term $e^{\mathcal{O}(1)}$ in the critical phase in Theorem \ref{thm:Fahs MoM}. \\

Extending Fahs' Riemann-Hilbert approach and using his asymptotics of Toeplitz determinants, Claeys, Glesner, Minakov and Yang \cite{Claeys2022} computed asymptotics of Toeplitz+Hankel determinants with arbitrarily many merging singularities, up to a multiplicative constant. In Chapter \ref{chapter:MoM} we use those asymptotic formulae for Toeplitz+Hankel determinants to establish asymptotic formula for the moments of moments of characteristic polynomials of random matrices from the orthogonal and symplectic groups, up to the leading order coefficient. We obtain the leading order dependence of these moments of moments in the limit on the size of the random matrices when this gets large. Our results are as follows (see Chapter \ref{chapter:MoM}) for a definition of the constants $C^\pm(m,\alpha$)
\begin{theorem} \label{thm:MoM Sp intro}
Let $m \in \mathbb{N}$ and $\alpha > 0$. Then, as $n \rightarrow \infty$:
\begin{align}
\begin{split}
    {\rm MoM}_{Sp(2n)}(m,\alpha) = \begin{cases} (1+o(1)) (2n)^{m\alpha^2} C^-(m,\alpha), & \alpha < \frac{\sqrt{8m-3} - 1}{4m-2}, \\
    e^{\mathcal{O}(1)} n^{m\alpha^2} \log n & \alpha = \frac{\sqrt{8m-3} - 1}{4m-2},\\
    e^{\mathcal{O}(1)} n^{2(m\alpha)^2 + m\alpha -m}, & \alpha > \frac{\sqrt{8m-3} - 1}{4m-2}. \end{cases}
\end{split}
\end{align}
\end{theorem}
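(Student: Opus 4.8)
The plan is to follow the route used by Fahs \cite{Fahs2021} for the unitary group (Theorem~\ref{thm:Fahs MoM}): reduce the moments of moments to an integral of Toeplitz$+$Hankel determinants with Fisher--Hartwig singularities, insert the known uniform asymptotics of such determinants, and carry out the integration over the locations of the singularities. First, since $m\in\mathbb{N}$, Fubini as in \eqref{eqn:Fubini1} gives
\begin{align}
{\rm MoM}_{Sp(2n)}(m,\alpha)=\int_{[0,2\pi)^m}\mathbb{E}_{U\in Sp(2n)}\Bigl(\prod_{j=1}^m|p_{Sp(2n)}(\theta_j)|^{2\alpha}\Bigr)\,\frac{\mathrm{d}\theta_1}{2\pi}\cdots\frac{\mathrm{d}\theta_m}{2\pi}.
\end{align}
Using the Weyl integration formula for $Sp(2n)$ together with the standard identities expressing averages of products of characteristic polynomials over the classical compact groups as Toeplitz$+$Hankel determinants, the inner expectation equals a Toeplitz$+$Hankel determinant $D_n(f_{\vec\theta})$ of size $n$ whose symbol $f_{\vec\theta}$ has pure root-type Fisher--Hartwig singularities $|z-e^{\pm i\theta_j}|^{2\alpha}$ at the $2m$ points $e^{\pm i\theta_j}$, together with the two \emph{fixed} Fisher--Hartwig singularities at $z=\pm1$ coming from the $\sin^2\theta_k$ factors in the symplectic Weyl density. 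It is exactly the presence of these fixed singularities --- absent in the unitary case --- that will change the phase transition point.

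\textbf{Determinant asymptotics and integration.} Next I would invoke the asymptotics of Toeplitz$+$Hankel determinants with arbitrarily many merging Fisher--Hartwig singularities of Claeys, Glesner, Minakov and Yang \cite{Claeys2022} (built on Fahs' Toeplitz analysis \cite{Fahs2021} and the Riemann--Hilbert techniques of Deift, Its and Krasovsky \cite{Deift2011}; see also Chapter~\ref{chapter:FH}). These give, uniformly in the configuration $\vec\theta$, a formula of the shape $D_n(f_{\vec\theta})=e^{\mathcal{O}(1)}n^{E(\vec\theta)}\prod(\text{distances})^{(\ast)}$, where the $n$-exponent $E(\vec\theta)$ and the powers of the pairwise distances $|e^{i\theta_j}-e^{\pm i\theta_k}|$, $|e^{i\theta_j}\mp1|$ depend only on the cluster structure (which $\theta_j$ are within $O(1/n)$ of each other and which are within $O(1/n)$ of $0$ or $\pi$), and which reduce to the clean asymptotics with an explicit multiplicative constant in the fully separated regime. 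Substituting into the integral and decomposing $[0,2\pi)^m$ according to the cluster structure, one integrates an explicit piecewise power of $n$ against a product of powers of distances; performing this on all dyadic scales $1/n\lesssim r\lesssim1$ identifies two candidate dominant contributions: (i) the region where all $\theta_j$ are well separated and in the bulk, contributing $(2n)^{m\alpha^2}$ times the constant obtained by integrating the clean leading-order asymptotics; and (ii) the region where \emph{all} $m$ of the $\theta_j$ collapse onto a single fixed singularity, say $0$, so that $2m$ root singularities of strength $\alpha$ coalesce with the fixed one at $z=1$; there the integrand is a large power of $n$ --- whose exponent is read off from the merged-singularity asymptotics and equals $2(m\alpha)^2+m\alpha$ --- over a set of volume $\asymp n^{-m}$, giving $e^{\mathcal{O}(1)}n^{2(m\alpha)^2+m\alpha-m}$.

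\textbf{The three regimes.} One then checks, using convexity of the exponents in the sizes of the clusters, that every intermediate configuration (partial collapse at a fixed point, collapse at a bulk point, collapse at both $0$ and $\pi$, etc.) is dominated by one of (i) or (ii). The two leading exponents agree precisely when $2m^2\alpha^2+m\alpha-m=m\alpha^2$, i.e. $(2m-1)\alpha^2+\alpha-1=0$, whose unique positive root is $\alpha=\frac{\sqrt{8m-3}-1}{4m-2}$. For $\alpha$ below this value the separated region wins: one gets $(2n)^{m\alpha^2}$ with the explicit constant $C^-(m,\alpha)$ (finite exactly in this range, obtained from the clean Toeplitz$+$Hankel asymptotics and expressible via an $m$-th moment of a Gaussian multiplicative chaos total mass associated with the limiting field of Theorem~\ref{thm:Gaussian field1}), while an $L^2$-type estimate bounds all merging regions as a genuine $o(\cdot)$-error; above it the fully collapsed region wins and produces $e^{\mathcal{O}(1)}n^{2(m\alpha)^2+m\alpha-m}$; at the threshold the borderline scale integral of the type $\int_{1/n}^{1}\frac{\mathrm{d}r}{r}$ yields the extra factor $\log n$, giving $e^{\mathcal{O}(1)}n^{m\alpha^2}\log n$.

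\textbf{Main obstacle.} The technical heart is the uniform integration in the second step: controlling the Toeplitz$+$Hankel determinant simultaneously over all scales and all cluster configurations, in particular handling the \emph{triple} coalescence in which $e^{i\theta_j}$, $e^{-i\theta_j}$ and a fixed $\pm1$ merge (this is what distinguishes the symplectic and orthogonal cases from the unitary one and moves the transition away from $1/\sqrt m$), pinning down the exponent $2(m\alpha)^2+m\alpha$ of that contribution, keeping careful track of and combining the $e^{\mathcal{O}(1)}$ constants, and --- in the subcritical regime --- upgrading the bound on the merging regions from $e^{\mathcal{O}(1)}$ to $o\bigl((2n)^{m\alpha^2}\bigr)$ so as to obtain the sharp constant $C^-(m,\alpha)$.
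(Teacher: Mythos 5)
Your proposal is essentially the paper's proof: Fubini plus the Baik--Rains reduction of the inner expectation to Toeplitz$+$Hankel determinants, the uniform Claeys--Glesner--Minakov--Yang asymptotics up to $e^{\mathcal{O}(1)}$, comparison of the well-separated bulk contribution (whose constant is exactly the Selberg-type integral defining $C^-(m,\alpha)$, with the merging regions shown to contribute $o\bigl((2n)^{m\alpha^2}\bigr)$) against the full collapse of all $\theta_j$ at a fixed point with total exponent $2(m\alpha)^2+m\alpha-m$, the matching of exponents at $\alpha=\frac{\sqrt{8m-3}-1}{4m-2}$, and the extra $\log n$ at criticality; the paper carries out your ``cluster structure/convexity'' step not by a dyadic decomposition but by a chain of explicit changes of variables bounding everything by Selberg integrals (Lemmas \ref{lemma:I}--\ref{lemma:K}), which is a difference of execution rather than of strategy. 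Two minor corrections: in the Baik--Rains identity the influence of $\pm1$ enters through the type of Toeplitz$+$Hankel determinant (and explicit prefactors), not as fixed Fisher--Hartwig singularities inserted into the symbol (though the resulting $(\sin\theta_j+\tfrac1n)^{-\alpha^2-\alpha}$ factors encode the same effect), and $C^-(m,\alpha)$ is evaluated directly as a Selberg integral --- the paper expressly notes that for $Sp(2n)$ there is no immediate GMC total-mass identity, since $\mathbb{E}\bigl(|p_{Sp(2n)}(\theta)|^{2\alpha}\bigr)$ depends on $\theta$.
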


\begin{theorem} \label{thm:MoM O intro}
Let $G(n) \in \{ O(n), \, SO(n), \, SO^-(n) \}$ and $\alpha > 0$. For $m \in \mathbb{N} \setminus \{2\}$, as $n \rightarrow \infty$:
\begin{align}
\begin{split}
    {\rm MoM}_{G(n)}(m,\alpha) = \begin{cases} (1+o(1)) n^{m\alpha^2} C^+(m,\alpha), & \alpha < \frac{\sqrt{8m-3} + 1}{4m-2}, \\
    e^{\mathcal{O}(1)} n^{m\alpha^2} \log n & \alpha = \frac{\sqrt{8m-3} + 1}{4m-2},\\
    e^{\mathcal{O}(1)} n^{2(m\alpha)^2 - m\alpha -m}, & \alpha > \frac{\sqrt{8m-3} + 1}{4m-2}. \end{cases}
\end{split}
\end{align}
Moreover, as $n \rightarrow \infty$:
\begin{align}
    {\rm MoM}_{G(n)}(2,\alpha) = \begin{cases} (1 + o(1)) n^{2\alpha^2} C^+(2,\alpha), & \alpha < \frac{1}{\sqrt{2}}, \\
    e^{\mathcal{O}(1)} n^{2\alpha^2} \log n & \alpha = \frac{1}{\sqrt{2}},\\
    e^{\mathcal{O}(1)} n^{4\alpha^2 - 1}, & \alpha \in \left( \frac{1}{\sqrt{2}}, \frac{\sqrt{5} + 1}{4} \right), \\
    e^{\mathcal{O}(1)} n^{4\alpha^2 - 1} \log n, & \alpha = \frac{\sqrt{5} + 1}{4}, \\
    e^{\mathcal{O}(1)} n^{8\alpha^2 - 2\alpha - 2}, & \alpha > \frac{\sqrt{5} + 1}{4}. \end{cases}
\end{align}
\end{theorem}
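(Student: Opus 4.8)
The plan is to carry the method Fahs used \cite{Fahs2021} to prove Theorem \ref{thm:Fahs MoM} for $U(n)$ over to the orthogonal and symplectic groups, now with Toeplitz+Hankel rather than Toeplitz determinants. Since $m\in\mathbb{N}$, Fubini's theorem writes ${\rm MoM}_{G(n)}(m,\alpha)$ as the $m$-fold integral (\ref{eqn:Fubini1}) of the average $\mathbb{E}_{U\in G(n)}\big(\prod_{j=1}^m|p_{G(n)}(\theta_j)|^{2\alpha}\big)$ over $[0,2\pi)^m$. The first step is to identify this average, for each of $O(n)$, $SO(n)$, $SO^-(n)$, $Sp(2n)$, with a Toeplitz+Hankel determinant whose symbol has Fisher-Hartwig singularities of strength $\alpha$ at the $2m$ points $e^{\pm i\theta_1},\dots,e^{\pm i\theta_m}$ together with a root-type singularity at $z=1$ and/or $z=-1$ produced by the deterministic eigenvalues; the precise root exponents and the accompanying Hankel structure (there are several types) are read off from the Weyl integration formulae.

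The second step is to feed into the integral the uniform asymptotics of Toeplitz+Hankel determinants with arbitrarily many merging Fisher-Hartwig singularities of Claeys, Glesner, Minakov and Yang \cite{Claeys2022}. These provide, up to a prefactor $e^{\mathcal{O}(1)}$ that is bounded and bounded away from $0$ uniformly in the positions of the singularities, the integrand as an explicit function of $n$ and of the mutual distances between the $e^{\pm i\theta_j}$ and the points $\pm1$; when all $\theta_j$ are separated and in the bulk this prefactor converges to an explicit product of Barnes $G$-function factors, which assemble into the constant $C^\pm(m,\alpha)$ of Chapter \ref{chapter:MoM}.

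The third and main step is the asymptotic evaluation of the resulting $m$-fold integral. I would decompose $[0,2\pi)^m$ (using $\theta\mapsto-\theta$, effectively $[0,\pi]^m$) according to which subsets of $\{\theta_j\}$ form clusters and, for each cluster, whether it stays in the bulk or collapses onto $0$ or $\pi$. On each region the contribution is $e^{\mathcal{O}(1)}$ times a power of $n$ whose exponent is determined by the confluent Fisher-Hartwig parameters of the clusters together with the volume of the region ($\asymp n^{-(k-1)}$ for a bulk cluster of $k$ points, $\asymp n^{-k}$ for a cluster of $k$ points collapsing onto $\pm1$, the latter more costly but also more singular because of the root weight there). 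Comparing exponents, the two candidates for the dominant region are: all $\theta_j$ separated in the bulk, of order $n^{m\alpha^2}$ with the convergent constant $C^\pm(m,\alpha)$; and all $m$ points collapsing onto a single point $\pm1$, of order $n^{2(m\alpha)^2-m\alpha-m}$ for the orthogonal groups and $n^{2(m\alpha)^2+m\alpha-m}$ for $Sp(2n)$. Equating these exponents gives the quadratic $(2m-1)\alpha^2-\alpha-1=0$, respectively $(2m-1)\alpha^2+\alpha-1=0$, whose positive root is $\frac{\sqrt{8m-3}+1}{4m-2}$, respectively $\frac{\sqrt{8m-3}-1}{4m-2}$; below it the separated configuration wins, above it the collapsed one, and at equality a borderline-divergent intermediate-scale integral produces the factor $\log n$. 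For $Sp(2n)$ (any $m$) and for the orthogonal groups with $m\neq2$ these two configurations exhaust the possibilities, but for $G=O(n)$ with $m=2$ the same bookkeeping shows that on the window $\big(\tfrac1{\sqrt2},\tfrac{\sqrt5+1}{4}\big)$ a third configuration — both points collapsing onto a common bulk point, of order $n^{4\alpha^2-1}$ — beats the other two, with $\log n$ corrections at the two endpoints, which is the source of the extra transitions.

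The hardest part will be this last step: bounding the $m$-fold integral uniformly over all partial-merging configurations, ruling out any mixed configuration (some points clustered in the bulk, some on $\pm1$, some separated) from ever dominating, and handling the borderline cases so that the $\log n$ appears with the correct power of $n$ in front and the integrated $e^{\mathcal{O}(1)}$ prefactors stay bounded and bounded away from $0$. One also has to keep track of whether the large parameter in the Toeplitz+Hankel asymptotics is $n$ or $2n$ — it is $2n$ for $Sp(2n)$, which is why the symplectic subcritical leading term carries $(2n)^{m\alpha^2}$ rather than $n^{m\alpha^2}$.
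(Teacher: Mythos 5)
Your proposal follows essentially the same route as the paper: Fubini plus the Baik--Rains identity to reduce to Toeplitz+Hankel determinants, the uniform $e^{\mathcal{O}(1)}$-asymptotics of Claeys--Glesner--Minakov--Yang to turn ${\rm MoM}_{G(n)}(m,\alpha)$ into $e^{\mathcal{O}(1)}n^{m\alpha^2}$ times an explicit $m$-fold integral, the Deift--Its--Krasovsky constants on separated configurations (evaluated as a Selberg integral) for the subcritical term $C^+(m,\alpha)$, and the comparison of the separated, collapsed-onto-$\pm1$, and (for $m=2$) bulk-merging configurations to locate the transitions, the supercritical exponents, and the $\log n$ factors. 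The only difference is organizational: where you propose a cluster-by-cluster decomposition of $(0,\pi)^m$ for the upper bounds (the part you correctly flag as hardest), the paper carries this out via changes of variables reducing everything to Selberg-type integrals (Lemmas \ref{lemma:I}--\ref{lemma:K}) and a direct two-variable estimate for the $m=2$ orthogonal case (Lemma \ref{lemma:I m = 2}).
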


We see that the critical points where phase transitions occur are significantly different for $U(n)$, $O(n)$ and $Sp(2n)$. This is due to the symmetries around $\pm 1$ in the eigenvalues of orthogonal and symplectic matrices. As will become apparent in our proofs of Theorems \ref{thm:MoM Sp intro} and \ref{thm:MoM O intro} in Chapter \ref{chapter:MoM}, the main contribution in the first phase, which we refer to as the subcritical phase, comes from integrating the integrand 
\begin{align*}
    \mathbb{E}_{U \in G(n)} \left( \prod_{j = 1}^m |p_{G(n)}(\theta_j)|^{2\alpha} \right) 
\end{align*}
in (\ref{eqn:Fubini1}) over the subset of $(0,2\pi)^m$ where $|e^{i\theta_j} - e^{i\theta_k}| > 1/n$ for $1 \leq j < k \leq m$, and $|e^{i\theta_j} \pm 1| > 1/n$ for $1 \leq j \leq m$. Except in the case where $m = 2$ and $G(n)$ is one of the orthogonal groups, the main contribution in the second and third phases, which we refer to as the critical and supercritical phases, respectively, comes from integrating over the subset of $(0,2\pi)^m$ where $|e^{i\theta_j} - 1| < 1/n$ for $1 \leq j \leq m$. In the special case where $m = 2$ and $G(n)$ is one of the orthogonal groups, the main contribution for $\alpha \in [\frac{1}{\sqrt{2}}, \frac{\sqrt{5} + 1}{4})$ comes from integrating over the subset of $(0,2\pi)^m$ where $|e^{i\theta_j} \pm 1| > 1/n$ for $1 \leq j \leq m$, and $|e^{i\theta_j} - e^{i\theta_k}| < 1/n$ for $1 \leq j < k \leq m$. This is the reason for the additional two phases in the asymptotics of ${\rm MoM}_{O(n)}(2,\alpha)$.\\

Note also that, unlike in the case $G(n) = U(n)$, the expectations $\mathbb{E}\left( |p_{G(n)}(\theta)|^{2\alpha} \right)$ are not independent of $\theta$ for $G(n) \in \{O(n), Sp(2n)\}$. This is again due to the symmetries around $\pm 1$ of the eigenvalues of orthogonal and symplectic matrices. Thus for $G(n) \in \{O(n), Sp(2n)\}$ there is no immediate relationship of the type (\ref{eqn:MoM GMC}), between ${\rm MoM}_{G(n)}(m,\alpha)$ and the $m$-th moment of the total mass of the GMC measures
\begin{align}
    \frac{e^{2\alpha X(\theta)}}{\mathbb{E}\left(e^{2\alpha X(\theta)} \right)} \text{d}\theta.
\end{align}

\section{Toeplitz, Hankel, Toeplitz+Hankel determinants with Fisher-Hartwig singularities} \label{section:FH}

Both to prove our Theorem \ref{thm:main} and Theorem \ref{thm:main2} in Chapter \ref{chapter:GMC}, and for the proofs of Theorem \ref{thm:MoM Sp(2n)} and Theorem \ref{thm:MoM O(n)} in Chapter \ref{chapter:MoM}, we need to understand the large $n$ asymptotics of expectations of the form
\begin{align} \label{eqn:FH expectation}
    \mathbb{E}_{U \in G(n)} \left( e^{\Tr V(U)} \prod_{j = 1}^m |p_{G(n)}(\theta_j)|^{2\alpha_j} e^{2i\beta_j \Im \log p_{G(n)}(\theta_j)} \right),
\end{align}
where $\alpha_j > -1/2$ and $\beta_j \in i\mathbb{R}$ (one can more generally consider complex $\alpha_j$ with $\Re \alpha_j > -1/2$ and $\beta_j \in \mathbb{C}$, but in this thesis those cases are not considered), and where $\Tr V(U) := \sum_{k = 1}^n V(z_k)$ for a given function $V$ and an $n \times n$ matrix with eigenvalues $z_1,...,z_n$. This is particularly obvious in the case of the moments of moments, see (\ref{eqn:Fubini1}).\\

Such expectations can be expressed in terms of Toeplitz, and Toeplitz+Hankel determinants, which are defined as follows:

\begin{definition}[Toeplitz, Toeplitz+Hankel determinants]\label{def:T+H}
For a function $f \in L^1(S^1)$ we define the Toeplitz determinants $D_n(f)$, and the Toeplitz+Hankel determinants $D_n^{T+H,\kappa}$, $\kappa = 1,2,3,4$, as follows:
\begin{align}
\begin{split}
D_n(f) =& \det \left( f_{j - k} \right)_{j,k = 0}^{n - 1}, \\
D_n^{T+H,1}(f) =& \det \left( f_{j-k} + f_{j+k} \right)_{j,k = 0}^{n-1}, \\
D_n^{T+H,2}(f) =& \det \left( f_{j-k} - f_{j+k+2} \right)_{j,k = 0}^{n-1}, \\
D_n^{T+H,3}(f) =& \det \left( f_{j-k} - f_{j+k+1} \right)_{j,k = 0}^{n-1}, \\
D_n^{T+H,4}(f) =& \det \left( f_{j-k} + f_{j+k+1} \right)_{j,k = 0}^{n-1},
\end{split}
\end{align}
where $f_j := \frac{1}{2\pi}\int_{0}^{2\pi} f(e^{i\theta}) e^{-ij\theta} \text{d}\theta$.
\end{definition}

It holds that
\begin{align}
    e^{\Tr V(U)} \prod_{j = 1}^m |p_{G(n)}(\theta_j)|^{2\alpha_j} e^{2i\beta_j \Im \log p_{G(n)}(\theta_j)} =& \det (f(U)), 
\end{align}
for the function
\begin{align} \label{eqn:f FH}
    f(z) =& e^{V(z)} \prod_{j = 1}^m |z - e^{i\theta_j}|^{2\alpha_j} e^{2i\beta_j \Im \log (z - e^{i\theta_j})},
\end{align}
where for an $n$ dimensional complex matrix $A$ with eigenvalues $z_1,...,z_n$ we set $\det (f(A)) = \prod_{k = 1}^n f(z_k)$. Thus by the following two famous identities one can express the expectations in (\ref{eqn:FH expectation}) in terms of Toeplitz and Toeplitz+Hankel determinants:
\begin{theorem}[Heine-Szeg\"{o}] \label{thm:Heine-Szego}
    For a function $f \in L^1(S^1)$ it holds that 
    \begin{align}
        \mathbb{E}_{U(n)}\left( \det (f(U)) \right) = D_n(f).
    \end{align}
\end{theorem}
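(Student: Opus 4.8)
The plan is to apply the Weyl integration formula for $U(n)$ stated above to the class function $g(U)=\det(f(U))=\prod_{k=1}^n f(e^{i\theta_k})$, which reduces the claim to the identity
\begin{equation*}
\frac{1}{n!}\int_{[0,2\pi)^n}\Bigl(\prod_{k=1}^n f(e^{i\theta_k})\Bigr)\prod_{1\le k<j\le n}\bigl|e^{i\theta_k}-e^{i\theta_j}\bigr|^2\,\prod_{k=1}^n\frac{\text{d}\theta_k}{2\pi}=D_n(f).
\end{equation*}
The Weyl formula above is stated for real-valued class functions, but it extends to complex integrable class functions by writing $f=\Re f+i\,\Im f$ and using linearity, so this reduction loses nothing. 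The algebraic engine is the Vandermonde identity: setting $\Delta(\theta):=\det\bigl(e^{i(j-1)\theta_k}\bigr)_{j,k=1}^n=\prod_{1\le k<j\le n}\bigl(e^{i\theta_j}-e^{i\theta_k}\bigr)$, we have $\prod_{k<j}\bigl|e^{i\theta_k}-e^{i\theta_j}\bigr|^2=\Delta(\theta)\,\overline{\Delta(\theta)}=\det\bigl(e^{i(j-1)\theta_k}\bigr)_{j,k=1}^n\det\bigl(e^{-i(j-1)\theta_k}\bigr)_{j,k=1}^n$.

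Next I would expand both determinants as sums over permutations,
\begin{equation*}
\Delta(\theta)\,\overline{\Delta(\theta)}=\sum_{\sigma,\tau\in S_n}\sgn(\sigma)\,\sgn(\tau)\prod_{k=1}^n e^{i(\sigma(k)-\tau(k))\theta_k},
\end{equation*}
substitute into the integral, and interchange the (finite) sum with the integral. This interchange is legitimate because the Vandermonde weight is bounded, $|\Delta\overline{\Delta}|=\prod_{k<j}|e^{i\theta_k}-e^{i\theta_j}|^2\le 4^{\binom{n}{2}}$, so the full integrand lies in $L^1$ of the product measure whenever $f\in L^1(S^1)$. Each resulting one-dimensional integral is a Fourier coefficient: in the convention of Definition~\ref{def:T+H}, $\frac{1}{2\pi}\int_0^{2\pi}f(e^{i\theta})e^{i(\sigma(k)-\tau(k))\theta}\,\text{d}\theta=f_{\tau(k)-\sigma(k)}$. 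This is Heine's (Andr\'eief's) identity in disguise, and the double sum now collapses to a single one: writing $\pi=\tau\circ\sigma^{-1}$ and relabelling the product index via $i=\sigma(k)$ gives $\sgn(\sigma)\,\sgn(\tau)=\sgn(\pi)$ and $\prod_k f_{\tau(k)-\sigma(k)}=\prod_i f_{\pi(i)-i}$, with no remaining dependence on $\sigma$; summing over the $n!$ choices of $\sigma$ cancels the prefactor $1/n!$ and leaves $\sum_{\pi\in S_n}\sgn(\pi)\prod_{i=1}^n f_{\pi(i)-i}=\det\bigl(f_{j-i}\bigr)_{i,j=1}^n$. Shifting indices and using $\det A=\det A^{T}$ to match the row/column ordering, this is precisely $\det\bigl(f_{j-k}\bigr)_{j,k=0}^{n-1}=D_n(f)$.

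This is a classical computation, so no step is genuinely hard; the only points requiring care are bookkeeping rather than substance. First, one must fix the sign in the Fourier coefficients so that the indices of the final determinant agree with Definition~\ref{def:T+H} (a stray transpose is harmless). Second, the interchange of summation and integration must be justified for a merely $L^1$ density $f$, which follows from the boundedness of the Vandermonde weight noted above. Third, the complex-valued case is reduced to the real-valued form of the Weyl formula by linearity. The one conceptual ingredient — that the double permutation sum reduces to a single determinant after the $n!$ cancellation — is exactly Andr\'eief's identity, which one may instead invoke directly (with $\phi_j(\theta)=e^{i(j-1)\theta}$, $\psi_j(\theta)=e^{-i(j-1)\theta}$, $\text{d}\mu(\theta)=f(e^{i\theta})\,\tfrac{\text{d}\theta}{2\pi}$) to shorten the argument.
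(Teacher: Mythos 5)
Your proof is correct, and it is the standard Heine--Szeg\H{o}/Andr\'eief argument: Weyl integration for the class function $\prod_k f(e^{i\theta_k})$, factorization of the squared Vandermonde as $\Delta\overline{\Delta}$, term-by-term integration yielding Fourier coefficients, and the collapse of the double permutation sum via $\pi=\tau\sigma^{-1}$, which cancels the $1/n!$. Note that the paper itself does not prove this theorem --- it is quoted as a classical identity and used as input --- so there is no paper proof to compare against; your bookkeeping (the sign convention $f_j=\frac{1}{2\pi}\int_0^{2\pi}f(e^{i\theta})e^{-ij\theta}\,\mathrm{d}\theta$ giving $f_{\tau(k)-\sigma(k)}$, the harmless transpose, and the justification of the interchange, which for a finite sum only needs each term to be integrable, guaranteed by $f\in L^1$ and the bounded Vandermonde weight) is all in order.
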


\begin{theorem}\label{thm:Baik2001}(Theorem 2.2 in \cite{Baik2001})
Let $h(z)$ be any function on the unit circle such that for $\iota(e^{i\theta}) := h(e^{i\theta})h(e^{-i\theta})$ the integrals 
\begin{equation}
\iota_j = \frac{1}{2\pi} \int_0^{2\pi} \iota(e^{i\theta}) e^{-ij\theta} \, \text{d}\theta
\end{equation} 
are well-defined. Then for any $n \in \mathbb{N} \cup \{ 0\}$, with $D_n^{T+H,\kappa}$ defined in Definition \ref{def:T+H}, it holds that 
\begin{align}
\begin{split}
\mathbb{E}_{SO(2n)}\left( \text{det}(h(U)) \right) &= \frac{1}{2} D_n^{T+H,1}(\iota), \\
\mathbb{E}_{SO^-(2n)}\left( \text{det}(h(U)) \right) &=  h(1)h(-1) D_{n-1}^{T+H,2}(\iota), \\
\mathbb{E}_{SO(2n+1)}\left( \text{det}(h(U)) \right) &=  h(1) D_n^{T+H,3}(\iota), \\
\mathbb{E}_{SO^-(2n+1)}\left( \text{det}(h(U)) \right) &=  h(-1) D_n^{T+H,4}(\iota), \\
\mathbb{E}_{Sp(2n)}\left( \text{det}(h(U)) \right) &= D_n^{T+H,2}(\iota),
\end{split}
\end{align}
except that $\mathbb{E}_{SO(0)} \left( \det(h(U)) \right) = h(1)$.
\end{theorem}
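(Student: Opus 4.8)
The natural approach is to prove the five identities in parallel, in each case feeding the relevant Weyl integration formula into Andréief's (Heine's) identity. First I would use the Weyl integration formula stated above to write $\mathbb{E}_{G}(\det(h(U)))$ as an integral over the nontrivial eigenangles $\theta_1,\dots\in[0,\pi]$. Since the non-fixed eigenvalues come in complex-conjugate pairs $e^{\pm i\theta_j}$, and any deterministic eigenvalue at $\pm1$ merely contributes a scalar $h(1)$ or $h(-1)$ (this produces exactly the prefactors in the statement), one gets $\det(h(U)) = (\text{prefactor})\,\prod_j h(e^{i\theta_j})h(e^{-i\theta_j}) = (\text{prefactor})\,\prod_j \iota(e^{i\theta_j})$. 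The case $SO^-(2n) = SO^-(2(n-1)+2)$ carries two forced eigenvalues and only $n-1$ nontrivial angles, which is why its determinant has size $n-1$; its $n-1$ nontrivial angles are in fact distributed as the $n-1$ eigenangles of $Sp(2(n-1))$, so it is handled exactly like the symplectic case. The degenerate case $SO(0)$ (the coset is a single point) is checked by inspection.

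Next I would rewrite the product $\prod_j w(\theta_j)\,\prod_{j<k}(2\cos\theta_k-2\cos\theta_j)^2$ occurring in each Weyl density as the square of a single determinant $\det[\psi_j(\theta_k)]_{j,k}^2$, with the $\psi_j$ chosen to absorb the extra trigonometric weight $w$: for $SO(2n)$ (where $w\equiv1$) take $\psi_j(\theta)=2\cos((j-1)\theta)$ with $\psi_1\equiv1$; for $Sp(2n)$ and $SO^-(2n)$ take $\sin(j\theta)$; for $SO(2n+1)$ take $\sin((j-\tfrac12)\theta)$; for $SO^-(2n+1)$ take $\cos((j-\tfrac12)\theta)$. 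These are precisely the numerators in the respective Weyl character formulas, so each needed identity $\prod_j w(\theta_j)^{1/2}\prod_{j<k}(2\cos\theta_k-2\cos\theta_j)=\det[\psi_j(\theta_k)]$ holds up to an explicit constant, because each $\psi_j$ is ``monic'' of degree $j-1$ in $2\cos\theta$. Andréief's identity then collapses $\frac{1}{n!}\int_{[0,\pi]^n}\prod_j \iota(e^{i\theta_j})\,\det[\psi_j(\theta_k)]^2\,d\theta_1\cdots d\theta_n$ to $\det\big[\int_0^\pi \psi_j(\theta)\psi_k(\theta)\,\iota(e^{i\theta})\,d\theta\big]_{j,k}$.

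Finally I would evaluate these entries by product-to-sum: each $\psi_j\psi_k$ is a $\pm$ combination of $\cos((j-k)\theta)$ and $\cos((j+k+c-2)\theta)$ with $c\in\{0,1,2\}$ determined by whether $\psi$ is a sine or a cosine and by the half-integer offset. Since $\iota(e^{i\theta})=h(e^{i\theta})h(e^{-i\theta})$ is even in $\theta$, its Fourier coefficients satisfy $\iota_\ell=\iota_{-\ell}$ and $\int_0^\pi \cos(\ell\theta)\,\iota(e^{i\theta})\,d\theta=\pi\,\iota_\ell$, so the $(j,k)$ entry becomes $\tfrac{\pi}{2}\big(\iota_{j-k}\pm\iota_{j+k+c-2}\big)$ (with the appropriate adjustment in the first row and column for $SO(2n)$, where $\psi_1$ is constant). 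Reindexing $j,k\mapsto j-1,k-1$ to run from $0$ to $n-1$ and comparing with Definition \ref{def:T+H} matches the entries to $\iota_{j-k}+\iota_{j+k}$, $\iota_{j-k}-\iota_{j+k+2}$, $\iota_{j-k}-\iota_{j+k+1}$ and $\iota_{j-k}+\iota_{j+k+1}$ respectively; multiplying out the accumulated constants (the $\tfrac{2}{n!(2\pi)^n}$ and $\tfrac{2^n}{n!\pi^n}$ prefactors from the Weyl densities against the $\pi/2$'s and the $n!$ from Andréief, together with the two elementary row/column divisions by $2$ needed for $SO(2n)$) yields the stated normalizations, including the factor $\tfrac12$ for $SO(2n)$.

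The main obstacle I expect is precisely the bookkeeping in the last two steps: selecting the trigonometric basis $\psi_j$ so that, after Andréief and the product-to-sum reduction, the shift $c$ and the overall sign come out exactly as in each of the five cases, and then tracking all normalization constants --- in particular the factor-of-two discrepancies that appear in the first row and column whenever $\psi_1$ is constant, which must be cleared by elementary determinant operations. An alternative, more representation-theoretic route (expanding $\mathbb{E}_G(\det(h(U)))$ via branching rules or character identities, as in \cite{Baik2001}) is available, but the Andréief argument above is the most self-contained.
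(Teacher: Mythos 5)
The paper does not prove this statement at all: it is quoted verbatim as Theorem 2.2 of Baik--Rains \cite{Baik2001} (and restated later as Theorem \ref{thm:average}), so there is no internal proof to compare against. Judged on its own, your proposal is correct and is essentially the classical derivation behind the cited result: Weyl integration, the Weyl-denominator identities
$\det[\sin(j\theta_k)]=\prod_j\sin\theta_j\prod_{j<k}(2\cos\theta_j-2\cos\theta_k)$ (type $C$), and its $\sin((j-\tfrac12)\theta)$, $\cos((j-\tfrac12)\theta)$, $2\cos((j-1)\theta)$ analogues, followed by Andr\'eief and product-to-sum with the evenness $\iota_\ell=\iota_{-\ell}$. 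Your bookkeeping claims check out: the entries come out as $\tfrac{\pi}{2}(\iota_{j-k}\mp\iota_{j+k+c-2})$ (resp.\ $2\pi(\iota_{j-k}+\iota_{j+k-2})$ in the $SO(2n)$ case using $\psi_j=2\cos((j-1)\theta)$, whose constant first row is exactly what produces the overall factor $\tfrac12$), and the Weyl normalizations $\tfrac{2}{n!(2\pi)^n}$, $\tfrac{2^n}{n!\pi^n}$ cancel against the $n!$ from Andr\'eief and the $(\pi/2)^n$'s to give the stated identities. Two small points worth making explicit if you write this up: (i) the $SO^-(2n)$ case uses the Weyl measure for the negative coset with \emph{two} forced eigenvalues, whose $n-1$ nontrivial angles are indeed distributed as those of $Sp(2n-2)$ --- this measure is not in the paper's table, so you should state or cite it; (ii) Andr\'eief is purely algebraic, so it applies even though the weight $\iota(e^{i\theta})\,\mathrm{d}\theta$ may be complex-valued, which is needed at the level of generality of the statement.
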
 

The study of the asymptotics of Toeplitz and Toeplitz+Hankel determinants has a long history and was initiated by Szeg\"{o}. The simplest case is the strong Szeg\"{o} limit theorem (see for example \cite{Simon2005} for its most general version), which states that for $f(z) = e^{V(z)} = \text{exp}\left( \sum_{k \in \mathbb{Z}} V_k z^k \right)$ 
\begin{equation}
D_n(e^V) = e^{nV_0}e^{\sum_{k = 1}^\infty kV_kV_{-k}}(1 + o(1)), \quad n \rightarrow \infty.
\end{equation}
Together with the Heine-Szeg\"{o} idendity this implies that for real-valued $V$, and Haar-distributed $U \in U(n)$, the asymptotic variance of the linear statistic $\Tr V(U)$ is given by $\sum_{k = 1}^\infty k|V_k|^2 = ||V||_{1/2}^2$. \\

The function inside a Toeplitz or Toeplitz+Hankel determinant is often referred to as the \textit{symbol} of the determinant, and roots and jumps of the type $f$ in (\ref{eqn:f FH}) has are called \textit{Fisher-Hartwig singularities}, after Fisher and Hartwig, who in \cite{Fisher1969} made precise conjectures on the asymptotics of Toeplitz determinants of symbols with such singularities. The asymptotics of $D_n(f)$ and $D_n^{T+H,\kappa}(f)$, for the symbol $f$ having Fisher-Hartwig singularities, were computed under various assumptions on the symbol $f$ \cite{Basor1978, Basor1979, Basor2001, Basor2002, Basor2002a, Basor2009, Boettcher1981, Boettcher1985, Boettcher1986, ES97, Ehrhardt2001, Widom1973}; see \cite{Deift2012} for a recent historical account and \cite{Deift2011} for the most general results on symbols with Fisher-Hartwig singularities. However all these results are only valid if the singularities of $f$ are bounded away from each other as $n \rightarrow \infty$.\\
In recent years advances were made on the asymptotics of Toeplitz and Toeplitz+Hankel determinants, when the symbol $f$ is allowed to vary with $n$, in particular when singularities merge as $n \rightarrow \infty$. Obtaining asymptotic formulas which are uniform w.r.t. the location of the singularities is necessary for our proofs of Theorem \ref{thm:main} and Theorem \ref{thm:main2} in Chapter \ref{chapter:GMC}, and for the proofs of Theorem \ref{thm:MoM O(n)} and Theorem \ref{thm:MoM Sp(2n)} in Chapter \ref{chapter:MoM}, since in both cases we want to integrate expectations of the form (\ref{eqn:FH expectation}) over the location of the singularities. \\
In \cite{Claeys2015} the asymptotics of $D_n(f)$ were computed for two merging singularities, and were related to a solution of the Painlev\'{e} V equation. Using the techniques in \cite{Claeys2015}, we establish new results on Toeplitz determinants in Theorems \ref{thm:T uniform} and \ref{thm:T, T+H extended} in Chapter \ref{chapter:FH}, that give the asymptotics when there are two conjugate pairs of merging singularities which are bounded away from $\pm 1$. Again the asymptotics are related to a Painlev\'{e} V equation, in fact the same one as in \cite{Claeys2015}. We use those results to prove Theorem \ref{thm:main} in Chapter \ref{chapter:GMC}. \\
In \cite{Fahs2021} the asymptotics of $D_n(f)$ for arbitrarily many singularities merging were computed up to a factor $e^{\mathcal{O}(1)}$ which is uniformly bounded and bounded away from $0$. The precise factor is believed to be related to higher-dimensional analogues of Painlev\'{e} equations. Using the techniques of Fahs in \cite{Fahs2021}, the asymptotics of $D_n^{T+H,\kappa}(f)$ for $f$ having arbitrarily many singularities merging were computed up to an $e^{\mathcal{O}(1)}$ factor in \cite{Claeys2022}. This result is stated in Theorem \ref{thm:T+H Claeys} in Chapter \ref{chapter:FH} and we use it to prove Theorem \ref{thm:main2} in Chapter \ref{chapter:GMC}. \\
For two conjugate pairs of merging singularities which are bounded away from $\pm 1$ our Theorem \ref{thm:T+H uniform} expresses that factor in terms of a Painlev\'{e} transcendent, which again is the same one as in \cite{Claeys2015}. \\

In a recent paper \cite{Bourgade2022}, Bourgade and Falconet gave the first dynamical extension of Fisher-Hartwig asymptotics, that is they computed the large $n$ asymptotics of expectations of the form
\begin{align}
    \mathbb{E}\left( e^{\sum_{s \in \mathcal{B}} \Tr f_s(U_s)} \prod_{z = t + i\theta \in \mathcal{A}} |\det(U_t - e^{i\theta}|^{\gamma_z} \right),
\end{align}
where the expectation is w.r.t. $U = (U_t)_{t \in \mathbb{R}}$, a unitary Brownian motion, $\mathcal{A} \subset \mathbb{R} \times [0,2\pi)$ and $\mathcal{B} \subset \mathbb{R}$ are finite subsets, $\gamma_z$ are real constants and the functions $f_s$ fulfill certain regularity conditions. They used those asymptotics to prove the convergence of the characteristic polynomial of unitary Brownian motion to a Gaussian multiplicative chaos measure associated to the Gaussian free field on the cylinder, i.e. the convergence in (\ref{eqn:UBM GMC}).\\ 

Closely related to Toeplitz and Toeplitz+Hankel determinants are Hankel determinants, i.e. determinants of the form 
\begin{align}
\begin{split}
\det \left( \int_I x^{j+k} w(x) \text{d}x \right),
\end{split}
\end{align}
where $I \subset \mathbb{R}$ is an interval on which the weight $w(x)$ is supported, which is of the form 
\begin{equation}
w(x) = e^{-nV(x)} e^{W(x)} \omega(x),
\end{equation}
where $W(x)$ is continuous, $\omega(x)$ has Fisher-Hartwig singularities and $V(x)$ is a potential (in case $I$ is unbounded $W$ and $V$ need to fulfill certain integrability conditions). \\
There are three canonical cases of Hankel determinants which appear as averages over Hermitian random matrix ensembles: 
\begin{equation}
\det \left( \int_I x^{j+k} w(x) \text{d}x \right) = \mathbb{E}_{\text{Herm}(n)} \left( \prod_{j = 1}^n e^{W(\lambda_j)} \omega(\lambda_j) \right), \quad ,
\end{equation}  
where $\lambda_1,...,\lambda_n \in \mathbb{R}$ are the eigenvalues of $\text{Herm}(n)$, and
\begin{itemize}
\item $I = \mathbb{R}$, $V(x) = 2x^2$ when $\text{Herm}(n)$ is the Gaussian Unitary Ensemble,
\item $I = [-1,\infty)$, $V(x) = 2(x+1)$ when $\text{Herm}(n)$ is the (shifted) Wishart Ensemble,
\item $I = [-1,1]$, $V(x) = 0$ when $\text{Herm}(n)$ is the (shifted) Jacobi Ensemble.
\end{itemize}
\cite{Its2008}, \cite{Garoni2005} and \cite{Krasovsky2007} are important early works on the asymptotics of Hankel determinants that have greatly contributed to the development of the theory. The most general results have been proven in \cite{Charlier2019} and \cite{Charlier2021}. \cite{Claeys2016} concerns the case when singularities are merging as $n \rightarrow \infty$. There are various formulas which relate Hankel determinants with $I=[-1,1]$ and $V(x) = 0$ to Toeplitz determinants and Toeplitz+Hankel determinants, which is how some of those formulas are proven, see for example \cite{Deift2011}. 

\section{Riemann-Hilbert Problems} \label{section:RH}
One way to study the asymptotics of Toeplitz and Toeplitz+Hankel determinants is to express them in terms of orthogonal polynomials on the unit circle, which in turn can expressed in terms of solutions to certain Riemann-Hilbert problems. That orthogonal polynomials can be represented as solutions to certain Riemann-Hilbert problems (RHPs) was first observed by Fokas, Its and Kitaev in \cite{Fokas1992} for polynomials on the real line, and then by Baik, Deift, and Johansson in \cite{Baik1999} for polynomials on the unit circle. One can then use the steepest descent method introduced by Deift and Zhou in \cite{DZ93}, to analyze those RHPs for orthogonals polynomials. This was first done by Bleher and Its in \cite{BI99} and Deift, Kriecherbauer, McLaughlin, Venakides, and Zhou in \cite{DKM99a, DKM99b}, motivated by questions of universality in random matrix theory. See for example \cite{Dei00} and \cite{Kui03} for an introduction to Riemann-Hilbert problems for orthogonal polynomials. We now follow those two works for a very brief introduction to what Riemann-Hilbert problems are and how they are useful for asymptotic analysis, without going into technical detail.\\

Roughly speaking Riemann-Hilbert problems are boundary value problems for analytic functions. Consider an oriented contour $\Sigma \subset \mathbb{C}$, where oriented means that any arc has a positive and a negative side to it. By convention the positive (resp. negative) side lies to the left (resp. right) of the contour when traversing in the direction of the orientation. For simplicity we assume $\Sigma$ to be a finite union of smooth curves in $\mathbb{C}$, with only finitely many points of intersections, and all intersections being transversal, which is the case for all contours encountered in Chapter \ref{chapter:FH}. Furthermore consider a function (subject to certain conditions for $z \rightarrow \infty$ or $z \rightarrow \{\text{intersection points}\}$)
\begin{align}
    V:\Sigma \setminus \{\text{intersection points} \} \rightarrow \text{GL}(n,\mathbb{C}).
\end{align}
A typical Riemann-Hilbert problem associated to $(\Sigma, V)$ then looks as follows: find a function $R:\mathbb{C} \setminus \Sigma \rightarrow \mathbb{C}^{n \times n}$ which fulfills the following conditions:
\begin{enumerate}[label=(\alph*)]
    \item $R$ is analytic, in the sense that every entry is analytic.
    \item The continuous boundary values of $R$ from the right and left side of $\Sigma$, denoted $R_+-$ and $R_-$, respectively, exist on $\Sigma \setminus \{ \text{intersection points} \}$, and are related by the jump condition
\begin{equation} 
R_+(z) = R_-(z) V(z), \quad z \in \Sigma \setminus \{ \text{intersection points} \}.
\end{equation}
    \item $R(z) = I + \mathcal{O}(1/z)$, uniformly as $z \rightarrow \infty$.
\end{enumerate}
Due to condition (c) the problem is normalized at infinity, which is necessary to guarantee uniqueness of a solution. Without this condition one could simply multiply a solution by a matrix of arbitrary entire functions from the left to obtain another solution. Note that one can also choose different normalization conditions for $R$ as $z \rightarrow \infty$, and one can also normalize at other sets of the complex plane, which will be the case for some Riemann-Hilbert problems we encounter in Chapter \ref{chapter:FH}. Further note that in case there are intersection points one needs to impose additional conditions at those to guarantee uniqueness of a solution, since otherwise one could multiply a solution from the left by for example matrix-valued functions of the form $I + M$, where $M$ is analytic in the complex plane apart from isolated singularities at the intersection points.\\

The reason why representing an object as a solution to a Riemann-Hilbert problem is powerful, is the following: suppose one has a sequence of Riemann-Hilbert problems of the type above, i.e. one has a sequence $(\Sigma^{(n)}, V^{(n)})_{n \in \mathbb{N}}$ of contours and jump matrices, and the jumps vanish as $n \rightarrow \infty$ in the sense that $\lim_{n \rightarrow \infty} ||V^{(n)} - I||_{\Sigma^{(n)}} = 0$ for some specified norm $|| \cdot ||_{\Sigma^{(n)}}$ on the set of measurable functions from $\Sigma^{(n)}$ to $\mathbb{C}^{n \times n}$. Typically one takes $|| \cdot ||_{\Sigma^{(n)}}$ to be one (or a sum) of the $L^p$ norms
\begin{align}
    \left( \int_{\Sigma^{(n)}} |V(z)|^p \text{d}z \right)^{1/p}, \quad 
\end{align}
where $| \cdot |$ denotes the Euclidean norm, or the corresponding $L^\infty$ norm. Then in many cases one can find a constant $C > 0$, independent of $n \in \mathbb{N}$ and $z \in \mathbb{C} \setminus \{\Sigma^{(n)}\}$, such that  
\begin{align} \label{eqn:R V}
    |R^{(n)}(z) - I| < C||V^{(n)} - I||_{\Sigma^{(n)}}. 
\end{align}
This allows us to compute very precise asymptotics for $R^{(n)}$ as $n \rightarrow \infty$, which are uniform in all of $\mathbb{C} \setminus \Sigma^{(n)}$. The exact statement depends on the case at hand, yet the proofs are very similar and are often omitted in the Riemann-Hilbert literature. See for example \cite{Dei00, Claeys2019, Fahs2021, Berestycki2018} for detailed proofs of such statements.\\

The typical procedure of a Riemann-Hilbert analysis is as then follows: 
\begin{enumerate}
    \item Represent the objects of interest in terms of solutions to a family of RHPs.
    \item Transform this family of RHPs in such a way that one obtains another family of RHPs whose jumps are small in the limit that one is interested in. The Deift-Zhou steepest descent method is used in this step. For a certain class of Riemann-Hilbert problems it allows us to construct a sequence of explicit transformations towards a Riemann-Hilbert problem with small jumps. 
    \item Use estimates of the form (\ref{eqn:R V}) to obtain uniform asymptotics for the solutions of the RHPs with small jumps.
    \item Reverse the transformation from the second step, to express the solutions to the original family of RHPs in terms of the solutions to the small-jump RHPs, and thus obtain uniform asymptotics of the solutions to the original family of RHPs.
    \item Extract the asymptotics of the objects of interest from the uniform asymptotics of the solutions of the original RHP.
\end{enumerate}

We now explain how Toeplitz determinants are related to orthogonal polynomials and Riemann-Hilbert problems: by the Heine-Szego identity and since $f > 0$ almost everywhere it holds that $D_{n}(f) \in (0,\infty)$ for all $n\in \mathbb{N}$. Thus for $n \in \mathbb{N}$ we can define the polynomials
\begin{align}
\begin{split}
\phi_n(z) &= \frac{1}{\sqrt{D_n(f) D_{n+1}(f)}} 
\left| 
\begin{array} {cccc} f_{0} & f_{-1} & \dots & f_{-n} \\
f_{1} & f_{0} & \dots & f_{-n+1} \\
\dots & \dots & & \dots \\
f_{n-1} & f_{n-2} & \dots & f_{-1} \\
1 & z & \dots & z^n 
\end{array}
\right| = \chi_n z^n + ..., \\
\hat{\phi}_n(z) &= \frac{1}{\sqrt{D_n(f) D_{n+1}(f)}} 
\left| 
\begin{array} {ccccc} f_{0} & f_{-1} & \dots & f_{-n+1} & 1 \\
f_{1} & f_{0} & \dots & f_{-n+2} & z \\
\dots & \dots & & \dots & \dots \\
f_{n} & f_{n-1} & \dots & f_{1} & z^n 
\end{array}
\right| = \chi_n z^n + ..., 
\end{split}
\end{align}
where the leading coefficient $\chi_n$ is given by 
\begin{equation}
\chi_n = \sqrt{\frac{D_n(f)}{D_{n+1}(f_{})}}.
\end{equation}
Further we set $D_0(f) = 0$ and $\phi_0(z) = \hat{\phi}_0(z) = \chi_0 = 1/\sqrt{D_1(f)}$. It is straightforward to verify that the above polynomials satisfy the orthogonality relations
\begin{align} \label{eqn:orthogonality}
\begin{split}
\frac{1}{2\pi} \int_0^{2\pi} \phi_n(e^{i\theta}) e^{-ik\theta} f(e^{i\theta}) \text{d}\theta =& \chi_n^{-1} \delta_{nk}, \\
\frac{1}{2\pi} \int_0^{2\pi} \hat{\phi}_n(e^{-i\theta}) e^{ik\theta} f(e^{i\theta}) \text{d}\theta =& \chi_n^{-1} \delta_{nk},
\end{split}
\end{align}
for $k = 0,1,...,n$, which means that they are orthonormal w.r.t. the weight $f$. Thus one can see that 
\begin{align}
    D_n(f) = \prod_{j = 0}^{n-1} \chi_j^{-2},
\end{align}
which implies that when knowing the large $n$ asymptotics of the leading order coefficients $\chi_n$ we can obtain information about the large $n$ asymptotics of the Toeplitz determinants $D_n(f)$. Furthermore the Toeplitz+Hankel determinants $D_n^{T+H, \kappa}(f)$ can be expressed in terms of the Toeplitz determinants $D_n(f)$ and $\phi_n(0)/\chi_n$, see Theorem \ref{lemma:connection between T and T+H} in Chapter \ref{chapter:FH}. \\ 

The characterization of orthogonal polynomials on the unit circle in terms of Riemann-Hilbert problems is stated in the following theorem. Since this theorem is of crucial importance for the analysis of orthogonal polynomials, and since the proof only uses elementary methods from complex analysis, we include the proof (note that the original theorem was only proven for smooth functions $f$, but the proof extends easily to functions with Fisher-Hartwig singularities). 

\begin{theorem} [Baik, Deift, Johansson \cite{Baik1999}] \label{thm:BaikDeiftJohansson}
Let $C$ denote the unit circle, oriented counterclockwise, and let $f$ be as in (\ref{eqn:f FH}). Then the matrix-valued function $Y(z) = Y(z;n)$ given by 
\begin{equation} \label{eqn:Y1}
Y(z) = \left( \begin{array}{cc} \chi_n^{-1} \phi_n(z) & \chi_n^{-1} \int_C \frac{\phi_n(\xi)}{\xi - z} \frac{f(\xi) \text{d}{\xi}}{2\pi i \xi^n} \\
- \chi_{n-1}z^{n-1} \hat{\phi}_{n-1}(z^{-1}) &  -\chi_{n-1} \int_C \frac{\hat{\phi}_{n-1}(\xi^{-1})}{\xi - z} \frac{f(\xi) \text{d}{\xi}}{2\pi i \xi} \end{array} \right)
\end{equation}
is the unique solution of the following Riemann-Hilbert problem.
\end{theorem}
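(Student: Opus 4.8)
The plan is to establish the two halves of the statement in turn: that the explicit matrix $Y$ in \eqref{eqn:Y1} solves the Riemann--Hilbert problem, and that the solution is unique. The only ingredients needed are the definitions of $\phi_n,\hat\phi_n,\chi_n$, the orthogonality relations \eqref{eqn:orthogonality}, and two classical facts: the Sokhotski--Plemelj formula, stating that a Cauchy transform $g\mapsto\frac{1}{2\pi i}\int_C\frac{g(\xi)}{\xi-z}\,\mathrm{d}\xi$ of an $L^1$ density jumps by $g$ across $C$ (with the orientation fixing which side is which), together with the fact that such a transform is analytic off $C$. Before starting, note that the hypotheses on $f$ in \eqref{eqn:f FH} --- namely $\alpha_j>-1/2$, equivalently $2\alpha_j>-1$, and $\beta_j\in i\mathbb{R}$ --- guarantee that $f\in L^1(C)$ and that near each $e^{i\theta_j}$ one has $|f(z)|\asymp|z-e^{i\theta_j}|^{2\alpha_j}$ up to a bounded jump across $C$; this is precisely what makes the Cauchy integrals in \eqref{eqn:Y1} well defined and controls their local behaviour.

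For the verification I would proceed entry by entry. The $(1,1)$ entry $\chi_n^{-1}\phi_n(z)=z^n+\cdots$ is monic of degree $n$, and the $(2,1)$ entry $-\chi_{n-1}z^{n-1}\hat\phi_{n-1}(z^{-1})$ is a polynomial of degree at most $n-1$; both are entire, so the first column of $Y$ has no jump, while the $(1,2)$ and $(2,2)$ entries are Cauchy transforms and hence analytic on $\mathbb{C}\setminus C$. Applying Sokhotski--Plemelj to these two Cauchy transforms gives exactly $Y_{12,+}-Y_{12,-}=z^{-n}f(z)\,Y_{11}(z)$ and $Y_{22,+}-Y_{22,-}=z^{-n}f(z)\,Y_{21}(z)$, i.e.\ the jump relation $Y_+=Y_-\left(\begin{smallmatrix}1 & z^{-n}f(z)\\ 0 & 1\end{smallmatrix}\right)$. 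For the normalization at infinity I would expand $\frac{1}{\xi-z}=-\sum_{j\ge 0}\xi^j z^{-j-1}$ inside the Cauchy integrals and read off each coefficient from \eqref{eqn:orthogonality}: orthogonality of $\phi_n$ against $1,z,\dots,z^{n-1}$ kills the first $n$ coefficients of $Y_{12}$, so $Y_{12}=O(z^{-n-1})$, while the second relation applied to $\hat\phi_{n-1}$ makes the $j=n-1$ term equal to $z^{-n}$ and the lower ones vanish, so $z^{-n}$ is the leading term of $Y_{22}$; together with the degrees of the polynomial entries this yields $Y(z)\,\mathrm{diag}(z^{-n},z^{n})=I+O(1/z)$ as $z\to\infty$. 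Finally, the polynomial entries are bounded near each $e^{i\theta_j}$, and the Cauchy-transform entries inherit from $f$ the standard local bounds there (of order $|z-e^{i\theta_j}|^{2\alpha_j}$ when $\alpha_j<0$, logarithmic when $\alpha_j=0$ with $\beta_j\ne 0$, bounded when $\alpha_j>0$), which are exactly the endpoint conditions of the Riemann--Hilbert problem; hence $Y$ is a solution.

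For uniqueness I would first show $\det Y\equiv 1$. It is analytic on $\mathbb{C}\setminus C$; since $\det\left(\begin{smallmatrix}1 & *\\ 0 & 1\end{smallmatrix}\right)=1$ it is continuous across $C$, and the endpoint bounds above are weaker than a simple pole at each $e^{i\theta_j}$ (exponent $2\alpha_j>-1$, or merely logarithmic), so each $e^{i\theta_j}$ is a removable singularity and $\det Y$ extends to an entire function; since $\det\!\big(Y(z)\,\mathrm{diag}(z^{-n},z^{n})\big)\to 1$, Liouville gives $\det Y\equiv 1$, so $Y$ is everywhere invertible with $Y^{-1}$ (built from cofactors) having the same analytic and endpoint structure. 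If $\tilde Y$ is another solution, put $R:=\tilde Y\,Y^{-1}$; then across $C$ one has $R_+=\tilde Y_-\,V V^{-1}Y_-^{-1}=R_-$, so $R$ has no jump, the endpoint bounds make each $e^{i\theta_j}$ removable for $R$, so $R$ is entire, and $R=\big(\tilde Y(z)\,\mathrm{diag}(z^{-n},z^{n})\big)\big(Y(z)\,\mathrm{diag}(z^{-n},z^{n})\big)^{-1}\to I$ at infinity, whence $R\equiv I$ and $\tilde Y=Y$.

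The one place where the extension of the original smooth-symbol theorem of \cite{Baik1999} calls for real care is the endpoint analysis at the Fisher--Hartwig points: one needs the precise local behaviour of Cauchy integrals whose density has an algebraic or jump singularity of the admitted strength, both to check that the candidate $Y$ satisfies the endpoint conditions and, in the uniqueness step, to confirm that those conditions are weak enough for the isolated singularities of $\det Y$ and of $R=\tilde Y Y^{-1}$ to be removable. This is exactly why the restrictions $\alpha_j>-1/2$ and $\beta_j\in i\mathbb{R}$ enter. All of this is standard in the Fisher--Hartwig Riemann--Hilbert literature (see \cite{Deift2011}), so with this local bookkeeping added the proof runs just as in the smooth case.
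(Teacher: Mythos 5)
Your proposal is correct and follows essentially the same route as the paper's proof: verify the candidate $Y$ entry by entry (jumps via Sokhotski--Plemelj, the asymptotic normalization via the geometric expansion of $\tfrac{1}{\xi-z}$ together with the orthogonality relations \eqref{eqn:orthogonality}, and the local conditions at the singular points), then prove uniqueness by showing $\det Y\equiv 1$ through removability of the singularities (using $2\alpha_j>-1$) and Liouville, and by applying the same argument to the ratio of two solutions. The only difference is cosmetic: you spell out the endpoint bookkeeping at the Fisher--Hartwig points that the paper dismisses as obvious, and you form $\tilde Y Y^{-1}$ rather than $Y\tilde Y^{-1}$, which changes nothing.
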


\noindent \textbf{RH problem for} $Y(z) = Y(z;n)$

\begin{enumerate}[label=(\alph*)]
\item $Y:\mathbb{C}\setminus C \rightarrow \mathbb{C}^{2\times 2}$ is analytic, in the sense that every entry is analytic.

\item The continuous boundary values of $Y$ from inside the unit circle, denoted $Y_+$, and from outside, denoted $Y_-$, exist on $C\setminus \{ e^{i\theta_1},...,e^{i\theta_m} \}$, and are related by the jump condition
\begin{equation} \label{eqn:Y}
Y_+(z) = Y_-(z) \left( \begin{array} {cc} 1 & z^{-n} f(z) \\ 0 & 1 \end{array} \right), \quad z \in C\setminus \{e^{i\theta_1},...,e^{i\theta_m}\}.
\end{equation}

\item $Y(z) = (I + \mathcal{O}(1/z)) \left( \begin{array}{cc} z^n & 0 \\ 0 & z^{-n} \end{array} \right) = \left( \begin{array}{cc} z^n + \mathcal{O}(z^{n-1}) & \mathcal{O}(z^{-n-1}) \\ \mathcal{O}(z^{n-1}) & z^{-n} + \mathcal{O}(z^{-n-1}) \end{array} \right)$, as $ z \rightarrow \infty$.

\item As $z \rightarrow e^{i\theta_k}$, $z \in \mathbb{C}\setminus C$, $k= 1,...,m$, it holds that
\begin{equation}
Y(z) = \left( \begin{array}{cc} O(1) & O(1) + O(|z-z_k|^{2\alpha_k}) \\ O(1) & O(1) + O(|z-z_k|^{2\alpha_k}) \end{array}\right), \quad \text{if } \alpha_k \neq 0,
\end{equation}
and 
\begin{equation}
Y(z) = \left( \begin{array}{cc} O(1) & O(\log |z-z_k|) \\ O(1) & O(\log |z-z_k| ) \end{array}\right), \quad \text{if } \alpha_k = 0.
\end{equation}
\end{enumerate}

\begin{proof}
    We first show that $Y$ given in (\ref{eqn:Y1}) fulfills the RHP, and then show that this solution is unique. It is obvious that condition (a) is satisfied, i.e. that the entries of $Y$ are analytic for all $z \in \mathbb{C}\setminus C$, and that condition (d) is satisfied. Now from the Sokhotski–Plemelj theorem (see for example \cite{Gakhov66}[Section 4.2]) it follows that
    \begin{align}
    \begin{split}
        Y_{12}(z)_+ =& Y_{12}(z)_- + \chi_n^{-1}\phi_n(z)f(z)z^{-n} = Y_{12}(z)_- + Y_{11}(z)_-z^{-n}f(z), \\ 
        Y_{22}(z)_+ =& Y_{22}(z)_- - \chi_{n-1} \hat{\phi}_{n-1}(z^{-1})f(z)z^{-1} = Y_{22}(z)_- + Y_{21}(z)_- z^{-n}f(z),
    \end{split}
    \end{align}
    which implies that $Y$ satisfies condition (b). Condition (c) is obvious for $Y_{11}$ and $Y_{21}$. For $Y_{12}$ and $Y_{22}$ note that for $|z| > 1$ it holds that
    \begin{align}
        \frac{1}{\xi - z} = - \sum_{k = 0}^{\infty} \frac{\xi^k}{z^{k+1}}.
    \end{align}
    Using this, and the orthogonality relations in (\ref{eqn:orthogonality}), we see that condition (c) also holds for $Y_{12}$ and $Y_{22}$: as $z \rightarrow \infty$
    \begin{align}
    \begin{split}
        Y_{12}(z) =& \chi_n^{-1} \int_C \frac{\phi_n(\xi)}{\xi - z} \frac{f(\xi) \text{d}{\xi}}{2\pi i \xi^n} \\ 
        =& -\chi_n^{-1} \sum_{k = 0}^{\infty}  \frac{1}{2\pi i} \left( \int_C \phi_n(\xi) f(\xi) \xi^{k-n} \text{d}\xi \right) \frac{1}{z^{k +1}} \\ 
        =& -\chi_n^{-1} \sum_{k = 0}^{\infty}  \frac{1}{2\pi} \left( \int_0^{2\pi} \phi_n(e^{i\theta}) f(e^{i\theta}) e^{-i\theta(n - k - 1)} \text{d}\theta \right) \frac{1}{z^{k +1}} \\ 
        =& \mathcal{O}(z^{-n-1}) \\
        Y_{22}(z) =& -\chi_{n-1} \int_C \frac{\hat{\phi}_{n-1}(\xi^{-1})}{\xi - z} \frac{f(\xi) \text{d}{\xi}}{2\pi i \xi} \\ 
        =& \chi_{n-1} \sum_{k = 0}^{\infty} \frac{1}{2\pi i} \left( \int_C \hat{\phi}_{n-1}(\xi^{-1}) f(\xi) \xi^{k-1} \text{d}{\xi} \right) \frac{1}{z^{k+1}} \\ 
        =& \chi_{n-1} \sum_{k = 0}^{\infty} \frac{1}{2\pi} \left( \int_0^{2\pi} \hat{\phi}_{n-1}(e^{-i\theta}) f(e^{i\theta}) e^{ik\theta} \text{d}\theta \right) \frac{1}{z^{k+1}} \\ 
        =& z^{-n+1} + \mathcal{O}(z^{-n}).
    \end{split}
    \end{align}
    We now show that if a solution exists then it is unique. By condition (a) the function $\det Y : \mathbb{C} \setminus C \rightarrow \mathbb{C}$ is analytic and for $z \in C \setminus \{e^{i\theta_1},...,e^{i\theta_m}\}$, condition (b) implies that
    \begin{align}
        \left( \det Y \right)_+(z) = \det \left( Y_-(z) \left( \begin{array}{cc} 1 & z^{-n} f(z) \\ 0 & 1 \end{array} \right) \right) = \left( \det Y \right)_-(z).
    \end{align}
    Thus $\det$ is meromorphic on $\mathbb{C}$, with singularities at $e^{i\theta_1},...,e^{i\theta_m}$. Those singularities are all removable however since by condition (d) it holds that $\lim_{z \rightarrow e^{i\theta_j}} (z - e^{i\theta_j}) \det Y(z) = \lim_{z \rightarrow e^{i\theta_j}} \mathcal{O}(|z - e^{i\theta_j}|) + \mathcal{O}(|z - e^{i\theta_j}|^{1 + 2\alpha_j}) + \mathcal{O}(|z - e^{i\theta_j}| \log |z - e^{i\theta_j}|) = 0$, where in the last equality we used that $\alpha_j > -1/2$, $j = 1,...,m$. This implies that $\det Y$ extends to an entire function. By condition (c) it holds that $\det Y(z) \rightarrow 1$ as $z \rightarrow \infty$. Thus by Liouville's theorem it follows that $\det Y(z) = 1$ for all $z \in \mathbb{C}$, and $Y$ is invertible for all $z \in \mathbb{C}\setminus C$. Now suppose that there are two solutions $Y$ and $\tilde{Y}$. Then for any $C \setminus \{e^{i\theta_1},...,e^{i\theta_m}\}$ condition (b) implies that
    \begin{align}
        (Y\tilde{Y}^{-1})_+(z) = Y_-(z) \left( \begin{array}{cc} 1 & z^{-n} f(z) \\ 0 & 1 \end{array} \right) \left( \begin{array}{cc} 1 & - z^{-n} f(z) \\ 0 & 1 \end{array} \right) \tilde{Y}^{-1}_-(z) = (Y\tilde{Y}^{-1})_-(z),
    \end{align}
    and again by condition (d) the singularities of $Y\tilde{Y}^{-1}$ at the points $e^{i\theta_1},...,e^{i\theta_m}$ are removable. Thus $Y\tilde{Y}^{-1}$ extends analytically to the entire complex plane and every entry is an entire function, and since additionally by condition (c) it holds that $Y(z)\tilde{Y}^{-1}(z) \rightarrow I$ as $z \rightarrow \infty$, Liouville's theorem implies that $Y(z) = \tilde{Y}(z)$ for all $z \in \mathbb{C}\setminus C$. 
\end{proof}

We see that $Y(z;n)_{21}(0) = \chi_{n-1}^2$ and $Y(z;n)_{11}(z) = \chi_n^{-1} \phi_n(z) =: \Phi_n(z)$, thus if we know the asymptotics of $Y$, we know the asymptotics of $\Phi_n$, $\phi_n$ and $\chi_n$, from which we can then compute the asymptotics of the Toeplitz and Toeplitz+Hankel determinants $D_n(f)$, $D^{T+H, \kappa}_n(f)$, $\kappa = 1,...,4$. 

\section{Outline}

In Chapter \ref{chapter:FH} we prove new formulas for the asymptotics of Toeplitz and Toeplitz+Hankel determinants with two conjugate pairs of merging singularities, using the Deift-Zhou steepest descent method for Riemann-Hilbert problems for orthogonal polynomials. Those formulas are stated in Theorems \ref{thm:T uniform}, \ref{thm:T+H non-uniform}, and \ref{thm:T, T+H extended}. Chapter \ref{chapter:FH} is based on the Riemann-Hilbert analysis carried out in the paper "The Classical Compact Groups and Gaussian Multiplicative Chaos" \cite{Forkel2021}, which is joint work with Jon Keating.\\

In Chapter \ref{chapter:GMC} we use the formulae stated and proven in Chapter \ref{chapter:FH} to prove Theorems \ref{thm:main} and \ref{thm:main2}, which state that the convergence (\ref{eqn:GMC convergence}) holds, i.e. that powers of the exponential of the real and imaginary part of the logarithm of the characteristic polynomial of random orthogonal and symplectic matrices converge to certain Gaussian multiplicative chaos measures on the unit circle. Just like Chapter \ref{chapter:FH}, Chapter \ref{chapter:GMC} is based on the paper "The Classical Compact Groups and Gaussian Multiplicative Chaos" \cite{Forkel2021}, which is joint work with Jon Keating.\\

In Chapter \ref{chapter:MoM} we prove the asymptotic formulae for the moments of moments of random orthogonal and symplectic matrices stated in Theorems \ref{thm:MoM Sp intro} and \ref{thm:MoM O intro}. As key technical input we use Theorem \ref{thm:T+H Claeys} by Claeys et al. \cite{Claeys2022}, stated in Chapter \ref{chapter:FH}, which gives a formula for the asymptotics of Toeplitz+Hankel determinants with arbitrarily many merging Fisher-Hartwig singularities. Chapter \ref{chapter:MoM} is based on the paper "Moments of moments of the characteristic polynomials of random orthogonal and symplectic matrices" \cite{CFK23}, which is joint work with Tom Claeys and Jon Keating.\\

In Chapter \ref{chapter:UBM} we prove Theorem \ref{thm:FS}, which states that the logarithm of the characteristic polynomial of unitary Brownian motion converges in a certain tensor product of Sobolev spaces to a generalized Gaussian field whose real and imaginary parts are Gaussian free fields on the cylinder. We also prove a Wick-type identity, stated in Proposition \ref{prop:orbits}, which expresses expectations involving traces of products of random unitary and random Hermitian matrices, in terms of expectations involving only traces of random unitary matrices. Chapter \ref{chapter:UBM} is based on the paper "Convergence of the logarithm of the characteristic polynomial of unitary Brownian motion in Sobolev space" \cite{FS22}, which is joint work with Isao Sauzedde.

\chapter{Toeplitz, and Toeplitz+Hankel determinants with Fisher-Hartwig Singularities} \label{chapter:FH}

In this chapter, based on the Riemann-Hilbert analysis in \cite{Forkel2021}, which is joint work with Jon Keating, we state formulae, and establish new ones, for the asymptotics of Toeplitz and Toeplitz+Hankel determinants with Fisher-Hartwig singularities, which we will use in our proofs of Theorems \ref{thm:main} and \ref{thm:main2} in Chapter \ref{chapter:GMC} and Theorems \ref{thm:MoM Sp(2n)} and \ref{thm:MoM O(n)} in Chapter \ref{chapter:MoM}. The results we prove are for the asymptotics of Toeplitz and Toeplitz+Hankel determinants with two conjugate pairs of merging singularities. They are related to those in \cite{Claeys2015} and \cite{Deift2011} and in our proof we employ the Deift-Zhou steepest descent method for Riemann-Hilbert problems, in a way which is strongly influenced by these two papers.
 
\section{Context and Statement of Results}

\begin{definition} \label{def:FH}(\cite{Deift2011})
A function $f:S^1 \rightarrow \mathbb{C}$ is called a symbol with a fixed number of Fisher-Hartwig singularities if it has the following form:
\begin{equation}
f(z) = e^{V(z)} z^{\sum_{j=0}^{m} \beta_j} \prod_{j=0}^m |z-z_j|^{2\alpha_j} g_{z_j,\beta_j}(z) z_j^{-\beta_j}, \quad z = e^{i\theta}, \quad \theta \in [0,2\pi),
\end{equation}
for some $m = 0,1,...$, where 
\begin{align}
z_j = e^{i\theta_j}, \quad j = 0,...,m, \quad 0 = \theta_0 < \theta_1 < ... < \theta_m < 2\pi; \nonumber \\
g_{z_j,\beta_j}(z) = \begin{cases} e^{i\pi\beta_j} & 0 \leq \text{arg} \, z < \theta_j \\ e^{-i\pi\beta_j} & \theta_j \leq \text{arg} \, z < 2\pi \end{cases},\\
\Re \alpha_j > -1/2, \quad \beta_j \in \mathbb{C}, \quad j = 0,...,m, \nonumber 
\end{align}
and $V(z)$ is analytic in a neighborhood of the unit circle. A point $z_j$, $j=1,...,m$, is included if and only if either $\alpha_j \neq 0$ or $\beta_j \neq 0$, while always $z_0 = 1$, even if $\alpha_0 = \beta_0 = 0$.
\end{definition} 

Under these assumptions $V$ has a Laurent series, convergent in a neighborhood of the unit circle,
\begin{equation}
V(z) = \sum_{k = -\infty}^\infty V_kz^k, \quad V_k = \frac{1}{2\pi} \int_0^{2\pi} V(e^{i\theta})e^{-ik\theta}\text{d}\theta,
\end{equation} 
and the function $e^{V(z)}$ allows the standard Wiener-Hopf decomposition:
\begin{equation}
\hspace{-1cm} e^{V(z)} = b_+(z) b_0 b_-(z), \quad b_+(z) = e^{\sum_{k=1}^\infty V_kz^k}, \quad b_0 = e^{V_0}, \quad b_-(z) = e^{\sum_{k= -\infty}^{-1} V_k z^k}.
\end{equation}
Note that a symbol $f$ as in Definition \ref{def:FH} that is real on the unit circle and fulfills $f(e^{i\theta}) = f(e^{-i\theta})$ is of the following form:  
\begin{align} \label{eqn:FH even}
&f(z) \\
=& e^{V(z)} \prod_{j=0}^{r+1} |z - e^{i\theta_j}|^{2\alpha_j} |z - e^{-i\theta_j}|^{2\alpha_j} g_{e^{i\theta_j},\beta_j}(z) g_{e^{i(2\pi - \theta_j)}, - \beta_j}(z) e^{-i\theta_j\beta_j} e^{i(2\pi - \theta_j)\beta_j}, \nonumber 
\end{align}
where $r \in \mathbb{N} \cup \{0\}$, $0 = \theta_0 < \theta_1 < ... < \theta_r < \theta_{r+1} = \pi$, $\alpha_j > -1/2$, $\beta_j \in i\mathbb{R}$ for $j = 0,...,r+1$, and $\beta_0 = \beta_{r+1} = 0$, and where $V(e^{i\theta}) = V(e^{-i\theta})$ such that $V_k = V_{-k}$.\\
 
We recall the definition of the Toeplitz and Toeplitz+Hankel determinants we consider: 
\begin{align} \label{eqn:T+H def}
\begin{split}
D_n(f) &:= \det \left( f_{j-k} \right)_{j,k=0}^{n-1},\\
D_n^{T+H,1}(f) &:= \det \left( f_{j-k} + f_{j+k} \right)_{j,k = 0}^{n-1}, \\
D_n^{T+H,2}(f) &:= \det \left( f_{j-k} - f_{j+k+2} \right)_{j,k = 0}^{n-1}, \\
D_n^{T+H,3}(f) &:= \det \left( f_{j-k} - f_{j+k+1} \right)_{j,k = 0}^{n-1}, \\
D_n^{T+H,4}(f) &:= \det \left( f_{j-k} + f_{j+k+1} \right)_{j,k = 0}^{n-1},
\end{split}
\end{align}
where $f \in L^1(S^1)$ and
\begin{equation}
f_j = \frac{1}{2\pi} \int_0^{2\pi} f(e^{i\theta})e^{-ij\theta}\text{d}\theta.
\end{equation}

The following two theorems state the asymptotics of Toeplitz and Toeplitz+Hankel determinants with real-valued symbols, when the singularities are bounded away from each other: 

\begin{theorem}(Ehrhardt \cite{Ehrhardt2001}) \label{thm:T non-uniform}
Let $f$ be as in Definition \ref{def:FH}, with $\alpha_j \in \mathbb{R}$ and $\beta_j \in i\mathbb{R}$ for $j = 0,...,m$, and $V(S^1) \subset \mathbb{R}$, and let $G$ denote the Barnes $G$-function. Let $\epsilon > 0$, then as $n \rightarrow \infty$, uniformly for $\min_{j,k = 0,...,m} |z_j - z_k| \geq \epsilon$: 
\begin{align}
\begin{split}
\log D_n(f) =& n V_0 + \sum_{k = 0}^\infty k|V_k|^2 + \left( \log n \right) \sum_{j = 0}^m (\alpha_j^2 - \beta_j^2)  \\
&- \sum_{j=0}^m (\alpha_j - \beta_j) \left(\sum_{k = 1}^\infty V_k z_j^k\right) + (\alpha_j + \beta_j) \left( \sum_{k = 1}^\infty V_{-k} \overline{z_j}^k \right) \\
&+ \sum_{0 \leq j < k \leq m} 2(\beta_j\beta_k - \alpha_j\alpha_k) \log |z_j - z_k| + (\alpha_j\beta_k - \alpha_k \beta_j) \log \frac{z_k}{z_j e^{i\pi}} \\
&+ \sum_{j = 0}^m \log \frac{G(1+\alpha_j + \beta_j) G(1+\alpha_j - \beta_j)}{G(1+2\alpha_j)} \\
&+ o(1).
\end{split}
\end{align}
\end{theorem}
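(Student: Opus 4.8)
The plan is to use the Riemann--Hilbert approach of Deift, Its and Krasovsky \cite{Deift2011}, which dovetails with Theorem~\ref{thm:BaikDeiftJohansson}. Since the excerpt records the identity $D_n(f)=\prod_{j=0}^{n-1}\chi_j^{-2}$, with $Y(z;n)_{11}=\chi_n^{-1}\phi_n(z)$ and $Y(z;n)_{21}(0)=\chi_{n-1}^2$, it would in principle suffice to extract the large-$n$ asymptotics of the leading coefficients $\chi_n$ of the orthonormal polynomials from the solution $Y$ of the $Y$-RHP; but recovering $\log D_n$ by summation leaves an undetermined additive constant. To sidestep this I would introduce a deformation parameter $t\in[0,1]$, interpolating along a path $f_t$ of symbols of the form in Definition~\ref{def:FH} from a reference symbol $f_0$ whose determinant asymptotics are already known (the pure Fisher--Hartwig symbol with $V\equiv 0$, or $f_0\equiv 1$ where Szeg\"{o}'s theorem applies) to $f_1=f$, and use the standard differential identity expressing $\partial_t\log D_n(f_t)$ as a contour integral of $\partial_t\log f_t$ against entries of $Y^{-1}Y'$. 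The whole problem then reduces to computing $Y$ asymptotically, uniformly in the singularity positions, and integrating in $t$.

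For the asymptotics of $Y$ I would run the Deift--Zhou steepest descent in the usual chain $Y\mapsto T\mapsto S\mapsto R$: normalise at infinity ($Y\mapsto T$, multiplying by $z^{\mp n}$ and symmetrising the jump), open lenses along $C$ ($T\mapsto S$) using the factorisation of the oscillatory jump so that $S$ has jumps exponentially close to $I$ on the lens boundaries away from the $z_j$, then build a global parametrix $N$ from the Szeg\"{o} function $\mathcal{D}(z)=\exp\!\big(\tfrac{1}{2\pi}\int_0^{2\pi}\log f(e^{i\theta})\tfrac{e^{i\theta}+z}{e^{i\theta}-z}\,\d\theta\big)$ and, in a disk of fixed radius $\delta(\epsilon)$ around each $z_j$, a local parametrix $P^{(z_j)}$ built from confluent hypergeometric functions with parameters assembled from $\alpha_j\pm\beta_j$ (the model problems of Claeys--Krasovsky \cite{Claeys2015} and Deift--Its--Krasovsky \cite{Deift2011}). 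Setting $R=S\cdot(\text{parametrix})^{-1}$, one checks that $R$ solves a small-norm RHP with jump $I+O(1/n)$ on the circles $|z-z_j|=\delta(\epsilon)$ and exponentially small elsewhere, whence $R=I+O(1/n)$. This is exactly where the separation hypothesis $\min_{j\neq k}|z_j-z_k|\ge\epsilon$ enters: it lets the disks be chosen mutually disjoint with radius depending only on $\epsilon$, and it makes every estimate --- the matching of $P^{(z_j)}$ to $N$ on $|z-z_j|=\delta(\epsilon)$, the bound on $\|V_R-I\|$, the inversion of $1-C_R$ --- uniform in the $z_j$.

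Unwinding $R\mapsto S\mapsto T\mapsto Y$ then yields uniform asymptotics of the quantities entering the differential identity, and integrating in $t$ from $f_0$ to $f$ assembles the stated formula. The Szeg\"{o}-function contribution produces $nV_0+\sum_{k\ge1}k|V_k|^2$ together with the linear terms $\sum_j\big[(\alpha_j-\beta_j)\sum_{k\ge1}V_kz_j^k+(\alpha_j+\beta_j)\sum_{k\ge1}V_{-k}\overline{z_j}^k\big]$; the way distinct local parametrices see one another through $N$ produces the pairwise interaction $\sum_{j<k}\big(2(\beta_j\beta_k-\alpha_j\alpha_k)\log|z_j-z_k|+(\alpha_j\beta_k-\alpha_k\beta_j)\log\tfrac{z_k}{z_je^{i\pi}}\big)$; and the large-argument asymptotics of the confluent hypergeometric functions in each $P^{(z_j)}$ produce both the $(\alpha_j^2-\beta_j^2)\log n$ term and the Barnes-$G$ ratios $G(1+\alpha_j+\beta_j)G(1+\alpha_j-\beta_j)/G(1+2\alpha_j)$; the remaining $o(1)$ is the $O(1/n)$ error carried by $R$.

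I expect the main obstacles to be, first, the construction and matching of the confluent hypergeometric local parametrices --- it is from their large-argument expansion that the $\log n$ factors and the $G$-function constants emerge, and the matching condition on $|z-z_j|=\delta(\epsilon)$ must be verified with explicit, uniform control --- and, second, the bookkeeping needed to pin down the constant term exactly: the differential identity must be integrated carefully and the integration constant fixed by comparison with a case where the answer is independently known (the pure Fisher--Hartwig symbol, or the smooth case from Szeg\"{o}'s theorem). The uniformity itself is essentially automatic once $\epsilon$ is used as above, but one must be disciplined about never hiding an $\epsilon$-dependent constant inside an $O(\cdot)$ claimed uniform. (Ehrhardt's original argument instead proceeds operator-theoretically, combining the known single-singularity asymptotics with a localisation/separation argument via the Borodin--Okounkov--Geronimo--Case identity; the Riemann--Hilbert route sketched here is the one I would follow precisely because it delivers the uniformity in the $z_j$ transparently.)
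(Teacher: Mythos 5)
This theorem is not proved in the thesis at all: it is quoted verbatim from Ehrhardt \cite{Ehrhardt2001} (with the uniformity in the separated-singularity regime read off, in a nearby remark, from the Riemann--Hilbert proofs of \cite{Deift2011}), and then used as input, e.g.\ to evaluate $D_n(f_{p,0})$ when the differential identity in $t$ is integrated. So there is no in-paper proof to match your sketch against; what you have written is essentially the Deift--Its--Krasovsky route to the same formula, which is a genuinely different argument from the operator-theoretic localisation proof of Ehrhardt that the thesis cites. Your outline is sound and is, in substance, the proof in \cite{Deift2011}: the chain $Y\mapsto T\mapsto S\mapsto R$, the Szeg\H{o}-function global parametrix, confluent hypergeometric local parametrices at each $z_j$, a small-norm argument whose uniformity rests exactly on the disks of radius $\delta(\epsilon)$ being disjoint, and a deformation/differential-identity step to fix the constant by comparison with a case where the asymptotics are known. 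Two points are worth making explicit if you carry this out. First, under the hypotheses of the statement ($\alpha_j\in\mathbb{R}$, $\beta_j\in i\mathbb{R}$, $V$ real) the symbol is nonnegative, so $D_n(f_t)>0$ along any deformation path within this class and the orthogonal polynomials and the $Y$-RHP exist for all $n$; for general complex parameters this solvability along the path is a real issue in \cite{Deift2011} and requires extra care, so your restriction to the stated parameter range is what makes the deformation step clean. Second, the payoff of your route over Ehrhardt's is precisely the one the thesis needs: uniformity in the locations $z_j$ (and the possibility of letting singularities merge, as exploited in Chapter 2 following \cite{Claeys2015}), which the operator-theoretic argument does not deliver transparently.
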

\begin{theorem} (Deift, Its, Krasovsky \cite{Deift2011}) \label{thm:T+H non-uniform}
Let $f(z)$ be as in (\ref{eqn:FH even}) and let $\epsilon > 0$. Then as $n \rightarrow \infty$, uniformly for $\min_{j,k = 0,...,r+1} |z_j - z_k| \geq \epsilon$:
\begin{align}
\begin{split} 
D_n^{T+H,\kappa}(f) =& e^{nV_0 + \frac{1}{2}\left( (\alpha_0 + \alpha_{r+1} + s'+ t' )V_0 - (\alpha_0+s')V(1) - (\alpha_{r+1}+t')V(-1) + \sum_{k=1}^\infty k V_k^2 \right)} \\
&\times \prod_{j=1}^r b_+(z_j)^{-\alpha_j+\beta_j} b_- (z_j)^{-\alpha_j - \beta_j} \\
&\times e^{-i \pi \left( \left( \alpha_0 + s' + \sum_{j=1}^r \alpha_j \right) \sum_{j=1}^r \beta_j + \sum_{1 \leq j < k \leq r}(\alpha_j \beta_k - \alpha_k \beta_j) \right)} \\
&\times 2^{(1-s'-t')n+q+\sum_{j=1}^r(\alpha_j^2-\beta_j^2)- \frac{1}{2}(\alpha_0+\alpha_{r+1}+s'+t')^2 + \frac{1}{2}(\alpha_0+\alpha_{r+1}+s'+t')} \\
&\times n^{\frac{1}{2}(\alpha_0^2+\alpha_{r+1}^2) + \alpha_0 s' + \alpha_{r+1} t' + \sum_{j=1}^r (\alpha_j^2 - \beta_j^2)}\\
&\times \prod_{1 \leq j < k \leq r} |z_j - z_k|^{-2(\alpha_j \alpha_k - \beta_j \beta_k)} |z_j - z_k^{-1}|^{-2(\alpha_j \alpha_k + \beta_j \beta_k)} \\ 
&\times \prod_{j=1}^r z_j^{2\tilde{A}\beta_j} |1-z_j^2|^{-(\alpha_j^2 + \beta_j^2)} |1-z_j|^{-2\alpha_j(\alpha_0+s')} |1+z_j|^{-2\alpha_j(\alpha_{r+1}+t')} \\
&\times \frac{\pi^{\frac{1}{2}(\alpha_0+\alpha_{r+1}+s'+t'+1)} G(1/2)^2}{G(1+\alpha_0+s') G(1 + \alpha_{r+1} + t')} \\
&\times \prod_{j = 1}^r \frac{G(1+\alpha_j + \beta_j)G(1+\alpha_j - \beta_j)}{G(1+2\alpha_j)} (1+o(1)),
\end{split}
\end{align} 
where $\log z_j = i\theta_j$, $b_+(z_j)^{-\alpha_j+\beta_j} = (-\alpha_j + \beta_j) \sum_{k = 1}^\infty V_k z_j^k$ and similarly for $b_-(z_j)$, $\tilde{A} = \frac{1}{2}(\alpha_0 + \alpha_{3} + s' + t') + \sum_{j=1}^2 \alpha_j$ and 
\begin{align} 
D_n^{T+H,1}(f) =& \det \left( f_{j-k} + f_{j+k} \right)_{j,k = 0}^{n-1}, \quad \text{with  } q = -2n + 2, \quad s' = t' = - \frac{1}{2}, \nonumber \\
D_n^{T+H,2}(f) =& \det \left( f_{j-k} - f_{j+k+2} \right)_{j,k = 0}^{n-1}, \quad \text{with  } q = 0, \quad s' = t' = \frac{1}{2}, \\
D_n^{T+H,3}(f) =& \det \left( f_{j-k} - f_{j+k+1} \right)_{j,k = 0}^{n-1}, \quad \text{with  } q = -n, \quad s' = \frac{1}{2}, \quad t' = -\frac{1}{2}, \nonumber \\
D_n^{T+H,4}(f) =& \det \left( f_{j-k} + f_{j+k+1} \right)_{j,k = 0}^{n-1}, \quad \text{with  } q = -n, \quad s' = -\frac{1}{2}, \quad t' = \frac{1}{2}. \nonumber 
\end{align}
\end{theorem}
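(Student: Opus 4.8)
The plan is to reduce the Toeplitz+Hankel determinant to a Hankel determinant on the interval $[-1,1]$, and then to analyse the latter by the Deift--Zhou steepest descent method applied to the Riemann--Hilbert problem of Fokas--Its--Kitaev \cite{Fokas1992} (the interval analogue of the Baik--Deift--Johansson problem \cite{Baik1999}) for the associated orthogonal polynomials. The first step is to exploit the evenness $f(e^{i\theta}) = f(e^{-i\theta})$: writing $f(e^{i\theta})$ as a function of $x = \cos\theta$ and substituting $x = \cos\theta$, each of the four determinants $D_n^{T+H,\kappa}(f)$ becomes, up to an explicit elementary prefactor depending only on $n$, $V_0$, $V(\pm 1)$ and the values $\alpha_0,\alpha_{r+1}$, a Hankel determinant
$$H_n(w) = \det\left( \int_{-1}^1 x^{j+k} w(x) \,\d x \right)_{j,k=0}^{n-1}, \qquad w(x) = w_0(x)\,(1-x)^{s'}(1+x)^{t'},$$
where the edge exponents $s',t' \in \{\pm 1/2\}$ are the ones attached to $\kappa$ in the statement, and $w_0$ carries the analytic factor (the image of $e^{V}$, analytic near $[-1,1]$ since $V_k = V_{-k}$) together with the images $|x - \cos\theta_j|^{2\alpha_j}$ of the interior Fisher--Hartwig singularities and the $\beta_j$-jumps, the points $\cos\theta_j$, $1 \le j \le r$, being bounded away from each other and from $\pm 1$. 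The identities effecting this reduction are the classical ones relating symmetric Toeplitz+Hankel determinants to Hankel determinants on $[-1,1]$ (see the references in Section \ref{section:FH}).

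Since $H_n(w) = \prod_{j=0}^{n-1} h_j$ with $h_j>0$ the squared $L^2(w)$-norm of the degree-$j$ monic orthogonal polynomial, it suffices to obtain the large-$n$ asymptotics of $h_n$, uniformly in the $\theta_j$, and then to sum. For this I would run the standard programme: (i) normalise at infinity with the $g$-function built from the arcsine equilibrium measure of $[-1,1]$; (ii) open lenses around $[-1,1]$ so that the jumps on the lens boundaries decay exponentially; (iii) construct the global parametrix from the Szeg\H{o} function of $w$, the analytic factor contributing a smooth piece, each $|x-\cos\theta_j|^{2\alpha_j}$ a standard algebraic factor, and the Jacobi weights the edge behaviour; (iv) construct the local parametrices --- Bessel parametrices at the hard edges $\pm 1$, whose Bessel order is fixed by $s',t' = \pm 1/2$ and which are the source of the numerical constants $2^{(\cdots)}$, $\pi^{(\cdots)}$ and $G(1/2)^2$, together with a confluent hypergeometric (Its--Krasovsky) parametrix in a fixed disc around each interior point $\cos\theta_j$, encoding $(\alpha_j,\beta_j)$ and producing the factors $G(1+\alpha_j+\beta_j)G(1+\alpha_j-\beta_j)/G(1+2\alpha_j)$; (v) conclude that the ratio $R$ of the exact solution to the global-plus-local parametrix solves a small-norm Riemann--Hilbert problem, so that $R = I + O(1/n)$ uniformly for $\min_{j\neq k}|z_j - z_k| \ge \epsilon$.

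From the asymptotics of the Riemann--Hilbert solution I would read off $h_n$ through the usual expansion of $Y$ at infinity, and then assemble $\log H_n = \sum_{j=0}^{n-1}\log h_j$. The cleanest way to control this sum together with all constants is a differential identity: differentiate $\log H_n$ with respect to a parameter that turns the Fisher--Hartwig data on (a common scaling of the $\alpha_j,\beta_j$, or of $V$), express the derivative in terms of the steepest-descent asymptotics near the singular points, integrate back, and evaluate the known endpoint --- a pure Jacobi weight whose Hankel determinant asymptotics are classical. The asymptotics of $\prod_{j=0}^{n-1}\Gamma(j+a)$ and of the Barnes $G$-function then supply the powers of $n$ and the $G$-function quotients, and feeding everything through the Step-1 prefactor yields the displayed formula, with the change of variables producing the $SO(2n)$-type normalisations $q$, $s'$, $t'$ listed at the end of the statement.

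The main obstacle will be the exact evaluation of the multiplicative constant. The power of $n$ and the dependence on $|z_j - z_k|$, $|z_j - z_k^{-1}|$, $z_j^{\beta_j}$ and $b_\pm(z_j)$ come out fairly mechanically from the global and interior parametrices; the delicate part is pinning down the numerical prefactors $2^{(\cdots)}$, $\pi^{(\cdots)}G(1/2)^2$ and the edge $G$-functions $G(1+\alpha_0+s')$, $G(1+\alpha_{r+1}+t')$. This requires (a) an exact matching of the Bessel parametrices at $\pm 1$ with the global parametrix, retaining every factor of $2$ and $\pi$; (b) correct simultaneous bookkeeping of the Szeg\H{o} function at the hard edges and at the interior singularities --- in particular the doubling that occurs because the symbol (\ref{eqn:FH even}) carries the conjugate pair $e^{\pm i\theta_j}$, which at the fixed points $\theta_0 = 0$ and $\theta_{r+1} = \pi$ coalesces, so that the effective edge exponents are $\alpha_0+s'$ and $\alpha_{r+1}+t'$; and (c) the correct constant of integration in the differential identity. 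Maintaining uniformity in the $\theta_j$ is, by contrast, comparatively routine here: the confluent hypergeometric parametrices live in fixed disjoint discs and the small-norm estimate holds uniformly as soon as $\min_{j\neq k}|z_j - z_k| \ge \epsilon$, so the genuinely hard uniformity --- the merging regime --- is not what this theorem demands.
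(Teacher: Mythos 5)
Your outline is viable, but it is a genuinely different route from the one behind this theorem as used in the paper. The statement is quoted from Deift--Its--Krasovsky, and the proof route the paper relies on (and later extends) is the one recorded in Lemma \ref{lemma:connection between T and T+H}: each $D_n^{T+H,\kappa}(f)^2$ is written as an explicit prefactor $\frac{(1+\Phi_{2n}(0))^2}{\Phi_{2n}(1)\Phi_{2n}(-1)}$ times a Toeplitz determinant $D_{2n}$ of $f$ multiplied by $|1-z|^2$ and/or $|1+z|^2$, so that the asymptotics follow from the already-known Toeplitz expansion (Theorem \ref{thm:T non-uniform} / the DIK Toeplitz result) plus a circle Riemann--Hilbert analysis whose only new output is $\Phi_{2n}(0)$ and $\Phi_{2n}(\pm 1)$; this is also exactly the mechanism the thesis reuses in Section \ref{section:T+H} for the merging regime. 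You instead fold the evenness into the substitution $x=\cos\theta$, pass to a Hankel determinant on $[-1,1]$ with a generalized Jacobi weight carrying interior root-and-jump singularities and edge exponents $\alpha_0+s'$, $\alpha_{r+1}+t'$, and run the interval RH analysis with Bessel parametrices at the hard edges, confluent hypergeometric parametrices at the $\cos\theta_j$, and a differential identity integrated from a pure Jacobi weight. That is a legitimate alternative -- it is essentially the Hankel-determinant route of the literature cited in Section \ref{section:FH} (\cite{Its2008,Krasovsky2007,Charlier2019}) -- and your identification of where the difficulty sits (the exact matching at the hard edges, the doubled edge exponents coming from the coalescing conjugate pair at $\pm 1$, and the integration constant) is accurate. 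What the paper's route buys is that almost all constants are inherited for free from the Toeplitz side, with only three scalar quantities ($\Phi_{2n}(0),\Phi_{2n}(\pm1)$) needing new asymptotics; what your route buys is self-containedness on the interval, at the cost of re-deriving every factor of $2$, $\pi$, $G(1/2)^2$ and the edge Barnes functions through the Bessel analysis and the endpoint evaluation, which is precisely the bookkeeping you flag as the main obstacle.
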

\begin{remark}
The asymptotics in Theorem \ref{thm:T uniform} and Theorem \ref{thm:T+H uniform} were not stated in \cite{Deift2011} to hold uniformly when the singularities are bounded away from each other, but looking at their proofs one can quickly see that this is the case.
\end{remark}
Theorem 2.6 and Lemma 2.7 in \cite{Deift2011} relate Toeplitz+Hankel determinants to Toeplitz determinants and monic orthogonal polynomials, which is how Theorem \ref{thm:T+H non-uniform} is proved. This relation can be stated as follows:
\begin{lemma}[Deift, Its, Krasovsky] \label{lemma:connection between T and T+H}
Let $f$ be as in (\ref{eqn:FH even}). Then for all $n \in \mathbb{N}$:
\begin{align}
\begin{split}
D_n^{T+H,1}(f(z))^2 =& 4 \frac{\left( 1 + \Phi_{2n}(0) \right)^2}{\Phi_{2n}(1) \Phi_{2n}(-1)} D_{2n}(f(z)),\\
D_n^{T+H,2}(f(z))^2 =& \frac{\left( 1 + \Phi_{2n}(0) \right)^2}{\Phi_{2n}(1) \Phi_{2n}(-1)} D_{2n}\left( f(z) |1 - z|^2 |1 + z|^2 \right),\\
D_n^{T+H,3}(f(z))^2 =& \frac{1}{4} \frac{\left( 1 + \Phi_{2n}(0) \right)^2}{\Phi_{2n}(1) \Phi_{2n}(-1)} D_{2n}\left( f(z) |1 - z|^2 \right),\\
D_n^{T+H,4}(f(z))^2 =& \frac{1}{4}  \frac{\left( 1 + \Phi_{2n}(0) \right)^2}{\Phi_{2n}(1) \Phi_{2n}(-1)} D_{2n}\left( f(z) |1 + z|^2 \right),
\end{split}
\end{align}
where $\Phi_{n}(z) = z^n + ...$ are the monic orthogonal polynomials w.r.t. the symbols on the RHS, i.e. $f(z)$, $f(z)|1-z|^2 |1+z|^2$, $f(z)|1-z|^2$ and $f(z) |1+z|^2$.
\end{lemma}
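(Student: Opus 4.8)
The plan is to prove all four identities by the same two-stage mechanism: first reduce a Toeplitz determinant of an \emph{even} symbol to a product of two Toeplitz+Hankel determinants by linear algebra, and then account for the passage from such a product to the \emph{square} of a single Toeplitz+Hankel determinant through the values of the orthogonal polynomials at $0$ and $\pm 1$. As a preliminary I would record the structural facts that make everything well defined. A symbol $f$ of the form (\ref{eqn:FH even}) is positive almost everywhere on $S^1$; hence so are $f(z)$, $f(z)|1-z|^2|1+z|^2$, $f(z)|1-z|^2$ and $f(z)|1+z|^2$, so every Toeplitz and Toeplitz+Hankel determinant occurring below is strictly positive (all square roots are of positive quantities) and for each such symbol $g$ the monic orthogonal polynomials $\Phi_m$ exist, with positive $L^2(g\,\tfrac{d\theta}{2\pi})$-norms $\chi_m^{-1}$ satisfying $\chi_m^{-2}=D_m(g)/D_{m+1}(g)$. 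Moreover each such $g$ is invariant under $z\mapsto 1/z$ (i.e.\ $\theta\mapsto-\theta$); for a symbol with this reflection symmetry the $\Phi_m$ have real Taylor coefficients and are self-reciprocal, $\Phi_m^*(z):=z^m\overline{\Phi_m(1/\bar z)}=z^m\Phi_m(1/z)$, so the Verblunsky coefficients $-\Phi_{m+1}(0)$ and the boundary values $\Phi_m(\pm1)$ are real and the prefactor in each identity makes sense.

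The first stage is an exact finite-$n$ factorization. For even $g$ the $2n\times2n$ Toeplitz matrix $(g_{j-k})_{j,k=0}^{2n-1}$ (using $g_{-m}=g_m$) is centrosymmetric; viewing it as a $2\times2$ array of $n\times n$ blocks, one conjugates by the orthogonal involution built from the $n\times n$ exchange matrix $J$ to block-diagonalize it, and after reversing the rows and columns in each diagonal block (a pair of permutations whose signs cancel) the two diagonal blocks become $(g_{j-k}+g_{j+k+1})_{j,k=0}^{n-1}$ and $(g_{j-k}-g_{j+k+1})_{j,k=0}^{n-1}$; positivity fixes the overall sign, yielding
\[
D_{2n}(g)\;=\;D_n^{T+H,3}(g)\,D_n^{T+H,4}(g),
\]
which one checks directly for small $n$. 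Carrying out the analogous manipulation on the bordered (odd-size) Toeplitz matrices of $g$ — where the insertion of a row and column tied to the deterministic eigenvalues at $\pm1$ produces the symbol modifications $|1\pm z|^2$ and the offsets $0$ and $2$ appearing in $D_n^{T+H,1}$, $D_n^{T+H,2}$ — gives the companion factorizations pairing $D_n^{T+H,1}$ with $D_n^{T+H,2}$. Equivalently this stage can be phrased through de~Bruijn's evaluation of the Weyl integrals for $SO$ and $Sp$ as Pfaffians together with Andr\'eief's identity for $D_{2n}$, using Theorem \ref{thm:Baik2001} and Weyl's integration formulas.

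The second stage turns a product of two \emph{different} Toeplitz+Hankel determinants into the square of one. Here I would derive ``shift'' formulas for the effect on $D_n^{T+H,\kappa}$ of multiplying the symbol by $|1-z|^2=(1-z)(1-z^{-1})$ or by $|1+z|^2$: such a multiplication changes the determinant by a factor expressible through the Christoffel--Darboux kernel of the orthogonal polynomials evaluated at $z=1$ (resp.\ $z=-1$), which, by the self-reciprocity from the preliminary step, collapses to $\Phi_{2n}(1)$ (resp.\ $\Phi_{2n}(-1)$) up to the norm $\chi_{2n}^{-2}$; likewise the change of family within a factorization contributes the factor $1+\Phi_{2n}(0)$. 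Feeding the factorizations of the first stage for the symbols $f|1\mp z|^2$ (and their bordered analogues for $\kappa=1,2$) into these shift formulas and then squaring isolates $D_n^{T+H,\kappa}(f)^2$ with exactly the stated prefactor $\frac{(1+\Phi_{2n}(0))^2}{\Phi_{2n}(1)\Phi_{2n}(-1)}$, the surviving Toeplitz determinant being of the symbol listed in the statement ($f$ for $\kappa=1$, $f|1-z|^2|1+z|^2$ for $\kappa=2$, $f|1-z|^2$ for $\kappa=3$, $f|1+z|^2$ for $\kappa=4$), and the constants $4,1,\tfrac14,\tfrac14$ being pinned down by the $\tfrac12$'s in the Weyl/Baik normalizations and by evaluation on the trivial symbol.

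The main obstacle is precisely this second stage. The centrosymmetric factorization is routine linear algebra, but it only writes $D_{2n}$ of an even symbol as a product of two \emph{distinct} Toeplitz+Hankel determinants; converting that into the square of a single one forces one to control exactly how the Toeplitz+Hankel determinants respond to the Jacobi-type insertions $|1\pm z|^2$ and to verify that all the resulting Christoffel--Darboux expressions telescope — thanks to the reflection symmetry — into the single compact ratio $(1+\Phi_{2n}(0))^2/(\Phi_{2n}(1)\Phi_{2n}(-1))$ with no leftover terms. The two points most prone to error are keeping track of \emph{which} symbol's orthogonal polynomial appears (the statement uses $\Phi_{2n}$ of the right-hand symbol, not of $f$) and getting every power of $2$ and every permutation sign right.
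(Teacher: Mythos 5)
There is a genuine gap, and it sits exactly where you flag the ``main obstacle''. First, a remark on the comparison: the thesis does not prove this lemma at all --- it is quoted from Deift--Its--Krasovsky \cite{Deift2011} (their Theorem 2.6 and Lemma 2.7), whose argument passes through the substitution $x=\cos\theta$, identifying each $D_n^{T+H,\kappa}$ with a Hankel determinant on $[-1,1]$ for one of the four Jacobi-type modifications of the even weight, and then uses the Szeg\H{o} correspondence and the recursion for the orthogonal polynomials on the circle to express these Hankel determinants through $D_{2n}$ of the appropriate symbol together with $\Phi_{2n}(0)$ and $\Phi_{2n}(\pm1)$. Your stage 1 is only partly aligned with this: the even-size centrosymmetric splitting $D_{2n}(g)=D_n^{T+H,3}(g)\,D_n^{T+H,4}(g)$ is correct classical linear algebra, but the claim that the odd-size ``bordered'' analogue ``produces the symbol modifications $|1\pm z|^2$ and the offsets $0$ and $2$'' is not: the $(2n+1)\times(2n+1)$ splitting pairs $D_{n+1}^{T+H,1}(g)$ with $D_n^{T+H,2}(g)$ for the \emph{same} symbol $g$, and no Jacobi factors $|1\pm z|^2$ arise from that manipulation. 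Those factors enter through the change of weight under $x=\cos\theta$, not through bordering.

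Second, and decisively, your stage 2 is asserted rather than proved, and as described it would not go through. Multiplying the symbol by $|1\pm z|^2$ is a Christoffel-type modification, and what such formulas produce is the reproducing (Christoffel--Darboux) kernel at the inserted point, i.e.\ a full sum $K_{2n}(\pm1,\pm1)=\sum_{j\le 2n}|\phi_j(\pm1)|^2$; you give no reason why this should ``collapse'' to the single value $\Phi_{2n}(\pm1)$, and in general it does not. Likewise ``the change of family within a factorization contributes the factor $1+\Phi_{2n}(0)$'' is precisely the nontrivial content of the lemma, not a step one may invoke. In the actual argument these factors come from a different mechanism: for a symbol as in (\ref{eqn:FH even}) the Verblunsky coefficients are real, the Szeg\H{o} recursion evaluated at $z=\pm1$ and at $z=0$ turns $\Phi_{2n}(\pm1)$ and $1+\Phi_{2n}(0)$ into telescoping products of $(1\pm\alpha_j)$-type quantities, and it is these product identities (together with the bookkeeping that specifies \emph{which} symbol's polynomials occur, namely those of the right-hand symbol) that yield the compact ratio $(1+\Phi_{2n}(0))^2/\bigl(\Phi_{2n}(1)\Phi_{2n}(-1)\bigr)$ and the constants $4,1,\tfrac14,\tfrac14$. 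Without carrying out this part --- or an equivalent derivation of ``shift'' formulas for $D_n^{T+H,\kappa}$ under multiplication by $|1\pm z|^2$ --- the proposal does not establish the four identities.
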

We will use Lemma \ref{lemma:connection between T and T+H} in Section \ref{section:T+H} to prove our results on the asymptotics of Toeplitz+Hankel determinants with merging singularities, stated in Theorems \ref{thm:T+H uniform} and \ref{thm:T, T+H extended} below, using our results on the asymptotics of Toeplitz determinants with merging singularities, stated in Theorems \ref{thm:T uniform} and \ref{thm:T, T+H extended} below. In those theorems we consider the following class of symbols: let $\epsilon \in (0,\pi/2)$ and define
\begin{align} \label{eqn:f}
f_{p,t}(z) :=& e^{V(z)} z^{\sum_{j=0}^{5} \beta_j} \prod_{j=0}^5 |z - z_j|^{2\alpha_j} g_{z_j,\beta_j}(z) z_j^{-\beta_j}, \quad z = e^{i\theta}, \quad \theta \in [0,2\pi), \\
=& e^{V(z)} |z - 1|^{2\alpha_0} |z+1|^{2\alpha_3} \prod_{j=1}^2 |z - z_j|^{2\alpha_j} |z - \overline{z_j}|^{2\alpha_j} g_{z_j,\beta_j}(z) g_{\overline{z_j}, - \beta_j}(z) z_j^{-\beta_j} \overline{z_j}^{\beta_j}, \nonumber 
\end{align}
where
\begin{itemize}
\item
$z_0 = 1$, $z_1 = e^{i(p-t)}$, $z_2 = e^{i(p+t)}$, $z_3 = -1$, $z_4 = \overline{z_2} = e^{i(2\pi-p-t)}$, $z_5 = \overline{z_1} = e^{i(2\pi - p +t)}$, with $p \in (\epsilon, \pi - \epsilon)$, $0 < t < \epsilon$, 
\item $\alpha_j \in (-1/2,\infty)$ for $j = 0,...,5$, and $\alpha_1 = \alpha_5$, $\alpha_2 = \alpha_4$, 
\item $\beta_0 = \beta_3 = 0$, $\beta_1 = - \beta_5 \in i\mathbb{R}$, $\beta_2 = - \beta_4 \in i\mathbb{R}$,
\item $V(z)$ is real-valued on the unit circle, and satisfies $V(e^{i\theta}) = V(e^{-i\theta})$. 
\end{itemize}
To state our results on the uniform asymptotics of $D_n(f_{p,t})$ and $D_n^{T+H,\kappa}(f_{p,t})$ we further need the following theorem (not in its most general form) from \cite{Claeys2015}, which describes the relevant Painlev\'e transcendents: 

\begin{theorem} (Claeys, Krasovsky) \label{thm:painleve} 
Let $\alpha_1, \alpha_2, \alpha_1 + \alpha_2 > -\frac{1}{2}$, $\beta_1,\beta_2 \in i\mathbb{R}$ and consider the $\sigma$-form of the Painleve V equation
\begin{equation} \label{eqn:painleve}
s^2 \sigma^2_{ss} = (\sigma - s \sigma_s + 2 \sigma_s^2)^2 - 4(\sigma_s - \theta_1)(\sigma_s - \theta_2)(\sigma_s - \theta_3)(\sigma_s - \theta_4),
\end{equation}
where the parameters $\theta_1,\theta_2,\theta_3,\theta_4$ are given by 
\begin{align}
\begin{split}
\theta_1 =& - \alpha_2 + \frac{\beta_1 + \beta_2}{2}, \quad \theta_2 = \alpha_2 + \frac{\beta_1 + \beta_2}{2}, \\
\theta_3 =& \alpha_1 - \frac{\beta_1 - \beta_2}{2}, \quad \theta_4 = - \alpha_1 - \frac{\beta_1 + \beta_2}{2}.
\end{split}
\end{align} 
Then there exists a solution $\sigma(s)$ to (\ref{eqn:painleve}) which is real and free of poles for $s \in -i\mathbb{R}_+$, and which has the following asymptotic behavior along the negative imaginary axis:
\begin{align}
\begin{split}
\sigma(s) =& 2\alpha_1\alpha_2 - \frac{(\beta_1 + \beta_2)^2}{2} + \mathcal{O}(|s|^\delta), \quad s \rightarrow -i0_+,\\
\sigma(s) =& \frac{\beta_1 - \beta_2}{2} s - \frac{(\beta_1 - \beta_2)^2}{2} + \mathcal{O}(|s|^{-\delta}), \quad s \rightarrow -i\infty,
\end{split}
\end{align}
for some $\delta > 0$.
\end{theorem}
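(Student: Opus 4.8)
The plan is to realize $\sigma(s)$ through an isomonodromic Riemann--Hilbert problem. Introduce a $2\times 2$ matrix-valued function $\Psi(\lambda)=\Psi(\lambda;s)$, analytic off a fixed configuration of rays emanating from $\lambda=0$, with piecewise-constant jump matrices (Stokes and connection matrices) built out of $\alpha_1,\alpha_2,\beta_1,\beta_2$, with prescribed local behavior of the form $\lambda^{\pm\alpha_1}$, $\lambda^{\pm\alpha_2}$ near two singular points (say $\lambda=0$ and $\lambda=1$), and with normalization $\Psi(\lambda)=\bigl(I+\mathcal{O}(1/\lambda)\bigr)\lambda^{-(\beta_1+\beta_2)\sigma_3/2}\,e^{(s\lambda/2+\cdots)\sigma_3}$ as $\lambda\to\infty$. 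This is exactly the model problem governing the confluent-hypergeometric parametrix in the analysis of Toeplitz determinants with two merging singularities, and the deformation parameter $s$ enters only through the exponential factor at infinity.

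First I would extract the Lax pair: differentiating the $\Psi$-RHP in $\lambda$ and in $s$ yields a compatible linear system whose compatibility condition is a matrix form of Painlev\'e V. One then defines the $\sigma$-function as a suitable coefficient in the expansion of $\Psi$ at $\lambda=\infty$ (equivalently, as a Hamiltonian $s\,\partial_s\log\tau$ up to an explicit additive constant); a direct computation shows this $\sigma$ satisfies \eqref{eqn:painleve} with precisely the stated $\theta_1,\dots,\theta_4$ at every $s$ for which the RHP is solvable, and that it is real on $-i\mathbb{R}_+$ by the symmetry described next.

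The crucial existence step is to show that the $\Psi$-RHP is solvable with no poles for all $s\in -i\mathbb{R}_+$. Under the standing assumptions $\alpha_j\in\mathbb{R}$, $\beta_j\in i\mathbb{R}$ the contour and jump matrices possess a Schwarz-type reflection symmetry, and for $s$ on the negative imaginary axis the factor $e^{s\lambda/2}$ is compatible with it. A vanishing-lemma argument in the spirit of Zhou --- the homogeneous RHP has only the trivial solution because the associated singular integral operator has trivial $L^2$-kernel, which follows from a positivity/Schwarz-reflection computation --- then yields existence and uniqueness of $\Psi$, hence a pole-free real $\sigma$ on $-i\mathbb{R}_+$. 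I expect this to be the main obstacle: one must arrange the ray configuration, the endpoint conditions at the two singularities, and the placement of $\beta_1,\beta_2$ carefully enough that the quadratic form underlying the vanishing lemma is genuinely sign-definite, with the hypothesis $\alpha_1+\alpha_2>-\tfrac12$ entering to keep the singular behavior integrable.

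Finally I would obtain the two asymptotic regimes by Deift--Zhou steepest descent for the $\lambda$-RHP. As $s\to -i\infty$, rescale $\lambda$, open lenses, and install local parametrices at the two singularities made from confluent hypergeometric functions; matching against the outer parametrix and reading off the $\lambda=\infty$ coefficient gives $\sigma(s)=\tfrac{\beta_1-\beta_2}{2}s-\tfrac{(\beta_1-\beta_2)^2}{2}+\mathcal{O}(|s|^{-\delta})$. As $s\to -i0_+$ the exponential factor becomes subdominant and the $\Psi$-RHP degenerates to the model problem with the two singularities merged into a single one of parameters $\alpha_1+\alpha_2$ and $\beta_1+\beta_2$; its $\lambda=\infty$ expansion produces the constant $2\alpha_1\alpha_2-\tfrac{(\beta_1+\beta_2)^2}{2}$, with the $\mathcal{O}(|s|^\delta)$ error coming from the first correction to the merger. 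Combining the isomonodromy identity, the vanishing lemma, and these two steepest-descent computations establishes the theorem.
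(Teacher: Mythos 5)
Your outline reproduces, up to an affine change of variable that places the two singular points at $0$ and $1$ rather than at $\pm i$, precisely the strategy of Claeys and Krasovsky to which the thesis appeals for this theorem (the thesis itself gives no proof, only restating the model Riemann--Hilbert problem for $\Psi$ in Appendix \ref{appendix:Psi} and citing Section 3 of \cite{Claeys2015} for solvability and the asymptotics): a model RHP with the deformation parameter $s$ entering through the exponential factor at infinity, a Lax-pair/isomonodromy identification of $\sigma$ satisfying the $\sigma$-Painlev\'e V equation with the stated $\theta_j$, a vanishing-lemma argument exploiting $\alpha_j\in\mathbb{R}$, $\beta_j\in i\mathbb{R}$ to get pole-free solvability on $-i\mathbb{R}_+$, and Deift--Zhou steepest descent with confluent hypergeometric local parametrices in the regimes $s\to -i\infty$ and $s\to -i0_+$. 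So your proposal is correct and takes essentially the same approach as the cited source.
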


Our result on the uniform asymptotics of $D_n(f_{p,t})$ is then the following:
\begin{theorem} \label{thm:T uniform}
Let $f_{p,t}$ be as in (\ref{eqn:f}) with $\alpha_1 + \alpha_2 > -1/2$, and let $\sigma$ satisfy the conditions of Theorem \ref{thm:painleve}. Then we have the following large $n$ asymptotics, uniformly for $p \in (\epsilon, \pi - \epsilon)$ and $0 < t < t_0$, for a sufficiently small $t_0 \in (0,\epsilon)$:
%\begin{small}
\begin{align}
\log D_n(f_{p,t}) =& 2int(\beta_1 - \beta_2) + n V_0 + \sum_{k = 1}^\infty kV_k^2 + \log (n) \sum_{j = 0}^5 (\alpha_j^2 - \beta_j^2) \nonumber \\
&- \sum_{j=0}^5 (\alpha_j - \beta_j) \left(\sum_{k = 1}^\infty V_k z_j^k\right) + (\alpha_j + \beta_j) \left( \sum_{k = 1}^\infty V_k \overline{z_j}^k \right) \nonumber \\
&+ \sum_{\substack{0 \leq j < k \leq 5 \\ (j,k) \neq (1,2),(4,5)}} 2(\beta_j\beta_k - \alpha_j\alpha_k) \log |z_j - z_k| + (\alpha_j\beta_k - \alpha_k \beta_j) \log \frac{z_k}{z_j e^{i\pi}} \nonumber \\
&+4it(\alpha_1\beta_2 - \alpha_2\beta_1) \nonumber \\
&+ 2\int_0^{-2int} \frac{1}{s} \left( \sigma(s) - 2\alpha_1\alpha_2 + \frac{1}{2}(\beta_1+\beta_2)^2 \right) \text{d}s + 4\left( \beta_1 \beta_2 - \alpha_1\alpha_2 \right) \log \frac{\sin t}{nt} \nonumber \\ 
&+ \log \frac{G(1+\alpha_0)^2}{G(1+2\alpha_0)} + \log \frac{G(1+\alpha_3)^2}{G(1+2\alpha_3)} \nonumber \\
&+ \log \frac{G(1+\alpha_1+\alpha_2+\beta_1+\beta_2)^2 G(1+\alpha_1+\alpha_2-\beta_1-\beta_2)^2}{G(1+2\alpha_1+2\alpha_2)^2} \\
&+ o(1), \nonumber 
\end{align}
%\end{small}
where $\log \frac{z_k}{z_j e^{i\pi}} = i(\theta_k - \theta_j - \pi)$. 
\end{theorem}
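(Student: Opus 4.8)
The plan is to run the Deift--Zhou steepest descent analysis of the Riemann--Hilbert problem for the orthogonal polynomials associated to the symbol $f_{p,t}$, in the spirit of \cite{Claeys2015} and \cite{Deift2011}, combined with a differential identity in the merging parameter $t$. By the Heine-Szeg\"{o} identity (Theorem \ref{thm:Heine-Szego}) and positivity of the symbol a.e., $D_n(f_{p,t})>0$, so the monic orthogonal polynomials and the matrix $Y=Y(z;n)$ of Theorem \ref{thm:BaikDeiftJohansson} are well defined. Keeping $p$ fixed, one differentiates $\log D_n(f_{p,t})$ in $t$ and obtains, as in \cite{Claeys2015}, an identity expressing $\partial_t\log D_n(f_{p,t})$ in terms of the local data of $Y$ at the four moving singularities $z_1,z_2,z_4,z_5$; since $z_1=e^{i(p-t)}$, $z_2=e^{i(p+t)}$ sit symmetrically about $e^{ip}$ and $z_4,z_5$ about $e^{-ip}$, and since $D_n(f)$ is invariant under $f(z)\mapsto f(1/z)$ while $f_{p,t}(e^{i\theta})=f_{p,t}(e^{-i\theta})$, the contribution near $e^{-ip}$ duplicates the one near $e^{ip}$, which is the origin of the doubled terms in the final formula.

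Next I would carry out the steepest descent. Normalising $Y$ at infinity with the trivial $g$-function $z^n$ (the relevant equilibrium measure is the uniform measure on the circle, so no nontrivial $g$-function analysis is needed) and opening lenses along the unit circle produces a problem whose jumps are exponentially close to the identity away from the six singularities. One then assembles: a global (outer) parametrix built from the Szeg\"{o} function of $e^{V}\prod_j|z-z_j|^{2\alpha_j}g_{z_j,\beta_j}$; confluent hypergeometric local parametrices at the \emph{fixed} singularities $z_0=1$ and $z_3=-1$ exactly as in \cite{Deift2011}; and, at the \emph{merging} points $e^{\pm ip}$, local parametrices built from the Claeys--Krasovsky model Riemann--Hilbert problem, whose link to the $\sigma$-form of Painlev\'e V is recorded in Theorem \ref{thm:painleve}, the total exponent $\alpha_1+\alpha_2>-1/2$ being exactly the condition needed for solvability. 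Because $p\in(\epsilon,\pi-\epsilon)$, the fixed-radius disks around $e^{\pm ip}$ stay a positive distance from $\pm1$ and from each other, so all parametrices patch together; matching on the disk boundaries yields a small-norm problem for the error matrix $R$, and, using that the Painlev\'e transcendent of Theorem \ref{thm:painleve} is pole-free on $-i\mathbb{R}_+$ so that the model solution is uniformly bounded for $0<t<t_0$, one gets $R=I+\mathcal{O}(1/n)$ uniformly in $p$ and $t$.

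Substituting the steepest-descent output into the differential identity, $\partial_t\log D_n(f_{p,t})$ splits into an ``explicit'' part coming from the outer parametrix and the parametrices at $\pm1$, which integrates to the Szeg\"{o}/Ehrhardt-type terms (the coefficient $\sum_j(\alpha_j^2-\beta_j^2)$ of $\log n$, the $V$-linear sums, the $\log|z_j-z_k|$ and $\log\frac{z_k}{z_je^{i\pi}}$ double sum excluding the merging pairs, the cross terms with $\pm1$, and the Barnes factors $G(1+\alpha_0)^2/G(1+2\alpha_0)$ and $G(1+\alpha_3)^2/G(1+2\alpha_3)$), together with the contributions $2int(\beta_1-\beta_2)$ and $4it(\alpha_1\beta_2-\alpha_2\beta_1)$ produced by the $z^{\sum\beta_j}$ prefactor and the piecewise-constant jumps $g_{z_j,\beta_j}$; and a ``Painlev\'e'' part coming from the model parametrix at $e^{ip}$, doubled by $e^{-ip}$, which by the defining properties of the model problem equals $\frac{d}{dt}$ of $2\int_0^{-2int}\frac{1}{s}\big(\sigma(s)-2\alpha_1\alpha_2+\tfrac12(\beta_1+\beta_2)^2\big)\,ds+4(\beta_1\beta_2-\alpha_1\alpha_2)\log\frac{\sin t}{nt}$ up to $o(1/n)$. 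Integrating in $t$ and fixing the integration constant by letting $t\to t_0$ and comparing with the non-uniform asymptotics of Theorem \ref{thm:T non-uniform} (using the $s\to-i\infty$ behaviour of $\sigma$ from Theorem \ref{thm:painleve}) identifies the remaining constant as $G(1+\alpha_1+\alpha_2+\beta_1+\beta_2)^2G(1+\alpha_1+\alpha_2-\beta_1-\beta_2)^2/G(1+2\alpha_1+2\alpha_2)^2$, which completes the formula.

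The main obstacle is the construction and uniform analysis of the local parametrix at the merging pair $z_1,z_2$ so that it matches the outer parametrix \emph{uniformly} over the whole range $0<t<t_0$: it must interpolate between the regime $nt\to\infty$, where the two singularities effectively separate and the Ehrhardt asymptotics of Theorem \ref{thm:T non-uniform} must be recovered term by term, and the regime $nt=\mathcal{O}(1)$, where they genuinely coalesce and the Painlev\'e V transcendent is indispensable. This rests on the precise small- and large-argument asymptotics of the Claeys--Krasovsky model solution, equivalently of $\sigma$, and on its uniform boundedness, which is precisely the content of Theorem \ref{thm:painleve}. A secondary, purely bookkeeping difficulty is tracking, through the Szeg\"{o} function, the interaction terms between the moving pair at $e^{\pm ip}$ and the fixed singularities at $\pm1$, and correctly accounting for the conjugate-symmetry doubling so that every squared Barnes factor and every factor of $2$ comes out right.
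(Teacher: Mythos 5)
Your overall architecture coincides with the paper's: the Heine--Szeg\H{o} identity and the $Y$-RHP of Theorem \ref{thm:BaikDeiftJohansson}, a differential identity in the merging parameter $t$, Deift--Zhou steepest descent with the Szeg\H{o}-function outer parametrix, the Deift--Its--Krasovsky parametrices at $\pm 1$, the Claeys--Krasovsky $\Psi$-model at $e^{\pm ip}$ (with the parameter swap $(\alpha_2,\alpha_1,-\beta_2,-\beta_1)$ at $e^{-ip}$ producing the same $\sigma$-equation), a uniform small-norm estimate, and the two-regime matching across $nt=\mathcal{O}(1)$ versus $nt\to\infty$ that the paper implements by splitting at $t=\omega(n)/n$ and switching to confluent hypergeometric parametrices at the four points individually. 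Up to that point your proposal is the paper's proof.

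The genuine gap is in how you fix the constant of integration. You propose to integrate the differential identity and anchor at $t\to t_0$, comparing with the separated-singularity asymptotics of Theorem \ref{thm:T non-uniform} and ``using the $s\to -i\infty$ behaviour of $\sigma$''. But at $t=t_0$ Ehrhardt's formula produces the \emph{separated} Barnes factors $\prod_{j=1,2}G(1+\alpha_j+\beta_j)G(1+\alpha_j-\beta_j)/G(1+2\alpha_j)$, while the uniform formula evaluated at $t=t_0$ contains $\int_0^{-2int_0}\frac{1}{s}\bigl(\sigma(s)-2\alpha_1\alpha_2+\tfrac12(\beta_1+\beta_2)^2\bigr)\,\mathrm{d}s$, whose behaviour as $n\to\infty$ is a linear-plus-logarithmic divergence \emph{plus a finite constant} that is not determined by the leading asymptotics of $\sigma$ stated in Theorem \ref{thm:painleve}; that constant is the Painlev\'e V connection constant, and identifying it with the ratio of merged Barnes factors $G(1+\alpha_1+\alpha_2\pm(\beta_1+\beta_2))$ is exactly the nontrivial identity behind relation (\ref{eqn:relation}), proved in \cite{Claeys2015} (their (1.26)) and not derivable from the $O(|s|^{\pm\delta})$ expansions alone. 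So as written, the step ``identifies the remaining constant'' does not go through without importing that additional input. The paper avoids this entirely by anchoring the integration at $t=0$: there the symbol $f_{p,0}$ has merged singularities at $e^{\pm ip}$ with exponents $\alpha_1+\alpha_2$ and $\beta_1+\beta_2$ (this is where the hypothesis $\alpha_1+\alpha_2>-1/2$ is used), Theorem \ref{thm:T non-uniform} applied to $D_n(f_{p,0})$ yields the merged Barnes factors directly, and only the small-$s$ behaviour of $\sigma$ is needed to guarantee convergence of the Painlev\'e integral at its lower endpoint. Either adopt that anchoring, or explicitly invoke the Claeys--Krasovsky evaluation of the total integral; one of the two is required to close the argument.
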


Our result on the uniform asymptotics of $D_n^{T+H,\kappa}(f_{p,t})$, $\kappa = 1,...,4$, is the following, where we use the same notation as in Theorem \ref{thm:T+H non-uniform}:

\newpage
\begin{theorem}  \label{thm:T+H uniform}
Let $f_{p,t}$ be as in (\ref{eqn:f}) with $\alpha_1 + \alpha_2 > -1/2$, and let $\sigma$ satisfy the conditions of Theorem \ref{thm:painleve}. For $D_n^{T+H,\kappa}(f_{p,t})$ we get, as $n \rightarrow \infty$, uniformly in $p \in (\epsilon, \pi - \epsilon)$ and $0 < t < t_0$, for a sufficiently small $t_0 \in (0,\epsilon)$:
\begin{align} \label{eqn:T+H uniform}
D_n^{T+H,\kappa}(f_{p,t}) =& e^{2int(\beta_1 - \beta_2) + nV_0 + \frac{1}{2}\left( (\alpha_0 + \alpha_{3} + s'+ t' )V_0 - (\alpha_0+s')V(1) - (\alpha_{3}+t')V(-1) + \sum_{k=1}^\infty k V_k^2 \right)} \nonumber \\
&\times \prod_{j=1}^2 b_+(z_j)^{-\alpha_j+\beta_j} b_- (z_j)^{-\alpha_j - \beta_j} \nonumber e^{-i \pi \left( \alpha_0 + s' + \sum_{j=1}^2 \alpha_j \right) \sum_{j=1}^2 \beta_j } \nonumber \\
&\times 2^{(1-s'-t')n+q+\sum_{j=1}^2 (\alpha_j^2-\beta_j^2)- \frac{1}{2}(\alpha_0+\alpha_{3}+s'+t')^2 + \frac{1}{2}(\alpha_0+\alpha_{3}+s'+t')} \nonumber \\
&\times n^{\frac{1}{2}(\alpha_0^2+\alpha_{3}^2) + \alpha_0 s' + \alpha_{3} t'+ \sum_{j=1}^2 (\alpha_j^2 - \beta_j^2)} \\
&\times \left| \frac{\sin t}{2nt} \right|^{-2(\alpha_1 \alpha_2 - \beta_1 \beta_2)} |2\sin p|^{-2(\alpha_1 \alpha_2 + \beta_1 \beta_2)} e^{ \int_0^{-4int} \frac{1}{s} \left( \sigma(s) - 2\alpha_1\alpha_2 + \frac{1}{2}(\beta_1+\beta_2)^2 \right) \text{d}s } \nonumber \\
&\times \prod_{j=1}^2 z_j^{2\tilde{A}\beta_j} |1-z_j^2|^{-(\alpha_j^2 + \beta_j^2)} |1-z_j|^{-2\alpha_j(\alpha_0+s')} |1+z_j|^{-2\alpha_j(\alpha_{3}+t')} \nonumber \\
&\times \frac{\pi^{\frac{1}{2}(\alpha_0+\alpha_{3}+s'+t'+1)} G(1/2)^2}{G(1+\alpha_0+s') G(1 + \alpha_{3} + t')} \nonumber \\
&\times \frac{G(1 + \alpha_1 + \alpha_2 + \beta_1 + \beta_2) G(1 + \alpha_1 + \alpha_2 - \beta_1 - \beta_2)}{G(1 + 2 \alpha_1 + 2\alpha_2 )} (1+ o(1)). \nonumber 
\end{align} 
\end{theorem}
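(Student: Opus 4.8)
The plan is to reduce Theorem \ref{thm:T+H uniform} to Theorem \ref{thm:T uniform} by means of the algebraic identities of Lemma \ref{lemma:connection between T and T+H}, feeding in, in addition, the uniform large-$n$ asymptotics of the monic orthogonal polynomials $\Phi_{2n}$ at the three distinguished points $0,1,-1$, which fall out of the same Riemann--Hilbert steepest-descent analysis that underlies Theorem \ref{thm:T uniform}. The first observation is that each of the four symbols on the right-hand sides of Lemma \ref{lemma:connection between T and T+H} --- namely $f_{p,t}(z)$, $f_{p,t}(z)|1-z|^2|1+z|^2$, $f_{p,t}(z)|1-z|^2$ and $f_{p,t}(z)|1+z|^2$ --- again belongs to the class (\ref{eqn:f}): multiplying by $|1-z|^2$ (resp.\ $|1+z|^2$) merely replaces the Fisher--Hartwig exponent $\alpha_0$ at $z_0=1$ by $\alpha_0+1$ (resp.\ $\alpha_3$ at $z_3=-1$ by $\alpha_3+1$) and leaves the merging conjugate quadruple $z_1,\overline{z_1},z_2,\overline{z_2}$ untouched. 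The parameters $s',t'\in\{\pm\tfrac12\}$ attached to each $\kappa$ in Theorem \ref{thm:T+H non-uniform} record exactly whether such a shift is present, so that in all four cases Theorem \ref{thm:T uniform} applies with $n$ replaced by $2n$ and $\alpha_0,\alpha_3$ replaced by $\alpha_0+s'+\tfrac12,\ \alpha_3+t'+\tfrac12$ in the Deift--Its--Krasovsky normalisation. Crucially, the Painlev\'e integral $2\int_0^{-2i(2n)t}\tfrac1s(\sigma(s)-\dots)\,\mathrm{d}s$ coming out of Theorem \ref{thm:T uniform} is precisely twice the $\int_0^{-4int}\tfrac1s(\sigma(s)-\dots)\,\mathrm{d}s$ appearing in (\ref{eqn:T+H uniform}), and the $\log(2n)$ terms split into $\log 2$'s that produce the stated extra powers of $2$.

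Next I would read off, uniformly for $p\in(\epsilon,\pi-\epsilon)$ and $0<t<t_0$, the asymptotics of $\Phi_{2n}(0)$ and of $\Phi_{2n}(\pm1)$ for each of these symbols. Recalling $\Phi_n(z)=Y_{11}(z;n)$, one substitutes $z=0$ (an interior point, where in the final transformation of the steepest-descent analysis only the global parametrix contributes) and $z=\pm1$ (the two Fisher--Hartwig points, which remain at distance $\gtrsim\epsilon$ from the coalescing cluster, and where the confluent-hypergeometric local parametrix of \cite{Deift2011,Claeys2015} supplies the leading behaviour). Since $\pm1$ stay away from the merging singularities, these asymptotics have the same shape as in the separated setting of \cite{Deift2011}: $1+\Phi_{2n}(0)$ tends to a constant built from Barnes $G$-functions evaluated at $\pm1$, while $\Phi_{2n}(1)$ and $\Phi_{2n}(-1)$ behave like explicit powers of $2n$ times $G$-function constants; the dependence on the merging parameter $t$ enters only through the Szeg\H{o}-type and Blaschke-type factors of the global parametrix, whose $t$-dependence has already been isolated in the $\log\frac{\sin t}{nt}$ contributions of Theorem \ref{thm:T uniform}.

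Finally I would insert the asymptotics of $D_{2n}(\cdot)$ from Theorem \ref{thm:T uniform} and of the $\Phi_{2n}$-values into the four identities of Lemma \ref{lemma:connection between T and T+H}, take the positive square root --- legitimate since all quantities involved are positive by the Heine--Szeg\H{o} identity (Theorem \ref{thm:Heine-Szego}) and Theorem \ref{thm:Baik2001} --- and simplify. That simplification is pure bookkeeping: one combines the exponents of $n$, $2$, $|2\sin p|$, $\left|\frac{\sin t}{2nt}\right|$, $z_j$ and $|1\pm z_j|$, absorbs the $V$-dependent factors into the $b_\pm(z_j)$ and $V_0,V(1),V(-1)$ terms, and collapses the resulting products of Barnes $G$-functions using $G(1+x)=\Gamma(x)G(x)$ and the special value $G(1/2)$. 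This should reproduce (\ref{eqn:T+H uniform}) term by term, with the $o(1)$ error inherited from Theorem \ref{thm:T uniform} and from the $\Phi_{2n}$ asymptotics, both uniform in $(p,t)$ by construction. As a consistency check, letting $t\to t_0$ with the singularities separated must recover Theorem \ref{thm:T+H non-uniform}.

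I expect the main obstacle to be two-fold. The conceptually delicate point is obtaining the asymptotics of $\Phi_{2n}(\pm1)$ \emph{uniformly as $t\to0$}: one must verify that the matching of the confluent-hypergeometric parametrices at $\pm1$ with the global parametrix --- whose Szeg\H{o} function now carries the merging parameter --- does not introduce errors that degenerate as $t\to0$, i.e.\ that the small-norm estimate $|R^{(2n)}-I|=\mathcal{O}(\|V^{(2n)}-I\|)$ governing the final transformation stays uniform all the way down to coalescence; this is where the Painlev\'e local parametrix of \cite{Claeys2015} does the work. The second, more tedious, obstacle is the exact reconciliation of the two normalisations after the square root: correctly tracking the $e^{\pm i\pi\beta_j}$ phases in $g_{z_j,\beta_j}$, the branch of $\log\frac{z_k}{z_j e^{i\pi}}$, and the splitting of $(2n)^{\bullet}$ into $n^{\bullet}$ times powers of $2$, so that every factor lands in the right place in (\ref{eqn:T+H uniform}).
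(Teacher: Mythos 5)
Your proposal follows essentially the same route as the paper: reduce to Toeplitz asymptotics via Lemma \ref{lemma:connection between T and T+H}, extract $\Phi_{2n}(0)=o(1)$ from the global parametrix and the $\Phi_{2n}(\pm1)$ asymptotics from the local parametrices at $\pm1$ (uniform in $(p,t)$ precisely because the merging pair stays bounded away from $\pm1$), then feed in Theorem \ref{thm:T uniform} with $n\to 2n$ and take square roots. The only organizational difference is that the paper sidesteps the term-by-term bookkeeping by invoking the relation (\ref{eqn:relation}) between the uniform and non-uniform Toeplitz expansions to modify Theorem \ref{thm:T+H non-uniform} directly (halved, with $n$ replaced by $2n$), which is what you treat as a final consistency check; also note $1+\Phi_{2n}(0)\to 1$ rather than a Barnes-$G$ constant, a harmless slip given your method of computing it.
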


\begin{remark} One can probably get similar results if more generally one chooses complex $\alpha_j, \beta_j$ with $\Re(\alpha_j) > -1/2$, but to prove Theorem \ref{thm:main} in Chapter \ref{chapter:GMC}, which was the original motivation for computing the asymptotics in Theorem \ref{thm:T+H uniform}, this is not necessary. 
\end{remark} 

\begin{remark} 
The requirements $p \in (\epsilon, \pi - \epsilon)$, $t_0 \in (0,\epsilon)$ are necessary for us to be able to apply the proof techniques in \cite{Claeys2015}. The results there only hold for two merging singularities, while if $p \rightarrow 0,\pi$ we have 5 singularities merging at $\pm 1$, and if $t \rightarrow \epsilon$ we can have $p \pm t \rightarrow 0,\pi$ which means 3 singularities are merging at $\pm 1$.
\end{remark}  

\begin{remark}
Comparing the uniform asymptotics of $D_n(f_{p,t})$ in Theorem \ref{thm:T uniform} with the non-uniform asymptotics one gets from Theorem \ref{thm:T non-uniform}, one can see that the different expansions are related in the following way:
\begin{align} \label{eqn:relation}
&2\sum_{j=1}^2 \log \frac{G(1+\alpha_j + \beta_j) G(1 + \alpha_j - \beta_j)}{G(1+2\alpha_j)} - 2i\pi (\alpha_1 \beta_2 -\alpha_2\beta_1) + o(1)_{non-uniform} \nonumber \\
=& 2int(\beta_1 - \beta_2) + 2\int_0^{-2int} \frac{1}{s} \left( \sigma(s) - 2\alpha_1\alpha_2 + \frac{1}{2}(\beta_1+\beta_2)^2 \right) \text{d}s + 4(\beta_1\beta_2 - \alpha_1\alpha_2) \log \frac{1}{2nt} \nonumber \\
&+ 2\log \frac{G(1 + \alpha_1 + \alpha_2 + \beta_1 + \beta_2) G(1 + \alpha_1 + \alpha_2 - \beta_1 - \beta_2)}{G(1 + 2 \alpha_1 + 2\alpha_2 )} + o(1)_{uniform}.
\end{align}
This is exactly the same relationship as the one between the non-uniform and uniform expansions of $D_n(f_t)$ in \cite{Claeys2015} (see their (1.8), (1.24) and (1.26)). 
\end{remark}

\begin{remark}
The relationship between the uniform asymptotics of $D_n^{T+H,\kappa}(f_{p,t})$ in Theorem \ref{thm:T+H uniform} and the non-uniform asymptotics one gets from Theorem \ref{thm:T+H non-uniform} is given by (\ref{eqn:relation}), with both sides divided by $2$, and $n$ replaced by $2n$. This is because the uniform asymptotics of $D_n^{T+H,\kappa}(f_{p,t})$ are related to the uniform asymptotics of $D_{2n}(f_{p,t})^{1/2}$ (with added singularities at $\pm 1$) and $\Phi_n(\pm 1)^{1/2}$, $\Phi_n(0)$, by Lemma \ref{lemma:connection between T and T+H}. As will be argued in Section \ref{section:T+H} the asymptotics of $\Phi_{2n}(\pm 1)$, computed as in \cite{Deift2011} are unaffected by the merging of singularities away from $\pm 1$, and $\Phi_{2n}(0) = o(1)$ both when singularities merge or not. Thus only the asymptotics of $D_{2n}(f_{p,t})^{1/2}$ are different in the merging and non-merging regime, and their relationship is given by (\ref{eqn:relation}) with both sides divided by $2$, and $n$ replaced by $2n$.
\end{remark}  

Our last results corresponds to Theorem 1.11 in \cite{Claeys2015}. It extends Theorem \ref{thm:T non-uniform} for the symbol $f_{p,t}$, and Theorem \ref{thm:T+H non-uniform} in the case $r = 2$: 
\begin{theorem} \label{thm:T, T+H extended}
Let $\omega(x)$ be a positive, smooth function for $x$ sufficiently large, s.t.
\begin{equation}
\omega(x) \rightarrow \infty, \quad \omega(x) = o(x), \quad \text{as } x \rightarrow \infty.
\end{equation}
Then for any $t_0 \in (0,\epsilon)$ the expansion of $D_n(f_{p,t})$ one gets from Theorem \ref{thm:T non-uniform} holds uniformly in $p \in (\epsilon, \pi - \epsilon)$ and $\omega(n)/n < t < t_0$. For $r = 2$, the expansion of Theorem \ref{thm:T+H non-uniform} holds uniformly in $\theta_1,\theta_2 \in (\epsilon, \pi - \epsilon)$ for which $\omega(n)/n < |\theta_1 - \theta_2|$. 
\end{theorem}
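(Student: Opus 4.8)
The plan is to deduce the statement from the uniform asymptotics already available, rather than to run a fresh Riemann--Hilbert analysis. Theorem~\ref{thm:T uniform} gives the expansion of $D_n(f_{p,t})$ uniformly for $p\in(\epsilon,\pi-\epsilon)$ and all $0<t<t_0$, and Theorem~\ref{thm:T+H uniform} does the same for $D_n^{T+H,\kappa}(f_{p,t})$. Fix $\omega$ as in the statement and restrict to $\omega(n)/n<t<t_0$; there $nt>\omega(n)\rightarrow\infty$ uniformly, so the upper endpoint $-2int$ (respectively $-4int$) of the Painlev\'e integral in Theorem~\ref{thm:T uniform} (respectively Theorem~\ref{thm:T+H uniform}) tends to $-i\infty$ uniformly. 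The whole proof then reduces to evaluating that integral for large argument, using the behaviour of $\sigma(s)$ as $s\rightarrow-i\infty$ from Theorem~\ref{thm:painleve}, and checking that after this substitution the uniform expansion collapses, term by term, into the formula one reads off from Theorem~\ref{thm:T non-uniform} for $f_{p,t}$ (respectively from Theorem~\ref{thm:T+H non-uniform} with $r=2$). This mirrors the way Theorem~1.11 is obtained from Theorem~1.1 in \cite{Claeys2015}.

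For the Toeplitz case, set $c=2\alpha_1\alpha_2-\tfrac12(\beta_1+\beta_2)^2$ and split $\int_0^{-2int}\tfrac1s(\sigma(s)-c)\,\text{d}s=\int_0^{-iB}+\int_{-iB}^{-2int}$ for a large fixed $B$. Near $s=0$ one has $\sigma(s)-c=\mathcal{O}(|s|^{\delta})$, so the integrand is $\mathcal{O}(|s|^{\delta-1})$ and $\int_0^{-iB}$ is a finite constant; $\sigma$ is pole-free on $-i\mathbb{R}_+$, so no singularities are crossed. On the tail one inserts $\sigma(s)=\tfrac{\beta_1-\beta_2}{2}s-\tfrac{(\beta_1-\beta_2)^2}{2}+\mathcal{O}(|s|^{-\delta})$, which gives $\tfrac1s(\sigma(s)-c)=\tfrac{\beta_1-\beta_2}{2}+\tfrac{2\beta_1\beta_2-2\alpha_1\alpha_2}{s}+\mathcal{O}(|s|^{-1-\delta})$. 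Integrating, the linear term produces $-int(\beta_1-\beta_2)$, which after the factor $2$ in front of the integral cancels the explicit term $2int(\beta_1-\beta_2)$; the $1/s$ term produces $(2\beta_1\beta_2-2\alpha_1\alpha_2)\log(2nt)$, which combines with the explicit term $4(\beta_1\beta_2-\alpha_1\alpha_2)\log\tfrac{\sin t}{nt}$ to give $4(\beta_1\beta_2-\alpha_1\alpha_2)\log(2\sin t)$, exactly the $\log|z_1-z_2|$ and $\log|z_4-z_5|$ contributions present in Theorem~\ref{thm:T non-uniform}; and the $\mathcal{O}(|s|^{-1-\delta})$ remainder together with $\int_0^{-iB}$ and the boundary term at $-iB$ assemble, as $nt\rightarrow\infty$, into a single $n,t$-independent constant. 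By the total-integral identity for the $\sigma$-form of Painlev\'e~V established in \cite{Claeys2015}, this constant is precisely the combination of Barnes $G$-functions and constant phases that turns $G(1+\alpha_1+\alpha_2+\beta_1+\beta_2)G(1+\alpha_1+\alpha_2-\beta_1-\beta_2)/G(1+2\alpha_1+2\alpha_2)$ into $\prod_{j=1}^{2}G(1+\alpha_j+\beta_j)G(1+\alpha_j-\beta_j)/G(1+2\alpha_j)$ and accounts for the extra factor $e^{-i\pi(\alpha_1\beta_2-\alpha_2\beta_1)}$; this is relation~(\ref{eqn:relation}). All remaining terms of the two expansions ($nV_0$, $\sum_k kV_k^2$, $\log n\,\sum(\alpha_j^2-\beta_j^2)$, the $V$-linear terms, the Barnes $G$-factors at $\pm1$, and the $\log|z_j-z_k|$ and phase terms for pairs other than $(1,2)$ and $(4,5)$) coincide verbatim, so the expansion of Theorem~\ref{thm:T non-uniform} holds for $D_n(f_{p,t})$ in this regime.

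Uniformity in $t$ is automatic: the error incurred in replacing $\sigma$ by its $s\rightarrow-i\infty$ asymptotics is bounded by the convergent tail $\int_{-iB}^{-i\infty}|s|^{-1-\delta}\,|\text{d}s|$, and the $\mathcal{O}(|s|^{-\delta})$ remainder in Theorem~\ref{thm:painleve} depends only on the fixed parameters $\alpha_j,\beta_j$, not on $n,p,t$; since the effective cutoff $2nt\geq 2\omega(n)\rightarrow\infty$, this error is $o(1)$ uniformly, and it is added to the error of Theorem~\ref{thm:T uniform}, which is already uniform on $0<t<t_0$. For $D_n^{T+H,\kappa}$ one repeats the argument starting from Theorem~\ref{thm:T+H uniform}, whose Painlev\'e integral has endpoint $-4int$: by Lemma~\ref{lemma:connection between T and T+H} and the discussion following Theorem~\ref{thm:T+H uniform}, the parametrices at $\pm1$ and the quantity $\Phi_{2n}(0)=o(1)$ are unaffected by singularities merging away from $\pm1$, so the only change between the merging and non-merging regimes is in the $D_{2n}(f_{p,t})^{1/2}$ factor, to which the analysis above applies with $n$ replaced by $2n$ and both sides of (\ref{eqn:relation}) halved; this reproduces the expansion of Theorem~\ref{thm:T+H non-uniform} for $r=2$. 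Finally one reparametrises $\theta_1=p-t$, $\theta_2=p+t$, so that $|\theta_1-\theta_2|=2t$ and the condition $\omega(n)/n<|\theta_1-\theta_2|$ becomes $t>\omega(n)/(2n)$; since $\omega$ is any function with $\omega(x)\rightarrow\infty$ and $\omega(x)=o(x)$, passing to $2\omega$ does not change the class of regimes, and (shrinking $\epsilon$ and $t_0$ if needed to keep $\theta_1,\theta_2$ away from the endpoints) the stated uniformity in $\theta_1,\theta_2\in(\epsilon,\pi-\epsilon)$ follows.

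The heart of the argument is the total-integral identity for the $\sigma$-Painlev\'e~V transcendent --- the identification of the regularised value of $\int_0^{-i\infty}\tfrac1s(\sigma(s)-c)\,\text{d}s$, after subtracting its divergent linear and logarithmic parts, with the explicit ratio of Barnes $G$-functions and the phase factor. This is where all the genuine content lies; it is supplied by \cite{Claeys2015} and rests ultimately on the solution of the connection problem for the Riemann--Hilbert problem defining $\sigma$. Everything else --- the large-argument evaluation of the integral and the verification of uniformity --- is bookkeeping once that identity is in hand. A self-contained alternative would be to run the Deift--Zhou steepest-descent analysis directly in the regime $\omega(n)/n<t<t_0$, where the two singularities are separated on the scale $1/n$ and the Painlev\'e local parametrix factorises into two confluent hypergeometric parametrices, one per singularity; this produces the formula of Theorem~\ref{thm:T non-uniform} directly and, on comparison with Theorem~\ref{thm:T uniform}, would in fact prove the total-integral identity as a by-product, at the cost of duplicating much of the work of Chapter~\ref{chapter:FH}.
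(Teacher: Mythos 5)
Your proposal is correct in outline, but it takes a genuinely different route from the paper. You deduce the separated-singularity expansion by substituting the large-$|s|$ behaviour of $\sigma$ into the already-uniform Painlev\'e expansions of Theorems \ref{thm:T uniform} and \ref{thm:T+H uniform}, which forces you to import the constant of integration — the regularised total-integral identity for the $\sigma$-Painlev\'e V transcendent, i.e.\ the Barnes-$G$ and phase content of relation (\ref{eqn:relation}) — from \cite{Claeys2015} (their (1.24), (1.26), after the parameter relabelling noted in Appendix \ref{appendix:Psi}, since their $(\alpha_1,\alpha_2,\beta_1,\beta_2)$ are ordered oppositely to ours). The paper instead never needs that constant: in the regime $\omega(n)/n<t<t_0$ it builds the second family of local parametrices $\tilde P_k$ from confluent hypergeometric functions (Section \ref{section:local 2}), obtains the simplified differential identity (\ref{eqn:diff id final2}) with error uniform and of size $\mathcal{O}((nt)^{-1})$, and integrates it \emph{downward} from $t_0$ to $t$, using Theorem \ref{thm:T non-uniform} only at the fixed separation $t_0$; the T+H statement then follows via Lemma \ref{lemma:connection between T and T+H} together with the uniform asymptotics of $\Phi_{2n}(\pm1)$ and $\Phi_{2n}(0)$. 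So your ``self-contained alternative'' sketched at the end is essentially the paper's actual proof, and the comparison of the two regimes is what yields (\ref{eqn:relation}) — equivalently the total-integral identity you rely on — as a by-product rather than as an input. Your route is legitimate and not circular (Theorem \ref{thm:T uniform} is proved independently of the present statement, and the Claeys--Krasovsky identity concerns their two-singularity symbol), and it is shorter given those inputs; what it gives up is self-containedness and the quantitative $\mathcal{O}(\omega(n)^{-1})$ error the paper obtains. Two small points you should make explicit: your derivation inherits the hypothesis $\alpha_1+\alpha_2>-1/2$ from Theorems \ref{thm:painleve}--\ref{thm:T+H uniform}, and in the T+H statement the reparametrisation $p=(\theta_1+\theta_2)/2$, $t=|\theta_1-\theta_2|/2$ only covers $|\theta_1-\theta_2|<2t_0$; the complementary case of separation bounded below is handled directly by the uniform validity of Theorem \ref{thm:T+H non-uniform}, as noted in the remark following it.
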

Finally, we state the following result which gives asymptotics of Toeplitz+Hankel determinants which hold uniformly for arbitrarily many merging singularities, but only up to a multiplicative constant:
\begin{theorem} (Claeys, Glesner, Minakov, Yang \cite{Claeys2022}) \label{thm:T+H Claeys}
Let $f$ be as in (\ref{eqn:FH even}) with $r \in \mathbb{N}$, $\alpha_0 = \alpha_{r+1} = 0$ and $\alpha_j \geq 0$, $j = 1,...,r$. Then we have uniformly over the entire region $0 < \theta_1 < ... < \theta_r < \pi$, as $n \rightarrow \infty$,
\begin{align}
D_n^{T+H,1}(f) =& Fe^{nV_0} \prod_{j = 1}^r n^{\alpha_j^2 - \beta_j^2} \left( \sin \theta_j + \frac{1}{n} \right)^{\alpha_j - \alpha_j^2 - \beta_j^2} \times e^{\mathcal{O}(1)}, \nonumber \\
D_n^{T+H,2}(f) =& Fe^{nV_0} \prod_{j = 1}^r n^{\alpha_j^2 - \beta_j^2} \left( \sin \theta_j + \frac{1}{n} \right)^{-\alpha_j - \alpha_j^2 - \beta_j^2} \times e^{\mathcal{O}(1)},\\
D_n^{T+H,3}(f) =& Fe^{nV_0} \prod_{j = 1}^r n^{\alpha_j^2 - \beta_j^2} \left( \sin \frac{\theta_j}{2} + \frac{1}{n} \right)^{-\alpha_j - \alpha_j^2 - \beta_j^2} \left( \cos \frac{\theta_j}{2} + \frac{1}{n} \right)^{\alpha_j - \alpha_j^2 - \beta_j^2} \times e^{\mathcal{O}(1)}, \nonumber \\
D_n^{T+H,4}(f) =& Fe^{nV_0} \prod_{j = 1}^r n^{\alpha_j^2 - \beta_j^2} \left( \sin \frac{\theta_j}{2} + \frac{1}{n} \right)^{\alpha_j - \alpha_j^2 - \beta_j^2} \left( \cos \frac{\theta_j}{2} + \frac{1}{n} \right)^{-\alpha_j - \alpha_j^2 - \beta_j^2} \times e^{\mathcal{O}(1)}, \nonumber 
\end{align}
where 
\begin{align}
F = \prod_{1 \leq j < k \leq r } \left( \sin \left| \frac{\theta_j - \theta_k}{2} \right| + \frac{1}{n} \right)^{-2( \alpha_j\alpha_k - \beta_j \beta_k)} \left( \sin \left| \frac{\theta_j +\theta_k}{2} \right| + \frac{1}{n} \right)^{-2(\alpha_j \alpha_k + \beta_j \beta_k)}. 
\end{align}
Here $e^{\mathcal{O}(1)}$ denotes a function which is uniformly bounded and bounded away from $0$ as $n \rightarrow \infty$. 
\end{theorem}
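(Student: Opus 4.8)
The plan is to reduce the Toeplitz+Hankel determinants to Toeplitz determinants together with a few point evaluations of orthogonal polynomials, as in Lemma \ref{lemma:connection between T and T+H}, and then to obtain uniform asymptotics for the two ingredients by a Deift--Zhou steepest descent analysis of the type carried out by Fahs \cite{Fahs2021}. First I would apply Lemma \ref{lemma:connection between T and T+H}, which expresses $D_n^{T+H,\kappa}(f)^2$ for $\kappa = 1,2,3,4$ in terms of $D_{2n}(\tilde f_\kappa)$, where $\tilde f_\kappa$ is $f$ multiplied by $1$, $|1-z|^2|1+z|^2$, $|1-z|^2$, or $|1+z|^2$ respectively, together with the monic orthogonal polynomial values $\Phi_{2n}(0)$ and $\Phi_{2n}(\pm 1)$ for the symbol $\tilde f_\kappa$. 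Since each $\tilde f_\kappa$ is again an even Fisher--Hartwig symbol, now possibly carrying trivial integer exponents at $\pm 1$, the large-$n$ asymptotics of $D_{2n}(\tilde f_\kappa)$, uniformly over all configurations $0 < \theta_1 < \dots < \theta_r < \pi$ and up to a factor $e^{\mathcal{O}(1)}$ that is bounded and bounded away from $0$, are exactly what Fahs' theorem on Toeplitz determinants with arbitrarily many merging singularities provides. After taking the square root, this contribution accounts for the leading factor $e^{nV_0}$, the powers $n^{\alpha_j^2 - \beta_j^2}$, the interpolating factors in $F$ involving $\sin|\tfrac{\theta_j-\theta_k}{2}| + \tfrac1n$ and $\sin|\tfrac{\theta_j+\theta_k}{2}| + \tfrac1n$, and the parts of the single-singularity factors that come from the interaction of $z_j$ with its image under $\theta \mapsto -\theta$.

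The remaining work is to control $\Phi_{2n}(0)$ and $\Phi_{2n}(\pm 1)$ uniformly. By Theorem \ref{thm:BaikDeiftJohansson} these are entries (boundary values at $\pm 1$) of the solution $Y$ of the Baik--Deift--Johansson Riemann--Hilbert problem for $\tilde f_\kappa$, so I would run the steepest descent for this problem: normalize at infinity ($Y \mapsto T$ via $z^{\pm 2n}$), open lenses across the unit circle ($T \mapsto S$), build a global parametrix from the Szeg\"{o} function of the analytic and bounded part of $\tilde f_\kappa$, and — crucially — build local parametrices not around individual singularities but around each \emph{cluster} of singularities that lie within $\mathcal{O}(1/n)$ of one another, using Fahs' confluent multi-singularity model solutions (which degenerate, in the well-separated regime, to the standard confluent-hypergeometric/Bessel parametrices of Deift--Its--Krasovsky). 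Matching these parametrices on fixed and shrinking circles forces the error matrix $R = I + (\text{small})$ uniformly, which is the source of the $e^{\mathcal{O}(1)}$ ambiguity. Reading off the $(1,1)$ entry at $z = 0$, which lies inside the disk away from all singularities, gives $\Phi_{2n}(0) = o(1)$ uniformly, so that $(1 + \Phi_{2n}(0))^2 = e^{\mathcal{O}(1)}$; reading off $\Phi_{2n}(\pm 1)$ from the global parametrix and from the local parametrix of whichever cluster contains $\pm 1$ — which happens precisely when $\theta_1 \to 0$ or $\theta_r \to \pi$ — produces a power of $n$ together with the half-angle factors $\sin\tfrac{\theta_j}{2} + \tfrac1n$ and $\cos\tfrac{\theta_j}{2} + \tfrac1n$. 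Here the distinction between the two half-angle factors reflects the conformal distances of $z_j$ to $+1$ and to $-1$, which is precisely what separates the formulas for $\kappa = 3$ and $\kappa = 4$, while for $\kappa = 1,2$ the symmetric placement of the $\pm 1$ factors produces $\sin\theta_j + \tfrac1n$ instead of half angles. Substituting all of this back into Lemma \ref{lemma:connection between T and T+H}, taking square roots, and bookkeeping the $e^{\mathcal{O}(1)}$ factors, the constant $F$, and the signs, yields the four stated formulas.

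The main obstacle is the uniform construction and matching of the cluster local parametrices: when three or more Fisher--Hartwig singularities coalesce — possibly together with one of the symmetry points $\pm 1$, at which $\tilde f_\kappa$ carries a trivial integer exponent — one needs Fahs' higher-dimensional confluent model problems, and one must verify that the residual jumps on all matching contours are uniformly small across \emph{every} regime of relative separations simultaneously, so that the undetermined prefactor is genuinely bounded and bounded away from $0$ rather than merely $o(1)$ or blowing up. A secondary technical point is to show that the point values $\Phi_{2n}(0)$ and $\Phi_{2n}(\pm 1)$, and not just the determinants $D_{2n}(\tilde f_\kappa)$, inherit uniform control from the same analysis; this is essentially immediate once the parametrices are in place, since they are particular matrix entries of the uniformly controlled $Y$, but it does force one to push the local analysis at $\pm 1$ through even when a cluster of singularities is merging there, rather than treating $\pm 1$ as a fixed, isolated point as in the non-merging case of Theorem \ref{thm:T+H non-uniform}.
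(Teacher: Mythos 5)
You should first be aware that this thesis does not prove Theorem \ref{thm:T+H Claeys} at all: it is imported verbatim from \cite{Claeys2022} and used as a black box (the thesis only proves, in Chapter \ref{chapter:FH}, the special case of two conjugate pairs of merging singularities bounded away from $\pm 1$). So there is no internal proof to compare you against. At the level of strategy, your plan --- reduce $D_n^{T+H,\kappa}(f)^2$ to $D_{2n}(\tilde f_\kappa)$ together with $\Phi_{2n}(0)$ and $\Phi_{2n}(\pm 1)$ via Lemma \ref{lemma:connection between T and T+H}, feed the Toeplitz factor into Fahs' uniform asymptotics \cite{Fahs2021}, and control the polynomial values by an extended Riemann--Hilbert analysis --- is the same route the thesis takes in its restricted case and is consistent with how it describes the cited proof (``extending Fahs' Riemann--Hilbert approach and using his asymptotics of Toeplitz determinants''), and your bookkeeping of exponents (e.g. that the $\pm\alpha_j$ corrections to $-\alpha_j^2-\beta_j^2$ come from $\Phi_{2n}(\pm 1)^{-1/2}$ and the added $|1\mp z|^2$ factors) is consistent with Theorem \ref{thm:T+H non-uniform}.

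As a proof, however, the sketch has a genuine gap exactly where you locate the ``main obstacle'', and your account of how it would be closed is not right. First, you need \emph{two-sided}, uniform control of $\Phi_{2n}(\pm 1)$ (it sits in a denominator) and of $1+\Phi_{2n}(0)$ in the regimes where a cluster of singularities merges at $\pm 1$; these are individual matrix entries of the RH solution evaluated at points on the jump contour inside a merging cluster, and nothing in Fahs' determinant asymptotics, nor in the non-merging analysis of \cite{Deift2011} that the thesis invokes, supplies them. Second, your proposed mechanism for the $e^{\mathcal{O}(1)}$ ambiguity --- that cluster parametrices can be matched with uniformly small errors across every regime of separations --- is too optimistic: if such small-norm matching held uniformly one would obtain full asymptotics with explicit constants, not merely $e^{\mathcal{O}(1)}$; the loss of the constant in Fahs' scheme comes precisely from intermediate merging regimes where the confluent model problems cannot be matched or evaluated, and is handled there by comparison arguments (differential identities in $n$ or in the singularity positions, integrated between regimes where asymptotics are known). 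Any completion of your argument must reproduce that kind of two-sided comparison for the point values $\Phi_{2n}(0)$, $\Phi_{2n}(\pm 1)$ as well, not just for determinants. A smaller but real inaccuracy: the factors $|1-z|^2$, $|1+z|^2$ in $\tilde f_\kappa$ are not ``trivial integer exponents'' --- they are genuine Fisher--Hartwig singularities with $\alpha=1$ at $\pm 1$ which interact with clusters merging there and are responsible for distinguishing the $\sin\frac{\theta_j}{2}$ and $\cos\frac{\theta_j}{2}$ exponents among $\kappa=1,\dots,4$; and the square-root step should be justified by noting that all the determinants involved are strictly positive (they are averages of nonnegative quantities over compact groups, via Theorem \ref{thm:Baik2001} and Theorem \ref{thm:Heine-Szego}).
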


\section{Riemann-Hilbert Problem for a System of Orthogonal Polynomials and a Differential Identity}
In this section and and the following Sections \ref{section:asymptotics of polynomials}, \ref{section:Toeplitz} and \ref{section:T+H} we prove Theorems \ref{thm:T uniform}, \ref{thm:T+H uniform} and \ref{thm:T, T+H extended}, by following the Riemann-Hilbert analysis of \cite{Claeys2015}. 

\subsection{RHP for Orthogonal Polynomials}
Let $f_{p,t}$ be as in (\ref{eqn:f}). We recall that by the Heine-Szeg\"{o} identity, and since $f_{p,t} > 0$ except at $z_0,...,z_5$, it holds that $D_{n}(f_{p,t}) \in (0,\infty)$ for all $n\in \mathbb{N}$. Thus we can define the polynomials $\phi_n(z) = \phi_n(z;p,t)$, $\hat{\phi}_n(z) = \hat{\phi}_n(z;p,t)$, by
\begin{align}
\begin{split}
\hspace{-0.5cm} \phi_n(z) &= \frac{1}{\sqrt{D_n(f_{p,t}) D_{n+1}(f_{p,t})}} 
\left| 
\begin{array} {cccc} f_{p,t,0} & f_{p,t,-1} & \dots & f_{p,t,-n} \\
f_{p,t,1} & f_{p,t,0} & \dots & f_{p,t,-n+1} \\
\dots & \dots & & \dots \\
f_{p,t,n-1} & f_{p,t,n-2} & \dots & f_{p,t,-1} \\
1 & z & \dots & z^n 
\end{array}
\right| = \chi_n z^n + ..., \\
\hspace{-0.5cm} \hat{\phi}_n(z) &= \frac{1}{\sqrt{D_n(f_{p,t}) D_{n+1}(f_{p,t})}} 
\left| 
\begin{array} {ccccc} f_{p,t,0} & f_{p,t,-1} & \dots & f_{p,t,-n+1} & 1 \\
f_{p,t,1} & f_{p,t,0} & \dots & f_{p,t,-n+2} & z \\
\dots & \dots & & \dots & \dots \\
f_{p,t,n} & f_{p,t,n-1} & \dots & f_{p,t,1} & z^n 
\end{array}
\right| = \chi_n z^n + ..., 
\end{split}
\end{align}
where the leading coefficient $\chi_n$ is given by 
\begin{equation}
\chi_n = \sqrt{\frac{D_n(f_{p,t})}{D_{n+1}(f_{p,t,})}}.
\end{equation}
The above polynomials satisfy the orthogonality relations
\begin{align}
\begin{split}
\frac{1}{2\pi} \int_0^{2\pi} \phi_n(e^{i\theta}) e^{-ik\theta} f_{p,t}(e^{i\theta}) \text{d}\theta =& \chi_n^{-1} \delta_{nk}, \\
\frac{1}{2\pi} \int_0^{2\pi} \hat{\phi}_n(e^{-i\theta}) e^{ik\theta} f_{p,t}(e^{i\theta}) \text{d}\theta =& \chi_n^{-1} \delta_{nk},
\end{split}
\end{align}
for $k = 0,1,...,n$, which implies that they are orthonormal w.r.t. the weight $f_{p,t}$. \\

Let $C$ denote the unit circle, oriented counterclockwise. By Theorem \ref{thm:BaikDeiftJohansson} it holds that the matrix-valued function $Y(z) = Y(z;n,p,t)$ given by 
\begin{equation}
Y(z) = \left( \begin{array}{cc} \chi_n^{-1} \phi_n(z) & \chi_n^{-1} \int_C \frac{\phi_n(\xi)}{\xi - z} \frac{f_{p,t}(\xi) \text{d}{\xi}}{2\pi i \xi^n} \\
- \chi_{n-1}z^{n-1} \hat{\phi}_{n-1}(z^{-1}) &  -\chi_{n-1} \int_C \frac{\hat{\phi}_{n-1}(\xi^{-1})}{\xi - z} \frac{f_{p,t}(\xi) \text{d}{\xi}}{2\pi i \xi} \end{array} \right)
\end{equation}
is the unique solution to the following Riemann-Hilbert problem:\\

\noindent \textbf{RH problem for} $Y$

\begin{enumerate}[label=(\alph*)]
\item $Y:\mathbb{C}\setminus C \rightarrow \mathbb{C}^{2\times 2}$ is analytic.

\item The continuous boundary values of $Y$ from inside the unit circle, denoted $Y_+$, and from outside, denoted $Y_-$, exist on $C\setminus \{ z_0,...,z_5 \}$, and are related by the jump condition
\begin{equation}
Y_+(z) = Y_-(z) \left( \begin{array} {cc} 1 & z^{-n} f_{p,t}(z) \\ 0 & 1 \end{array} \right), \quad z \in C\setminus \{z_0,...,z_5\}.
\end{equation}

\item $Y(z) = (I + \mathcal{O}(1/z)) \left( \begin{array}{cc} z^n & 0 \\ 0 & z^{-n} \end{array} \right)$, as $ z \rightarrow \infty$.

\item As $z \rightarrow z_k$, $z \in \mathbb{C}\setminus C$, $k= 0,...,5$, we have 
\begin{equation}
Y(z) = \left( \begin{array}{cc} \mathcal{O}(1) & \mathcal{O}(1) + \mathcal{O}(|z-z_k|^{2\alpha_k}) \\ \mathcal{O}(1) & \mathcal{O}(1) + \mathcal{O}(|z-z_k|^{2\alpha_k}) \end{array}\right), \quad \text{if } \alpha_k \neq 0,
\end{equation}
and 
\begin{equation}
Y(z) = \left( \begin{array}{cc} \mathcal{O}(1) & \mathcal{O}(\log |z-z_k|) \\ \mathcal{O}(1) & \mathcal{O}(\log |z-z_k| ) \end{array}\right), \quad \text{if } \alpha_k = 0.
\end{equation}
\end{enumerate}

We see that $Y(z;n,p,t)_{21}(0) = \chi_{n-1}^2$ and $Y(z;n,p,t)_{11}(z) = \chi_n^{-1} \phi_n(z) = \Phi_n(z)$, thus if we know the asymptotics of $Y$, we know the asymptotics of $\Phi_n$, $\phi_n$ and $\chi_n$.

\subsection{Differential Identity}

The Fourier coefficients are differentiable in $t$, thus $\log D_n(f_{p,t})$ is differentiable in $t$ for all $p \in (\epsilon, \pi - \epsilon)$ and $n\in \mathbb{N}$. We calculate:
\begin{align}
\begin{split}
&\frac{\partial}{\partial t} \log \left|z - e^{i(p-t)}\right|^{2\alpha_1} = \frac{\partial}{\partial t} \log \left| 2\sin \frac{\theta - (p-t)}{2} \right|^{2\alpha_1}\\
=& \alpha_1 \cot \frac{\theta - (p-t)}{2} = i\alpha_1 \frac{z + e^{i(p-t)}}{z - e^{i(p-t)}}.
\end{split}
\end{align}
Similarly we obtain 
\begin{align}
\begin{split}
\frac{\partial}{\partial t} \log \left|z - e^{i(p+t)} \right|^{2\alpha_2} &= - i\alpha_2 \frac{z + e^{i(p+t)}}{z - e^{i(p+t)}},\\
\frac{\partial}{\partial t} \log \left|z - e^{i(2\pi - (p+t))} \right|^{2\alpha_4} &= i\alpha_4 \frac{z + e^{i(2\pi - (p+t))}}{z - e^{i(2\pi - (p+t))}},\\
\frac{\partial}{\partial t} \log \left|z - e^{i(2\pi - (p-t)} \right|^{2\alpha_2} &= - i\alpha_5 \frac{z + e^{i(2\pi - (p-t))}}{z - e^{i(2\pi - (p-t))}}.
\end{split}
\end{align}
Therefore we get
\begin{align} \label{eqn:f diff}
\begin{split}
\frac{\partial f(z)}{\partial t} =& if(z) \sum_{k = 1,2,4,5} q_k \left( \alpha_k \frac{z+z_k}{z-z_k} + \beta_k \right)\\ 
=& if(z) \sum_{k = 1,2,4,5} q_k \left( \alpha_k + \beta_k + \frac{2\alpha_k z_k}{z - z_k} \right) \\
=& if(z) \sum_{k = 1,2,4,5} q_k \left( \beta_k + \frac{2\alpha_k z_k}{z - z_k} \right) \\
\end{split}
\end{align}
where $q_k = 1$ for $k = 1,4$ and $q_k = -1$ for $k = 2,5$. In the last line we used that $\sum_{k = 1,2,4,5} q_k \alpha_k = 0$.\\

Set $\tilde{Y}(z) = Y(z)$ in a neighborhood of $z_k$ if $\alpha_k > 0$. If $\alpha_k < 0$ the second column of $Y$ has a term of order $(z - z_k)^{2\alpha_k}$, which explodes as $z \rightarrow z_k$. We set $\tilde{Y}_{j1} = Y_{j1}$, $j = 1,2$, $\tilde{Y}_{j2} = Y_{j2} - c_j(z-z_k)^{2\alpha_k}$ in a neighborhood of $z_k$, with $c_j$ such that $\tilde{Y}$ is bounded in that neighborhood. Then we have

\begin{proposition}
Let $n \in \mathbb{N}$ and $\alpha_k \neq 0$ for $k = 1,2,4,5$. Then the following differential identity holds:
\begin{align} \label{eqn:diff id}
\begin{split}
\frac{1}{i} \frac{\text{d}}{\text{d}t} \log D_n(f_{p,t}) =& \sum_{k = 1,2,4,5} q_k \left( n\beta_k - 2\alpha_k z_k \left( \frac{\text{d}Y^{-1}}{\text{d}z} \tilde{Y} \right)_{22} (z_k) \right), 
\end{split}
\end{align}
with $q_k$ as above and $\left( \frac{\text{d}Y^{-1}}{\text{d}z} \tilde{Y} \right)_{22} (z_k) = \lim_{z \rightarrow z_k} \left(\frac{\text{d}Y^{-1}}{\text{d}z} \tilde{Y} \right)_{22} (z)$ with $z \rightarrow z_k$ non-tangentially to the unit circle.  
\end{proposition}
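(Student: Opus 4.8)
The plan is to reduce the identity (\ref{eqn:diff id}) to a statement about the Christoffel--Darboux kernel of the orthonormal polynomials $\phi_j$ associated with $f_{p,t}$, and then to read that kernel off from the Riemann--Hilbert solution $Y$. First I would use that $D_n(f_{p,t}) = \prod_{j=0}^{n-1}\chi_j^{-2}$ with $\chi_j^{-2} = \|\Phi_j\|^2_{f_{p,t}}$, where $\Phi_j = \chi_j^{-1}\phi_j$ is the monic orthogonal polynomial. Since $\Phi_j$ is monic, $\partial_t\Phi_j$ has degree at most $j-1$ and hence is orthogonal to $\Phi_j$ with respect to $f_{p,t}$, so the cross terms in $\partial_t\|\Phi_j\|^2_{f_{p,t}}$ vanish and only $\frac{1}{2\pi}\int_0^{2\pi}|\Phi_j(e^{i\theta})|^2\,\partial_t f_{p,t}(e^{i\theta})\,\text{d}\theta$ survives. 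Dividing by $\chi_j^{-2}$ (using $\chi_j\Phi_j = \phi_j$) and summing over $j = 0,\dots,n-1$ yields
\[
\frac{\text{d}}{\text{d}t}\log D_n(f_{p,t}) = \frac{1}{2\pi}\int_0^{2\pi} K_n(e^{i\theta},e^{i\theta})\,\partial_t f_{p,t}(e^{i\theta})\,\text{d}\theta, \qquad K_n(z,w) := \sum_{j=0}^{n-1}\phi_j(z)\overline{\phi_j(w)}.
\]

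Next I would substitute the expression (\ref{eqn:f diff}) for $\partial_t f_{p,t}$, which splits into a part constant in $z$ and a sum of simple poles. The constant part contributes $\frac{i}{2\pi}\sum_{k\in\{1,2,4,5\}}q_k\beta_k\int_0^{2\pi}K_n(e^{i\theta},e^{i\theta})f_{p,t}(e^{i\theta})\,\text{d}\theta = i\sum_k q_k\beta_k\,n$, using $\frac{1}{2\pi}\int_0^{2\pi}|\phi_j|^2 f_{p,t}\,\text{d}\theta = 1$; after dividing by $i$ this is exactly the $\sum_k q_k n\beta_k$ part of (\ref{eqn:diff id}), with the signs $q_k$ carried along from (\ref{eqn:f diff}). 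For the pole part it then remains to show that, for each $k$,
\[
\frac{1}{2\pi}\int_0^{2\pi}\frac{K_n(e^{i\theta},e^{i\theta})\,f_{p,t}(e^{i\theta})}{e^{i\theta}-z_k}\,\text{d}\theta = -\Bigl(\frac{\text{d}Y^{-1}}{\text{d}z}\,\tilde Y\Bigr)_{22}(z_k).
\]
Here I would use the Christoffel--Darboux formula to express $K_n(\xi,\xi)f_{p,t}(\xi)$, for $\xi$ on the unit circle, as a bilinear expression in the boundary values of $Y$ and $Y^{-1}$; the confluent limit $w\to z$ in the Christoffel--Darboux identity turns the off-diagonal kernel into a $z$-derivative of $Y^{-1}(w)Y(z)$ at $w=z$, i.e. into entries of $\frac{\text{d}Y^{-1}}{\text{d}z}Y$. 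Rewriting the $\theta$-integral as a contour integral over the unit circle $C$, its only pole on $C$ is at $\xi=z_k$, which is precisely a Fisher--Hartwig singularity of $f_{p,t}$. Deforming $C$ off that point, invoking condition (d) of the RH problem for $Y$ and $\alpha_k>-1/2$ to control the local growth, and replacing $Y$ near $z_k$ by the regularized matrix $\tilde Y$ (this is where $\alpha_k\neq 0$ enters: it ensures the local factor $(z-z_k)^{2\alpha_k}$ rather than a logarithm, so that $\tilde Y$ and the $(2,2)$-entry of $\frac{\text{d}Y^{-1}}{\text{d}z}\tilde Y$ stay bounded), the integral localizes at $z_k$ and the residue collapses to the single entry above, the limit being taken non-tangentially to $C$. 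Multiplying by $2\alpha_k z_k$, summing over $k$, and adding the constant-part contribution gives (\ref{eqn:diff id}).

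The main obstacle is the last step: a singular integral of the Christoffel--Darboux kernel against a factor whose pole sits exactly on the integration contour, at a point where the weight itself is singular. Making this rigorous needs the precise local description of $Y$ near $z_k$ from the RH problem, careful bookkeeping of the boundary values $Y_\pm$, the orientation, and the weight factors in the Christoffel--Darboux formula, and a justification that the $\tilde Y$-regularization isolates exactly the $(2,2)$-entry appearing in (\ref{eqn:diff id}) and nothing more. An essentially equivalent alternative is to differentiate the RH problem for $Y$ in $t$, obtain a jump problem for $(\partial_t Y)\,Y^{-1}$, and match its behaviour at infinity and near the $z_k$; this replaces the singular-integral analysis by a residue matching but confronts the same local difficulty at $z_k$.
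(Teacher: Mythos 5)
Your overall route coincides with the paper's: the paper proves this proposition simply by invoking the proof of Proposition 2.1 in \cite{Claeys2015}, with the formula for $\partial_t f$ there replaced by (\ref{eqn:f diff}), and that proof follows the same orthogonal-polynomial strategy you outline. Your step 1 is correct (the cross terms in $\partial_t\|\Phi_j\|^2$ vanish by orthogonality, giving $\frac{\text{d}}{\text{d}t}\log D_n(f_{p,t})=\frac{1}{2\pi}\int_0^{2\pi}\sum_{j=0}^{n-1}\phi_j(e^{i\theta})\hat\phi_j(e^{-i\theta})\,\partial_t f_{p,t}(e^{i\theta})\,\text{d}\theta$), and so is step 2: the constant part of (\ref{eqn:f diff}) integrates to $\sum_k q_k n\beta_k$, so the proposition is reduced to the identity $\frac{1}{2\pi}\int_0^{2\pi}\frac{K_n(e^{i\theta},e^{i\theta})f_{p,t}(e^{i\theta})}{e^{i\theta}-z_k}\,\text{d}\theta=-\bigl(\tfrac{\text{d}Y^{-1}}{\text{d}z}\tilde Y\bigr)_{22}(z_k)$ for each $k$.

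The gap is that this identity is the entire nontrivial content of the proposition, and you only assert it. The localization you describe is too naive as stated: after the confluent Christoffel--Darboux step, $K_n f_{p,t}$ is not the boundary value of a single function meromorphic near $C$ whose only obstruction is a pole at $z_k$. Written through $Y$, it involves the boundary values $Y_\pm$, which are tied together by the jump condition of the RH problem; the weight $f_{p,t}$ continues off the circle only with branch cuts emanating from all six points $z_0,\dots,z_5$, not just from $z_k$; and the continuations into $|z|<1$ and $|z|>1$ carry the factors $z^{\pm n}$. So one must first write the confluent Christoffel--Darboux sum explicitly as a Wronskian-type expression in the entries of $Y$, split it according to the jump, deform the two pieces to opposite sides of $C$, and verify that the contributions from $z=0$, $z=\infty$ and the remaining singularities (including $\pm 1$) cancel or vanish, before the local terms at $z_k$ --- with the $\tilde Y$-regularization when $\alpha_k<0$ and the non-tangential limit --- collapse to the single $(2,2)$-entry with the stated sign and factor $2\alpha_k z_k$. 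That computation is exactly what the proof of Proposition 2.1 in \cite{Claeys2015} supplies and what the paper's one-line proof defers to; without carrying it out (or the equivalent argument of differentiating the RH problem in $t$ that you mention as an alternative), the proposal reduces (\ref{eqn:diff id}) to an unproved statement essentially equivalent to it rather than establishing it.
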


\noindent \textbf{Proof:} The proof for $\alpha_k \neq 0$, $k = 1,2,4,5$ works exactly like the proof of Proposition 2.1 in \cite{Claeys2015}. We have to modify their (2.16), which we replace with our (\ref{eqn:f diff}). The singularities at $\pm 1$ are independent of $p$ and $t$ and thus always stay within $f_{p,t}$. \qed

\begin{remark}
As in Remark 2.2 in \cite{Claeys2015} one can also get a differential identity for $\log D_n(f_{p,t})$ in the case where $\alpha_k = 0$ for some $k \in \{1,2,4,5\}$, by letting those $\alpha_k$'s go to zero in (\ref{eqn:f diff}), which is continuous in $\alpha_k$ on both sides. 
\end{remark}

\section{Aymptotics of the Orthogonal Polynomials} \label{section:asymptotics of polynomials}

\subsection{Normalization of the RHP}

Set 
\begin{equation}
T(z) = \begin{cases} Y(z) \left( \begin{array}{cc} z^{-n} & 0 \\ 0 & z^n \end{array} \right), & |z| >1, \\ Y(z), & |z| < 1. \end{cases}
\end{equation}
Then by the RH conditions for $Y$, we obtain the following RH condition for $T$:\\

\noindent \textbf{RH problem for $T$}

\begin{enumerate}[label=(\alph*)]
\item $T:\mathbb{C}\setminus C \rightarrow \mathbb{C}^{2\times 2}$ is analytic.

\item The continuous boundary values of $T$ from the inside, $T_+$, and from outside, $T_-$, of the unit circle exist on $C\setminus \{ z_0,...,z_5 \}$, and are related by the jump condition
\begin{equation}
T_+(z) = T_-(z) \left( \begin{array} {cc} z^n & f_{p,t}(z) \\ 0 & z^{-n} \end{array} \right), \quad z \in C\setminus \{z_0,...,z_5\}.
\end{equation}

\item $T(z) = I + \mathcal{O}(1/z), \quad \text{as } z \rightarrow \infty$.

\item As $z \rightarrow z_k$, $z \in \mathbb{C}\setminus C$, $k= 0,...,5$, we have 
\begin{equation}
T(z) = \left( \begin{array}{cc} \mathcal{O}(1) & \mathcal{O}(1) + \mathcal{O}(|z-z_k|^{2\alpha_k}) \\ \mathcal{O}(1) & \mathcal{O}(1) + \mathcal{O}(|z-z_k|^{2\alpha_k}) \end{array}\right), \quad \text{if } \alpha_k \neq 0,
\end{equation}
and 
\begin{equation}
T(z) = \left( \begin{array}{cc} \mathcal{O}(1) & \mathcal{O}(\log |z-z_k|) \\ \mathcal{O}(1) & \mathcal{O}(\log |z-z_k| ) \end{array}\right), \quad \text{if } \alpha_k = 0.
\end{equation}
\end{enumerate}

\subsection{Opening of the Lens}

Define the Szeg\"{o} function 
\begin{equation}
D(z) = \exp \left( \frac{1}{2\pi i} \int_C \frac{\log f_{p,t}(\xi)}{\xi - z} \text{d}\xi\right),
\end{equation}
which is analytic inside and outside of $C$ and satisfies
\begin{equation} \label{eqn:D pm f}
D_+(z) = D_-(z) f_{p,t}(z), \quad z \in C\setminus \{z_0,...,z_5\}.
\end{equation}
We have (see (4.9)-(4.10) in \cite{Deift2011}):
\begin{equation} \label{eqn:D in}
D(z) = e^{\sum_0^\infty V_j z^j} \prod_{k = 0}^5 \left( \frac{z - z_k}{z_ke^{i\pi}} \right)^{\alpha_k + \beta_k} =: D_{\text{in},p,t}(z), \quad |z| < 1,
\end{equation}
and
\begin{equation} \label{eqn:D out}
D(z) = e^{-\sum_{-\infty}^{-1} V_j z^j} \prod_{k = 0}^5 \left( \frac{z - z_k}{z} \right)^{-\alpha_k + \beta_k} =: D_{\text{out},p,t}(z), \quad |z| > 1,
\end{equation}
and thus
\begin{equation}
D_{\text{out},p,t}(z)^{-1} = e^{\sum_{-\infty}^{-1} V_j z^j} \prod_{k = 0}^5 \left( \frac{z - z_k}{z} \right)^{\alpha_k - \beta_k}.
\end{equation}
The branch of $(z-z_k)^{\alpha_k \pm \beta_k}$ is fixed by the condition that $\arg (z-z_k) = 2\pi$ on the line going from $z_k$ to the right parallel to the real axis, and the branch cut is the line $\theta = \theta_k$ going from $z = z_k = e^{i\theta}$ to infinity. For any $k$, the branch cut of the root $z^{\alpha_k - \beta_k}$ is the line $\theta = \theta_k$ from $z = 0$ to infinity, and $\theta_k < \arg z < \theta_k + 2\pi$. By (\ref{eqn:D pm f}) we have that 
\begin{equation}
f_{p,t}(e^{i\theta}) = D_{\text{in},p,t}(e^{i\theta}) D_{\text{out},p,t}(e^{i\theta})^{-1},
\end{equation}
and this function extends analytically to a neighborhood $\mathcal{S}$ of the unit circle with the 6 branch cuts $z_k\mathbb{R}^+ \cap \mathcal{S}$, $k = 0,...5$, which we orient away from zero. Then we obtain for the jumps of $f_{p,t}$:
\begin{align} \label{eqn:f jumps}
\begin{split}
f_{p,t+}(z) =& f_{p,t-}(z) e^{2\pi i (\alpha_j - \beta_j)}, \quad \text{on } z_j(0,1) \cap \mathcal{S},\\
f_{p,t+}(z) =& f_{p,t-}(z) e^{-2\pi i (\alpha_j + \beta_j)}, \quad \text{on } z_j(1,\infty) \cap \mathcal{S}.
\end{split}
\end{align}

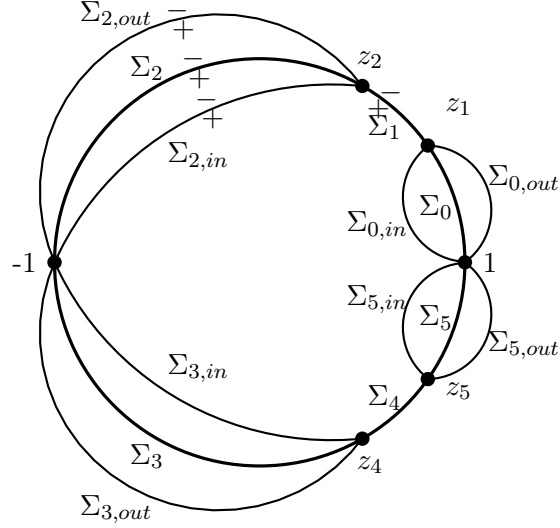
\begin{figure}[H] 
\centering
\begin{tikzpicture}[scale = 0.9]

\def\a{35} \def\b{60} \def\r{3}

\def\rsmallA{( (0.85*\r*sin(0.5*\a))^2 + (\r -0.85*\r*cos(0.5*\a))^2)^0.5}

\def\rsmallB{( (0.95*\r*sin(0.5*(\b-\a)))^2 + (\r - 0.95*\r*cos(0.5*(\b-\a)))^2)^0.5}

\def\rsmallC{( (0.4*\r*sin(0.5*(180-\b)))^2 + (\r - 0.4*\r*cos(0.5*(180-\b)))^2)^0.5}

\def\rbigC{( (0.7*\r*sin(0.5*(180-\b)))^2 + (\r + 0.7*\r*cos(0.5*(180-\b)))^2)^0.5}

\draw[name path=ellipse,black,very thick]
(0,0) circle[x radius = \r cm, y radius = \r cm];

\coordinate (1) at ({\r*cos(\a)}, {\r*sin(\a)});
\coordinate (2) at ({\r*cos(\b)}, {\r*sin(\b)});
\coordinate (4) at ({\r*cos(\b)}, {-\r*sin(\b)});
\coordinate (5) at ({\r*cos(\a)}, {-\r*sin(\a)});
	
\fill (1) circle (3pt) node[right,xshift=0.1cm,yshift=0.51cm] {$z_1$};
\fill (2) circle (3pt) node[above,xshift=0.1cm,yshift=0.1cm] {$z_2$};
\fill (4) circle (3pt) node[below,xshift=0.1cm,yshift=-0.1cm] {$z_4$};
\fill (5) circle (3pt) node[right,xshift=0.1cm,yshift=-0.15cm] {$z_5$};
\fill (\r,0) circle (3pt) node[right,xshift=0.1cm] {1};
\fill (-\r,0) circle (3pt) node[left,xshift=-0.1cm] {-1};

%From 1 to z1
\draw [black,thick,domain=-50:85] plot ({0.85*\r*cos(0.5*\a) + \rsmallA*cos(\x)},{0.85*\r*sin(0.5*\a) + \rsmallA*sin(\x)});

\draw [black,thick,domain=131:265] plot ({1.065*\r*cos(0.5*\a) + \rsmallA*cos(\x)},{1.065*\r*sin(0.5*\a) + \rsmallA*sin(\x)});

\fill ({1.35*\r*cos(0.5*\a)},{1.35*\r*sin(0.5*\a)}) node[] {$\Sigma_{0,out}$};
\fill ({0.9*\r*cos(0.5*\a)},{0.9*\r*sin(0.5*\a)}) node[] {$\Sigma_{0}$};
\fill ({0.6*\r*cos(0.5*\a)},{0.6*\r*sin(0.5*\a)}) node[] {$\Sigma_{0,in}$};

%From 1 to z5
\draw [black,thick,domain=-50:85] plot ({0.85*\r*cos(0.5*\a) + \rsmallA*cos(\x)},{0.85*\r*sin(-0.5*\a) + \rsmallA*sin(-\x)});

\draw [black,thick,domain=131:265] plot ({1.065*\r*cos(0.5*\a) + \rsmallA*cos(\x)},{1.065*\r*sin(-0.5*\a) + \rsmallA*sin(-\x)});

\fill ({1.35*\r*cos(0.5*\a)},{-1.35*\r*sin(0.5*\a)}) node[] {$\Sigma_{5,out}$};
\fill ({0.9*\r*cos(0.5*\a)},{-0.9*\r*sin(0.5*\a)}) node[] {$\Sigma_{5}$};
\fill ({0.6*\r*cos(0.5*\a)},{-0.6*\r*sin(0.5*\a)}) node[] {$\Sigma_{5,in}$};

%From z2 to -1
\draw [black,thick,domain=35:205] plot ({0.4*\r*cos(90+0.5*\b) + \rsmallC*cos(\x)},{0.4*\r*sin(90+0.5*\b) + \rsmallC*sin(\x)});

\draw [black,thick,domain=85:155] plot ({-0.7*\r*cos(90+0.5*\b) + \rbigC*cos(\x)},{-0.7*\r*sin(90+0.5*\b) + \rbigC*sin(\x)});

\fill ({1.4*\r*cos(0.5*\b+90)},{1.4*\r*sin(0.5*\b+90)}) node[] {$\Sigma_{2,out}$};
\fill ({1.1*\r*cos(0.5*\b+90)},{1.1*\r*sin(0.5*\b+90)}) node[] {$\Sigma_{2}$};
\fill ({0.6*\r*cos(0.5*\b+90)},{0.6*\r*sin(0.5*\b+90)}) node[] {$\Sigma_{2,in}$};

%+-
\fill ({0.74*\r*cos(0.6*\b+0.4*180)},{0.74*\r*sin(0.6*\b+0.4*180)}) node[] {$+$};
\fill ({0.84*\r*cos(0.6*\b+0.4*180)},{0.84*\r*sin(0.6*\b+0.4*180)}) node[] {$-$};

\fill ({0.95*\r*cos(0.6*\b+0.4*180)},{0.95*\r*sin(0.6*\b+0.4*180)}) node[] {$+$};
\fill ({1.05*\r*cos(0.6*\b+0.4*180)},{1.05*\r*sin(0.6*\b+0.4*180)}) node[] {$-$};

\fill ({1.2*\r*cos(0.6*\b+0.4*180)},{1.2*\r*sin(0.6*\b+0.4*180)}) node[] {$+$};
\fill ({1.3*\r*cos(0.6*\b+0.4*180)},{1.3*\r*sin(0.6*\b+0.4*180)}) node[] {$-$};

%From -1 to z4
\draw [black,thick,domain=35:205] plot ({0.4*\r*cos(90+0.5*\b) + \rsmallC*cos(\x)},{0.4*\r*sin(-(90+0.5*\b)) + \rsmallC*sin(-\x)});

\draw [black,thick,domain=85:155] plot ({-0.7*\r*cos(90+0.5*\b) + \rbigC*cos(\x)},{-0.7*\r*sin(-(90+0.5*\b)) + \rbigC*sin(-\x)});

\fill ({1.4*\r*cos(0.5*\b+90)},{-1.4*\r*sin(0.5*\b+90)}) node[] {$\Sigma_{3,out}$};
\fill ({1.1*\r*cos(0.5*\b+90)},{-1.1*\r*sin(0.5*\b+90)}) node[] {$\Sigma_{3}$};
\fill ({0.6*\r*cos(0.5*\b+90)},{-0.6*\r*sin(0.5*\b+90)}) node[] {$\Sigma_{3,in}$};

%From z1 to z2
\fill ({0.9*\r*cos(0.5*(\b+\a))},{0.9*\r*sin(0.5*(\b+\a))}) node[] {$\Sigma_{1}$};

%+-
\fill ({0.95*\r*cos(0.7*\b+0.3*\a)},{0.95*\r*sin(0.7*\b+0.3*\a)}) node[] {$+$};
\fill ({1.05*\r*cos(0.7*\b+0.3*\a)},{1.05*\r*sin(0.7*\b+0.3*\a)}) node[] {$-$};

%From z4 to z5
\fill ({0.9*\r*cos(0.5*(\b+\a))},{-0.9*\r*sin(0.5*(\b+\a))}) node[] {$\Sigma_{4}$};
\end{tikzpicture}
\caption{The jump contour $\Sigma_S$ of $S$.} \label{figure:S1}
\end{figure}

We factorize the jump matrix of $T$ as follows: 
\begin{equation}
\left( \begin{array}{cc} z^n & f_{p,t}(z) \\ 0 & z^{-n} \end{array} \right) = \left( \begin{array}{cc} 1 & 0 \\ z^{-n} f_{p,t}(z)^{-1} & 1 \end{array} \right) \left( \begin{array}{cc} 0 & f_{p,t}(z) \\ - f_{p,t}(z)^{-1} & 0 \end{array} \right) \left( \begin{array}{cc} 1 & 0 \\ z^n f_{p,t}(z)^{-1} & 1 \end{array} \right).
\end{equation}
We then fix a lens-shaped region as in Figure \ref{figure:S1} and define 
\begin{equation}
S(z) = \begin{cases} T(z), & \text{outside the lenses} \\ 
T(z) \left( \begin{array}{cc} 1 & 0 \\ z^{-n} f_{p,t}(z)^{-1} & 1 \end{array} \right), & \text{in the parts of the lenses outside the unit circle},\\
T(z) \left( \begin{array}{cc} 1 & 0 \\ -z^n f_{p,t}(z)^{-1} & 1 \end{array} \right), & \text{in the parts of the lenses inside the unit circle}.
\end{cases}
\end{equation}
The following RH conditions for $S$ can be verified directly:\\

\noindent \textbf{RH problem for $S$}
\begin{enumerate}[label=(\alph*)]
\item $S:\mathbb{C}\setminus \Sigma_S \rightarrow \mathbb{C}^{2\times 2}$ is analytic.\\

\item $S_+(z) = S_-(z) J_S(z)$ for $z \in \Sigma_S\setminus \{z_0,...,z_5\}$, where $J_S$ is given by  
\begin{equation}
\hspace{-1.5cm} J_S(z) = \begin{cases} \left( \begin{array}{cc} 1 & 0 \\ z^{-n} f_{p,t}(z)^{-1} & 1 \end{array} \right), & \text{on } \Sigma_{0,out} \cup \Sigma_{2,out} \cup \Sigma_{3,out} \cup \Sigma_{5,out}, \\
\left( \begin{array}{cc} 0 & f_{p,t}(z) \\ -f_{p,t}(z)^{-1} & 0 \end{array} \right), & \text{on } \Sigma_0 \cup \Sigma_2 \cup \Sigma_3 \cup \Sigma_5,\\
\left( \begin{array}{cc} 1 & 0 \\ z^n f_{p,t}(z)^{-1} & 1 \end{array} \right), & \text{on } \Sigma_{0,in} \cup \Sigma_{2,in} \cup \Sigma_{3,in} \cup \Sigma_{5,in}, \\
\left( \begin{array}{cc} z^n & f_{p,t}(z) \\ 0 & z^{-n} \end{array} \right), & \text{on } \Sigma_1 \cup \Sigma_4.
\end{cases}
\end{equation}

\item $S(z) = I + \mathcal{O}(1/z), \quad \text{as } z \rightarrow \infty$.\\

\item As $z \rightarrow z_k$ from outside the lenses, $k= 0,...,5$, we have 
\begin{equation}
S(z) = \left( \begin{array}{cc} \mathcal{O}(1) & \mathcal{O}(1) + \mathcal{O}(|z-z_k|^{2\alpha_k}) \\ \mathcal{O}(1) & \mathcal{O}(1) + \mathcal{O}(|z-z_k|^{2\alpha_k}) \end{array}\right), \quad \text{if } \alpha_k \neq 0,
\end{equation}
and
\begin{equation}
S(z) = \left( \begin{array}{cc} \mathcal{O}(1) & \mathcal{O}(\log |z-z_k|) \\ \mathcal{O}(1) & \mathcal{O}(\log |z-z_k| ) \end{array}\right), \quad \text{if } \alpha_k = 0.
\end{equation}
The behaviour of $S(z)$ as $z \rightarrow z_k$ from the other regions is obtained from these expressions by application of the appropriate jump conditions. 
\end{enumerate}

Fix $\delta_1,\delta_2 > 0$ such that the discs 
\begin{equation}
U_{\pm 1}: = \{z: |z-\pm 1| < \delta_1\}, \quad U_{\pm} := \{z: |z - e^{\pm ip}| < \delta_2\}
\end{equation}
are disjoint for any $p \in (\epsilon, \pi - \epsilon)$. Let $t_0 \in (0,\epsilon)$ such that $e^{i(p \pm t)} \in U_{+}$ and $e^{i(2\pi - (p \pm t))} \in U_{-}$ for one and hence for all $p \in (\epsilon, \pi - \epsilon)$. Then one observes that on the inner and out jump contours and outside of $U_1 \cup U_{-1} \cup U_{+} \cup U_{-}$ the jump matrix $J_S(z)$ converges to the identity matrix as $n\rightarrow \infty$, uniformly in $z$, $t < t_0$ and $p \in (\epsilon, \pi - \epsilon)$.

\subsection{Global Parametrix}

Define the function
\begin{align} \label{eqn:N}
\begin{split}
N(z) =& \begin{cases} \left( \begin{array}{cc} D_{\text{in},p,t}(z) & 0 \\ 0 & D_{\text{in},p,t}(z)^{-1} \end{array} \right) \left( \begin{array}{cc} 0 & 1 \\ -1 & 0 \end{array} \right), & |z| < 1, \\
\left( \begin{array}{cc} D_{\text{out},p,t}(z) & 0 \\ 0 & D_{\text{out},p,t}(z)^{-1} \end{array} \right), & |z| > 1. 
\end{cases}
\end{split}
\end{align}
One can easily verify that $N$ satisfies the following RH conditions:\\

\noindent \textbf{RH problem for $N$}
\begin{enumerate}[label=(\alph*)]
\item $N:\mathbb{C}\setminus{C} \rightarrow \mathbb{C}^{2\times 2}$ is analytic.\\

\item $N_+(z) = N_-(z)\left( \begin{array}{cc} 0 & f_{p,t}(z) \\ - f_{p,t}(z)^{-1} & 0 \end{array} \right)$ for $z \in C\setminus \{z_0,...,z_5\}$. \\

\item $N(z) = I + \mathcal{O}(1/z)$ as $z \rightarrow \infty$. 
\end{enumerate} 

\subsection{Local Parametrix near $\pm 1$} \label{section:local plus minus 1}

The local parametrix near $\pm 1$ are constructed in exactly the same way as in \cite{Deift2011}. We are looking for a solution of the following RHP:\\

\noindent \textbf{RH problem for $P_{\pm 1}(z)$}
\begin{enumerate}[label=(\alph*)]
\item $P_{\pm 1}: U_{\pm 1} \setminus \Sigma_S \rightarrow \mathbb{C}^{2\times 2}$ is analytic.

\item $P_{\pm 1}(z)_+ = P_{\pm 1}(z)_-J_S(z)$ for $z \in U_{\pm 1} \cap \Sigma_S$. 

\item As $z \rightarrow \pm 1$, $S(z)P_{\pm 1}(z)^{-1} = \mathcal{O}(1)$. 

\item $P_{\pm 1}$ satisfies the matching condition $P_{\pm 1}(z) N^{-1}(z) = I + o(1)$ as $n \rightarrow \infty$, uniformly in $z \in \partial U_{\pm 1}$, $p \in (\epsilon, \pi - \epsilon)$ and $0 < t < t_0$.
\end{enumerate}

$P_{\pm 1}$ is given by (4.15), (4.23), (4.24), (4.47)-(4.50) in \cite{Deift2011} and one can see from their construction that when all the other singularities are bounded away from $\pm 1$, then the matching condition is uniform in the location of the other singularities, i.e. holds uniformly in $p \in (\epsilon, \pi - \epsilon)$ and $0 < t < t_0$.

\subsection{$0 < t \leq \omega(n)/n$. Local Parametrices near $e^{\pm ip}$} \label{section:local 1}

Let $\omega(x)$ be a positive, smooth function for $x$ sufficiently large, s.t. 
\begin{equation}
\omega(x) \rightarrow \infty, \quad \omega(x) = o(x), \quad \text{as } x \rightarrow \infty.
\end{equation}
For $0 < t \leq 1/n$ and $1/n < t \leq \omega(n)/n$ we will construct local parametrices in $U_{\pm}$ which satisfy the same jump and growth conditions as $S$ inside $U_{\pm}$, and which match with the global parametrix $N$ on the boundaries $\partial U_{\pm}$ for large $n$. To be precise, we will construct $P_{\pm}$ satisfying the following conditions:\\

\noindent \textbf{RH problem for $P_{\pm}(z)$}
\begin{enumerate}[label=(\alph*)]
\item $P_{\pm}: U_{\pm} \setminus \Sigma_S \rightarrow \mathbb{C}^{2\times 2}$ is analytic.

\item $P_{\pm}(z)_+ = P_{\pm}(z)_-J_S(z)$ for $z \in U_{\pm} \cap \Sigma_S$. 

\item As $n \rightarrow \infty$, we have
\begin{equation}
P_{\pm}(z) N^{-1}(z) = (I+o(1)), \quad z \in \partial U_{\pm}, 
\end{equation}
uniformly for $p \in (\epsilon, \pi - \epsilon)$ and $0 < t < t_0$.

\item As $z \rightarrow z_k$, $S(z)P_{\pm}(z)^{-1} = \mathcal{O}(1)$, $k = 1,2$ for $+$ and $k = 4,5$ for $-$. 
\end{enumerate}

\subsubsection{RH problem for $\Phi_\pm$}
Define
\begin{align} \label{eqn:Phi Psi}
\begin{split}
\Phi_{+} (\zeta,s) =& \Psi_+(\zeta,s) \begin{cases} 1, & -1 < \Im < 1, \\
e^{\pi i(\alpha_2 - \beta_2)\sigma_3}, & \Im \zeta > 1, \\
e^{-\pi i(\alpha_1 - \beta_1)\sigma_3}, & \Im \zeta < - 1,
\end{cases}\\
\Phi_{-} (\zeta,s) =& \Psi_+(\zeta,s) \begin{cases} 1, & -1 < \Im < 1, \\
e^{\pi i(\alpha_1 + \beta_1)\sigma_3}, & \Im \zeta > 1, \\
e^{-\pi i(\alpha_2 + \beta_2)\sigma_3}, & \Im \zeta < - 1,
\end{cases}
\end{split}
\end{align}
where $\Psi_{+}(\zeta,s)$ equals $\Psi(\zeta,s)$, defined in Appendix \ref{appendix:Psi}, and $\Psi_-(\zeta,s)$ equals $\Psi(\zeta,s)$ with $(\alpha_1,\alpha_2,\beta_1,\beta_2)$ in the appendix changed to $(\alpha_4,\alpha_5,\beta_4,\beta_5) = (\alpha_2, \alpha_1, -\beta_2, -\beta_1)$. The RH conditions for $\Phi_\pm$ follow directly from the RHP for $\Psi$. \\ 

\begin{figure} [H]
\centering
\begin{tikzpicture}
%Contour 
\draw [thick] (0,-2) -- (0,2);
\draw [thick] (-3,-2) -- (3,-2);
\draw [thick] (-3,2) -- (3,2);
\draw [thick] (0,-2) -- (2,-4);
\draw [thick] (0,-2) -- (-2,-4);
\draw [thick] (0,2) -- (2,4);
\draw [thick] (0,2) -- (-2,4);

%Jump matrices
\fill (-0.3,1.7) node[] {$+i$};
\fill (-0.3,-1.7) node[] {$-i$};
\fill (-1,0) node[] {$\left( \begin{array}{cc} 0 & 1 \\ -1 & 1 \end{array} \right)$};
\fill (3,4) node[] {$\left( \begin{array}{cc} 1 & 1 \\ 0 & 1 \end{array} \right)$};
\fill (-3,4) node[] {$\left( \begin{array}{cc} 1 & 0 \\ -1 & 1 \end{array} \right)$};
\fill (4,2) node[] {$e^{\pi i(\alpha_2 + \beta_2) \sigma_3}$};
\fill (-4,2) node[] {$e^{\pi i(\alpha_2 - \beta_2) \sigma_3}$};
\fill (4,-2) node[] {$e^{\pi i(\alpha_1 + \beta_1) \sigma_3}$};
\fill (-4,-2) node[] {$e^{\pi i(\alpha_1 - \beta_1) \sigma_3}$};
\fill (3,-4) node[] {$\left( \begin{array}{cc} 1 & 1 \\ 0 & 1 \end{array} \right)$};
\fill (-3,-4) node[] {$\left( \begin{array}{cc} 1 & 0 \\ -1 & 1 \end{array} \right)$};

%Orientation of Contour
\node[fill=black,regular polygon, regular polygon sides=3,inner sep=1.pt, shape border rotate = -45] at (1,3) {};
\node[fill=black,regular polygon, regular polygon sides=3,inner sep=1.pt, shape border rotate = 45] at (-1,3) {};
\node[fill=black,regular polygon, regular polygon sides=3,inner sep=1.pt, shape border rotate = 0] at (0,1) {};
\node[fill=black,regular polygon, regular polygon sides=3,inner sep=1.pt, shape border rotate = 0] at (0,-1) {};
\node[fill=black,regular polygon, regular polygon sides=3,inner sep=1.pt, shape border rotate = 45] at (1,-3) {};
\node[fill=black,regular polygon, regular polygon sides=3,inner sep=1.pt, shape border rotate = -45] at (-1,-3) {};
\node[fill=black,regular polygon, regular polygon sides=3,inner sep=1.pt, shape border rotate = -90] at (1.5,2) {};
\node[fill=black,regular polygon, regular polygon sides=3,inner sep=1.pt, shape border rotate = -90] at (1.5,-2) {};
\node[fill=black,regular polygon, regular polygon sides=3,inner sep=1.pt, shape border rotate = -90] at (-1.5,2) {};
\node[fill=black,regular polygon, regular polygon sides=3,inner sep=1.pt, shape border rotate = -90] at (-1.5,-2) {};
\end{tikzpicture}
\caption{The jump contour $\Sigma$ and the jump matrices for $\Phi_+$. $\Phi_-$ has the same jump contour, while the jump matrices of $\Phi_-$ are given by replacing $(\alpha_1,\alpha_2, \beta_1,\beta_2)$ with $(\alpha_2,\alpha_1,-\beta_2,-\beta_1)$ in the jump matrices of $\Phi_+$.} \label{figure:Phi}
\end{figure}
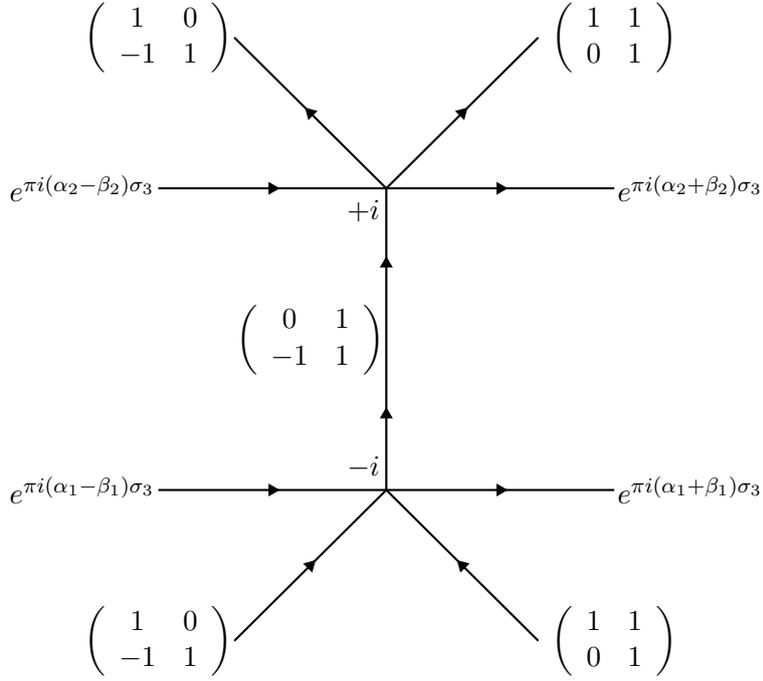

\newpage 
\noindent \textbf{RH Problem for $\Phi_\pm$}

\begin{enumerate}[label = (\alph*)]
\item $\Phi_\pm:\mathbb{C} \setminus \Sigma \rightarrow \mathbb{C}^{2\times 2}$ is analytic, where 
\begin{align}
\Sigma =& \cup_{k = 1}^9 \Sigma_k,  &&\Sigma_1 = i + e^{\frac{i\pi}{4}} \mathbb{R}_+, &&\Sigma_2 = i + e^{\frac{3i\pi}{4}\mathbb{R}_+} \nonumber \\
\Sigma_3 =& i - \mathbb{R}_+, &&\Sigma_4 = -i - \mathbb{R}_+, &&\Sigma_5 = -i + e^{- \frac{3i\pi}{4}}\mathbb{R}_+, \nonumber \\
\Sigma_6 =& -i + e^{\frac{i\pi}{4}} \mathbb{R}_+, &&\Sigma_7 = -i + \mathbb{R}_+, && \Sigma_8 = i + \mathbb{R}_+,\\
\Sigma_9 =& [-i,i], \nonumber
\end{align}
with the orientation chosen as in Figure \ref{figure:Phi} ("-" is always on the RHS of the contour).
\item $\Phi_+$ satisfies the jump conditions
\begin{equation}
\Phi_+(\zeta)_+ = \Phi_+(\zeta)_- V_k, \quad \zeta \in \Sigma_k,
\end{equation}
where 
\begin{align}
V_1 =& \left( \begin{array}{cc} 1 & 1 \\ 0 & 1 \end{array} \right), && V_2 = \left( \begin{array}{cc} 1 & 0 \\ - 1 & 1 \end{array} \right), \nonumber \\
V_3 =& e^{\pi i(\alpha_2 - \beta_2)\sigma_3}, && V_4 = e^{\pi i(\alpha_1 + \beta_1)\sigma_3}, \nonumber \\
V_5 =& \left( \begin{array}{cc} 1 & -1 \\ 0 & 1 \end{array} \right), && V_6 = \left( \begin{array}{cc} 1 & 1 \\ 0 & 1 \end{array} \right), \\
V_7 =& e^{\pi i(\alpha_2 + \beta_2)\sigma_3}, && V_8 = e^{\pi i(\alpha_1 + \beta_1)\sigma_3}, \nonumber \\
V_9 =& \left( \begin{array}{cc} 0 & 1 \\ -1 & 1 \end{array} \right). \nonumber
\end{align}
The jump conditions of $\Phi_-$ are given by replacing $(\alpha_1,\alpha_2,\beta_1,\beta_2)$ with $(\alpha_2,\alpha_1,-\beta_2,-\beta_1)$ in the jump matrices of $\Phi_+$.

\item We have in all regions:
\begin{equation} \label{eqn:Phi asymptotics}
\Phi_\pm(\zeta) = \left( I + \frac{\Psi_{\pm,1}}{\zeta} + \frac{\Psi_{\pm,2}}{\zeta^2} + \mathcal{O}(\zeta^{-3}) \right) \hat{P}^{(\infty)}_\pm (\zeta) e^{-\frac{is}{4} \zeta \sigma_3}, \quad \text{as } \zeta \rightarrow \infty,
\end{equation}
where 
\begin{align}
\begin{split}
\hat{P}^{(\infty)}_+(\zeta,s) =& P^{(\infty)}_+(\zeta,s) \begin{cases} 1, & -1 < \Im < 1, \\
e^{\pi i(\alpha_2 - \beta_2)\sigma_3}, & \Im \zeta > 1, \\
e^{-\pi i(\alpha_1 - \beta_1)\sigma_3}, & \Im \zeta < - 1,
\end{cases}\\
\hat{P}^{(\infty)}_{-} (\zeta,s) =& P^{(\infty)}_+(\zeta,s) \begin{cases} 1, & -1 < \Im < 1, \\
e^{\pi i(\alpha_1 + \beta_1)\sigma_3}, & \Im \zeta > 1, \\
e^{-\pi i(\alpha_2 + \beta_2)\sigma_3}, & \Im \zeta < - 1,
\end{cases}
\end{split}
\end{align}
with 
\begin{align}
\begin{split}
P_+^{(\infty)}(\zeta,s) =& \left( \frac{is}{2} \right)^{-(\beta_1+\beta_2)\sigma_3} (\zeta - i)^{-\beta_2 \sigma_3} (\zeta + i)^{-\beta_1 \sigma_3}, \\
P_-^{(\infty)}(\zeta,s) =& \left( \frac{is}{2} \right)^{(\beta_1+\beta_2)\sigma_3} (\zeta - i)^{\beta_1 \sigma_3} (\zeta + i)^{\beta_2 \sigma_3},
\end{split}
\end{align}
with the branches corresponding to the arguments between $0$ and $2\pi$, and where $s \in -i\mathbb{R}_+$.

\item $\Phi_\pm$ has singular behaviour near $\pm i$ which is inherited from $\Psi$. The precise conditions follow from (\ref{eqn:Phi Psi}), (\ref{eqn:F_1 neq 0}), (\ref{eqn:F_2 neq 0}), (\ref{eqn:F_1 0}) and (\ref{eqn:F_2 0}). 
\end{enumerate}

\subsubsection{$0 < t \leq \omega(n)/n$. Construction of a Local Parametrix near $e^{\pm ip}$ in terms of $\Phi_\pm$}

We choose $P_{\pm}$ as in (7.21) in \cite{Claeys2015}, i.e. 
\begin{equation} \label{eqn:P1}
P_{\pm}(z) = E_\pm (z) \Phi_{\pm} \left( \zeta;s \right) W_\pm(z), \quad \zeta = \frac{1}{t} \log \frac{z}{e^{\pm ip}}, \quad s = -2int,
\end{equation}
\begin{itemize}
\item where $\Im \log$ takes values in $(-\sigma, \sigma)$ for some $\sigma > 0$,
\item where $E_\pm$ is an analytic matrix-valued function in $U_{\pm}$, 
\item and where $W$ is given by 
\begin{align} \label{eqn:W}
\begin{split}
W_\pm(z) =& \begin{cases} -z^{\frac{n}{2}\sigma_3} f_{p,t}(z)^{-\frac{\sigma_3}{2}} \sigma_3, & \text{for } |z| < 1, \\
z^{\frac{n}{2}\sigma_3} f_{p,t}(z)^{\frac{\sigma_3}{2}} \sigma_1, & \text{for } |z| > 1,
\end{cases} \\
\sigma_1 =& \left( \begin{array}{cc} 0 & 1 \\ 1 & 0 \end{array} \right), \quad \sigma_3 = \left( \begin{array}{cc} 1 & 0 \\ 0 & -1 \end{array} \right).
\end{split}
\end{align} 
\end{itemize}
The singularities $z = e^{i(p \pm t)}$ for $+$ and $z = e^{i(2\pi - (p \mp t))}$ for $-$ correspond to the values $\zeta = \pm i$. The jumps of $W_\pm$ follow from (\ref{eqn:f jumps}):
\begin{align}
\begin{split}
W_\pm(z)_+ =& W_\pm(z)_- \left( \begin{array}{cc} 0 & f_{p,t}(z) \\ - f_{p,t}^{-1}(z) & 0 \end{array} \right), \quad z \in C, \\
W_\pm(z)_+ =& W_\pm(z)_- e^{-\pi i(\alpha_j - \beta_j)\sigma_3}, \quad z \in z_j(0,1),\\
W_\pm(z)_+ =& W_\pm(z)_- e^{\pi i(\alpha_j + \beta_j)\sigma_3}, \quad z \in z_j(1,\infty).
\end{split}
\end{align}

Choose $\Sigma_S$ such that $\frac{1}{t} \log \left( \frac{\Sigma_S \cap U_{\pm}}{e^{\pm ip}} \right) \subset \Sigma \cup i\mathbb{R}$, where $\Sigma$ is the contour of the RHP for $\Phi_\pm$, as shown in Figure \ref{figure:Phi}. Inside $U_{\pm}$ the combinded jumps of $W(z)$ and $\Phi$ are the same as the jumps of $S$:
\begin{align} \label{eqn:jumps}
P_\pm(z)_+ =& E_\pm(z) \Psi_\pm(z)_- e^{\pi i(\alpha_j - \beta_j)\sigma_3} W_\pm(z)_-e^{-\pi i(\alpha_j - \beta_j)\sigma_3} = P_\pm(z)_-, \quad z \in z_j(0,1), \nonumber \\
P_\pm(z)_+ =& E_\pm(z) \Psi_\pm(z)_- e^{\pi i(\alpha_j + \beta_j)\sigma_3} W_\pm(z)_-e^{\pi i(\alpha_j + \beta_j)\sigma_3} = P_\pm(z)_-, \quad z \in z_j(1,\infty), \nonumber \\
P_\pm(z)_+ =& E_\pm(z) \Psi_\pm(z) W_\pm(z)_- \left( \begin{array}{cc} 0 & f_{p,t}(z) \\ - f_{p,t}^{-1}(z) & 0 \end{array} \right) \nonumber \\
=& P_\pm(z)_- \left( \begin{array}{cc} 0 & f_{p,t}(z) \\ - f_{p,t}^{-1}(z) & 0 \end{array} \right), \quad z \in \Sigma_k \cap U_\pm, k = 0,2,3,5, \\  
P_\pm(z)_+ =& E_\pm(z) \Psi_\pm(z)_- \left( \begin{array}{cc} 0 & 1 \\ -1 & 1 \end{array} \right) W_\pm(z)_- \left( \begin{array}{cc} 0 & f_{p,t}(z) \\ - f_{p,t}^{-1}(z) & 0 \end{array} \right) \nonumber \\
=& P_\pm(z)_- \left( \begin{array}{cc} z^n & f_{p,t}(z) \\ 0 & z^{-n} \end{array} \right), \quad z \in \Sigma_k \cap U_\pm, k = 1,4, \nonumber
\end{align}
and
\begin{align}
\begin{split}
P_\pm(z)_+ =& E_\pm(z) \Psi_\pm(z)_- \left( \begin{array}{cc} 1 & 0 \\ -1 & 1 \end{array} \right) W_\pm(z) \\
=& P_\pm(z)_- \left( \begin{array}{cc} 1 & 0 \\ z^n f_{p,t}(z)^{-1} & 1 \end{array} \right), \quad z \in \Sigma_{k,in} \cap U_\pm, k = 0,2,3,5, \\
P_\pm(z)_+ =& E_\pm(z) \Psi_\pm(z)_- \left( \begin{array}{cc} 1 & 1 \\ 0 & 1 \end{array} \right) W_\pm(z) \\
=& P_\pm(z)_- \left( \begin{array}{cc} 1 & 0 \\ z^{-n} f_{p,t}(z)^{-1} & 1 \end{array} \right), \quad z \in \Sigma_{k,out} \cap U_\pm, k = 0,2,3,5. 
\end{split}
\end{align} 
By the condition (d) of the RHP for $S$, the singular behaviour of $W$ near $z_k$, $k = 1,2,4,5$ and condition (d) of the RHP for $\Phi_\pm$, the singularities of $S(z)P_{\pm}(z)^{-1}$ at $z_1,z_2$ for $+$, and at $z_4,z_5$ for $-$, are removable.\\

What remains is to choose $E_\pm$ such that the matching condition (c) of $P_\pm$ holds. Define
\begin{equation} \label{eqn:E}
E_\pm(z) = \sigma_1 (D_{in,p,t}(z) D_{out,p,t}(z))^{-\frac{1}{2}\sigma_3} e^{\mp ip \frac{n}{2} \sigma_3} \hat{P}_\pm^{(\infty)} (\zeta,s)^{-1},
\end{equation}
From (\ref{eqn:D in}) and (\ref{eqn:D out}) one quickly sees that the branch cuts and singularities of \\
$(D_{in,p,t}(z)D_{out,p,t}(z))^{-\frac{1}{2}\sigma_3}$ cancel out with those of $\hat{P}_\pm^{(\infty)}(z)^{-1}$, so that $E_\pm$ is analytic in $U_\pm$. In exactly the same way as in the proof of Proposition 7.1 in \cite{Claeys2015} one can now see that the matching condition (c) is satisfied, i.e. we get: 
\begin{proposition} \label{prop:matching}
As $n \rightarrow \infty$ we have
\begin{align} \label{eqn:matching1}
\begin{split}
P_\pm(z) N(z)^{-1} =& (I+\mathcal{O}(n^{-1})),
\end{split}
\end{align}
uniformly for $z \in \partial U_\pm$, $p \in (\epsilon, \pi - \epsilon)$ and $0 < t < t_0$ with $t_0$ sufficiently small.
\end{proposition}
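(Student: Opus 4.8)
The plan is to follow the proof of Proposition~7.1 in \cite{Claeys2015} essentially verbatim, the only genuinely new feature being the pair of fixed Fisher--Hartwig singularities at $\pm 1$; these enter the construction only through $D_{\text{in},p,t}$ and $D_{\text{out},p,t}$ in (\ref{eqn:D in})--(\ref{eqn:D out}), and since $U_{1}\cup U_{-1}$ is disjoint from $U_{\pm}$ the factors they produce are analytic on $\overline{U_{\pm}}$ and bounded, uniformly in $p\in(\epsilon,\pi-\epsilon)$ and $0<t<t_0$, so they do not affect the estimate. Conditions (a), (b) and (d) of the RH problem for $P_{\pm}$ have already been verified in (\ref{eqn:jumps}), and $E_{\pm}$ given by (\ref{eqn:E}) is analytic in $U_{\pm}$, so the whole content of the proposition is the matching (\ref{eqn:matching1}) on $\partial U_{\pm}$. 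The first step is to observe that on $\partial U_{\pm}$ one has $|z-e^{\pm ip}|=\delta_2$ fixed, hence
\begin{equation}
|\zeta| \;=\; \frac{1}{t}\left|\log\frac{z}{e^{\pm ip}}\right| \;\asymp\; \frac{\delta_2}{t}
\end{equation}
is large, uniformly in $z\in\partial U_{\pm}$ and in $p$; this is exactly what makes the large-$\zeta$ expansion (\ref{eqn:Phi asymptotics}) of $\Phi_{\pm}$ applicable when we insert it into $P_{\pm}N^{-1}=E_{\pm}\Phi_{\pm}W_{\pm}N^{-1}$, the key point being that the remainder in (\ref{eqn:Phi asymptotics}) is uniform for $s\in-i\mathbb{R}_{+}$.

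The second step is the algebraic cancellation the definition (\ref{eqn:E}) of $E_{\pm}$ is engineered to produce. With $s=-2int$ and $\zeta=\tfrac1t\log(z/e^{\pm ip})$ one has $-\tfrac{is}{4}\zeta=-\tfrac{n}{2}\log(z/e^{\pm ip})$, so
\begin{equation}
e^{-\frac{is}{4}\zeta\sigma_3}\;=\;z^{-\frac{n}{2}\sigma_3}\,e^{\pm ip\frac{n}{2}\sigma_3},
\end{equation}
and therefore in $\hat{P}^{(\infty)}_{\pm}(\zeta,s)\,e^{-\frac{is}{4}\zeta\sigma_3}\,W_{\pm}(z)$ the factors $z^{\pm\frac{n}{2}\sigma_3}$ of $W_{\pm}$ from (\ref{eqn:W}) cancel. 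Substituting $f_{p,t}=D_{\text{in},p,t}D_{\text{out},p,t}^{-1}$ from (\ref{eqn:D pm f}), the definition (\ref{eqn:N}) of $N$, and (\ref{eqn:E}), a short computation --- on $|z|>1$ it reduces to $\sigma_1(D_{\text{in}}D_{\text{out}})^{-\sigma_3/2}f_{p,t}^{\sigma_3/2}\sigma_1=\sigma_1 D_{\text{out}}^{-\sigma_3}\sigma_1=D_{\text{out}}^{\sigma_3}=N(z)$, and similarly on $|z|<1$ using the extra constant block in (\ref{eqn:N}) --- gives the exact identity
\begin{equation}
E_{\pm}(z)\,\hat{P}^{(\infty)}_{\pm}(\zeta,s)\,e^{-\frac{is}{4}\zeta\sigma_3}\,W_{\pm}(z)\,N(z)^{-1}\;=\;I.
\end{equation}
This is the step requiring the careful bookkeeping of branches of the root functions, exactly as in \cite{Deift2011} and \cite{Claeys2015}: the local factors $(z-z_j)^{\alpha_j\pm\beta_j}$ sitting inside $D_{\text{in},p,t}$ and $D_{\text{out},p,t}$ are traded, in $\hat{P}^{(\infty)}_{\pm}$, for powers of $\zeta\mp i=\tfrac1t\log(z/z_j)$, and one has to confirm that the ratio of the two is analytic and $1+\mathcal{O}(z-e^{\pm ip})$ on $U_{\pm}$ --- this is already implicit in the statement, used above, that $E_{\pm}$ is analytic in $U_{\pm}$.

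Combining the two steps yields
\begin{equation}
P_{\pm}(z)\,N(z)^{-1}\;=\;E_{\pm}(z)\Bigl(I+\tfrac1\zeta\Psi_{\pm,1}+\mathcal{O}(\zeta^{-2})\Bigr)E_{\pm}(z)^{-1},\qquad z\in\partial U_{\pm}.
\end{equation}
Since $\beta_j\in i\mathbb{R}$ and $is/2\in\mathbb{R}_{+}$, every power-function prefactor occurring in $E_{\pm}$ and in $\hat{P}^{(\infty)}_{\pm}$ has modulus bounded above and below on $\partial U_{\pm}$, uniformly in $p$ and $t$; together with the boundedness of the $D$-factors and of $f_{p,t}^{\pm\sigma_3/2}$ there, this shows that conjugation by $E_{\pm}$ does not change orders of magnitude. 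As $\Psi_{\pm,1}$ is uniformly bounded for $s\in-i\mathbb{R}_{+}$ --- part of the uniform control of the Painlev\'e transcendent behind Theorem~\ref{thm:painleve}, obtained from the RH analysis of $\Psi$ in Appendix~\ref{appendix:Psi} --- we conclude $P_{\pm}N^{-1}=I+\mathcal{O}(|\zeta|^{-1})$, which on $\partial U_{\pm}$ is $I+\mathcal{O}(t)$ and hence gives (\ref{eqn:matching1}) in the range of $t$ relevant here. The step I expect to be the main obstacle is not any one estimate but the \emph{uniformity bookkeeping}: one must check that the implied constants in the $\Phi_{\pm}$-expansion, in the cancellation of the branch cuts defining $E_{\pm}$, and in the conjugation bound are all simultaneously independent of $p\in(\epsilon,\pi-\epsilon)$ and of $t\in(0,t_0)$. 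This is precisely where the hypotheses $p\in(\epsilon,\pi-\epsilon)$ and $t_0<\epsilon$ enter, as they keep $U_{1},U_{-1},U_{+},U_{-}$ pairwise disjoint and keep the singularities $e^{i(p\pm t)}$ (respectively $e^{i(2\pi-p\mp t)}$) inside $U_{+}$ (respectively $U_{-}$) for every admissible $p$ and $t$.
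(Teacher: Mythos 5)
Your algebraic skeleton is the same as the paper's and is sound: the exact cancellation $\hat{P}^{(\infty)}_\pm(\zeta,s)\,\bigl(z/e^{\pm ip}\bigr)^{-\frac n2\sigma_3}W_\pm(z)N(z)^{-1}=E_\pm(z)^{-1}$ reduces the matching to $P_\pm N^{-1}=E_\pm(\,I+\text{error}\,)E_\pm^{-1}$ with $E_\pm^{\pm1}$ uniformly bounded on $\partial U_\pm$. The gap is in the analytic step, and it is twofold. First, on $\partial U_\pm$ one only has $|\zeta|\asymp\delta_2/t$, so even if the coefficients in (\ref{eqn:Phi asymptotics}) were uniformly bounded your argument yields $P_\pm N^{-1}=I+\mathcal{O}(t)$, which is \emph{not} the claimed $\mathcal{O}(n^{-1})$ uniformly for $0<t<t_0$: for $t$ of the order of the fixed constant $t_0$ the bound $\mathcal{O}(t)$ carries no decay in $n$ at all. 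The required $n^{-1}$ must come from the parameter $s=-2int$, i.e.\ from $|s|=2nt$, not from $|\zeta|$ alone. Second, the two inputs you assert—that the remainder in (\ref{eqn:Phi asymptotics}) is uniform over all $s\in-i\mathbb{R}_+$ and that $\Psi_{\pm,1}(s)$ is uniformly bounded "as part of Theorem \ref{thm:painleve}"—are unjustified: condition (c) is an expansion as $\zeta\to\infty$ at fixed $s$, its control degenerates both as $s\to -i0_+$ and as $s\to-i\infty$, and the residue data of $\Psi$ is tied to the Painlev\'e function $\sigma(s)$, which by Theorem \ref{thm:painleve} grows linearly, $\sigma(s)\sim\frac{\beta_1-\beta_2}{2}s$, so uniform boundedness over the whole ray cannot simply be assumed and is proved nowhere in the paper or in \cite{Claeys2015}.

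This is exactly why the paper's proof (transferring Proposition 7.1 of \cite{Claeys2015}) splits the range of $t$ into three regimes rather than using a single expansion. For $c_0\le nt\le C_0$ one has $|\zeta|>\delta n$ while $s$ stays in a compact subset of $-i\mathbb{R}_+$ bounded away from $0$, so solvability of the $\Psi$-RHP and general properties of Painlev\'e RHPs make (\ref{eqn:Psi asymptotics}) applicable with uniformly controlled coefficients, giving $I+\mathcal{O}(n^{-1})$; for $C_0<nt<\omega(n)$ (and indeed up to $t_0$) the expansion cannot be used because $s$ is unbounded, and one must invoke the large-$|s|$ asymptotics of $\Psi$ from Section 5 of \cite{Claeys2015} (their (7.30)--(7.33)); for $nt<c_0$ one needs the small-$|s|$ asymptotics of $\Psi(\zeta;s)$ at large $\zeta$ from their Section 6 ((7.34)--(7.35)). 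Without these two additional regimes your argument only establishes the matching in the window $t\lesssim 1/n$ (more precisely it gives $I+\mathcal{O}\bigl(t\,|\Psi_{\pm,1}(-2int)|\bigr)$), so the proposition as stated—uniformly in $p\in(\epsilon,\pi-\epsilon)$ and $0<t<t_0$—is not proved.
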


\noindent \textbf{Proof:} Consider first the case where $c_0 \leq nt \leq C_0$, with some $c_0>0$ small and some $C_0>0$ large, which will be fixed below. Then $|\zeta| = |\frac{1}{t} \log \frac{z}{e^{\pm ip}}| > \delta n$ for $z  \in \partial U_{\pm}$, and $s = -2int$ remains bounded and bounded away from zero. Thus by (\ref{eqn:Phi Psi}), (\ref{eqn:P1}) and (\ref{eqn:Psi asymptotics}), as $n \rightarrow \infty$ 
\begin{equation}
P(z) N(z)^{-1} = E_\pm(z) (I + \mathcal{O}(n^{-1})) \hat{P}_\pm^{(\infty)}(\zeta,s) \left(\frac{z}{e^{\pm ip}} \right)^{-\frac{n}{2}\sigma_3} W_\pm(z) N(z)^{-1}, \quad z \in \partial U_\pm.  
\end{equation} 
Since the RHP for $\Psi$ is solvable for $c_0 \leq nt \leq C_0$, general properties of Painlev\'{e} RHPs imply that the error term is valid uniformly for $c_0 \leq nt \leq C_0$. By (\ref{eqn:N}) and (\ref{eqn:W}) we obtain
\begin{equation}
z^{-\frac{n}{2}\sigma_3} W_\pm(z) N(z)^{-1} = \left( D_{in,p,t}(z) D_{out,p,t}(z) \right)^{\frac{1}{2} \sigma_3} \sigma_1, \quad z \in U_\pm.
\end{equation}
Thus we have 
\begin{equation}
P_{\pm}(z) N(z)^{-1} = E_\pm(z)( I + (\mathcal{O}(n^{-1})) E_\pm(z)^{-1},
\end{equation}
and since one can quickly see that $E_\pm(z)$ is bounded uniformly for $z \in \partial U_\pm$, $p \in (\epsilon, \pi - \epsilon)$ and $c_0 \leq nt \leq C_0$, we get that (\ref{eqn:matching1}) holds uniformly $z \in \partial U_\pm$, $p \in (\epsilon, \pi - \epsilon)$ and $c_0 \leq nt \leq C_0$.

Now consider the case $C_0 < nt < \omega(n)$. In this case we cannot use the expansion (\ref{eqn:Psi asymptotics}) since the argument $s$ of $\Psi_1$ is not bounded. Instead we need to use the large $|s| = 2nt$ asymptotics for $\Psi$, which were computed in \cite{Claeys2015}[Section 5]. As is apparent from their (7.30) - (7.33), we have for $C_0$ sufficiently large
\begin{equation}
P_{\pm}(z) N(z)^{-1} = ( I + (\mathcal{O}(n^{-1})), \quad n \rightarrow \infty, \quad z \in \partial U_\pm,
\end{equation}
uniformly for $C_0/n < t < t_0$, $z \in \partial U_\pm$ and $p \in (\epsilon, \pi - \epsilon)$. 

If $nt < c_0$, we can use the small $|s|$ asymptotics for $\Psi(\zeta;s)$ for large values of $\zeta = \frac{1}{t} \log \frac{z}{e^{\pm ip}}$, as calculated in Section 6 of \cite{Claeys2015}. From their (7.34) and (7.35) we see that 
\begin{equation}
P_{\pm}(z) N(z)^{-1} = ( I + (\mathcal{O}(n^{-1})), \quad n \rightarrow \infty, \quad z \in \partial U_\pm,
\end{equation}
uniformly for $0 < t < C_0/n$, $z \in \partial U_\pm$ and $p \in (\epsilon, \pi - \epsilon)$. \qed

\subsubsection{$0 < t \leq \omega(n)/n$. Final Transformation}

\begin{figure}
\centering
\begin{tikzpicture}[scale = 1.2]

\def\a{35} \def\b{60} \def\r{3}

\def\rsmallA{( (0.85*\r*sin(0.5*\a))^2 + (\r -0.85*\r*cos(0.5*\a))^2)^0.5}

\def\rsmallB{( (0.95*\r*sin(0.5*(\b-\a)))^2 + (\r - 0.95*\r*cos(0.5*(\b-\a)))^2)^0.5}

\def\rsmallC{( (0.4*\r*sin(0.5*(180-\b)))^2 + (\r - 0.4*\r*cos(0.5*(180-\b)))^2)^0.5}

\def\rbigC{( (0.7*\r*sin(0.5*(180-\b)))^2 + (\r + 0.7*\r*cos(0.5*(180-\b)))^2)^0.5}

%U_1
\draw[name path=ellipse,black,very thick]
(\r,0) circle[x radius = 0.2*\r cm, y radius = 0.2*\r cm];

\fill ({1.3*\r},0) node[] {$U_1$};

%U_{-1}
\draw[name path=ellipse,black,very thick]
(-\r,0) circle[x radius = 0.2*\r cm, y radius = 0.2*\r cm];

\fill ({-1.3*\r},0) node[] {$U_{-1}$};

%U_+
\draw[name path=ellipse,black,very thick]
({\r*cos(0.5*(\a+\b))},{\r*sin(0.5*(\a+\b))}) circle[x radius = 0.3*\r cm, y radius = 0.3*\r cm];

\fill ({1.40*\r*cos(0.5*(\b+\a))},{1.40*\r*sin(0.5*(\b+\a))}) node[] {$U_+$};
\fill ({0.75*\r*cos(0.5*(\b+\a))},{0.75*\r*sin(0.5*(\b+\a))}) node[] {$-$}; 
\fill ({0.66*\r*cos(0.5*(\b+\a))},{0.65*\r*sin(0.5*(\b+\a))}) node[] {$+$}; 

%U_-
\draw[name path=ellipse,black,very thick]
({\r*cos(0.5*(\a+\b))},{-\r*sin(0.5*(\a+\b))}) circle[x radius = 0.3*\r cm, y radius = 0.3*\r cm];

\fill ({1.40*\r*cos(0.5*(\b+\a))},{-1.40*\r*sin(0.5*(\b+\a))}) node[] {$U_-$}; 

%From 1 to z1
\draw [black,thick,domain=-18:72] plot ({0.85*\r*cos(0.5*\a) + \rsmallA*cos(\x)},{0.85*\r*sin(0.5*\a) + \rsmallA*sin(\x)});

\draw [black,thick,domain=158:232] plot ({1.065*\r*cos(0.5*\a) + \rsmallA*cos(\x)},{1.065*\r*sin(0.5*\a) + \rsmallA*sin(\x)});

\fill ({1.35*\r*cos(0.5*\a)},{1.35*\r*sin(0.5*\a)}) node[] {$\Sigma_{0,out}$};
\fill ({0.6*\r*cos(0.5*\a)},{0.6*\r*sin(0.5*\a)}) node[] {$\Sigma_{0,in}$};

%From 1 to z5
\draw [black,thick,domain=-18:72] plot ({0.85*\r*cos(0.5*\a) + \rsmallA*cos(\x)},{0.85*\r*sin(-0.5*\a) + \rsmallA*sin(-\x)});

\draw [black,thick,domain=158:232] plot ({1.065*\r*cos(0.5*\a) + \rsmallA*cos(\x)},{1.065*\r*sin(-0.5*\a) + \rsmallA*sin(-\x)});

\fill ({1.35*\r*cos(0.5*\a)},{-1.35*\r*sin(0.5*\a)}) node[] {$\Sigma_{5,out}$};
\fill ({0.6*\r*cos(0.5*\a)},{-0.6*\r*sin(0.5*\a)}) node[] {$\Sigma_{5,in}$};

%From z2 to -1
\draw [black,thick,domain=42:190.5] plot ({0.4*\r*cos(90+0.5*\b) + \rsmallC*cos(\x)},{0.4*\r*sin(90+0.5*\b) + \rsmallC*sin(\x)});

\draw [black,thick,domain=88:148] plot ({-0.7*\r*cos(90+0.5*\b) + \rbigC*cos(\x)},{-0.7*\r*sin(90+0.5*\b) + \rbigC*sin(\x)});

\fill ({1.4*\r*cos(0.5*\b+90)},{1.4*\r*sin(0.5*\b+90)}) node[] {$\Sigma_{2,out}$};
\fill ({0.6*\r*cos(0.5*\b+90)},{0.6*\r*sin(0.5*\b+90)}) node[] {$\Sigma_{2,in}$};

%+-
\fill ({0.74*\r*cos(0.6*\b+0.4*180)},{0.74*\r*sin(0.6*\b+0.4*180)}) node[] {$+$};
\fill ({0.84*\r*cos(0.6*\b+0.4*180)},{0.84*\r*sin(0.6*\b+0.4*180)}) node[] {$-$};

\fill ({1.2*\r*cos(0.6*\b+0.4*180)},{1.2*\r*sin(0.6*\b+0.4*180)}) node[] {$+$};
\fill ({1.3*\r*cos(0.6*\b+0.4*180)},{1.3*\r*sin(0.6*\b+0.4*180)}) node[] {$-$};

%From -1 to z4
\draw [black,thick,domain=42:190.5] plot ({0.4*\r*cos(90+0.5*\b) + \rsmallC*cos(\x)},{0.4*\r*sin(-(90+0.5*\b)) + \rsmallC*sin(-\x)});

\draw [black,thick,domain=88:148] plot ({-0.7*\r*cos(90+0.5*\b) + \rbigC*cos(\x)},{-0.7*\r*sin(-(90+0.5*\b)) + \rbigC*sin(-\x)});

\fill ({1.4*\r*cos(0.5*\b+90)},{-1.4*\r*sin(0.5*\b+90)}) node[] {$\Sigma_{3,out}$};
\fill ({0.6*\r*cos(0.5*\b+90)},{-0.6*\r*sin(0.5*\b+90)}) node[] {$\Sigma_{3,in}$};
\end{tikzpicture}
\caption{The jump contour $\Sigma_R$ of $R$.} \label{figure:R1}
\end{figure}
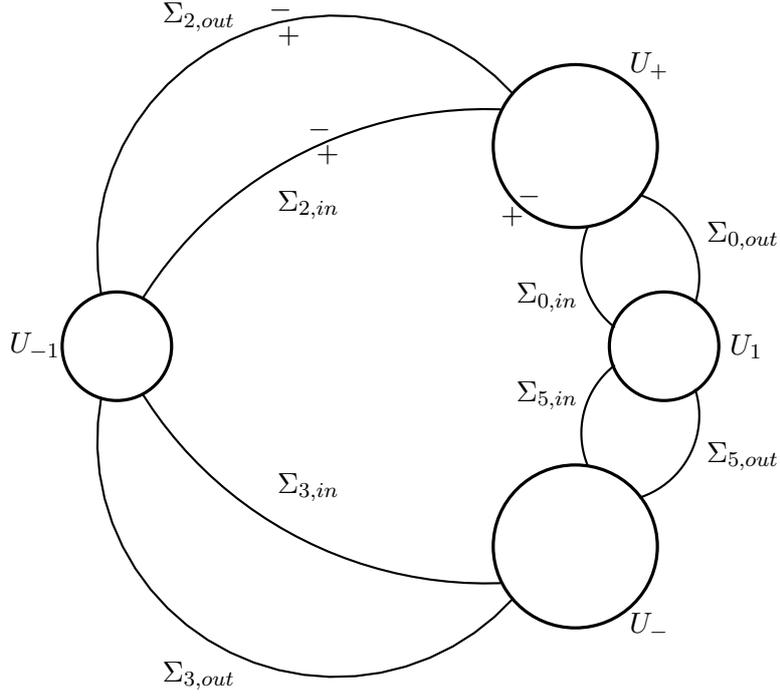

Define 
\begin{equation} \label{eqn:R}
R(z) = \begin{cases} S(z)N^{-1}(z) & z \in U_\infty \setminus \Sigma_S, \quad U_\infty := \mathbb{C}\setminus \text{local parametrices}, \\
S(z)P_{\pm }^{-1}(z), & z \in U_{\pm}\setminus \Sigma_S, \\
S(z)P_{\pm 1}^{-1}(z), & z \in U_{\pm 1} \setminus \Sigma_S.
\end{cases}
\end{equation}
Then $R$ solves the following RHP:\\
\newpage
\noindent \textbf{RH problem for $R$}
\begin{enumerate}[label=(\alph*)]
\item $R:\mathbb{C}\setminus \Sigma_R \rightarrow \mathbb{C}^{2\times 2}$ is analytic, where $\Sigma_R$ is shown in Figure \ref{figure:R1}\\

\item $R(z)$ has the following jumps: 
\begin{align}
R_+(z) &=  R_-(z)N(z)\left( \begin{array}{cc} 1 & 0 \\ f_{p,t}(z)^{-1}z^{-n} & 1 \end{array} \right) N(z)^{-1}, \quad z \in \Sigma_{j,out}, \, j = 0,2,3,5, \nonumber \\
R_+(z) &=  R_-(z)N(z)\left( \begin{array}{cc} 1 & 0 \\ f_{p,t}(z)^{-1}z^{n} & 1 \end{array} \right) N(z)^{-1}, \quad z \in \Sigma_{j,in}, \, j = 0,2,3,5, \nonumber \\
R_+(z) &=  R_-(z) P_{\pm}(z) N(z)^{-1}, \quad z \in \partial U_{\pm} \setminus \text{intersection points}, \\
R_+(z) &=  R_-(z) P_{\pm 1}(z) N(z)^{-1}, \quad z \in \partial U_{\pm 1} \setminus \text{intersection points}. \nonumber 
\end{align}

\item $R(z) = I + \mathcal{O}(1/z)$ as $z \rightarrow \infty$. 
\end{enumerate}

One quickly sees that uniformly in $z \in \Sigma_{j,out} \cup \Sigma_{j,in} \setminus \overline{U_\infty}$ we have
\begin{equation}
R_+(z) = R_-(z)(I+\mathcal{O}(e^{-\delta n}))
\end{equation}
for some $\delta > 0$ and uniformly in $p \in (\epsilon, \pi - \epsilon)$, $0 < t < t_0$.
By Proposition \ref{prop:matching} we have that 
\begin{equation}
R_+(z) = R_-(z) P_{\pm}(z) N(z)^{-1} = R_-(z)(I+\mathcal{O}(n^{-1})),
\end{equation}
uniformly for $z \in \partial U_{\pm}$, $p \in (\epsilon, \pi - \epsilon)$ and $0 < t < t_0$. Because of the matching condition (d) of $P_{\pm 1}$ we have 
\begin{equation}
R_+(z) = R_-(z) P_{\pm 1}(z) N(z)^{-1} = R_-(z)(I+\mathcal{O}(n^{-1})),
\end{equation}
uniformly for $z \in \partial U_{\pm 1}$, $p \in (\epsilon, \pi - \epsilon)$ and $0 < t < t_0$.\\

We see that we have a normalized RHP with small jumps, which by the standard theory on RHP implies that 
\begin{equation}
R(z) = I + \mathcal{O}(n^{-1}), \quad \frac{\text{d}R(z)}{\text{d}z} = \mathcal{O}(n^{-1}),
\end{equation}
as $n \rightarrow \infty$, uniformly for $z$ off the jump contour and uniformly in $p \in (\epsilon, \pi - \epsilon)$, $0 < t < t_0$.

\subsection{$\omega(n)/n < t < t_0$. Local Parametrices near $e^{\pm ip}$} \label{section:local 2}
We now transfer the construction from Section 7.5 in \cite{Claeys2015} to our setting in a completely straightforward manner. Although the parametrices $P_{\pm}$ from the previous section are valid for the whole region $0 < t < t_0$ we need to construct more explicit parametrices for the case $\omega(n)/n < t < t_0$ to get a simpler large $n$ expansion for $Y$, which is needed for the analysis in the next section. \\

In the case $\omega(n)/n < t < t_0$ $\zeta = \frac{1}{t} \log \frac{z}{e^{\pm ip}}$ is not necessarily large on $\partial U_{\pm}$. But we can construct a large $s = -int$ expansion for $Y$, as $|s| = nt$ is large. \\

We modify the $S$-RHP by now also opening up lenses around the arcs $(p-t,p+t)$ and $(2\pi - p - t, 2\pi - p + t)$, i.e. we choose the contour $\Sigma_S$ as in Figure \ref{figure:S2}.

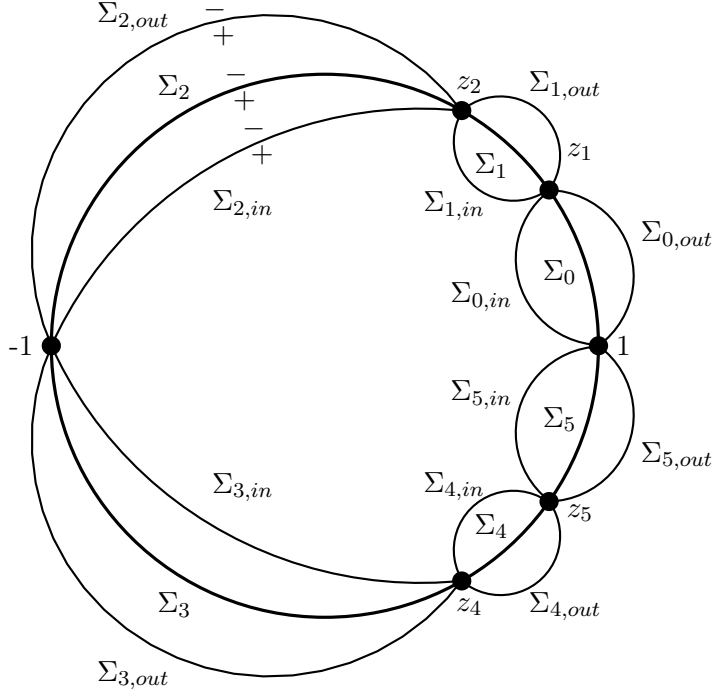
\begin{figure}[H] 
\centering
\begin{tikzpicture}[scale = 1.2]

\def\a{35} \def\b{60} \def\r{3}

\def\rsmallA{( (0.85*\r*sin(0.5*\a))^2 + (\r -0.85*\r*cos(0.5*\a))^2)^0.5}

\def\rsmallB{( (0.95*\r*sin(0.5*(\b-\a)))^2 + (\r - 0.95*\r*cos(0.5*(\b-\a)))^2)^0.5}

\def\rsmallC{( (0.4*\r*sin(0.5*(180-\b)))^2 + (\r - 0.4*\r*cos(0.5*(180-\b)))^2)^0.5}

\def\rbigC{( (0.7*\r*sin(0.5*(180-\b)))^2 + (\r + 0.7*\r*cos(0.5*(180-\b)))^2)^0.5}

\draw[name path=ellipse,black,very thick]
(0,0) circle[x radius = \r cm, y radius = \r cm];

\coordinate (1) at ({\r*cos(\a)}, {\r*sin(\a)});
\coordinate (2) at ({\r*cos(\b)}, {\r*sin(\b)});
\coordinate (4) at ({\r*cos(\b)}, {-\r*sin(\b)});
\coordinate (5) at ({\r*cos(\a)}, {-\r*sin(\a)});
	
\fill (1) circle (3pt) node[right,xshift=0.1cm,yshift=0.51cm] {$z_1$};
\fill (2) circle (3pt) node[above,xshift=0.1cm,yshift=0.1cm] {$z_2$};
\fill (4) circle (3pt) node[below,xshift=0.1cm,yshift=-0.1cm] {$z_4$};
\fill (5) circle (3pt) node[right,xshift=0.1cm,yshift=-0.15cm] {$z_5$};
\fill (\r,0) circle (3pt) node[right,xshift=0.1cm] {1};
\fill (-\r,0) circle (3pt) node[left,xshift=-0.1cm] {-1};

%From 1 to z1
\draw [black,thick,domain=-50:85] plot ({0.85*\r*cos(0.5*\a) + \rsmallA*cos(\x)},{0.85*\r*sin(0.5*\a) + \rsmallA*sin(\x)});

\draw [black,thick,domain=131:265] plot ({1.065*\r*cos(0.5*\a) + \rsmallA*cos(\x)},{1.065*\r*sin(0.5*\a) + \rsmallA*sin(\x)});

\fill ({1.35*\r*cos(0.5*\a)},{1.35*\r*sin(0.5*\a)}) node[] {$\Sigma_{0,out}$};
\fill ({0.9*\r*cos(0.5*\a)},{0.9*\r*sin(0.5*\a)}) node[] {$\Sigma_{0}$};
\fill ({0.6*\r*cos(0.5*\a)},{0.6*\r*sin(0.5*\a)}) node[] {$\Sigma_{0,in}$};

%From 1 to z5
\draw [black,thick,domain=-50:85] plot ({0.85*\r*cos(0.5*\a) + \rsmallA*cos(\x)},{0.85*\r*sin(-0.5*\a) + \rsmallA*sin(-\x)});

\draw [black,thick,domain=131:265] plot ({1.065*\r*cos(0.5*\a) + \rsmallA*cos(\x)},{1.065*\r*sin(-0.5*\a) + \rsmallA*sin(-\x)});

\fill ({1.35*\r*cos(0.5*\a)},{-1.35*\r*sin(0.5*\a)}) node[] {$\Sigma_{5,out}$};
\fill ({0.9*\r*cos(0.5*\a)},{-0.9*\r*sin(0.5*\a)}) node[] {$\Sigma_{5}$};
\fill ({0.6*\r*cos(0.5*\a)},{-0.6*\r*sin(0.5*\a)}) node[] {$\Sigma_{5,in}$};

%From z2 to -1
\draw [black,thick,domain=35:205] plot ({0.4*\r*cos(90+0.5*\b) + \rsmallC*cos(\x)},{0.4*\r*sin(90+0.5*\b) + \rsmallC*sin(\x)});

\draw [black,thick,domain=85:155] plot ({-0.7*\r*cos(90+0.5*\b) + \rbigC*cos(\x)},{-0.7*\r*sin(90+0.5*\b) + \rbigC*sin(\x)});

\fill ({1.4*\r*cos(0.5*\b+90)},{1.4*\r*sin(0.5*\b+90)}) node[] {$\Sigma_{2,out}$};
\fill ({1.1*\r*cos(0.5*\b+90)},{1.1*\r*sin(0.5*\b+90)}) node[] {$\Sigma_{2}$};
\fill ({0.6*\r*cos(0.5*\b+90)},{0.6*\r*sin(0.5*\b+90)}) node[] {$\Sigma_{2,in}$};

%+-
\fill ({0.74*\r*cos(0.6*\b+0.4*180)},{0.74*\r*sin(0.6*\b+0.4*180)}) node[] {$+$};
\fill ({0.84*\r*cos(0.6*\b+0.4*180)},{0.84*\r*sin(0.6*\b+0.4*180)}) node[] {$-$};

\fill ({0.95*\r*cos(0.6*\b+0.4*180)},{0.95*\r*sin(0.6*\b+0.4*180)}) node[] {$+$};
\fill ({1.05*\r*cos(0.6*\b+0.4*180)},{1.05*\r*sin(0.6*\b+0.4*180)}) node[] {$-$};

\fill ({1.2*\r*cos(0.6*\b+0.4*180)},{1.2*\r*sin(0.6*\b+0.4*180)}) node[] {$+$};
\fill ({1.3*\r*cos(0.6*\b+0.4*180)},{1.3*\r*sin(0.6*\b+0.4*180)}) node[] {$-$};

%From -1 to z4
\draw [black,thick,domain=35:205] plot ({0.4*\r*cos(90+0.5*\b) + \rsmallC*cos(\x)},{0.4*\r*sin(-(90+0.5*\b)) + \rsmallC*sin(-\x)});

\draw [black,thick,domain=85:155] plot ({-0.7*\r*cos(90+0.5*\b) + \rbigC*cos(\x)},{-0.7*\r*sin(-(90+0.5*\b)) + \rbigC*sin(-\x)});

\fill ({1.4*\r*cos(0.5*\b+90)},{-1.4*\r*sin(0.5*\b+90)}) node[] {$\Sigma_{3,out}$};
\fill ({1.1*\r*cos(0.5*\b+90)},{-1.1*\r*sin(0.5*\b+90)}) node[] {$\Sigma_{3}$};
\fill ({0.6*\r*cos(0.5*\b+90)},{-0.6*\r*sin(0.5*\b+90)}) node[] {$\Sigma_{3,in}$};

%From z1 to z2
\draw [black,thick,domain=-30:128] plot ({0.95*\r*cos(0.5*(\a+\b)) + \rsmallB*cos(\x)},{0.95*\r*sin(0.5*(\a+\b)) + \rsmallB*sin(\x)});

\draw [black,thick,domain=142:300] plot ({1.02*\r*cos(0.5*(\a+\b))+\rsmallB*cos(\x)},{1.02*\r*sin(0.5*(\a+\b)) + \rsmallB*sin(\x)});

\fill ({1.3*\r*cos(0.5*(\b+\a))},{1.3*\r*sin(0.5*(\b+\a))}) node[] {$\Sigma_{1,out}$};
\fill ({0.9*\r*cos(0.5*(\b+\a))},{0.9*\r*sin(0.5*(\b+\a))}) node[] {$\Sigma_{1}$};
\fill ({0.7*\r*cos(0.5*(\b+\a))},{0.7*\r*sin(0.5*(\b+\a))}) node[] {$\Sigma_{1,in}$};

%From z4 to z5
\draw [black,thick,domain=-30:128] plot ({0.95*\r*cos(0.5*(\a+\b)) + \rsmallB*cos(\x)},{-0.95*\r*sin(0.5*(\a+\b)) + \rsmallB*sin(-\x)});

\draw [black,thick,domain=142:300] plot ({1.02*\r*cos(0.5*(\a+\b))+\rsmallB*cos(\x)},{-1.02*\r*sin(0.5*(\a+\b)) + \rsmallB*sin(-\x)});

\fill ({1.3*\r*cos(0.5*(\b+\a))},{-1.3*\r*sin(0.5*(\b+\a))}) node[] {$\Sigma_{4,out}$};
\fill ({0.9*\r*cos(0.5*(\b+\a))},{-0.9*\r*sin(0.5*(\b+\a))}) node[] {$\Sigma_{4}$};
\fill ({0.7*\r*cos(0.5*(\b+\a))},{-0.7*\r*sin(0.5*(\b+\a))}) node[] {$\Sigma_{4,in}$};
\end{tikzpicture}
\caption{The modified jump contour $\Sigma_S$ of $S$ in the case $\omega(n)/n < t < t_0$. The difference compared to Figure \ref{figure:S1} is that here there are also lenses around the arcs $(p-t,p+t)$ and $(2\pi - p - t, 2\pi - p + t)$.} \label{figure:S2}
\end{figure}

The points $z_0 = 1$, $z_3 = -1$ we surround with disks $U_1$, $U_{-1}$, small enough such for all $p \in (\epsilon, \pi - \epsilon)$ and $\omega(n)/n < t < t_0$ they are disjoint with the neighborhoods $\tilde{\mathcal{U}}_1, \tilde{\mathcal{U}}_2, \tilde{\mathcal{U}}_4, \tilde{\mathcal{U}}_5$ defined in the next paragraph, and we take the same local parametrices $P_{\pm 1}$ in $U_1$, $U_{-1}$ as in Section \ref{section:local plus minus 1}.\\ 

Let $\mathcal{U}_1, \mathcal{U}_2$ be small non-intersecting disks around $\pm i$, those are the same neighborhoods as in Section 5 of \cite{Claeys2015}. We surround the points $z_1 = e^{i(p-t)}, z_2 = e^{i(p+t)}$ by small neighborhoods $\tilde{\mathcal{U}}_1, \tilde{\mathcal{U}}_2$, with $\tilde{\mathcal{U}}_1$ being the image of $\mathcal{U}_2$ under the inverse of the map $\zeta = \frac{1}{t} \log \frac{z}{e^{ip}}$, and $\tilde{\mathcal{U}}_2$ being the image of $\mathcal{U}_1$ under the same map. Similarly we surround $z_4 = e^{i(2\pi - (p+t))}$ by a small neighborhood $\tilde{\mathcal{U}}_4$, which is the image of $\mathcal{U}_2$ under the inverse of the map $\zeta = \frac{1}{t} \log \frac{z}{e^{-ip}}$, and $z_5 = e^{i(2\pi - (p-t))}$ we surround by $\tilde{\mathcal{U}}_5$ which is the image of $\mathcal{U}_1$ under the same map. Since the disks $\mathcal{U}_1$, $\mathcal{U}_2$ are fixed in the $\zeta$-plane, the neighborhoods $\tilde{\mathcal{U}}_1, \tilde{\mathcal{U}}_2, \tilde{\mathcal{U}}_4,\tilde{\mathcal{U}}_5$ contract in the $z$-plane if $t$ decreases with $n$.  \\

As global parametrix outside these neighborhoods we choose $N(z)$ as in the previous section. For $k = 1,2,4,5$ we choose the local parametrices in $\tilde{\mathcal{U}}_k$ as follows:
\begin{align} \label{eqn:P tilde}
\begin{split}
\tilde{P}_1(z) =& \tilde{E}_1(z) M^{(\alpha_1,\beta_1)}(nt(\zeta(z) + i)) \Omega_1(z) W_+(z), \quad \zeta = \frac{1}{t} \log \frac{z}{e^{ip}}, \\
\tilde{P}_2(z) =& \tilde{E}_2(z) M^{(\alpha_2,\beta_2)}(nt(\zeta(z) - i)) \Omega_2(z) W_+(z), \quad \zeta = \frac{1}{t} \log \frac{z}{e^{ip}}, \\
\tilde{P}_4(z) =& \tilde{E}_4(z) M^{(\alpha_4,\beta_4)}(nt(\zeta(z) + i)) \Omega_4(z) W_-(z), \quad \zeta = \frac{1}{t} \log \frac{z}{e^{-ip}}, \\
\tilde{P}_5(z) =& \tilde{E}_5(z) M^{(\alpha_5,\beta_5)}(nt(\zeta(z) - i)) \Omega_5(z) W_-(z), \quad \zeta = \frac{1}{t} \log \frac{z}{e^{-ip}},
\end{split}
\end{align}
where
\begin{align}
\tilde{E}_k(z) = \sigma_1 \left( D_{in,p,t}(z) D_{out,p,t}(z) \right)^{-\sigma_3/2} \Omega_k(z) (nt(\zeta \pm i))^{\beta_k \sigma_3} z_k^{-\frac{n}{2}\sigma_3},
\end{align}
with $+$ for $k = 1,4$ and $-$ for $k = 2,5$, where $M^{(\alpha_k,\beta_k)}(\lambda)$ is given in Appendix \ref{appendix:M} with $\alpha = \alpha_k, \beta = \beta_k$, where
\begin{align}
\Omega_k(z) = \begin{cases} e^{i\frac{\pi}{2}(\alpha_k - \beta_k)\sigma_3} , & \Im \zeta > 1 \\
e^{-i\frac{\pi}{2}(\alpha_k - \beta_k)\sigma_3} , & \Im \zeta < 1 \end{cases},
\end{align}
and where $W_\pm(z)$ is given in (\ref{eqn:W}). \\

By (\ref{eqn:N}) and (\ref{eqn:W}) we obtain
\begin{equation}
z^{-\frac{n}{2}\sigma_3} W_\pm(z) N(z)^{-1} = \left( D_{in,p,t}(z) D_{out,p,t}(z) \right)^{\frac{1}{2} \sigma_3} \sigma_1, \quad z \in U_\pm.
\end{equation}
Using the large argument expansion (\ref{eqn:M asymptotics}) for $M^{(\alpha_1,\beta_1)}(nt(\zeta + i))$ for $z \in \partial \tilde{\mathcal{U}}_1$, we see that
\begin{align} \label{eqn:matching tilde}
\begin{split}
\tilde{P}_1(z) N(z)^{-1} =& \tilde{E}_1(z) \left( I + \frac{M^{(\alpha_1,\beta_1)}}{nt(\zeta + i)} + \mathcal{O}((nt)^{-2} \right) \\
&\times (nt(\zeta + i))^{-\beta_1 \sigma_3} \left( \frac{z}{e^{i(p+t)}} \right)^{-\frac{n}{2}\sigma_3} \Omega_1(z) W_+(z) N(z)^{-1}\\
=& \tilde{E}_1(z) \left( I + \frac{M^{(\alpha_1,\beta_1)}}{nt(\zeta + i)} + \mathcal{O}((nt)^{-2} \right) \\
&\times (nt(\zeta + i))^{-\beta_1 \sigma_3}  z_1^{\frac{n}{2}\sigma_3} \Omega_1(z) \left( D_{in,p,t}(z) D_{out,p,t}(z) \right)^{\frac{1}{2} \sigma_3} \sigma_1 \\
=& \tilde{E}_1(z) \left( I + \mathcal{O}((nt)^{-1}) \right) \tilde{E}_1(z)^{-1} \\
=& \left( I + \mathcal{O}((nt)^{-1}) \right),
\end{split}
\end{align}
uniformly in $z \in \partial \tilde{\mathcal{U}}_1$, $p \in (\epsilon, \pi - \epsilon)$ and $\omega(n)/n < t < t_0$, since $\tilde{E}_1(z)$ is uniformly bounded for $\omega(n)/n < t < t_0$, $p \in (\epsilon, \pi - \epsilon)$ and $z \in \tilde{\mathcal{U}}_1$. Similarly one obtains that for $k = 2,4,5$
\begin{align}
\tilde{P}_k(z) N(z)^{-1} = \left( I + \mathcal{O}((nt)^{-1}) \right),
\end{align}
uniformly in $z \in \partial \tilde{\mathcal{U}}_k$, $p \in (\epsilon, \pi - \epsilon)$ and $\omega(n)/n < t < t_0$.\\

Choose $\Sigma_S$ such that $\frac{1}{t} \log \left( \frac{\Sigma_S}{z_k} \right) \subset \left( e^{\pm \frac{\pi i}{4}} \mathbb{R} \cup i\mathbb{R} \cup \mathbb{R} \right)$ in $\tilde{\mathcal{U}}_k$. Then one can easily verify, as in (\ref{eqn:jumps}),  that $\tilde{P}_k$ has the same jumps as $S$ in $\tilde{\mathcal{U}}_k$, so that $S(z) \tilde{P}_k(z)^{-1}$ is meromorphic in $\tilde{\mathcal{U}}_k$, with at most an isolated singulary at $z_k$. The singular behaviour of $S$ and $W_\pm$ near $z_k$, and of $M^{(\alpha_k,\beta_k)}$ near $0$ (given in (\ref{eqn:M at 0, neq 0}) and (\ref{eqn:M at 0, 0})), imply that $S(z) \tilde{P}_k^{-1}(z)$ is bounded at $z_k$, which shows that that $\tilde{P}_k$ is a parametrix for $S$ in $\tilde{\mathcal{U}}_k$ with the matching condition (\ref{eqn:matching tilde}) with $N(z)$ at $\partial \tilde{\mathcal{U}}_k$.  

\subsubsection{$\omega(n)/n < t < t_0$. Final Transformation}
\begin{figure}[H] 
\centering
\begin{tikzpicture}[scale = 1.2]

\def\a{35} \def\b{60} \def\r{3}

\def\rsmallA{( (0.85*\r*sin(0.5*\a))^2 + (\r -0.85*\r*cos(0.5*\a))^2)^0.5}

\def\rsmallB{( (0.95*\r*sin(0.5*(\b-\a)))^2 + (\r - 0.95*\r*cos(0.5*(\b-\a)))^2)^0.5}

\def\rsmallC{( (0.4*\r*sin(0.5*(180-\b)))^2 + (\r - 0.4*\r*cos(0.5*(180-\b)))^2)^0.5}

\def\rbigC{( (0.7*\r*sin(0.5*(180-\b)))^2 + (\r + 0.7*\r*cos(0.5*(180-\b)))^2)^0.5}

%U_1
\draw[name path=ellipse,black,very thick]
(\r,0) circle[x radius = 0.2*\r cm, y radius = 0.2*\r cm];

\fill ({1.3*\r},0) node[] {$U_1$};

%U_{-1}
\draw[name path=ellipse,black,very thick]
(-\r,0) circle[x radius = 0.2*\r cm, y radius = 0.2*\r cm];

\fill ({-1.3*\r},0) node[] {$U_{-1}$};

%U_+
\draw [black,thick,domain=3:92] plot ({0.95*\r*cos(0.5*(\a+\b)) + \rsmallB*cos(\x)},{0.95*\r*sin(0.5*(\a+\b)) + \rsmallB*sin(\x)});

\draw [black,thick,domain=187:265] plot ({1.02*\r*cos(0.5*(\a+\b))+\rsmallB*cos(\x)},{1.02*\r*sin(0.5*(\a+\b)) + \rsmallB*sin(\x)});

\draw[name path=ellipse,black,very thick]
({\r*cos(\a)},{\r*sin(\a)}) circle[x radius = 0.15*\r cm, y radius = 0.15*\r cm];

\fill ({1.25*\r*cos(\a)},{1.25*\r*sin(\a)}) node[] {$\tilde{U}_1$};
\fill ({0.9*\r*cos(\a)},{0.9*\r*sin(\a)}) node[] {$-$}; 
\fill ({0.8*\r*cos(\a)},{0.8*\r*sin(\a)}) node[] {$+$};

\draw[name path=ellipse,black,very thick]
({\r*cos(\b)},{\r*sin(\b)}) circle[x radius = 0.15*\r cm, y radius = 0.15*\r cm];
\fill ({1.25*\r*cos(\b)},{1.25*\r*sin(\b)}) node[] {$\tilde{U}_2$};

%U_-
\draw [black,thick,domain=3:92] plot ({0.95*\r*cos(0.5*(\a+\b)) + \rsmallB*cos(\x)},{-0.95*\r*sin(0.5*(\a+\b)) + \rsmallB*sin(-\x)});

\draw [black,thick,domain=187:265] plot ({1.02*\r*cos(0.5*(\a+\b))+\rsmallB*cos(\x)},{-1.02*\r*sin(0.5*(\a+\b)) + \rsmallB*sin(-\x)});

\draw[name path=ellipse,black,very thick]
({\r*cos(\a)},{-\r*sin(\a)}) circle[x radius = 0.15*\r cm, y radius = 0.15*\r cm];
\fill ({1.25*\r*cos(\a)},{-1.25*\r*sin(\a)}) node[] {$\tilde{U}_5$};

\draw[name path=ellipse,black,very thick]
({\r*cos(\b)},{-\r*sin(\b)}) circle[x radius = 0.15*\r cm, y radius = 0.15*\r cm];
\fill ({1.25*\r*cos(\b)},{-1.25*\r*sin(\b)}) node[] {$\tilde{U}_4$};

%From 1 to z1
\draw [black,thick,domain=-18:62.5] plot ({0.85*\r*cos(0.5*\a) + \rsmallA*cos(\x)},{0.85*\r*sin(0.5*\a) + \rsmallA*sin(\x)});

\draw [black,thick,domain=156:232] plot ({1.065*\r*cos(0.5*\a) + \rsmallA*cos(\x)},{1.065*\r*sin(0.5*\a) + \rsmallA*sin(\x)});

\fill ({1.35*\r*cos(0.5*\a)},{1.35*\r*sin(0.5*\a)}) node[] {$\Sigma_{0,out}$};
\fill ({0.6*\r*cos(0.5*\a)},{0.6*\r*sin(0.5*\a)}) node[] {$\Sigma_{0,in}$};

%From 1 to z5
\draw [black,thick,domain=-18:62.5] plot ({0.85*\r*cos(0.5*\a) + \rsmallA*cos(\x)},{0.85*\r*sin(-0.5*\a) + \rsmallA*sin(-\x)});

\draw [black,thick,domain=156:232] plot ({1.065*\r*cos(0.5*\a) + \rsmallA*cos(\x)},{1.065*\r*sin(-0.5*\a) + \rsmallA*sin(-\x)});

\fill ({1.35*\r*cos(0.5*\a)},{-1.35*\r*sin(0.5*\a)}) node[] {$\Sigma_{5,out}$};
\fill ({0.6*\r*cos(0.5*\a)},{-0.6*\r*sin(0.5*\a)}) node[] {$\Sigma_{5,in}$};

%From z2 to -1
\draw [black,thick,domain=46:190.5] plot ({0.4*\r*cos(90+0.5*\b) + \rsmallC*cos(\x)},{0.4*\r*sin(90+0.5*\b) + \rsmallC*sin(\x)});

\draw [black,thick,domain=90:148] plot ({-0.7*\r*cos(90+0.5*\b) + \rbigC*cos(\x)},{-0.7*\r*sin(90+0.5*\b) + \rbigC*sin(\x)});

\fill ({1.4*\r*cos(0.5*\b+90)},{1.4*\r*sin(0.5*\b+90)}) node[] {$\Sigma_{2,out}$};
\fill ({0.6*\r*cos(0.5*\b+90)},{0.6*\r*sin(0.5*\b+90)}) node[] {$\Sigma_{2,in}$};

%+-
\fill ({0.74*\r*cos(0.6*\b+0.4*180)},{0.74*\r*sin(0.6*\b+0.4*180)}) node[] {$+$};
\fill ({0.84*\r*cos(0.6*\b+0.4*180)},{0.84*\r*sin(0.6*\b+0.4*180)}) node[] {$-$};

\fill ({1.2*\r*cos(0.6*\b+0.4*180)},{1.2*\r*sin(0.6*\b+0.4*180)}) node[] {$+$};
\fill ({1.3*\r*cos(0.6*\b+0.4*180)},{1.3*\r*sin(0.6*\b+0.4*180)}) node[] {$-$};

%From -1 to z4
\draw [black,thick,domain=46:190.5] plot ({0.4*\r*cos(90+0.5*\b) + \rsmallC*cos(\x)},{0.4*\r*sin(-(90+0.5*\b)) + \rsmallC*sin(-\x)});

\draw [black,thick,domain=90:148] plot ({-0.7*\r*cos(90+0.5*\b) + \rbigC*cos(\x)},{-0.7*\r*sin(-(90+0.5*\b)) + \rbigC*sin(-\x)});

\fill ({1.4*\r*cos(0.5*\b+90)},{-1.4*\r*sin(0.5*\b+90)}) node[] {$\Sigma_{3,out}$};
\fill ({0.6*\r*cos(0.5*\b+90)},{-0.6*\r*sin(0.5*\b+90)}) node[] {$\Sigma_{3,in}$};
\end{tikzpicture}
\caption{The jump contour $\Sigma_{\tilde{R}}$ of $\tilde{R}$ in the case $\omega(n)/n < t < t_0$.} \label{figure:R2}
\end{figure}
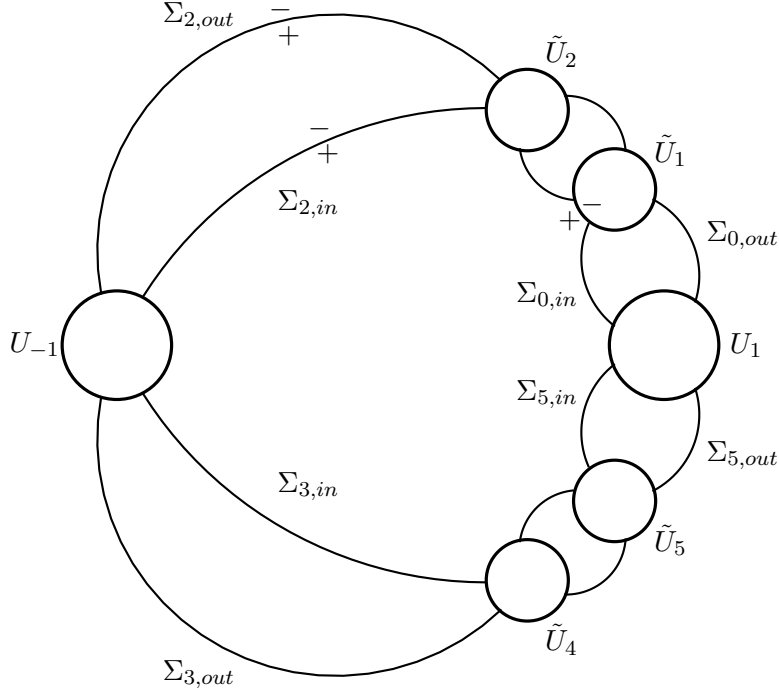

We transfer Section 7.5.1 in \cite{Claeys2015} to our case. Figure \ref{figure:R2} shows the contour chosen for the RHP of $\tilde{R}$, which we define as follows: 
\begin{align} \label{eqn:R2}
\tilde{R}(z) = \begin{cases} S(z) \tilde{P}_k(z)^{-1}, & z \in \tilde{\mathcal{U}}_k, \\
S(z)P_{\pm 1}(z)^{-1}, & z \in U_{\pm 1}, \\
S(z)N(z)^{-1}, & z \in \mathbb{C} \setminus \left( \overline{\tilde{\mathcal{U}}_1} \cup \overline{\tilde{\mathcal{U}}_2} \cup  \overline{\tilde{\mathcal{U}}_4} \cup 
\overline{\tilde{\mathcal{U}}_5} \cup \overline{U_1} \cup \overline{U_{-1}} \right). 
\end{cases}
\end{align}
Then $\tilde{R}$ is analytic, in particular has no jumps inside any of the local parametrices $\tilde{\mathcal{U}}_k$, $k  = 1,2,4,5$, $U_{\pm}$, or on the unit circle. On the rest of the lenses we can see that the jump matrix is $I + \mathcal{O}(e^{-\delta nt})$ for some $\delta>0$, uniformly in $p \in (\epsilon, \pi - \epsilon)$ and $\omega(n)/n < t < t_0$. Because of the matching condition (d) of $P_{\pm 1}$ we have as in the case $0 < t < \omega(n)/n$ that
\begin{equation}
\tilde{R}_+(z) = \tilde{R}_-(z) P_{\pm 1}(z) N(z)^{-1} = \tilde{R}_-(z)(I+\mathcal{O}(n^{-1})),
\end{equation}
uniformly for $z \in \partial U_{\pm 1}$, $p \in (\epsilon, \pi - \epsilon)$ and $0 < t < t_0$. Using (\ref{eqn:matching tilde}), we get that 
\begin{align} 
\tilde{R}_+(z) = \tilde{R}_-(z) \tilde{P}_k(z) N(z)^{-1} = \tilde{R}_-(z)(I+\mathcal{O}((nt)^{-1}),
\end{align}
uniformly for $z \in \partial \tilde{\mathcal{U}}_k$, $p \in (\epsilon, \pi - \epsilon)$ and $\omega(n)/n < t < t_0$. Finally we have that $\lim_{z \rightarrow \infty} \tilde{R}(z) = I$, which by standard theory for RHPs with small jumps and RHPs on contracting contours implies that 
\begin{align} \label{eqn:R2 asymptotics}
\tilde{R}(z) = I + \mathcal{O}((nt)^{-1}), \quad \frac{\text{d}\tilde{R}(z)}{\text{d}z} = \mathcal{O}((nt)^{-1}),
\end{align}
uniformly for $z$ off the jump contour of $\tilde{R}$, and uniformly in $p \in (\epsilon, \pi - \epsilon)$ and $\omega(n)/n < t < t_0$.

\section{Asymptotics of $D_n(f_{p,t})$} \label{section:Toeplitz}
This section is a transfer of Section 8 in \cite{Claeys2015} to our case.

\subsection{Asymptotics of the Differential Identity and Proof of Theorem \ref{thm:T, T+H extended}}

\begin{proposition}
Let $\alpha_1,\alpha_2,\alpha_1 + \alpha_2 > - \frac{1}{2}$, let $\sigma(s)$ be the solution to (\ref{eqn:painleve}) and let $\omega(x)$ be a positive, smooth function for $x$ sufficiently large, s.t.
\begin{equation}
\omega(x) \rightarrow \infty, \quad \omega(x) = o(x), \quad \text{as } x \rightarrow \infty.
\end{equation}
Then the following asymptotic expansion holds:
\begin{align} \label{eqn:diff_id_asymptotics}
\frac{1}{i} \frac{\text{d}}{\text{d}t} \log D_n(f_{p,t}) =& 2n(\beta_1 - \beta_2) + d_1(p,t;\alpha_0,\alpha_1,\beta_1,\alpha_2,\beta_2,\alpha_3) \\
&+ d_2(p,t;\alpha_0,\alpha_1,\beta_1,\alpha_2,\beta_2,\alpha_3) + d_3(p,t;\alpha_0,\alpha_1,\beta_1,\alpha_2,\beta_2,\alpha_3) + \epsilon_{n,p,t}, \nonumber 
\end{align}
where for the error term $\epsilon_{n,p,t}$
\begin{equation}
\left| \int_0^t \epsilon_{n,p,\tau} \text{d}\tau \right|= \mathcal{O}(\omega(n)^{-\delta}) = o(1),
\end{equation}
for some $\delta > 0$, uniformly in $p \in (\epsilon, \pi - \epsilon)$ and $0 < t < t_0$, and where 
\begin{align}
&d_1(p,t;\alpha_0,\alpha_1,\beta_1,\alpha_2,\beta_2,\alpha_3) \nonumber \\
=& 2\alpha_1 z_1 V'(z_1) - 2\alpha_2 z_2 V'(z_2) + 2(\beta_1 + \beta_2) \sum_{j = 1}^\infty jV_j \left( \cos j(p+t) - \cos j(p-t) \right) \nonumber \\
& -i(2\alpha_1^2 + \beta_1^2 + \beta_1 \beta_2) \frac{\cos p-t}{\sin p-t} + i(2\alpha_2^2 + \beta_2^2 + \beta_1\beta_2) \frac{ \cos p+t }{ \sin p+t } \nonumber \\
& + i(\beta_1^2-\beta_2^2) \frac{\cos p}{\sin p} \nonumber \\
& - 2i\alpha_1 \alpha_0 \frac{\cos \frac{p-t}{2}}{\sin \frac{p-t}{2}} + 2i\alpha_1 \alpha_3 \frac{\sin \frac{p-t}{2}}{\cos \frac{p-t}{2}} + 2i\alpha_2 \alpha_0 \frac{\cos \frac{p+t}{2}}{\sin \frac{p+t}{2}} - 2i\alpha_2 \alpha_3 \frac{\sin \frac{p+t}{2}}{\cos \frac{p+t}{2}} \nonumber \\
&d_2(p,t;\alpha_0,\alpha_1,\beta_1,\alpha_2,\beta_2,\alpha_3) \nonumber \\
=& \frac{2}{it} \sigma(s) + i (4\alpha_1 \alpha_2 - (\beta_1+\beta_2)^2) \frac{\cos t}{\sin t} \\
&d_3(p,t;\alpha_0,\alpha_1,\beta_1,\alpha_2,\beta_2,\alpha_3) \nonumber \\
=& 2\sigma_s \Bigg( - 2\sum_{j = 1}^\infty jV_j \left( \cos j(p-t) + \cos j(p+t) \right) - 2\sum_{j=0}^5 \alpha_j \nonumber \\
&- i\beta_1 \frac{\cos p-t}{\sin p-t} - i\beta_2 \frac{\cos p+t}{\sin p+t} + i(\beta_1 - \beta_2) \left( \frac{\cos t}{\sin t} - \frac{1}{t} \right) - i(\beta_1 + \beta_2) \frac{\cos p}{\sin p} \Bigg). \nonumber 
\end{align}
\end{proposition}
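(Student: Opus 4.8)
The plan is to follow the strategy of Section 8 of \cite{Claeys2015} essentially verbatim, but with the extra deterministic singularities at $\pm 1$ carried along passively. The starting point is the differential identity of the previous Proposition,
\begin{equation*}
\frac{1}{i}\frac{\d}{\d t}\log D_n(f_{p,t}) = \sum_{k=1,2,4,5} q_k\left( n\beta_k - 2\alpha_k z_k \left(\frac{\d Y^{-1}}{\d z}\tilde Y\right)_{22}(z_k)\right),
\end{equation*}
so everything reduces to computing the $(2,2)$ entry of $\frac{\d Y^{-1}}{\d z}\tilde Y$ at the four merging points $z_1,z_2,z_4,z_5$ asymptotically as $n\to\infty$, uniformly in $p\in(\epsilon,\pi-\epsilon)$ and $0<t<t_0$. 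First I would unfold the chain of transformations $Y = R\,N\,(\text{or }P_\pm)\,W_\pm^{-1}\,\Phi_\pm^{-1}\,E_\pm^{-1}$ inside the discs $U_\pm$ (and $\tilde{R},\tilde P_k$ in the regime $\omega(n)/n<t<t_0$), using $R = I + \mathcal{O}(n^{-1})$, $R' = \mathcal{O}(n^{-1})$ (resp. $\tilde R = I+\mathcal{O}((nt)^{-1})$) to control the contribution of the small-jump factor. This expresses $\left(\frac{\d Y^{-1}}{\d z}\tilde Y\right)_{22}(z_k)$ in terms of the explicit factors $N$, $D_{\text{in}/\text{out},p,t}$, $W_\pm$, $E_\pm$, and the local Painlev\'e parametrix $\Phi_\pm$ (equivalently $\Psi$), the derivative $\Psi'\Psi^{-1}$ of which is controlled by the Lax pair and whose relevant combination at $\zeta=\pm i$ is governed by $\sigma(s)$ and $\sigma_s(s)$, $s=-2int$, exactly as in \cite{Claeys2015}[Section 8].

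Next I would collect terms. The $n\beta_k$ terms combine to give the leading $2n(\beta_1-\beta_2)$ after inserting $q_k$ and $\beta_4=-\beta_2$, $\beta_5=-\beta_1$. The contribution of the analytic part $V$ and of the Szeg\H{o} function $D$ produces $d_1$: the $V'(z_k)$ pieces come from $b_\pm'/b_\pm$ in $E_\pm$, while the cotangent terms $\cot\frac{\theta_j-\theta_k}{2}$ come from differentiating the algebraic prefactors $\prod (z-z_k)^{\alpha_k\pm\beta_k}$ in $D_{\text{in}/\text{out}}$ evaluated at the other singularities, including now the half-angle cotangents $\cot\frac{p\mp t}{2}$, $\tan\frac{p\mp t}{2}$ arising from the factors $|z-1|^{2\alpha_0}$ and $|z+1|^{2\alpha_3}$. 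The Painlev\'e parametrix contributes $d_2 = \frac{2}{it}\sigma(s) + i(4\alpha_1\alpha_2 - (\beta_1+\beta_2)^2)\cot t$, where the cotangent term is the ``$\zeta$-self-interaction'' piece $P^{(\infty)}_\pm$-correction and $\frac{2}{it}\sigma(s)$ is the trace of the residue matrix $\Psi_{\pm,1}$ expressed through the $\sigma$-function via the identity of \cite{Claeys2015}. The remaining cross terms, which are linear in $\sigma_s$, assemble into $d_3$; here $\sigma_s = \sigma_s(s)$ multiplies precisely the logarithmic derivative (in $t$) of all the $t$-dependent prefactors sitting ``around'' the local parametrix, namely the $V$-sum, the sum of all $\alpha_j$, and the various $\cot$ of $p\pm t$, $p$, and $t$; the subtracted $1/t$ next to $\cot t$ reflects the $\left(\frac{is}{2}\right)^{\mp(\beta_1+\beta_2)\sigma_3}$ normalization in $P^{(\infty)}_\pm$.

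For the error term $\epsilon_{n,p,t}$ I would split the $t$-interval into $0<t\le\omega(n)/n$ and $\omega(n)/n<t<t_0$. On the first piece one uses the parametrix $P_\pm$ built from $\Phi_\pm$ and the bounds $R,R'=I+\mathcal{O}(n^{-1})$, so $\epsilon_{n,p,t}=\mathcal{O}(n^{-1})$ pointwise, but since the $\sigma$-function can have (integrable) singular behaviour as $s\to -i0_+$ and the interval has length $\le\omega(n)/n$, one gets $\int_0^{\omega(n)/n}\epsilon_{n,p,\tau}\,\d\tau = \mathcal{O}(\omega(n)/n) + (\text{small-}s\text{ corrections}) = o(1)$ after choosing $\omega$ growing slowly; on the second piece $\tilde R,\tilde R'= I+\mathcal{O}((nt)^{-1})$ gives $\epsilon_{n,p,t}=\mathcal{O}((nt)^{-1})$, so $\int_{\omega(n)/n}^{t}\epsilon_{n,p,\tau}\,\d\tau = \mathcal{O}(\log(nt_0) - \log\omega(n)) $ — which is not yet $o(1)$, so one must be more careful and extract the next order from the large-$s$ asymptotics of $\Psi$ in \cite{Claeys2015}[Section 5], showing the genuinely-error part is $\mathcal{O}((nt)^{-\delta})$, hence $\mathcal{O}(\omega(n)^{-\delta})$ after integrating. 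Finally, Theorem \ref{thm:T, T+H extended} for $D_n(f_{p,t})$ follows by integrating \eqref{eqn:diff_id_asymptotics} in $t$ from a fixed $t_0$ down to $t$ with $t>\omega(n)/n$, using the known (non-uniform, but uniform-away-from-coincidence) asymptotics of Theorem \ref{thm:T non-uniform} as the integration constant, and checking that the integral of $d_1+d_2+d_3$ reproduces the claimed expansion up to $o(1)$.

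\emph{Main obstacle.} The principal difficulty is the bookkeeping and uniformity of the error estimate in the crossover regime $t\sim\omega(n)/n$ and, more seriously, the identification of the $\sigma$- and $\sigma_s$-dependent parts: one must show that the contribution of $\Psi'\Psi^{-1}$ at $\zeta=\pm i$, after conjugation by $E_\pm$, $W_\pm$, and the algebraic factors of $D$, collapses exactly into the combination $d_2 + d_3$ and not into something with spurious $p$-dependence in the $\sigma$-terms — i.e. verifying that the merging at $e^{\pm ip}$ genuinely decouples from the singularities at $\pm 1$ at the level of the local analysis. This is precisely where one leans hardest on \cite{Claeys2015}[Section 8], and where the presence of the extra factors $|z-1|^{2\alpha_0}|z+1|^{2\alpha_3}$ must be shown to contribute only to $d_1$ and to the $\sum_j\alpha_j$ and $\cot\frac{p\mp t}{2}$, $\tan\frac{p\mp t}{2}$ pieces of $d_3$, through their effect on $E_\pm$ and $D_{\text{in}/\text{out}}$, and nowhere else.
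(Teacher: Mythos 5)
Your plan is essentially the paper's proof: both transfer Section 8 of \cite{Claeys2015} to this symbol, starting from the differential identity, unfolding $Y\to T\to S$ and the local parametrices, using $R=I+\mathcal{O}(n^{-1})$ (resp.\ $\tilde R=I+\mathcal{O}((nt)^{-1})$) to isolate the error, identifying the $\sigma$- and $\sigma_s$-contributions at $\zeta=\pm i$ through the analogues of Propositions 3.1--3.2 of \cite{Claeys2015}, collecting the $V$-, Szeg\H{o}- and $\pm1$-factor contributions into $d_1$ and $d_3$, and splitting the error analysis over $0<t\le\omega(n)/n$ and $\omega(n)/n<t<t_0$; your identification of the main obstacle (decoupling of the fixed singularities at $\pm1$, which only feed into $E_\pm$, $D_{\mathrm{in/out}}$ and hence into $d_1$ and the $\sum_j\alpha_j$, half-angle terms of $d_3$) is exactly the point the paper checks.

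One bookkeeping slip in your second regime: integrating the parametrix error $\mathcal{O}((n\tau)^{-1})$ over $(\omega(n)/n,t_0)$ gives $\mathcal{O}(n^{-1}\log(nt_0/\omega(n)))=\mathcal{O}(n^{-1}\log n)$, not $\mathcal{O}(\log(nt_0)-\log\omega(n))$ — you dropped the factor $n^{-1}$ — so this piece is harmlessly $o(1)$ (indeed $o(\omega(n)^{-1})$) and needs no rescue. The genuine source of the $\mathcal{O}(\omega(n)^{-\delta})$ rate in that regime is different: the paper replaces the Painlev\'e parametrix there by the explicit confluent hypergeometric parametrices $\tilde P_k$ (Section \ref{section:local 2}), obtains a closed-form main term $S(p,t;\dots)$, and then reconciles it with $d_1+d_2+d_3$ using the large-$|s|$ asymptotics of $\sigma$ from Theorem \ref{thm:painleve}; the $\mathcal{O}(|s|^{-\delta})$ error in those asymptotics produces the term $\Theta_{n,p,t}$ whose $t$-integral is $\mathcal{O}(\omega(n)^{-\delta})$. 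Your proposed fix (``extract the next order from the large-$s$ asymptotics'') is the same mechanism, just applied to $\Psi$ rather than to $\sigma$ after the comparison step, so there is no essential gap — only note that $\omega$ is given in the statement, not at your disposal to ``choose growing slowly''; the first-regime bound $\int_0^{\omega(n)/n}|A_{n,p,\tau}|\,\mathrm{d}\tau=o(\omega(n)^{-1})$ is obtained in the paper uniformly from the small- and large-$|s|$ asymptotics of $\Psi$, not by shrinking $\omega$.
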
 

\noindent \textbf{Proof:} The proof is analogous to the proof of Proposition 8.1 in \cite{Claeys2015}. As is done there, we assume below that $\alpha_k>0$, $k = 1,2,4,5$, for simplicity of notation. Once (\ref{eqn:diff_id_asymptotics}) is proven under this assumption, the case where $\alpha_k = 0$ for some $k$ then follows from the uniformity of the error terms in $\alpha_k$, $k = 1,2,4,5$. Extending to the case where $\alpha_k < 0$ for some $k$ is straightforward. We prove the proposition first in the regime $0 < t \leq \omega(n)/n$ and then in the regime $\omega(n)/n < t < t_0$.  \\

Using the transformation $Y \rightarrow T \rightarrow S$ inside the unit circle, outside the lenses, we can rewrite the differential identity (\ref{eqn:diff id}) in the form 
\begin{equation} \label{eqn:transform of diff id}
\frac{1}{i} \frac{\text{d}}{\text{d}t} \log  D_n(f_{p,t}) = \sum_{k = 1,2,4,5} q_k \left( n\beta_k + 2\alpha_k z_k \left( S^{-1} \frac{\text{d}S}{\text{d}z} \right)_{+,22} (z_k) \right), 
\end{equation}
with $q_k = 1$ for $k = 1,4$ and $q_k = -1$ for $k = 2,5$, and where the limit $z \rightarrow z_k$ is taken from the inside of the unit circle and outside the lenses.

\subsubsection{$0 < t \leq \omega(n)/n$} 
By (\ref{eqn:R}) we get 
\begin{equation}
S(z) = R(z) P_\pm(z), \quad z \in U_\pm,
\end{equation}
and thus 
\begin{align} \label{eqn:SS}
\begin{split}
\left( S^{-1} \frac{\text{d}S}{\text{d}z} \right)_{22} (z) =& \left( P_\pm^{-1} \frac{\text{d}P_\pm}{\text{d}z} \right)_{22} (z) + A_{n,p,t}(z), \quad z \in U_{e^{\pm ip}}, \\
A_{n,p,t}(z) =& \left( P_\pm^{-1}(z) R^{-1}(z) \frac{dR}{dz}(z) P_\pm(z) \right)_{22}.
\end{split}
\end{align}

Following exactly the same approach as on pages 60, 61 in \cite{Claeys2015}, we can use (\ref{eqn:Psi asymptotics}) and the small and large $|s|$ asymptotics from Sections 5, 6 of \cite{Claeys2015} to obtain that for $k = 1,2,4,5$ 
\begin{equation}
\int_0^t |A_{n,p,t}(z_k)| \text{d}t = o(\omega(n)^{-1}), \quad n \rightarrow \infty,
\end{equation}
uniformly in $0 < t \leq \omega(n)/n$ and $p \in (\epsilon, \pi - \epsilon)$, and thus also 
\begin{equation} \label{eqn:epsilon tilde}
\tilde{\epsilon}_{n,p,t} := 2 \sum_{k = 1,2,4,5} q_k \alpha_k z_k A_{n,p,t}(z_k) = \mathcal{O}(\omega(n)^{-1}),
\end{equation}
uniformly in $0 < t \leq \omega(n)/n$ and $p \in (\epsilon, \pi - \epsilon)$.\\ 

By (\ref{eqn:P1}) we see that
\begin{align} \label{eqn:PP1}
\begin{split}
&\left( P_\pm^{-1}\frac{\text{d}P_\pm}{\text{d}z} \right)_{22}(z) \\
=& -\frac{n}{2z} + \frac{1}{2}\frac{f'_{p,t}}{f_{p,t}}(z) + \left( \Phi_\pm^{-1} \frac{\text{d}\Phi_\pm}{\text{d}z} \right)_{22}(z) + \left( \Phi_\pm^{-1} E_\pm^{-1} \frac{\text{d}E_\pm}{\text{d}z} \Phi_\pm \right)_{22}(z),\\
\end{split}
\end{align}
with $z$ inside the unit circle and outside of the lenses of $\Sigma_S$. By (\ref{eqn:E}) we have for $z$ near $e^{ip}$:
\begin{align} \label{eqn:EE}
\begin{split} 
E_\pm(z)^{-1} \frac{\text{d} E_\pm(z)}{\text{d} z} =& h_\pm(z) \sigma_3, 
\end{split}
\end{align}
where 
\begin{align} \label{eqn:h_pm}
\begin{split}
h_\pm(z) =& \pm \frac{\beta_1}{z\log  \frac{z}{e^{\pm ip}} \pm itz} \pm \frac{\beta_2}{z\log  \frac{z}{e^{\pm ip}} \mp itz}  \\
&- \frac{1}{2} \sum_{j = 1}^\infty j V_j z^{j-1} + \frac{1}{2} \sum_{j=-1}^{-\infty} j V_j z^{j-1} - \sum_{j=0}^5 \frac{\beta_j}{z-z_j} - \frac{1}{2z} \sum_{j=0}^5 \alpha_j.
\end{split}
\end{align}

In the following equation we need the fact that
\begin{align} \label{eqn:1/log}
\begin{split}
\frac{1}{\log  (1+z)} =& \frac{1}{z} + \frac{1}{2} + \mathcal{O}(z), \quad z \rightarrow 0, \\ 
\frac{1}{\log  z - \log  z_k} =& \frac{1}{\log  \frac{z}{z_k}} = \frac{z_k}{z - z_k} + \frac{1}{2} + \mathcal{O}(|z-z_k|), \quad z \rightarrow z_k.
\end{split}
\end{align}
Let $k = 1,2$ and denote $k' = 1$ for $k = 2$, $k' = 2$ for $k = 1$. Putting together (\ref{eqn:PP1}) and (\ref{eqn:EE}), we obtain for $k = 1,2$ (as in (8.33) in \cite{Claeys2015})
\begin{align} \label{eqn:PPlimit}
&\left( P_+^{-1} \frac{\text{d}P_+}{\text{d}z} \right)_{22,+}(z_k) \nonumber \\
=& - \frac{n}{2z_k} + \lim_{z \rightarrow z_k} \left( \frac{1}{2} \frac{f'_{p,t}}{f_{p,t}}(z) - \frac{\alpha_k}{z \log  z - z \log  z_1} \right) \nonumber \\
&+ \frac{1}{tz_k} \left( F_{+,k'}^{-1} \frac{\text{d}F_{+,k'}}{\text{d}z} \right)_{22} \left( \frac{1}{t} \log  \frac{z_k}{e^{ip}} \right) + h_+(z_k) (F_{+,k'}^{-1}\sigma_3 F_{+,k'})_{22} \left( \frac{1}{t} \log  \frac{z_k}{e^{ip}} \right) \nonumber \\
=& - \frac{n}{2z_k} + \frac{1}{2} V'(z_k) + \sum_{j = 0}^5 \frac{\beta_j}{2z_k} + \sum_{j,j \neq k} \frac{\alpha_j}{z_k - z_j} - \frac{\alpha_j}{2z_k} \nonumber \\ 
& + \frac{1}{tz_k} \left( F_{+,k'}^{-1} \frac{\text{d}F_{+,k'}}{\text{d}z} \right)_{22} \left( \frac{1}{t} \log  \frac{z_k}{e^{ip}} \right) + h_+(z_k) (F_{+,k'}^{-1}\sigma_3 F_{+,k'})_{22} \left( \frac{1}{t} \log \frac{z_k}{e^{ip}} \right), \nonumber \\
&\left( P_-^{-1} \frac{\text{d}P_-}{\text{d}z} \right)_{22,+}(\overline{z_k}) \\
=& - \frac{n}{2\overline{z_k}} + \lim_{z \rightarrow \overline{z_k}} \left( \frac{1}{2} \frac{f'_{p,t}}{f_{p,t}}(z) - \frac{\alpha_k}{z \log z - z \log \overline{z_k}} \right) \nonumber \\
&+ \frac{1}{t\overline{z_k}} \left( F_{-,k}^{-1} \frac{\text{d}F_{-,k}}{\text{d}z} \right)_{22} \left( \frac{1}{t} \log \frac{\overline{z_k}}{e^{-ip}} \right) + h_-(\overline{z_2}) (F_{-,k}^{-1}\sigma_3 F_{-,k})_{22} \left( \frac{1}{t} \log \frac{\overline{z_k}}{e^{-ip}} \right) \nonumber \\
=& - \frac{n}{2\overline{z_k}} + \frac{1}{2} V'(\overline{z_k}) + \sum_{j = 0}^5 \frac{\beta_j}{2\overline{z_k}} + \sum_{j, z_j \neq \overline{z_k}} \frac{\alpha_j}{\overline{z_k} - z_j} - \frac{\alpha_j}{2\overline{z_k}} \nonumber \\ 
& + \frac{1}{t\overline{z_k}} \left( F_{-,k}^{-1} \frac{\text{d}F_{-,k}}{\text{d}z} \right)_{22} \left( \frac{1}{t} \log \frac{\overline{z_k}}{e^{-ip}} \right) + h_-(\overline{z_k}) (F_{-,k}^{-1}\sigma_3 F_{-,k})_{22} \left( \frac{1}{t} \log \frac{\overline{z_k}}{e^{-ip}} \right), \nonumber
\end{align}
where in the second equalities we used (\ref{eqn:1/log}), and where $F_{\pm,k}$ equal the functions $F_k$ defined in (\ref{eqn:F_1 neq 0}), (\ref{eqn:F_2 neq 0}), (\ref{eqn:F_1 0}) and (\ref{eqn:F_2 0}), with $(\alpha_1, \alpha_2, \beta_1, \beta_2)$ in the appendix replaced by $(\alpha_2,\alpha_1,-\beta_2,-\beta_1)$ in the $-$ case. When replacing $(\alpha_1,\alpha_2,\beta_1,\beta_2)$ in the Painlev\'{e} equation (3.52) in \cite{Claeys2015} with $(\alpha_2,\alpha_1,\beta_2,\beta_1)$ or $(\alpha_1,\alpha_2,-\beta_1,-\beta_2)$, we get the same Painlev\'{e} equation as in our Theorem \ref{thm:painleve}. Thus we can see that Propositions 3.1 and 3.2 in \cite{Claeys2015} become in our case:
\begin{proposition} \label{prop:F}
We have the identities
\begin{align}
\begin{split}
\alpha_2(F_{+,1}(i;s)^{-1}\sigma_3 F_{+,1}(i;s))_{22} =& -\sigma_s(s) + \frac{\beta_1+\beta_2}{2}, \\
\alpha_1(F_{+,2}(-i;s)^{-1}\sigma_3 F_{+,2}(-i;s))_{22} =& \sigma_s(s) + \frac{\beta_1+\beta_2}{2}, \\
\alpha_1(F_{-,1}(i;s)^{-1}\sigma_3 F_{-,1}(i;s))_{22} =& -\sigma_s(s) - \frac{\beta_1+\beta_2}{2}, \\ 
\alpha_2(F_{-,2}(-i;s)^{-1}\sigma_3 F_{-,2}(-i;s))_{22} =& \sigma_s(s) - \frac{\beta_1+\beta_2}{2}, 
\end{split}
\end{align}
and
\begin{align}
\begin{split}
\alpha_2 \left( F_{+,1}(i;s)^{-1} F_{+,1,\zeta}(i;s) \right)_{22} =& \frac{i}{4} \sigma(s) - \frac{i}{8}(\beta_1 + \beta_2)s + \frac{i}{8}(\beta_1+\beta_2)^2, \\
\alpha_1 \left(F_{+,2}(-i;s)^{-1} F_{+,2,\zeta}(-i;s) \right)_{22} =& -\frac{i}{4} \sigma(s) - \frac{i}{8}(\beta_1 + \beta_2)s - \frac{i}{8}(\beta_1+\beta_2)^2, \\
\alpha_1 \left(F_{-,1}(i;s)^{-1} F_{-,1,\zeta}(i;s) \right)_{22} =& \frac{i}{4} \sigma(s) + \frac{i}{8}(\beta_1 + \beta_2)s + \frac{i}{8}(\beta_1+\beta_2)^2, \\
\alpha_2 \left(F_{-,2}(-i;s)^{-1} F_{-,2,\zeta}(-i;s) \right)_{22} =& -\frac{i}{4} \sigma(s) + \frac{i}{8}(\beta_1 + \beta_2)s - \frac{i}{8}(\beta_1+\beta_2)^2.
\end{split}
\end{align}
\end{proposition}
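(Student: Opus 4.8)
\textbf{Proof proposal for Proposition \ref{prop:F}.}

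The plan is to transfer Propositions 3.1 and 3.2 of \cite{Claeys2015} to the present parameter configuration via the symmetry dictionary that relates our $\Phi_{\pm}$ and $F_{\pm,k}$ to the objects $\Psi$, $F_k$ in \cite{Claeys2015}. Recall from Section \ref{section:local 1} that $\Psi_+$ is literally $\Psi(\zeta,s)$ with the appendix parameters $(\alpha_1,\alpha_2,\beta_1,\beta_2)$, while $\Psi_-$ is $\Psi(\zeta,s)$ with those parameters replaced by $(\alpha_4,\alpha_5,\beta_4,\beta_5)=(\alpha_2,\alpha_1,-\beta_2,-\beta_1)$; the same substitution carries $F_1,F_2$ to $F_{-,1},F_{-,2}$, and in the $+$ case one additionally passes from the $(\alpha_1,\alpha_2,\beta_1,\beta_2)$--labeling of the two singularities $\pm i$ to the matching used in \cite{Claeys2015}, which amounts to the swap $(\alpha_1,\alpha_2,\beta_1,\beta_2)\mapsto(\alpha_2,\alpha_1,\beta_2,\beta_1)$ at $\zeta=\pm i$. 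First I would check, as noted in the text just above the proposition, that under each of the substitutions $(\alpha_1,\alpha_2,\beta_1,\beta_2)\mapsto(\alpha_2,\alpha_1,\beta_2,\beta_1)$ and $(\alpha_1,\alpha_2,\beta_1,\beta_2)\mapsto(\alpha_2,\alpha_1,-\beta_2,-\beta_1)$ the $\sigma$--form Painlev\'e V equation (3.52) in \cite{Claeys2015}, with its $\theta_j$--parameters, is mapped to exactly the equation (\ref{eqn:painleve}) of Theorem \ref{thm:painleve}: the parameters $\theta_1,\dots,\theta_4$ are permuted among themselves and the asymptotic normalizations at $s\to -i0_+$ and $s\to -i\infty$ are preserved, so the relevant solution $\sigma(s)$ is the \emph{same} one in all cases. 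This is the conceptual content that makes the transfer legitimate rather than merely formal.

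Next I would carry out the substitution in the two identities of \cite{Claeys2015} that express $\alpha_2\,(F_1(i;s)^{-1}\sigma_3 F_1(i;s))_{22}$ and $\alpha_1\,(F_2(-i;s)^{-1}\sigma_3 F_2(-i;s))_{22}$ in terms of $\sigma_s$ (their Proposition 3.1), and likewise the two identities expressing $\alpha_2\,(F_1^{-1}F_{1,\zeta})(i;s)_{22}$ and $\alpha_1\,(F_2^{-1}F_{2,\zeta})(-i;s)_{22}$ in terms of $\sigma,\sigma_s$ and $s$ (their Proposition 3.2). For the $+$ case one applies the swap $(\alpha_1,\alpha_2,\beta_1,\beta_2)\mapsto(\alpha_2,\alpha_1,\beta_2,\beta_1)$: since $\beta_1+\beta_2$ is symmetric under this swap, the $\tfrac{\beta_1+\beta_2}{2}$ and $\tfrac{i}{8}(\beta_1+\beta_2)s$, $\tfrac{i}{8}(\beta_1+\beta_2)^2$ terms are unchanged, while the roles of the points $i$ and $-i$ (equivalently of $F_{+,1}$ and $F_{+,2}$) are interchanged, which accounts for the sign pattern on the $\sigma$-- and $\sigma_s$--terms in the first and second lines of each displayed block in the proposition. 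For the $-$ case one uses $(\alpha_1,\alpha_2,\beta_1,\beta_2)\mapsto(\alpha_2,\alpha_1,-\beta_2,-\beta_1)$; now $\beta_1+\beta_2\mapsto-(\beta_1+\beta_2)$, which flips the sign of every term linear in $\beta_1+\beta_2$ and so produces the $-\tfrac{\beta_1+\beta_2}{2}$ in lines three and four and the corresponding sign changes in the $F_{-,k}^{-1}F_{-,k,\zeta}$ identities, while the even term $\tfrac{i}{8}(\beta_1+\beta_2)^2$ survives unchanged. Assembling these eight specializations gives exactly the eight identities claimed.

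The main obstacle I anticipate is bookkeeping rather than mathematical depth: one must be scrupulous about (i) which of $i$ or $-i$ each $F_{\pm,k}$ is evaluated at after the relabeling — in \cite{Claeys2015} the singularity carrying exponents $(\alpha_1,\beta_1)$ sits at one of $\pm i$ and $(\alpha_2,\beta_2)$ at the other, and our convention (see (\ref{eqn:Phi Psi}) and the definition of $\tilde P_k$ in (\ref{eqn:P tilde})) fixes $z_1=e^{i(p-t)}\leftrightarrow -i$ and $z_2=e^{i(p+t)}\leftrightarrow i$, so the indices on $F_{+,1}$ versus $F_{+,2}$ must be tracked carefully; (ii) the branch choices in $P^{(\infty)}_\pm$ and the conjugating factors $e^{\pm\pi i(\cdots)\sigma_3}$ in (\ref{eqn:Phi Psi}), which commute through the $(\cdot)_{22}$ entry and with $\sigma_3$ and hence do not affect the identities, but this must be said; and (iii) confirming that the Painlev\'e transcendent is genuinely unchanged under the permutation of $\theta_j$'s — here one should point to the uniqueness clause in Theorem \ref{thm:painleve} (real, pole-free on $-i\mathbb{R}_+$, with the prescribed expansions at $0$ and $\infty$), which pins down $\sigma$ uniquely once the set $\{\theta_1,\dots,\theta_4\}$ and the two normalizations are fixed. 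Once these three points are dispatched, the proof is a one-line invocation: ``Apply the substitutions above to Propositions 3.1 and 3.2 of \cite{Claeys2015}.''
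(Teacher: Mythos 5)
Your proposal takes essentially the same route as the paper: the paper establishes Proposition \ref{prop:F} precisely by noting that the parameter substitutions $(\alpha_1,\alpha_2,\beta_1,\beta_2)\mapsto(\alpha_2,\alpha_1,\beta_2,\beta_1)$ and the $-$-case replacement $(\alpha_2,\alpha_1,-\beta_2,-\beta_1)$ leave the $\sigma$-form Painlev\'e V equation of Theorem \ref{thm:painleve} (and its normalizations) unchanged, and then transferring Propositions 3.1 and 3.2 of \cite{Claeys2015} under those relabelings — which is exactly your argument. Your additional bookkeeping (tracking which of $\pm i$ each $F_{\pm,k}$ is evaluated at, and that the diagonal conjugating factors drop out of the $(22)$ entries) merely makes explicit what the paper leaves implicit, and is correct.
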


Putting together (\ref{eqn:transform of diff id}), (\ref{eqn:SS}) and (\ref{eqn:PPlimit}), we obtain:
\begin{align} \label{eqn:diff_id_F}
&\frac{1}{i} \frac{\text{d}}{\text{d}t} \log D_n(f_{p,t}) \nonumber \\
=& \sum_{k = 1,2,4,5} q_k \left( n\beta_k + \alpha_k z_k V'(z_k) + \alpha_k \sum_{j=0}^5 \beta_j + 2 \alpha_k \sum_{j \neq k} \alpha_j \frac{z_k}{z_k - z_j} - \alpha_k \sum_{j\neq k} \alpha_j \right) \nonumber \\
&+ \sum_{k = 1}^2 (-1)^{k+1} \left( 2 \alpha_k z_k h_+(z_k) \left( F^{-1}_{+,k'} \sigma_3 F_{+,k'} \right)_{22} ((-1)^ki) + \frac{2}{t} \alpha_k \left( F_{+,k'}^{-1} F_{+,k'}' \right)_{22} ((-1)^ki) \right) \nonumber \\
&+ \sum_{k = 1}^2 (-1)^k \left( 2 \alpha_k \overline{z_k} h_-(\overline{z_k}) \left( F^{-1}_{-,k} \sigma_3 F_{-,k} \right)_{22} ((-1)^{k+1}i) + \frac{2}{t} \alpha_k \left( F_{-,k}^{-1} F_{-,k}' \right)_{22} ((-1)^{k+1}i) \right) \nonumber \\
&+ \tilde{\epsilon}_{n,p,t}, \nonumber \\
=& 2n(\beta_1 - \beta_2) + 2\alpha_1 z_1 V'(z_1) - 2\alpha_2 z_2 V'(z_2) \\
&+ 2 \sum_{k = 1,2,4,5} q_k \alpha_k \sum_{j \neq k} \alpha_j \frac{z_k}{z_k - z_j} \nonumber \\
&+ \sum_{k = 1}^2 (-1)^{k+1} \left( 2 \alpha_k z_k h_+(z_k) \left( F^{-1}_{+,k'} \sigma_3 F_{+,k'} \right)_{22} ((-1)^ki) + \frac{2}{t} \alpha_k \left( F_{+,k'}^{-1} F_{+,k'}' \right)_{22} ((-1)^ki) \right) \nonumber \\
&+ \sum_{k = 1}^2 (-1)^k \left( 2 \alpha_k \overline{z_k} h_-(\overline{z_k}) \left( F^{-1}_{-,k} \sigma_3 F_{-,k} \right)_{22} ((-1)^{k+1}i) + \frac{2}{t} \alpha_k \left( F_{-,k}^{-1} F_{-,k}' \right)_{22} ((-1)^{k+1}i) \right) \nonumber \\
&+ \tilde{\epsilon}_{n,p,t}, \nonumber 
\end{align}
where we used that since $\beta_1 = - \beta_5$, $\beta_2 = - \beta_4$, $\alpha_1 = \alpha_5$, $\alpha_2 = \alpha_4$ and $V(z) = V(\overline{z})$, it holds that: 
\begin{align}
\begin{split}
\sum_{k = 1,2,4,5} q_k \alpha_k \sum_{j \neq k} \alpha_j =& 0, \quad \sum_{j = 0}^5 \beta_j = 0, \quad zV'(z) = - \overline{z}V'(\overline{z}).
\end{split}
\end{align}
By Proposition \ref{prop:F} we get 
\begin{align} \label{eqn:FF}
&\sum_{k = 1}^2 (-1)^{k+1} \frac{2}{t} \alpha_k \left( F_{+,k'}^{-1} F_{+,k'}' \right)_{22} ((-1)^ki) + \sum_{k = 1}^2 (-1)^k \frac{2}{t} \alpha_k \left( F_{-,k}^{-1} F_{-,k}' \right)_{22} ((-1)^{k+1}i) \nonumber \\
=& \frac{2}{it} \sigma(s) - \frac{i}{t}(\beta_1 + \beta_2)^2, 
\end{align} 
and 
\begin{align} \label{eqn:FsigmaF}
&\sum_{k = 1}^2 (-1)^{k+1} 2 \alpha_k z_k h_+(z_k) \left( F^{-1}_{+,k'} \sigma_3 F_{+,k'} \right)_{22} ((-1)^ki) \nonumber \\
&+ \sum_{k = 1}^2 (-1)^k 2 \alpha_k \overline{z_k} h_-(\overline{z_k}) \left( F^{-1}_{-,k} \sigma_3 F_{-,k} \right)_{22} ((-1)^{k+1}i) \\
=& \left( 2\sigma_s(s) + \beta_1 + \beta_2 \right) \left( z_1 h_+(z_1) + \overline{z_1}h_-(\overline{z_1}) \right) + \left( 2\sigma_s(s) - \beta_1 - \beta_2 \right) \left( z_2 h_+(z_2) + \overline{z_2}h_-(\overline{z_2}) \right)  \nonumber \\
=& 2\sigma_s \left( z_1 h_+(z_1) + \overline{z_1}h_-(\overline{z_1}) + z_2 h_+(z_2) + \overline{z_2}h_-(\overline{z_2}) \right) \nonumber \\
&+ (\beta_1 + \beta_2) \left( z_1 h_+(z_1) + \overline{z_1}h_-(\overline{z_1}) - z_2 h_+(z_2) - \overline{z_2}h_-(\overline{z_2}) \right). \nonumber 
\end{align}
Then (\ref{eqn:diff_id_F}), (\ref{eqn:FF}) and (\ref{eqn:FsigmaF}) result in 
\begin{align} \label{eqn:diff_id_sum_hh}
\begin{split}
\frac{1}{i} \frac{\text{d}}{\text{d}t} \log D_n(f_{p,t}) =& 2n(\beta_1 - \beta_2) + 2\alpha_1 z_1 V'(z_1) - 2\alpha_2 z_2 V'(z_2) \\
&+ 2 \sum_{k = 1,2,4,5} q_k \alpha_k \sum_{j \neq k} \alpha_j \frac{z_k}{z_k - z_j} \\
&+ \frac{2}{it} \sigma(s) - \frac{i}{t}(\beta_1 + \beta_2)^2 \\
&+ 2\sigma_s \left( z_1 h_+(z_1) + \overline{z_1}h_-(\overline{z_1}) + z_2 h_+(z_2) + \overline{z_2}h_-(\overline{z_2}) \right) \\
&+ (\beta_1 + \beta_2) \left( z_1 h_+(z_1) + \overline{z_1}h_-(\overline{z_1}) - z_2 h_+(z_2) - \overline{z_2}h_-(\overline{z_2}) \right) \\
&+ \tilde{\epsilon}_{n,p,t}.
\end{split}
\end{align}

\noindent With $h_+(z)$ and $h_-(z)$ given in (\ref{eqn:h_pm}), and using (\ref{eqn:1/log}), we obtain: 
\begin{align} \label{eqn:hz}
\begin{split}
h_+(z_1) z_1 =& - \frac{1}{2} \sum_{j = 1}^\infty j V_j (z_1^{j} + \overline{z_1}^j) - \sum_{j\neq 1} \frac{\beta_jz_1}{z_1-z_j} - \frac{1}{2} \sum_{j=0}^5 \alpha_j + \frac{\beta_1}{2} - \frac{\beta_{2}}{2it}, \\
h_+(z_2) z_2 =& - \frac{1}{2} \sum_{j = 1}^\infty j V_j (z_2^{j} + \overline{z_2}^j) - \sum_{j\neq 2} \frac{\beta_jz_2}{z_2-z_j} - \frac{1}{2} \sum_{j=0}^5 \alpha_j + \frac{\beta_2}{2} + \frac{\beta_{1}}{2it}, \\
h_-(\overline{z_1}) \overline{z_1} =& - \frac{1}{2} \sum_{j = 1}^\infty j V_j (z_1^{j} + \overline{z_1}^j) - \sum_{j\neq 5} \frac{\beta_j z_5}{z_5-z_j} - \frac{1}{2} \sum_{j=0}^5 \alpha_j - \frac{\beta_1}{2} - \frac{\beta_{2}}{2it}, \\
h_-(\overline{z_2}) \overline{z_2} =& - \frac{1}{2} \sum_{j = 1}^\infty j V_j (z_2^{j} + \overline{z_2}^j) - \sum_{j\neq 4} \frac{\beta_j z_4}{z_4-z_j} - \frac{1}{2} \sum_{j=0}^5 \alpha_j - \frac{\beta_2}{2} + \frac{\beta_{1}}{2it}. 
\end{split}
\end{align}
We thus see that 
\begin{align} \label{eqn:hh}
\begin{split}
&h_+(z_1)z_1 + \overline{z_1} h_-(\overline{z_1}) \\
=& - 2\sum_{j = 1}^\infty jV_j \cos j(p-t) - \sum_{j=0}^5 \alpha_j - \frac{\beta_2}{it} + \beta_1 \left( \frac{z_1 + \overline{z_1}}{z_1 - \overline{z_1}} \right) \\
&+ \beta_2 \left( - \frac{z_1}{z_1 - z_2} + \frac{z_1}{z_1 - \overline{z_2}} - \frac{\overline{z_1}}{\overline{z_1} - z_2} + \frac{\overline{z_1}}{\overline{z_1} - \overline{z_2}} \right) \\
=& - 2\sum_{j = 1}^\infty jV_j \cos j(p-t) - \sum_{j=0}^5 \alpha_j - \frac{\beta_2}{it} - i\beta_1 \frac{\cos p-t}{\sin p-t} - i\beta_2\frac{\cos t}{\sin t} \\
&- i\beta_2 \frac{\cos p}{\sin p}, \\
&h_+(z_2)z_2 + \overline{z_2}h_-(\overline{z_2}) \\
=& - 2\sum_{j = 1}^\infty jV_j \cos j(p+t) - \sum_{j=0}^5 \alpha_j  + \frac{\beta_1}{it} + \beta_2 \left( \frac{z_2 + \overline{z_2}}{z_2 - \overline{z_2}} \right) \\
&+ \beta_1 \left( - \frac{z_2}{z_2 - z_1} + \frac{z_2}{z_2 - \overline{z_1}} - \frac{\overline{z_2}}{\overline{z_2} - z_1} + \frac{\overline{z_2}}{\overline{z_2} - \overline{z_1}} \right) \\
=& - 2\sum_{j = 1}^\infty jV_j \cos (p+t) - \sum_{j=0}^5 \alpha_j  + \frac{\beta_1}{it} - i\beta_2 \frac{\cos p+t}{\sin p+t} + i\beta_1 \frac{\cos t}{\sin t} \\
& - i\beta_1 \frac{\cos p}{\sin p}. 
\end{split}
\end{align}
Further we calculate
\begin{align} \label{eqn:sum_sin}
\sum_{k = 1,5} q_k \alpha_k \sum_{j \neq k} \alpha_j \frac{z_k}{z_k - z_j} 
=& \alpha_1 \alpha_0 \left( \frac{z_1}{z_1 - 1} - \frac{\overline{z_1}}{\overline{z_1} - 1} \right) + \alpha_1^2 \left( \frac{z_1}{z_1 - \overline{z_1}} - \frac{\overline{z_1}}{\overline{z_1} - z_1} \right) \nonumber \\
&+ \alpha_1 \alpha_2 \left( \frac{z_1}{z_1 - z_2} + \frac{z_1}{z_1 - \overline{z_2}} - \frac{\overline{z_1}}{\overline{z_1} - z_2} - \frac{\overline{z_1}}{\overline{z_1} - \overline{z_2}} \right) \nonumber \\
&+ \alpha_1 \alpha_3 \left( \frac{z_1}{z_1 + 1} - \frac{\overline{z_1}}{\overline{z_1} + 1} \right), \nonumber \\
- \sum_{k = 2,4} q_k \alpha_k \sum_{j \neq k} \alpha_j \frac{z_k}{z_k - z_j} =& \alpha_2 \alpha_0 \left( \frac{z_2}{z_2 - 1} - \frac{\overline{z_2}}{\overline{z_2} - 1} \right) \\
&+ \alpha_2 \alpha_1 \left( \frac{z_2}{z_2 - z_1} + \frac{z_2}{z_2 - \overline{z_1}} - \frac{\overline{z_2}}{\overline{z_2} - z_1} - \frac{\overline{z_2}}{\overline{z_2} - \overline{z_1}} \right) \nonumber \\
&+ \alpha_2^2 \left( \frac{z_2}{z_2 - \overline{z_2}} - \frac{\overline{z_2}}{\overline{z_2} - z_2} \right) + \alpha_2 \alpha_3 \left( \frac{z_2}{z_2 + 1} - \frac{\overline{z_2}}{\overline{z_2} + 1} \right), \nonumber \\
\sum_{k = 1,2,4,5} q_k \alpha_k \sum_{j \neq k} \alpha_j \frac{z_k}{z_k - z_j} =& -i\alpha_1^2 \frac{\cos p-t}{\sin p-t} + i\alpha_2^2 \frac{ \cos p+t }{ \sin p+t } + 2i \alpha_1 \alpha_2 \frac{\cos t}{\sin t} \nonumber \\
&+ \alpha_1 \alpha_0 \frac{z_1 + 1}{z_1 - 1} + \alpha_1 \alpha_3 \frac{z_1 - 1}{z_1 + 1} - \alpha_2 \alpha_0 \frac{z_2 + 1}{z_2 - 1} - \alpha_2 \alpha_3 \frac{z_2 - 1}{z_2 + 1} \nonumber \\
=& -i\alpha_1^2 \frac{\cos p-t}{\sin p-t} + i\alpha_2^2 \frac{ \cos p+t }{ \sin p+t } + 2i \alpha_1 \alpha_2 \frac{\cos t}{\sin t} \nonumber \\
&- i\alpha_1 \alpha_0 \frac{\cos \frac{p-t}{2}}{\sin \frac{p-t}{2}} + i\alpha_1 \alpha_3 \frac{\sin \frac{p-t}{2}}{\cos \frac{p-t}{2}} + i\alpha_2 \alpha_0 \frac{\cos \frac{p+t}{2}}{\sin \frac{p+t}{2}} - i\alpha_2 \alpha_3 \frac{\sin \frac{p+t}{2}}{\cos \frac{p+t}{2}}. \nonumber 
\end{align}
Putting together (\ref{eqn:diff_id_sum_hh}), (\ref{eqn:hh}) and (\ref{eqn:sum_sin}) we get that uniformly in $p \in (\epsilon, \pi - \epsilon)$ and $0 < t \leq \omega(n)/n$:
\begin{align} \label{eqn:diff_id_final}
\begin{split}
&\frac{1}{i} \frac{\text{d}}{\text{d}t} \log D_n(f_{p,t}) \\
=& 2n(\beta_1 - \beta_2) + 2\alpha_1 z_1 V'(z_1) - 2\alpha_2 z_2 V'(z_2) \\
& -2i\alpha_1^2 \frac{\cos p-t}{\sin p-t} + 2i\alpha_2^2 \frac{ \cos p+t }{ \sin p+t } + 4i \alpha_1 \alpha_2 \frac{\cos t}{\sin t} \\
&- 2i\alpha_1 \alpha_0 \frac{\cos \frac{p-t}{2}}{\sin \frac{p-t}{2}} + 2i\alpha_1 \alpha_3 \frac{\sin \frac{p-t}{2}}{\cos \frac{p-t}{2}} + 2i\alpha_2 \alpha_0 \frac{\cos \frac{p+t}{2}}{\sin \frac{p+t}{2}} - 2i\alpha_2 \alpha_3 \frac{\sin \frac{p+t}{2}}{\cos \frac{p+t}{2}} \\
&+ \frac{2}{it} \sigma(s) - \frac{i}{t}(\beta_1 + \beta_2)^2 \\
&+ 2\sigma_s(s) \Bigg( - 2\sum_{j = 1}^\infty jV_j \left( \cos j(p-t) + \cos j(p+t) \right) - 2\sum_{j=0}^5 \alpha_j + \frac{\beta_1 - \beta_2}{it} \\
&- i\beta_1 \frac{\cos p-t}{\sin p-t} - i\beta_2 \frac{\cos p+t}{\sin p+t} + i(\beta_1 - \beta_2) \frac{\cos t}{\sin t} - i(\beta_1 + \beta_2) \frac{\cos p}{\sin p} \Bigg) \\
&+ (\beta_1 + \beta_2) \Bigg( 2\sum_{j = 1}^\infty jV_j \left( \cos j(p+t) - \cos j(p-t) \right) - \frac{\beta_1 + \beta_2}{it} \\
&- i\beta_1 \frac{\cos p-t}{\sin p-t} + i\beta_2 \frac{\cos p+t}{\sin p+t} - i(\beta_1 + \beta_2) \frac{\cos t}{\sin t} + i(\beta_1 - \beta_2) \frac{\cos p}{\sin p} \Bigg) \\
&+ \tilde{\epsilon}_{n,p,t}. 
\end{split}
\end{align}
Simplifying further and setting $\epsilon_{n,p,t} = \tilde{\epsilon}_{n,p,t}$ for $0 < t \leq \omega(n)/n$ we obtain (\ref{eqn:diff_id_asymptotics}) for $0 < t \leq \omega(n)/n$.

\subsubsection{$\omega(n)/n < t < t_0$}
For $\omega(n)/n < t < t_0$, $z \in \tilde{\mathcal{U}}_k$, we obtain instead of (\ref{eqn:SS}):
\begin{align} \label{eqn:SS2}
\begin{split}
\left( S^{-1}(z) \frac{\text{d}S(z)}{\text{d}z} \right)_{22} =& \left( \tilde{P}_k(z)^{-1} \frac{\text{d}\tilde{P}_k(z)}{\text{d}z} \right)_{22} + A_{n,p,t}, \\
A_{n,p,t}(z) =& \left( \tilde{P}_k(z)^{-1} \tilde{R}^{-1}(z) \frac{\text{d}\tilde{R}(z)}{\text{d}z} \tilde{P}_k(z) \right)_{22},
\end{split}
\end{align}
with $\tilde{R}(z)$ given in (\ref{eqn:R2}). From (\ref{eqn:P tilde}) and (\ref{eqn:W}) it follows that for $|z| < 1$
\begin{align} \label{eqn:PP2}
\begin{split}
\left( \tilde{P}_k(z)^{-1} \frac{\text{d}\tilde{P}_k(z)}{\text{d}z} \right)_{22} =& -\frac{n}{2z} + \frac{1}{2} \frac{f_{p,t}'(z)}{f_{p,t}(z)} + \left( M_k(z)^{-1} \frac{\text{d}M_k(z)}{\text{d}z} \right)_{22} \\
&+ \tilde{h}_1(z) \left( M_k(z)^{-1} \sigma_3 M_k(z) \right)_{22}, \\
A_{n,p,t}(z_k) =& \lim_{z \rightarrow z_k} A_{n,p,t}(z) \\
=& \lim_{z \rightarrow z_k} \left( M_k^{-1} \tilde{E}_k^{-1} \tilde{R}^{-1} \frac{\text{d}\tilde{R}}{\text{d}z} \tilde{E}_k M_k \right)_{22}(z),
\end{split}
\end{align}
where 
\begin{align}
\begin{split}
M_1(z) =& M^{(\alpha_1,\beta_1)}(nt(\frac{1}{t} \log \frac{z}{e^{ip}} + i)), \\
M_2(z) =& M^{(\alpha_2,\beta_2)}(nt(\frac{1}{t} \log \frac{z}{e^{ip}} - i)), \\
M_4(z) =& M^{(\alpha_4,\beta_4)}(nt(\frac{1}{t} \log \frac{z}{e^{-ip}} + i)), \\
M_5(z) =& M^{(\alpha_5,\beta_5)}(nt(\frac{1}{t} \log \frac{z}{e^{-ip}} - i)), 
\end{split}
\end{align}
and where
\begin{align} \label{eqn:hhtilde}
\begin{split}
\tilde{h}_1(z)\sigma_3 =& \tilde{E}_1(z)^{-1} \frac{\text{d} \tilde{E}_1(z)}{\text{d}z}, \quad \tilde{h}_1(z) = h_+(z) - \frac{\beta_2}{z \log \frac{z}{e^{ip}} - itz}, \\ 
\tilde{h}_2(z)\sigma_3 =& \tilde{E}_2(z)^{-1} \frac{\text{d} \tilde{E}_2(z)}{\text{d}z}, \quad \tilde{h}_2(z) = h_+(z) - \frac{\beta_1}{z \log \frac{z}{e^{ip}} + itz}, \\ 
\tilde{h}_4(z)\sigma_3 =& \tilde{E}_4(z)^{-1} \frac{\text{d} \tilde{E}_4(z)}{\text{d}z}, \quad \tilde{h}_4(z) = h_-(z) + \frac{\beta_1}{z \log \frac{z}{e^{-ip}} - itz}, \\ 
\tilde{h}_5(z)\sigma_3 =& \tilde{E}_5(z)^{-1} \frac{\text{d} \tilde{E}_5(z)}{\text{d}z}, \quad \tilde{h}_5(z) = h_-(z) + \frac{\beta_2}{z \log \frac{z}{e^{ip}} + itz}, 
\end{split}
\end{align}
with $h_+(z),h_-(z)$ given in (\ref{eqn:h_pm}). By (\ref{eqn:R2 asymptotics}) and the fact that $\tilde{E}_k$ is uniformly bounded for $p \in (\epsilon, \pi - \epsilon)$, $\omega(n)/n < t < t_0$, and $z \in \tilde{U}_k$, we see that 
\begin{align}
A_{n,p,t}(z_k) = \mathcal{O}((nt)^{-1}) = \mathcal{O}(\omega(n)^{-1})
\end{align}
uniformly in $p \in (\epsilon, \pi - \epsilon)$, $\omega(n)/n < t < t_0$, and thus also 
\begin{equation} \label{eqn:epsilon tilde 2}
\tilde{\epsilon}_{n,p,t} := 2 \sum_{k = 1,2,4,5} q_k \alpha_k z_k A_{n,p,t}(z_k) = \mathcal{O}(\omega(n)^{-1}),
\end{equation}
uniformly in $0 < t < \omega(n)/n$ and $p \in (\epsilon, \pi - \epsilon)$. This implies that as $n \rightarrow \infty$
\begin{align} \label{eqn:error2}
\int_{\omega(n)/n}^{t_0} |\tilde{\epsilon}_{n,p,t}|\text{d}t = o(\omega(n)^{-1}), 
\end{align}
uniformly in $p \in (\epsilon, \pi - \epsilon)$. 

From (8.41) and (8.42) in \cite{Claeys2015} we can see that for $z \rightarrow z_k$ inside the unit circle and outside of the lenses of $\Sigma_S$ we have 
\begin{align}
\begin{split}
\left( M_1^{-1} \frac{\text{d}M_1}{\text{d}z} \right) (z) =& - \left( \frac{\beta_1}{2\alpha_1} + \frac{\alpha_1}{n \log \frac{z}{e^{ip}} + int} \right) \frac{n}{z} + o(1),\\
\left( M_2^{-1} \frac{\text{d}M_2}{\text{d}z} \right) (z) =& - \left( \frac{\beta_2}{2\alpha_2} + \frac{\alpha_2}{n \log \frac{z}{e^{ip}} - int} \right) \frac{n}{z} + o(1),\\
\left( M_4^{-1} \frac{\text{d}M_4}{\text{d}z} \right) (z) =& - \left( \frac{-\beta_2}{2\alpha_2} + \frac{\alpha_2}{n \log \frac{z}{e^{-ip}} + int} \right) \frac{n}{z} + o(1),\\
\left( M_5^{-1} \frac{\text{d}M_5}{\text{d}z} \right) (z) =& - \left( \frac{-\beta_1}{2\alpha_1} + \frac{\alpha_1}{n \log \frac{z}{e^{-ip}} - int} \right) \frac{n}{z} + o(1),
\end{split}
\end{align}
and in the same limit,
\begin{align}
\begin{split}
\left( M_k \sigma_3 M \right)_{22}(z_k) =& \frac{\beta_k}{\alpha_k}.
\end{split}
\end{align}
Together with (\ref{eqn:hz}) and (\ref{eqn:hhtilde}) we obtain (again in the same limit)
\begin{align}
&z_k \left( M^{-1}_k \frac{\text{d}M_k}{\text{d}z} \right)_{22}(z) + z_k\tilde{h}_k(z_k) \left( M_k^{-1} \sigma_3 M_k \right)_{22} (z_k) \\
=& \left( \frac{\beta_k}{2\alpha_k} + \frac{\alpha_k}{n \log \frac{z}{z_k}} \right)n + \frac{\beta_k}{\alpha_k} \left(  -\frac{1}{2} \sum_{j = 1}^\infty j V_j (z_k^{j} + \overline{z_k}^j) - \sum_{j,j\neq k} \frac{\beta_jz_k}{z_k-z_j} - \frac{1}{2} \sum_{j=0}^5 \alpha_j + \frac{\beta_k}{2} \right) \nonumber \\
& +o(1). \nonumber 
\end{align}
Combining this with (\ref{eqn:1/log}) and (\ref{eqn:PP2}) we get
\begin{align}
\begin{split}
&2\alpha_kz_k \left( \tilde{P}_k^{-1} \frac{\text{d}\tilde{P}_k}{\text{d}z} \right)_{+,22}(z_k) \\
=& - n(\alpha_k + \beta_k) + \alpha_k z_k V'(z_k) + 2 \alpha_k \sum_{j,j \neq k} \frac{\alpha_jz_k}{z_k - z_j} - \alpha_k \sum_{j,j\neq k} \alpha_j \\
& +2\beta_k \left(  -\frac{1}{2} \sum_{j = 1}^\infty j V_j (z_k^{j} + \overline{z_k}^j) - \sum_{j,j\neq k} \frac{\beta_jz_k}{z_k-z_j} - \frac{1}{2} \sum_{j=0}^5 \alpha_j + \frac{\beta_k}{2} \right).
\end{split}
\end{align}
Together with (\ref{eqn:transform of diff id}), (\ref{eqn:SS2}) and (\ref{eqn:epsilon tilde 2}) we obtain
\begin{align} \label{eqn:diff id final2}
\begin{split}
\frac{1}{i} \frac{\text{d}}{\text{d}t} \log D_n(f_{p,t}) =& S(p,t;\alpha_0,\alpha_1,\beta_1,\alpha_2, \beta_2,\alpha_3) + \tilde{\epsilon}_{n,p,t},
\end{split}
\end{align}
where 
\begin{align}
\begin{split}
&S(p,t;\alpha_0,\alpha_1,\beta_1,\alpha_2, \beta_2,\alpha_3) \\
=& 2\sum_{k=1}^2 (-1)^{k+1} \left( (\alpha_k- \beta_k) \left(\sum_{j = 1}^\infty jV_j z_k^j\right) - (\alpha_k + \beta_k) \left( \sum_{j = 1}^\infty jV_{j} \overline{z_k}^j \right) \right) \\  
&- 2 (\beta_1 - \beta_2) \sum_{k = 0}^5 \alpha_k \\
&-2i(\alpha_1^2 + \beta_1^2) \frac{\cos (p-t)}{\sin (p-t)} + 2i(\alpha_2^2 + \beta_2^2) \frac{\cos(p+t)}{\sin(p+t)} + 4i(\alpha_1\alpha_2 - \beta_1 \beta_2) \frac{\cos t}{\sin t} \\
&- 2i\alpha_1 \alpha_0 \frac{\cos \frac{p-t}{2}}{\sin \frac{p-t}{2}} + 2i\alpha_1 \alpha_3 \frac{\sin \frac{p-t}{2}}{\cos \frac{p-t}{2}} + 2i\alpha_2 \alpha_0 \frac{\cos \frac{p+t}{2}}{\sin \frac{p+t}{2}} - 2i\alpha_2 \alpha_3 \frac{\sin \frac{p+t}{2}}{\cos \frac{p+t}{2}}.
\end{split}
\end{align}

Now we compare this expression to (\ref{eqn:diff_id_final}), obtained for $0 < t \leq \omega(n)/n$. Consider (\ref{eqn:diff_id_sum_hh}) for large $ s = -2int$ and without the error term. Substituting there the asymptotics of $\sigma(s)$ from Theorem \ref{thm:painleve} and using (\ref{eqn:sum_sin}) we see that
\begin{align}
&\frac{1}{i} \frac{\text{d}}{\text{d}t} \log D_n(f_{p,t}) \nonumber \\
=& 2\alpha_1 z_1 V'(z_1) - 2\alpha_2 z_2 V'(z_2) \nonumber \\
& -2i\alpha_1^2 \frac{\cos p-t}{\sin p-t} + 2i\alpha_2^2 \frac{ \cos p+t }{ \sin p+t } + 4i \alpha_1 \alpha_2 \frac{\cos t}{\sin t} \\
&- 2i\alpha_1 \alpha_0 \frac{\cos \frac{p-t}{2}}{\sin \frac{p-t}{2}} + 2i\alpha_1 \alpha_3 \frac{\sin \frac{p-t}{2}}{\cos \frac{p-t}{2}} + 2i\alpha_2 \alpha_0 \frac{\cos \frac{p+t}{2}}{\sin \frac{p+t}{2}} - 2i\alpha_2 \alpha_3 \frac{\sin \frac{p+t}{2}}{\cos \frac{p+t}{2}} \nonumber \\
& + \frac{1}{it}4 \beta_1 \beta_2 + 2\beta_1\left( h_+(z_1)z_1 + h_-(\overline{z_1}) \overline{z_1} \right) - 2\beta_2 \left( h_+(z_2)z_2 + h_-(\overline{z_2}) \overline{z_2} \right) \nonumber \\
&+ \Theta_{n,p,t}, \nonumber 
\end{align}
where $\Theta_{n,p,t}$ arises from the error term in the asymptotics of $\sigma(s)$, and becomes of order $\omega(n)^{-\delta}$ after integration w.r.t. $t$, i.e.
\begin{equation}
\left| \int_{\omega(n)/n}^t \Theta_{n,p,\tau} \text{d}\tau \right| = \mathcal{O}(\omega(n)^{-\delta}),
\end{equation}
uniformly in $p \in (\epsilon, \pi - \epsilon)$ and $\omega(n)/n < t < t_0$. Using (\ref{eqn:hh}) now we see that  
\begin{align}
\begin{split}
n(\beta_1 - \beta_2) + d_1 + d_2 + d_3 = S + \Theta_{n,p,t}, \quad \omega(n)/n < t < t_0.
\end{split}
\end{align}
Thus when setting 
\begin{align}
\begin{split}
\epsilon_{n,p,t} =& \tilde{\epsilon}_{n,p,t} + \Theta_{n,p,t}, \quad \omega(n)/n < t < t_0,
\end{split}
\end{align}
we see that (\ref{eqn:diff_id_asymptotics}) remains valid also in the region $\omega(n)/n < t < t_0$, where the smallness of the error terms follows from (\ref{eqn:epsilon tilde 2}). \qed

\begin{remark} Integrating (\ref{eqn:diff id final2}) from $t$ to $t_0$ with $\omega(n)/n < t < t_0$ and using Theorem \ref{thm:T non-uniform} for the expansion of $D_n(f_{p,t_0})$, we get the same expansion for $D_{n}(f_{p,t})$ that Theorem \ref{thm:T non-uniform} gives. The error term is then $\mathcal{O}(\omega(n)^{-1})$ and uniform for $p \in (\epsilon, \pi - \epsilon)$ and $\omega(n)/n < t < t_0$. Thus we have proven the statement on Toeplitz determinants in Theorem \ref{thm:T, T+H extended}.   
\end{remark}

\subsection{Integration of the Differential Identity}

We now integrate (\ref{eqn:diff_id_asymptotics}), where we use exactly the same approach as in Section 8.2 of \cite{Claeys2015}. We obtain
\begin{align}
&\int_0^t d_1(p,\tau;\alpha_0,\alpha_1,\beta_1,\alpha_2,\beta_2,\alpha_3) \text{d}\tau \nonumber \\
=& +2i\alpha_1 \left( V(e^{i(p-t)}) - V(e^{ip}) \right) + 2i\alpha_2 \left( V(e^{i(p+t)}) - V(e^{ip}) \right) \\
&+ 2(\beta_1+\beta_2) \sum_{j = 1}^\infty V_j \left( \sin j(p+t) + \sin j(p-t) - 2 \sin jp\right) \nonumber \\
& + i\left( 2\alpha_1^2 + \beta_1^2 + \beta_1\beta_2 \right) \log \frac{\sin p-t}{\sin p} +i \left( 2\alpha_2^2 + \beta_2^2 + \beta_1\beta_2 \right) \log \frac{\sin p+t}{\sin p} + it(\beta_1^2 - \beta_2^2) \frac{\cos p}{\sin p} \nonumber \\
& + 4i\alpha_1 \alpha_0 \log \frac{\sin \frac{p-t}{2}}{\sin \frac{p}{2}} - 4i\alpha_1 \alpha_3 \log \frac{\cos \frac{p-t}{2}}{\cos \frac{p}{2}} + 4i\alpha_2 \alpha_0 \log \frac{ \sin \frac{p+t}{2} }{ \sin \frac{p}{2}} - 4i\alpha_2 \alpha_3 \log \frac{ \cos \frac{p+t}{2} }{ \cos \frac{p}{2} }, \nonumber 
\end{align}
and 
\begin{align}
&\int_0^t d_2(p,\tau;\alpha_0,\alpha_1,\beta_1,\alpha_2,\beta_2,\alpha_3) \text{d}\tau \\
=&- 2i\int_0^{-2int} \frac{1}{s} \left( \sigma(s) - 2\alpha_1\alpha_2 + \frac{1}{2}(\beta_1+\beta_2)^2 \right) \text{d}s + i\left( (\beta_1+\beta_2)^2 - 4\alpha_1\alpha_2 \right) \log \frac{\sin t}{t}, \nonumber 
\end{align}
and 
\begin{align}
&\int_0^t d_3(p,\tau;\alpha_0,\alpha_1,\beta_1,\alpha_2,\beta_2,\alpha_3) \text{d}\tau \nonumber \\
=& (\beta_1 - \beta_2) \Bigg[ 2\sum_{j = 1}^\infty V_j \left( \sin j(p-t) - \sin j(p+t) \right) - 2t \sum_{j=0}^5 \alpha_j \\
& + i\beta_1 \log \frac{\sin p-t}{\sin p} - i\beta_2 \log \frac{\sin p+t}{\sin p} + i(\beta_1 - \beta_2) \log  \frac{\sin t}{t} - it(\beta_1 + \beta_2) \frac{\cos p}{\sin p} \Bigg]. \nonumber 
\end{align}
Putting things together we see that, uniformly in $p \in (\epsilon, \pi - \epsilon)$ and $0 < t < t_0$, 
\begin{align} 
&\log D_n(f_{p,t}) \nonumber \\
=& \log D_n(f_{p,0}) + 2int (\beta_1 - \beta_2) \nonumber \\
&- 2\alpha_1 \left( V(e^{i(p-t)}) - V(e^{ip}) \right) - 2\alpha_2 \left( V(e^{i(p+t)}) - V(e^{ip}) \right) \nonumber \\
&+ 4i \sum_{j = 1}^\infty V_j \left( \beta_ 1\sin j(p-t) + \beta_2 \sin j(p+t) - (\beta_1+\beta_2) \sin jp\right) \\
&- 2\left( \alpha_1^2 + \beta_1^2 \right) \log \frac{\sin p-t}{\sin p} - 2\left( \alpha_2^2 + \beta_2^2 \right) \log \frac{\sin p+t}{\sin p} \nonumber \\
& - 4\alpha_1 \alpha_0 \log \frac{\sin \frac{p-t}{2}}{\sin \frac{p}{2}} - 4\alpha_1 \alpha_3 \log \frac{\cos \frac{p-t}{2}}{\cos \frac{p}{2}} - 4\alpha_2 \alpha_0 \log \frac{ \sin \frac{p+t}{2} }{ \sin \frac{p}{2}} - 4i\alpha_2 \alpha_3 \log \frac{ \cos \frac{p+t}{2} }{ \cos \frac{p}{2} } \nonumber \\
&- 2i\int_0^{-2int} \frac{1}{s} \left( \sigma(s) - 2\alpha_1\alpha_2 + \frac{1}{2}(\beta_1+\beta_2)^2 \right) \text{d}s \nonumber \\
&+ 4\left( \beta_1 \beta_2 - \alpha_1\alpha_2 \right) \log \frac{\sin t}{t} \nonumber \\ 
&- 2it(\beta_1 - \beta_2) \sum_{j = 0}^5 \alpha_j \nonumber \\
&+ o(1). \nonumber 
\end{align}
To calculate the asymptotics of $D_n(f_{p,0})$ we use Theorem \ref{thm:T non-uniform} and get
\begin{align}
\begin{split}
&\log D_n(f_{p,0}) \\
=& n V_0 + \sum_{k = 1}^\infty kV_k^2 + \log n \left( \alpha_0^2 + \alpha_3^2 + 2(\alpha_1 + \alpha_2)^2 - 2(\beta_1 + \beta_2)^2 \right) \\
&- \alpha_0 \log \left( b_+(1)b_-(1) \right) - \alpha_3 \log \left( b_+(-1)b_-(-1) \right) \\
&- 2(\alpha_1 + \alpha_2 - \beta_1 - \beta_2) \log b_+(e^{ip}) - 2(\alpha_1 + \alpha_2 + \beta_1 + \beta_2) \log b_+(e^{-ip}) \\
&- 2\alpha_0\alpha_3 \log 2 - 4\alpha_0(\alpha_1+\alpha_2) \log 2 \sin \frac{p}{2} - 4\alpha_3(\alpha_1+\alpha_2) \log 2 \cos \frac{p}{2} \\
&- 2((\beta_1+\beta_2)^2 + (\alpha_1+\alpha_2)^2) \log 2\sin p  \\
&+ 2i\alpha_0(\beta_1+\beta_2)(p - \pi) + 2i\alpha_3(\beta_1+\beta_2) p  \\
&+ 2i(\alpha_1+\alpha_2)(\beta_1+\beta_2)(2p - \pi) \\
&+ \log \frac{G(1+\alpha_0)^2}{G(1+2\alpha_0)} + \log \frac{G(1+\alpha_3)^2}{G(1+2\alpha_3)} \\
&+ \log \frac{G(1+\alpha_1+\alpha_2+\beta_1+\beta_2)^2 G(1+\alpha_1+\alpha_2-\beta_1-\beta_2)^2}{G(1+2\alpha_1+2\alpha_2)^2}\\
&+ o(1),
\end{split}
\end{align}
uniformly in $p \in (\epsilon, \pi - \epsilon)$. Combining the two last equations we obtain
\begin{align} \label{eqn:ln_D_sin}
&\log D_n(f_{p,t}) \nonumber \\
=& 2int (\beta_1 - \beta_2) + n V_0 + \sum_{k = 1}^\infty kV_k^2 + \log n \sum_{j = 0}^5 (\alpha_j^2 - \beta_j^2) \nonumber \\
&- \sum_{j=0}^5 (\alpha_j - \beta_j) \left(\sum_{k = 1}^\infty V_k z_j^k\right) + (\alpha_j + \beta_j) \left( \sum_{k = 1}^\infty V_k \overline{z_j}^k \right) \nonumber \\
&- 2\left( \alpha_1^2 + \beta_1^2 \right) \log 2\sin (p-t) - 2\left( \alpha_2^2 + \beta_2^2 \right) \log 2\sin (p+t) \nonumber \\
&- 4(\alpha_1\alpha_2 + \beta_1\beta_2) \log 2\sin p - 2\alpha_0\alpha_3 \log 2 \nonumber \\
& - 4\alpha_1 \alpha_0 \log 2\sin \frac{p-t}{2} - 4\alpha_1 \alpha_3 \log 2\cos \frac{p-t}{2} - 4\alpha_2 \alpha_0 \log 2\sin \frac{p+t}{2} - 4\alpha_2 \alpha_3 \log 2\cos \frac{p+t}{2} \nonumber \\
&+ 2\int_0^{-2int} \frac{1}{s} \left( \sigma(s) - 2\alpha_1\alpha_2 + \frac{1}{2}(\beta_1+\beta_2)^2 \right) \text{d}s + 4\left( \beta_1 \beta_2 - \alpha_1\alpha_2 \right) \log \frac{\sin t}{nt} \nonumber \\ 
&+ 2i\alpha_0\beta_1(p-t-\pi) + 2i\alpha_0\beta_2(p+t-\pi) + 2i\alpha_3\beta_1 (p-t) + 2i\alpha_3\beta_2(p+t) \nonumber \\
&+ 2i\alpha_1\beta_1(2p-2t-\pi) + 2i\alpha_2\beta_1(2p-2t-\pi) + 2i\alpha_1\beta_2 (2p+2t-\pi) + 2i\alpha_2\beta_2(2p+2t-\pi) \nonumber \\
&+ \log \frac{G(1+\alpha_0)^2}{G(1+2\alpha_0)} + \log \frac{G(1+\alpha_3)^2}{G(1+2\alpha_3)} \nonumber \\
&+ \log \frac{G(1+\alpha_1+\alpha_2+\beta_1+\beta_2)^2 G(1+\alpha_1+\alpha_2-\beta_1-\beta_2)^2}{G(1+2\alpha_1+2\alpha_2)^2} \\
&+ o(1), \nonumber 
\end{align}
uniformly in $p \in (\epsilon, \pi - \epsilon)$ and $0 < t < t_0$. We note that
\begin{align}
&\sum_{0 \leq j < k \leq 5, (j,k) \neq (1,2),(4,5)} (\alpha_j\beta_k - \alpha_k \beta_j) \log \frac{z_k}{z_j e^{i\pi}} \\
=& 2i\alpha_0\beta_1 (p-t-\pi) + 2i\alpha_0\beta_2(p+t-\pi) + 2i\alpha_3\beta_1(p-t) + 2i\alpha_3\beta_2(p+t) \nonumber \\
&+ 2i(\alpha_2\beta_1 + \alpha_1\beta_2)(2p - \pi) + 2i\alpha_1\beta_1(2p - 2t -\pi) + 2i\alpha_2\beta_2(2p + 2t -\pi). \nonumber 
\end{align}
Thus we finally get that
\begin{align} \label{eqn:ln_D_uniform}
&\log D_n(f_{p,t}) \nonumber \\
=& 2int(\beta_1 - \beta_2) + n V_0 + \sum_{k = 1}^\infty kV_k^2 + \log n \sum_{j = 0}^5 (\alpha_j^2 - \beta_j^2) \nonumber \\
&- \sum_{j=0}^5 (\alpha_j - \beta_j) \left(\sum_{k = 1}^\infty V_k z_j^k\right) + (\alpha_j + \beta_j) \left( \sum_{k = 1}^\infty V_k \overline{z_j}^k \right) \nonumber \\
&+ \sum_{0 \leq j < k \leq 5, (j,k) \neq (1,2),(4,5)} 2(\beta_j\beta_k - \alpha_j\alpha_k) \log |z_j - z_k| + (\alpha_j\beta_k - \alpha_k \beta_j) \log \frac{z_k}{z_j e^{i\pi}} \nonumber \\
&+4it(\alpha_1\beta_2 - \alpha_2\beta_1) \nonumber \\
&+ 2\int_0^{-2int} \frac{1}{s} \left( \sigma(s) - 2\alpha_1\alpha_2 + \frac{1}{2}(\beta_1+\beta_2)^2 \right) \text{d}s + 4\left( \beta_1 \beta_2 - \alpha_1\alpha_2 \right) \log \frac{\sin t}{nt} \nonumber \\ 
&+ \log \frac{G(1+\alpha_0)^2}{G(1+2\alpha_0)} + \log \frac{G(1+\alpha_3)^2}{G(1+2\alpha_3)} \nonumber \\
&+ \log \frac{G(1+\alpha_1+\alpha_2+\beta_1+\beta_2)^2 G(1+\alpha_1+\alpha_2-\beta_1-\beta_2)^2}{G(1+2\alpha_1+2\alpha_2)^2} \\
&+ o(1), \nonumber 
\end{align}
uniformly in $p \in (\epsilon, \pi - \epsilon)$ and $0 < t < t_0$, which proves Theorem \ref{thm:T uniform}.

\section{Asymptotics of $\Phi_n(0)$, $\Phi_n(\pm 1)$ and $D_n^{T+H,\kappa}(f_{p,t})$} \label{section:T+H}

Tracing back the transformations of Section \ref{section:asymptotics of polynomials} we get  
\begin{align}
\begin{split}
\Phi_n(0) &= Y^{(n)}_{11}(0) = T(0)_{11} = S(0)_{11} = \left( R(0)N(0) \right)_{11} = - R_{12}(0) D_{\text{in},p,t}(0)^{-1} \\
&= - R_{12}(0) e^{-V_0} = o(1),
\end{split}
\end{align}
uniformly in $p \in (\epsilon, \pi - \epsilon)$ and $0 < t < t_0$. \\

The asymptotics of $\Phi_n(1)$ and $\Phi_n(-1)$ can be calculated exactly as in Chapter 7 of \cite{Deift2011}. As is apparent from (7.13) there, these asymptotics are uniform for all other singularities bounded away from $\pm 1$. Thus when using Lemma \ref{lemma:connection between T and T+H} to calculate from the asymptotics of $D_n(f_{p,t})$ given in Theorem \ref{thm:T, T+H extended}, $\Phi_n(\pm 1)$ and $\Phi_n(0)$, the asymptotics of the corresponding Toeplitz+Hankel determinants, then uniformity of the error terms in $p,t$ is preserved. This proves the statement on Toeplitz+Hankel determinants in Theorem \ref{thm:T, T+H extended}. 

Further, since we know the relation (\ref{eqn:relation}) between the asymptotic expansion of $D_n(f_{p,t})$ which is uniform for $0 < t < t_0$, and the asymptotic expansion for $t > t_0$, Lemma \ref{lemma:connection between T and T+H} immediately gives the relationship between the expansions of $D_n^{T+H,\kappa}(f_{p,t})$ in the two regimes $0 < t < t_0$ and $t > t_0$. In view of the way the uniform asymptotics of $D_n^{T+H,\kappa}(f_{p,t})$ are derived from the uniform asymptotics of $D_{2n}(f_{p,t})^{1/2}$, $\Phi_n(\pm 1)^{1/2}$ and $\Phi_n(0)$, the relationship between the $0 < t < t_0$ asymptotics of $D_n^{T+H,\kappa}(f_{p,t})$ and the $t > t_0$ asymptotics one gets from Theorem \ref{thm:T+H uniform} is given by (\ref{eqn:relation}), with both sides divided by $2$, and $n$ replaced by $2n$. This proves Theorem \ref{thm:T+H uniform}.

\begin{subappendices}

\section{Riemann-Hilbert Problem for $\Psi$} \label{appendix:Psi}
This appendix is a mostly verbatim transfer from the beginning of Section 3 of \cite{Claeys2015}. We include it here to make our account self-contained.  We use $\Psi$ to construct local parametrices for the RHP for the orthogonal polynomials in Section \ref{section:local 1}. We always assume that $\alpha_1,\alpha_2 > -\frac{1}{2}$ and $\beta_1,\beta_2 \in i \mathbb{R}$ (in \cite{Claeys2015} also the more general case of $\alpha_1,\alpha_2,\beta_1,\beta_2 \in \mathbb{C}$ was considered). \\

\noindent \textbf{RH Problem for $\Psi$}
\begin{enumerate}[label=(\alph*)]
\item $\Psi:\mathbb{C} \setminus \Gamma \rightarrow \mathbb{C}^{2\times 2}$ is analytic, where 
\begin{align}
\Gamma =& \cup_{k = 1}^7 \Gamma_k,  &&\Gamma_1 = i + e^{\frac{i\pi}{4}} \mathbb{R}_+, &&\Gamma_2 = i + e^{\frac{3i\pi}{4}\mathbb{R}_+} \nonumber \\
\Gamma_3 =& -i + e^{\frac{5i\pi}{4}} \mathbb{R}_+, &&\Gamma_4 = -i + e^{\frac{7i\pi}4} \mathbb{R}_+, &&\Gamma_5 = -i + \mathbb{R}_+, \\
\Gamma_6 =& i + \mathbb{R}_+, &&\Gamma_7 = [-i,i], \nonumber
\end{align}
with the orientation chosen as in Figure \ref{figure:Psi} ("-" is always on the RHS of the contour).
\item $\Psi$ satisfies the jump conditions
\begin{equation}
\Psi(\zeta)_+ = \Psi(\zeta)_- J_k, \quad \zeta \in \Gamma_k,
\end{equation}
where 
\begin{align}
J_1 =& \left( \begin{array}{cc} 1 & e^{2\pi i(\alpha_2 - \beta_2)} \\ 0 & 1 \end{array} \right), && J_2 = \left( \begin{array}{cc} 1 & 0 \\ - e^{-2\pi i(\alpha_2 - \beta_2)} & 1 \end{array} \right), \nonumber \\
J_3 =& \left( \begin{array}{cc} 1 & 0 \\ -e^{2\pi i (\alpha_1 - \beta_1)} & 1 \end{array} \right), && J_4 = \left( \begin{array}{cc} 1 & e^{-2\pi i(\alpha_1 - \beta_1)} \\ 0 & 1 \end{array} \right) \\
J_5 =& e^{2\pi i \beta_1 \sigma_3}, && J_6 = e^{2\pi i \beta_2 \sigma_3}, \nonumber \\
J_7 =& \left( \begin{array}{cc} 0 & 1 \\ -1 & 1 \end{array} \right). \nonumber 
\end{align}

\item We have in all regions:
\begin{equation} \label{eqn:Psi asymptotics}
\Psi(\zeta) = \left( I + \frac{\Psi_1}{\zeta} + \frac{\Psi_2}{\zeta^2} + \mathcal{O}(\zeta^{-3}) \right) P^{(\infty)} (\zeta) e^{-\frac{is}{4} \zeta \sigma_3}, \quad \text{as } \zeta \rightarrow \infty,
\end{equation}
where 
\begin{equation}
P^{(\infty)}(\zeta) = \left( \frac{is}{2} \right)^{-(\beta_1 + \beta_2)\sigma_3} (\zeta - i)^{-\beta_2 \sigma_3} (\zeta + i)^{-\beta_1 \sigma_3},
\end{equation}
with the branches corresponding to the arguments between $0$ and $2\pi$, and where $s \in -i\mathbb{R}_+$.

\item The functions $F_1$ and $F_2$ defined in (\ref{eqn:F_1 neq 0}), (\ref{eqn:F_2 neq 0}), (\ref{eqn:F_1 0}) and (\ref{eqn:F_2 0}) below are analytic functions of $\zeta$ at $i$ and $-i$ respectively.
\end{enumerate}

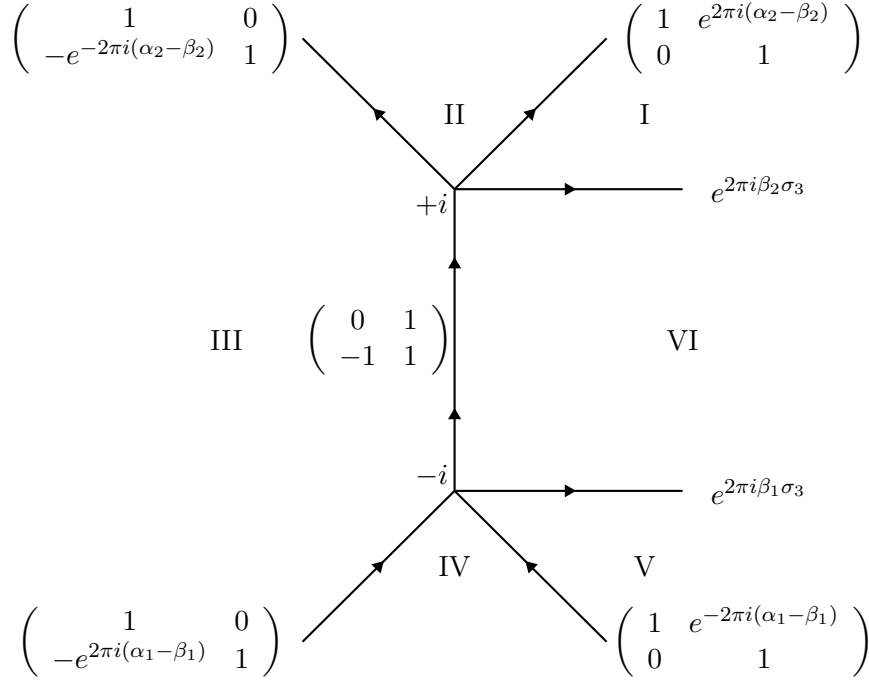
\begin{figure} [H]
\centering
\begin{tikzpicture}
%Contour 
\draw [thick] (0,-2) -- (0,2);
\draw [thick] (0,-2) -- (3,-2);
\draw [thick] (0,2) -- (3,2);
\draw [thick] (0,-2) -- (2,-4);
\draw [thick] (0,-2) -- (-2,-4);
\draw [thick] (0,2) -- (2,4);
\draw [thick] (0,2) -- (-2,4);

%Jump matrices
\fill (-0.3,1.8) node[] {$+i$};
\fill (-0.3,-1.8) node[] {$-i$};
\fill (-1,0) node[] {$\left( \begin{array}{cc} 0 & 1 \\ -1 & 1 \end{array} \right)$};
\fill (3.8,4) node[] {$\left( \begin{array}{cc} 1 & e^{2\pi i(\alpha_2 - \beta_2)} \\ 0 & 1 \end{array} \right)$};
\fill (-4,4) node[] {$\left( \begin{array}{cc} 1 & 0 \\ -e^{-2\pi i(\alpha_2 - \beta_2)} & 1 \end{array} \right)$};
\fill (4,2) node[] {$e^{2\pi i\beta_2 \sigma_3}$};
\fill (4,-2) node[] {$e^{2\pi i\beta_1 \sigma_3}$};
\fill (3.8,-4) node[] {$\left( \begin{array}{cc} 1 & e^{-2\pi i(\alpha_1 - \beta_1)} \\ 0 & 1 \end{array} \right)$};
\fill (-4,-4) node[] {$\left( \begin{array}{cc} 1 & 0 \\ -e^{2\pi i(\alpha_1 - \beta_1)} & 1 \end{array} \right)$};

%Orientation of Contour
\node[fill=black,regular polygon, regular polygon sides=3,inner sep=1.pt, shape border rotate = -45] at (1,3) {};
\node[fill=black,regular polygon, regular polygon sides=3,inner sep=1.pt, shape border rotate = 45] at (-1,3) {};
\node[fill=black,regular polygon, regular polygon sides=3,inner sep=1.pt, shape border rotate = 0] at (0,1) {};
\node[fill=black,regular polygon, regular polygon sides=3,inner sep=1.pt, shape border rotate = 0] at (0,-1) {};
\node[fill=black,regular polygon, regular polygon sides=3,inner sep=1.pt, shape border rotate = 45] at (1,-3) {};
\node[fill=black,regular polygon, regular polygon sides=3,inner sep=1.pt, shape border rotate = -45] at (-1,-3) {};
\node[fill=black,regular polygon, regular polygon sides=3,inner sep=1.pt, shape border rotate = -90] at (1.5,2) {};
\node[fill=black,regular polygon, regular polygon sides=3,inner sep=1.pt, shape border rotate = -90] at (1.5,-2) {};

%Regions
\fill (0,3) node[] {II};
\fill (0,-3) node[] {IV};
\fill (3,0) node[] {VI};
\fill (-3,0) node[] {III};
\fill (2.5,3) node[] {I};
\fill (2.5,-3) node[] {V};
\end{tikzpicture}
\caption{The jump contour and jump matrices of $\Psi$.} \label{figure:Psi}
\end{figure}

The solution $\Psi = \Psi(\zeta;s)$ to this RHP not only depends on the complex variable $\zeta$, but also on the complex parameter $s \in -i\mathbb{R}_+$. Without the additional condition (d) on the behaviour of $\Psi$ near the points $\pm i$, the RHP wouldn't have a unique solution. If $2 \alpha_2 \notin \mathbb{N} \cup \{0\}$, define $F_1(\zeta,s)$ by the equations
\begin{equation} \label{eqn:F_1 neq 0}
\Psi(\zeta;s) = F_1(\zeta,s)(\zeta - i)^{\alpha_2\sigma_3} G_j, \quad \zeta \in \text{region } j,
\end{equation}
where $j \in \{I,II,III,VI\}$, and where $(\zeta - i)^{\alpha_2 \sigma_3}$ is taken with the branch cut on $i + e^{\frac{3\pi i}{4}} \mathbb{R}_+$, with the argument of $\zeta - i$ between $-5\pi/4$ and $3\pi/4$. The matrices $G_j$ are piecewise constant matrices consistent with the jump relations; they are given by 
\begin{align}
\begin{split}
G_{III} =& \left( \begin{array}{cc} 1 & g \\ 0 & 1 \end{array} \right), \quad g = - \frac{1}{2 i \sin (2\alpha_2)}(e^{2\pi i\alpha_2} - e^{-2\pi i \beta_2}),\\
G_{VI} =& G_{III} J_7^{-1}, \quad G_I = G_{VI}J_6, \quad G_{II} = G_I J_1.
\end{split}
\end{align}
It is straighforward to verify that $F_1$ has no jumps near $i$, and it is thus meromorphic in a neighborhood of $i$, with possibly an isolated singularity at $i$. 

Similarly, for $\zeta$ near $-i$, if $2\alpha_1 \notin \mathbb{N} \cup \{0\}$, we define $F_2$ by the equations
\begin{equation} \label{eqn:F_2 neq 0}
\Psi(\zeta;s) = F_2(\zeta,s)(\zeta + i)^{\alpha_1\sigma_3} H_j, \quad \zeta \in \text{region } j,
\end{equation}
where $j \in \{III,,IV,V,VI\}$, where $(\zeta + i)^{\alpha_1 \sigma_3}$ is taken with the branch cut on $-i + e^{\frac{5\pi i}{4}} \mathbb{R}_+$, with the argument of $\zeta + i$ between $-3\pi/4$ and $5\pi/4$, and where the matrices $H_j$ are piecewise constant matrices consistent with the jump relations; they are given by 
\begin{align} \label{eqn:H_j}
\begin{split}
H_{III} =& \left( \begin{array}{cc} 1 & h \\ 0 & 1 \end{array} \right), \quad h = - \frac{1}{2 i \sin (2\alpha_1)}(e^{2\pi i\beta_1} - e^{-2\pi i \alpha_1}),\\
H_{IV} =& G_{III} J_3^{-1}, \quad H_V = G_{IV}J_4^{-1}, \quad H_{VI} = H_V J_5.
\end{split}
\end{align}
Similarly as at $i$, one shows using the jump conditions for $\Psi$ that $F_2$ is  meromorphic near $-i$ with a possible singularity at $-i$. \\

If $2\alpha_2 \in \mathbb{N} \cup \{0\}$, the constant $g$ and the matrices $G_j$ are ill-defined, and we need a different definition of $F_1$:
\begin{align} \label{eqn:F_1 0}
\Psi(\zeta;s) = F_1(\zeta;s) (\zeta - i)^{\alpha_2 \sigma_3} \left( \begin{array}{cc} 1 & g_{int} \log (\zeta - i) \\ 0 & 1 \end{array} \right) G_j, \quad \zeta \in \text{region } j,
\end{align}
where 
\begin{align}
g_{int} = \frac{e^{-2\pi i \beta_2} - e^{2\pi i \alpha_2}}{2\pi i e^{2\pi i \alpha_2}},
\end{align}
and $G_{III} = I$, and the other $G_j$'s are defined as above by applying the appropriate jump conditions. Thus defined, $F_1$ has no jumps in a neighborhood of $i$. Similarly, if $2 \alpha_1 \in \mathbb{N} \cup \{0\}$, we define $F_2$ by the expression:
\begin{align} \label{eqn:F_2 0}
\Psi(\zeta;s) = F_2(\zeta;s) (\zeta + i)^{\alpha_1\sigma_3} \left( \begin{array}{cc} 1 & \frac{e^{-2\pi i \alpha_1} - e^{2\pi i \beta_1}}{2\pi i e^{2\pi i \alpha_1}} \\ 0 & 1 \end{array} \right) H_j, \quad \zeta \in \text{region } j, 
\end{align}
with $H_{III} = I$, and the other $H_j$'s expressed via $H_{III}$ as in (\ref{eqn:H_j}). Then $F_2$ has no jumps near $-i$. \\

Given parameters $s,\alpha_1,\alpha_2,\beta_1,\beta_2$, the uniqueness of the function $\Psi$ which satisfies RH conditions (a) - (d) can be proven by standard arguments. \\

In Section 3 of \cite{Claeys2015} it was shown that for $\alpha_1,\alpha_2, \alpha_1 + \alpha_2 > -\frac{1}{2}$ and $\beta_1,\beta_2 \in i\mathbb{R}$, the RHP is solvable for any $s \in -i\mathbb{R}_+$. Furthermore they analyzed the RHP asymptotically as $s \rightarrow -i\infty$ and $s \rightarrow -i0_+$. 

\begin{remark}
We have to be careful what their $(\alpha_1,\alpha_2,\beta_1,\beta_2)$ correspond to in our case, when using $\Psi$ from \cite{Claeys2015}. In their paper $\alpha_1,\beta_1$ correspond to the singularity left of the merging point and $\alpha_2,\beta_2$ correspond to the singularity to the right of the merging point, while for us in the case + for example, $\alpha_1, \beta_1$ are right and $\alpha_2,\beta_2$ are left.
\end{remark}

\section{Riemann-Hilbert Problem for $M$} \label{appendix:M}
This appendix is a mostly verbatim transfer of Section 4 of \cite{Claeys2015}. We include it here to make our account self-contained.  Let $\alpha > -\frac{1}{2}$ and $\beta \in i\mathbb{R}$. In Section 4.2.1 of \cite{Claeys2011}, see also \cite{Deift2011, Its2008, MMFS10}, a function $M = M^{(\alpha,\beta)}$ was constructed explicitly in terms of the confluent hypergeometric function, which solves the following RH problem:\\

\noindent \textbf{RH Problem for $M$}
\begin{enumerate}[label=(\alph*)]
\item $M: \mathbb{C}\setminus \left( e^{\pm \frac{\pi i }{4}} \mathbb{R} \cup \mathbb{R}_+ \right) \rightarrow \mathbb{C}^{2 \times 2}$ is analytic, 
\item $M$ has continuous boundary values on $e^{\pm \frac{\pi i }{4}} \mathbb{R} \cup \mathbb{R}_+ \setminus \{0\}$ related by the conditions:
\begin{align}
M(\lambda)_+ =& M(\lambda)_- \left( \begin{array}{cc} 1 & e^{\pi i (\alpha - \beta)} \\ 0 & 1 \end{array} \right), && \lambda \in e^{\frac{i\pi}{4}} \mathbb{R}_+, \nonumber \\
M(\lambda)_+ =& M(\lambda)_- \left( \begin{array}{cc} 1 & 0 \\ -e^{-\pi i (\alpha - \beta)} & 0 \end{array} \right), && \lambda \in e^{\frac{3i\pi}{4}} \mathbb{R}_+, \nonumber \\
M(\lambda)_+ =& M(\lambda)_- \left( \begin{array}{cc} 1 & 0 \\ e^{\pi i (\alpha - \beta)} & 0 \end{array} \right), && \lambda \in e^{\frac{5i\pi}{4}} \mathbb{R}_+, \\
M(\lambda)_+ =& M(\lambda)_- \left( \begin{array}{cc} 1 & -e^{-\pi i (\alpha - \beta)} \\ 0 & 1 \end{array} \right), && \lambda \in e^{\frac{7i\pi}{4}} \mathbb{R}_+, \nonumber \\
M(\lambda)_+ =& M(\lambda)_-(\lambda) e^{2\pi i \beta \sigma_3}, && \lambda \in \mathbb{R}_+, \nonumber 
\end{align}
where all the rays of the jump contour are oriented away from the origin. 

\item Furthermore, in all sectors,
\begin{equation} \label{eqn:M asymptotics}
M(\lambda) = (I + M_1 \lambda^{-1} + \mathcal{O}(\lambda^{-2})) \lambda^{-\beta \sigma_3} e^{-\frac{1}{2} \lambda \sigma_3}, \quad \text{as } \lambda \rightarrow \infty,
\end{equation}
where $0 < \arg \lambda < 2\pi$, and 
\begin{equation}
M_1 = M_1^{(\alpha,\beta)} = \left( \begin{array}{cc} \alpha^2 - \beta^2 & - e^{-2\pi i \beta} \frac{\Gamma(1 + \alpha - \beta)}{\Gamma(\alpha + \beta)} \\ e^{2\pi i \beta} \frac{\Gamma(1+\alpha+\beta)}{\Gamma(\alpha - \beta)} & - \alpha^2 + \beta^2 \end{array} \right).
\end{equation}
\end{enumerate}

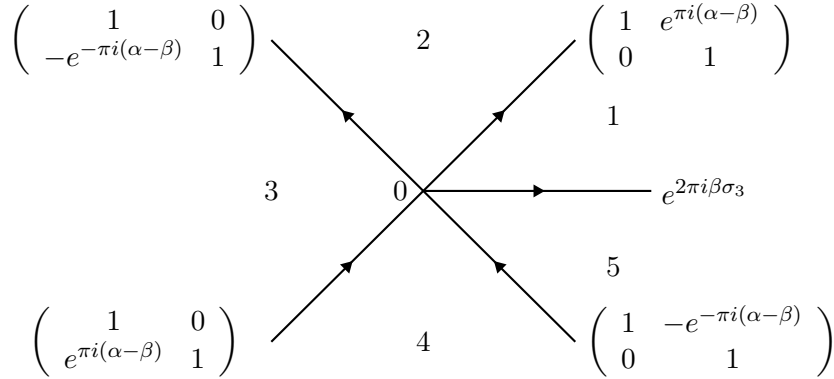
\begin{figure} [H]
\centering
\begin{tikzpicture}
%Contour 
\draw [thick] (0,0) -- (3,0);
\draw [thick] (0,0) -- (2,-2);
\draw [thick] (0,0) -- (-2,-2);
\draw [thick] (0,0) -- (2,2);
\draw [thick] (0,0) -- (-2,2);

%Jump matrices
\fill (-0.3,0) node[] {$0$};
\fill (3.5,2) node[] {$\left( \begin{array}{cc} 1 & e^{\pi i(\alpha - \beta)} \\ 0 & 1 \end{array} \right)$};
\fill (-3.8,2) node[] {$\left( \begin{array}{cc} 1 & 0 \\ -e^{-\pi i(\alpha - \beta)} & 1 \end{array} \right)$};
\fill (3.7,0) node[] {$e^{2\pi i\beta \sigma_3}$};
\fill (3.8,-2) node[] {$\left( \begin{array}{cc} 1 & -e^{-\pi i(\alpha - \beta)} \\ 0 & 1 \end{array} \right)$};
\fill (-3.8,-2) node[] {$\left( \begin{array}{cc} 1 & 0 \\ e^{\pi i(\alpha - \beta)} & 1 \end{array} \right)$};

%Orientation of Contour
\node[fill=black,regular polygon, regular polygon sides=3,inner sep=1.pt, shape border rotate = -45] at (1,1) {};
\node[fill=black,regular polygon, regular polygon sides=3,inner sep=1.pt, shape border rotate = 45] at (-1,1) {};
\node[fill=black,regular polygon, regular polygon sides=3,inner sep=1.pt, shape border rotate = 45] at (1,-1) {};
\node[fill=black,regular polygon, regular polygon sides=3,inner sep=1.pt, shape border rotate = -45] at (-1,-1) {};
\node[fill=black,regular polygon, regular polygon sides=3,inner sep=1.pt, shape border rotate = -90] at (1.5,0) {};

%Regions
\fill (0,2) node[] {2};
\fill (0,-2) node[] {4};
\fill (-2,0) node[] {3};
\fill (2.5,1) node[] {1};
\fill (2.5,-1) node[] {5};
\end{tikzpicture}
\caption{The jump contour and jump matrices of $M$.} \label{figure:M}
\end{figure}

We use $M$ to construct local parametrices for the RHP of the orthogonal polynomials in Section \ref{section:local 2}. For that we also need the local behaviour of $M$ at zero in region 3, i.e. the region between the lines $e^{\frac{3\pi i}{4}} \mathbb{R}_+$ and $e^{\frac{5\pi i}{4}} \mathbb{R}_+$. Write $M \equiv M^{(3)}$ in this region. It is known (see Section 4.2.1 of \cite{Claeys2011}) that $M^{(3)}$ can be written in the form 
\begin{align} \label{eqn:M at 0, neq 0}
M^{(3)}(\lambda) = L(\lambda)\lambda^{\alpha\sigma_3} \tilde{G}_3, \quad 2\alpha \notin \mathbb{N} \cup \{0\},
\end{align}
with the branch of $\lambda^{\pm \alpha}$ chosen with $0 < \arg \lambda < 2\pi$. Here
\begin{align}
\begin{split}
    L(\lambda) =& e^{-\lambda/2} \displaystyle \Big( \begin{array}{c} e^{-i\pi(\alpha+\beta)} \frac{\Gamma(1+\alpha - \beta)}{\Gamma(1+2\alpha)} \varphi(\alpha+\beta,1+2\alpha,\lambda) \\
    -e^{-i\pi(\alpha-\beta)} \frac{\Gamma(1+\alpha + \beta)}{\Gamma(1+2\alpha)} \varphi(1+\alpha+\beta,1+2\alpha,\lambda) \end{array} \\
    &\hspace{3.5cm} \begin{array}{c} e^{i\pi(\alpha-\beta)} \frac{\Gamma(2\alpha)}{\Gamma(\alpha+\beta)} \varphi(-\alpha+\beta,1-2\alpha,\lambda) \\
    e^{i\pi(\alpha+\beta)} \frac{\Gamma(2\alpha)}{\Gamma(\alpha-\beta)} \varphi(1-\alpha+\beta,1-2\alpha,\lambda) \\
    \end{array} \Big),
\end{split}
\end{align}
is an entire function, with
\begin{align}
\varphi(a,c;z) = 1 + \sum_{n = 1}^\infty \frac{a(a+1) \cdots (a + n - 1)}{c(c+1) \cdots (c + n - 1)} \frac{z^n}{n!}, \quad c \neq 0,-1,-2,...,
\end{align}
and $\tilde{G}_3$ is the constant matrix 
\begin{align}
\tilde{G}_3 = \left( \begin{array}{cc} 1 & \tilde{g} \\ 0 & 1 \end{array} \right), \quad \tilde{g} = - \frac{\sin \pi(\alpha+\beta)}{\sin 2\pi \alpha}.
\end{align}
If $2 \alpha$ is an integer, we have 
\begin{align} \label{eqn:M at 0, 0}
M^{(3)}(\lambda) =& \tilde{L}(\lambda) \lambda^{\alpha \sigma_3} \left( \begin{array}{cc} 1 & m(\lambda) \\ 0 & 1 \end{array} \right), \\
m(\lambda) =& \frac{(-1)^{2\alpha + 1}}{\pi} \sin \pi(\alpha + \beta) \log (\lambda e^{-i\pi}), \nonumber 
\end{align}
where $\tilde{L}(\lambda)$ is analytic at $0$, and the branch of the logarithm corresponds to the argument of $\lambda$ between $0$ and $2\pi$. 

\end{subappendices}
\chapter{The Classical Compact Groups and Gaussian Multiplicative Chaos} \label{chapter:GMC}

In \cite{Hughes2005}, Hughes, Keating and O'Connell proved Theorem \ref{thm:HughesKeatingOConnell1}, which states that the real and the imaginary part of the logarithm of the characteristic polynomial of a random unitary matrix converges jointly to a pair of Gaussian fields on the unit circle. Using results on Toeplitz determinants with merging Fisher-Hartwig singularities due to Claeys and Krasovsky in \cite{Claeys2015}, Webb proved Theorem \ref{thm:Webb}, which states that powers of the exponential of the real and imaginary part of the logarithm of the characteristic polynomial of a random unitary matrix converge, when suitably normalized, to Gaussian multiplicative chaos measures on the unit circle. Webb proved the result only in the so-called $L^2$-phase, that is those powers for which the second moment of the total mass of the limiting GMC measure is finite. In \cite{Nikula2020} the result was extended to the whole $L^1$- or subcritical phase, i.e. was proven to also hold for the whole set of powers for which the limiting GMC measure is non-trivial.\\ 

This chapter is based on joint work with Jon Keating \cite{Forkel2021}, and its purpose is to extend Webb's result to the other classical compact groups, i.e. to the orthogonal and symplectic groups. Our starting point is Theorem \ref{thm:Gaussian field1}, due to Assiotis and Keating, concerning the convergence of the real and imaginary parts of the logarithm of the characteristic polynomials of random orthogonal or symplectic matrices to a pair of Gaussian fields on the unit circle. This is the analogous result to Theorem \ref{thm:HughesKeatingOConnell1} by Hughes, Keating and O'Connell \cite{Hughes2005}. We then complete the connection between the classical compact groups and Gaussian multiplicative chaos, by showing that for the orthogonal and symplectic groups we get statements similar to Theorem \ref{thm:Webb} by Webb. Using the same approach as Webb in \cite{Webb2015} we prove our results only in the $L^2$-phase, i.e. when the limiting GMC measure's total mass has a finite second moment. We believe that using the techniques in \cite{Nikula2020} one can extend our results to hold more generally in the whole $L^1$-phase, however extra care is needed since the covariance function of the underlying Gaussian field has singularities not just on the diagonal but also on the antidiagonal, in particular at $\pm 1$ the field has special behaviour. \\
Due to this special behaviour around $\pm 1$ we first consider the case where we restrict all involved measures to $(\epsilon, \pi - \epsilon) \cup (\pi + \epsilon, 2\pi - \epsilon)$, i.e. we exclude small neighborhoods around $\pm 1$. In this case we can use our Theorems \ref{thm:T+H uniform} and \ref{thm:T, T+H extended}, which give formulas for the uniform asymptotics of Toeplitz+Hankel determinants with up to two pairs of merging singularities which are all bounded away from $\pm 1$. \\
To prove convergence on the full unit circle we need also to know the uniform asymptotics of Toeplitz+Hankel determinants with 3 or 5 singularities merging at $\pm 1$, for which we use Theorem \ref{thm:T+H Claeys} by Claeys, Glesner, Minakov and Yang, which gives a formula for the uniform asymptotics of Toeplitz+Hankel determinants with arbitrarily many merging singularities, up to a multiplicative constant \cite{Claeys2022}. Using their formula allows us to extend our analysis to around $\pm 1$, and so to cover the full unit circle, however for a slightly smaller set of powers than when neighborhoods around $\pm 1$ are excluded.

\section{Statement of Results and Strategy of Proof}

Recall that for $G(n) \in \left\{ U(n), \, O(n), \, SO(n), \, SO^-(n), \, Sp(2n) \right\}$ we define the characteristic polynomial of $U \in G(n)$ as a function on the unit circle, where all its zeroes lie:
\begin{equation}
p_{G(n)}(\theta) := \text{det} \left(I-e^{-i\theta}U\right) = \prod_{k = 1}^n (1 - e^{-i(\theta - \theta_k)}), \quad \theta \in [0,2\pi),
\end{equation} 
where $e^{i\theta_k}$, $k = 1,...,n$ are the eigenvalues of $U$ (and where the product is up to $2n$ in the case $G(n) = Sp(2n)$). 

\begin{definition} For $n \in \mathbb{N}$, $\alpha \in \mathbb{R}$, $\beta \in i\mathbb{R}$ and $\theta \in [0,2\pi)$ let (from which $G(n)$ $U$ is taken will be clear from context)
\begin{equation}
f_{n,\alpha,\beta}(\theta) = |p_{G(n)}(\theta)|^{2\alpha} e^{2i\beta \Im \log p_{G(n)}(\theta)},
\end{equation}
where (with the sum being up to $2n$ for $G(n) = Sp(2n)$)
\begin{equation}
\Im \log p_{G(n)}(\theta):= \sum_{k = 1}^n \Im \log (1-e^{i(\theta_k -\theta)}),
\end{equation} 
with the branches on the RHS being the principal branches, such that 
\begin{equation}
\Im \log (1-e^{i(\theta_k - \theta)}) \in \left( - \frac{\pi}{2}, \frac{\pi}{2} \right],
\end{equation}
where $\Im \log 0 := \pi/2$. We now let $U$ be Haar-distributed on $G(n)$ and define the random Radon measures $\mu_{G(n),\alpha,\beta}$ on $S^1 \sim [0,2\pi)$ by 
\begin{equation}
\mu_{G(n),\alpha,\beta}(\text{d}\theta) = \frac{f_{n,\alpha,\beta}(\theta) }{\mathbb{E}\left(f_{n,\alpha,\beta}(\theta)\right)}\text{d}\theta.
\end{equation}
\end{definition}

Let $\left(\mathcal{N}_j\right)_{j \in \mathbb{N}}$ be a sequence of independent standard (real) normal random variables and denote 
\begin{equation}
\eta_j := 1_{j \text{ is even}}.
\end{equation}
We recall the following result from \cite{Diaconis1994, Diaconis2001}:

\begin{theorem} \label{thm:traces} (Diaconis and Shahshahani, Diaconis and Evans)
If $U_n$ is Haar distributed  on $O(n)$ we have for any fixed $k$:
\begin{equation}
\left( \text{Tr}(U_n), \frac{1}{\sqrt{2}} \text{Tr}(U_n^2), ... , \frac{1}{\sqrt{k}} \text{Tr}(U_n^k) \right) \xrightarrow[n\rightarrow \infty]{d} \left( \mathcal{N}_1 + \eta_1, \mathcal{N}_2 + \frac{\eta_2}{\sqrt{2}}, ... , \mathcal{N}_k + \frac{\eta_k}{\sqrt{k}} \right).
\end{equation}
Similarly, if $U_n$ is Haar distributed on $Sp(2n)$, we have for any fixed $k \in \mathbb{N}$:
\begin{equation}
\left( \text{Tr}(U_n), \frac{1}{\sqrt{2}} \text{Tr}(U_n^2), ... , \frac{1}{\sqrt{k}} \text{Tr}(U_n^k) \right) \xrightarrow[n\rightarrow \infty]{d} \left( \mathcal{N}_1 - \eta_1, \mathcal{N}_2 - \frac{\eta_2}{\sqrt{2}}, ... , \mathcal{N}_k - \frac{\eta_k}{\sqrt{k}} \right).
\end{equation}
Finally we have the bound
\begin{equation} \label{eqn:trace bound}
\mathbb{E}_{U_n}\left( \left( \text{Tr}(U_n^k)\right)^2 \right) \leq \text{const} \min\{k,n\},
\end{equation}
where const is independent of $k$ and $n$. 
\end{theorem}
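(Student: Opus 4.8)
The plan is to establish the two convergence statements by the method of moments and to treat the bound \eqref{eqn:trace bound} separately. Since each limiting coordinate $\sqrt{j}\,\mathcal N_j+\eta_j$ (resp.\ $\sqrt{j}\,\mathcal N_j-\eta_j$) is Gaussian and the coordinates are independent, the limiting law is moment-determinate; hence it suffices to prove that for every $k\in\mathbb N$ and all non-negative integers $a_1,\dots,a_k$,
\begin{equation*}
\mathbb E_{O(n)}\!\left(\prod_{j=1}^{k}\bigl(\mathrm{Tr}(U^j)\bigr)^{a_j}\right)\;\xrightarrow[n\to\infty]{}\;\mathbb E\!\left(\prod_{j=1}^{k}\bigl(\sqrt{j}\,\mathcal N_j+\eta_j\bigr)^{a_j}\right),
\end{equation*}
together with the analogue for $Sp(2n)$ with $+\eta_j$ replaced by $-\eta_j$. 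By independence and the explicit moments of a shifted Gaussian, the right-hand side is a fixed number depending only on the $j$'s and $\eta_j$'s (a Wick/Isserlis sum expanded around the nonzero mean).

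The core input is a \emph{stable-range identity}: there is an $N=N(a_1,\dots,a_k)$, which one may take to be a fixed multiple of $\sum_j j a_j$, such that for $n\ge N$ the left-hand side above already equals the right-hand side, so the limit is attained and is independent of $n$. To prove this I would write $\prod_j(\mathrm{Tr}\,U^j)^{a_j}=p_\mu$, the power-sum symmetric function of the eigenvalues indexed by the partition $\mu$ with $a_j$ parts equal to $j$, expand $p_\mu$ into irreducible characters of $O(n)$ (resp.\ $Sp(2n)$) via the Frobenius-type formula, and integrate using Weyl orthogonality of characters: $\mathbb E_{O(n)}(p_\mu(U))$ is then the multiplicity of the trivial representation in the corresponding virtual representation, and the Littlewood branching/stability rules force this multiplicity to stabilise to an integer independent of $n$ once $n\gg|\mu|$ (equivalently, one uses the $O/Sp$ analogue of Schur--Weyl duality via the Brauer algebra, where the stable integer is a count of fixed-point-free involutions on a marked set). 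One then checks that this stable integer coincides with the shifted-Gaussian moment: the stable-range mean $\mathbb E_{O(n)}(\mathrm{Tr}\,U^j)=\eta_j$ produces the shift $+\eta_j$ (resp.\ $\mathbb E_{Sp(2n)}(\mathrm{Tr}\,U^j)=-\eta_j$ gives $-\eta_j$), the stable-range covariance $\mathbb E_{O(n)}(\mathrm{Tr}\,U^j\,\mathrm{Tr}\,U^\ell)-\eta_j\eta_\ell=j\,\delta_{j\ell}$ yields variance $j$ and independence across distinct powers, and the higher joint cumulants vanish in the stable range. This is exactly the content of Diaconis--Shahshahani \cite{Diaconis1994} and Diaconis--Evans \cite{Diaconis2001}.

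Finally, for the bound \eqref{eqn:trace bound}: when $k$ lies in the stable range (say $2k\le n$ for $O(n)$, and $2k\le 2n$ for $Sp(2n)$), the stable-range identity with $a_k=2$ gives $\mathbb E\bigl((\mathrm{Tr}\,U^k)^2\bigr)=k+\eta_k^2\le k+1\le 2k$. When $k$ is large relative to $n$ one argues directly: the trace is real here, so $(\mathrm{Tr}\,U^k)^2=\sum_{i,\ell}e^{ik(\theta_i-\theta_\ell)}$, and one bounds the expectation of the off-diagonal terms by the explicit second-moment formulas coming from the Brauer-algebra/Weingarten calculus — equivalently by comparison with the unitary case, where $\mathbb E_{U(n)}(|\mathrm{Tr}\,U^k|^2)=\min\{k,n\}$ — to get $\mathbb E_{O(n)}\bigl((\mathrm{Tr}\,U^k)^2\bigr)\le c\min\{k,n\}$ with $c$ absolute; combining the two regimes gives the claimed bound uniformly in $k,n$. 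The main obstacle is the stable-range identity: making the passage "power sums $\to$ characters $\to$ trivial-isotypic multiplicity" rigorous needs the symmetric-function machinery for the classical groups (or a careful induction on the low joint moments), and matching the resulting combinatorial count with the moments of a Gaussian of nonzero mean is where essentially all the bookkeeping sits — once this is in place the remaining steps are routine.
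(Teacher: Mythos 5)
The paper does not actually prove this statement: it is quoted verbatim as a known result of Diaconis--Shahshahani \cite{Diaconis1994} and Diaconis--Evans \cite{Diaconis2001}, and is used later only as input (e.g.\ in the proof of the second limit in Chapter 3), so there is no internal proof to compare yours against. What you have written is essentially a reconstruction of the argument of those references: method of moments for the moment-determinate Gaussian limit, combined with the exact stable-range identity for joint moments of $\mathrm{Tr}(U^j)$ obtained from character expansions of power sums and Weyl orthogonality (equivalently, Brauer-algebra/Schur--Weyl stability for $O(n)$ and $Sp(2n)$), with the stable-range mean $\eta_j$ (resp.\ $-\eta_j$) producing the shift and the covariance $j\,\delta_{j\ell}$ producing the variances and independence. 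That outline is correct and is the standard route; the real work, as you acknowledge, is in the combinatorial bookkeeping identifying the trivial-isotypic multiplicity with the shifted-Gaussian moment, which is precisely what \cite{Diaconis1994} carries out.

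One genuine soft spot: your treatment of the bound \eqref{eqn:trace bound} outside the stable range. The phrase ``by comparison with the unitary case'' is not an argument --- there is no monotonicity or domination principle that transfers $\mathbb{E}_{U(n)}\left(|\mathrm{Tr}\,U^k|^2\right)=\min\{k,n\}$ to $O(n)$ or $Sp(2n)$, and the orthogonal/symplectic second moments genuinely differ from the unitary ones (already in the stable range they are $k+\eta_k$ rather than $k$). To make that regime rigorous you must use the explicit second-moment formulas for the orthogonal and symplectic groups (available via the Brauer-algebra/Weingarten calculus, or in Diaconis--Evans, where bounds of exactly the form $\mathrm{const}\cdot\min\{k,n\}$ uniform in $k$ and $n$ are established); since you name that route as an alternative, it is the one that should be carried out, and then the two regimes combine as you describe.
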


Using this result, Assiotis and Keating have proved the following theorem\footnote{This result was first published in \cite{Forkel2021} with the kind agreement of Dr. Assiotis.}, which identifies the limits of the real and imaginary parts of $\log p_{O(n)} (\cdot)$ and $\log p_{Sp(2n)} (\cdot)$ as log-correlated Gaussian fields on the unit circle. Their result was already stated as Theorem \ref{thm:Gaussian field1} in Chapter \ref{chapter:Intro} and is the analogue of the result by Hughes, Keating, and O'Connell \cite{Hughes2005} for the unitary group, stated as Theorem \ref{thm:HughesKeatingOConnell1} in Chapter \ref{chapter:Intro}:

\begin{theorem} \label{thm:Gaussian field} (Assiotis, Keating)
Let $p_{O(n)}(\cdot)$ be the characteristic polynomial of a random $U \in O(n)$, w.r.t. Haar measure. Then for any $\epsilon > 0$ the pair of fields $\left( \Re \log p_{O(n)}(\cdot), \Im \log p_{O(n)}(\cdot) \right)$ converges in distribution in $H^{-\epsilon}_0(S^1) \times H^{-\epsilon}_0(S^1)$ to the pair of Gaussian fields $\left( X - x, \hat{X} - \hat{x} \right)$, where
\begin{align} 
\begin{split}
X(\theta) &= \frac{1}{2} \sum_{j = 1}^\infty \frac{1}{\sqrt{j}} \mathcal{N}_{j} \left(e^{-ij\theta} + e^{ij\theta}\right) = \sum_{j=1}^\infty \frac{1}{\sqrt{j}} \mathcal{N}_{j} \cos(j\theta),\\
\hat{X}(\theta) &= \frac{1}{2i} \sum_{j = 1}^\infty \frac{1}{\sqrt{j}} \mathcal{N}_{j} \left(e^{-ij\theta} - e^{ij\theta}\right) = -\sum_{j=1}^\infty \frac{1}{\sqrt{j}} \mathcal{N}_{j} \sin(j\theta),\\
x(\theta) &= \frac{1}{2} \sum_{j = 1}^\infty \frac{\eta_j}{j} \left(e^{-ij\theta} + e^{ij\theta}\right) = \sum_{j=1}^\infty \frac{\eta_j}{j} \cos(j\theta),\\
\hat{x}(\theta) &= \frac{1}{2i} \sum_{j = 1}^\infty \frac{\eta_j}{j} \left(e^{-ij\theta} - e^{ij\theta}\right) = - \sum_{j=1}^\infty \frac{\eta_j}{j} \sin(j\theta).
\end{split}
\end{align}
Similarly, for $U \in Sp(2n)$, the pair of fields $\left( \Re \log p_{Sp(2n)}(\cdot), \Im \log p_{Sp(2n)}(\cdot) \right)$ converges in distribution in $H^{-\epsilon}_0(S^1) \times H^{-\epsilon}_0(S^1)$ to the pair of Gaussian fields $\left( X + x, \hat{X} + \hat{x} \right)$, for any $\epsilon > 0$.
\end{theorem}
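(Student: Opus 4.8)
The statement to be proved is Theorem \ref{thm:Gaussian field} (the Assiotis–Keating theorem), which identifies the limiting log-correlated fields for the orthogonal and symplectic groups. The plan is to follow the same strategy that Hughes, Keating and O'Connell used for the unitary group in the proof of Theorem \ref{thm:HughesKeatingOConnell1}, adapting it to accommodate the deterministic eigenvalues at $\pm 1$ and the slightly different distributional behaviour of the traces of powers. First I would write down the Fourier expansion of the logarithm of the characteristic polynomial in terms of traces of powers: from $\log(1-z) = -\sum_{k\geq 1} z^k/k$ and $\log\det = \Tr\log$ one has, for $G(n) \in \{O(n), Sp(2n)\}$,
\begin{align}
\log p_{G(n)}(\theta) = -\sum_{k=1}^\infty \frac{\Tr(U^k)}{k} e^{-ik\theta},
\end{align}
so that $\Re \log p_{G(n)}(\theta) = -\sum_{k\geq 1} \frac{1}{k}\left(\Re \Tr(U^k)\cos k\theta + \Im \Tr(U^k)\sin k\theta\right)$ and similarly for the imaginary part. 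Since for real orthogonal (resp. symplectic) matrices the traces $\Tr(U^k)$ are real, the Fourier coefficients are determined by the single real sequence $(\Tr(U^k))_{k\in\mathbb{N}}$, which is why the limiting fields $X,\hat X$ share the same Gaussian coefficients $\mathcal{N}_j$.

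\textbf{Key steps.} The proof naturally splits into two parts, exactly as in \cite{Hughes2005}. (i) \emph{Convergence of the random Fourier modes.} Using Theorem \ref{thm:traces} (Diaconis–Shahshahani, Diaconis–Evans), for any fixed $K$ the vector $\left(k^{-1/2}\Tr(U^k)\right)_{k=1}^K$ converges in distribution to $\left(\mathcal{N}_k + \epsilon_G\, \eta_k/\sqrt{k}\right)_{k=1}^K$, where $\epsilon_G = +1$ for $O(n)$ and $\epsilon_G = -1$ for $Sp(2n)$. This gives convergence of any finite collection of Fourier coefficients of $\left(\Re\log p_{G(n)}, \Im\log p_{G(n)}\right)$ to the corresponding Fourier coefficients of the claimed limit $\left(X - \epsilon_G^{-1}\cdot(-x), \dots\right)$ — more precisely, writing the mean shift as $\mp x, \mp\hat x$, the $\eta_j/\sqrt j$ deterministic part of the trace produces the deterministic $\sum \eta_j/j$ corrections $x,\hat x$ after dividing by $k$. (ii) \emph{Tightness in $H_0^{-\epsilon}(S^1)$.} Here one uses the second-moment bound (\ref{eqn:trace bound}), $\mathbb{E}_{U_n}\left((\Tr(U_n^k))^2\right) \leq C\min\{k,n\} \leq Ck$, which gives
\begin{align}
\mathbb{E}\left(\|\Re\log p_{G(n)}\|_{H_0^{-\epsilon}}^2\right) = \sum_{k\in\mathbb{Z}\setminus\{0\}} |k|^{-2\epsilon}\,\mathbb{E}\left(\left|\tfrac{\Re\Tr(U^{|k|})}{2|k|}\right|^2\right) \leq C\sum_{k\geq 1} k^{-1-2\epsilon} < \infty
\end{align}
for any $\epsilon > 0$, uniformly in $n$. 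A uniform second-moment bound in $H_0^{-\epsilon}$ combined with the separability of this Hilbert space yields tightness of the laws of $\left(\Re\log p_{G(n)}, \Im\log p_{G(n)}\right)$ in $H_0^{-\epsilon'}(S^1)\times H_0^{-\epsilon'}(S^1)$ for $\epsilon' > \epsilon$ (compactness of the embedding $H_0^{-\epsilon}\hookrightarrow H_0^{-\epsilon'}$). Together, (i) and (ii) identify all subsequential limits, so the full sequence converges. Finally one should check that the candidate limit fields $X\mp x$, $\hat X\mp\hat x$ are genuine elements of $H_0^{-\epsilon}(S^1)$ almost surely: the deterministic parts $x,\hat x$ have Fourier coefficients $O(1/j)$, hence lie in $H_0^{-\epsilon}$ for any $\epsilon>0$ (indeed in $L^2$), and the random parts $X,\hat X$ have coefficients $\mathcal{N}_j/(2\sqrt j)$ with $\mathbb{E}\sum_j j^{-2\epsilon}\cdot\mathcal{N}_j^2/(4j) < \infty$.

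\textbf{Main obstacle.} The routine part is the Fourier-side bookkeeping; the one point requiring genuine care is the passage from finite-dimensional convergence plus tightness to convergence in the infinite-dimensional space, and in particular the correct handling of the deterministic drift $\eta_j/\sqrt j$ in Theorem \ref{thm:traces}. One must be careful that the $\min\{k,n\}$ in (\ref{eqn:trace bound}) — rather than a bound that blows up with $k$ — is what makes the tail of the $H_0^{-\epsilon}$ norm uniformly summable; this is the orthogonal/symplectic analogue of the fact, noted after Theorem \ref{thm:DiaconisShahshahani}, that the traces of powers need no normalisation. A secondary subtlety is that the convergence in Theorem \ref{thm:traces} is only distributional for each fixed $k$, so one cannot directly write the limit field as an almost-sure series; instead one argues at the level of laws on $H_0^{-\epsilon}\times H_0^{-\epsilon}$, matching characteristic functionals tested against finite linear combinations of Fourier modes, and invokes that these functionals determine the law on the (separable) Hilbert space. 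The special behaviour of the covariance at $\pm 1$ (the antidiagonal singularity coming from the $\eta_j$ terms) does not obstruct this part — it only becomes delicate later, in Chapter \ref{chapter:GMC}, when exponentiating the field.
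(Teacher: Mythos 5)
Your proposal is correct and follows essentially the route the thesis indicates for this result (and the one carried out in detail for the dynamical analogue in Chapter \ref{chapter:UBM}): convergence of finitely many Fourier modes via Theorem \ref{thm:traces}, with the deterministic $\eta_k/\sqrt{k}$ drift producing the shifts $\mp x,\mp\hat x$, combined with the uniform second-moment bound (\ref{eqn:trace bound}) to get a uniform $H_0^{-\epsilon}$-norm estimate and hence tightness through the compact embedding $H_0^{-\epsilon}\hookrightarrow H_0^{-\epsilon'}$ and Prokhorov. Note that the thesis itself does not reproduce a proof of this statement (it is cited as a result of Assiotis and Keating), so there is nothing more detailed to compare against; the only point to tidy in your write-up is the sign bookkeeping, since the limiting random part is $-\sum_j \mathcal{N}_j j^{-1/2}\cos(j\theta)$ for the real part and $+\sum_j \mathcal{N}_j j^{-1/2}\sin(j\theta)$ for the imaginary part, which matches $\left(X,\hat X\right)$ only after the joint distributional symmetry $\mathcal{N}_j\mapsto-\mathcal{N}_j$ applied simultaneously to both components.
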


\begin{remark}\label{remark:cov} Doing a formal calculation one see that $X$ and $\hat{X}$ are log-correlated fields, i.e. their covariance kernel blows up logarithmically as $\theta \rightarrow \theta'$ (and also as $\theta \rightarrow -\theta'$):
\begin{align} \label{eqn:cov 1}
\begin{split}
\mathbb{E}\left( X(\theta)X(\theta') \right) &= \sum_{j=1}^\infty \frac{\cos(j\theta)\cos(j\theta')}{j}\\ &= 
\frac{1}{2} \left( \sum_{j=1}^\infty \frac{\cos(j(\theta+\theta'))}{j} + \sum_{j=1}^\infty \frac{\cos(j(\theta-\theta'))}{j} \right) \\
&= \frac{1}{4} \sum_{j=1}^\infty \frac{1}{j} \left( e^{ij(\theta+\theta')} + e^{-ij(\theta+\theta')} + e^{ij(\theta-\theta')} + e^{-ij(\theta - \theta')} \right) \\
&= - \frac{1}{2} \left( \log |e^{i\theta} - e^{i\theta'}| + \log |e^{i\theta} - e^{-i\theta'}| \right).
\end{split}
\end{align}
Similarly one sees that 
\begin{align} \label{eqn:cov 2}
\begin{split}
\mathbb{E}\left( \hat{X}(\theta)\hat{X}(\theta') \right) =& - \frac{1}{2} \left( \log |e^{i\theta} - e^{i\theta')}| - \log |e^{i\theta} - e^{-i\theta'}| \right),\\
\mathbb{E}\left( X(\theta)\hat{X}(\theta') \right) =& \frac{1}{2} \left( \Im \log (1-e^{i(\theta+\theta')}) - \Im \log (1-e^{i(\theta-\theta')})\right).
\end{split}
\end{align}
\end{remark} 

For $\alpha \in \mathbb{R}$ and $\beta \in i\mathbb{R}$ we define the centered log-correlated Gaussian field
\begin{align} \label{eqn:Y}
\begin{split}
    Y_{\alpha,\beta}(\theta) =& 2\alpha X(\theta) + 2i\beta \hat{X}(\theta) \\
    =& \sum_{j=1}^\infty \frac{\mathcal{N}_j}{\sqrt{j}} \left( 2\alpha \cos(j\theta) - 2i\beta \sin(j\theta) \right).
\end{split}
\end{align}
By (\ref{eqn:cov 1}) and \ref{eqn:cov 2}) we see that its covariance kernel is given by 
\begin{align} \label{eqn:Cov Y}
&\text{Cov}(Y_{\alpha,\beta} (\theta),Y_{\alpha,\beta} (\theta')) \\
=& -2(\alpha^2-\beta^2) \log |e^{i\theta}-e^{i\theta'}| - 2(\alpha^2+\beta^2) \log |e^{i\theta}-e^{-i\theta'}| +4i\alpha \beta \Im \log (1-e^{i(\theta+\theta')}) \nonumber \\
=& \sum_{j=1}^\infty \frac{1}{j} \left( 2(\alpha^2 - \beta^2) \cos(j(\theta-\theta')) + 2(\alpha^2 + \beta^2) \cos(j(\theta+\theta')) - 4i \alpha \beta \sin(j(\theta+\theta')) \right). \nonumber
\end{align}
Motivated by Theorem \ref{thm:Gaussian field}, and analogous to the unitary case stated in Theorem \ref{thm:Webb}, one expects that for large $n$ the random measures $\mu_{G(n),\alpha,\beta}$ behave like the Gaussian multiplicative chaos measure 
\begin{align}
\mu_{\alpha,\beta}(\text{d}\theta) := \frac{ e^{Y_{\alpha,\beta}(\theta)} } {\mathbb{E}(e^{Y_{\alpha,\beta}(\theta)})} \text{d}\theta = e^{Y_{\alpha,\beta}(\theta) - \frac{1}{2}\mathbb{E}(Y_{\alpha,\beta}(\theta)^2)}\text{d}\theta.
\end{align} 
Since $Y_{\alpha,\beta}$ is not defined pointwise due to it being log-correlated one has to first define what is meant by $\mu_{\alpha,\beta}(\text{d}\theta)$. Even though the covariance kernel of $Y_{\alpha,\beta}$ does not just blow up on the diagonal, but also on the anti-diagonal, its covariance kernel can still be expressed as an infinite sum of continuous covariance kernels. Thus one can still use Theorem \ref{thm:Kahane2} to define $\mu_{\alpha,\beta}$ as the large $k$ limit of
\begin{align} \label{eqn:mu k}
    \mu_{\alpha,\beta}^{(k)}:= \frac{e^{Y_{\alpha,\beta}^{(k)}(\theta)}}{\mathbb{E}\left(e^{Y_{\alpha,\beta}^{(k)}(\theta)}\right)} \text{d}\theta = e^{Y_{\alpha,\beta}^{(k)}(\theta) - \frac{1}{2}\mathbb{E}(Y^{(k)}_{\alpha,\beta}(\theta)^2)} \text{d}\theta,
\end{align}
where the limit is almost sure in distribution, and where for $k \in \mathbb{N}$ 
\begin{align}
\begin{split}
Y_{\alpha,\beta}^{(k)} (\theta) &:= \sum_{j=1}^k \frac{\mathcal{N}_j}{\sqrt{j}} \left( 2\alpha \cos(j\theta) - 2i\beta \sin(j\theta) \right).
\end{split}
\end{align}

\begin{remark}
Since the covariance kernel of $Y_{\alpha,\beta}$ has logarithmic singularities not just on the diagonal $\theta = \theta'$ but also on the anti-diagonal $\theta = - \theta'$, one cannot directly use Theorem \ref{thm:Kahane2} to determine for which $\alpha,\beta$ the GMC measure $\mu_{\alpha,\beta}$ is non-trivial, i.e. not almost surely the zero measure. However, when restricting the measure to the arc $(\epsilon, \pi - \epsilon)$, for some $\epsilon > 0$, the covariance kernel can be written in the form required in Theorem \ref{thm:Kahane2}, which implies that when restricting $\mu_{\alpha,\beta}$ to $(\epsilon, \pi - \epsilon)$, it is non-trivial if and only if $\alpha^2 - \beta^2 < 1$. In particular the random variable $\mu_{\alpha,\beta}((\epsilon, \pi - \epsilon))$ is not almost surely zero if and only if $\alpha^2 - \beta^2 < 1$, and its construction does not depend on whether one restricts $\mu_{\alpha,\beta}$ to $(\epsilon, \pi - \epsilon)$ (see the proof sketch below Theorem \ref{thm:Kahane1}). Thus also when considered as a measure on the full circle, $\mu_{\alpha, \beta}$ is non-trivial if and only if $\alpha^2 - \beta^2 < 1$. 
\end{remark}

\begin{remark}
Since near $\pm 1$ one gets both the blowup from the diagonal as well as the blowup from the anti-diagonal, the set of parameters $\alpha$ and $\beta$ for which the moments of the total mass of $\mu_{\alpha,\beta}$ are finite, depend on whether one defines the measure on the full unit circle or whether one excludes small neighborhoods around $\pm 1$. This is part of the reason for the different sets of parameters $\alpha$ and $\beta$ in our Theorems \ref{thm:main} and \ref{thm:main2} below, as for our proofs to work we need the second moment of the total mass of $\mu_{\alpha,\beta}$ to be finite.
\end{remark}

For $\epsilon \in (0,\pi/2)$ define $I_{\epsilon}:= (\epsilon, \pi - \epsilon) \cup (\pi + \epsilon, 2\pi - \epsilon)$. Then our first result in this chapter is the following:
\begin{theorem} \label{thm:main}
Let $\alpha^2 - \beta^2 < 1/2$ and $\alpha > - 1/4$. Further let $G(n) \in \{O(n),Sp(2n)\}$. When restricting the random measures $\mu_{G(n),\alpha,\beta}$ and $\mu_{\alpha,\beta}$ to $I_\epsilon$, then as $n \rightarrow \infty$, for any fixed $\epsilon>0$, the sequence $\left(\mu_{G(n),\alpha,\beta}\right)_{n \in \mathbb{N}}$ converges weakly to $\mu_{\alpha,\beta}$ in the space of Radon measures on $I_\epsilon$ equipped with the topology of weak convergence, i.e. for any bounded $F:\left\{\text{Radon measures on } I_\epsilon \right\} \rightarrow \mathbb{R}$ for which $F(\mu_n) \rightarrow F(\mu)$ whenever $\mu_n \xrightarrow{d} \mu$, it holds that 
\begin{equation}
\mathbb{E}\left( F(\mu_{G(n),\alpha,\beta}) \right) \xrightarrow{n \rightarrow \infty} \mathbb{E} \left( F(\mu_{\alpha,\beta}) \right).
\end{equation}
\end{theorem}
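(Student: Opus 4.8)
The plan is to adapt, essentially verbatim, the argument Webb used in \cite{Webb2015} to prove Theorem \ref{thm:Webb} for the unitary group, substituting for his input on Toeplitz determinants with merging Fisher--Hartwig singularities our Theorems \ref{thm:T+H uniform} and \ref{thm:T, T+H extended} on Toeplitz+Hankel determinants. First I would record that, by construction, $\mathbb{E}\big(\mu_{G(n),\alpha,\beta}(A)\big)=|A|$ for every Borel $A\subseteq I_\epsilon$, since the normalisation divides $f_{n,\alpha,\beta}$ by its own mean; hence the family $\big(\mu_{G(n),\alpha,\beta}\big)_n$ is tight in the vague topology on $I_\epsilon$, and it suffices to prove that for each fixed continuous $g$ the random variable $\mu_{G(n),\alpha,\beta}(g):=\int g\,\mathrm d\mu_{G(n),\alpha,\beta}$ converges in distribution to $\mu_{\alpha,\beta}(g)$, the extension to joint convergence for finitely many test functions being identical. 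Here I use that $\alpha^2-\beta^2<1/2$ guarantees, on $I_\epsilon$, that the limiting kernel $e^{\mathrm{Cov}(Y_{\alpha,\beta}(\theta),Y_{\alpha,\beta}(\theta'))}$ from \eqref{eqn:Cov Y} is integrable, so that $\mu_{\alpha,\beta}(g)$ genuinely has a finite second moment and $\mu_{\alpha,\beta}$ restricted to $I_\epsilon$ is a well-defined non-trivial measure.

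Next I would set up the usual truncation. For $k\in\mathbb{N}$ let $f^{(k)}_{n,\alpha,\beta}(\theta):=\exp\big(2\alpha\Re L^{(k)}_n(\theta)+2i\beta\Im L^{(k)}_n(\theta)\big)$ with $L^{(k)}_n(\theta):=-\sum_{j=1}^k\frac{1}{j}\Tr(U^j)e^{-ij\theta}$, set $\mu^{(k)}_{G(n),\alpha,\beta}(\mathrm d\theta):=f^{(k)}_{n,\alpha,\beta}(\theta)\,\mathrm d\theta\big/\mathbb{E}\big(f^{(k)}_{n,\alpha,\beta}(\theta)\big)$, and recall the truncated chaos $\mu^{(k)}_{\alpha,\beta}$ of \eqref{eqn:mu k}. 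Since the density of $\mu^{(k)}_{G(n),\alpha,\beta}$ is a continuous (indeed entire) function of $\big(\Tr U,\dots,\Tr U^k\big)$ alone, Theorem \ref{thm:traces} together with the strong Szeg\"{o} theorem for the analytic normalising constant $\mathbb{E}\big(f^{(k)}_{n,\alpha,\beta}(\theta)\big)$ and an elementary uniform-integrability bound give, for each fixed $k$, $\mu^{(k)}_{G(n),\alpha,\beta}(g)\Rightarrow\mu^{(k)}_{\alpha,\beta}(g)$ as $n\to\infty$; and $\mu^{(k)}_{\alpha,\beta}(g)\to\mu_{\alpha,\beta}(g)$ in probability as $k\to\infty$ is part of the Kahane/martingale construction of $\mu_{\alpha,\beta}$ discussed after Theorem \ref{thm:Kahane2}. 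A standard three-term approximation argument then reduces everything to the $L^2$ estimate
\[
\limsup_{n\to\infty}\ \mathbb{E}\Big(\big|\mu_{G(n),\alpha,\beta}(g)-\mu^{(k)}_{G(n),\alpha,\beta}(g)\big|^{2}\Big)=:\delta_{k}\longrightarrow 0\quad\text{as }k\to\infty.
\]

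To prove this I would expand the square into the three second moments $\mathbb{E}\big[\mu_{G(n)}(g)^{2}\big]$, $\mathbb{E}\big[\mu_{G(n)}(g)\,\mu^{(k)}_{G(n)}(g)\big]$, $\mathbb{E}\big[\mu^{(k)}_{G(n)}(g)^{2}\big]$ and write each as a double integral over $I_\epsilon\times I_\epsilon$ of a normalised expectation of the form \eqref{eqn:FH expectation}. By the exact identities of Theorem \ref{thm:Baik2001}, applied on $SO(2n)$, $SO^-(2n)$, $SO(2n+1)$, $SO^-(2n+1)$, $Sp(2n)$ and recombined for $O(n)=SO(n)\cup SO^-(n)$, these are Toeplitz+Hankel determinants: $\mathbb{E}\big[f_{n,\alpha,\beta}(\theta)f_{n,\alpha,\beta}(\theta')\big]$ corresponds to a symbol of the form \eqref{eqn:f} with two conjugate pairs of singularities at $e^{\pm i\theta},e^{\pm i\theta'}$ (strengths $\alpha$, jumps $\beta$, and $\alpha_0=\alpha_3=0$), whose large-$n$ asymptotics, uniform for $\theta,\theta'\in I_\epsilon$ down to $|\theta-\theta'|\sim 1/n$, are exactly Theorems \ref{thm:T+H uniform} and \ref{thm:T, T+H extended} (for $|\theta-\theta'|$ bounded below one uses instead the non-uniform Theorem \ref{thm:T+H non-uniform}); the cross term $\mathbb{E}\big[f_{n,\alpha,\beta}(\theta)f^{(k)}_{n,\alpha,\beta}(\theta')\big]$ corresponds, after absorbing $f^{(k)}_{n,\alpha,\beta}(\theta')$ into the analytic factor $e^{V}$, to a symbol with a single conjugate pair of singularities at $e^{\pm i\theta}$ and a $\theta'$-dependent analytic part of degree $\le k$, covered by Theorem \ref{thm:T+H non-uniform} with $r=1$ uniformly in $\theta,\theta'\in I_\epsilon$ (there is no merging); and $\mathbb{E}\big[f^{(k)}_{n,\alpha,\beta}(\theta)f^{(k)}_{n,\alpha,\beta}(\theta')\big]$ is purely analytic and evaluated via Theorem \ref{thm:traces}. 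Reading off the pointwise limits of the integrands, namely $e^{\mathrm{Cov}(Y_{\alpha,\beta}(\theta),Y_{\alpha,\beta}(\theta'))}$, $e^{\mathrm{Cov}(Y_{\alpha,\beta}(\theta),Y^{(k)}_{\alpha,\beta}(\theta'))}$ and $e^{\mathrm{Cov}(Y^{(k)}_{\alpha,\beta}(\theta),Y^{(k)}_{\alpha,\beta}(\theta'))}$ with the covariances given by \eqref{eqn:Cov Y}, and using an $n$-uniform dominating function extracted from the same asymptotics, dominated convergence gives that $\delta_k$ converges to the double integral of $g(\theta)g(\theta')\big(e^{\mathrm{Cov}(Y,Y)}-2e^{\mathrm{Cov}(Y,Y^{(k)})}+e^{\mathrm{Cov}(Y^{(k)},Y^{(k)})}\big)$, which tends to $0$ as $k\to\infty$ by the monotonicity built into the GMC martingale.

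Two of the hypotheses enter precisely here. The condition $\alpha^2-\beta^2<1/2$ (the $L^2$-phase) is exactly what makes the diagonal singularity $|e^{i\theta}-e^{i\theta'}|^{-2(\alpha^2-\beta^2)}$ of the limiting kernels integrable, so that all the above double integrals, and the dominating function, are finite; the condition $\alpha>-1/4$ is the translation of the requirement $\alpha_1+\alpha_2>-1/2$ in Theorems \ref{thm:T+H uniform} and \ref{thm:T, T+H extended}, needed because when $e^{i\theta}$ and $e^{i\theta'}$ coalesce the two singularities of strength $\alpha$ merge into one of strength $2\alpha$, which must exceed $-1/2$ for the underlying Riemann--Hilbert/Painlev\'{e} V analysis of \cite{Claeys2015} to apply. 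The restriction to $I_\epsilon$, i.e. staying a fixed distance away from $\pm1$, is what allows one to invoke Theorems \ref{thm:T+H uniform} and \ref{thm:T, T+H extended} at all, since their proof handles only two merging singularities and would break down near $\pm1$, where the extra symmetry-forced singularities of the covariance kernel collide with those at $e^{\pm i\theta}$. The main obstacle in executing the plan is exactly this uniform control of the Toeplitz+Hankel asymptotics in the merging regime: turning the explicit but intricate expression of Theorem \ref{thm:T+H uniform} --- in particular the Painlev\'{e}-type term $\int_0^{-4int}\frac{1}{s}\big(\sigma(s)-\cdots\big)\,\mathrm ds$ built from the transcendent of Theorem \ref{thm:painleve} --- into a clean, $n$-independent, integrable bound of the shape $|e^{i\theta}-e^{i\theta'}|^{-2(\alpha^2-\beta^2)}$ plus a harmless near-$1/n$ correction, for which one uses the small- and large-$|s|$ behaviour of $\sigma$. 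Everything else is bookkeeping and a faithful transfer of Webb's argument to the orthogonal and symplectic setting.
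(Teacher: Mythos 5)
Your proposal is correct and follows essentially the same route as the paper: Webb's truncation scheme, convergence of the truncated measures via the Diaconis--Shahshahani theorem, and the key $L^2$ estimate obtained by expressing the three second moments through the Baik--Rains identity as Toeplitz+Hankel determinants and invoking the uniform merging asymptotics of Theorems \ref{thm:T+H uniform} and \ref{thm:T, T+H extended} (the paper passes the limit inside the double integral by a Fahs-style region-splitting plus Fatou rather than an explicit dominating function, but this is a minor technical variant). Your identification of where $\alpha^2-\beta^2<1/2$, $\alpha>-1/4$, and the restriction to $I_\epsilon$ enter matches the paper's reasoning exactly.
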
  

\begin{remark}
    Note that the limit of $\mu_{G(n),\alpha,\beta}$ is the same for both $G(n) = O(n)$ and $G(n) = Sp(2n)$ since by our normalisation procedure the deterministic means $\pm x$ and $\pm \hat{x}$ of the limiting fields get cancelled out.
\end{remark}

The specialisation of the formulas in \cite{Claeys2022} to our situation is stated in Theorem \ref{thm:T+H Claeys}. Using these, we can prove our second result in this chapter, which extends Theorem \ref{thm:main} to the full circle, but for a slightly smaller set of parameters $\alpha,\beta$. The reason for the different sets of parameters $\alpha, \beta$ is explained in Remark \ref{remark:parameters}. 

\begin{theorem} \label{thm:main2}
Let $\alpha^2 - \beta^2 < 1/2$ and $0 \leq \alpha < 1/2$. Further let $G(n) \in \{O(n),Sp(2n)\}$. Then the sequence of random measures $\left( \mu_{G(n),\alpha,\beta}\right)_{n \in \mathbb{N}}$ converges weakly to $\mu_{\alpha,\beta}$ in the space of Radon measures on $[0,2\pi)$ equipped with the topology of weak convergence. 
\end{theorem}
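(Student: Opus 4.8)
\textbf{Plan of proof for Theorem \ref{thm:main2}.}

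The plan is to follow the same overall strategy as in the proof of Theorem \ref{thm:main}, namely to establish convergence of $\mu_{G(n),\alpha,\beta}$ to $\mu_{\alpha,\beta}$ via the method of moments for the total masses (and more generally for the measures paired against continuous test functions), working in the $L^2$-phase so that second moments are finite and the martingale/Kahane construction of $\mu_{\alpha,\beta}$ can be used. The difference is that here we must also control the contribution of the arcs near $\pm 1$, which were excised in Theorem \ref{thm:main}; for this we invoke Theorem \ref{thm:T+H Claeys} of Claeys, Glesner, Minakov and Yang, which gives the uniform asymptotics of $D_n^{T+H,\kappa}(f)$ up to a multiplicative $e^{\mathcal{O}(1)}$ factor even when $3$ or $5$ Fisher--Hartwig singularities merge at $\pm 1$. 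First I would fix a continuous $g:[0,2\pi)\to\mathbb{R}$ and compute, for $k\in\mathbb{N}$, the joint moments
\begin{align}
\mathbb{E}\left( \prod_{\ell=1}^k \int_0^{2\pi} g(\theta_\ell)\, \mu_{G(n),\alpha,\beta}(\mathrm{d}\theta_\ell) \right)
= \int_{[0,2\pi)^k} \prod_{\ell=1}^k g(\theta_\ell)\, \frac{\mathbb{E}_{U\in G(n)}\!\left( \prod_{\ell=1}^k f_{n,\alpha,\beta}(\theta_\ell)\right)}{\prod_{\ell=1}^k \mathbb{E}\left( f_{n,\alpha,\beta}(\theta_\ell)\right)}\, \mathrm{d}\theta_1\cdots\mathrm{d}\theta_k,
\end{align}
express the numerator as a Toeplitz$+$Hankel determinant $D_n^{T+H,\kappa}$ with $2k$ off-axis singularities at $e^{\pm i\theta_\ell}$ (plus the automatic ones at $\pm1$) via Theorem \ref{thm:Baik2001}, and similarly for the single-point normalizing factors. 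The condition $0\le\alpha<1/2$ together with $\alpha^2-\beta^2<1/2$ is exactly what guarantees the $L^2$-integrability of the limiting kernel, including near $\pm 1$ where both the diagonal and anti-diagonal logarithmic singularities contribute; the $\alpha\ge 0$ restriction is because Theorem \ref{thm:T+H Claeys} is stated for $\alpha_j\ge 0$.

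Next I would extract the large-$n$ asymptotics of the integrand. On the region where all $\theta_\ell$ are pairwise separated and bounded away from $\pm1$ on the scale $1/n$, the ratio converges pointwise, after cancellation of the $n^{\alpha^2-\beta^2}$ and $V_0$-type factors between numerator and denominator, to
\begin{align}
\prod_{1\le j<\ell\le k} e^{\,\mathrm{Cov}(Y_{\alpha,\beta}(\theta_j),Y_{\alpha,\beta}(\theta_\ell))},
\end{align}
which by \eqref{eqn:Cov Y} and Kahane's theory is exactly the $k$-th moment density of the GMC measure $\mu_{\alpha,\beta}$. The role of Theorem \ref{thm:T+H Claeys} is to provide, uniformly over \emph{all} of $0<\theta_1<\cdots$, an upper bound of the form $\prod_{j<\ell}(\sin|\tfrac{\theta_j-\theta_\ell}{2}|+\tfrac1n)^{-2(\alpha_j\alpha_\ell-\beta_j\beta_\ell)}(\sin|\tfrac{\theta_j+\theta_\ell}{2}|+\tfrac1n)^{-2(\alpha_j\alpha_\ell+\beta_j\beta_\ell)}$ times factors $(\sin\theta_j+\tfrac1n)^{\pm\alpha_j-\alpha_j^2-\beta_j^2}$ coming from the merging at $\pm1$, all multiplied by $e^{\mathcal{O}(1)}$. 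Since $\alpha^2-\beta^2<1/2$ makes $2k(\alpha^2-\beta^2)<k$ and hence these singularities are $L^1$ against $\mathrm{d}\theta_1\cdots\mathrm{d}\theta_k$ (uniformly in $n$, with the $+\tfrac1n$ regularization only helping), I can apply dominated convergence — I would split $[0,2\pi)^k$ into the ``bulk'' region (pointwise convergence) and a union of ``diagonal'' neighborhoods $\{|e^{i\theta_j}-e^{\pm i\theta_\ell}|<\delta\}$ and ``$\pm1$'' neighborhoods $\{|e^{i\theta_j}\mp1|<\delta\}$ whose total integral is uniformly $O(\delta^{c})$ for some $c>0$, letting $n\to\infty$ first and then $\delta\to0$. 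This shows $\mathbb{E}\big((\int g\,\mathrm{d}\mu_{G(n),\alpha,\beta})^k\big)\to\mathbb{E}\big((\int g\,\mathrm{d}\mu_{\alpha,\beta})^k\big)$ for every $k$. Because the limiting moments do not grow too fast — the total mass $\mu_{\alpha,\beta}([0,2\pi))$ has finite moments up to order strictly below $1/(\alpha^2-\beta^2)$, which includes $k=2$, and one controls higher $k$ by restricting attention to convergence in distribution of the finite-dimensional laws $(\int g_1\,\mathrm{d}\mu_{G(n),\alpha,\beta},\dots,\int g_r\,\mathrm{d}\mu_{G(n),\alpha,\beta})$ via a subsequence/tightness argument exactly as in \cite{Webb2015} rather than a literal moment-determinacy claim — one upgrades moment convergence to weak convergence of the random measures in the space of Radon measures with the topology of weak convergence.

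The main obstacle, and the place where real care is needed, is the uniform integrability near $\pm1$: unlike in Theorem \ref{thm:main}, the point $\theta=0$ (and $\theta=\pi$) sees the superposition of the diagonal blow-up $|e^{i\theta}-1|^{-2(\alpha^2-\beta^2)}$ and the anti-diagonal blow-up $|e^{i\theta}+(-1)|^{\cdots}$, so when several of the $\theta_\ell$ approach $\pm1$ simultaneously the $5$-singularity (or $3$-singularity) merging regime of Theorem \ref{thm:T+H Claeys} is genuinely needed, and one must check that the exponents appearing there, after dividing by the $k$ normalizing single-point factors $\mathbb{E}(f_{n,\alpha,\beta}(\theta_\ell))$ (which themselves blow up like $(\sin\theta_\ell+\tfrac1n)^{-\alpha\mp\alpha}$ etc. near $\pm1$, by Theorem \ref{thm:Baik2001} and the $r=0$ case of Theorem \ref{thm:T+H non-uniform}), still leave an integrable singularity — this is precisely where the sharper constraint $\alpha<1/2$ (rather than $\alpha<1/\sqrt 2$ or merely $\alpha>-1/4$) enters. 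I would handle this by a careful bookkeeping of the net exponent of $(\sin\theta_\ell+\tfrac1n)$ for each $\theta_\ell$ in a neighborhood of $\pm1$ and of $(\sin|\tfrac{\theta_j\mp\theta_\ell}{2}|+\tfrac1n)$ for each colliding pair, reducing the claim to the elementary fact that $\int_0^{\pi}(\sin\theta+\tfrac1n)^{-a}\,\mathrm{d}\theta$ is bounded uniformly in $n$ when $a<1$ and that the pair-interaction exponents sum correctly — here the anti-diagonal terms $|e^{i\theta_j}-e^{-i\theta_\ell}|$ are the novelty relative to the unitary case and must be tracked explicitly. Once this uniform $L^1$ domination is in hand, together with the pointwise limit identifying the bulk with the GMC moment density, the rest is a routine repetition of the argument structure already used for Theorem \ref{thm:main}.
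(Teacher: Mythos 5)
Your central engine — convergence of all $k$-th joint moments of $\int g\,\mathrm{d}\mu_{G(n),\alpha,\beta}$ to those of the GMC measure, followed by an upgrade to weak convergence — does not work in the parameter range of the theorem, and this is not a technicality. For a log-correlated field with $\gamma^2=2(\alpha^2-\beta^2)$ on a one-dimensional space, the $k$-th moment of the total mass of $\mu_{\alpha,\beta}$ is finite only when $k(\alpha^2-\beta^2)<1$; since the hypothesis only requires $\alpha^2-\beta^2<1/2$, already the third moment may be infinite, and the prelimit $k$-fold integrals then diverge as $n\to\infty$. Your integrability claim is where this is hidden: the statement that $2k(\alpha^2-\beta^2)<k$ makes the $k$-point integrand $L^1$ only controls pairwise collisions; when $j$ of the points $\theta_1,\dots,\theta_k$ cluster together the relevant (Selberg-type) condition is $j(j-1)(\alpha^2-\beta^2)<j-1$, i.e.\ $j(\alpha^2-\beta^2)<1$, which fails for large $j$ when $\alpha^2-\beta^2$ is close to $1/2$. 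So the proposed dominated-convergence step collapses for $k\geq 3$ in part of the allowed range, and even where moments do converge one would still face the non-determinacy of the GMC mass by its moments. Your closing remark that higher $k$ can be handled ``via a subsequence/tightness argument exactly as in \cite{Webb2015}'' points at the right fix but is not compatible with the rest of your plan: Webb's argument is not a method of moments at all, and you have not set up any of its ingredients.

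What the paper actually does (and what you would need to do) is a pure $L^2$-phase approximation argument requiring only two-point information, which is exactly why the hypotheses $\alpha^2-\beta^2<1/2$ and $0\leq\alpha<1/2$ suffice: one introduces the Fourier-truncated fields $f^{(k)}_{n,\alpha,\beta}$ and the measures $\mu^{(k)}_{G(n),\alpha,\beta}$ of Definition \ref{def:mu_n^(k)}, proves (i) $\mu^{(k)}_{\alpha,\beta}\to\mu_{\alpha,\beta}$ by Kahane's construction, (ii) $\mu^{(k)}_{G(n),\alpha,\beta}\Rightarrow\mu^{(k)}_{\alpha,\beta}$ as $n\to\infty$ for fixed $k$ using the Diaconis--Shahshahani/Diaconis--Evans theorem together with the asymptotics of $\mathbb{E}(f^{(k)}_{n,\alpha,\beta}(\theta))$ (a singularity-free symbol), and (iii) the $L^2$-estimate of Lemma \ref{lemma:L2 limit}, namely that $\mathbb{E}\bigl(\bigl(\int g\,\mathrm{d}\mu_{G(n),\alpha,\beta}-\int g\,\mathrm{d}\mu^{(k)}_{G(n),\alpha,\beta}\bigr)^2\bigr)\to 0$ as $n\to\infty$ then $k\to\infty$; the three pieces are glued by Kallenberg's approximation theorem (Theorem \ref{thm:approximation}). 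Step (iii) involves only ratios of expectations with at most one pair $(\theta,\theta')$ of spectral points, i.e.\ Toeplitz+Hankel symbols with at most two conjugate pairs of Fisher--Hartwig singularities plus the fixed points $\pm1$, and it is only here that Theorem \ref{thm:T+H Claeys} is invoked, precisely to get uniformity when $\theta,\theta'$ approach each other or $\pm1$ (the $3$- and $5$-singularity mergings you mention); the conditions $2(\alpha^2-\beta^2)<1$ and $4\alpha^2<1$ then make the limiting two-point kernel integrable on the full circle, which is the correct accounting of the roles of the two hypotheses (you did identify the role of $\alpha<1/2$ correctly, but in your scheme it is attached to a $k$-point bound that is false). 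In short, the approach must be reorganized around the truncated-field $L^2$ comparison rather than full moment convergence; as written, the proposal has a genuine gap.
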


\noindent \textbf{Proof strategy:} Let $I$ denote either $I_\epsilon$ or $[0,2\pi)$. We first remark that by Theorem 4.2. in \cite{Kallenberg1976}, weak convergence of $\mu_{G(n),\alpha,\beta}$ to $\mu_{\alpha,\beta}$ in the space of Radon measures on $I$ equipped with the topology of weak convergence is equivalent to
\begin{equation}\label{eqn:conv equiv}
\int_{I} g(\theta) \mu_{G(n),\alpha,\beta}(\text{d}\theta) \xrightarrow{d} \int_{I} g(\theta) \mu_{\alpha,\beta}(\text{d}\theta),
\end{equation}
as $n \rightarrow \infty$, for any bounded continuous non-negative function $g$ on $I$. \\

\noindent To prove (\ref{eqn:conv equiv}) we use the following theorem: 
\begin{theorem} [Theorem 4.28 in \cite{Kallenberg2001}] \label{thm:approximation} For $k,n \in \mathbb{N}$ let $\xi$, $\xi_n$, $\eta^k$ and $\eta^k_n$ be random variables with values in a metric space $(S,\rho)$ such that $\eta_n^k \xrightarrow{d} \eta^k$ as $n\rightarrow \infty$ for any fixed $k$, and also $\eta^k \xrightarrow{d} \xi$ as $k\rightarrow \infty$. Then $\xi_n \xrightarrow{d} \xi$ holds under the further condition
\begin{equation}
\lim_{k\rightarrow \infty} \limsup_{n\rightarrow \infty} \mathbb{E}\left( \rho(\eta_n^k, \xi_n) \wedge 1 \right) = 0.
\end{equation}
\end{theorem}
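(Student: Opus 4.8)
\textbf{Proof strategy for Theorem \ref{thm:approximation} (Theorem 4.28 of \cite{Kallenberg2001}).}

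The plan is to prove convergence in distribution by exhibiting, on a suitable probability space, couplings of the given families so that a truncation argument applies. The key observation is the classical fact that convergence in distribution of $S$-valued random variables is characterized by the Lipschitz (or bounded Lipschitz) metric: $\mu_n \to \mu$ weakly on $(S,\rho)$ if and only if $\mathbb{E}(h(\eta_n)) \to \mathbb{E}(h(\eta))$ for every bounded $\rho$-Lipschitz $h:S \to \mathbb{R}$. So first I would fix such a test function $h$, with $\|h\|_\infty \le 1$ and Lipschitz constant $L$, and aim to show $\mathbb{E}(h(\xi_n)) \to \mathbb{E}(h(\xi))$.

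The main estimate is a three-term split. For fixed $k$ and $n$, write
\begin{align*}
\left| \mathbb{E}(h(\xi_n)) - \mathbb{E}(h(\xi)) \right| \le& \left| \mathbb{E}(h(\xi_n)) - \mathbb{E}(h(\eta_n^k)) \right| + \left| \mathbb{E}(h(\eta_n^k)) - \mathbb{E}(h(\eta^k)) \right| \\
&+ \left| \mathbb{E}(h(\eta^k)) - \mathbb{E}(h(\xi)) \right|.
\end{align*}
The second term on the right goes to $0$ as $n \to \infty$ for each fixed $k$, by the hypothesis $\eta_n^k \xrightarrow{d} \eta^k$ and the Lipschitz characterization of weak convergence. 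The third term goes to $0$ as $k \to \infty$ by $\eta^k \xrightarrow{d} \xi$. The first term is controlled directly: since $|h(x) - h(y)| \le 2 \wedge (L\,\rho(x,y)) \le (1 + L)\big(\rho(x,y) \wedge 1\big)$, we get
\begin{align*}
\left| \mathbb{E}(h(\xi_n)) - \mathbb{E}(h(\eta_n^k)) \right| \le (1+L)\,\mathbb{E}\big( \rho(\eta_n^k, \xi_n) \wedge 1 \big).
\end{align*}
Now take $\limsup_{n\to\infty}$ of the whole inequality: the second term drops out, leaving $\limsup_n |\mathbb{E}(h(\xi_n)) - \mathbb{E}(h(\xi))| \le (1+L)\limsup_n \mathbb{E}(\rho(\eta_n^k,\xi_n)\wedge 1) + |\mathbb{E}(h(\eta^k)) - \mathbb{E}(h(\xi))|$. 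Finally let $k \to \infty$: the first term vanishes by the further condition $\lim_k \limsup_n \mathbb{E}(\rho(\eta_n^k,\xi_n)\wedge 1) = 0$, and the second vanishes by $\eta^k \xrightarrow{d}\xi$. Hence $\limsup_n |\mathbb{E}(h(\xi_n)) - \mathbb{E}(h(\xi))| = 0$, i.e. $\mathbb{E}(h(\xi_n)) \to \mathbb{E}(h(\xi))$; since $h$ was an arbitrary bounded Lipschitz function, $\xi_n \xrightarrow{d} \xi$.

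The only genuinely delicate point is the Lipschitz characterization of weak convergence on a general metric space $(S,\rho)$: one must know that bounded Lipschitz functions are convergence-determining, and that they suffice to metrize weak convergence (via the bounded-Lipschitz metric) at least on separable metric spaces. In the generality needed here this is standard (see e.g. the portmanteau theorem and the discussion of the bounded-Lipschitz metric in \cite{Kallenberg2001}), so I would simply cite it; no measure-theoretic subtlety beyond that is required, and the rest of the argument is the elementary three-$\varepsilon$ chase above. Note the argument does not require constructing explicit couplings — the quantity $\mathbb{E}(\rho(\eta_n^k,\xi_n)\wedge 1)$ is assumed given, so the proof is purely a matter of organizing the limits in the correct order ($n\to\infty$ first, then $k\to\infty$).
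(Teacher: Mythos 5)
The paper does not prove this statement—it is quoted verbatim from Kallenberg \cite{Kallenberg2001} and used as a black box—so there is no in-paper proof to compare against; your argument is the standard one and is essentially Kallenberg's own proof of Theorem 4.28 (a three-term bounded-Lipschitz estimate, with the portmanteau theorem supplying that bounded Lipschitz functions are convergence-determining on an arbitrary metric space). The only slip is the constant: with $\|h\|_\infty\le 1$ and Lipschitz constant $L$ one has $|h(x)-h(y)|\le 2\wedge(L\,\rho(x,y))$, and the claimed bound $(1+L)(\rho\wedge 1)$ fails when $L<1$ and $\rho\ge 1$ (e.g.\ $L=\tfrac12$, $\rho=4$ gives $2\not\le \tfrac32$); replacing $1+L$ by $\max(2,L)$ or $2+L$ fixes this with no effect on the rest of the argument. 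Your closing remark is also correct: no coupling is needed, since $\mathbb{E}(\rho(\eta_n^k,\xi_n)\wedge 1)$ is already a hypothesis, and the order of limits ($n\to\infty$ first, then $k\to\infty$) is exactly what the displayed condition provides.
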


Our setting corresponds to $S = \mathbb{R}$, $\rho = | \cdot |$, and 
\begin{align}
\begin{split}
\xi &= \int_{I} g(\theta) \mu_{\alpha,\beta}(\text{d}\theta), \quad \xi_n = \int_{I} g(\theta) \mu_{G(n),\alpha,\beta}(\text{d}\theta), \\
\eta^k &= \int_{I} g(\theta) \mu^{(k)}_{\alpha,\beta}(\text{d}\theta), \quad 
\eta^k_n = \int_{I} g(\theta) \mu^{(k)}_{n,\alpha,\beta}(\text{d}\theta),
\end{split}
\end{align}
where $\mu^{(k)}_{n,\alpha,\beta}$ will now be defined by truncating the Fourier series of $\log f_{n,\alpha,\beta}$. We see that
\begin{align}
\begin{split}
\log f_{n,\alpha,\beta}(\theta) &= -\sum_{j=1}^\infty \frac{1}{j} \left( (\alpha + \beta)\text{Tr}(U_n^j)e^{-ij\theta} + (\alpha - \beta)\text{Tr}(U_n^{-j})e^{ij\theta} \right) \\
&= - \sum_{j=1}^\infty \frac{\text{Tr}(U^j_n)}{j} \left( 2\alpha \cos(j\theta) - 2i\beta \sin (j\theta) \right),
\end{split}
\end{align}
where we used that for $U \in O(n)$ or $Sp(2n)$ we have $\text{Tr}(U^{-j}_n) = \text{Tr}(U^j_n)$.

\begin{definition}\label{def:mu_n^(k)} For $k,n \in \mathbb{N}$, $\alpha \in \mathbb{R}$, $\beta \in i\mathbb{R}$ and $\theta \in I$, let
\begin{equation}
f_{n,\alpha,\beta}^{(k)}(\theta) = e^{-\sum_{j=1}^k \frac{\text{Tr}(U^j_n)}{j} \left(2\alpha \cos(j\theta) - 2i\beta \sin(j\theta)\right)}, 
\end{equation}
and 
\begin{equation}
\mu_{G(n),\alpha,\beta}^{(k)}(\text{d}\theta) = \frac{f_{n,\alpha,\beta}^{(k)}(\theta)}{\mathbb{E}(f_{n,\alpha,\beta}^{(k)}(\theta))} \text{d}\theta.
\end{equation}
\end{definition}

In order to apply Theorem \ref{thm:approximation} to verify (\ref{eqn:conv equiv}), which then implies our main results, we thus need to examine the following three limits: for any bounded continuous non-negative function on $I$ 
\begin{align} \label{eqn:first limit} 
\lim_{k \rightarrow \infty} \int_{I} g(\theta) \mu^{(k)}_{\alpha,\beta} (\text{d}\theta) \overset{d}{=}& \int_{I} g(\theta) \mu_{\alpha,\beta} (\text{d}\theta), 
\end{align}
\begin{align} \label{eqn:second limit}
\lim_{n \rightarrow \infty} \int_{I} g(\theta) \mu_{G(n),\alpha,\beta}^{(k)} (\text{d}\theta) \overset{d}{=}& \int_{I} g(\theta) \mu^{(k)}_{\alpha,\beta} (\text{d}\theta),
\end{align}
and
\begin{align} \label{eqn:third limit}
\lim_{k\rightarrow \infty} \limsup_{n\rightarrow \infty} \mathbb{E}\left( \left| \int_{I} g(\theta) \mu_{G(n),\alpha,\beta}^{(k)} (\text{d}\theta) - \int_{I} g(\theta) \mu_{G(n),\alpha,\beta} (\text{d}\theta) \right| \wedge 1 \right) = 0.
\end{align}
The first limit (\ref{eqn:first limit}) follows immediately from the definition of $\mu_{\alpha,\beta}$ as the large $k$ limit of $\mu_{\alpha,\beta}^{(k)}$, defined in (\ref{eqn:mu k}): almost surely 
\begin{equation}
\lim_{k\rightarrow \infty} \mu^{(k)}_{\alpha,\beta} \overset{d}{=} \mu_{\alpha,\beta},
\end{equation}
so in particular almost surely (and thus also in distribution)
\begin{equation}
\lim_{k \rightarrow \infty} \int_{I} g(\theta) \mu^{(k)}_{\alpha,\beta} (\text{d}\theta) = \int_{I} g(\theta) \mu_{\alpha,\beta} (\text{d}\theta).
\end{equation}
The second limit (\ref{eqn:second limit}) will be proved in Section \ref{section:second limit}, using previously established results on the asymptotics of Toeplitz+Hankel determinants.\\

\noindent To show that the third limit (\ref{eqn:third limit}) holds, we will prove the following lemma in Section \ref{section:L2}:
\begin{lemma}[The $L^2$-limit]\label{lemma:L2 limit} 
Let $\alpha^2 - \beta^2 <1/2$. Further let $0 \leq \alpha < 1/2$ in the case $I = [0,2\pi)$, and $\alpha > -1/4$ in the case $I = I_\epsilon$. Then for any bounded continuous non-negative function $g$ on $I$ the following expectation goes to zero, as first $n \rightarrow \infty$ and then $k \rightarrow \infty$:
\begin{align} \label{eqn:L2 limit}
\begin{split}
&\mathbb{E} \left( \left( \int_{I} g(\theta) \mu_{G(n),\alpha,\beta}(\text{d}\theta) - \int_{I} g(\theta) \mu_{G(n),\alpha,\beta}^{(k)}(\text{d}\theta) \right)^2 \right) \\
=&\int_{I} \int_{I} g(\theta)g(\theta') \frac{\mathbb{E}\left(f_{n,\alpha,\beta}^{(k)}(\theta) f_{n,\alpha,\beta}^{(k)}(\theta') \right)}{\mathbb{E}\left(f_{n,\alpha,\beta}^{(k)}(\theta)\right)\mathbb{E}\left(f_{n,\alpha,\beta}^{(k)}(\theta')\right)} \text{d}\theta \text{d}\theta'\\
&-2 \int_{I} \int_{I} g(\theta)g(\theta') \frac{\mathbb{E}\left(f_{n,\alpha,\beta}^{(k)}(\theta) f_{n,\alpha,\beta}(\theta') \right)}{\mathbb{E}\left(f_{n,\alpha,\beta}^{(k)}(\theta)\right)\mathbb{E}\left(f_{n,\alpha,\beta}(\theta')\right)} \text{d}\theta \text{d}\theta' \\
&+ \int_{I} \int_{I} g(\theta)g(\theta') \frac{\mathbb{E}\left(f_{n,\alpha,\beta}(\theta) f_{n,\alpha,\beta}(\theta') \right)}{\mathbb{E}\left(f_{n,\alpha,\beta}(\theta)\right)\mathbb{E}\left(f_{n,\alpha,\beta}(\theta')\right)} \text{d}\theta \text{d}\theta'.
\end{split}
\end{align}
\end{lemma}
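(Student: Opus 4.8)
\textbf{Proof proposal for Lemma \ref{lemma:L2 limit}.}

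The plan is to compute the three double integrals in (\ref{eqn:L2 limit}) explicitly using the results on Toeplitz and Toeplitz+Hankel determinants with merging Fisher-Hartwig singularities, and to show that all three converge to the same quantity, namely the integral of $g(\theta)g(\theta')$ against $\mathbb{E}\left(e^{Y_{\alpha,\beta}(\theta) + Y_{\alpha,\beta}(\theta')} - \tfrac12\mathbb{E}(Y_{\alpha,\beta}(\theta)^2) - \tfrac12\mathbb{E}(Y_{\alpha,\beta}(\theta')^2)\right)\mathrm{d}\theta\,\mathrm{d}\theta'$, as $n\to\infty$ and then $k\to\infty$, so that the whole expression in (\ref{eqn:L2 limit}) tends to zero. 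First I would treat the truncated-truncated term: by Theorem \ref{thm:traces}, for fixed $k$ the vector $(\Tr(U_n^j))_{j=1}^k$ converges to independent Gaussians (up to the deterministic shift $\pm\eta_j/\sqrt j$, which cancels in the ratio defining $\mu^{(k)}_{G(n),\alpha,\beta}$), so the integrand converges pointwise; combined with a uniform-integrability / dominated-convergence argument using the trace bound (\ref{eqn:trace bound}) this gives that the first double integral converges to $\int_I\int_I g(\theta)g(\theta') \mathbb{E}\big(e^{Y^{(k)}_{\alpha,\beta}(\theta)+Y^{(k)}_{\alpha,\beta}(\theta')}\big)/\big(\mathbb{E}(e^{Y^{(k)}_{\alpha,\beta}(\theta)})\mathbb{E}(e^{Y^{(k)}_{\alpha,\beta}(\theta')})\big)\,\mathrm{d}\theta\,\mathrm{d}\theta'$; as $k\to\infty$, this converges to the $\mu_{\alpha,\beta}$ quantity by the $L^2$-convergence of the GMC approximants (which holds precisely because $\alpha^2-\beta^2<1/2$, putting us in the $L^2$-phase).

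Next I would handle the mixed term and the fully non-truncated term, which are the substantive ones. Here $\mathbb{E}(f_{n,\alpha,\beta}(\theta)f_{n,\alpha,\beta}(\theta'))$ is an expectation over $G(n)$ of $\det(h(U))$ for a symbol $h$ with two conjugate pairs of Fisher-Hartwig singularities at $e^{\pm i\theta}$, $e^{\pm i\theta'}$ (with exponents built from $\alpha,\beta$), so by Theorem \ref{thm:Baik2001} it equals a Toeplitz+Hankel determinant $D_n^{T+H,\kappa}$ of the form considered in (\ref{eqn:f}) after relabeling. On $I_\epsilon$ the singularities stay bounded away from $\pm 1$, so Theorem \ref{thm:T+H uniform} (together with Theorem \ref{thm:T, T+H extended}) gives the asymptotics uniformly in the separation parameter $t$; on the full circle one instead invokes Theorem \ref{thm:T+H Claeys}, which covers arbitrarily many merging singularities up to an $e^{\mathcal{O}(1)}$ factor. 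In either case one reads off that
\begin{align}
\frac{\mathbb{E}\left(f_{n,\alpha,\beta}(\theta)f_{n,\alpha,\beta}(\theta')\right)}{\mathbb{E}\left(f_{n,\alpha,\beta}(\theta)\right)\mathbb{E}\left(f_{n,\alpha,\beta}(\theta')\right)}
\end{align}
behaves, up to errors that are integrable against $g(\theta)g(\theta')$ and vanish as $n\to\infty$, like $e^{\mathrm{Cov}(Y_{\alpha,\beta}(\theta),Y_{\alpha,\beta}(\theta'))}$ with the covariance given in (\ref{eqn:Cov Y}); crucially this is $\asymp |e^{i\theta}-e^{i\theta'}|^{-2(\alpha^2-\beta^2)}|e^{i\theta}-e^{-i\theta'}|^{-2(\alpha^2+\beta^2)}$ times bounded factors, which is integrable over $I\times I$ exactly under the stated constraints ($\alpha^2-\beta^2<1/2$ controls the diagonal blow-up, and the condition on $\alpha$ near $\pm1$ — $0\le\alpha<1/2$ on the full circle, $\alpha>-1/4$ on $I_\epsilon$ — controls the coincidence of diagonal and antidiagonal singularities at $\pm1$, using here that $\beta$ is purely imaginary so $-\beta^2\ge0$). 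The mixed term is analyzed identically, with one singularity pair carrying a truncated symbol; one shows its limit as $n\to\infty$ and then $k\to\infty$ is again the same GMC quantity, using that truncating the Fourier series corresponds to a mollification of $Y_{\alpha,\beta}$ and that the relevant determinant asymptotics pass to the limit.

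The main obstacle I anticipate is the uniform-integrability step needed to upgrade the pointwise (in $\theta,\theta'$) convergence of the normalized determinant ratios to convergence of the double integrals: one must produce an $n$-independent, $L^1(I\times I)$ dominating function for these ratios, which requires the Toeplitz+Hankel asymptotics to be genuinely uniform in the locations and separations of all singularities — including the merging regime $|\theta-\theta'|\lesssim 1/n$ and, on the full circle, the regime where $\theta$ or $\theta'$ approaches $\pm 1$. This is exactly why Theorems \ref{thm:T+H uniform}, \ref{thm:T, T+H extended} and \ref{thm:T+H Claeys} are invoked rather than the classical fixed-singularity asymptotics of Theorem \ref{thm:T+H non-uniform}; assembling the case analysis (singularities well-separated and away from $\pm1$; two singularities merging away from $\pm1$; singularities near $\pm1$) into a single integrable bound, and checking that the exponent thresholds match the claimed ranges of $\alpha,\beta$, is where the real work lies. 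A secondary technical point is bookkeeping the four determinant types $\kappa=1,\dots,4$ and the extra factors $h(1),h(-1)$ appearing in Theorem \ref{thm:Baik2001} for the various groups $O(n)$, $SO(n)$, $SO^-(n)$, $Sp(2n)$, and confirming these all wash out in the normalized ratios so that the limit is genuinely $G(n)$-independent, consistent with Remark \ref{remark:cov} and the remark following Theorem \ref{thm:main}.
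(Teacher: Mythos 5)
Your proposal is correct and follows essentially the same route as the paper: express the expectations via Theorem \ref{thm:average} as Toeplitz+Hankel determinants, use the uniform merging-singularity asymptotics (Theorems \ref{thm:T+H uniform} and \ref{thm:T, T+H extended} on $I_\epsilon$, Theorem \ref{thm:T+H Claeys} on the full circle) to obtain both pointwise limits of the normalized ratios and an $n$-independent integrable bound via a split of $I\times I$ into a well-separated region and a small merging/near-$\pm1$ region (this is the paper's Lemma \ref{lemma:L2 limit Claeys}), and then send $k\to\infty$. You correctly identify the exchange of limit and integral as the real work. The only divergences are peripheral: for the truncated-truncated and mixed terms the paper does not use Theorem \ref{thm:traces} plus uniform integrability but simply reads off the ratios from the smooth-symbol determinant asymptotics, which are uniform in $\theta,\theta'$ and make the exchange of limits immediate; and for the final $k\to\infty$ step the paper does not invoke $L^2$-convergence of GMC approximants but argues directly that the truncated covariance integrals converge to the full one, using the non-negativity of the variance in \eqref{eqn:L2 limit} for one inequality and Fatou's lemma (with $g\geq 0$) for the other — which is essentially a self-contained proof of the GMC fact you cite. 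One small correction: the lower bounds on $\alpha$ are not integrability conditions. On $I_\epsilon$ the singularities never approach $\pm1$, and $\alpha>-1/4$ is needed so that $\alpha_1+\alpha_2=2\alpha>-1/2$, the validity range of the Riemann--Hilbert analysis behind Theorems \ref{thm:T+H uniform} and \ref{thm:T, T+H extended}; on the full circle $\alpha\geq 0$ is required by Theorem \ref{thm:T+H Claeys}, while only $\alpha<1/2$ (i.e. $4\alpha^2<1$) reflects integrability of the combined diagonal/antidiagonal blow-up at $\pm1$, as in Remark \ref{remark:parameters}.
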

All the expectations inside the integrals can be expressed as (sums of) Toeplitz+Hankel determinants (see (\ref{eqn:T+H def}) and Theorem \ref{thm:average}). We will prove Lemma \ref{lemma:L2 limit} explicitly only in the case $I = [0,2\pi)$, $\alpha^2 - \beta^2 <1/2$, $0 \leq \alpha < 1/2$, using Theorem \ref{thm:T+H Claeys} below. \\

The proof in the case $I = I_\epsilon$, $\alpha^2 - \beta^2 < 1/2$, $\alpha > -1/4$, works in almost exactly the same way. Instead of Theorem \ref{thm:T+H Claeys}, which only holds for $\alpha \geq 0$, we use new results on the uniform asymptotics of Toeplitz+Hankel determinants of symbols with two pairs of merging singularities bounded away from $\pm 1$, which are stated in Theorems \ref{thm:T+H uniform} and \ref{thm:T, T+H extended}, in the next section. To the best of our knowledge these results have not previously been set out and we believe them to be of independent interest.

\section{The $L^2$-Limit} \label{section:L2}
In this section we prove Lemma \ref{lemma:L2 limit}, for which we need to compute asymptotics of all the expectation terms in the integrals, which hold uniformly in $\theta$ and $\theta'$ even as $\theta \rightarrow \theta'$. We recall the Baik-Rains identity (already stated as Theorem \ref{thm:Baik2001} in Chapter \ref{chapter:Intro}), which allows us to express all those expectations as Toeplitz+Hankel determinants:

\begin{theorem}\label{thm:average}(Theorem 2.2 in \cite{Baik2001})
Let $h(z)$ be any function on the unit circle such that for $\iota(e^{i\theta}) := h(e^{i\theta})h(e^{-i\theta})$ the integrals 
\begin{equation}
\iota_j = \frac{1}{2\pi} \int_0^{2\pi} \iota(e^{i\theta}) e^{-ij\theta} \, \text{d}\theta
\end{equation} 
are well-defined. Then with $D_n^{T+H,\kappa}$ defined as in (\ref{eqn:T+H def}) we have
\begin{align}
\begin{split}
\mathbb{E}_{SO(2n)}\left( \text{det}(h(U)) \right) &= \frac{1}{2} D_n^{T+H,1}(\iota), \\
\mathbb{E}_{SO^-(2n)}\left( \text{det}(h(U)) \right) &=  h(1)h(-1) D_{n-1}^{T+H,2}(\iota), \\
\mathbb{E}_{SO(2n+1)}\left( \text{det}(h(U)) \right) &=  h(1) D_n^{T+H,3}(\iota), \\
\mathbb{E}_{SO^-(2n+1)}\left( \text{det}(h(U)) \right) &=  h(-1) D_n^{T+H,4}(\iota), \\
\mathbb{E}_{Sp(2n)}\left( \text{det}(h(U)) \right) &= D_n^{T+H,2}(\iota),
\end{split}
\end{align}
except that $\mathbb{E}_{SO(0)} \left( \det(h(U)) \right) = h(1)$.
\end{theorem}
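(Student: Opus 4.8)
The plan is to derive all five identities uniformly from the Weyl integration formula together with the Andr\'eief (Heine--Szeg\"{o}) identity, treating the ensembles $SO(2n)$, $SO^-(2n)$, $SO(2n+1)$, $SO^-(2n+1)$, $Sp(2n)$ one at a time. Fix one such group $G$, and let $\theta_1,\dots,\theta_m \in [0,\pi)$ denote its nontrivial eigenangles ($m=n$ except for $SO^-(2n)$, where $m = n-1$ since two eigenvalues are frozen at $\pm 1$). Since $\iota(e^{i\theta}) = h(e^{i\theta})h(e^{-i\theta})$ is invariant under $\theta \mapsto -\theta$, and since $\det(h(U))$ factors as the product of $h$ over the deterministic eigenvalues at $\pm 1$ (contributing $1$, $h(1)$, $h(-1)$, or $h(1)h(-1)$ depending on $G$) times $\prod_{j=1}^m \iota(e^{i\theta_j})$, the Weyl integration formula expresses $\mathbb{E}_G(\det h(U))$ as that boundary factor times
\[
c_G \int_{[0,\pi)^m} \Big( \prod_{j=1}^m w_G(\theta_j)\, \iota(e^{i\theta_j}) \Big) \prod_{1 \le j < k \le m} \big( 2\cos\theta_k - 2\cos\theta_j \big)^2 \, \mathrm{d}\theta_1 \cdots \mathrm{d}\theta_m,
\]
where $w_G$ is the one-body weight in the Weyl density: $w_G \equiv 1$ for $SO(2n)$, $w_G(\theta) = \sin^2(\theta/2)$ for $SO(2n+1)$, $w_G(\theta) = \cos^2(\theta/2)$ for $SO^-(2n+1)$, and $w_G(\theta) = \sin^2\theta$ for $Sp(2n)$ and for the reindexed $SO^-(2n)$; this last identification already explains both the shift to $D_{n-1}$ and the prefactor $h(1)h(-1)$. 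The degenerate convention $\mathbb{E}_{SO(0)}(\det h(U)) = h(1)$ is then the empty-product case $m=0$.

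Next I would write the Vandermonde-type factor as a squared determinant adapted to $w_G$: absorbing $\sqrt{w_G}$ into the columns, $\big(\prod_{j<k}(2\cos\theta_k - 2\cos\theta_j)\big) \prod_j \sqrt{w_G(\theta_j)}$ equals a nonzero constant times $\det\big( \phi_k(\theta_j) \big)_{j,k=1}^m$, where $\phi_k$ is a trigonometric family determined by $w_G$: $\phi_k(\theta) = \cos((k-1)\theta)$ for $SO(2n)$; $\phi_k(\theta) = \sin((2k-1)\theta/2)$ for $SO(2n+1)$; $\phi_k(\theta) = \cos((2k-1)\theta/2)$ for $SO^-(2n+1)$; and $\phi_k(\theta) = \sin(k\theta)$ for $Sp(2n)$ and $SO^-(2n)$. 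In each case $\phi_k(\theta)/\sqrt{w_G(\theta)}$ is a polynomial of degree $k-1$ in $\cos\theta$ with explicit leading coefficient, which is exactly what makes the Vandermonde rewriting valid. Applying Andr\'eief's identity
\[
\int_{[0,\pi)^m} \det\big( \phi_k(\theta_j) \big) \det\big( \phi_k(\theta_j) \big) \prod_{j=1}^m \iota(e^{i\theta_j})\, \mathrm{d}\theta_j \;=\; m!\, \det\Big( \int_0^\pi \phi_j(\theta)\phi_k(\theta)\, \iota(e^{i\theta})\, \mathrm{d}\theta \Big)_{j,k=1}^m
\]
collapses the integral to a single $m \times m$ determinant.

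The heart of the argument is then to evaluate these matrix entries. Using the product-to-sum identities $2\cos a \cos b = \cos(a-b) + \cos(a+b)$ and $2\sin a \sin b = \cos(a-b) - \cos(a+b)$ (with the half-integer versions where needed), together with $\int_0^\pi \cos(\ell\theta)\, \iota(e^{i\theta})\, \mathrm{d}\theta = \pi\, \iota_\ell$, which follows from the symmetry $\iota_\ell = \iota_{-\ell}$, each entry becomes $\tfrac{\pi}{2}\big(\iota_{j-k} \pm \iota_{j+k+c}\big)$, with sign $+$ when $\phi$ is built from cosines and $-$ when built from sines, and with $c \in \{0,1,2\}$ read off from the frequencies: $c=0$ (cosines, integer frequencies) for $SO(2n)$, giving $D_n^{T+H,1}$; $c = 1$ after the shift $j \mapsto j-1$, $k \mapsto k-1$ for the half-integer bases, giving $D_n^{T+H,3}$ ($SO(2n+1)$) and $D_n^{T+H,4}$ ($SO^-(2n+1)$); and $c = 2$ after the same shift for the $\sin(k\theta)$ basis, giving $D_n^{T+H,2}$ ($Sp(2n)$) and $D_{n-1}^{T+H,2}$ ($SO^-(2n)$). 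Finally I would pin down $c_G$ by evaluating both sides at $h \equiv 1$ (the left side is $1$, the right side a determinant one computes directly, or simply the known normalisation of the Weyl density), producing the factor $\tfrac12$ for $SO(2n)$ and $1$ in the other cases.

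The step I expect to be the main obstacle is combinatorial bookkeeping rather than anything conceptual: getting the shift $c$ in $\iota_{j+k+c}$ exactly right for each group, since it depends on precisely which half-integer frequencies survive after $\sqrt{w_G}$ is absorbed and on the orientation of the basis, and correctly handling $SO^-(2n)$, where one must peel off the two frozen eigenvalues at $\pm 1$ and pass to $m=n-1$ angles \emph{before} the Andr\'eief step. Organising the four cases with $w_G \ne 1$ by writing $w_G(\theta) = \mathrm{const}\cdot |1 - e^{i\theta}|^{2a}\, |1 + e^{i\theta}|^{2b}$ with $(a,b) \in \{0,1\}^2$ and tracking the single pair $(a,b)$ through the trigonometric identities should keep the case analysis uniform and manageable.
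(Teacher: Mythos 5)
This theorem is not proved in the thesis at all: it is quoted verbatim as Theorem 2.2 of Baik--Rains \cite{Baik2001}, so there is no "paper proof" to compare against. Your argument is, however, correct and is essentially the classical derivation behind the cited result: factor $\det h(U)$ into the fixed-eigenvalue contributions ($1$, $h(1)$, $h(-1)$, or $h(1)h(-1)$) times $\prod_j \iota(e^{i\theta_j})$ over the nontrivial angles, invoke the Weyl integration formulae, absorb $\sqrt{w_G}$ into the Vandermonde via the Chebyshev-type bases $\cos((k-1)\theta)$, $\sin(k\theta)$, $\sin((2k-1)\theta/2)$, $\cos((2k-1)\theta/2)$, and collapse with Andr\'eief; your bookkeeping of the sign and of the shift $c\in\{0,1,2\}$ in $\iota_{j+k+c}$ is right in every case (after the index shift $j,k\mapsto j-1,k-1$ one gets exactly the kernels $f_{j-k}\pm f_{j+k}$, $f_{j-k}-f_{j+k+2}$, $f_{j-k}\mp f_{j+k+1}$), the identification of $SO^-(2n)$ with the $Sp$-type weight on $n-1$ angles correctly produces both the shift to $D_{n-1}$ and the prefactor $h(1)h(-1)$, and fixing the $h$-independent constants by evaluating at $h\equiv 1$ is legitimate (e.g.\ $D_n^{T+H,1}(1)=2$ yields the factor $\tfrac12$). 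The only inaccuracy is your gloss of the exceptional clause: $\mathbb{E}_{SO(0)}(\det h(U))=h(1)$ is a convention adopted in \cite{Baik2001} rather than the empty-product case of your formula, which would give $1$; this is immaterial to the five identities but should not be presented as a consequence of the argument.
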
 

We define, for $\phi \in [0,2\pi)$ and $\theta, \theta' \in [0,2\pi)$:
\begin{align} \label{eqn:sigma hat}
\begin{split}
\hat{\sigma}_{1,\theta,\theta'}(e^{i\phi}) &= e^{-\sum_{j=1}^k \frac{2}{j} \Re \left( (\alpha - \beta)(e^{ij\theta} + e^{ij\theta'}) \right) e^{ij\phi} },\\
\hat{\sigma}_{2,\theta,\theta'}(e^{i\phi}) &= e^{-\sum_{j=1}^k \frac{2}{j} \Re \left( (\alpha - \beta)e^{ij\theta} \right) e^{ij\phi}} |e^{i\phi}-e^{i\theta'}|^{2\alpha} e^{2i\beta \Im \log (1-e^{i(\phi-\theta')})}, \\
\hat{\sigma}_{3,\theta,\theta'}(e^{i\phi}) &= |e^{i\phi}-e^{i\theta}|^{2\alpha} e^{2i\beta \Im \log (1-e^{i(\phi-\theta)})} |e^{i\phi}-e^{i\theta'}|^{2\alpha} e^{2i\beta \Im \log (1-e^{i(\phi-\theta')})}, \\
\hat{\sigma}_{4,\theta}(e^{i\phi}) &= e^{-\sum_{j=1}^k \frac{2}{j} \Re \left( (\alpha - \beta)e^{ij\theta} \right) e^{ij\phi}}, \\
\hat{\sigma}_{5,\theta}(e^{i\phi}) &= |e^{i\phi}-e^{i\theta}|^{2\alpha} e^{2i\beta \Im \log (1-e^{i(\phi-\theta)})}, 
\end{split}
\end{align}
where the branch of the logarithm is the principal one (so in particular $\Im \log (1-e^{i(\phi-\theta)}) \in (-\pi/2,\pi/2]$). Then we have
\begin{align}
\begin{split}
f_{n,\alpha,\beta}^{(k)}(\theta)f_{n,\alpha,\beta}^{(k)}(\theta') &= \prod_{\ell = 1}^n \hat{\sigma}_{1,\theta,\theta'}(e^{i\theta_\ell}), \\
f_{n,\alpha,\beta}^{(k)}(\theta)f_{n,\alpha,\beta}(\theta') &= \prod_{\ell = 1}^n \hat{\sigma}_{2,\theta,\theta'}(e^{i\theta_\ell}), \\
f_{n,\alpha,\beta}(\theta)f_{n,\alpha,\beta}(\theta') &= \prod_{\ell = 1}^n \hat{\sigma}_{3,\theta,\theta'}(e^{i\theta_\ell}), \\
f_{n,\alpha,\beta}^{(k)}(\theta) &= \prod_{\ell = 1}^n \hat{\sigma}_{4,\theta}(e^{i\theta_\ell}), \\
f_{n,\alpha,\beta}(\theta) &= \prod_{\ell = 1}^n \hat{\sigma}_{5,\theta}(e^{i\theta_\ell}), 
\end{split}
\end{align}
where the product is up to $2n$ for $U \in Sp(2n)$. Further we define 
\begin{align}
\sigma_{1,\theta,\theta'}(e^{i\phi}) =& \hat{\sigma}_{1,\theta,\theta'}(e^{i\phi})\hat{\sigma}_{1,\theta,\theta'}(e^{-i\phi}) \nonumber \\ 
=& e^{-\sum_{j=1}^k \frac{2}{j} \Re \left((\alpha - \beta)(e^{ij\theta}+e^{ij\theta'}) \right) (e^{ij\phi} + e^{-ij\phi})}, \nonumber \\
\sigma_{2,\theta,\theta'}(e^{i\phi}) =& \hat{\sigma}_{2,\theta,\theta'}(e^{i\phi})\hat{\sigma}_{2,\theta,\theta'}(e^{-i\phi}) \nonumber \\
=& e^{-\sum_{j=1}^k \frac{2}{j} \Re \left((\alpha - \beta)e^{ij\theta} \right) (e^{ij\phi} + e^{-ij\phi})} \nonumber \\
&\times |e^{i\theta'}-e^{i\phi}|^{2\alpha} e^{2i\beta \Im \log (1-e^{i(\phi-\theta')})} |e^{i\theta'}-e^{-i\phi}|^{2\alpha} e^{2i\beta \Im \log (1-e^{i(-\phi-\theta')})}, \nonumber \\
\sigma_{3,\theta,\theta'}(e^{i\phi}) =& \hat{\sigma}_{3,\theta,\theta'}(e^{i\phi})\hat{\sigma}_{3,\theta,\theta'}(e^{-i\phi}) \\
=& |e^{i\theta}-e^{i\phi}|^{2\alpha} e^{2i\beta \Im \log (1-e^{i(\phi-\theta)})} |e^{i\theta'}-e^{i\phi}|^{2\alpha} e^{2i\beta \Im \log (1-e^{i(\phi-\theta')})} \nonumber \\
&\times |e^{i\theta}-e^{-i\phi}|^{2\alpha} e^{2i\beta \Im \log (1-e^{i(-\phi-\theta)})} |e^{i\theta'}-e^{-i\phi}|^{2\alpha} e^{2i\beta \Im \log (1-e^{i(-\phi-\theta')})}, \nonumber \\
\sigma_{4,\theta}(e^{i\phi}) =& \hat{\sigma}_{4,\theta}(e^{i\phi}) \hat{\sigma}_{4,\theta}(e^{-i\phi}) \nonumber \\ 
=&  e^{-\sum_{j=1}^k \frac{2}{j} \Re \left((\alpha - \beta)e^{ij\theta} \right) (e^{ij\phi} + e^{-ij\phi})}, \nonumber \\
\sigma_{5,\theta}(e^{i\phi}) =& \hat{\sigma}_{5,\theta}(e^{i\phi})\hat{\sigma}_{5,\theta}(e^{-i\phi}) \nonumber \\ 
=& |e^{i\theta}-e^{i\phi}|^{2\alpha} e^{2i\beta \Im \log (1-e^{i(\phi-\theta)})} |e^{i\theta}-e^{-i\phi}|^{2\alpha} e^{2i\beta \Im \log (1-e^{i(-\phi-\theta)})}. \nonumber  
\end{align}
Applying Theorem \ref{thm:average}, we obtain 
\begin{align} \label{eqn:averages sigma}
&\mathbb{E}_{O(2n)} \left( f_{2n,\alpha, \beta}^{(k)}(\theta) f_{2n,\alpha,\beta}^{(k)}(\theta') \right) \nonumber \\
=& \frac{1}{2} \mathbb{E}_{SO(2n)} \left( f_{2n,\alpha, \beta}^{(k)}(\theta) f_{2n,\alpha,\beta}^{(k)}(\theta') \right) + \frac{1}{2} \mathbb{E}_{SO^-(2n)} \left( f_{2n,\alpha, \beta}^{(k)}(\theta) f_{2n,\alpha,\beta}^{(k)}(\theta') \right) \nonumber \\
=& \frac{1}{4} D_n^{T+H,1}(\sigma_{1,\theta,\theta'}) + \frac{1}{2} \hat{\sigma}_{1,\theta,\theta'}(1)\hat{\sigma}_{1,\theta,\theta'}(-1) D_{n-1}^{T+H,2}(\sigma_{1,\theta,\theta'}), \nonumber \\
&\mathbb{E}_{O(2n+1)} \left( f_{2n+1,\alpha, \beta}^{(k)}(\theta) f_{2n+1,\alpha,\beta}^{(k)}(\theta') \right) \\
=& \frac{1}{2} \mathbb{E}_{SO(2n+1)} \left( f_{2n+1,\alpha, \beta}^{(k)}(\theta) f_{2n+1,\alpha,\beta}^{(k)}(\theta') \right) + \frac{1}{2} \mathbb{E}_{SO^-(2n+1)} \left( f_{2n+1,\alpha, \beta}^{(k)}(\theta) f_{2n+1,\alpha,\beta}^{(k)}(\theta') \right) \nonumber \\
=& \frac{1}{2} \hat{\sigma}_{1,\theta,\theta'}(1) D_n^{T+H,3}(\sigma_{1,\theta,\theta'}) + \frac{1}{2} \hat{\sigma}_{1,\theta,\theta'}(-1) D_n^{T+H,4}(\sigma_{1,\theta,\theta'}), \nonumber \\
&\mathbb{E}_{Sp(2n)} \left(  f_{n,\alpha, \beta}^{(k)}(\theta) f_{n,\alpha,\beta}^{(k)}(\theta') \right) = D_n^{T+H,2}(\sigma_{1,\theta,\theta'}), \nonumber
\end{align}
and similarly for $\sigma_{2,\theta,\theta'}$, ..., $\sigma_{5,\theta}$.\\

The symbols $\sigma_{1,\theta,\theta'}$, ..., $\sigma_{5,\theta}$ can be written as symbols with Fisher-Hartwig singularities, i.e. as in Definition \ref{def:FH}. Due to our choice of logarithm, we have for $\phi, \theta \in [0,2\pi)$:
\begin{align}
\begin{split}
\Im \log (1-e^{i(\phi-\theta)}) =& \begin{cases} -\frac{\pi}{2} + \frac{\phi-\theta}{2} & 0 \leq \theta < \phi < 2\pi \\ \frac{\pi}{2} + \frac{\phi-\theta}{2} & 0 \leq \phi \leq \theta < 2\pi \end{cases}, \\
\Im \log (1-e^{i(-\phi-\theta)}) =& \Im \log (1-e^{i(2\pi-\theta - \phi)})\\
=&\begin{cases} -\frac{\pi}{2} + \frac{2\pi - \theta - \phi}{2} & 0 \leq \phi < 2\pi - \theta \leq 2\pi \\ \frac{\pi}{2} + \frac{2\pi - \theta - \phi}{2} & 0 < 2\pi - \theta \leq \phi < 2\pi \end{cases},
\end{split}
\end{align}
which implies that we can write
\begin{align}
&\sigma_{2,\theta,\theta'}(e^{i\phi}) \nonumber \\
=&  e^{-\sum_{j=1}^k \frac{2}{j} \Re \left((\alpha-\beta)e^{ij\theta} \right) (e^{ij\phi} + e^{-ij\phi})} \nonumber \\
&\times |e^{i\phi}-e^{i\theta'}|^{2\alpha} e^{i\beta (\phi - \theta')} g_{e^{i\theta'},\beta}(e^{i\phi}) |e^{i\phi}-e^{-i\theta'}|^{2\alpha} e^{-i\beta (\phi - (2\pi - \theta'))} g_{e^{i(2\pi - \theta')},-\beta}(e^{i\phi}), \nonumber \\
&\sigma_{3,\theta,\theta'}(e^{i\phi}) \\
=& |e^{i\phi}-e^{i\theta}|^{2\alpha} e^{i\beta (\phi - \theta)} g_{e^{i\theta},\beta}(e^{i\phi}) |e^{i\phi}-e^{-i\theta}|^{2\alpha} e^{-i\beta (\phi - (2\pi - \theta))} g_{e^{i(2\pi - \theta)},-\beta}(e^{i\phi}) \nonumber \\
& \times |e^{i\phi}-e^{i\theta'}|^{2\alpha} e^{i\beta (\phi - \theta')} g_{e^{i\theta'},\beta}(e^{i\phi}) |e^{i\phi}-e^{-i\theta'}|^{2\alpha} e^{-i\beta (\phi - (2\pi - \theta'))}, g_{e^{i(2\pi - \theta')},-\beta}(e^{i\phi}), \nonumber \\
&\sigma_{5,\theta}(e^{i\phi}) \nonumber \\
=& |e^{i\phi}-e^{i\theta}|^{2\alpha} e^{i\beta (\phi - \theta)} g_{e^{i\theta},\beta}(e^{i\phi}) |e^{i\phi}-e^{-i\theta}|^{2\alpha} e^{-i\beta (\phi - (2\pi - \theta'))} g_{e^{i(2\pi - \theta')},-\beta}(e^{i\phi}). \nonumber 
\end{align}

Theorem \ref{thm:T+H non-uniform} gives asymptotic formulae for the Toeplitz+Hankel determinants of $\sigma_{1,\theta,\theta'}$, $\sigma_{2,\theta,\theta'}, \sigma_{3,\theta,\theta'}, \sigma_{4,\theta}, \sigma_{5,\theta}$, which are uniform when all of the singularities that appear are bounded away from each other. Since $\sigma_{1,\theta,\theta'}$ and $\sigma_{4,\theta}$ do not have any singularities the asymptotics of their Toeplitz+Hankel determinants are uniform in $\theta, \theta' \in [0,2\pi)$. To obtain asymptotics for the Toeplitz+Hankel determinants of $\sigma_{2,\theta,\theta'}$, $\sigma_{3,\theta,\theta'}$ and $\sigma_{5,\theta}$ that are also uniform when singularities merge we use Theorem \ref{thm:T+H Claeys}. \\

\noindent When applying Theorem \ref{thm:T+H uniform} we always have $\alpha_0 = \alpha_{r+1} = 0$, thus we get
\begin{equation}
\frac{\pi^{\frac{1}{2}(\alpha_0+\alpha_{r+1}+s'+t'+1)} G(1/2)^2}{G(1+\alpha_0+s') G(1 + \alpha_{r+1} + t')} = 1,
\end{equation}  
for any choices of $s',t' \in \{+ 1/2, - 1/2\}$. Further one has to be careful that $z_1$, $z_2$ always correspond to the singularities in the upper half circle, i.e. $\arg z_1, \arg z_2 \in (0, \pi)$. The asymptotics of the Toeplitz+Hankel determinants of $\sigma_{1,\theta,\theta'},\sigma_{2,\theta,\theta'},\sigma_{4,\theta},\sigma_{5,\theta'}$ obtained with Theorem \ref{thm:T+H uniform} are then as follows:

\begin{itemize}
\item For $f(e^{i\phi}) = \sigma_{1,\theta,\theta'}(e^{i\phi})$ we have $r = 0$, $\alpha_0 = \alpha_1 =0$ and
\begin{equation}
V(z) = -\sum_{j=1}^k \frac{2}{j} \Re \left((\alpha-\beta)(e^{ij\theta}+e^{ij\theta'}) \right) (e^{ij\phi} + e^{-ij\phi}).
\end{equation} 
Thus we obtain
\begin{align} \label{eqn:sigma1}
\begin{split}
D_n^{T+H,\kappa}(\sigma_{1,\theta,\theta'}) =& e^{\sum_{j=1}^k \frac{2}{j}   \Re \left((\alpha-\beta)(e^{ij\theta}+e^{ij\theta'}) \right)^2} \\
&\times e^{2\sum_{j=1}^k \frac{s'+(-1)^j t'}{j} \Re \left( (\alpha-\beta)(e^{ij\theta}+e^{ij\theta'}) \right)}\\
&\times 2^{(1-s'-t')n+q - \frac{1}{2}(s'+t')^2 + \frac{1}{2}(s'+t')} (1+o(1)),
\end{split}
\end{align}
uniformly for $\theta,\theta' \in [0,2\pi)$.

\item For $f(e^{i\phi}) = \sigma_{2,\theta,\theta'}(e^{i\phi})$ we have $r = 1$, $\alpha_0 = \alpha_2 = 0$, 
\begin{equation}
z_1 = \begin{cases} e^{i\theta'} & 0 < \theta' < \pi \\ e^{i(2\pi - \theta')} & \pi < \theta' < 2\pi \end{cases}, \quad \beta_1 = \begin{cases} \beta & 0 < \theta' < \pi \\ -\beta & \pi < \theta' < 2\pi \end{cases},
\end{equation}
$\alpha_1 = \alpha$, and 
\begin{equation}
V(z) = -\sum_{j=1}^k \frac{2}{j} \Re \left((\alpha-\beta)e^{ij\theta} \right) (e^{ij\phi} + e^{-ij\phi}).
\end{equation} 
Thus we obtain:
\begin{align} \label{eqn:sigma2}
D_n^{T+H,\kappa}(\sigma_{2,\theta,\theta'}) =& (2n)^{(\alpha^2-\beta^2)} e^{\sum_{j=1}^k \frac{2}{j}   \Re \left((\alpha-\beta) e^{ij\theta} \right)^2} e^{\sum_{j=1}^k \frac{4}{j} \Re \left( (\alpha-\beta)e^{ij\theta} \right)\Re \left( (\alpha-\beta)e^{ij\theta'} \right)} \nonumber \\
&\times e^{-i\pi \alpha \beta_1} z_1^{2\alpha \beta_1} |1- e^{2i\theta'}|^{-(\alpha^2+\beta^2)} \frac{G(1+\alpha+\beta) G(1+\alpha-\beta)}{G(1+2\alpha)} \nonumber \\
&\times e^{2\sum_{j=1}^k \frac{s'+(-1)^j t'}{j} \Re \left( (\alpha-\beta)e^{ij\theta} \right)} 2^{(1-s'-t')n+q - \frac{1}{2}(s'+t')^2 + \frac{1}{2}(s'+t')} \nonumber \\
&\times e^{-i\pi s' \beta_1} z_1^{\beta_1(s'+t')}  |1-e^{i\theta'}|^{-2\alpha s'} |1+e^{i\theta'}|^{-2\alpha t'} \\
&\times (1+o(1)), \nonumber 
\end{align}
uniformly for $\theta,\theta' \in [0,2\pi)$ s.t. $e^{i\theta'}$ stays bounded away from $\pm 1$.  

\item For $f(e^{i\phi}) = \sigma_{4,\theta}(e^{i\phi})$ we have $r = 0$, $\alpha_0 = \alpha_1 = 0$ and 
\begin{equation}
V(z) = -\sum_{j=1}^k \frac{2}{j} \Re \left((\alpha-\beta)e^{ij\theta} \right) (e^{ij\phi} + e^{-ij\phi}).
\end{equation} 
Thus we obtain:
\begin{align} \label{eqn:sigma4}
\begin{split}
D_n^{T+H,\kappa}(\sigma_{4,\theta}) =& e^{\sum_{j=1}^k \frac{2}{j} \Re \left((\alpha-\beta) e^{ij\theta} \right)^2} e^{2\sum_{j=1}^k \frac{s'+(-1)^j t'}{j} \Re \left( (\alpha-\beta)e^{ij\theta} \right)} \\
&\times 2^{(1-s'-t')n+q - \frac{1}{2}(s'+t')^2 + \frac{1}{2}(s'+t')} (1+o(1)),	
\end{split}
\end{align}
uniformly for $\theta \in [0,2\pi)$.  

\item For $f(e^{i\phi}) = \sigma_{5,\theta'}(e^{i\phi})$ with $e^{i\theta'} \neq \pm 1$, we have $r = 1$, $\alpha_0 = \alpha_2 = 0$, 
\begin{equation} \label{eqn:sigma5 decomposition}
z_1 = \begin{cases} e^{i\theta'} & 0 < \theta' < \pi \\ e^{i(2\pi - \theta')} & \pi < \theta' < 2\pi \end{cases}, \quad \beta_1 = \begin{cases} \beta & 0 < \theta' < \pi \\ -\beta & \pi < \theta' < 2\pi \end{cases}
\end{equation}
$\alpha_1 = \alpha$, and $V = 0$. Thus we obtain:
\begin{align} \label{eqn:sigma5}
&D_n^{T+H,\kappa}(\sigma_{5,\theta'}) \nonumber \\
=& (2n)^{(\alpha^2-\beta^2)} e^{-i\pi \alpha \beta_1} z_1^{2\alpha \beta_1}  |1- e^{2i\theta'}|^{-(\alpha^2+\beta^2)} \frac{G(1+\alpha+\beta) G(1+\alpha-\beta)}{G(1+2\alpha)} \nonumber \\
&\times 2^{(1-s'-t')n+q - \frac{1}{2}(s'+t')^2 + \frac{1}{2}(s'+t')} e^{-i\pi s' \beta_1} z_1^{\beta_1(s'+t')}  |1-e^{i\theta'}|^{-2\alpha s'} |1+e^{i\theta'}|^{-2\alpha t'} \nonumber \\
&\times (1+o(1)),
\end{align}
uniformly for $\theta' \in [0,2\pi)$ s.t. $e^{i\theta'}$ stays bounded away from $\pm 1$.    
\end{itemize}

For $f(e^{i\phi}) = \sigma_{3,\theta,\theta'}(e^{i\phi})$ we have $r = 2$, $\alpha_0 = \alpha_3 = V(z) = 0$ and $z_1,z_2,\alpha_1,\alpha_2,\beta_1,\beta_2$ chosen according to the following decomposition (always $\alpha_1 = \alpha_2 = \alpha$):
\begin{align} \label{eqn:decomposition}
[0,2\pi)^2 =& \cup_{j=1}^8 J_j \cup \left\{ (\theta, \theta') \in [0,2\pi)^2:\theta = \theta', \text{ or } \theta = 2\pi - \theta', \text{ or } \theta,\theta' \in \{0,\pi\} \right\}, \nonumber \\
J_1 =& \left\{ (\theta, \theta') \in (0, \pi) \times (0, \pi): \theta < \theta' \right\}, \nonumber \\
&\text{for } (\theta, \theta') \in J_1: \quad z_1 = e^{i\theta}, z_2 = e^{i\theta'}, \beta_1 = \beta_2 = \beta. \nonumber \\ 
J_2 =& \left\{ (\theta, \theta') \in (0, \pi) \times (0, \pi): \theta' < \theta \right\}, \nonumber \\
&\text{for } (\theta, \theta') \in J_2: \quad z_1 = e^{i\theta'}, z_2 = e^{i\theta}, \beta_1 = \beta_2 = \beta. \nonumber \\ 
J_3 =& \left\{ (\theta, \theta') \in (0, \pi) \times (\pi, 2\pi): \theta < 2\pi - \theta' \right\}, \nonumber  \\
&\text{for } (\theta, \theta') \in J_3: \quad z_1 = e^{i\theta}, z_2 = e^{i(2\pi - \theta')}, \beta_1 = -\beta_2 = \beta. \nonumber \\ 
J_4 =& \left\{ (\theta, \theta') \in (\pi, 2\pi) \times (0, \pi): \theta' < 2\pi - \theta \right\},  \nonumber \\
&\text{for } (\theta, \theta') \in J_4: \quad z_1 = e^{i\theta'}, z_2 = e^{i(2\pi - \theta)}, \beta_1 = - \beta_2 = \beta. \\ 
J_5 =& \left\{ (\theta, \theta') \in (\pi, 2\pi) \times (\pi, 2\pi): 2\pi - \theta < 2\pi - \theta' \right\},  \nonumber \\
&\text{for } (\theta, \theta') \in J_5: \quad z_1 = e^{i(2\pi - \theta)}, z_2 = e^{i(2\pi - \theta')}, \beta_1 = \beta_2 = -\beta. \nonumber \\ 
J_6 =& \left\{ (\theta, \theta') \in (\pi, 2\pi) \times (\pi, 2\pi): 2\pi - \theta' < 2\pi - \theta \right\},  \nonumber \\
&\text{for } (\theta, \theta') \in J_6: \quad z_1 = e^{i(2\pi - \theta')}, z_2 = e^{i(2\pi - \theta)}, \beta_1 = \beta_2 = -\beta. \nonumber \\ 
J_7 =& \left\{ (\theta, \theta') \in (\pi, 2\pi) \times (0, \pi): 2\pi - \theta < \theta' \right\},  \nonumber \\
&\text{for } (\theta, \theta') \in J_7: \quad z_1 = e^{i(2\pi - \theta)}, z_2 = e^{i\theta'}, \beta_1 = -\beta_2 = -\beta. \nonumber \\ 
J_8 =& \left\{ (\theta, \theta') \in (0, \pi) \times (\pi, 2\pi): 2\pi - \theta' < \theta \right\}, \nonumber \\
&\text{for } (\theta, \theta') \in J_8: \quad z_1 = e^{i(2\pi - \theta')}, z_2 = e^{i\theta}, \beta_1 = -\beta_2 = -\beta. \nonumber 
\end{align}

\noindent In this notation we obtain by Theorem \ref{thm:T+H uniform} 
\begin{align} \label{eqn:sigma3 extended} 
&D_n^{T+H,\kappa}(\sigma_{3,\theta,\theta'}) \nonumber \\
=& (2n)^{2(\alpha^2-\beta^2)} e^{-i\pi \alpha \left( \beta_1 + \beta_2 + 2 \beta_2 \right)} |e^{i\theta} - e^{i\theta'}|^{-2(\alpha^2-\beta^2)} |e^{i\theta} - e^{-i\theta'}|^{-2(\alpha^2+\beta^2)} \nonumber \\
&\times z_1^{4\beta_1 \alpha} z_2^{4\beta_2\alpha} |1- e^{2i\theta}|^{-(\alpha^2+\beta^2)} |1- e^{2i\theta'}|^{-(\alpha^2+\beta^2)} \frac{G(1+\alpha+\beta)^2 G(1+\alpha-\beta)^2}{G(1+2\alpha)^2} \nonumber \\
&\times 2^{(1-s'-t')n+q - \frac{1}{2}(s'+t')^2 + \frac{1}{2}(s'+t')} e^{-i\pi s'(\beta_1 + \beta_2)} z_1^{\beta_1(s'+t')} z_2^{\beta_2(s'+t')} \\
&\times |1-e^{i\theta}|^{-2\alpha s'} |1+e^{i\theta}|^{-2\alpha t'} |1-e^{i\theta'}|^{-2\alpha s'} |1+e^{i\theta'}|^{-2\alpha t'} (1+o(1)), \nonumber 
\end{align}
uniformly for $\theta,\theta' \in [0,2\pi)$, s.t. $e^{i\theta}, e^{i\theta'}, e^{-i\theta}, e^{-i\theta'}$
stay bounded away from each other and $\pm 1$.\\

In the following sections we use the asymptotics obtained in this section to compute the asymptotics of the quotients of expectations that appear in Lemma \ref{lemma:L2 limit}. 

\subsection{The Symplectic Case} 

By (\ref{eqn:averages sigma}), (\ref{eqn:sigma1}), (\ref{eqn:sigma4}) with $\kappa = 2$, $q=0$, $s' = t' = \frac{1}{2}$, we get 
\begin{align} \label{eqn:symplectic1}
\begin{split}
&\frac{\mathbb{E}_{Sp(2n)} \left( f_{n,\alpha,\beta}^{(k)}(\theta)f_{n,\alpha,\beta}^{(k)}(\theta') \right)}{\mathbb{E}_{Sp(2n)} \left( f_{n,\alpha,\beta}^{(k)}(\theta) \right)\mathbb{E}_{Sp(2n)} \left(f_{n,\alpha,\beta}^{(k)}(\theta') \right)} = \frac{D_n^{T+H,2}(\sigma_{1,\theta,\theta'})}{D_n^{T+H,2}(\sigma_{4,\theta})D_n^{T+H,2}(\sigma_{4,\theta'})} \\
=& e^{\sum_{j=1}^k \frac{4}{j} \Re \left((\alpha - \beta)e^{ij\theta}\right) \Re \left((\alpha - \beta)e^{ij\theta'}\right)} (1+o(1)),
\end{split}
\end{align}
uniformly for $\theta,\theta' \in [0,2\pi)$.\\

By (\ref{eqn:averages sigma}), (\ref{eqn:sigma2}), (\ref{eqn:sigma4}) and  (\ref{eqn:sigma5}) with $\kappa = 2$, $q = 0$, $s' = t' = \frac{1}{2}$, we get
\begin{align} \label{eqn:symplectic2}
\begin{split}
&\frac{\mathbb{E}_{Sp(2n)} \left( f_{n,\alpha,\beta}^{(k)}(\theta)f_{n,\alpha,\beta}(\theta') \right)}{\mathbb{E}_{Sp(2n)} \left( f_{n,\alpha,\beta}^{(k)}(\theta) \right)\mathbb{E}_{Sp(2n)} \left(f_{n,\alpha,\beta}(\theta') \right)} = \frac{D_n^{T+H,2}(\sigma_{2,\theta,\theta'})}{D_n^{T+H,2}(\sigma_{4,\theta})D_n^{T+H,2}(\sigma_{5,\theta'})} \\
=& e^{\sum_{j=1}^k \frac{4}{j} \Re \left((\alpha - \beta)e^{ij\theta}\right) \Re \left((\alpha -\beta)e^{ij\theta'}\right)} (1+o(1)),
\end{split}
\end{align}
uniformly for $\theta,\theta' \in [0,2\pi)$ s.t. $e^{i\theta'}$ stays bounded away from $\pm 1$. \\

By (\ref{eqn:averages sigma}), (\ref{eqn:sigma5}), (\ref{eqn:sigma3 extended}), with $\kappa = 2$, $q = 0$ and $s' = t' = \frac{1}{2}$, and $z_1,z_2,\beta_1,\beta_2$ chosen as in (\ref{eqn:decomposition}), a quick calculation results in
\begin{align} \label{eqn:symplectic3}
\begin{split}
&\frac{\mathbb{E}_{Sp(2n)} \left( f_{n,\alpha,\beta}(\theta)f_{n,\alpha,\beta}(\theta') \right)}{\mathbb{E}_{Sp(2n)} \left( f_{n,\alpha,\beta}(\theta) \right)\mathbb{E}_{Sp(2n)} \left(f_{n,\alpha,\beta}(\theta') \right)} = \frac{D_n^{T+H,2}(\sigma_{3,\theta,\theta'})}{D_n^{T+H,2}(\sigma_{5,\theta})D_n^{T+H,2}(\sigma_{5,\theta'})} \\
=& |e^{i\theta} - e^{i\theta'}|^{-2(\alpha^2-\beta^2)} |e^{i\theta} - e^{-i\theta'}|^{-2(\alpha^2+\beta^2)} z_1^{2\alpha\beta_1} z_2^{2\alpha \beta_2} e^{-2\pi i \alpha \beta_2} (1+o(1)) \\
=& |e^{i\theta} - e^{i\theta'}|^{-2(\alpha^2-\beta^2)} |e^{i\theta} - e^{-i\theta'}|^{-2(\alpha^2 + \beta^2)} e^{4i\alpha\beta \Im \log (1-e^{i(\theta+\theta')})} (1+o(1)),
\end{split}
\end{align}
uniformly for $\theta,\theta' \in [0,2\pi)$, s.t. $e^{i\theta}, e^{i\theta'}, e^{-i\theta}, e^{-i\theta'}$
stay bounded away from each other and $\pm 1$.

\subsection{The Odd Orthogonal Case}
In the odd orthogonal case we always have $q = -n$, and $s' + t' = 0$, which implies that
\begin{equation}
2^{(1-s'-t')n+q - \frac{1}{2}(s'+t')^2 + \frac{1}{2}(s'+t')} = 1.
\end{equation}
We also note that by (\ref{eqn:sigma hat}) we have 
\begin{align} \label{eqn:sigma hat pm}
\begin{split}
\hat{\sigma}_{1,\theta,\theta'}(\pm 1) =& e^{-2\sum_{j=1}^k \frac{(\pm 1)^j}{j} \Re \left( (\alpha-\beta)\left( e^{ij\theta} + e^{ij\theta'} \right) \right)}, \\
\hat{\sigma}_{2,\theta,\theta'}(\pm 1) =& e^{-2\sum_{j=1}^k \frac{(\pm 1)^j}{j} \Re \left( (\alpha-\beta) e^{ij\theta} \right)} |1 \mp e^{i\theta'}|^{2\alpha} e^{i\beta(\pi - \theta')} g_{e^{i\theta'},\beta}(\pm 1), \\
\hat{\sigma}_{3,\theta,\theta'}(1) =& |1 - e^{i\theta}|^{2\alpha} |1 - e^{i\theta'}|^{2\alpha} e^{i\beta(\pi - \theta)} e^{i\beta(\pi - \theta')}, \\
\hat{\sigma}_{3,\theta,\theta'}(-1) =& |1 + e^{i\theta}|^{2\alpha} |1 + e^{i\theta'}|^{2\alpha} e^{i\beta(\pi - \theta)} e^{i\beta(\pi - \theta')} e^{-i\pi(\beta_1 + \beta_2)}, \\
\hat{\sigma}_{4,\theta}(\pm 1) =& e^{-2\sum_{j=1}^k \frac{(\pm 1)^j}{j} \Re \left( (\alpha-\beta) e^{ij\theta} \right)}, \\
\hat{\sigma}_{5,\theta'}(1) =& |1 - e^{i\theta'}|^{2\alpha} e^{i\beta(\pi - \theta')},\\
\hat{\sigma}_{5,\theta'}(-1) =& |1 + e^{i\theta'}|^{2\alpha} e^{i\beta(\pi - \theta')} e^{-i\pi \beta_1},
\end{split}
\end{align}
where $\beta_1,\beta_2$ are chosen as in (\ref{eqn:decomposition}) for $\hat{\sigma}_{3,\theta,\theta'}$, and as in (\ref{eqn:sigma5 decomposition}) for $\hat{\sigma}_{5,\theta'}$. Thus by (\ref{eqn:averages sigma}), (\ref{eqn:sigma1}) and (\ref{eqn:sigma hat pm}) we get 
\begin{align} \label{eqn:odd sigma1}
&\mathbb{E}_{O(2n+1)}\left( f_{2n+1,\alpha,\beta}^{(k)}(\theta)f_{2n+1,\alpha,\beta}^{(k)}(\theta') \right) \nonumber \\
=& \frac{1}{2} \hat{\sigma}_{1,\theta,\theta'}(1) D_n^{T+H,3}(\sigma_{1,\theta,\theta'}) + \frac{1}{2} \hat{\sigma}_{1,\theta,\theta'}(-1) D_n^{T+H,4}(\sigma_{1,\theta,\theta'}) \nonumber \\
=& \frac{1}{2} e^{\sum_{j=1}^k \frac{2}{j}   \Re \left((\alpha-\beta)(e^{ij\theta}+e^{ij\theta'}) \right)^2} \nonumber \\
&\times \Bigg( e^{-2\sum_{j=1}^k \frac{1}{j} \Re \left( (\alpha-\beta)\left( e^{ij\theta} + e^{ij\theta'} \right) \right)} e^{\sum_{j=1}^k \frac{1-(-1)^j}{j} \Re \left( (\alpha-\beta)(e^{ij\theta}+e^{ij\theta'}) \right)} \\
&+ e^{-2\sum_{j=1}^k \frac{(-1)^j}{j} \Re \left( (\alpha-\beta)\left( e^{ij\theta} + e^{ij\theta'} \right) \right)} e^{-\sum_{j=1}^k \frac{1-(-1)^j}{j} \Re \left( (\alpha-\beta)(e^{ij\theta}+e^{ij\theta'}) \right)} \Bigg) (1+o(1)) \nonumber \\
=& e^{\sum_{j=1}^k \frac{2}{j}   \Re \left((\alpha-\beta)(e^{ij\theta}+e^{ij\theta'}) \right)^2} e^{-\sum_{j=1}^k \frac{1+(-1)^j}{j} \Re \left( (\alpha-\beta)(e^{ij\theta}+e^{ij\theta'}) \right)}(1+o(1)), \nonumber 
\end{align}
uniformly for $\theta,\theta' \in [0,2\pi)$. \\

Similarly we obtain from (\ref{eqn:averages sigma}), (\ref{eqn:sigma2}, (\ref{eqn:sigma4}), (\ref{eqn:sigma5}) and (\ref{eqn:sigma hat pm}), that
\begin{align} \label{eqn:odd sigma4}
\mathbb{E}_{O(2n+1)}\left( f_{2n+1,\alpha,\beta}^{(k)}(\theta) \right) = e^{\sum_{j=1}^k \frac{2}{j}   \Re \left((\alpha-\beta)e^{ij\theta} \right)^2} e^{-\sum_{j=1}^k \frac{1+(-1)^j}{j} \Re \left( (\alpha-\beta) e^{ij\theta}\right)} (1+o(1)),
\end{align}
uniformly for $\theta,\theta' \in [0,2\pi)$, and
\begin{align} \label{eqn:odd sigma25}
\mathbb{E}_{O(2n+1)}\left( f_{2n+1,\alpha,\beta}^{(k)}(\theta)f_{2n+1,\alpha,\beta}(\theta') \right) =& (2n)^{(\alpha^2-\beta^2)} e^{\sum_{j=1}^k \frac{2}{j}   \Re \left((\alpha-\beta) e^{ij\theta} \right)^2} \nonumber \\
&\times e^{\sum_{j=1}^k \frac{4}{j} \Re \left( (\alpha-\beta)e^{ij\theta} \right)\Re \left( (\alpha-\beta)e^{ij\theta'} \right)} \nonumber \\
&\times e^{-i\pi \alpha \beta_1} z_1^{2\alpha \beta_1} |1- e^{2i\theta'}|^{-(\alpha^2+\beta^2)} \nonumber \\
&\times \frac{G(1+\alpha+\beta) G(1+\alpha-\beta)}{G(1+2\alpha)} \\
&\times e^{-\sum_{j=1}^k \frac{1+(-1)^j}{j} \Re \left( (\alpha-\beta)e^{ij\theta} \right)} \nonumber \\
&\times e^{i\beta(\pi - \theta')} e^{-\frac{i \pi}{2} \beta_1} |1-e^{i\theta'}|^{\alpha} |1+e^{i\theta'}|^{\alpha} (1+o(1)), \nonumber \\
\mathbb{E}_{O(2n+1)}\left( f_{2n+1,\alpha,\beta}(\theta') \right) =& (2n)^{(\alpha^2-\beta^2)} e^{-i\pi \alpha \beta_1} z_1^{2\alpha \beta_1} |1- e^{2i\theta'}|^{-(\alpha^2+\beta^2)} \nonumber \\
&\times \frac{G(1+\alpha+\beta) G(1+\alpha-\beta)}{G(1+2\alpha)} \nonumber \\
&\times e^{i\beta(\pi - \theta')} e^{-\frac{i \pi}{2} \beta_1}
|1-e^{i\theta'}|^{\alpha} |1+e^{i\theta'}|^{\alpha} (1+o(1)), \nonumber 
\end{align}
uniformly for $\theta,\theta' \in [0,2\pi)$, s.t. $e^{i\theta'}$ stays bounded away from $\pm 1$, where
\begin{equation}
z_1 = \begin{cases} e^{i\theta'} & 0 < \theta' < \pi \\ e^{i(2\pi - \theta')} & \pi < \theta' < 2\pi \end{cases}, \quad \beta_1 = \begin{cases} \beta & 0 < \theta' < \pi \\ -\beta & \pi < \theta' < 2\pi \end{cases}.
\end{equation}

From (\ref{eqn:averages sigma}), (\ref{eqn:sigma3 extended}) and (\ref{eqn:sigma hat pm}) we obtain
\begin{align} \label{eqn:odd sigma3 extended}
& \mathbb{E}_{O(2n+1)}\left( f_{2n+1,\alpha,\beta}(\theta)f_{2n+1,\alpha,\beta}(\theta') \right) \nonumber \\
=& (2n)^{2(\alpha^2-\beta^2)} e^{-i\pi \alpha \left( \beta_1 + \beta_2 + 2 \beta_2 \right)} \nonumber \\
&\times |e^{i\theta} - e^{i\theta'}|^{-2(\alpha^2-\beta^2)} |e^{i\theta} - e^{-i\theta'}|^{-2(\alpha^2+\beta^2)} z_1^{4\beta_1\alpha} z_2^{4\beta_2\alpha} |1- e^{2i\theta}|^{-(\alpha^2+\beta^2)} |1- e^{2i\theta}|^{-(\alpha^2+\beta^2)} \nonumber \\
&\times \frac{G(1+\alpha+\beta)^2 G(1+\alpha-\beta)^2}{G(1+2\alpha)^2} e^{i\beta(\pi - \theta)} e^{i\beta(\pi - \theta')} e^{-\frac{i\pi}{2}(\beta_1 + \beta_2)} \\
&\times |1-e^{i\theta}|^{\alpha} |1+e^{i\theta}|^{\alpha} |1-e^{i\theta'}|^{\alpha} |1+e^{i\theta'}|^{\alpha} (1+o(1)), \nonumber 
\end{align}
uniformly for $\theta,\theta' \in [0,2\pi)$, s.t. $e^{i\theta}, e^{i\theta'}, e^{-i\theta}, e^{-i\theta'}$
stay bounded away from each other and $\pm 1$, and where $z_1,z_2,\beta_1,\beta_2$ are chosen as in (\ref{eqn:decomposition}).\\

Combining (\ref{eqn:odd sigma1}), (\ref{eqn:odd sigma4}) and (\ref{eqn:odd sigma25}), we obtain
\begin{align} \label{eqn:odd orthogonal1}
\begin{split}
&\frac{\mathbb{E}_{O(2n+1)} \left( f_{2n+1,\alpha,\beta}^{(k)}(\theta)f_{2n+1,\alpha,\beta}^{(k)}(\theta') \right)}{\mathbb{E}_{O(2n+1)} \left( f_{2n+1,\alpha,\beta}^{(k)}(\theta) \right)\mathbb{E}_{O(2n+1)} \left(f_{2n+1,\alpha,\beta}^{(k)}(\theta') \right)} \\
=& e^{\sum_{j=1}^k \frac{4}{j} \Re \left((\alpha - \beta)e^{ij\theta}\right) \Re \left((\alpha - \beta)e^{ij\theta'}\right)} (1+o(1)),
\end{split}
\end{align}
uniformly for $\theta,\theta' \in [0,2\pi)$, and 
\begin{align} \label{eqn:odd orthogonal2}
\begin{split}
&\frac{\mathbb{E}_{O(2n+1)} \left( f_{2n+1,\alpha,\beta}^{(k)}(\theta) f_{2n+1,\alpha,\beta}(\theta') \right)}{\mathbb{E}_{O(2n+1)} \left( f_{2n+1,\alpha,\beta}^{(k)}(\theta) \right)\mathbb{E}_{O(2n+1)} \left(f_{2n+1,\alpha,\beta}(\theta') \right)} \\
=& e^{\sum_{j=1}^k \frac{4}{j} \Re \left((\alpha - \beta)e^{ij\theta}\right) \Re \left((\alpha - \beta)e^{ij\theta'}\right)} (1+o(1)), 
\end{split}
\end{align}
uniformly for $\theta,\theta' \in [0,2\pi)$, s.t. $e^{i\theta'}$ stays bounded away from $\pm 1$. \\

By (\ref{eqn:odd sigma25}) and (\ref{eqn:odd sigma3 extended}) we obtain
\begin{align} \label{eqn:odd orthogonal3}
\begin{split}
&\frac{\mathbb{E}_{O(2n+1)} \left( f_{2n+1,\alpha,\beta}(\theta)f_{2n+1,\alpha,\beta}(\theta') \right)}{\mathbb{E}_{O(2n+1)} \left( f_{2n+1,\alpha,\beta}(\theta) \right)\mathbb{E}_{O(2n+1)} \left(f_{2n+1,\alpha,\beta}(\theta') \right)} \\
=& |e^{i\theta} - e^{i\theta'}|^{-2(\alpha^2-\beta^2)} |e^{i\theta} - e^{-i\theta'}|^{-2(\alpha^2+\beta^2)} z_1^{2\alpha\beta_1} z_2^{2\alpha \beta_2} e^{-2\pi i \alpha \beta_2} (1+o(1)), \\
=& e^{4i\alpha\beta \Im \log (1-e^{i(\theta+\theta')})} |e^{i\theta} - e^{i\theta'}|^{-2(\alpha^2-\beta^2)} |e^{i\theta} - e^{-i\theta'}|^{-2(\alpha^2 + \beta^2)} (1+o(1)),
\end{split}
\end{align}
uniformly for $\theta,\theta' \in [0,2\pi)$, s.t. $e^{i\theta}, e^{i\theta'}, e^{-i\theta}, e^{-i\theta'}$
stay bounded away from each other and $\pm 1$.

\subsection{The Even Orthogonal Case}
In the same way as in the odd orthogonal case one can use (\ref{eqn:averages sigma}), (\ref{eqn:sigma1}) - (\ref{eqn:sigma3 extended}) and (\ref{eqn:sigma hat pm}) to obtain 
\begin{align} \label{eqn:even orthogonal1}
\begin{split}
&\frac{\mathbb{E}_{O(2n)} \left( f_{2n,\alpha,\beta}^{(k)}(\theta)f_{2n,\alpha,\beta}^{(k)}(\theta') \right)}{\mathbb{E}_{O(2n)} \left( f_{2n,\alpha,\beta}^{(k)}(\theta) \right)\mathbb{E}_{O(2n)} \left(f_{2n,\alpha,\beta}^{(k)}(\theta') \right)} = e^{\sum_{j=1}^k \frac{4}{j} \Re \left((\alpha - \beta)e^{ij\theta}\right) \Re \left((\alpha - \beta)e^{ij\theta'}\right)} (1+o(1)),
\end{split}
\end{align}
uniformly for $\theta,\theta \in [0,2\pi)$, and 
\begin{align}
\begin{split} \label{eqn:even orthogonal2}
&\frac{\mathbb{E}_{O(2n)} \left( f_{2n,\alpha,\beta}^{(k)}(\theta) f_{2n,\alpha,\beta}(\theta') \right)}{\mathbb{E}_{O(2n)} \left( f_{2n,\alpha,\beta}^{(k)}(\theta) \right)\mathbb{E}_{O(2n)} \left(f_{2n,\alpha,\beta}(\theta') \right)} = e^{\sum_{j=1}^k \frac{4}{j} \Re \left((\alpha - \beta)e^{ij\theta}\right) \Re \left((\alpha - \beta)e^{ij\theta'}\right)} (1+o(1)),
\end{split}
\end{align}
uniformly for $\theta, \theta' \in [0,2\pi)$, s.t. $e^{i\theta'}$ stays bounded away from $\pm 1$, and 
\begin{align} \label{eqn:even orthogonal3}
\begin{split}
&\frac{\mathbb{E}_{O(2n)} \left( f_{2n,\alpha,\beta}(\theta)f_{2n,\alpha,\beta}(\theta') \right)}{\mathbb{E}_{O(2n)} \left( f_{2n,\alpha,\beta}(\theta) \right)\mathbb{E}_{O(2n)} \left(f_{2n,\alpha,\beta}(\theta') \right)} \\
=& e^{4i\alpha\beta \Im \log (1-e^{i(\theta+\theta')})} |e^{i\theta} - e^{i\theta'}|^{-2(\alpha^2-\beta^2)} |e^{i\theta} - e^{-i\theta'}|^{-2(\alpha^2 + \beta^2)} (1+o(1)),
\end{split}
\end{align}
uniformly for $\theta,\theta' \in [0,2\pi)$, s.t. $e^{i\theta}, e^{i\theta'}, e^{-i\theta}, e^{-i\theta'}$
stay bounded away from each other and $\pm 1$.\\

In Section \ref{section:second limit} we will also need that
\begin{align} \label{eqn:even sigma4}
\begin{split}
&\mathbb{E}_{O(2n)}\left( f_{2n,\alpha,\beta}^{(k)}(\theta) \right) = e^{- 2\sum_{j=1}^k \frac{\eta_j}{j} \Re\left( (\alpha-\beta)e^{ij\theta} \right) + \sum_{j=1}^k \frac{2}{j} \Re\left((\alpha-\beta)e^{ij\theta}\right)^2}(1+o(1)),
\end{split}
\end{align}
uniformly in $\theta \in [0,2\pi)$. 

\subsection{Pulling the large-$n$ limit inside the integral}
Using the asymptotics computed in the previous sections and Theorem \ref{thm:T+H Claeys}, we follow the proof of Corollary 2.1 in \cite{Fahs2021} to show that we can pull $\lim_{n \rightarrow \infty}$ inside the integral, i.e. that we can use the non-uniform asymptotics of the integrand:
\begin{lemma} \label{lemma:L2 limit Claeys}
Let the expectations be over $O(n)$ or $Sp(2n)$, and $\alpha^2 - \beta^2 < 1/2$ and $0 \leq \alpha < 1/2$. Then 
\begin{align} \label{eqn:L2 limit Claeys1}
\lim_{n \rightarrow \infty} &\int_0^{2\pi} \int_0^{2\pi} g(\theta) g(\theta') \frac{\mathbb{E} \left( f_{n,\alpha,\beta}(\theta) f_{n,\alpha,\beta}(\theta') \right)} {\mathbb{E} \left( f_{n,\alpha,\beta}(\theta) \right)\mathbb{E} \left( f_{n,\alpha,\beta}(\theta') \right)} \text{d}\theta \text{d}\theta' \\
=&\int_0^{2\pi} \int_0^{2\pi} g(\theta)g(\theta') e^{4i\alpha\beta \Im \log (1-e^{i(\theta+\theta')})} |e^{i\theta} - e^{i\theta'}|^{-2(\alpha^2-\beta^2)} |e^{i\theta} - e^{-i\theta'}|^{-2(\alpha^2 + \beta^2)} \text{d}\theta \text{d}\theta', \nonumber 
\end{align}
and 
\begin{align} \label{eqn:L2 limit Claeys2}
\begin{split}
\lim_{n \rightarrow \infty} &\int_0^{2\pi} \int_0^{2\pi} g(\theta) g(\theta') \frac{\mathbb{E} \left( f^{(k)}_{n,\alpha,\beta}(\theta) f_{n,\alpha,\beta}(\theta') \right)} {\mathbb{E} \left( f^{(k)}_{n,\alpha,\beta}(\theta) \right)\mathbb{E} \left( f_{n,\alpha,\beta}(\theta') \right)} \text{d}\theta \text{d}\theta' \\
=&\int_0^{2\pi} \int_0^{2\pi} g(\theta)g(\theta') e^{\sum_{j=1}^k \frac{4}{j} \Re \left((\alpha - \beta)e^{ij\theta}\right) \Re \left((\alpha -\beta)e^{ij\theta'}\right)} \text{d}\theta \text{d}\theta'.
\end{split}
\end{align}
\end{lemma}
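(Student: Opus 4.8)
The plan is to adapt verbatim the argument used by Fahs for the unitary group in the proof of Corollary 2.1 of \cite{Fahs2021}, replacing Toeplitz determinants with the Toeplitz+Hankel determinants appearing via the Baik-Rains identity (Theorem \ref{thm:average}). The strategy has two ingredients: first, the pointwise (non-uniform) convergence of the integrand, which for $\theta \neq \theta'$ and with $e^{i\theta}, e^{i\theta'}, e^{-i\theta}, e^{-i\theta'}$ mutually bounded away and bounded away from $\pm 1$ has already been established in the previous subsections (equations (\ref{eqn:symplectic3}), (\ref{eqn:odd orthogonal3}), (\ref{eqn:even orthogonal3}) for the first limit, and (\ref{eqn:symplectic2}), (\ref{eqn:odd orthogonal2}), (\ref{eqn:even orthogonal2}) for the second); the exceptional set where these asymptotics fail has two-dimensional Lebesgue measure zero, so the integrand converges a.e.\ on $(0,2\pi)^2$. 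Second, a domination argument: I would produce an $n$-independent integrable upper bound for the quotients of expectations, so that dominated convergence lets us pass the limit inside. The limiting integrands on the right-hand sides of (\ref{eqn:L2 limit Claeys1}) and (\ref{eqn:L2 limit Claeys2}) are integrable precisely because $\alpha^2 - \beta^2 < 1/2$ (the exponent $-2(\alpha^2-\beta^2) > -1$ on the diagonal singularity $|e^{i\theta}-e^{i\theta'}|$) and $\alpha^2 + \beta^2$ is bounded (the anti-diagonal singularity $|e^{i\theta}-e^{-i\theta'}|^{-2(\alpha^2+\beta^2)}$ near $\pm 1$ is controlled since there $\alpha^2 + \beta^2 < 1/2$ as well when $0 \le \alpha < 1/2$); this is exactly where the constraint $0 \le \alpha < 1/2$ (rather than $\alpha > -1/4$) is forced, since near $\pm 1$ the diagonal and anti-diagonal singularities coalesce into a single singularity with exponent $-4(\alpha^2+\beta^2)$ or similar, which must still be integrable.

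For the domination bound I would invoke Theorem \ref{thm:T+H Claeys} of Claeys, Glesner, Minakov and Yang, which gives, uniformly over the whole region of singularity locations, the asymptotics of $D_n^{T+H,\kappa}(f)$ up to a factor $e^{\mathcal{O}(1)}$ that is bounded and bounded away from $0$. Applying it to the symbols $\sigma_{3,\theta,\theta'}$, $\sigma_{5,\theta}$, $\sigma_{2,\theta,\theta'}$, $\sigma_{4,\theta}$ (with $\alpha_0 = \alpha_{r+1} = 0$ and $\alpha_j = \alpha \ge 0$ in all cases, which is why $0 \le \alpha$ is needed to apply the theorem), the quotient $\mathbb{E}(f_{n,\alpha,\beta}(\theta) f_{n,\alpha,\beta}(\theta'))/(\mathbb{E}(f_{n,\alpha,\beta}(\theta))\mathbb{E}(f_{n,\alpha,\beta}(\theta')))$ is bounded above, up to an $n$-independent constant, by
\begin{align*}
&\left( \sin\left| \tfrac{\theta-\theta'}{2} \right| + \tfrac1n \right)^{-2(\alpha^2-\beta^2)} \left( \sin\left| \tfrac{\theta+\theta'}{2} \right| + \tfrac1n \right)^{-2(\alpha^2+\beta^2)} \\
&\quad\times \left( \text{factors from } \sin\tfrac{\theta}{2}, \cos\tfrac{\theta}{2}, \sin\tfrac{\theta'}{2}, \cos\tfrac{\theta'}{2} \text{ of the form } (\cdot + \tfrac1n)^{-\text{const}} \right),
\end{align*}
where one must check, as in \cite{Fahs2021}, that for the Toeplitz+Hankel case the $O(2n)$, $O(2n+1)$, $Sp(2n)$ prefactors (the powers of $2$, the $\hat\sigma_j(\pm1)$ boundary factors, and the sum over $SO(2n)$ and $SO^-(2n)$) all contribute $n$-independent bounded quantities once the ratio is formed. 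Each factor of the form $(\sin|\tfrac{\theta-\theta'}{2}| + \tfrac1n)^{-\gamma}$ with $\gamma < 1$ is dominated by the $n$-independent integrable function $(\sin|\tfrac{\theta-\theta'}{2}|)^{-\gamma}$ plus a bounded remainder, and similarly for the others; their product is integrable over $(0,2\pi)^2$ exactly under $\alpha^2 - \beta^2 < 1/2$ and $0 \le \alpha < 1/2$. The second limit (\ref{eqn:L2 limit Claeys2}) is easier: the symbol $\sigma_{2,\theta,\theta'}$ has only one genuine Fisher-Hartwig singularity (at $e^{\pm i\theta'}$), so its quotient is uniformly bounded in $\theta,\theta'$ by a constant times $(\sin|\theta'|+\tfrac1n)^{-2\alpha^2-\text{const}}$-type factors near $\pm 1$, again integrable, and the pointwise limit is the smooth function on the right-hand side.

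The main obstacle will be the careful bookkeeping of the prefactors near $\pm 1$ and on the anti-diagonal: Theorem \ref{thm:T+H Claeys} is stated only for $\alpha_0 = \alpha_{r+1} = 0$, so when $\theta'$ (or $\theta$) approaches $0$ or $\pi$ the singularity of $\sigma_{5,\theta'}$ or $\sigma_{3,\theta,\theta'}$ at that point is not of the form covered directly — rather, one of the "moving" singularities collides with the (absent) singularity at $\pm 1$ — and one must track how the $(\sin\frac{\theta'}{2} + \frac1n)$ and $(\cos\frac{\theta'}{2} + \frac1n)$ factors in the formula blow up, and verify the resulting exponent stays $> -1$ under $0 \le \alpha < 1/2$; this is precisely the computation that dictates the parameter range and explains the gap between Theorem \ref{thm:main} (where $\pm 1$ is excised and $\alpha > -1/4$ suffices) and Theorem \ref{thm:main2}. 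A secondary point requiring care is that the $O(n)$ averages are sums of two Toeplitz+Hankel determinants ($\frac12 D^{T+H,1}$ and $\hat\sigma(\pm1)D^{T+H,2}$, or $\frac12\hat\sigma(1)D^{T+H,3}$ and $\frac12\hat\sigma(-1)D^{T+H,4}$); since all summands are positive and have the same leading order, the bound on the sum follows from the bound on each term and the denominator bound from below by any one term, but this must be spelled out. Once the domination is in place, the lemma follows immediately by the dominated convergence theorem applied to the a.e.-convergent integrands.
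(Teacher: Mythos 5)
Your proposal is correct and follows essentially the same route as the paper: both adapt Fahs's Corollary 2.1, using Theorem \ref{thm:T+H Claeys} together with the Baik--Rains identity to bound the quotients uniformly (the paper's (\ref{eqn:quotients Sp(2n)})--(\ref{eqn:quotients})) and the previously derived asymptotics (\ref{eqn:symplectic2})--(\ref{eqn:even orthogonal3}) for the pointwise limit, the only packaging difference being that the paper interchanges limit and integral via an explicit splitting of $[0,2\pi)^2$ into a region where the singularities stay separated and a small complementary region, rather than invoking dominated convergence with your majorant $(\sin|\tfrac{\theta-\theta'}{2}|)^{-2(\alpha^2-\beta^2)}(\sin|\tfrac{\theta+\theta'}{2}|)^{-2(\alpha^2+\beta^2)}$. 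When you carry out the bookkeeping you flag as the main obstacle, you will find it simpler than anticipated: the $(\sin\tfrac{\theta}{2}+\tfrac1n)$, $(\cos\tfrac{\theta}{2}+\tfrac1n)$ factors and the $\hat{\sigma}_j(\pm1)$ prefactors cancel exactly between numerator and denominator (after comparing the sums of Toeplitz+Hankel terms as you describe), so the ratios reduce to $e^{\mathcal{O}(1)}$ times the two main factors and no extra exponents near $\pm1$ survive.
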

\begin{remark} \label{remark:parameters}
Note that the limit in (\ref{eqn:L2 limit Claeys1}) is finite since $2(\alpha^2 + \beta^2) \leq 2(\alpha^2 - \beta^2) < 1$ and $2(\alpha^2 + \beta^2) + 2(\alpha^2 - \beta^2) = 4\alpha^2 < 1$. If we restrict the integrals to $I_\epsilon = (\epsilon, \pi - \epsilon) \cup (\pi + \epsilon, 2\pi - \epsilon)$ then we only need that $2(\alpha^2 + \beta^2) \leq 2(\alpha^2 - \beta^2) < 1$ for the integral to be finite, since $|e^{i\theta} - e^{i\theta'}|$ and $|e^{i\theta} - e^{-i\theta'}|$ can only go to zero simultaneously when both $\theta, \theta'$ approach $\{0,\pi,2\pi\}$. Further, if we restrict to $I_\epsilon$ we can rely on Theorems \ref{thm:T+H uniform} and \ref{thm:T, T+H extended} for the proof, which only require $\alpha > -1/4$, and we will not need Theorem \ref{thm:T+H Claeys} anymore, which requires $\alpha \geq 0$. Thus when restricting to $I_\epsilon$, Lemma \ref{lemma:L2 limit Claeys} holds for the larger set of parameters $\alpha^2 - \beta^2 < 1/2$ and $\alpha > - 1/4$. 
\end{remark} 
\noindent \textbf{Proof of Lemma \ref{lemma:L2 limit Claeys}:} By Theorems \ref{thm:T+H Claeys} and \ref{thm:average} we have 
\begin{align} \label{eqn:quotients Sp(2n)}
\begin{split}
&\frac{\mathbb{E}_{Sp(2n)} \left( f_{n,\alpha,\beta}(\theta) f_{n,\alpha,\beta}(\theta') \right)} {\mathbb{E}_{Sp(2n)} \left( f_{n,\alpha,\beta}(\theta) \right)\mathbb{E}_{Sp(2n)} \left( f_{n,\alpha,\beta}(\theta') \right)} = \frac{D_n^{T+H,2}(\sigma_{3,\theta,\theta'}) }{ D_n^{T+H,2}(\sigma_{5,\theta}) D_n^{T+H,2}(\sigma_{5,\theta'})} \\
=& e^{\mathcal{O}(1)} \left( \sin \left| \frac{\theta - \theta'}{2} \right| + \frac{1}{n} \right)^{-2( \alpha^2 - \beta^2)} \left( \sin \left| \frac{\theta +\theta'}{2} \right| + \frac{1}{n} \right)^{-2(\alpha^2 + \beta^2)}, 
\end{split}
\end{align}
as $n\rightarrow \infty$, uniformly in (Lebesgue almost all) $(\theta,\theta') \in [0,2\pi)^2$. By the same theorems we get 
\begin{align}
\begin{split}
&\mathbb{E}_{O(2n+1)} \left( f_{2n+1,\alpha,\beta}(\theta) f_{2n+1,\alpha,\beta}(\theta') \right) \\
=& \frac{1}{2} \left( \hat{\sigma}_{3,\theta,\theta'}(1) D_n^{T+H,3}(\sigma_{3,\theta,\theta'}) + \hat{\sigma}_{3,\theta,\theta'}(-1)D_n^{T+H,4}(\sigma_{3,\theta,\theta'}) \right) \\
=& \frac{1}{2} F_{\sigma_{3,\theta,\theta'}} n^{2(\alpha^2 - \beta^2)} e^{\mathcal{O}(1)} \Bigg(  \prod_{j = 1}^2 \left( \frac{ \sin \frac{|\theta_j|}{2} + \frac{1}{n} }{ \sin^2 \frac{|\theta_j|}{2} \left( \cos \frac{|\theta_j|}{2} + \frac{1}{n} \right)} \right)^{-\alpha} \\
& \hspace{4.5cm} + \prod_{j = 1}^2 \left( \frac{ \cos \frac{|\theta_j|}{2} + \frac{1}{n} }{ \cos^2 \frac{|\theta_j|}{2} \left( \sin \frac{|\theta_j|}{2} + \frac{1}{n} \right)} \right)^{-\alpha} \Bigg),
\end{split}
\end{align}
and
\begin{align}
&\mathbb{E}_{O(2n+1)} \left( f_{2n+1,\alpha,\beta}(\theta) \right) \\
=& \frac{1}{2} F_{\sigma_{5,\theta}} n^{\alpha^2 - \beta^2} e^{\mathcal{O}(1)} \left( \left( \frac{ \sin \frac{|\theta|}{2} + \frac{1}{n} }{ \sin^2 \frac{|\theta|}{2} \left( \cos \frac{|\theta|}{2} + \frac{1}{n} \right)} \right)^{-\alpha} + \left( \frac{ \cos \frac{|\theta|}{2} + \frac{1}{n} }{ \cos^2 \frac{|\theta|}{2} \left( \sin \frac{|\theta|}{2} + \frac{1}{n} \right)} \right)^{-\alpha} \right), \nonumber 
\end{align}
as $n \rightarrow \infty$, uniformly for (Lebesgue almost all) $(\theta,\theta') \in [0,2\pi)^2$, where $\hat{\sigma}_{1,\theta,\theta'}$,...,$\hat{\sigma}_{5,\theta}$ are defined in (\ref{eqn:sigma hat}). Thus we can see that also for the expectations over $O(2n+1)$ we get
\begin{align} \label{eqn:quotients O(2n+1)}
\begin{split}
&\frac{ \mathbb{E}_{O(2n+1)} \left( f_{2n+1,\alpha,\beta}(\theta) f_{2n+1,\alpha,\beta}(\theta') \right) }{ \mathbb{E}_{O(2n+1)} \left( f_{2n+1,\alpha,\beta}(\theta) \right) \mathbb{E}_{O(2n+1)} \left( f_{2n+1,\alpha,\beta}(\theta') \right)} \\
=& e^{\mathcal{O}(1)} \left( \sin \left| \frac{\theta - \theta'}{2} \right| + \frac{1}{n} \right)^{-2( \alpha^2 - \beta^2)} \left( \sin \left| \frac{\theta +\theta'}{2} \right| + \frac{1}{n} \right)^{-2(\alpha^2 + \beta^2)}, 
\end{split}
\end{align}
as $n\rightarrow \infty$, uniformly in (Lebesgue almost all) $(\theta,\theta') \in [0,2\pi)^2$. Similarly, Theorem \ref{thm:T+H Claeys} gives
\begin{align} \label{eqn:quotients O(2n)}
\begin{split}
&\frac{ \mathbb{E}_{O(2n)} \left( f_{2n,\alpha,\beta}(\theta) f_{2n,\alpha,\beta}(\theta') \right) }{ \mathbb{E}_{O(2n)} \left( f_{2n,\alpha,\beta}(\theta) \right) \mathbb{E}_{O(2n)} \left( f_{2n,\alpha,\beta}(\theta') \right)} \\
=& e^{\mathcal{O}(1)} \left( \sin \left| \frac{\theta - \theta'}{2} \right| + \frac{1}{n} \right)^{-2( \alpha^2 - \beta^2)} \left( \sin \left| \frac{\theta +\theta'}{2} \right| + \frac{1}{n} \right)^{-2(\alpha^2 + \beta^2)}, 
\end{split}
\end{align}
as well as 
\begin{align} \label{eqn:quotients}
\begin{split}
&\frac{ \mathbb{E}_{Sp(2n)} \left( f^{(k)}_{n,\alpha,\beta}(\theta) f_{n,\alpha,\beta}(\theta') \right) }{ \mathbb{E}_{Sp(2n)} \left( f^{(k)}_{n,\alpha,\beta}(\theta) \right) \mathbb{E}_{Sp(2n)} \left( f_{n,\alpha,\beta}(\theta') \right)} = e^{\mathcal{O}(1)},\\
&\frac{ \mathbb{E}_{O(2n+1)} \left( f^{(k)}_{2n+1,\alpha,\beta}(\theta) f_{2n+1,\alpha,\beta}(\theta') \right) }{ \mathbb{E}_{O(2n+1)} \left( f^{(k)}_{2n+1,\alpha,\beta}(\theta) \right) \mathbb{E}_{O(2n+1)} \left( f_{2n+1,\alpha,\beta}(\theta') \right)} = e^{\mathcal{O}(1)},\\
&\frac{ \mathbb{E}_{O(2n)} \left( f^{(k)}_{2n,\alpha,\beta}(\theta) f_{2n,\alpha,\beta}(\theta') \right) }{ \mathbb{E}_{O(2n)} \left( f^{(k)}_{2n,\alpha,\beta}(\theta) \right) \mathbb{E}_{O(2n)} \left( f_{2n,\alpha,\beta}(\theta') \right)} = e^{\mathcal{O}(1)},
\end{split}
\end{align}
as $n\rightarrow \infty$, uniformly in (Lebesgue almost all) $(\theta,\theta') \in [0,2\pi)^2$.\\

Now, for a given measureable subset $R \subset [0,2\pi)^2$, we denote
\begin{align}
\begin{split}
L_\epsilon(R) =& \int_R \left( \sin \left| \frac{\theta - \theta'}{2} \right| + \epsilon \right)^{-2( \alpha^2 - \beta^2)} \left( \sin \left| \frac{\theta +\theta'}{2} \right| + \epsilon \right)^{-2(\alpha^2 + \beta^2)} \text{d}\theta \text{d}\theta',\\
K_\epsilon(R) =& \int_R \left( \sin \left| \frac{\theta - \theta'}{2} \right| + \epsilon \right)^{-2( \alpha^2 - \beta^2)} \text{d}\theta \text{d}\theta'.
\end{split}
\end{align}
In the case $\alpha^2 + \beta^2 > 0$ we have $L_\epsilon(R) < L_0(R) < \infty$ for any $\epsilon > 0$ (since $2(\alpha^2 - \beta^2), 2(\alpha^2 + \beta^2), 4\alpha^2 < 1$), while in the case $\alpha^2 + \beta^2 \leq 0$ we have $K_\epsilon(R) < K_0(R) < \infty$ for any $\epsilon > 0$. For $\eta > 0$ we define 
\begin{align}
\begin{split}
R_1(\eta) =& \left\{ (\theta,\theta') \in [0,2\pi)^2: \sin \frac{|\theta - \theta'|}{2}, \sin \frac{|\theta + \theta'|}{2} > \mu \right\} \\
R_2(\eta) =& R_1(\eta)^c.
\end{split}
\end{align}
It follows by (\ref{eqn:quotients Sp(2n)}), (\ref{eqn:quotients O(2n+1)}) and (\ref{eqn:quotients O(2n)}) that for any $\eta > 0$ there exists a $C > 0$ and $N_0 \in \mathbb{N}$ such that 
\begin{align}
\begin{split}
\int_{R_2(\eta)} g(\theta) g(\theta') \frac{\mathbb{E} \left( f_{n,\alpha,\beta}(\theta) f_{n,\alpha,\beta}(\theta') \right)} {\mathbb{E} \left( f_{n,\alpha,\beta}(\theta) \right)\mathbb{E} \left( f_{n,\alpha,\beta}(\theta') \right)} \leq \begin{cases} C L_0(R_2(\eta)), & \alpha^2 + \beta^2 > 0, \\
C K_0(R_2(\eta)), & \alpha^2 + \beta^2 \leq 0, \end{cases}
\end{split}
\end{align}
for $n > N_0$. Fix $\delta > 0$. Since $L_0(R_2(\eta)), K_0(R_2(\eta) \rightarrow 0$ as $\eta \rightarrow 0$, it follows that there exists an $\eta_0 > 0$ and an $N_0 \in \mathbb{N}$ such that
\begin{align} \label{eqn:bound R_2}
\begin{split}
\int_{R_2(\eta)} g(\theta) g(\theta') \frac{\mathbb{E} \left( f_{n,\alpha,\beta}(\theta) f_{n,\alpha,\beta}(\theta') \right)} {\mathbb{E} \left( f_{n,\alpha,\beta}(\theta) \right)\mathbb{E} \left( f_{n,\alpha,\beta}(\theta') \right)} < \delta/2
\end{split}
\end{align}
for $n > N_0$ and $\eta < \eta_0$.  \\

Using (\ref{eqn:symplectic3}), (\ref{eqn:odd orthogonal3}) and (\ref{eqn:even orthogonal3}), we get that for any fixed $\eta > 0$ it holds that
\begin{align}
\begin{split}
&\int_{R_1(\eta)} g(\theta) g(\theta') \frac{\mathbb{E} \left( f_{n,\alpha,\beta}(\theta) f_{n,\alpha,\beta}(\theta') \right)} {\mathbb{E} \left( f_{n,\alpha,\beta}(\theta) \right)\mathbb{E} \left( f_{n,\alpha,\beta}(\theta') \right)} = (1+ o(1)) L_0(R_1(\eta)) \\
=& (1+o(1)) L_0([0,2\pi)^2) - (1+o(1)) L_0(R_2(\eta)).
\end{split}
\end{align}
We pick $\eta < \eta_0$ such that $I_0(R_2(\eta)) < \delta/4$, then we have
\begin{align}
\begin{split}
&\left| \int_{R_1(\eta)} g(\theta) g(\theta') \frac{\mathbb{E} \left( f_{n,\alpha,\beta}(\theta) f_{n,\alpha,\beta}(\theta') \right)} {\mathbb{E} \left( f_{n,\alpha,\beta}(\theta) \right)\mathbb{E} \left( f_{n,\alpha,\beta}(\theta') \right)} - L_0([0,2\pi)^2) \right| \\
=& \left| L_0(R_2(\eta)) +o(1) \right| < \delta/4 + o(1).
\end{split}
\end{align}
Together with (\ref{eqn:bound R_2}) we obtain that there exists an $N \in \mathbb{N}$ such that 
\begin{align}
\begin{split}
\left| \int_0^{2\pi} \int_0^{2\pi} g(\theta) g(\theta') \frac{\mathbb{E} \left( f_{n,\alpha,\beta}(\theta) f_{n,\alpha,\beta}(\theta') \right)} {\mathbb{E} \left( f_{n,\alpha,\beta}(\theta) \right)\mathbb{E} \left( f_{n,\alpha,\beta}(\theta') \right)} - L_0([0,2\pi)^2) \right| < \delta
\end{split}
\end{align}
for all $n > N$. Since $\delta >0 $ is arbitrary, this shows the first part of (\ref{eqn:L2 limit Claeys1}). (\ref{eqn:L2 limit Claeys2}) follows from (\ref{eqn:quotients}) in a similar way. \qed 

\subsection{Proof of Lemma \ref{lemma:L2 limit}}
Now we have all the ingredients necessary to prove Lemma \ref{lemma:L2 limit}. We will only prove it for $I = [0,2\pi)$, the proof for $I = I_\epsilon = (\epsilon, \pi - \epsilon) \cup (\pi + \epsilon, 2\pi + \epsilon)$ is completely analogous and relies on the fact that Lemma \ref{lemma:L2 limit Claeys} also holds for $I = I_\epsilon$, $\alpha^2 - \beta^2 < 1/2$ and $\alpha > -1/4$, as explained in Remark \ref{remark:parameters}.\\ 

\noindent \textbf{Proof of Lemma \ref{lemma:L2 limit}:} From Lemma \ref{lemma:L2 limit Claeys} and (\ref{eqn:L2 limit}), (\ref{eqn:symplectic1}), (\ref{eqn:odd orthogonal1}) and (\ref{eqn:even orthogonal1}) it follows that 
\begin{align} 
& \lim_{n\rightarrow \infty} \mathbb{E} \left( \left( \int_0^{2\pi} g(\theta) \mu_{G(n),\alpha,\beta}(\text{d}\theta) - \int_0^{2\pi} g(\theta) \mu_{G(n),\alpha,\beta}^{(k)}(\text{d}\theta) \right)^2 \right) \nonumber \\
=& \int_0^{2\pi} \int_0^{2\pi} g(\theta)g(\theta') e^{4i\alpha\beta \Im \log (1-e^{i(\theta+\theta')})} |e^{i\theta} - e^{i\theta'}|^{-2(\alpha^2-\beta^2)} |e^{i\theta} - e^{-i\theta'}|^{-2(\alpha^2 + \beta^2)} \text{d}\theta \text{d}\theta' \nonumber \\
& - \int_0^{2\pi} \int_0^{2\pi} g(\theta)g(\theta') e^{\sum_{j=1}^k \frac{4}{j} \Re \left((\alpha - \beta)e^{ij\theta}\right) \Re \left((\alpha - \beta)e^{ij\theta'}\right)}  \text{d}\theta \text{d}\theta'.
\end{align}
We see that
\begin{align}\label{eqn:bounded in L2 - 1}
&4\Re \left((\alpha - \beta)e^{ij\theta}\right) \Re \left((\alpha - \beta)e^{ij\theta'}\right) \nonumber \\
=& (\alpha - \beta)^2 e^{ij(\theta+\theta')} + (\alpha + \beta)^2 e^{-ij(\theta+\theta')} + (\alpha^2- \beta^2)(e^{ij(\theta-\theta')} + e^{-ij(\theta-\theta')})\\
=& 2(\alpha^2-\beta^2) \cos(j(\theta-\theta')) + 2(\alpha^2+\beta^2) \cos(j(\theta+\theta')) - 2 \alpha \beta \left(e^{ij(\theta+\theta')} - e^{-ij(\theta + \theta')}\right).  \nonumber 
\end{align}
Since
\begin{align}
\begin{split}
\log |e^{i\theta} - e^{i\theta'}| =& - \sum_{j=1}^\infty \frac{1}{j} \cos(j(\theta-\theta')), \\
\log |e^{i\theta} - e^{-i\theta'}| =& - \sum_{j=1}^\infty \frac{1}{j} \cos(j(\theta+\theta')),
\end{split}
\end{align}
and 
\begin{align}
\begin{split}
-\sum_{j=1}^\infty \frac{1}{j}(e^{ij(\theta+\theta')} - e^{-ij(\theta+\theta')}) =& \log  (1-e^{i(\theta+\theta')}) - \log (1-e^{-i(\theta+\theta')}) \\
=& 2i\Im \log (1-e^{i(\theta+\theta')}),
\end{split}
\end{align}
we see that
\begin{align} \label{eqn:series for covariance function}
\begin{split}
&e^{\sum_{j=1}^\infty \frac{1}{j} \Re \left((\alpha + i\beta)e^{ij\theta}\right) \Re \left((\alpha + i\beta)e^{ij\theta'}\right)} \\
=& |e^{i\theta} - e^{i\theta'}|^{-2(\alpha^2-\beta^2)} |e^{i\theta} - e^{-i\theta'}|^{-2(\alpha^2+\beta^2)} e^{4i\alpha \beta \Im \log (1-e^{i(\theta+\theta')})}.
\end{split}
\end{align}
Because $\mathbb{E}\left( (...)^2\right) \geq 0$ it holds that
\begin{align} \label{eqn:bounded in L2 - 2}
&\int_{0}^{2\pi} \int_{0}^{2\pi} g(\theta)g(\theta') |e^{i\theta} - e^{i\theta'}|^{-2(\alpha^2-\beta^2)} |e^{i\theta} - e^{-i\theta'}|^{-2(\alpha^2+\beta^2)} e^{4i\alpha \beta \Im \log (1-e^{i(\theta+\theta')})} \text{d}\theta \text{d}\theta' \nonumber \\
\geq &\limsup_{k\rightarrow \infty} \int_{0}^{2\pi} \int_{0}^{2\pi} g(\theta)g(\theta') e^{\sum_{j=1}^k \frac{4}{j} \Re \left((\alpha - \beta)e^{ij\theta}\right) \Re \left((\alpha - \beta)e^{ij\theta'}\right)}  \text{d}\theta \text{d}\theta'.
\end{align}
Now we use that $g$ is non-negative to apply Fatou's lemma to get the other inequality, which finishes the proof. \qed 

\section{Proof of the Second Limit} \label{section:second limit}

In this section we prove (\ref{eqn:second limit}) for $I = [0,2\pi)$, i.e. that for any fixed $k \in \mathbb{N}$ and bounded continuous function $g:[0,2\pi) \rightarrow \mathbb{R}$ it holds that
\begin{equation}
\int_0^{2\pi} g(\theta) \mu_{G(n),\alpha,\beta}^{(k)}(\text{d}\theta) \xrightarrow{d} \int_0^{2\pi} g(\theta) \mu_{\alpha,\beta}^{(k)}(\text{d}\theta),
\end{equation}
as $n\rightarrow \infty$, where $\mu_{G(n),\alpha,\beta}^{(k)}$ is defined in Definition \ref{def:mu_n^(k)} and $\mu^{(k)}_{\alpha,\beta}$ is defined in (\ref{eqn:mu k}). For $I = I_\epsilon = (\epsilon, \pi - \epsilon) \cup (\pi + \epsilon, 2\pi - \epsilon)$ the proof is exactly the same.   \\

We consider the function $F:\mathbb{R}^k \rightarrow \mathbb{R}$,
\begin{equation}
F(z_1,...,z_k) = \int_0^{2\pi} \frac{g(\theta) e^{-2\sum_{j=1}^k \frac{z_j}{\sqrt{j}} \left( \alpha \cos(j\theta) - i\beta \sin(j\theta) \right)}}{e^{\pm 2\sum_{j=1}^k \frac{\eta_j}{j} \Re\left( (\alpha-\beta)e^{ij\theta} \right) + \sum_{j=1}^k \frac{2}{j} \Re\left((\alpha-\beta)e^{ij\theta}\right)^2}} \text{d}\theta, 
\end{equation}
which is continuous since the integrand is continuous in $z_1,...,z_n$ and $\theta$, and bounded in $\theta$ for any fixed $z_1,...,z_n$. Then we have, with $\pm$ corresponding to symplectic/orthogonal:
\begin{align}
\int_0^{2\pi} g(\theta) \mu_{G(n),\alpha,\beta}^{(k)}(\text{d}\theta) =& \int_0^{2\pi} \frac{g(\theta) e^{-2\sum_{j=1}^k \frac{\text{Tr}(U_n^j)}{\sqrt{j}} \left( \alpha \cos(j\theta) - i\beta \sin(j\theta) \right)}}{\mathbb{E}(f_{n,\alpha,\beta}^{(k)}(\theta))} \text{d}\theta \nonumber \\
=& \frac{1}{1+o(1)} \int_0^{2\pi} \frac{g(\theta) e^{-2\sum_{j=1}^k \frac{\text{Tr}(U_n^j)}{\sqrt{j}} \left( \alpha \cos(j\theta) - i\beta \sin(j\theta) \right)}}{e^{\pm 2\sum_{j=1}^k \frac{\eta_j}{j} \Re\left( (\alpha-\beta)e^{ij\theta} \right) + \sum_{j=1}^k \frac{2}{j} \Re\left((\alpha-\beta)e^{ij\theta}\right)^2}} \text{d}\theta \nonumber \\
&\xrightarrow{d} \int_0^{2\pi} \frac{g(\theta) e^{- 2\sum_{j=1}^k \frac{\mathcal{N}_j \mp \frac{\eta_j}{\sqrt{j}}}{\sqrt{j}} \left( \alpha \cos(j\theta) - i\beta \sin(j\theta) \right)}}{e^{\pm 2\sum_{j=1}^k \frac{\eta_j}{j} \Re\left( (\alpha-\beta)e^{ij\theta} \right) + \sum_{j=1}^k \frac{2}{j} \Re\left((\alpha-\beta)e^{ij\theta}\right)^2}} \text{d}\theta \nonumber \\
\overset{d}{=}& \int_0^{2\pi} g(\theta) e^{ 2\sum_{j=1}^k \frac{\mathcal{N}_j}{\sqrt{j}} \left( \alpha \cos(j\theta) - i\beta \sin(j\theta) \right) -\sum_{j=1}^k \frac{2}{j} \Re\left((\alpha-\beta)e^{ij\theta}\right)^2} \text{d}\theta \nonumber \\
=& \int_0^{2\pi} g(\theta) \mu_{\alpha,\beta}^{(k)}(\text{d}\theta),  
\end{align} 
where in the second equality we used (\ref{eqn:sigma4}), (\ref{eqn:odd sigma4}), and (\ref{eqn:even sigma4}), where the convergence in distribution follows from Theorem \ref{thm:traces} and the continuous mapping theorem, and where the penultimate equality follows from the fact that $- \mathcal{N}_j \overset{d}{=} \mathcal{N}_j$.

%\section*{Acknowledgements}
%Our work was supported by ERC Advanced Grant 740900 (LogCorRM).  We are most grateful to Theo Assiotis for his kind permission to state Theorem \ref{thm:Gaussian field} here and to set out its proof in Appendix \ref{appendix:Gaussian fields}, as well as for many extremely helpful discussions.  We are also most grateful to Tom Claeys, Gabriel Glesner, Alexander Minakov and Meng Yang for having kindly shared with us their work in progress and for having communicated to us one of their results from \cite{Claeys et al}, prior to posting it on the arXiv, which we quote in Theorem \ref{thm:T+H Claeys}. We make use of this result to prove our Theorem \ref{thm:main2}. Further we thank Mo Dick Wong for helpful comments and suggestions. Finally we thank two anonymous referees for their careful reading, helpful remarks and suggestions.

\chapter{Moments of Moments of the Characteristic Polynomial of Random Orthogonal and Symplectic Matrices} \label{chapter:MoM}

In this chapter, which is based on joint work with Tom Claeys and Jon Keating \cite{CFK23}, we use the uniform asymptotics of Toeplitz+Hankel determinants from Theorem \ref{thm:T+H Claeys}, due to Claeys et al. \cite{Claeys2022}, to establish formulae, stated in Theorems \ref{thm:MoM Sp(2n)} and \ref{thm:MoM O(n)} below, for the asymptotics of the moments of the moments of the characteristic polynomials of random orthogonal and symplectic matrices, as the matrix-size tends to infinity. Those formulae are analogous to the ones for random unitary matrices, found by Fahs \cite{Fahs2021}, and stated in Theorem \ref{thm:Fahs}. A key feature of the formulae we derive is that the phase transitions in the moments
of moments are seen to depend on the symmetry group in question in a
significant way. 

\section{Statement of Results}
We recall that for $G(n) \in \left\{ U(n), \, O(n), \, SO(n), \, SO^-(n), \, Sp(2n) \right\}$ we define the characteristic polynomial of a Haar distributed $U \in G(n)$ by
\begin{equation}
p_{G(n)}(\theta) := \text{det} \left(I-e^{-i\theta}U\right), \quad \theta \in [0,2\pi),
\end{equation}
and that for $\alpha > -1/2$ and $m \in \mathbb{R}$ we define the moments of moments of $p_{G(n)}(\theta)$ by
\begin{align}\label{def:MoM}
\begin{split}
    {\rm MoM}_{G(n)}(m,\alpha) :=& \mathbb{E}_{ G(n)} \left( \left( \frac{1}{2\pi} \int_0^{2\pi} |p_{G(n)}(\theta)|^{2\alpha} \text{d}\theta \right)^m \right),
\end{split}
\end{align}
where $\mathbb{E}_{ G(n)}$ denotes the expectation with respect to the normalized Haar measure on $G(n)$. Furthermore we recall Fahs' asymptotic formula for ${\rm MoM}_{U(n)}(m,\alpha)$ (this result was already stated as Theorem \ref{thm:Fahs} in Chapter \ref{chapter:Intro}, but we restate it here for convenience):
\begin{theorem}[\cite{Fahs2021}] \label{thm:Fahs}
    For $m \in \mathbb{N}$ and $\alpha > 0$, as $n \rightarrow \infty$:
    \begin{equation}
        {\rm MoM}_{U(n)}(m,\alpha) = \begin{cases} (1+o(1)) n^{m\alpha^2} \frac{G(1 + \alpha)^{2m}\Gamma(1 - m\alpha^2)}{G(1 + 2\alpha)^m \Gamma(1 - \alpha^2)^m}, & \alpha < \frac{1}{\sqrt{m}}, \\
        e^{\mathcal{O}(1)} n^{m\alpha^2} \log n & \alpha = \frac{1}{\sqrt{m}},\\
        e^{\mathcal{O}(1)} n^{m^2\alpha^2 + 1 - m}, & \alpha > \frac{1}{\sqrt{m}}, \end{cases}
    \end{equation}
    where $e^{\mathcal{O}(1)}$ denotes a function that is bounded and bounded away from $0$ as $n \rightarrow \infty$, and where $G(z)$ denotes the Barnes $G$-function.
\end{theorem}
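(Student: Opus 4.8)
The plan is to reproduce the architecture of Fahs' proof \cite{Fahs2021}: first reduce the moments of moments to a multiple integral of Toeplitz determinants, then feed in the uniform asymptotics of such determinants, and finally dissect the integral according to how the Fisher--Hartwig singularities cluster. Since $m\in\mathbb{N}$, Fubini's theorem gives, as in (\ref{eqn:Fubini1}),
\begin{equation}
{\rm MoM}_{U(n)}(m,\alpha) = \int_{[0,2\pi)^m} \mathbb{E}_{U(n)}\Big( \prod_{j=1}^m |p_{U(n)}(\theta_j)|^{2\alpha} \Big) \frac{\text{d}\theta_1\cdots\text{d}\theta_m}{(2\pi)^m},
\end{equation}
and by the Heine--Szeg\H{o} identity (Theorem \ref{thm:Heine-Szego}) the integrand is the Toeplitz determinant $D_n(f_{\vec\theta})$ of a symbol $f_{\vec\theta}$ of the form (\ref{eqn:f FH}) with $V\equiv 0$, $\beta_j=0$, and $m$ root singularities of exponent $\alpha_j=\alpha$ at $e^{i\theta_1},\dots,e^{i\theta_m}$. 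The essential input is Fahs' computation of the large-$n$ asymptotics of $D_n(f_{\vec\theta})$, valid uniformly over all configurations of the singularities, including merging ones, up to a factor $e^{\mathcal{O}(1)}$ that is bounded and bounded away from $0$.

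First I would record the two extreme regimes. When the singularities are well separated, classical Fisher--Hartwig asymptotics give
\begin{equation}
D_n(f_{\vec\theta}) = (1+o(1))\, n^{m\alpha^2}\Big(\frac{G(1+\alpha)^2}{G(1+2\alpha)}\Big)^m \prod_{1\le j<k\le m} |e^{i\theta_j}-e^{i\theta_k}|^{-2\alpha^2},
\end{equation}
uniformly on configurations with pairwise distances $\geq\delta$; when all $m$ singularities lie within a window of size $O(1/n)$ they coalesce into a single singularity of exponent $m\alpha$, so $D_n(f_{\vec\theta})\asymp n^{(m\alpha)^2}$ there, on a set of Lebesgue measure $\asymp n^{-(m-1)}$. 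The plan is then to partition $[0,2\pi)^m$ according to which subsets of indices lie within dyadic scales $2^{-\ell}$ of each other, so that on each piece the appropriate specialization of Fahs' formula applies and the bounded $e^{\mathcal{O}(1)}$ factor can be extracted uniformly, and to estimate the contribution of each piece by power counting in $n$.

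For $\alpha<1/\sqrt{m}$ the fully separated region dominates, and it is cleanest to use (\ref{eqn:MoM GMC}): the factor $\mathbb{E}(|p_{U(n)}(\theta)|^{2\alpha})^m$ contributes $(1+o(1)) n^{m\alpha^2}(G(1+\alpha)^2/G(1+2\alpha))^m$ by the Keating--Snaith asymptotics, while the normalized integral converges, via an $L^2$-type uniform-integrability argument built on the separated-region asymptotics above (the $U(n)$ analogue of the reasoning in Section \ref{section:L2}), to the $m$-th moment of the total mass of the subcritical Gaussian multiplicative chaos measure $\mu_{\alpha,0}$, which equals $\Gamma(1-m\alpha^2)/\Gamma(1-\alpha^2)^m$ by the Fyodorov--Bouchaud formula \cite{FB08,Rem20} (equivalently, by Morris's constant-term integral); multiplying yields the stated constant. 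At $\alpha=1/\sqrt m$ the separated-region integral diverges logarithmically; truncating the cluster diameters at scale $1/n$ localizes the main contribution to intermediate scales and produces the extra $\log n$, the $e^{\mathcal{O}(1)}$ absorbing the exact constant (found in \cite{KW22} for $m\ge 2$). For $\alpha>1/\sqrt m$ the fully merged region wins: there $D_n(f_{\vec\theta})=e^{\mathcal{O}(1)} n^{(m\alpha)^2}$ on a set of measure $e^{\mathcal{O}(1)} n^{-(m-1)}$, giving $e^{\mathcal{O}(1)} n^{m^2\alpha^2+1-m}$, and one checks that all intermediate regions give strictly smaller powers of $n$.

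The hard part will be precisely that last bookkeeping over the intermediate regions. For each partition of $\{1,\dots,m\}$ into clusters of diameters at dyadic scales one must balance three competing powers of $n$: the one produced by Fahs' determinant asymptotics for that degree of coalescence, the one from the Lebesgue measure of the region, and the ones from integrating out the cluster diameters; the claim is that their sum never exceeds the leading exponent, with equality only in the dominant region (and a marginal logarithmic gain exactly at $\alpha=1/\sqrt m$). Carrying this out requires a scale-by-scale estimate uniform enough to survive the unknown $e^{\mathcal{O}(1)}$ factor, together with the uniform-integrability estimate needed to pass to the GMC limit in the subcritical case. Showing that the only two operative mechanisms are ``all separated'' and ``all merged'' is the crux of the argument.
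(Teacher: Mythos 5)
You should first be aware that the thesis itself contains no proof of this statement: Theorem \ref{thm:Fahs} is quoted from Fahs \cite{Fahs2021}, so the only internal point of comparison is the parallel argument for the orthogonal and symplectic groups, i.e.\ the proof of Theorems \ref{thm:MoM Sp(2n)} and \ref{thm:MoM O(n)} in Chapter \ref{chapter:MoM}. Measured against that template (and against Fahs' original argument), your outline has the right architecture: Fubini plus Heine--Szeg\H{o} (Theorem \ref{thm:Heine-Szego}) to reduce ${\rm MoM}_{U(n)}(m,\alpha)$ to an $m$-fold integral of Toeplitz determinants; the uniform merging-singularity asymptotics up to $e^{\mathcal{O}(1)}$ as the key input (the analogue of Theorem \ref{thm:T+H Claeys} used in Chapter \ref{chapter:MoM}); precise asymptotics on separated configurations (Theorem \ref{thm:T non-uniform}) combined with a cutoff at scale $\eta$ to extract the exact subcritical constant; and a competition between the fully separated and fully merged regions in the other phases. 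One cosmetic difference: for the subcritical constant the in-paper analogue does not pass through GMC or Fyodorov--Bouchaud at all; Section \ref{section:subcritical} shows directly that the integral of the ratio of expectations converges to a Selberg/Morris-type integral, evaluated for integer $m$ by Theorem \ref{thm:Selberg}. Your route through (\ref{eqn:MoM GMC}) and the $m$-th moment of the GMC total mass is equivalent for integer $m$, but it requires exactly the same uniform-integrability estimate, so the GMC language adds nothing beyond the direct convergence-of-integrals argument.

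The substantive gap is the one you yourself flag. In both \cite{Fahs2021} and the thesis's Chapter \ref{chapter:MoM} analogue, the critical and supercritical \emph{upper} bounds are not obtained by a dyadic cluster decomposition with scale-by-scale power counting; instead the integrand is dominated by products of factors of the form $(|t_j-t_k|+n^{-2})^{-2\alpha^2}$ after a change of variables, and a further rescaling of the type $x_j=t_j^{-1}n^{-2}$ reduces everything to Selberg-type integrals whose finiteness, or logarithmic divergence exactly at the critical $\alpha$, is read off from Theorem \ref{thm:Selberg} (this is the content of Lemma \ref{lemma:I} and Section \ref{section:critical and supercritical} in the Toeplitz$+$Hankel setting). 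Your proposal asserts that ``all intermediate regions give strictly smaller powers of $n$'' but does not carry out either this rescaling-to-Selberg device or a worked-out multi-scale estimate robust to the unknown $e^{\mathcal{O}(1)}$ factor; until that bookkeeping is done, the upper bounds $e^{\mathcal{O}(1)}n^{m\alpha^2}\log n$ at $\alpha=1/\sqrt m$ and $e^{\mathcal{O}(1)}n^{m^2\alpha^2+1-m}$ for $\alpha>1/\sqrt m$ are not established, and these are precisely the parts of the theorem where the work lies (the lower bounds, as you indicate, follow easily by restricting the integral to the fully merged region and to dyadic shells). So as written this is a sound plan that mirrors the known proof strategy, but not yet a proof.
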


For $m\in\mathbb N$ and $\alpha>0$, we define the constants
\begin{align} \label{eqn:C pm}
\begin{split}
C^\pm(m,\alpha) := \frac{G(1+\alpha)^{2m}}{G(1+2\alpha)^m} \frac{4^{-\alpha^2m^2 \pm \alpha m}}{\pi^m} \prod_{j = 0}^{m-1} \frac{\Gamma(1 - \alpha^2 - j\alpha^2) \Gamma\left( \frac{1 - \alpha^2 \pm \alpha}{2} - j\alpha^2 \right)^2}{\Gamma(1 - \alpha^2) \Gamma \left( 1 \pm \alpha - \alpha^2(m + j) \right)},
\end{split}
\end{align}
where $\Gamma$ denotes Euler's Gamma function and $G$ denotes the Barnes $G$-function, satisfying $G(z+1)=\Gamma(z)G(z)$ and $G(0)=0$, $G(1)=1$. Note that although $C^\pm(m,\alpha)$ has poles for certain values of $\alpha$, it is well defined if 
\begin{equation}\label{eq:alphasubcrit}\alpha<\min\left\{\frac{1}{\sqrt{m}}, \frac{\sqrt{8m-3}\pm 1}{4m-2}\right\}=\begin{cases}\frac{1}{\sqrt{m}}&\mbox{if $m=2$ and $\pm=+$,}\\
\frac{\sqrt{8m-3}\pm 1}{4m-2}&\mbox{otherwise,}\end{cases}\end{equation} since in this case we have
that $1-\alpha^2-j\alpha^2$ and $ \frac{1 - \alpha^2 \pm \alpha}{2} - j\alpha^2$ are positive for $j=0,\ldots, m-1$, such that the poles of $\Gamma$ are avoided.\\

Our main results of this chapter are then as follows (recall that they were already stated as Theorems \ref{thm:MoM Sp intro} and \ref{thm:MoM O intro} in Chapter \ref{chapter:Intro}):

\begin{theorem} \label{thm:MoM Sp(2n)}
Let $m \in \mathbb{N}$ and $\alpha > 0$. Then, as $n \rightarrow \infty$:
\begin{align}
\begin{split}
    {\rm MoM}_{Sp(2n)}(m,\alpha) = \begin{cases} (1+o(1)) (2n)^{m\alpha^2} C^-(m,\alpha), & \alpha < \frac{\sqrt{8m-3} - 1}{4m-2}, \\
    e^{\mathcal{O}(1)} n^{m\alpha^2} \log n & \alpha = \frac{\sqrt{8m-3} - 1}{4m-2},\\
    e^{\mathcal{O}(1)} n^{2(m\alpha)^2 + m\alpha -m}, & \alpha > \frac{\sqrt{8m-3} - 1}{4m-2}. \end{cases}
\end{split}
\end{align}
\end{theorem}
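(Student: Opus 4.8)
\textbf{Proof strategy for Theorem \ref{thm:MoM Sp(2n)}.}

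The plan is to start from the Fubini identity \eqref{eqn:Fubini1}, which writes ${\rm MoM}_{Sp(2n)}(m,\alpha)$ as an $m$-fold integral over $(0,2\pi)^m$ of $\mathbb{E}_{Sp(2n)}\left(\prod_{j=1}^m|p_{Sp(2n)}(\theta_j)|^{2\alpha}\right)$. By the Baik-Rains identity (Theorem \ref{thm:Baik2001}, the $Sp(2n)$ line), this integrand equals $D_n^{T+H,2}(\iota)$, where $\iota$ is the symbol $\prod_{j=1}^m|z-e^{i\theta_j}|^{2\alpha}|z-e^{-i\theta_j}|^{2\alpha}$ with $2m$ Fisher-Hartwig singularities on the unit circle (all $\beta_j=0$, and with $\alpha_0=\alpha_{r+1}=0$ since there is no a priori singularity forced at $\pm1$ in the integrand itself, although singularities may drift there). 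Then I would apply Theorem \ref{thm:T+H Claeys} to $D_n^{T+H,2}$, which gives, uniformly over the configuration of angles, the explicit product formula involving $n^{\alpha_j^2-\beta_j^2}$ and the factors $(\sin\theta_j+\tfrac1n)^{-\alpha_j-\alpha_j^2-\beta_j^2}$ together with the double-product $F$, all up to an $e^{\mathcal{O}(1)}$ multiplicative error.

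Next I would split $(0,2\pi)^m$ into regions according to which of the quantities $|e^{i\theta_j}-e^{i\theta_k}|$, $|e^{i\theta_j}-e^{-i\theta_k}|$, and $|e^{i\theta_j}\mp1|$ are $\gtrless 1/n$ — equivalently, whether the associated singularities have merged at scale $1/n$. This is the standard Fahs-type decomposition. On the main (subcritical) region, where all angles are separated from each other, from their conjugates, and from $\pm1$ by a distance $\gg 1/n$, one can replace the uniform asymptotics by the non-uniform Toeplitz+Hankel asymptotics of Theorem \ref{thm:T+H non-uniform} (case $r=m$, $\alpha_0=\alpha_{r+1}=0$), yielding a clean leading term $n^{m\alpha^2}$ times a function of the $\theta_j$ that is integrable precisely when $\alpha<\tfrac{\sqrt{8m-3}-1}{4m-2}$; the integral of that function produces the constant $C^-(m,\alpha)$. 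The exponent $\tfrac{\sqrt{8m-3}-1}{4m-2}$ arises as the threshold at which the local integrability near $\pm1$ (where both a diagonal and an anti-diagonal singularity collide, contributing $-(\alpha^2+\beta^2)$-type exponents reinforced by the Hankel structure, plus the extra $\pm\alpha$ from the $s',t'$ terms specific to $\kappa=2$) fails: one gets a quadratic inequality $2m\alpha^2+m\alpha<1$ whose positive root is exactly this value. In the critical and supercritical regimes, the dominant contribution instead comes from the region where all $m$ points collapse onto $1$ (or onto $-1$, by symmetry), on which Theorem \ref{thm:T+H Claeys} gives an integrand of order $n^{2(m\alpha)^2+m\alpha}\times(\text{function of rescaled variables})$; rescaling $\theta_j\mapsto\theta_j/n$ and integrating over the rescaled simplex contributes a factor $n^{-m}$ in the supercritical case and an extra $\log n$ exactly at the threshold, giving $e^{\mathcal{O}(1)}n^{2(m\alpha)^2+m\alpha-m}$ and $e^{\mathcal{O}(1)}n^{m\alpha^2}\log n$ respectively. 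One must also check that all other merging configurations (partial merges, merges away from $\pm1$, merges with only one of $\pm1$) contribute subleading terms — this is a finite bookkeeping of exponents using Theorem \ref{thm:T+H Claeys}.

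The main obstacle I expect is twofold. First, pulling the $n\to\infty$ limit inside the subcritical integral requires a dominated-convergence argument exactly as in the proof of Lemma \ref{lemma:L2 limit Claeys}: one needs a uniform (in $n$) integrable majorant for the ratio $D_n^{T+H,2}(\iota)/n^{m\alpha^2}$, which is supplied by Theorem \ref{thm:T+H Claeys} with $1/n$ replaced by $\epsilon$, finite precisely under the subcriticality condition \eqref{eq:alphasubcrit}. Second, and more delicate, is the bookkeeping at $\pm1$: because the symplectic weight from Theorem \ref{thm:Baik2001} forces a factor $h(1)h(-1)$ structure only in some of the groups but the $Sp(2n)$ case is clean ($D_n^{T+H,2}$ with no prefactor), one must carefully track the $s'=t'=1/2$, $q=0$ values when invoking the Toeplitz+Hankel asymptotics, since these shift the exponent of $|1-z_j|$ and $|1+z_j|$ by $-2\alpha s'=-\alpha$, which is exactly what moves the phase transition from Fahs' $1/\sqrt m$ to $\tfrac{\sqrt{8m-3}-1}{4m-2}$. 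Verifying that the constant obtained by integrating the limiting density equals the closed form $C^-(m,\alpha)$ in \eqref{eqn:C pm} is then a Selberg-type integral evaluation, which I would carry out by recognizing the integral over the separated region as a (degenerate, boundary-constrained) Morris/Selberg integral and matching Gamma-function factors — routine but lengthy, and I would relegate it to a lemma.
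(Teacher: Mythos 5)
Your overall strategy is essentially the paper's: Fubini plus the Baik--Rains identity (Theorem \ref{thm:Baik2001}, with $\mathbb{E}_{Sp(2n)}(\det h(U))=D_n^{T+H,2}(\iota)$ and no prefactor), the uniform asymptotics of Theorem \ref{thm:T+H Claeys} to reduce ${\rm MoM}_{Sp(2n)}(m,\alpha)$ to $e^{\mathcal{O}(1)}n^{m\alpha^2}$ times an explicit integral, the Deift--Its--Krasovsky asymptotics on the well-separated region to extract the subcritical constant, and the region where all $m$ points merge at $+1$ driving the critical and supercritical phases. In the subcritical phase the paper does exactly what you sketch: it splits $(0,\pi)^m$ into $R_1(\eta)$ and $R_2(\eta)$, uses the non-uniform asymptotics on $R_1(\eta)$, and the integrable majorant you ask for is Lemma \ref{lemma:I infty bound}; after the substitution $x_j=\tfrac12+\tfrac12\cos\theta_j$ the limiting integral $I_\infty^-$ is a bona fide Selberg integral, which yields $C^-(m,\alpha)$ in \eqref{eqn:C pm} directly, so no Morris-type boundary constraint is needed.

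Two caveats. First, the quadratic you write for the threshold is not the right one: the relevant condition is $2m(m-1)\alpha^2-m(1-\alpha^2-\alpha)>0$, i.e. $(2m-1)\alpha^2+\alpha>1$, whose positive root is $\frac{\sqrt{8m-3}-1}{4m-2}$; the inequality $2m\alpha^2+m\alpha<1$ you state has a different root, so as written your heuristic identification of the transition point is off (the value you quote in the statement is of course correct). Second, the critical and supercritical upper bounds are not a single-scale ``finite bookkeeping'' over which singularities have merged at scale $1/n$: the contributions from intermediate scales between $1/n$ and $1$ are essential (they are precisely what produces the $\log n$ at criticality), and for $\alpha\geq 1/\sqrt{m}$ the pairwise interaction $|\theta_j-\theta_k|^{-2\alpha^2}$ is not locally integrable, so a naive decomposition by merged/unmerged pairs does not close. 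One either has to carry out a genuinely multi-scale Fahs-type analysis, or do what the paper does in Lemma \ref{lemma:I}: change variables $t_j=\cos\theta_j$ and then $x_j=(n^2t_j)^{-1}$ (Lemmas \ref{lemma:I to J} and \ref{lemma:J to J l}), reducing matters to the Selberg-type integrals $K_n^-(m,\alpha,m)$ of Lemma \ref{lemma:K}, where for $\alpha\geq 1/\sqrt{m}$ one trades part of the pairwise exponent (bounding by $|x_j-x_k|^{-2/m+\epsilon}$ times one-body factors) to restore integrability. This is the substantive work your sketch compresses, but it lies within the same overall approach rather than constituting a different route.
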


\begin{theorem} \label{thm:MoM O(n)}
Let $G(n) \in \{ O(n), \, SO(n), \, SO^-(n) \}$ and $\alpha > 0$. For $m \in \mathbb{N} \setminus \{2\}$, as $n \rightarrow \infty$:
\begin{align}
\begin{split}
    {\rm MoM}_{G(n)}(m,\alpha) = \begin{cases} (1+o(1)) n^{m\alpha^2} C^+(m,\alpha), & \alpha < \frac{\sqrt{8m-3} + 1}{4m-2}, \\
    e^{\mathcal{O}(1)} n^{m\alpha^2} \log n & \alpha = \frac{\sqrt{8m-3} + 1}{4m-2},\\
    e^{\mathcal{O}(1)} n^{2(m\alpha)^2 - m\alpha -m}, & \alpha > \frac{\sqrt{8m-3} + 1}{4m-2}. \end{cases}
\end{split}
\end{align}
Moreover, as $n \rightarrow \infty$:
\begin{align}
    {\rm MoM}_{G(n)}(2,\alpha) = \begin{cases} (1 + o(1)) n^{2\alpha^2} C^+(2,\alpha), & \alpha < \frac{1}{\sqrt{2}}, \\
    e^{\mathcal{O}(1)} n^{2\alpha^2} \log n & \alpha = \frac{1}{\sqrt{2}},\\
    e^{\mathcal{O}(1)} n^{4\alpha^2 - 1}, & \alpha \in \left( \frac{1}{\sqrt{2}}, \frac{\sqrt{5} + 1}{4} \right), \\
    e^{\mathcal{O}(1)} n^{4\alpha^2 - 1} \log n, & \alpha = \frac{\sqrt{5} + 1}{4}, \\
    e^{\mathcal{O}(1)} n^{8\alpha^2 - 2\alpha - 2}, & \alpha > \frac{\sqrt{5} + 1}{4}. \end{cases}
\end{align}
\end{theorem}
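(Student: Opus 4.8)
The plan is to follow the same strategy that underlies Theorem \ref{thm:Fahs}, namely to write ${\rm MoM}_{G(n)}(m,\alpha)$ for $m\in\mathbb N$ via Fubini as in (\ref{eqn:Fubini1}), and then to analyse the resulting $m$-fold integral
\[
{\rm MoM}_{G(n)}(m,\alpha) = \frac{1}{(2\pi)^m}\int_{[0,2\pi)^m}\mathbb{E}_{U \in G(n)} \left( \prod_{j = 1}^m |p_{G(n)}(\theta_j)|^{2\alpha} \right) \d\theta_1\cdots \d\theta_m
\]
by splitting the domain of integration according to which of the $\theta_j$ are close (within distance $\sim 1/n$) to one another or to $\pm 1$. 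First I would use Theorem \ref{thm:Baik2001} to express the integrand as a sum of Toeplitz+Hankel determinants $D_n^{T+H,\kappa}(\iota)$ with $\iota(e^{i\phi}) = \prod_{j=1}^m |e^{i\phi}-e^{i\theta_j}|^{2\alpha}|e^{i\phi}-e^{-i\theta_j}|^{2\alpha}$, i.e. a symbol with $2m$ (generically) or fewer Fisher--Hartwig singularities placed symmetrically about the real axis, all with $\beta_j=0$. Then Theorem \ref{thm:T+H Claeys} gives uniform asymptotics for these determinants valid over the entire configuration space $0<\theta_1<\cdots<\theta_m<\pi$ (after reordering and folding the lower half-circle onto the upper), up to an $e^{\mathcal O(1)}$ factor; the leading-order constant $C^\pm(m,\alpha)$ must come from the subcritical regime where the integral localizes to the region where all $\theta_j$ are bounded away from each other and from $\pm 1$, where one can instead invoke the sharper Theorem \ref{thm:T+H non-uniform} of Deift--Its--Krasovsky.

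The heart of the matter is a careful "power counting" on each piece of the decomposition. Write $\mathcal D_{G(n)}(\theta_1,\dots,\theta_m)$ for the integrand. From Theorem \ref{thm:T+H Claeys}, on the bulk region (all singularities separated) one has $\mathcal D_{G(n)}\asymp \prod_j n^{\alpha^2}(\sin\theta_j)^{\pm\alpha-\alpha^2}\prod_{j<k}|e^{i\theta_j}-e^{i\theta_k}|^{-2\alpha^2}|e^{i\theta_j}-e^{-i\theta_k}|^{-2\alpha^2}$, where the sign $\pm$ is $-$ for $G(n)=Sp(2n)$ and $+$ for the orthogonal groups (this is the source of the $\mp m\alpha$ shift in the exponents and of $C^\pm$). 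Integrating this over the bulk gives $n^{m\alpha^2}$ times a convergent Selberg-type integral precisely when $\alpha$ is subcritical, and the constant is evaluated via the Selberg integral / Barnes $G$-function identities, yielding $C^\pm(m,\alpha)$; I would carry out this evaluation by first doing the $\theta$-integral using (\ref{eqn:FH even})-type substitutions, identifying it with a known closed form, and cross-checking against the normalization $\mathbb E_{G(n)}(|p_{G(n)}(\theta)|^{2\alpha})$ from Theorem \ref{thm:T+H non-uniform} with $r=0$. The competing contribution comes from the region where all $\theta_j$ cluster within $O(1/n)$ of $\theta=0$ (the eigenvalue $+1$): there the $m$ singularities merge with the trivial singularity at $+1$, the relevant exponent of $n$ from Theorem \ref{thm:T+H Claeys} becomes (schematically) $m\alpha^2 + (\text{exponent from }(\sin\theta_j+1/n)^{\pm\alpha-\alpha^2})$ summed and then integrated over a box of side $1/n$, producing a power $n^{2(m\alpha)^2\mp m\alpha - m}$; comparing $m\alpha^2$ with $2(m\alpha)^2\mp m\alpha -m$ gives exactly the quadratic threshold $\alpha=\frac{\sqrt{8m-3}\pm1}{4m-2}$, with a $\log n$ at equality. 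For $G(n)$ orthogonal and $m=2$ there is an intermediate region — two singularities merging with each other but away from $\pm1$, with a third and fourth (their conjugates) doing likewise — which by the same counting contributes $n^{4\alpha^2-1}$ and a further threshold at $\alpha=\frac{\sqrt5+1}{4}$; this is why that case has two extra phases. The cases $SO(n)$, $SO^-(n)$ versus $O(n)$ differ only through which $D_n^{T+H,\kappa}$ appears and through boundary factors $h(1),h(-1)$, which are lower-order, so the asymptotics coincide.

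The main obstacle, as in Fahs's argument, will be to make the domain decomposition rigorous: one needs upper and lower bounds on $\mathcal D_{G(n)}$ that are uniform over each dyadic-type region and that glue together without losing the sharp constant in the subcritical phase. Concretely, the delicate points are (i) controlling the $e^{\mathcal O(1)}$ factor of Theorem \ref{thm:T+H Claeys} well enough to get a genuine $(1+o(1))$ asymptotic in the subcritical case — for this one must show that the bulk region dominates and then within the bulk upgrade to Theorem \ref{thm:T+H non-uniform}, handling the "almost-merging" collar regions via a monotone majorant/minorant as Fahs does; (ii) bounding the transitional regions (some but not all $\theta_j$ merging, or merging at distance between $1/n$ and $1$ from $\pm1$) to show they are genuinely lower order away from the critical $\alpha$; and (iii) for orthogonal $m=2$, correctly bookkeeping the interplay of the diagonal singularity $\theta_1\to\theta_2$ with the anti-diagonal $\theta_1\to-\theta_2$ and with $\pm 1$, since both $(\sin|\frac{\theta_1-\theta_2}{2}|+\frac1n)$ and $(\sin|\frac{\theta_1+\theta_2}{2}|+\frac1n)$ factors degenerate simultaneously near $0$ and $\pi$. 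I expect step (i) — extracting the exact constant $C^\pm(m,\alpha)$ and proving it is attained — to be the most technically demanding, while the phase-transition thresholds themselves fall out of elementary exponent arithmetic once the decomposition is in place.
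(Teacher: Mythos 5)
Your proposal is correct and follows essentially the same route as the paper: Fubini plus the Baik--Rains identity, the uniform Toeplitz+Hankel asymptotics of Theorem \ref{thm:T+H Claeys} to reduce to an $m$-fold integral, the sharper Deift--Its--Krasovsky asymptotics on the separated region to extract the Selberg-integral constant $C^\pm(m,\alpha)$ in the subcritical phase, and power counting on the clustering regions (all $\theta_j$ within $O(1/n)$ of $+1$ for the critical/supercritical phases, and pairwise merging away from $\pm 1$ for the extra phases when $m=2$ in the orthogonal case). The delicate points you flag are exactly the ones the paper resolves via its Lemmas on $I_{H(n)}$ (Selberg-type majorants and the shell decomposition), so the plan is sound.
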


\begin{remark}
Since $\mathbb{E}_{ O(n)}(f(U)) = \mathbb{E}_{ SO(n)}(f(U))/2 + \mathbb{E}_{ SO^-(n)}(f(U))/2$ for any measurable function $f:O(n) \rightarrow \mathbb{R}$, the above result for $G(n)=O(n)$ is in fact a simple consequence of the results for $G(n)=SO(n)$ and $G(n)=SO^-(n)$, and therefore we can restrict ourselves to $G(n) \in \{SO(n), \, SO^-(n), \, Sp(2n) \}$ in what follows.
\end{remark}

\begin{remark}We will refer to the situations where $\alpha$ is small enough such that \eqref{eq:alphasubcrit} holds as the subcritical regimes or phases. For $m\neq 2$, or when both $m=2$ and $G(n)=Sp(2n)$, there is the unique critical value $\alpha=\frac{\sqrt{8m-3} + 1}{4m-2}$ for $G(n) \in \{ O(n), \, SO(n), \, SO^-(n) \}$ or $\alpha=\frac{\sqrt{8m-3} - 1}{4m-2}$ for $G(n) = Sp(2n)$, and we will refer to values of $\alpha$ larger than the critical value as the supercritical regimes. When both $m=2$ and $G(n) \in \{ O(n), \, SO(n), \, SO^-(n) \}$, then there are the two critical values $\alpha=\frac{1}{\sqrt{2}}$ and $\alpha=\frac{\sqrt{5}+1}{4}$, and we speak of the intermediate regime if $\alpha$ lies in between the two critical values, and of the supercritical regime for $\alpha>\frac{\sqrt{5}+1}{4}$.
\end{remark}

\begin{remark}For $m, \alpha \in \mathbb{N}$, our results are consistent with the results obtained by Assiotis, Bailey and Keating \cite{Assiotis2019}, and Andrade and Best \cite{Andrade2022}. 
\end{remark}

\section{Proof Strategy and Outline}
For $G(n) \in \{SO(n), \, SO^-(n), \, Sp(2n) \}$, it follows from \eqref{def:MoM} and Fubini's theorem (recall that $m\in\mathbb N$) that
\begin{align} \label{eqn:Fubini}
\begin{split}
{\rm MoM}_{G(n)}(m,\alpha) =& \int_0^{2\pi} \cdots \int_0^{2\pi} \mathbb{E}_{ G(n)} \left( \prod_{j = 1}^m |p_{G(n)}(\theta_j)|^{2\alpha} \right) \frac{\text{d}\theta_1}{2\pi} \cdots \frac{\text{d}\theta_m}{2\pi} \\
=& \int_0^{\pi} \cdots \int_0^{\pi} \mathbb{E}_{ G(n)} \left( \prod_{j = 1}^m |p_{G(n)}(\theta_j)|^{2\alpha} \right) \frac{\text{d}\theta_1}{\pi} \cdots \frac{\text{d}\theta_m}{\pi}.
\end{split}
\end{align}
For the second equality above, we used the fact that
$p_{G(n)}(-\theta) = \overline{p_{G(n)}(\theta)}$, which holds since the eigenvalues of orthogonal and symplectic matrices are $\pm 1$ or appear in complex conjugate pairs. \\

For $\theta_1,\ldots,\theta_m \in (0,\pi)$ we define the symbols  
\begin{equation}
f_m^{(\alpha)}(z) =  \prod_{j = 1}^m |z - e^{i\theta_j}|^{2\alpha} |z - e^{-i\theta_j}|^{2\alpha}.
\end{equation}
Then by the Baik-Rains identity \cite{Baik2001}[Theorem 2.2], stated in Theorem \ref{thm:Baik2001}, we see that the averages in the integrand in (\ref{eqn:Fubini}) can be expressed as determinants of Toeplitz+Hankel matrices:
\begin{align} \label{eqn:Baik-Rains}
\begin{split}
\mathbb{E}_{ SO(2n)} \left( \prod_{j = 1}^m |p_{SO(2n)}(\theta_j)|^{2\alpha} \right) =& \frac{1}{2} D_n^{T+H,1} \left( f_m^{(\alpha)} \right), \\
\mathbb{E}_{ SO^-(2n)} \left( \prod_{j = 1}^m |p_{SO^-(2n)}(\theta_j)|^{2\alpha} \right) =& D_{n-1}^{T+H,2} \left( f_m^{(\alpha)} \right) \prod_{j = 1}^m \left( 2 \sin \theta_j \right)^{2\alpha}, \\
\mathbb{E}_{ SO(2n+1)} \left( \prod_{j = 1}^m |p_{SO(2n+1)}(\theta_j)|^{2\alpha} \right) =&  D_n^{T+H,3} \left( f_m^{(\alpha)} \right) \prod_{j = 1}^m \left( 2 \sin \frac{\theta_j}{2} \right)^{2\alpha}, \\
\mathbb{E}_{ SO^-(2n+1)} \left( \prod_{j = 1}^m |p_{SO^-(2n+1)}(\theta_j)|^{2\alpha} \right) =&  D_n^{T+H,4} \left(f_m^{(\alpha)} \right) \prod_{j = 1}^m \left( 2 \cos \frac{\theta_j}{2} \right)^{2\alpha}, \\
\mathbb{E}_{ Sp(2n)} \left( \prod_{j = 1}^m |p_{Sp(2n)}(\theta_j)|^{2\alpha} \right) =& D_n^{T+H,2} \left( f_m^{(\alpha)} \right),
\end{split}
\end{align}
where we recall Definition \ref{def:T+H}, i.e. that for a function $f \in L^1(S^1)$
\begin{align} 
\begin{split}
D_n^{T+H,1}(f) &:= \det \left( f_{j-k} + f_{j+k} \right)_{j,k = 0}^{n-1}, \\
D_n^{T+H,2}(f) &:= \det \left( f_{j-k} - f_{j+k+2} \right)_{j,k = 0}^{n-1}, \\
D_n^{T+H,3}(f) &:= \det \left( f_{j-k} - f_{j+k+1} \right)_{j,k = 0}^{n-1}, \\
D_n^{T+H,4}(f) &:= \det \left( f_{j-k} + f_{j+k+1} \right)_{j,k = 0}^{n-1}, 
\end{split}
\end{align}
where $f_j$ is the $j$-th Fourier coefficient:
\begin{align}
\begin{split}
f_j &= \frac{1}{2\pi} \int_0^{2\pi} f(e^{i\theta})e^{-ij\theta}\text{d}\theta.
\end{split}
\end{align}
On the right hand side of \eqref{eqn:Baik-Rains}, the Toeplitz+Hankel determinants account for the contribution of the complex conjugate pairs of eigenvalues of $U$, while the extra factors are contributions
from the fixed eigenvalues at $\pm 1$. 

Recall that uniform asymptotics of Toeplitz+Hankel determinants, including the case when singularities are allowed to merge, were computed in \cite[Theorem 2.2]{Claeys2022}, up to an $e^{\mathcal{O}(1)}$ factor. These results are stated here in Theorem \ref{thm:T+H Claeys}, and applied to our symbols $f_m^{(\alpha)}$ imply that uniformly over the entire region $0 < \theta_1 < \cdots < \theta_m < \pi$, as $n \rightarrow \infty$,
\begin{align} \label{eqn:uniform asymptotics}
D_n^{T+H,1}(f_m^{(\alpha)}) =& e^{O(1)} n^{m\alpha^2} F_n(\theta_1,\ldots, \theta_m)\prod_{j = 1}^m  \left( 2\sin \theta_j + \frac{1}{n} \right)^{- \alpha^2 + \alpha}, \nonumber \\
D_n^{T+H,2}(f_m^{(\alpha)}) =& e^{O(1)} n^{m\alpha^2} F_n(\theta_1,\ldots, \theta_m)\prod_{j = 1}^m \left( 2\sin \theta_j + \frac{1}{n} \right)^{- \alpha^2 - \alpha} ,\\
D_n^{T+H,3}(f_m^{(\alpha)}) =& e^{O(1)} n^{m\alpha^2} F_n(\theta_1,\ldots, \theta_m)\prod_{j = 1}^m \left( 2\sin \frac{\theta_j}{2} + \frac{1}{n} \right)^{- \alpha^2 - \alpha} \left( 2\cos \frac{\theta_j}{2} + \frac{1}{n} \right)^{- \alpha^2 + \alpha} , \nonumber \\
D_n^{T+H,4}(f_m^{(\alpha)}) =& e^{O(1)} n^{m\alpha^2} F_n(\theta_1,\ldots, \theta_m)\prod_{j = 1}^m \left( 2\sin \frac{\theta_j}{2} + \frac{1}{n} \right)^{- \alpha^2 + \alpha} \left( 2\cos \frac{\theta_j}{2} + \frac{1}{n} \right)^{- \alpha^2 - \alpha} , \nonumber
\end{align}
where 
\begin{align}
F_n(\theta_1,\ldots, \theta_m) = \prod_{1 \leq j < k \leq m } \left( 2\sin \left| \frac{\theta_j - \theta_k}{2} \right| + \frac{1}{n} \right)^{-2\alpha^2} \left( 2\sin \left| \frac{\theta_j +\theta_k}{2} \right| + \frac{1}{n} \right)^{-2\alpha^2}. 
\end{align}

Let $H(n) \in \left\{ SO(2n), \, SO^-(2n), \, SO(2n+1), \, SO^-(2n+1), \, Sp(2n) \right\}$. Combining (\ref{eqn:Fubini}), (\ref{eqn:Baik-Rains}), and (\ref{eqn:uniform asymptotics}), we see that
\begin{align} \label{eqn:MoM I}
\begin{split}
    {\rm MoM}_{H(n)}(m,\alpha) =& e^{\mathcal{O}(1)} n^{m\alpha^2} I_{H(n)}(\alpha, (0,\pi)^m), 
\end{split}
\end{align}
where for a measurable subset $R \subset (0,\pi)^m$
\begin{align} \label{eqn:I}
I_{SO(2n)}(\alpha,R) :=& \int_R F_n(\theta_1,\ldots, \theta_m) \prod_{j = 1}^m \left( 2\sin \theta_j + \frac{1}{n} \right)^{- \alpha^2 + \alpha} \frac{\text{d}\theta_1}{\pi} \cdots \frac{\text{d}\theta_m}{\pi}, \nonumber \\
I_{SO^-(2n)}(\alpha,R) :=& \int_R F_n(\theta_1,\ldots, \theta_m) \prod_{j = 1}^m \left( 2\sin \theta_j + \frac{1}{n} \right)^{- \alpha^2 - \alpha} \left( 2\sin \theta_j \right)^{2\alpha} \frac{\text{d}\theta_1}{\pi} \cdots \frac{\text{d}\theta_m}{\pi}, \nonumber \\
I_{SO(2n+1)}(\alpha,R) :=& \int_R F_n(\theta_1,\ldots, \theta_m) \\
&\hspace{-1.2cm} \times \prod_{j = 1}^m \left( 2\sin \frac{\theta_j}{2} + \frac{1}{n} \right)^{- \alpha^2 - \alpha} \left( 2\cos \frac{\theta_j}{2} + \frac{1}{n} \right)^{- \alpha^2 + \alpha} \left( 2\sin \frac{\theta_j}{2} \right)^{2\alpha} \frac{\text{d}\theta_1}{\pi} \cdots \frac{\text{d}\theta_m}{\pi}, \nonumber \\
I_{SO^-(2n+1)}(\alpha,R) :=& \int_R F_n(\theta_1,\ldots, \theta_m) \nonumber \\
&\hspace{-1.2cm}\times \prod_{j = 1}^m \left( 2\sin \frac{\theta_j}{2} + \frac{1}{n} \right)^{- \alpha^2 + \alpha} \left( 2\cos \frac{\theta_j}{2} + \frac{1}{n} \right)^{- \alpha^2 - \alpha} \left( 2\cos \frac{\theta_j}{2} \right)^{2\alpha} \frac{\text{d}\theta_1}{\pi} \cdots \frac{\text{d}\theta_m}{\pi}, \nonumber \\
I_{Sp(2n)}(\alpha,R) :=& \int_R F_n(\theta_1,\ldots, \theta_m) \prod_{j = 1}^m \left( 2\sin \theta_j + \frac{1}{n} \right)^{- \alpha^2 - \alpha} \frac{\text{d}\theta_1}{\pi} \cdots \frac{\text{d}\theta_m}{\pi}. \nonumber
\end{align}
For the proofs of the subcritical regimes in Theorem \ref{thm:MoM Sp(2n)} and Theorem \ref{thm:MoM O(n)}, we will show in Section \ref{section:subcritical} that for $R = (0,\pi)^m$ the above integrals converge as $n\to\infty$ to Selberg-type integrals which can be evaluated explicitly.

In the critical, intermediate, and supercritical regimes the integrals $I_{H(n)}(\alpha,(0,\pi)^m)$ diverge, and we need to prove optimal lower and upper bounds for them, up to an $e^{\mathcal{O}(1)}$ term. \\

To obtain lower bounds we use the inequalities
\begin{align}
    \left( 2\sin \left| \frac{\theta_j \pm \theta_k}{2} \right| + \frac{1}{n} \right)^{-2\alpha^2} \geq & \left(\frac{n}{3}\right)^{2\alpha^2}, \nonumber \\ 
    \left( 2\sin \frac{\theta_j}{2} + \frac{1}{n} \right)^{-\alpha^2 \pm \alpha} \geq \left( 2\sin \theta_j + \frac{1}{n} \right)^{-\alpha^2 \pm \alpha} \geq & \left(\frac{n}{3}\right)^{\alpha^2 \mp \alpha}, \quad \text{for } -\alpha^2 \pm  \alpha < 0, \nonumber \\
    \left( 2\sin \theta_j + \frac{1}{n} \right)^{-\alpha^2 \pm \alpha} \geq \left( 2\sin \frac{\theta_j}{2} + \frac{1}{n} \right)^{-\alpha^2 \pm \alpha} \geq & n^{\alpha^2 \mp \alpha}, \quad \text{for } -\alpha^2 \pm \alpha \geq 0,  
\end{align}
valid for $0<\theta_j,\theta_k < 1/n$, and the fact that
\begin{align}
    \int_0^{1/n} (2\sin \theta_j)^{2\alpha} \text{d}\theta_j \geq \int_0^{1/n} \left( 2\sin \frac{\theta_j}{2} \right)^{2\alpha} \text{d}\theta_j \geq (2/\pi)^{2\alpha} \int_{0}^{1/n} \theta_j^{2\alpha} \text{d}\theta_j = \frac{(2/\pi)^{2\alpha}}{2\alpha + 1} n^{-2\alpha - 1},
\end{align}
to obtain the inequalities
\begin{align} \label{eqn:supercritical lower bound}
I_{H(n)}(\alpha,(0,\pi)^m)\geq 
I_{H(n)}(\alpha,(0,1/n)^m)\geq 
cn^{2m(m-1)\alpha^2 - m(1 -\alpha^2 \pm \alpha)},
\end{align}
with $\pm = +$ for $H(n)$ one of the orthogonal ensembles, and $\pm = -$ if $H(n)=Sp(2n)$.
Together with (\ref{eqn:MoM I}) this provides us with the required lower bounds in the supercritical phases in Theorems \ref{thm:MoM Sp(2n)} and \ref{thm:MoM O(n)}, except when both $m=2$ and $H(n)\in \{ SO(2n), \, SO^-(2n),$ 
$SO(2n+1), \, SO^-(2n+1)\}$. \\

Observe also that the lower bound diverges as $n \to \infty$ if and only if 
\begin{align} 
2m(m-1)\alpha^2 - m(1 -\alpha^2 \pm \alpha) > 0 \iff \alpha > \frac{\sqrt{8m-3} \pm 1}{4m-2}.
\end{align}

To prove the lower bound $c \log n$ for $I_{H(n)}(\alpha,(0,\pi)^m)$ in the critical phase $\alpha = \frac{\sqrt{8m-3} \pm 1}{4m-2}$, we define the sets 
\begin{align}
    B_n(\ell) := (0,\ell/n)^m\setminus (0,(\ell-1)/n)^m, \quad \ell = 2,\ldots,n.
\end{align}
On $B_n(\ell)$ it holds that
\begin{align}
\begin{split}
    \left( 2 \sin \left|\frac{\theta_j \pm \theta_k}{2}\right| + \frac{1}{n} \right)^{-2\alpha^2} \geq& \left( \frac{2\ell+1}{n} \right)^{-2\alpha^2}, \quad 1 \leq j < k \leq m,
\end{split}
\end{align}
and when $-\alpha^2 \pm \alpha \leq 0$, then additionally
\begin{align}
\begin{split}
    \left( 2\sin \frac{\theta_j}{2} + \frac{1}{n} \right)^{-\alpha^2 \pm \alpha} \geq &\left( 2 \sin \theta_j + \frac{1}{n} \right)^{-\alpha^2 \pm \alpha} \geq \left( \frac{2\ell+1}{n} \right)^{-\alpha^2 \pm \alpha}, \quad 1 \leq j \leq m.
\end{split}
\end{align}
Further we see that for all $\alpha > 0$
\begin{align}
\begin{split}
    &\int_{B_n(\ell)} \prod_{j = 1}^m (2 \sin \theta_j)^{2\alpha} \text{d}\theta_j \geq \int_{B_n(\ell)} \prod_{j = 1}^m \left( 2 \sin \frac{\theta_j}{2} \right)^{2\alpha} \text{d}\theta_j \geq c \int_{B_n(\ell)} \prod_{j = 1}^m \theta_j^{2\alpha} \text{d}\theta_j  \\
    =& c'n^{-2m\alpha - m} \left( \ell^{2m\alpha + m} - (\ell-1)^{2m\alpha + m} \right) \geq c'' n^{-2m\alpha - m} \ell^{2m\alpha + m - 1},
\end{split}
\end{align}
and that for $-\alpha^2 + \alpha > 0$
\begin{align} 
\begin{split}
    &\int_{B_n(\ell)} \prod_{j = 1}^m \left(2 \sin \theta_j + \frac{1}{n} \right)^{-\alpha^2 + \alpha} \text{d}\theta_j \geq \int_{B_n(\ell)} \prod_{j = 1}^m \left(2 \sin \frac{\theta_j}{2} + \frac{1}{n} \right)^{-\alpha^2 + \alpha} \text{d}\theta_j \\
    \geq & c \int_{B_n(\ell)} \prod_{j = 1}^m \theta_j^{-\alpha^2 + \alpha} \text{d}\theta_j \\
    =& c'n^{m\alpha^2 - m\alpha - m} \left( \ell^{-m\alpha^2 + m\alpha + m}  - (\ell-1)^{-m\alpha^2 + m\alpha + m}\right) \geq c''n^{m\alpha^2 - m\alpha - m} \ell^{-m\alpha^2 + m\alpha + m - 1}. 
\end{split}
\end{align}
Thus, since $B_n(\ell)$, $\ell = 2,\ldots,n$, are disjoint, and since 
\begin{equation}
\alpha = \frac{\sqrt{8m-3} - 1}{4m-2} \iff - 2m(m-1)\alpha^2 - m\alpha^2 \pm m\alpha + m = 0, 
\end{equation} 
there exist constants $c_1, c_2>0$, independent of $n$, such that
\begin{align}
\begin{split}
    I_{H(n)} (\alpha, (0,\pi)^m) \geq & c_1\sum_{\ell = 2}^{n} \left( \ell/n \right)^{-2m(m-1)\alpha^2 -m\alpha^2 \pm m\alpha + m} \ell^{-1} \\
    =& c_1 \sum_{\ell = 2}^n \ell^{-1} > c_1 \log n - c_2. 
\end{split}
\end{align}
This provides a sharp lower bound in the critical case, except when both $m=2$ and $H(n)$ is equal to one of the orthogonal ensembles.

The upper bounds in the critical and supercritical phases of Theorem \ref{thm:MoM Sp(2n)}, and Theorem \ref{thm:MoM O(n)} in the case $m \neq 2$, are more involved: they follow from (\ref{eqn:MoM I}) and the following lemma, which will be proven in Section \ref{section:critical and supercritical}.

\begin{lemma} \label{lemma:I}
Let $\alpha > 0$. As $n \rightarrow \infty$, we have the following estimates for $m \in \mathbb{N} \setminus \{2\}$ when $H(n) \in \left\{ SO(2n), \, SO^-(2n), \, SO(2n+1), \, SO^-(2n+1), \, Sp(2n) \right\}$, and for $m \in \mathbb{N}$ when $H(n)=Sp(2n)$,
    \begin{equation}
        I_{H(n)}(\alpha,(0,\pi)^m) = \begin{cases} \mathcal{O}(\log n) & \alpha = \frac{\sqrt{8m-3} \pm 1}{4m-2}, \\
        \mathcal{O}\left(n^{2m(m-1)\alpha^2 - m(1 -\alpha^2 \pm \alpha)} \right) & \alpha > \frac{\sqrt{8m-3} \pm 1}{4m-2},
        \end{cases}
    \end{equation}
    with $\pm = -$ if $H(n)=Sp(2n)$ and $\pm = +$ otherwise.
\end{lemma}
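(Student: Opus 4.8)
\textbf{Proof proposal for Lemma \ref{lemma:I}.}

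The plan is to prove the upper bound by splitting the integration region $(0,\pi)^m$ into finitely many pieces according to the relative sizes of the coordinates $\theta_j$ and of their sums $\theta_j+\theta_k$, and to estimate the contribution of each piece separately. The natural dyadic decomposition is the following: for each $j$, record whether $\theta_j$ is of order $\ell_j/n$ for some $\ell_j\in\{1,\dots,n\}$ (i.e.\ partition $(0,\pi)$ into the annuli $(\ell_j/n,(\ell_j+1)/n)$ together with the bulk region $\theta_j\gtrsim 1$), and likewise record the dyadic scale of each difference $|\theta_j-\theta_k|$ and each sum $\theta_j+\theta_k$. On each such cell, every factor $\left(2\sin|\tfrac{\theta_j\pm\theta_k}{2}|+\tfrac1n\right)^{-2\alpha^2}$, each $\left(2\sin\tfrac{\theta_j}{2}+\tfrac1n\right)^{-\alpha^2\pm\alpha}$, etc., is comparable (up to an $e^{\mathcal O(1)}$ constant) to a fixed power of the relevant scale, so the integrand is essentially a monomial in the scales, and the integral over the cell reduces to an elementary power count. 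One then sums the resulting bounds over all cells; because the decomposition has only polynomially many cells in $n$ and the bound on each is a power of $n$, the dominant contribution comes from the cell(s) maximising the exponent, and an extra $\log n$ appears exactly when infinitely many cells (a geometric-to-harmonic transition, as in the $B_n(\ell)$ computation already carried out above for the lower bound) contribute equally.

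Concretely, I would first reduce to the region where the $\theta_j$ are ordered, say $\theta_1<\dots<\theta_m$, at the cost of a factor $m!$. Then I would observe that the worst-case behaviour always occurs when all coordinates are simultaneously small, of order $1/n$: when some $\theta_j$ is bounded away from $0$, the corresponding singular factors are $\mathcal O(1)$ and one is left with a lower-dimensional integral of exactly the same shape but in fewer variables, which by induction on $m$ contributes at most the $m-1$ bound, and one checks directly that $2(m-1)(m-2)\alpha^2-(m-1)(1-\alpha^2\pm\alpha)\le 2m(m-1)\alpha^2-m(1-\alpha^2\pm\alpha)$ in the relevant range of $\alpha$, so these boundary contributions are negligible. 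This localises the problem to estimating $I_{H(n)}(\alpha,(0,c/n)^m)$ for a fixed constant $c$. On $(0,c/n)^m$ every $\sin$ is comparable to its argument, so after the substitution $\theta_j = x_j/n$ the integral becomes
\begin{align*}
n^{2m(m-1)\alpha^2 - m(1-\alpha^2\pm\alpha)}\int_{(0,c)^m} \prod_{1\le j<k\le m}\bigl(|x_j-x_k|+1\bigr)^{-2\alpha^2}\bigl(x_j+x_k+1\bigr)^{-2\alpha^2}\prod_{j=1}^m\bigl(x_j+1\bigr)^{-\alpha^2\pm\alpha} x_j^{\pm\alpha+\alpha\cdot[\ldots]}\,\mathrm dx_j,
\end{align*}
where the precise power of $x_j$ depends on which ensemble $H(n)$ is (the fixed-eigenvalue factors $(2\sin\theta_j)^{2\alpha}$ or $(2\sin\tfrac{\theta_j}{2})^{2\alpha}$ contribute $x_j^{2\alpha}$); I would extract the $\min\{\cdot,1\}$-truncations as an $e^{\mathcal O(1)}$ factor. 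The remaining $x$-integral is over a bounded region with a locally integrable singularity structure, but it is taken over the \emph{rescaled} domain $(0,c)^m$ which grows; so the genuine content is to show this integral is $\mathcal O(1)$ in the supercritical case and $\mathcal O(\log n)$ (with domain $(0,cn)^m$ before rescaling, equivalently controlling the tail $x_j\to\infty$) in the critical case. This is done exactly by the dyadic annulus count of the $B_n(\ell)$ type, now as an upper bound: the sum $\sum_\ell \ell^{\,\text{exponent}}$ converges (supercritical), behaves like $\log n$ (critical), or is dominated by its last term (the regime already excluded here).

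The step I expect to be the main obstacle is the bookkeeping of the cross-terms: when several coordinates cluster together, the pairwise factors $(|x_j-x_k|+1)^{-2\alpha^2}$ interact and one must check that a cluster of size $p$ contributes an exponent that never beats the all-coordinates-clustered configuration — this is where the quadratic $2m(m-1)\alpha^2$ in $m$ plays against the linear $m(1-\alpha^2\pm\alpha)$, and one needs the explicit inequality comparing the exponent of a size-$p$ sub-cluster (merged at a point of order $1/n$, with the other $m-p$ coordinates at macroscopic scale) against the full exponent; the worst case is $p=m$, but verifying monotonicity in $p$ for all $\alpha$ above the threshold requires a short but careful computation. A secondary nuisance is that the orthogonal ensembles carry the additional $\cos\tfrac{\theta_j}{2}$ factors and the fixed-eigenvalue $(2\sin\tfrac{\theta_j}{2})^{2\alpha}$ or $(2\cos\tfrac{\theta_j}{2})^{2\alpha}$ contributions, so one must treat the behaviour near $\theta_j=0$ and near $\theta_j=\pi$ separately; near $\pi$ the $\cos\tfrac{\theta_j}{2}$ vanishes and the $\sin(\theta_j\pm\theta_k)$ factors no longer all blow up together, so the analysis there is strictly easier and yields a subdominant exponent, but it still has to be written down. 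Everything else — the reduction to ordered coordinates, the localisation to $(0,c/n)^m$, and the final dyadic sum — is routine once the cell decomposition and the cluster inequality are in place.
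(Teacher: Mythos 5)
Your localisation to the edge region $(0,c/n)^m$ is where the argument has a genuine gap. You justify it by claiming that when some $\theta_j$ is bounded away from $0$ ``the corresponding singular factors are $\mathcal O(1)$'' and one is left with the same integral in fewer variables; but the pairwise factors $\bigl(2\sin|\tfrac{\theta_j-\theta_k}{2}|+\tfrac1n\bigr)^{-2\alpha^2}$ attached to such a $\theta_j$ are not $\mathcal O(1)$ when another coordinate lies within $\mathcal O(1/n)$ of it, and integrating one of them out costs $n^{2\alpha^2-1}$ rather than $\mathcal O(1)$ as soon as $\alpha\geq 1/\sqrt2$, which lies inside the supercritical range already for $m=3$ (and for $m=2$, $Sp(2n)$). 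Consequently, configurations in which a cluster of $p$ coordinates merges at a point bounded away from $0$ and $\pi$ contribute terms of order $n^{p(p-1)\alpha^2-(p-1)}$, up to $n^{m(m-1)\alpha^2+1-m}$ for $p=m$; these are exactly the contributions isolated in (\ref{eqn:m = 2 bulk 1}), they are not removed by your inductive reduction, and they must be compared against the claimed exponent $2m(m-1)\alpha^2-m(1-\alpha^2\pm\alpha)$ --- the comparison made just after (\ref{eqn:m = 2 bulk 2}), which is where the hypothesis $m\neq 2$ in the orthogonal cases is indispensable. Nothing in your localisation step distinguishes $m=2$ from $m\geq 3$, so carried out as written it would equally give $I_{H(n)}(\alpha,(0,\pi)^2)=\mathcal O\bigl(n^{6\alpha^2-2\alpha-2}\bigr)$ for the orthogonal ensembles, contradicting Lemma \ref{lemma:I m = 2}, which gives $e^{\mathcal O(1)}n^{2\alpha^2-1}\gg n^{6\alpha^2-2\alpha-2}$ for $\tfrac{1}{\sqrt2}<\alpha<\tfrac{\sqrt5+1}{4}$. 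Your later ``cluster inequality'' only concerns sub-clusters merged at points of order $1/n$, i.e.\ inside the already-localised region, so it does not repair this; a correct multi-scale argument has to carry the bulk-cluster cells (and mixed edge/bulk configurations) explicitly through the power count.

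A second issue, which you acknowledge but leave unresolved, is the edge region itself when $\alpha\geq 1/\sqrt m$: there the bare pairwise exponent $2\alpha^2$ exceeds the integrability threshold $2/m$, so the rescaled cluster integral does not simply converge and the dyadic annulus count needs an additional mechanism trading pairwise singularity against decay in the individual coordinates. For comparison, the paper's route is different: it substitutes $t_j=\cos\theta_j$ (handling the edges $0,\pi$ and the sum-singularities at once), strips off coordinates below scale $n^{-2}$ by explicit power counting (Lemmas \ref{lemma:I to J} and \ref{lemma:J to J l}), inverts $x_j=1/(n^2t_j)$ so that the claimed power of $n$ factors out, and bounds the remaining integral $K_n^\pm(m,\alpha,m)$ by Selberg integrals, using for $\alpha\geq 1/\sqrt m$ the inequality $\prod_{j<k}(|x_j-x_k|+x_jx_k)^{-2\alpha^2}\leq\prod_{j<k}|x_j-x_k|^{-2/m+\epsilon}\prod_j x_j^{(m-1)(-2\alpha^2+2/m-\epsilon)}$ to redistribute the excess pairwise singularity; that step is also precisely where $m\neq2$ enters for the orthogonal sign. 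Your scheme would need analogues of both ingredients before it can close.
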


\begin{remark}
    In the subcritical phase, i.e. when $\alpha < \frac{\sqrt{8m-3} \pm 1}{4m-2}$, the integral $I_{H(n)}(\alpha,(0,\pi)^m)$ converges to the (finite) Selberg-type integral $I_\infty^\pm(\alpha,(0,\pi)^m)$, defined in (\ref{eqn:I infty}) below. 
\end{remark}

The case where both $H(n)\in \left\{ SO(2n), \, SO^-(2n), \, SO(2n+1), \, SO^-(2n+1)\right\}$ and $m = 2$ remains. Then there are two additional phases. By integrating over
\begin{align}
    \left\{ (\theta_1,\ldots,\theta_m) \in (0,\pi)^m: \left|\theta_1 - \frac{\pi}{2}\right| < \frac{\pi}{4}, \, \, \max_{1 \leq j < k \leq m} |\theta_j - \theta_k| < \frac{1}{n} \right\}, 
\end{align}
we obtain the lower bound
\begin{align} \label{eqn:m = 2 bulk 1}
 I_{H(n)}(\alpha,(0,\pi)^m)&\geq    cn^{m(m-1)\alpha^2 + 1 - m} , 
\end{align}
which diverges if and only if 
\begin{align}
    \alpha^2 > 1/m \text{ and } m > 1 \iff & \alpha > \frac{1}{\sqrt{m}} \text{ and } m > 1.
\end{align}
Also, exactly as in Section 2.1.3 in \cite{Fahs2021}, one can show that for $\alpha = \frac{1}{\sqrt{m}}$ there exist constants $c_3,c_4$ such that
\begin{align} \label{eqn:m = 2 bulk 2}
   I_{H(n)}(\alpha,(0,\pi)^m)\geq c_3\log n - c_4.
\end{align}
For $m \geq 2$ in the case $\pm = -$, and $m \geq 3$ in the case $\pm = +$, it holds that 
\begin{align}
    \frac{1}{\sqrt{m}} >& \, \, \frac{\sqrt{8m-3} \pm 1}{4m-2}, \\
    m(m-1)\alpha^2 + 1 - m <& \, \, 2m(m-1)\alpha^2 - m(1 -\alpha^2 \pm \alpha), \quad \forall \, \, \alpha \geq \frac{1}{\sqrt{m}}, \nonumber
\end{align}
which implies that in those cases the lower bounds (\ref{eqn:m = 2 bulk 1}) and (\ref{eqn:m = 2 bulk 2}) are less sharp than the previously obtained ones in (\ref{eqn:supercritical lower bound}) and can thus be ignored. \\

However, when $m = 2$ in the orthogonal cases, it holds that $\frac{1}{\sqrt{m}} < \, \, \frac{\sqrt{8m-3} \pm 1}{4m-2}$, and the lower bounds (\ref{eqn:m = 2 bulk 1}) and (\ref{eqn:m = 2 bulk 2}) are optimal for $\alpha = \frac{1}{\sqrt{2}}$ and $\frac{1}{\sqrt{2}} < \alpha < \frac{\sqrt{5}+1}{4}$, respectively. For $\alpha = \frac{\sqrt{5}+1}{4}$, when both the lower bounds (\ref{eqn:supercritical lower bound}) and (\ref{eqn:m = 2 bulk 1}) diverge with the same power, it turns out that an extra $\log n$ term appears. The following lemma states the different phases of the asymptotics of $I_{H(n)}(\alpha,(0,\pi)^2)$ for $H(n) \in \{ SO(2n), \, SO(2n+1), \, SO^-(2n+1), \, SO^-(2n+1) \}$. It will be proven in Section \ref{section:m = 2}, and together with (\ref{eqn:MoM I}) implies Theorem \ref{thm:MoM O(n)} for $m = 2$ and the phases where $\alpha \geq \frac{1}{\sqrt{2}}$.
\begin{lemma} \label{lemma:I m = 2}
Let $\alpha \geq 1/\sqrt{2}$. For $m=2$ and $H(n)\in\{ SO(2n), \, SO^-(2n), \, SO^+(2n+1), \, SO^-(2n+1) \}$, as $n \rightarrow \infty$, it holds that
\begin{equation}
    I_{H(n)}(\alpha, (0,\pi)^m)= \begin{cases} e^{\mathcal{O}(1)} \log n & \alpha = \frac{1}{\sqrt{2}}, \\
    e^{\mathcal{O}(1)} n^{2\alpha^2 - 1}  & \alpha \in \left( \frac{1}{\sqrt{2}}, \frac{\sqrt{5} + 1}{4} \right), \\
    e^{\mathcal{O}(1)} n^{2\alpha^2 - 1} \log n & \alpha = \frac{\sqrt{5} + 1}{4}, \\
    e^{\mathcal{O}(1)} n^{6\alpha^2 - 2\alpha - 2}  & \alpha > \frac{\sqrt{5} + 1}{4}.
    \end{cases}
\end{equation}
\end{lemma}

\section{Proof of the subcritical phases} \label{section:subcritical}

When setting $1/n$ to zero in (\ref{eqn:I}), we obtain the integrals 
\begin{align} \label{eqn:I infty}
\begin{split}
I_\infty^\pm(\alpha,R) :=& \int_R \prod_{1 \leq j < k \leq m} \left|2\cos \theta_j - 2 \cos \theta_k \right|^{-2\alpha^2} \prod_{j = 1}^m \left|2\sin \theta_j \right|^{-\alpha^2 \pm \alpha} \frac{\text{d}\theta_1}{\pi} \cdots \frac{\text{d}\theta_m}{\pi},
\end{split}
\end{align}
with $\pm = +$ in all the orthogonal cases, and $\pm = -$ in the symplectic case. The integrals $I_\infty^\pm(\alpha,(0,\pi)^m)$ are finite in the case $m = 1$ if and only if $\alpha < \frac{\sqrt{5} \pm 1}{2}$, and in the case $m \geq 2$ they are finite if and only if $\alpha < \min \left\{ \frac{1}{\sqrt{m}}, \frac{\sqrt{8m-3} \pm 1}{4m-2} \right\}$. This follows by changing variables to $x_j = \frac{1}{2} + \frac{1}{2} \cos \theta_j$ to obtain a Selberg integral, and then using Theorem \ref{thm:Selberg} below: 
\begin{align}
& I_\infty^\pm(\alpha,(0,\pi)^m) \nonumber \\
=& \frac{4^{-\alpha^2m^2 \pm \alpha m}}{\pi^m} \int_{0}^{1} \cdots \int_{0}^{1} \prod_{1 \leq j < k \leq m} |x_j - x_k|^{-2\alpha^2} \prod_{j = 1}^m (x_j(1 - x_j))^{ \frac{1 -\alpha^2 \pm \alpha}{2} - 1} \text{d}x_1 \cdots \text{d}x_m \nonumber \\
=& \frac{4^{-\alpha^2m^2 \pm \alpha m}}{\pi^m} \prod_{j = 0}^{m-1} \frac{\Gamma(1 - \alpha^2 - j\alpha^2) \Gamma\left( \frac{1 - \alpha^2 \pm \alpha}{2} - j\alpha^2 \right)^2}{\Gamma(1 - \alpha^2) \Gamma \left( 1 \pm \alpha - \alpha^2(m + j) \right)}.
\end{align}

\begin{theorem}[Selberg, 1944 \cite{Selberg1944}] \label{thm:Selberg}
We have the identity
\begin{align}
\begin{split}
    &\int_{0}^1 \cdots \int_{0}^1 \prod_{1 \leq j < k \leq m} |x_j - x_k|^{2c} \prod_{j = 1}^m \left(1 - x_j \right)^{a - 1} x_j^{b - 1} \mathrm{d}x_1 \cdots \mathrm{d}x_m \\=& \prod_{j = 0}^{m-1} \frac{\Gamma(1 + c + jc) \Gamma\left( a + jc \right) \Gamma\left( b + jc \right)}{\Gamma(1 + c) \Gamma \left( a + b + c(m + j -1)\right)},
\end{split}
\end{align}
where either side is finite if and only if $\Re a, \Re b > 0$, $\Re c > - \min \{1/m, \Re a/(m-1), \Re b/(m-1) \}$.
\end{theorem}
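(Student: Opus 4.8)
The plan is to prove Theorem~\ref{thm:Selberg} by induction on the dimension $m$, using Aomoto's integration-by-parts recursion to reduce everything to a one-parameter difference equation that pins down the Gamma-function ratio. Write
\[
J_m(a,b,c) := \int_{[0,1]^m} \prod_{1 \leq j < k \leq m} |x_j - x_k|^{2c} \prod_{j=1}^m (1-x_j)^{a-1} x_j^{b-1} \, \mathrm{d}x_1 \cdots \mathrm{d}x_m ,
\]
and let $S(a,b,c)$ denote the claimed product on the right-hand side. The base case $m=1$ is the Euler Beta integral $J_1(a,b,c)=B(a,b)=\Gamma(a)\Gamma(b)/\Gamma(a+b)$, which matches $S(a,b,c)$.

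First I would make a preliminary reduction. On the open set where $\Re a,\Re b>0$ and $\Re c>-\min\{1/m,\Re a/(m-1),\Re b/(m-1)\}$, the integrand is jointly holomorphic in $(a,b,c)$ and dominated locally uniformly by an integrable function, so $J_m$ is holomorphic there; the same is true of $S$. Hence it suffices to verify $J_m=S$ on a subset with an accumulation point in this domain --- for instance for $c\geq 0$ and $a,b$ large and real, where all the boundary terms arising below manifestly vanish --- and then invoke analytic continuation. The convergence statement of the theorem is a separate elementary estimate: dyadically decomposing $[0,1]^m$, near a coalescence $x_j\to x_k$ the singularity is $|x_j-x_k|^{2\Re c}$, and near $x_j\to 0$ or $1$ the factor $x_j^{\Re b-1}(1-x_j)^{\Re a-1}$ combines with the $m-1$ coalescence factors attached to $x_j$; summing the resulting geometric series over dyadic scales reproduces exactly the three stated thresholds.

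The core of the argument is the recursion. Following Aomoto, introduce the partial integrals
\[
A_k := \int_{[0,1]^m} \Big(\prod_{i=1}^k x_i\Big) \prod_{1 \leq j < \ell \leq m} |x_j - x_\ell|^{2c} \prod_{j=1}^m (1-x_j)^{a-1} x_j^{b-1} \, \mathrm{d}x_1 \cdots \mathrm{d}x_m , \qquad k=0,1,\dots,m,
\]
so $A_0=J_m(a,b,c)$. Integrating an appropriate total-derivative identity of the form $\int_{[0,1]^m}\partial_{x_k}\big[(1-x_k)\cdot(\text{integrand of }A_{k})\big]=0$ --- legitimate in the reduced parameter range, since the boundary contributions vanish there --- and using $\partial_{x_k}\log|x_j-x_k|^{2c}=2c/(x_k-x_j)$ together with the symmetry of $A_k$ under permuting $x_1,\dots,x_k$ among themselves and $x_{k+1},\dots,x_m$ among themselves, one obtains a linear relation expressing $A_k$ through $A_{k-1}$ with coefficients that are explicit rational functions of $a,b,c,m,k$. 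Telescoping these relations evaluates the ratio $A_m/A_0$ as a product of Gamma quotients. Separately, the substitution $x_j\mapsto 1-x_j$ turns $A_m$ into a shifted version of $J_m$ with $a$ and $b$ interchanged, yielding a functional equation that, combined with $A_m/A_0$, determines $J_m(a,b,c)/J_m(a,b+1,c)$; iterating this in $b$ and matching against the $b\to\infty$ asymptotics (where $J_m$ degenerates to a product of one-dimensional Laplace-type integrals with computable leading order) fixes $J_m$ completely, and the result is seen to agree with $S$.

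An alternative route I would keep in reserve is Anderson's proof, which realizes the Selberg density as the joint law of an ordered tuple built from independent Gamma variables and exploits an exact ``splitting'' identity decomposing an $(m+1)$-point configuration into an independent $m$-point configuration times a Beta-distributed scalar, giving $J_{m+1}=J_m\cdot(\text{Beta factor})$ directly. In either approach the main obstacle is bookkeeping discipline: in Aomoto's method one must carefully justify the vanishing of boundary terms --- exactly where the convergence hypotheses enter, and the reason for first reducing to large $a,b$ --- and then manage the telescoping product of Gamma ratios without index or sign errors; in Anderson's method the obstacle is identifying the correct change of variables and its Jacobian so that the decomposition is exact. I would favour Aomoto's method as the more mechanical and self-contained of the two.
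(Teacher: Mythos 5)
A preliminary remark: the paper does not prove this statement at all — it quotes Selberg's 1944 evaluation as a classical black box (citing \cite{Selberg1944}) and uses it to evaluate the limiting integrals $I_\infty^\pm$ and to bound the Selberg-type integrals appearing in Lemmas \ref{lemma:I infty bound} and \ref{lemma:K}. So there is no proof in the paper to compare against, and your proposal has to stand on its own as a proof of the classical theorem.

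As written it has a genuine gap at the decisive normalization step. Aomoto's recursion does give you the ratio $A_m/A_0=J_m(a,b+1,c)/J_m(a,b,c)$ (note that in the paper's convention $A_m$ \emph{is} $J_m(a,b+1,c)$, no substitution $x\mapsto 1-x$ needed), and hence a first-order difference equation in $b$. But a difference equation determines $J_m$ only up to a $1$-periodic factor, and your proposed way of killing that factor — ``matching against the $b\to\infty$ asymptotics, where $J_m$ degenerates to a product of one-dimensional Laplace-type integrals with computable leading order'' — is false. Setting $x_j=1-u_j/b$ shows $|x_j-x_k|^{2c}=b^{-2c}|u_j-u_k|^{2c}$, so the pair interactions survive the rescaling and
\begin{align}
J_m(a,b,c)\sim b^{-ma-cm(m-1)}\int_{[0,\infty)^m}\prod_{1\le j<k\le m}|u_j-u_k|^{2c}\prod_{j=1}^m u_j^{a-1}e^{-u_j}\,\mathrm{d}u_j ,
\end{align}
i.e. the limit is the Laguerre (Mehta-type) Selberg integral, whose evaluation is exactly as hard as the theorem you are proving; using its value to fix the constant is circular, while using only the \emph{existence} of the limit (together with the symmetric recursion in $a$) determines $J_m$ only up to an unknown factor $D(c,m)$ — and that factor, $\prod_{j=1}^m\Gamma(1+jc)/\Gamma(1+c)$, is precisely the nontrivial content. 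To close the argument you need an additional ingredient that your sketch does not contain: for instance a degeneration in $m$ (e.g. $\lim_{b\to0^+}b\,J_m(a,b,c)=m\,J_{m-1}(a,b+2c,c)$-type identities giving an induction on $m$), a Bohr--Mollerup-style log-convexity uniqueness argument for the difference equation, Anderson's splitting (which you mention only as a reserve), or Selberg's own route of proving the integer-$c$ case and extending by Carlson's theorem. Minor further points: your convergence discussion should also treat configurations where several variables coalesce at an endpoint simultaneously (this is where the thresholds $\Re c>-\Re a/(m-1)$, $-\Re b/(m-1)$ come from, not from pairwise coalescence alone), and the "if and only if" direction (divergence outside the stated range) needs its own lower-bound estimate.
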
    

Thus to prove the subcritical phase in Theorems \ref{thm:MoM Sp(2n)} and \ref{thm:MoM O(n)}, in view of \eqref{eqn:C pm}, we need to show that for any $\alpha$ in the subcritical phase, and for any given $\delta > 0$, there is an integer $N$ such that for all $n > N$ 
\begin{align}
\begin{split}
\left| {\rm MoM}_{H(n)}(m,\alpha) - (2n)^{m\alpha^2} \frac{G(1+\alpha)^{2m}}{G(1+2\alpha)^m} I_\infty^\pm(\alpha,(0,\pi)^m) \right| <& \delta n^{m\alpha^2}.
\end{split}
\end{align}
To prove this we need the following lemma, which will be proven in Section \ref{section:critical and supercritical}:
\begin{lemma} \label{lemma:I infty bound}
Let $m\in\mathbb N$, $\alpha<\min \left\{ \frac{1}{\sqrt{m}}, \frac{\sqrt{8m-3} \pm 1}{4m-2} \right\}$, and \[H(n) \in \left\{ SO(2n), \, SO^-(2n), \, SO(2n+1), \, SO^-(2n+1), \, Sp(2n) \right\}.\]
There exists an $N \in \mathbb{N}$ and a constant $C > 0$, such that for all $n > N$, and any subset $R \subset (0,\pi)^m$ which is symmetric under permutation of the variables and symmetric around $\pi/2$ in each variable, it holds that 
\begin{align}
    I_{H(n)}(\alpha,R) \leq C I_\infty^{\pm}(\alpha,R),
\end{align}
where $\pm=-$ if $H(n)=Sp(2n)$ and $\pm=+$ otherwise.
\end{lemma}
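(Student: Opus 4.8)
The goal is to prove Lemma \ref{lemma:I infty bound}: for $\alpha$ in the subcritical regime and any permutation-symmetric, $\pi/2$-symmetric subset $R\subset(0,\pi)^m$, the finite-$n$ integral $I_{H(n)}(\alpha,R)$ is bounded by a constant times the limiting Selberg-type integral $I_\infty^{\pm}(\alpha,R)$, uniformly for large $n$.

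\medskip

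\textbf{Approach.} The plan is to compare the integrand of $I_{H(n)}(\alpha,R)$ pointwise with that of $I_\infty^{\pm}(\alpha,R)$, using the elementary inequality that for any exponent $\gamma>0$ and any $0\le x\le \pi$ one has $x+\tfrac1n\le x(1+\tfrac1{nx})$ type estimates, but the truly uniform statement needs a case split according to the sign of each exponent. First I would record the three building-block exponents that appear: $-2\alpha^2$ on the pairwise factors $2\sin|\tfrac{\theta_j\pm\theta_k}{2}|+\tfrac1n$ (always negative), and $-\alpha^2\pm\alpha$ (respectively $-\alpha^2\mp\alpha+2\alpha$ after absorbing the trivial-eigenvalue factors $(2\sin\tfrac{\theta_j}{2})^{2\alpha}$ etc.) on the single-variable factors, whose sign depends on whether $\alpha\lessgtr 1$. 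When the exponent $\gamma$ is negative, $(a+\tfrac1n)^{\gamma}\le a^{\gamma}$ pointwise for $a>0$, so those factors are immediately dominated by the corresponding factor in $I_\infty^{\pm}$. When $\gamma\ge 0$, $(a+\tfrac1n)^{\gamma}$ is \emph{not} bounded by $a^\gamma$, but it is bounded by $(a+1)^\gamma\le 2^\gamma\max(a^\gamma,1)$, i.e. by a constant times $a^\gamma$ plus a constant, and since $R\subset(0,\pi)^m$ is bounded and the remaining negative-exponent factors are integrable (this is exactly the subcriticality hypothesis, which guarantees $I_\infty^\pm(\alpha,(0,\pi)^m)<\infty$ via Theorem \ref{thm:Selberg}), such bounded factors contribute only an overall constant. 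I would also use elementary trigonometric comparisons $2\sin|\tfrac{\theta_j\pm\theta_k}{2}|$ versus $|2\cos\theta_j-2\cos\theta_k|$ and $2\sin\tfrac{\theta_j}{2},2\cos\tfrac{\theta_j}{2}$ versus $2\sin\theta_j$: using $2\cos\theta_j-2\cos\theta_k = -4\sin\tfrac{\theta_j+\theta_k}{2}\sin\tfrac{\theta_j-\theta_k}{2}$ and $2\sin\theta_j = 4\sin\tfrac{\theta_j}{2}\cos\tfrac{\theta_j}{2}$, these are exact factorizations, so the passage between the different single-variable weights in $I_{SO(2n+1)}$, $I_{SO^-(2n+1)}$, $I_{SO^-(2n)}$, $I_{Sp(2n)}$ and the common form in $I_\infty^{\pm}$ is just bookkeeping.

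\medskip

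\textbf{Key steps, in order.} (1) Reduce all five cases to a uniform template by absorbing the explicit trivial-eigenvalue factors; show that after this absorption the single-variable exponent is $-\alpha^2+\alpha$ for the orthogonal families and $-\alpha^2-\alpha$ for $Sp(2n)$, matching $I_\infty^{\pm}$, modulo bounded and bounded-below trigonometric ratios (here the $\pi/2$-symmetry of $R$ and symmetry in the variables are used to freely relabel and to avoid boundary pathologies near $\theta_j=0,\pi$). (2) Split the integrand into a product of factors with $\gamma<0$ and factors with $\gamma\ge0$. For the $\gamma<0$ factors, apply $(a+\tfrac1n)^\gamma\le a^\gamma$ pointwise. (3) For the finitely many $\gamma\ge0$ factors (which occur only when $\alpha\ge1$, a regime still possibly inside subcriticality when $m$ is small), bound $(a+\tfrac1n)^\gamma$ by a constant uniformly on the bounded domain, i.e. $(a+\tfrac1n)^\gamma\le(\pi+1)^\gamma$. (4) Conclude $I_{H(n)}(\alpha,R)\le C\,I_\infty^{\pm}(\alpha,R)$ with $C$ depending only on $m,\alpha$; choose $N$ so that (e.g.) $\tfrac1n\le 1$ for $n>N$, which is all that the constant bound in step (3) actually requires, giving uniformity in $n$.

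\medskip

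\textbf{Main obstacle.} The only genuinely delicate point is the handling of the nonnegative single-variable exponent $-\alpha^2+\alpha\ge0$ (case $\alpha\le1$) in the orthogonal ensembles, or $-\alpha^2-\alpha$ which is always negative so actually harmless --- wait, more precisely the subtlety is that one must be sure the $\gamma\ge0$ factors can be dropped for a \emph{constant} rather than absorbed into the $\gamma<0$ ones, and that after dropping them the surviving integrand is still exactly (a bounded multiple of) the $I_\infty^{\pm}$ integrand so that the integrability hypothesis $\alpha<\min\{1/\sqrt m,(\sqrt{8m-3}\pm1)/(4m-2)\}$ can be invoked. This is not hard but requires care: one cannot simply drop a factor $(2\sin\theta_j+\tfrac1n)^{-\alpha^2+\alpha}$ and replace it by $1$ if $-\alpha^2+\alpha<0$; in that sub-case it is a $\gamma<0$ factor and must be kept and compared to $(2\sin\theta_j)^{-\alpha^2+\alpha}$. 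So the proof really is a clean two-way (or at worst four-way) case analysis on $\mathrm{sgn}(-\alpha^2\pm\alpha)$, and I would present it as such, invoking Theorem \ref{thm:Selberg} and the subcriticality bound to guarantee the dominating integral $I_\infty^{\pm}(\alpha,(0,\pi)^m)$, hence $I_\infty^{\pm}(\alpha,R)$, is finite, so that the claimed inequality is meaningful.
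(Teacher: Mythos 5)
Your strategy of pointwise domination breaks down exactly at the point you flagged as the "main obstacle", and the sign analysis there is inverted. The single-variable exponent $-\alpha^2+\alpha$ is nonnegative precisely when $\alpha\le 1$, so throughout the subcritical regime (where $\alpha<1/\sqrt m\le 1$, and even for $m=1$ one has $\alpha<1$ in the orthogonal case) the regularized factors $\left(2\sin\theta_j+\tfrac1n\right)^{\alpha-\alpha^2}$, $\left(2\cos\tfrac{\theta_j}2+\tfrac1n\right)^{\alpha-\alpha^2}$, $\left(2\sin\tfrac{\theta_j}2+\tfrac1n\right)^{\alpha-\alpha^2}$ appearing in $I_{SO(2n)}$, $I_{SO(2n+1)}$, $I_{SO^-(2n+1)}$ \emph{always} carry a positive exponent; this is the generic case, not an "$\alpha\ge 1$" exception (which in fact never lies in the subcritical range). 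For these factors your step (3) — bounding them by a constant — cannot give the claimed inequality: the target integrand of $I_\infty^{+}(\alpha,R)$ contains $(2\sin\theta_j)^{\alpha-\alpha^2}$, which \emph{vanishes} at $\theta_j\in\{0,\pi\}$, and $R$ is allowed to concentrate exactly there. Concretely, for $m=1$, $H(n)=SO(2n)$ and $R=(0,\delta)\cup(\pi-\delta,\pi)$ with $\delta\ll 1/n$, one has $I_{SO(2n)}(\alpha,R)\asymp \delta\, n^{-(\alpha-\alpha^2)}$ while $I_\infty^{+}(\alpha,R)\asymp \delta^{1+\alpha-\alpha^2}$, so the ratio is of order $(n\delta)^{-(\alpha-\alpha^2)}$ and is unbounded; no pointwise comparison of integrands (with any constant, even $n$-dependent) holds in the $O(1/n)$-neighbourhoods of the endpoints, and discarding the regularized factor for a constant only makes this worse. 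Your argument does work verbatim, with $C=1$, for $Sp(2n)$ and $SO^-(2n)$, where every $1/n$-regularized factor has a negative exponent and the exact factor $(2\sin\theta_j)^{2\alpha}$ can be absorbed; the gap is in the remaining three ensembles.

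The paper's proof is genuinely different at this point and supplies the missing ingredient: after the algebraic reduction \eqref{eqn:I pm} and the substitution $t_j=\cos\theta_j$, symmetry is used to pass to $[0,1)^m$ (Lemma \ref{lemma:I to J}), and the cube is then split according to which variables lie within $1/n^2$ of the endpoint (Lemma \ref{lemma:J to J l}); the edge variables are integrated out, producing explicit powers of $n$, and subcriticality is invoked not merely through the finiteness of $I_\infty^{\pm}$ but to make those powers nonpositive, the surviving $\ell=m$ term being controlled by the Selberg integral $J_\infty^\pm(m,\alpha,m)=2^m I_\infty^\pm(\alpha,(0,\pi)^m)$ via Theorem \ref{thm:Selberg}. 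In other words, the $1/n$-neighbourhoods of $\theta_j\in\{0,\pi\}$ must be handled by integration rather than by pointwise estimates; if you want to keep your approach, restrict the pointwise bounds to the region where, say, $\sin\theta_j\ge 1/n$ and treat the edge regions by an explicit computation of this type. Note finally that the example above shows the uniformity over \emph{arbitrary} symmetric $R$ is delicate for the orthogonal families (the paper proves the bound for the full cube and extends it "mutatis mutandis", and applies it only to sets whose edge contribution is controlled), so a purely pointwise route to an $R$-uniform constant was bound to fail.
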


For $\eta > 0$ we divide $(0,\pi)^m$ into two regions $R_1(\eta)$ and $R_2(\eta)$, where $R_1(\eta)$ is the region where $ \min \{ | 2\sin \frac{\theta_j - \theta_k}{2} |, | 2\sin \frac{\theta_j + \theta_k}{2} |, |2\sin \theta_j| \} > \eta $ for all $j \neq k$, and $R_2(\eta) = (0,\pi)^m \setminus R_1(\eta)$. Then by (\ref{eqn:Baik-Rains}), (\ref{eqn:uniform asymptotics}), (\ref{eqn:I}) and Lemma \ref{lemma:I infty bound} it follows that 
\begin{align}
\begin{split}
\int_{R_2(\eta)} \mathbb{E}_{ H(n)} \left( \prod_{j = 1}^m |p_{H(n)}(\theta_j)|^{2\alpha} \right) \frac{\text{d}\theta_1}{\pi} \cdots \frac{\text{d}\theta_m}{\pi} =& e^{\mathcal{O}(1)} n^{m\alpha^2} I_{H(n)}(\alpha,R_2(\eta)) \\
=& \mathcal O \left( n^{m\alpha^2} I^\pm_\infty(\alpha, R_2(\eta)) \right), 
\end{split}
\end{align}
as $n \rightarrow \infty$, uniformly for $0 < \eta < \pi$, with $\pm = +$ in the orthogonal cases, and $\pm = -$ in the symplectic case. Since $I_\infty^\pm(\alpha,R_2(\eta)) \rightarrow 0$ as $\eta \rightarrow 0$, it follows that for any $\delta > 0$ we can fix an $\eta_0 > 0$ and an $N_0 \in \mathbb{N}$ such that 
\begin{align} \label{eqn:int R2}
\begin{split}
\int_{R_2(\eta)} \mathbb{E}_{ H(n)} \left( \prod_{j = 1}^m |p_{H(n)}(\theta_j)|^{2\alpha} \right) \frac{\text{d}\theta_1}{\pi} \cdots \frac{\text{d}\theta_m}{\pi} < \delta n^{m\alpha^2}/2,
\end{split}
\end{align}
for all $n \geq N_0$ and $\eta < \eta_0$. \\

We now evaluate the integral of $\mathbb{E}_{ H(n)} \left( \prod_{j = 1}^m |p_{H(n)}(\theta_j)|^{2\alpha} \right)$ over $R_1(\eta)$. When all singularities $e^{i\theta_j}$, $j = 1,...,m$, of $f_m^{(\alpha)}$ are bounded away from each other and from $\pm 1$, then Theorem 1.25 in \cite{Deift2011} gives the asymptotics, including the leading order coefficient, of $D_n^{T+H,\kappa} \left( f_m^{(\alpha)} \right)$, $\kappa = 1,2,3,4$. As on $R_1(\eta)$ all singularities are bounded away from each other and from $\pm 1$, substituting those asymptotics into (\ref{eqn:Baik-Rains}) implies that
\begin{align} \label{eqn:pointwise asymptotics}
\begin{split}
&\mathbb{E}_{ H(n)} \left( \prod_{j = 1}^m |p_{H(n)}(\theta_j)|^{2\alpha} \right) = (1+o(1)) (2n)^{m\alpha^2} \frac{G(1+\alpha)^{2m}}{G(1+2\alpha)^m} \\
&\times \prod_{1 \leq j < k \leq m} \left| 2 \cos \theta_j  - 2 \cos \theta_k \right|^{-2\alpha^2} \prod_{j = 1}^m \left(2 \sin \theta_j \right)^{- \alpha^2 \pm \alpha},
\end{split}
\end{align}
as $n \rightarrow \infty$, uniformly for $(\theta_1,...,\theta_m) \in R_1(\eta)$. Combining (\ref{eqn:I infty}) and (\ref{eqn:pointwise asymptotics}) we see that
\begin{align} 
\begin{split}
&\int_{R_1(\eta)} \mathbb{E}_{ H(n)} \left( \prod_{j = 1}^m |p_{H(n)}(\theta_j)|^{2\alpha} \right) \frac{\text{d}\theta_1}{\pi} \cdots \frac{\text{d}\theta_m}{\pi} \\
=&(1+o(1)) (2n)^{m\alpha^2} \frac{G(1+\alpha)^{2m}}{G(1+2\alpha)^m} I_\infty^\pm(\alpha,R_1(\eta)) \\
=& (1+o(1))(2n)^{m\alpha^2} \frac{G(1+\alpha)^{2m}}{G(1+2\alpha)^m} \left( I_\infty^\pm(\alpha,(0,\pi)^m) - I_\infty^\pm(\alpha,R_2(\eta)) \right),
\end{split}
\end{align}
where the $o(1)$ term tends to zero for any fixed $\eta > 0$, as $n \rightarrow \infty$. This implies that 
\begin{align}
&\left| {\rm MoM}_{H(n)}(m,\alpha) - (2n)^{m\alpha^2} \frac{G(1+\alpha)^{2m}}{G(1+2\alpha)^m} I_\infty^\pm(\alpha,(0,\pi)^m) \right| \nonumber \\
=&\left| \int_{(0,\pi)^m} \mathbb{E}_{ H(n)} \left( \prod_{j = 1}^m |p_{H(n)}(\theta_j)|^{2\alpha} \right) \frac{\text{d}\theta_1}{\pi} \cdots \frac{\text{d}\theta_m}{\pi} - (2n)^{m\alpha^2} \frac{G(1+\alpha)^{2m}}{G(1+2\alpha)^m} I_\infty^\pm(\alpha,(0,\pi)^m) \right| \nonumber\\
\leq &  \left| \int_{R_1(\eta)} \mathbb{E}_{ H(n)} \left( \prod_{j = 1}^m |p_{H(n)}(\theta_j)|^{2\alpha} \right) \frac{\text{d}\theta_1}{\pi} \cdots \frac{\text{d}\theta_m}{\pi} - (2n)^{m\alpha^2} \frac{G(1+\alpha)^{2m}}{G(1+2\alpha)^m} I_\infty^\pm(\alpha,(0,\pi)^m) \right| \nonumber\\
&+\int_{R_2(\eta)} \mathbb{E}_{ H(n)} \left( \prod_{j = 1}^m |p_{H(n)}(\theta_j)|^{2\alpha} \right) \frac{\text{d}\theta_1}{\pi} \cdots \frac{\text{d}\theta_m}{\pi} \\
\leq & (2n^{m\alpha^2}) \frac{G(1+\alpha)^{2m}}{G(1+2\alpha)^m} \left( o(1) I_\infty^\pm(\alpha,(0,\pi)^m) + (1 + o(1) I_\infty^\pm(\alpha,R_2(\eta)) \right) \nonumber \\
&+\int_{R_2(\eta)} \mathbb{E}_{ H(n)} \left( \prod_{j = 1}^m |p_{H(n)}(\theta_j)|^{2\alpha} \right) \frac{\text{d}\theta_1}{\pi} \cdots \frac{\text{d}\theta_m}{\pi}.\nonumber
\end{align}
Combining this with (\ref{eqn:int R2}), the fact that $I_\infty^\pm(\alpha,R_2(\eta)) \rightarrow 0$ when $\eta \rightarrow 0$, and the fact that $o(1) \rightarrow 0$ as $n \rightarrow \infty$ for any fixed $\eta > 0$, we see that for any given $\delta > 0$ we can fix an $\eta < \eta_0$ and an $N \geq N_0$, such that 
\begin{align}
    &\left| {\rm MoM}_{H(n)}(m,\alpha) - (2n)^{m\alpha^2} \frac{G(1+\alpha)^{2m}}{G(1+2\alpha)^m} I_\infty^\pm(\alpha,(0,\pi)^m) \right| < \delta n^{m\alpha^2},
\end{align}
for all $n \geq N$. This finishes the proof of the subcritical phases in Theorems \ref{thm:MoM Sp(2n)} and \ref{thm:MoM O(n)}.

\section{Proof of Lemma \ref{lemma:I} and Lemma \ref{lemma:I infty bound}} \label{section:critical and supercritical}

We first see, with $\pm = +$ in all the orthogonal cases, and $\pm = -$ in the symplectic case, that
\begin{align} \label{eqn:I pm}
& I_{H(n)}(\alpha, R) \leq \int_{R} \prod_{1 \leq j < k \leq m} \left( \left|2\sin \frac{\theta_j - \theta_k}{2} \right| + \frac{1}{n} \right)^{-2\alpha^2} \left( \left| 2\sin \frac{\theta_j + \theta_k}{2} \right| + \frac{1}{n} \right)^{-2\alpha^2} \nonumber \\
&\times \prod_{j = 1}^m \left( \left| 2\sin \theta_j \right| + \frac{1}{n} \right)^{-\alpha^2 \pm \alpha} \frac{\text{d}\theta_1}{\pi} \cdots \frac{\text{d}\theta_m}{\pi} \nonumber \\
=& \int_{R} \prod_{1 \leq j < k \leq m} \left( 2|\cos \theta_j - \cos \theta_k| + \frac{1}{n} \left|2\sin \frac{\theta_j - \theta_k}{2} \right| + \frac{1}{n} \left| 2\sin \frac{\theta_j + \theta_k}{2} \right| + \frac{1}{n^2} \right)^{-2\alpha^2} \nonumber \\
&\times \prod_{j = 1}^m \left( \left| 2\sin \theta_j \right| + \frac{1}{n} \right)^{-\alpha^2 \pm \alpha} \frac{\text{d}\theta_1}{\pi} \cdots \frac{\text{d}\theta_m}{\pi} \\
\leq & C \int_{R} \prod_{1 \leq j < k \leq m} \left( |\cos \theta_j - \cos \theta_k| + \frac{1}{n^2} \right)^{-2\alpha^2} \prod_{j = 1}^m \left( |\sin \theta_j| + \frac{1}{n} \right)^{- \alpha^2 \pm \alpha} \text{d}\theta_1 \cdots \text{d}\theta_m, \nonumber
\end{align}
for a constant $C$ which is independent of $n$. Making the variable transformation $\cos \theta_j = t_j$ it follows that
\begin{align} \label{eqn:I hat}
    I_{H(n)}(\alpha, (0,\pi)^m) \leq C I_n^\pm(m,\alpha),
\end{align}
where
\begin{equation}
I_n^\pm(m,\alpha) 
    := \int_{(-1,1)^m} \prod_{1 \leq j < k \leq m} \left( |t_j - t_k| + \frac{1}{n^2} \right)^{-2\alpha^2} \prod_{j = 1}^m \left( \sqrt{1 - t_j^2} + \frac{1}{n} \right)^{- \alpha^2 \pm \alpha} \frac{\text{d}t_j}{\sqrt{1 - t_j^2}} .
\end{equation} 

We bound $I_n^\pm(m,\alpha)$ by the following simpler integral:
\begin{lemma} \label{lemma:I to J}
Let $\alpha>0$ and $m\in\mathbb N$. There exists $C>0$ such that for all $n \in \mathbb{N}$ it holds that
\begin{align}
    I_n^\pm(m,\alpha) \leq C J_n^\pm(m,\alpha),
\end{align}
where 
\begin{align}
    J_n^\pm(m,\alpha) = \int_{[0,1)^{m}} \prod_{1 \leq j < k \leq m} \left( |t_j - t_k| + \frac{1}{n^2} \right)^{-2\alpha^2} \prod_{j = 1}^m \left( \sqrt{t_j} + \frac{1}{n} \right)^{- \alpha^2 \pm \alpha} \frac{\text{d}t_j}{\sqrt{t_j}}.
\end{align}
\end{lemma}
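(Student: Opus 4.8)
\textbf{Proof plan for Lemma \ref{lemma:I to J}.}

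The plan is to reduce the integral $I_n^\pm(m,\alpha)$ over $(-1,1)^m$ to the integral $J_n^\pm(m,\alpha)$ over $[0,1)^m$ by exploiting the symmetry of the integrand under $t_j \mapsto -t_j$, and then bounding the arising ``mixed-sign'' integrals by the all-positive one. First I would observe that the factor $\left(\sqrt{1-t_j^2}+\tfrac1n\right)^{-\alpha^2\pm\alpha}(1-t_j^2)^{-1/2}$ depends on $t_j$ only through $t_j^2$, hence is invariant under $t_j\mapsto -t_j$; the only part of the integrand that is not manifestly sign-symmetric is the interaction product $\prod_{j<k}\left(|t_j-t_k|+\tfrac1{n^2}\right)^{-2\alpha^2}$. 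Splitting $(-1,1)^m = \bigcup_{S\subseteq\{1,\dots,m\}}\{t: t_j\geq 0 \text{ for } j\in S,\ t_j<0 \text{ for } j\notin S\}$ and substituting $t_j\mapsto -t_j$ for $j\notin S$ on each piece, every piece becomes an integral over $[0,1)^m$ of the same boundary/weight factors times $\prod_{j<k}\left(|\varepsilon_j t_j - \varepsilon_k t_k|+\tfrac1{n^2}\right)^{-2\alpha^2}$ for suitable signs $\varepsilon$. Since there are only $2^m$ such pieces (a number independent of $n$), it suffices to bound each one by a constant multiple of $J_n^\pm(m,\alpha)$.

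The key pointwise estimate is then: for $t_j,t_k\in[0,1)$ and any signs, $|\varepsilon_j t_j - \varepsilon_k t_k| \geq |t_j - t_k|$ whenever $\varepsilon_j=\varepsilon_k$, and $|\varepsilon_j t_j - \varepsilon_k t_k| = t_j+t_k \geq |t_j-t_k|$ whenever $\varepsilon_j\neq\varepsilon_k$; in all cases $|\varepsilon_j t_j - \varepsilon_k t_k|+\tfrac1{n^2} \geq |t_j-t_k|+\tfrac1{n^2}$. Because $-2\alpha^2 < 0$, raising to this negative power reverses the inequality, so each factor of the mixed-sign interaction product is bounded above by the corresponding factor $\left(|t_j-t_k|+\tfrac1{n^2}\right)^{-2\alpha^2}$. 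Multiplying these over all pairs $j<k$ and integrating, each of the $2^m$ pieces is at most $J_n^\pm(m,\alpha)$, hence $I_n^\pm(m,\alpha) \leq 2^m J_n^\pm(m,\alpha)$, which gives the claim with $C=2^m$. (One should also note the boundary factor bound $\left(\sqrt{1-t_j^2}+\tfrac1n\right)^{-\alpha^2\pm\alpha}(1-t_j^2)^{-1/2}$ is, after the fold, being compared to $\left(\sqrt{t_j}+\tfrac1n\right)^{-\alpha^2\pm\alpha}t_j^{-1/2}$; near $t_j=0$ these are comparable up to constants since $\sqrt{1-t_j^2}$ is bounded below by a positive constant there, and near $t_j=1$ the original weight $(1-t_j^2)^{-1/2}$ is what produces the $\sqrt{t_j}$-type singularity after folding, so a further change of variables $t_j\mapsto 1-t_j$ or a direct comparison on $[0,1)$ handles the endpoint matching — this bookkeeping is routine and I would carry it out by splitting each coordinate interval into $[0,1/2]$ and $[1/2,1)$.)

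The main obstacle I anticipate is not the sign-folding argument, which is essentially combinatorial, but the careful matching of the one-dimensional boundary weights: the integrand of $I_n^\pm$ has integrable singularities at \emph{both} endpoints $t_j=\pm1$ (from $(1-t_j^2)^{-1/2}$ together with the regularised factor), whereas $J_n^\pm$ has its weight singularity only at $t_j=0$. One must verify that after folding to $[0,1)^m$ the two endpoint singularities at $0$ and $1$ can each be dominated by the single $\left(\sqrt{t_j}+\tfrac1n\right)^{-\alpha^2\pm\alpha}t_j^{-1/2}$ weight in $J_n^\pm$ — possibly at the cost of enlarging the domain or using that $\left(\sqrt{t}+\tfrac1n\right)^{-\alpha^2\pm\alpha}t^{-1/2}$ and its reflection $\left(\sqrt{1-t}+\tfrac1n\right)^{-\alpha^2\pm\alpha}(1-t)^{-1/2}$ are both bounded by $C\left(\sqrt{t}+\tfrac1n\right)^{-\alpha^2\pm\alpha}t^{-1/2}$ on, say, $[0,1/2]$ and a symmetric statement on $[1/2,1)$, and that the interaction product is only made smaller by such monotone rearrangements of the $t_j$. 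This requires care but no new ideas, and the constant $C$ produced is uniform in $n$ as required.
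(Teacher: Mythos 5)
Your sign-folding step is exactly the paper's argument: split $(-1,1)^m$ according to the signs of the coordinates, flip the negative ones, and use $s_j+s_k\ge |s_j-s_k|$ together with the negative exponent $-2\alpha^2$ to dominate every mixed-sign interaction factor by the all-positive one, so that $I_n^\pm(m,\alpha)\le 2^m\,I_n^\pm(m,\alpha,0)$ with $I_n^\pm(m,\alpha,0)$ the integral over $[0,1)^m$ of the original weights. The only place your write-up needs tightening is the endpoint matching, and there two of the three routes you float do not work. After folding, the per-coordinate weight $(\sqrt{1-t_j^2}+\tfrac1n)^{-\alpha^2\pm\alpha}(1-t_j^2)^{-1/2}$ is singular at $t_j=1$, while the $J_n$ weight $(\sqrt{t_j}+\tfrac1n)^{-\alpha^2\pm\alpha}t_j^{-1/2}$ is bounded on $[1/2,1)$, so a ``direct comparison on $[0,1)$'' is impossible and the constant cannot be just $2^m$; near $t_j=0$ only the one-sided bound (not comparability) holds, which is all you need. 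Likewise, reflecting only those coordinates lying in $[1/2,1)$, as your proposed splitting into $[0,1/2]$ and $[1/2,1)$ suggests, destroys the interaction bound: $|(1-t_j)-t_k|$ can vanish while $|t_j-t_k|$ is of order one, so $(|1-t_j-t_k|+n^{-2})^{-2\alpha^2}$ is not dominated by $(|t_j-t_k|+n^{-2})^{-2\alpha^2}$. The correct move — and the paper's — is the third option you mention, applied globally: substitute $t_j\mapsto 1-t_j$ for \emph{all} coordinates simultaneously, which leaves every $|t_j-t_k|$ unchanged and turns the weight into $(\sqrt{t_j(2-t_j)}+\tfrac1n)^{-\alpha^2\pm\alpha}(t_j(2-t_j))^{-1/2}$, comparable to the $J_n$ weight uniformly in $n$ since $1\le 2-t_j\le 2$ (the paper phrases this as replacing $\sqrt{1-t_j^2}$ by $\sqrt{1-t_j}$ via $1\le\sqrt{1+t_j}\le\sqrt2$ and then invoking the reflection symmetry). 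With that single clarification your argument is the paper's proof, with a constant $C$ depending on $m$ and $\alpha$ rather than $2^m$, which is all the lemma requires.
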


\noindent \textbf{Proof:} Due to symmetry of the integrand in the $t_j$'s we see that
\begin{align}
\begin{split}
    I_n^\pm(m, \alpha) =& \sum_{\ell = 0}^m \binom{m}{\ell} I_n^\pm(m,\alpha,\ell) ,
\end{split}
\end{align}
where for $\ell \in \{0,\ldots,m\}$ 
\begin{align}
&I_n^\pm(m,\alpha,\ell) \\
:=& \int_{(-1,0]^\ell \times [0,1)^{m-\ell}} \prod_{1 \leq j < k \leq m} \left( |t_j - t_k| + \frac{1}{n^2} \right)^{-2\alpha^2} \prod_{j = 1}^m \left( \sqrt{1 - t_j^2} + \frac{1}{n} \right)^{- \alpha^2 \pm \alpha} \frac{\text{d}t_j}{\sqrt{1 - t_j^2}}. \nonumber
\end{align}
By setting $s_j = - t_j$ for $1 \leq j \leq \ell$ and $s_j = t_j$ for $\ell + 1 \leq j \leq m$ we see that  
\begin{align}
\begin{split}
    &I_n^\pm(m,\alpha,\ell) \\
    =& \int_{[0,1)^{m}} \prod_{1 \leq j < k \leq \ell} \left( |s_j - s_k| + \frac{1}{n^2} \right)^{-2\alpha^2} \prod_{1 \leq j \leq \ell < k \leq m} \left( |s_j + s_k| + \frac{1}{n^2} \right)^{-2\alpha^2} \\
    &\times \prod_{\ell + 1 \leq j < k \leq m} \left( |s_j - s_k| + \frac{1}{n^2} \right)^{-2\alpha^2} \prod_{j = 1}^m \left( \sqrt{1 - s_j^2} + \frac{1}{n} \right)^{- \alpha^2 \pm \alpha} \frac{\text{d}s_j}{\sqrt{1 - s_j^2}}  \\
    \leq & \int_{[0,1)^{m}} \prod_{1 \leq j < k \leq m} \left( |s_j - s_k| + \frac{1}{n^2} \right)^{-2\alpha^2} \prod_{j = 1}^m \left( \sqrt{1 - s_j^2} + \frac{1}{n} \right)^{- \alpha^2 \pm \alpha} \frac{\text{d}s_j}{\sqrt{1 - s_j^2}}  \\
    =& I_n^\pm(m,\alpha, 0).
\end{split}
\end{align}
Thus 
\begin{align} 
    I_n^\pm(m,\alpha) \leq 2^m I_n^\pm(m,\alpha,0).
\end{align}
Since $ 1 \leq \sqrt{1 + t_j} \leq \sqrt{2}$ for $t_j \in [0,1]$, and due to symmetry, it holds that
\begin{align}
\begin{split} 
    &I_n^\pm(m,\alpha,0) \\
    &\leq  C\int_{[0,1)^{m}} \prod_{1 \leq j < k \leq m} \left( |t_j - t_k| + \frac{1}{n^2} \right)^{-2\alpha^2} \prod_{j = 1}^m \left( \sqrt{1 - t_j} + \frac{1}{n} \right)^{- \alpha^2 \pm \alpha} \frac{\text{d}t_j}{\sqrt{1 - t_j}}, \\
    &= C\int_{[0,1)^{m}} \prod_{1 \leq j < k \leq m} \left( |t_j - t_k| + \frac{1}{n^2} \right)^{-2\alpha^2} \prod_{j = 1}^m \left( \sqrt{t_j} + \frac{1}{n} \right)^{- \alpha^2 \pm \alpha} \frac{\text{d}t_j}{\sqrt{t_j}}.
\end{split}
\end{align}
This finishes the proof. \qed \\

We now combine the factors $\left( \sqrt{t_j} + \frac{1}{n} \right)^{- \alpha^2 \pm \alpha}$ and $t_j^{-1/2}$:
\begin{lemma} \label{lemma:J to J l}
Let $\alpha>0$ and $m\in\mathbb N$. There exists $C>0$ such that for all $n \in \mathbb{N}$ it holds that
\begin{align}\label{eq:Jnpm}
     J_n^\pm(m,\alpha) \leq C \sum_{\ell = 0}^m n^{2(m-\ell)(m-\ell-1)\alpha^2 - (m-\ell)(1-\alpha^2 \pm \alpha)} J_n^\pm(m,\alpha,\ell),
\end{align}
where for $\ell = 1,\ldots,m$
\begin{align}
\begin{split}
    J_n^\pm(m,\alpha,\ell) :=& \int_{\left[ \frac{1}{n^2},1 \right]^\ell} \prod_{1 \leq j < k \leq \ell} \left( |t_j - t_k| + \frac{1}{n^2} \right)^{-2\alpha^2} \prod_{j = 1}^\ell t_j^{\frac{-1-(4(m-\ell)+1) \alpha^2 \pm \alpha}{2}} \text{d}t_{1} \cdots \text{d}t_\ell,
\end{split}
\end{align}
and $J_n^\pm(m,\alpha,0) := 1$.
\end{lemma}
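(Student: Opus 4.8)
The plan is to implement the announced combination of the two local factors $\bigl(\sqrt{t_j}+\tfrac1n\bigr)^{-\alpha^2\pm\alpha}$ and $t_j^{-1/2}$ by splitting each coordinate's range according to whether $t_j<\tfrac1{n^2}$ (where $\sqrt{t_j}+\tfrac1n\asymp\tfrac1n$) or $t_j\ge\tfrac1{n^2}$ (where $\sqrt{t_j}+\tfrac1n\asymp\sqrt{t_j}$). Concretely, I would write $[0,1)^m=\bigcup_{S\subseteq\{1,\dots,m\}}\bigl(\prod_{j\in S}[0,\tfrac1{n^2})\bigr)\times\bigl(\prod_{j\notin S}[\tfrac1{n^2},1)\bigr)$, so that $J_n^\pm(m,\alpha)$ is a sum of $2^m$ integrals; since the integrand is symmetric in $t_1,\dots,t_m$, the integral over a given piece depends only on $|S|=m-\ell$, and there are $\binom{m}{\ell}$ pieces with $\ell$ ``large'' coordinates. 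It thus suffices to bound the integral over the piece in which, say, $t_1,\dots,t_\ell\in[\tfrac1{n^2},1)$ are large and $t_{\ell+1},\dots,t_m\in[0,\tfrac1{n^2})$ are small.

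On that piece I would estimate each pairwise factor $\bigl(|t_j-t_k|+\tfrac1{n^2}\bigr)^{-2\alpha^2}$ according to the types of $j,k$: if both are small, use $|t_j-t_k|+\tfrac1{n^2}\ge\tfrac1{n^2}$ to bound it by $n^{4\alpha^2}$, which over the $\binom{m-\ell}{2}$ such pairs contributes $n^{2(m-\ell)(m-\ell-1)\alpha^2}$; if $j\le\ell$ is large and $k>\ell$ is small, then $t_k<\tfrac1{n^2}\le t_j$ gives $|t_j-t_k|+\tfrac1{n^2}=t_j-t_k+\tfrac1{n^2}>t_j$, hence $\bigl(|t_j-t_k|+\tfrac1{n^2}\bigr)^{-2\alpha^2}<t_j^{-2\alpha^2}$, a factor charged to the large variable $t_j$; and if both are large, keep the factor as it stands. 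For the local factors I would use $\tfrac1n\le\sqrt{t_j}+\tfrac1n\le\tfrac2n$ on a small coordinate and $\sqrt{t_j}\le\sqrt{t_j}+\tfrac1n\le2\sqrt{t_j}$ on a large one, so that, up to $n$-independent constants, $\bigl(\sqrt{t_j}+\tfrac1n\bigr)^{-\alpha^2\pm\alpha}$ is $n^{\alpha^2\mp\alpha}$ for a small $t_j$ and $t_j^{(-\alpha^2\pm\alpha)/2}$ for a large $t_j$.

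After these replacements the $m-\ell$ small integrations decouple, each equal to $n^{\alpha^2\mp\alpha}\int_0^{1/n^2}t_j^{-1/2}\,\text{d}t_j=2\,n^{\alpha^2\mp\alpha-1}$; together with the $n^{2(m-\ell)(m-\ell-1)\alpha^2}$ coming from the small--small pairs, and using $\alpha^2\mp\alpha-1=-(1-\alpha^2\pm\alpha)$, this produces exactly the prefactor $n^{2(m-\ell)(m-\ell-1)\alpha^2-(m-\ell)(1-\alpha^2\pm\alpha)}$ in the statement. What is left is an integral over the large coordinates $t_1,\dots,t_\ell\in[\tfrac1{n^2},1]$ with pairwise weights $\bigl(|t_j-t_k|+\tfrac1{n^2}\bigr)^{-2\alpha^2}$ and, for each $t_j$, the power $t_j^{(-\alpha^2\pm\alpha)/2-1/2}$ from its own factors times $t_j^{-2\alpha^2}$ from each of its $m-\ell$ small neighbours, i.e. total exponent $\tfrac{-1-(4(m-\ell)+1)\alpha^2\pm\alpha}{2}$ --- precisely $J_n^\pm(m,\alpha,\ell)$, with the degenerate case $\ell=0$ giving the empty integral $1$, consistent with the convention $J_n^\pm(m,\alpha,0):=1$. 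Summing over $S$ (equivalently over $\ell$) and absorbing the finitely many, $n$-independent combinatorial and powers-of-$2$ constants into a single $C$ yields \eqref{eq:Jnpm}. I expect the only real care to be needed in the second paragraph, namely the bookkeeping that assigns each small--large interaction to the correct large variable without double counting, so that the exponents land exactly on the claimed values; once this accounting is fixed, all the remaining estimates are elementary one-variable integrals.
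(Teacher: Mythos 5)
Your proposal is correct and follows essentially the same route as the paper's proof: the same split of each coordinate at $\tfrac{1}{n^2}$ (organized by symmetry into the $\binom{m}{\ell}$ configurations with $\ell$ large coordinates), the same case-by-case bounds on the pairwise factors ($n^{4\alpha^2}$ for small--small, $t_j^{-2\alpha^2}$ charged to the large variable for large--small, kept as is for large--large), the same treatment of the local factors, and the same evaluation $\int_0^{1/n^2}t^{-1/2}\,\mathrm{d}t=\tfrac{2}{n}$ for the small coordinates, leading to the identical exponent bookkeeping. No gap to report.
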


\noindent \textbf{Proof:} We observe that
\begin{align}\label{eq:Jn1}
\begin{split}
    J_n^\pm(m,\alpha) = \sum_{\ell = 0}^m \binom{m}{l} \int_{\left[ \frac{1}{n^2},1 \right]^\ell \times \left[0,\frac{1}{n^2} \right]^{m-\ell}} \prod_{1 \leq j < k \leq m} \left( |t_j - t_k| + \frac{1}{n^2} \right)^{-2\alpha^2} \\
    \hspace{3cm} \times \prod_{j = 1}^m \left(\sqrt{t_j} + \frac{1}{n} \right)^{-\alpha^2 \pm \alpha} t_j^{-1/2} \text{d}t_j.
\end{split}
\end{align}
The integral on the right can be rewritten as 
\begin{align}\label{integral}
\begin{split}
    \int_{\left[\frac{1}{n^2}, 1\right]^\ell \times \left[0, \frac{1}{n^2}\right]^{m-\ell}} \prod_{1 \leq j < k \leq \ell} \left( |t_j - t_k| + \frac{1}{n^2} \right)^{-2\alpha^2} \prod_{1 \leq j \leq \ell < k \leq m} \left( |t_j - t_k| + \frac{1}{n^2} \right)^{-2\alpha^2} \\
    \times \prod_{\ell + 1 \leq j < k \leq m} \left( |t_j - t_k| + \frac{1}{n^2} \right)^{-2\alpha^2} \prod_{j = 1}^m \left(\sqrt{t_j}+\frac{1}{n}\right)^{-\alpha^2 \pm \alpha} t_j^{-1/2} \text{d}t_j. 
\end{split}
\end{align}
Now using the estimates
\begin{align}
\begin{split}
    \left( |t_j - t_k| + \frac{1}{n^2} \right)^{-2\alpha^2} \leq n^{4\alpha^2} &\text{ for } (t_j, t_k) \in \left[0,\frac{1}{n^2}\right] \times \left[0,\frac{1}{n^2}\right], \\
    \left(|t_j - t_k| + \frac{1}{n^2}\right)^{-2\alpha^2} \leq t_j^{-2\alpha^2} &\text{ for  } (t_j, t_k) \in \left[\frac{1}{n^2}, 1\right] \times \left[0,\frac{1}{n^2}\right], \\
    \left(\sqrt{t_j} + \frac{1}{n} \right)^{-\alpha^2 \pm \alpha} \leq n^{\alpha^2 \mp \alpha} &\text{ for } t_j \in \left[0,\frac{1}{n^2}\right] \text{ and } -\alpha^2 \pm \alpha \leq 0, \\ 
    \left(\sqrt{t_j} + \frac{1}{n} \right)^{-\alpha^2 \pm \alpha} \leq \left( \frac{n}{2} \right)^{\alpha^2 \mp \alpha} &\text{ for } t_j \in \left[0,\frac{1}{n^2}\right] \text{ and } -\alpha^2 \pm \alpha > 0, \\ 
    \left(\sqrt{t_j} + \frac{1}{n} \right)^{-\alpha^2 \pm \alpha} \leq t_j^{\frac{-\alpha^2 \pm \alpha}{2}} &\text{ for } t_j \in \left[\frac{1}{n^2},1\right] \text{ and } -\alpha^2 \pm \alpha \leq 0, \\ 
    \left( \sqrt{t_j} + \frac{1}{n} \right)^{-\alpha^2 \pm \alpha} \leq \left( 4 t_j \right)^{\frac{-\alpha^2 \pm \alpha}{2}} &\text{ for } t_j \in \left[\frac{1}{n^2},1\right] \text{ and } -\alpha^2 \pm \alpha > 0,
\end{split}
\end{align}
and the identity $\int_0^{\frac{1}{n^2}} t^{-1/2} \text{d}t = \frac{2}{n}$, we see that \eqref{integral} is bounded by
\begin{align}
\begin{split}
    C n^{2(m-\ell)(m-\ell-1)\alpha^2 - (m-\ell)(1 - \alpha^2 \pm \alpha)} \int_{\left[\frac{1}{n^2},1\right]^\ell} \prod_{1 \leq j < k \leq \ell} \left( |t_j - t_k| + \frac{1}{n^2} \right)^{-2\alpha^2} \\
    \times \prod_{j = 1}^\ell t_j^{\frac{-1-\alpha^2 \pm \alpha}{2}-2(m-\ell)\alpha^2} \text{d}t_j,
\end{split}
\end{align} 
for a suitably chosen $C>0$. Substituting this in \eqref{eq:Jn1}, we obtain the result. \qed \\

Now we are able to prove Lemma \ref{lemma:I infty bound}, which is needed to complete the proof in Section \ref{section:subcritical}, of the results in the subcritical phase:\\

\noindent \textbf{Proof of Lemma \ref{lemma:I infty bound}:} We see that $J_n^{\pm}(m,\alpha,\ell) \leq J_\infty^{\pm}(m,\alpha,\ell)$, where $J_\infty^{\pm}(m,\alpha,\ell)$ denotes the integrals one obtains when setting $1/n$ to zero in the integration ranges and integrands of $J_n^{\pm}(m,\alpha,\ell)$, $\ell = 0,\ldots,m$:
\begin{align}
    J_\infty^\pm(m,\alpha,\ell) := \int_{\left[0,1 \right]^\ell} \prod_{1 \leq j < k \leq \ell}  |t_j - t_k|^{-2\alpha^2} \prod_{j = 1}^\ell t_j^{\frac{-1-(4(m-\ell)+1) \alpha^2 \pm \alpha}{2}} \text{d}t_j.
\end{align}
$J_\infty^{\pm}(m,\alpha,\ell)$ is a Selberg integral and is finite by Theorem \ref{thm:Selberg} for all $\alpha$ in the subcritical phase $\alpha < \min \left\{ \frac{1}{\sqrt{m}}, \frac{\sqrt{8m-3} \pm 1}{4m-2} \right\}$ and for all $\ell = 0,\ldots,m$. Moreover, in the subcritical phase, the summands in \eqref{eq:Jnpm} contain $n$ with a negative power for $\ell = 0,\ldots,m-1$ and with power zero for $\ell = m$. Thus, by (\ref{eqn:I hat}), Lemma \ref{lemma:I to J}, and Lemma \ref{lemma:J to J l}, we see that there exists a constant $C$ such that for all $n \geq N$
\begin{align}
    I_{H(n)}(\alpha,(0,\pi)^m) \leq CJ^\pm_{\infty}(m,\alpha,m) = C2^m I^\pm_\infty(\alpha,(0,\pi)^m).
\end{align}
We can repeat \textit{mutatis mutandis} those estimates and arguments for subsets $R \subset (0,\pi)^m$ that are symmetric under permutation of the variables and symmetric around $\pi/2$ in each variable.
Transforming $R$ appropriately, i.e. splitting up the integration range or changing variables, we then obtain that there exists a constant $C$ such that for all $n \in \mathbb{N}$
\begin{align}
    I_{H(n)}(\alpha,R) \leq C2^m I^\pm_\infty(\alpha,R).
\end{align}
\qed  \\

By changing variables to $x_j = t_j^{-1}n^{-2}$ in the integrals $J_n^\pm(m,\alpha,\ell)$ from Lemma \ref{lemma:J to J l} we see that 
\begin{align} 
    J_n^\pm(m,\alpha) \leq Cn^{2m(m-1)\alpha^2 - m(1 - \alpha^2 \pm \alpha)} \sum_{\ell = 0}^m K_n^\pm(m,\alpha,\ell), 
\end{align}
where for $\ell = 1,\ldots,m$
\begin{align} \label{eqn:K}
\begin{split}
&K_n^\pm(m,\alpha,\ell) 
:= \int_{1/n^2}^1 \cdots \int_{1/n^2}^1 \prod_{1 \leq j < k \leq \ell} \left( \left| x_j - x_k \right| + x_jx_k \right)^{-2\alpha^2} \prod_{j = 1}^\ell x_j^{((4m - 3)\alpha^2 \mp \alpha - 3)/2} \text{d}x_j,
\end{split}
\end{align}
and $K_n^\pm(m,\alpha,0) := 1$. We see, since $(|x_j - x_k| + x_jx_k)^{-2\alpha^2} \geq 1$ for $x_j,x_k \in [0,1]$, that for $\ell = 1,...,m-1$
\begin{align}
    K_n(m,\alpha,m) \geq K_n(m,\alpha,\ell) \int_{1/n^2}^1 \cdots \int_{1/n^2}^1 \prod_{j = \ell + 1}^m x_j^{((4m - 3)\alpha^2 \mp \alpha - 3)/2} \text{d}x_{j} \geq C K_n(m,\alpha,\ell). 
\end{align}
Thus we see that 
\begin{align} \label{eqn:J to K}
    J_n^\pm(m,\alpha) \leq Cn^{2m(m-1)\alpha^2 - m(1 - \alpha^2 \pm \alpha)} K_n^\pm(m,\alpha,m).
\end{align}
The following lemma, combined with (\ref{eqn:I pm}), (\ref{eqn:I hat}), Lemma \ref{lemma:I to J}, Lemma \ref{lemma:J to J l}, and (\ref{eqn:J to K}), will complete the proof of Lemma \ref{lemma:I}.
\begin{lemma} \label{lemma:K}
Let $m \in \mathbb N \setminus\{2\}$ and $H(n) \in \left\{ SO(2n), \, SO^-(2n), \, SO(2n+1), \, SO^-(2n+1), \, Sp(2n) \right\}$, or let $m=2$ and $H(n)=Sp(2n)$.
As $n \rightarrow \infty$, with $\pm = -$ if $H(n) = Sp(2n)$ and $\pm=+$ otherwise:
\begin{align}
\begin{split}
    K_n^\pm (m,\alpha,m) = \begin{cases} \mathcal{O}(\log n) & \alpha = \frac{\sqrt{8m-3} \pm 1}{4m-2},\\
    \mathcal{O}(1) & \alpha > \frac{\sqrt{8m-3} \pm 1}{4m-2}.
    \end{cases}
\end{split}
\end{align}
\end{lemma}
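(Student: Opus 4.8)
The plan is to estimate the integral $K_n^{\pm}(m,\alpha,m)$ from \eqref{eqn:K} by carefully tracking the interplay between the repulsion factors $\left(|x_j-x_k|+x_jx_k\right)^{-2\alpha^2}$ and the boundary weights $x_j^{((4m-3)\alpha^2 \mp \alpha - 3)/2}$. First I would exploit the symmetry of the integrand to reduce to the ordered region $1/n^2 \leq x_1 \leq x_2 \leq \cdots \leq x_m \leq 1$, where on this cone one has $|x_j-x_k|+x_jx_k \asymp x_k$ for $j<k$ (since $x_j \leq x_k$ forces $|x_j - x_k| \leq x_k$ and $x_j x_k \leq x_k$, while the sum is $\geq \frac{1}{2}\max(|x_j-x_k|, x_j x_k) \cdots$ — more precisely $|x_j-x_k|+x_jx_k \geq x_k - x_j + x_j x_k \geq x_k(1 - 1 + x_j) $ needs the case split, but on $x_j \leq x_k/2$ it is $\asymp x_k$ and on $x_k/2 \leq x_j \leq x_k$ it is $\geq x_j x_k \asymp x_k^2 \geq x_j x_k \asymp x_k \cdot x_j$, so a logarithmically crude bound $\left(|x_j-x_k|+x_jx_k\right)^{-2\alpha^2} \leq C x_k^{-2\alpha^2}$ holds throughout after splitting). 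This collapses each repulsion factor involving index $k$ into $x_k^{-2\alpha^2}$, so the integral is bounded by $\int \prod_{k=1}^m x_k^{-2(k-1)\alpha^2 + ((4m-3)\alpha^2 \mp \alpha - 3)/2}\, dx_k$ over the ordered simplex, which iterates from the innermost variable outward.

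The key computation is then the iterated one-dimensional integral: integrating $x_1$ over $[1/n^2, x_2]$, then $x_2$ over $[1/n^2,x_3]$, and so on. At each stage the exponent of the current variable is a linear function of $\alpha^2$ and $\alpha$; the integral $\int_{1/n^2}^{x_{j+1}} x_j^{\gamma}\, dx_j$ is $\mathcal{O}(1)$ or $\mathcal{O}(\log n)$ or $\mathcal{O}(n^{-2(\gamma+1)})$ depending on the sign of $\gamma+1$. The critical exponent relation is exactly $\alpha = \frac{\sqrt{8m-3}\pm 1}{4m-2}$, which one checks is equivalent to $2m(m-1)\alpha^2 - m(1-\alpha^2\pm\alpha) = 0$ (this is the same algebraic identity already used in the lower-bound argument following \eqref{eqn:supercritical lower bound}); at this value the total accumulated power of $n$ vanishes. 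I would carry out the iterated integration keeping careful track of the running exponent, showing that for $\alpha$ at or above the critical value all the intermediate powers of $n$ are bounded, with at most one factor of $\log n$ appearing when the outermost integral sits exactly at its critical exponent, and none when $\alpha$ is strictly supercritical (in which case the integrand is integrable up to $0$ and the whole thing is $\mathcal{O}(1)$).

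The role of the hypothesis $m \neq 2$ (except $Sp(2n)$) enters in ensuring that the accumulation of powers is monotone in the right direction, i.e. that the dominant contribution comes from the corner $x_1 = \cdots = x_m \to 1/n^2$ rather than from an intermediate face of the simplex where only some of the $x_j$ are small — when $m=2$ in the orthogonal cases there is a competing "bulk" contribution (the configuration where the two variables are close to each other but away from $0$), which is precisely the extra phase handled separately in Lemma~\ref{lemma:I m = 2}. For $m\neq 2$ (or $Sp(2n)$, where the weight exponent $-\alpha^2-\alpha$ at the endpoints is more negative and tips the balance back to the corner) no such competing contribution exists in the relevant range of $\alpha$, so the corner estimate above is sharp and gives the claimed $\mathcal{O}(\log n)$ / $\mathcal{O}(1)$ dichotomy.

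The main obstacle I anticipate is making the reduction $|x_j - x_k| + x_j x_k \asymp x_{\max(j,k)}$ genuinely rigorous with uniform constants across all of $[1/n^2,1]^m$, including the borderline region where $x_j$ and $x_k$ are comparable; this requires a clean case split (either $x_j \leq x_k/2$, where $|x_j-x_k| \geq x_k/2$, or $x_k/2 < x_j \leq x_k$, where $x_j x_k \geq x_k^2/2 \geq x_j x_k /2$ — wait, one wants a \emph{lower} bound, and $x_j x_k \geq (x_k/2) x_k \cdot$ is not $\gtrsim x_k$ unless $x_k \gtrsim$ const, so actually on that region one uses $|x_j - x_k| + x_j x_k \geq x_j x_k$ but also needs $x_j \asymp x_k$, giving $\gtrsim x_k x_j$; this is weaker than $\asymp x_k$ and I would instead bound $\left(|x_j-x_k|+x_jx_k\right)^{-2\alpha^2} \leq C\left(|x_j - x_k| + \tfrac{1}{n^2}\right)^{-2\alpha^2} x_k^{0}$ in a way that still iterates). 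Getting this bookkeeping right — so that every iterated integral is elementary and the exponents line up with the stated critical value — is where the real work lies, but it is bounded, routine estimation once the right decomposition is fixed.
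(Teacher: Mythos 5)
There is a genuine gap, and it sits exactly where you flagged it: the reduction $\left(|x_j-x_k|+x_jx_k\right)^{-2\alpha^2}\leq C\,x_{\max(j,k)}^{-2\alpha^2}$ is false near the diagonal (take $x_j=x_k=x$ small: the left side is $x^{-4\alpha^2}$), and this is not a repairable technicality. A dyadic check at scale $x_j\sim x_k\sim x$ shows that when $2\alpha^2\geq 1$ the true pair integral behaves like $x^{2-4\alpha^2}$ while your collapsed kernel gives only $x^{1-2\alpha^2}$, so the reduced integral \emph{underestimates} $K_n^\pm(m,\alpha,m)$; since the lemma is an upper-bound statement and the supercritical range includes arbitrarily large $\alpha$ (hence $2\alpha^2\geq 1$), the proposed reduction cannot deliver the claim there. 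Your fallback of bounding $x_jx_k$ below by an $n$-dependent constant is also lossy: on $[1/n^2,1]^m$ one only has $x_jx_k\geq n^{-4}$, which degrades the regularization scale and wrecks the power counting that the change of variables leading to \eqref{eqn:K} was designed to preserve ($|t_j-t_k|+1/n^2$ transforms exactly into $(n^2x_jx_k)^{-1}(|x_j-x_k|+x_jx_k)$, so the product term is the correct cutoff and must be kept). In short, the regime $\alpha\geq 1/\sqrt m$ needs a mechanism that genuinely uses the $x_jx_k$ term, and your scheme has none.

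For comparison, the paper splits into three cases. At the critical $\alpha$ it symmetrizes so that $x_m$ is the largest variable, drops $x_jx_k$ entirely (harmless there, since $2\alpha^2<1$ at the critical value for the relevant $m$), and rescales $x_j=t_jx_m$: the radial integral $\int_{1/n^2}^1 x_m^{-1}\,\mathrm{d}x_m=2\log n$ factors out against a finite $(m-1)$-dimensional Selberg integral, with no iterated bookkeeping needed. For $\frac{\sqrt{8m-3}\pm1}{4m-2}<\alpha<1/\sqrt m$ it bounds $K_n^\pm$ directly by a convergent Selberg integral. For $\alpha\geq 1/\sqrt m$ it uses the interpolation $\left(|x_j-x_k|+x_jx_k\right)^{-2\alpha^2}\leq|x_j-x_k|^{-\frac2m+\epsilon}\,(x_jx_k)^{-2\alpha^2+\frac2m-\epsilon}$, trading just-integrable repulsion for extra endpoint weight, and again lands on a Selberg integral; the finiteness condition of Theorem \ref{thm:Selberg} for that integral, namely $\frac1m+\frac{\alpha^2\mp\alpha-1}{2(m-1)}>0$, is precisely where the hypothesis $m\neq2$ in the orthogonal case enters. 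This also corrects your account of the role of $m\neq2$: it is not that the corner bound would hold but fail to be sharp against a competing bulk contribution — for $m=2$ in the orthogonal cases the upper bound of the lemma is simply false for $\alpha\in[\tfrac1{\sqrt2},\tfrac{\sqrt5+1}{4})$ (the exponent $\tfrac{\alpha^2-\alpha}{2}$ is negative there), which is why that case is treated separately in Lemma \ref{lemma:I m = 2}.
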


\noindent \textbf{Proof of Lemma \ref{lemma:K}:} For $m = 1$ the proof is immediate, thus we let $m \geq 2$ for $\pm = +$ and $m \geq 3$ for $\pm = -$. For $\alpha = \frac{\sqrt{8m-3} \pm 1}{4m - 2}$ we see that
\begin{align}
    &K_n^\pm(m,\alpha,m) \nonumber \\
    \leq & m \int_{1/n^2}^1 \int_{1/n^2}^{x_m} \cdots \int_{1/n^2}^{x_m} \prod_{1 \leq j < k \leq m} \left| x_j - x_k \right|^{-2\alpha^2} \prod_{j = 1}^m x_j^{((4m - 3)\alpha^2 \mp \alpha - 3)/2} \text{d}x_1 \cdots \text{d}x_m \nonumber \\
    =& m\int_{1/n^2}^1 \int_{1/(n^2x_m)}^{1} \cdots \int_{1/(n^2x_m)}^{1} x_m^{m\frac{(4m-3)\alpha^2 \mp \alpha - 3}{2}} \prod_{1 \leq j < k \leq m-1} \left| x_mt_j - x_mt_k \right|^{-2\alpha^2} \nonumber \\
    &\hspace{3cm} \times \prod_{j = 1}^{m-1} |x_m - x_mt_j|^{-2\alpha^2} t_j^{((4m - 3)\alpha^2 \mp \alpha - 3)/2} \text{d}t_1 \cdots \text{d}t_{m-1} x_m^{m-1} \text{d}x_m \nonumber \\
    \leq & m\int_{1/n^2}^1 x_m^{m(m-1)\alpha^2 + m\frac{\alpha^2 \mp \alpha - 1}{2} - 1} \text{d}x_m \\
    &\times \int_{0}^1 \cdots \int_{0}^1 \prod_{1 \leq j < k \leq m-1} \left| t_j - t_k \right|^{-2\alpha^2} \prod_{j = 1}^{m-1} |1 - t_j|^{-2\alpha^2} t_j^{((4m - 3)\alpha^2 \mp \alpha - 3)/2} \text{d}t_1 \cdots \text{d}t_{m-1}, \nonumber
\end{align}
where we set $x_j = t_jx_m$ for $j = 2,...,m$. At the critical value the exponent in the first integral equals $-1$, thus the first integral exactly equals $2\log n$. The second integral is a Selberg integral which is finite if and only if $\alpha < 1/\sqrt{m-1}$, $-2\alpha^2 + 1 > 0$, $((4m - 3)\alpha^2 \mp \alpha - 1)/2 > 0$, $(m-2)\alpha^2 < -2\alpha^2 + 1$ and $(m-2)\alpha^2 < ((4m - 3)\alpha^2 \mp \alpha - 1)/2 > 0$. It is easy to check that all those conditions are fulfilled for $\alpha = \frac{\sqrt{8m-3} \pm 1}{4m - 2}$, which proves the lemma for the critical value. \\

For $\frac{\sqrt{8m-3} \pm 1}{4m-2} < \alpha < 1/\sqrt{m}$ we see that 
\begin{align}
    K_n^\pm(m,\alpha,m) \leq \int_{0}^1 \cdots \int_{0}^1 \prod_{1 \leq j < k \leq m} \left| x_j - x_k \right|^{-2\alpha^2} \prod_{j = 1}^m x_j^{((4m - 3)\alpha^2 \mp \alpha - 3)/2} \text{d}x_j.
\end{align}
The right-hand side is a Selberg integral, which by Theorem \ref{thm:Selberg} is finite exactly when $\frac{\sqrt{8m-3} \pm 1}{4m-2} < \alpha < 1/\sqrt{m}$. \\

When $\alpha \geq 1/\sqrt{m}$, then we see that for any $\epsilon > 0$ it holds that
\begin{align}
    \prod_{1 \leq j < k \leq m} \left( \left| x_j - x_k \right| + x_jx_k \right)^{-2\alpha^2} \leq \prod_{1 \leq j < k \leq m} |x_j - x_k|^{-\frac{2}{m} + \epsilon} \prod_{j = 1}^m x_j^{(m-1)(-2\alpha^2 + \frac{2}{m} - \epsilon)},
\end{align}
and thus 
\begin{align}
    K_n^\pm(m,\alpha,m) \leq \int_0^1 \cdots \int_0^1 \prod_{1 \leq j < k \leq m} |x_j - x_k|^{-\frac{2}{m} + \epsilon} \prod_{j = 1}^m x_j^{2\frac{m-1}{m} - (m-1)\epsilon + \frac{\alpha^2 \mp \alpha - 1}{2} - 1} \text{d}x_j. 
\end{align}
The right-hand side is again a Selberg integral which by Theorem \ref{thm:Selberg} is finite if and only if $\epsilon > 0$, $2\frac{m-1}{m} - (m-1)\epsilon + \frac{\alpha^2 \mp \alpha - 1}{2} > 0$, and 
\begin{align}
\begin{split}
    \frac{1}{m} - \frac{\epsilon}{2} <& \frac{2}{m} - \epsilon + \frac{\alpha^2 \mp \alpha - 1}{2(m-1)} \\
    \iff \frac{\epsilon}{2} <& \frac{1}{m} + \frac{\alpha^2 \mp \alpha - 1}{2(m-1)}.
\end{split}
\end{align}
For $\epsilon < \frac{2}{m}$ the third condition implies the second. For $\alpha > 0$ it holds that $\alpha^2 - \alpha \geq -1/4$ and $\alpha^2 + \alpha > 0$, thus we see that (except for $\pm = +$ and $m = 2$) the third condition holds for $\epsilon$ small enough:
\begin{align}
\begin{split}
    \frac{1}{m} + \frac{\alpha^2 - \alpha - 1}{2(m-1)} \geq & \frac{1}{m} - \frac{5}{8(m-1)} = \frac{3m - 8}{8m(m-1)} > 0, \\
    \frac{1}{m} + \frac{\alpha^2 + \alpha - 1}{2(m-1)} > & \frac{1}{m} - \frac{1}{2(m-1)} = \frac{m - 2}{2m(m-1)} \geq 0. 
\end{split}
\end{align}
This finishes the proof of the supercritical phase. \qed 

\section{Proof of Lemma \ref{lemma:I m = 2}} \label{section:m = 2}
Let $H(n) \in \{ SO(2n), \, SO^-(2n), \, SO(2n+1), \, SO^-(2n+1) \}$. We split up the integration range of $I_{H(n)}(\alpha,(0,\pi)^2)$ in \eqref{eqn:I} into $(0,\pi/2)^2$, $(0,\pi/2) \times (\pi/2, \pi)$, $(\pi/2, \pi) \times (0, \pi/2)$, and $(\pi/2, \pi)^2$. We see that (where $SO^+(n) := SO(n)$)
\begin{align} \label{eqn:ranges}
\begin{split}
    I_{SO^{\pm}(2n)}(\alpha,(0,\pi/2)^2) =& I_{SO^{\pm}(2n)}(\alpha,(\pi/2,\pi)^2),\\
    I_{SO^{\pm}(2n)}(\alpha,(0,\pi/2) \times (\pi/2, \pi)) =& I_{SO^{\pm}(2n)}(\alpha, (\pi/2,\pi) \times (0, \pi/2)),\\
    I_{SO^{\pm}(2n+1)}(\alpha,(0,\pi/2)^2) =& I_{SO^{\mp}(2n+1)}(\alpha,(\pi/2,\pi)^2),\\
    I_{SO^{\pm}(2n+1)}(\alpha,(0,\pi/2) \times (\pi/2, \pi)) =& I_{SO^{\mp}(2n+1)}(\alpha, (\pi/2,\pi) \times (0, \pi/2)).\\
\end{split}
\end{align}
Thus it suffices to prove Lemma \ref{lemma:I m = 2} for each of the eight integrals on the left-hand sides of (\ref{eqn:ranges}). We will only prove it for $I_{SO(2n)}(\alpha,(0,\pi/2)^2)$ and $I_{SO(2n)}(\alpha,(\pi/2, \pi)^2)$, as for the other six integrals in (\ref{eqn:ranges}) the proof is essentially the same. \\

We use that $2\theta/\pi \leq \sin \theta \leq \theta$ for $0 < \theta < \pi /2$ to obtain 
\begin{align} \label{eqn:theta bound}
    &I_{SO(2n)}(\alpha,(0,\pi/2)^2) \nonumber \\
    \leq & C\int_{0}^{\pi/2} \int_{\theta_2}^{\pi/2} \left( \theta_1^2 - \theta_2^2 + 2 \theta_1/n + \frac{1}{n^2} \right)^{-2\alpha^2} \left(\theta_1 + \frac{1}{n} \right)^{-\alpha^2 + \alpha} \left(\theta_2 + \frac{1}{n} \right)^{-\alpha^2 + \alpha} \text{d}\theta_1 \text{d}\theta_2 \nonumber \\
    \leq & C' I_{SO(2n)}(\alpha,(0,\pi/2)^2). 
\end{align} 
We see that for all $\alpha > 0$
\begin{align}
    \left( \theta_1^2 - \theta_2^2 + 2 \theta_1/n + \frac{1}{n^2} \right)^{-2\alpha^2} \leq n^{4\alpha^2}, 
\end{align} 
and when additionally $-\alpha^2 + \alpha + 1 < 0$ then we can bound the integral in the middle of (\ref{eqn:theta bound}) by $\mathcal{O}(n^{6\alpha^2 - 2\alpha - 2})$. Since we get a lower bound of the same power by (\ref{eqn:supercritical lower bound}) those upper and lower bounds are optimal. \\

For $-\alpha^2 + \alpha + 1 \geq 0$ we set $\theta_j = s_j/n$ and find that the integral in the middle of (\ref{eqn:theta bound}) is equal to
\begin{align}     
    Cn^{6\alpha^2 - 2 \alpha - 2} \int_0^{n\pi/2} \int_{s_2}^{n\pi/2} \left( s_1^2 - s_2^2 + 2s_1 + 1\right)^{-2\alpha^2} (s_1 + 1)^{-\alpha^2 + \alpha} (s_2 + 1)^{-\alpha^2 + \alpha} \text{d}s_1 \text{d}s_2. 
\end{align}
After another change of variables $s_1=s$, $s_2=st$, we see that this is equal to
\begin{align}
    &Cn^{6\alpha^2 - 2 \alpha - 2} \int_0^{n\pi/2} \int_{0}^{1} \left( s^2(1 - t^2) + 2st + 1\right)^{-2\alpha^2} (s + 1)^{-\alpha^2 + \alpha} (st + 1)^{-\alpha^2 + \alpha} s \text{d}t \text{d}s \nonumber \\
    = & Cn^{6\alpha^2 - 2 \alpha - 2} \int_0^2 \int_{0}^{1} \left( s^2(1 - t^2) + 2st + 1\right)^{-2\alpha^2} (s + 1)^{-\alpha^2 + \alpha} (st + 1)^{-\alpha^2 + \alpha} s \text{d}t \text{d}s \\
    & + Cn^{6\alpha^2 - 2 \alpha - 2} \int_2^{n\pi/2} \int_{0}^{1/2} \left( s^2(1 - t^2) + 2st + 1\right)^{-2\alpha^2} (s + 1)^{-\alpha^2 + \alpha} (st + 1)^{-\alpha^2 + \alpha} s \text{d}t \text{d}s \nonumber \\
    & + Cn^{6\alpha^2 - 2 \alpha - 2} \int_2^{n\pi/2} \int_{1/2}^{1} \left( s^2(1 - t^2) + 2st + 1\right)^{-2\alpha^2} (s + 1)^{-\alpha^2 + \alpha} (st + 1)^{-\alpha^2 + \alpha} s \text{d}t \text{d}s. \nonumber
\end{align}
We then make the following observations. 
\begin{itemize}
\item The first integral on the right-hand side is finite and non-zero.
\item The second integral is bounded above and below by
    \begin{align}
        &\int_2^{n\pi/2} s^{-5\alpha^2 + \alpha + 1} \int_0^{1/2} (st + 1)^{-\alpha^2 + \alpha} \text{d}t \text{d}s,
    \end{align}
    multiplied by suitable constants. It holds that
    \begin{align}
        \int_0^{1/2} (st + 1)^{-\alpha^2 + \alpha} \text{d}t \leq \begin{cases} C' s^{-\alpha^2 + \alpha}, & -\alpha^2 + \alpha + 1 < 0, \\ C' s^{-\alpha^2 + \alpha} \log s, & -\alpha^2 + \alpha + 1 = 0, \end{cases}
    \end{align}
    and it is easy to check that these bounds are optimal for $s \geq 2$, in the sense that the left-hand side is bounded below by $c(1_{-\alpha^2+\alpha + 1 < 0} s^{-\alpha^2 + \alpha} + 1_{-\alpha^2+\alpha + 1 = 0} s^{-\alpha^2 + \alpha}\log n)$ for some $c>0$. Thus for large $n$, the second integral is bounded by 
    \begin{align}
    \begin{split}
        &\int_2^{n\pi/2} s^{-6\alpha^2 + 2\alpha + 1} (1_{-\alpha^2 + \alpha + 1 < 0} + 1_{-\alpha^2 + \alpha + 1 = 0} \log s) \text{d}s\\ 
        =& \mathcal{O}(1) + \mathcal{O}\left( n^{-6\alpha^2 + 2\alpha + 2} \right) + 1_{- 6 \alpha^2 + 2\alpha + 2 = 0} \mathcal{O}(\log n)
    \end{split}
    \end{align}
    as $n\to\infty$,
    and it is straightforward to see that this bound is optimal.
\item  Since $s \geq 2$ and $t \in [1/2,1]$ in its integration range, the third integral is bounded above and below by 
    \begin{align}
    \begin{split}
        & \int_2^{n\pi/2} s^{-2\alpha^2 + 2\alpha + 1} \int_{1/2}^{1} \left( s^2(1 - t) + st \right)^{-2\alpha^2} \text{d}t \text{d}s,
    \end{split}
    \end{align}
    multiplied by suitable constants. For $\alpha > 1/\sqrt{2}$ we see that
    \begin{align}
    \begin{split}
        \int_{1/2}^{1} \left( s^2(1 - t) + st\right)^{-2\alpha^2} \text{d}t =& \left[ \left( s^2(1 - t) + st\right)^{-2\alpha^2 + 1} \frac{(s - s^2)^{-1}}{1 - 2\alpha^2} \right]_{1/2}^1 \\
        =& \frac{(s^2 - s)^{-1}}{2\alpha^2 - 1} \left( s^{-2\alpha^2 + 1} - (s^2/2 + s/2)^{-2\alpha^2 + 1} \right) \\
        \leq & C' s^{-2\alpha^2 - 1},
    \end{split}
    \end{align}
    while for $\alpha = \frac{1}{\sqrt{2}}$ we see that
    \begin{align}
    \begin{split}
        \int_{1/2}^{1} \left( s^2(1 - t) + st\right)^{-2\alpha^2} \text{d}t  =& (s^2 - s)^{-1} \log \frac{s + 1}{2} \\
        \leq & C'' s^{-2} \log s,
    \end{split}
    \end{align}
    and we easily obtain lower bounds of the same order. Thus for large $n$, we get the following optimal bound for the third integral
    \begin{align}
        &\int_2^{n\pi/2} s^{-4\alpha^2 + 2\alpha} (1 + 1_{2\alpha^2 = 1}\log s) \text{d} s \\
        =& \mathcal{O}(1) + \mathcal{O} \left( n^{1 - 4\alpha^2 + 2\alpha} \right) + 1_{2\alpha^2 - 1 = 0} \mathcal{O} \left( n^{1 - 4\alpha^2 + 2\alpha} \log n \right)+ 1_{1 - 4 \alpha^2 + 2\alpha = 0} \mathcal{O}(\log n). \nonumber
    \end{align} 
\end{itemize}
Further we see that for large $n$
\begin{align}
    &I_{SO(2n)}(\alpha,(0,\pi/2) \times (\pi/2, \pi)) \nonumber \\
    \leq & C\int_{0}^{\pi/2} \int_{\pi/2}^{\pi} \left( \theta_1^2 - \theta_2^2 + 2 \theta_1/n + \frac{1}{n^2} \right)^{-2\alpha^2} \left(\pi - \theta_1 + \frac{1}{n} \right)^{-\alpha^2 + \alpha} \left(\theta_2 + \frac{1}{n} \right)^{-\alpha^2 + \alpha} \text{d}\theta_1 \text{d}\theta_2 \nonumber \\
    \leq & C' + C' \int_{\pi/4}^{\pi/2} \int_{\pi/2}^{3\pi/4} \left( \theta_1^2 - \theta_2^2 + 2 \theta_1/n + \frac{1}{n^2} \right)^{-2\alpha^2} \text{d}\theta_1 \text{d}\theta_2 \\
    \leq & C'' + C'' \int_{\pi/4}^{\pi/2} \int_{\pi/2}^{3\pi/4} \left( \theta_1 - \theta_2 + 1/n \right)^{-2\alpha^2} \text{d}\theta_1 \text{d}\theta_2 \nonumber \\
    =& \mathcal{O}(1) + \mathcal{O}(n^{2\alpha^2 - 1}) + 1_{1 - 2 \alpha^2 = 0} \mathcal{O}(\log n), \nonumber
\end{align}
and one can easily see that this bound is optimal as well.\\

Putting all the obtained bounds together we see that
\begin{align}
    &I_{SO(2n)}(\alpha,(0,\pi)^2) \nonumber \\
    =& \mathcal{O}(1) + 1_{1 - 2 \alpha^2 = 0} \mathcal{O}(\log n) +  \mathcal{O}(n^{2\alpha^2 - 1}) + 1_{1 - 4 \alpha^2 + 2\alpha = 0} \mathcal{O}(n^{2\alpha^2 - 1} \log n) + \mathcal{O}(n^{6\alpha^2 - 2\alpha - 2}) \nonumber \\
    =& \begin{cases} \mathcal{O}(1) & \alpha < \frac{1}{\sqrt{2}}, \\
        \mathcal{O}(\log n) & \alpha = \frac{1}{\sqrt{2}}, \\
        \mathcal{O}(n^{2\alpha^2 - 1})  & \alpha \in \left( \frac{1}{\sqrt{2}}, \frac{\sqrt{5} + 1}{4} \right), \\
        \mathcal{O}(n^{2\alpha^2 - 1}) \log n & \alpha = \frac{\sqrt{5} + 1}{4}, \\
        \mathcal{O}(n^{6\alpha^2 - 2\alpha - 2})  & \alpha > \frac{\sqrt{5} + 1}{4},
        \end{cases}
\end{align}
and since those bounds are optimal in the sense that we get lower bounds of the same order, this finishes the proof.

%\section*{Acknowledgements}
%JF and JK were supported by ERC Advanced Grant 740900 (LogCorRM). 
%TC was supported by the Fonds de la Recherche Scientifique-FNRS under EOS project O013018F.
%We are grateful to Brian Conrey for very helpful discussions at the American Institute of Mathematics. 
\chapter{The Characteristic Polynomial of Unitary Brownian Motion and the Gaussian Free Field} \label{chapter:UBM}

In this chapter we prove Theorem \ref{thm:FS}, which is the natural dynamical analogue to Theorem \ref{thm:HughesKeatingOConnell1} by Hughes-Keating-O'Connell, i.e. we prove that the convergence of the real and imaginary parts of the logarithm of the characteristic polynomial of unitary Brownian motion toward Gaussian free fields on the cylinder, as the matrix dimension goes to infinity, holds in certain suitable Sobolev spaces\footnote{Several Sobolev spaces are involved because we can improve the regularity with respect to one of the variables at the cost of sacrificing some regularity with respect to the other variable.}, whose regularity we prove to be optimal. Our result is also related to the work of Spohn \cite{Spohn1998}, from which the identification of the above limit as the Gaussian free field first followed, albeit in a different function space. \\

In the course of this research we also proved a Wick-type identity, which we include in the last section of this chapter, as it might be of independent interest. The identity allows us to express the second moment of the trace of arbitrary products of a GUE matrix $H$ and an independent Haar-distributed unitary matrix $U$ in terms of moments of traces of $U$ only. When the dimension $n$ is large enough, the Diaconis-Shahshahani (and Diaconis-Evans) theorem on moments of traces of unitary matrices \cite{Diaconis1994}, Theorem \ref{thm:DiaconisShahshahani} in this thesis, allows us to then compute this new expression explicitly as a polynomial in $n$. This chapter is based on joint work with Isao Sauzedde \cite{FS22}.

\section{Context and Statement of Results}
We let $(U_t)_{t \in \mathbb{R}}$ be a unitary Brownian motion (see Section \ref{section:CCG} for a definition) at its equilibrium measure, which is Haar measure, and recall the definition of its characteristic polynomial:
\begin{align}
p_{U(n)}(t,\theta): = \text{det}\left(I-e^{-i\theta}U_t\right) = \prod_{k=1}^n (1-e^{i(\theta_k(t)-\theta)}), \quad (t,\theta) \in \mathbb{R} \times [0,2\pi),
\end{align}
where $\theta_1,...,\theta_n(t)$ denote the eigenangles of $U_t$. We define its logarithm by
\begin{align}
\begin{split}
\log p_{U(n)}(t,\theta):=& \sum_{k = 1}^n \log (1-e^{i(\theta_k(t) -\theta)}),
\end{split}
\end{align}
with the branches on the RHS being the principal branches, such that
\begin{align}
\Im \log (1-e^{i(\theta_k(t) - \theta)}) \in \left( - \frac{\pi}{2}, \frac{\pi}{2} \right],
\end{align}
with $\Im \log 0 := \pi/2$. Since
\begin{align}
\log (1 - z) = - \sum_{k = 1}^\infty \frac{z^k}{k}
\end{align}
for $|z| \leq 1$, where for $z = 1$ both sides equal $-\infty$, and by the identity $\log \det = \Tr \log$, we see that the Fourier expansion of $\log p_{U(n)}$ w.r.t. the spatial variable $\theta$ is given as follows: 
\begin{align}
\begin{split}
\log p_{U(n)}(t,\theta) =& - \sum_{k = 1}^{\infty} \frac{\Tr (U_t^k)}{k} e^{-ik\theta}.
\end{split}
\end{align}
By Theorem \ref{thm:DiaconisShahshahani} it follows that for any $l \in \mathbb{N}$ and fixed $t \in \mathbb{R}$, the random variables $k^{-1}\Tr (U_t^{k})$, $k = 1,...,l$, converge in distribution to independent complex Gaussians $A_k$, $k = 1,...,l$, whose real and imaginary parts are independent centered real Gaussians with variance $1/(2k)$. By also showing tightness in the negative Sobolev spaces $H^{-\epsilon}_0(S^1)$ (see Section \ref{section:Sobolev} for a definition), Hughes, Keating and O'Connell proved that for any fixed $t \in \mathbb{R}$, $\log p_{U(n)}(t, \cdot)$ converges to a generalized Gaussian field as $n \rightarrow \infty$. Their result was already stated as Theorem \ref{thm:HughesKeatingOConnell1} in Chapter \ref{chapter:Intro}, but for the convenience of the reader we restate it here:

\begin{theorem} [Hughes, Keating, O'Connell \cite{Hughes2005}] \label{thm:HughesKeatingOConnell}
    For any $\epsilon > 0$, the sequence of pairs of fields $\left( \Re \log p_{U(n)}(\cdot), \Im \log p_{U(n)}(\cdot) \right)_{n \in \mathbb{N}}$ converges in distribution in $H_0^{-\epsilon}(S^1) \times H_0^{-\epsilon}(S^1)$ to the pair of generalized Gaussian fields $(\Re Z, \Im Z)$, where  
    \begin{align}
        Z(\theta) =& \sum_{k = 1}^\infty A_k e^{-ik\theta},
    \end{align}
    with $A_k$ being complex Gaussians whose real and imaginary parts are independent centered Gaussians with variance $1/(2k)$.
\end{theorem}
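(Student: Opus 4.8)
The plan is to invoke the standard criterion for convergence in law of random elements of a separable Hilbert space: establish (i) convergence of the joint law of every finite collection of Fourier coefficients, and (ii) tightness of the sequence in $H_0^{-\epsilon}(S^1)\times H_0^{-\epsilon}(S^1)$; since cylinder sets over finitely many Fourier coefficients generate the Borel $\sigma$-algebra of the Hilbert space, (i) and (ii) together with Prokhorov's theorem force weak convergence to the prescribed limit.

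\textbf{Finite-dimensional distributions.} From the Fourier expansion $\log p_{U(n)}(\theta)=-\sum_{k\ge1}\tfrac{\Tr(U^k)}{k}e^{-ik\theta}$ one reads off that, for each $k\ge1$, the $e^{-ik\theta}$- and $e^{ik\theta}$-Fourier modes of $\Re\log p_{U(n)}$ are $-\tfrac1{2k}\Tr(U^k)$ and $-\tfrac1{2k}\overline{\Tr(U^k)}$, and those of $\Im\log p_{U(n)}$ are $-\tfrac1{2ik}\Tr(U^k)$ and $\tfrac1{2ik}\overline{\Tr(U^k)}$; in particular every finite collection of Fourier modes of $(\Re\log p_{U(n)},\Im\log p_{U(n)})$ is a fixed $\mathbb{C}$-linear functional of the vector $(\Tr(U^k))_{1\le k\le \ell}$ for $\ell$ large enough. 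By Theorem \ref{thm:DiaconisShahshahani} and the remark following it, this vector converges in distribution, as $n\to\infty$, to $(\sqrt k\,Z_k)_{1\le k\le\ell}$ with $(Z_k)$ i.i.d.\ standard complex Gaussians (its mixed moments in fact stabilize once $n\ge\ell$ and equal the Gaussian ones). The continuous mapping theorem then yields joint convergence of those Fourier modes to the corresponding modes of $(\Re Z,\Im Z)$, where $A_k$ is taken to match the distributional limit $-Z_k/\sqrt k$ of $-\Tr(U^k)/k$, whose real and imaginary parts are independent centered Gaussians of variance $1/(2k)$.

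\textbf{Tightness.} Fix $\epsilon'\in(0,\epsilon)$. Using the bound $\mathbb{E}_{U(n)}\big(|\Tr(U^k)|^2\big)\le\min\{k,n\}\le k$, valid uniformly in $n$ (for $k\le n$ this is contained in Theorem \ref{thm:DiaconisShahshahani}; in general it is the standard estimate, cf.\ (\ref{eqn:trace bound})), one gets
\begin{align}
\mathbb{E}\Big(\big\|\Re\log p_{U(n)}\big\|_{H_0^{-\epsilon'}}^2\Big)
&=\tfrac12\sum_{k\ge1}k^{-2\epsilon'-2}\,\mathbb{E}\big(|\Tr(U^k)|^2\big)\\
&\le\tfrac12\sum_{k\ge1}k^{-1-2\epsilon'}<\infty ,
\end{align}
with the identical estimate for $\Im\log p_{U(n)}$ and with the right-hand side independent of $n$. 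By Markov's inequality, $\sup_n\mathbb{P}\big(\|\Re\log p_{U(n)}\|_{H_0^{-\epsilon'}}>R\big)=O(R^{-2})$ and likewise for $\Im\log p_{U(n)}$; since the embedding $H_0^{-\epsilon'}(S^1)\hookrightarrow H_0^{-\epsilon}(S^1)$ is compact (Rellich), closed balls of $H_0^{-\epsilon'}$ are compact subsets of $H_0^{-\epsilon}$, and hence the sequence $\big((\Re\log p_{U(n)},\Im\log p_{U(n)})\big)_n$ is tight in $H_0^{-\epsilon}(S^1)\times H_0^{-\epsilon}(S^1)$. The same computation, now with $\mathbb{E}(|A_k|^2)=1/k$, shows that $(\Re Z,\Im Z)$ is almost surely a well-defined element of $H_0^{-\epsilon}(S^1)\times H_0^{-\epsilon}(S^1)$.

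\textbf{Conclusion and main obstacle.} By Prokhorov the sequence is relatively compact in $H_0^{-\epsilon}\times H_0^{-\epsilon}$; any subsequential limit, by continuity of the Fourier-coefficient functionals, has the finite-dimensional distributions of $(\Re Z,\Im Z)$, which — being cylinder functionals generating the Borel $\sigma$-algebra — determine the law, so the limit is unique and the full sequence converges. I expect the main difficulty to be the tightness estimate together with the bookkeeping needed to pass from the one-sided Fourier series of $\log p_{U(n)}$ to the two-sided series of its real and imaginary parts with constants uniform in $n$; a secondary point needing care is the rigorous justification that convergence of finite-dimensional distributions plus tightness pins down the weak limit.
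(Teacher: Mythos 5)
Your proposal is correct and follows essentially the same route the paper attributes to Hughes--Keating--O'Connell and itself uses for the dynamical analogue (Theorem \ref{thm:FS} in Chapter \ref{chapter:UBM}): convergence of finite-dimensional (Fourier-coefficient) distributions via the Diaconis--Shahshahani moment theorem, tightness from the uniform bound $\mathbb{E}\bigl(|\Tr(U^k)|^2\bigr)\le\min\{k,n\}$ together with the compact embedding $H_0^{-\epsilon'}\hookrightarrow H_0^{-\epsilon}$, and Prokhorov plus uniqueness of the subsequential limit. The paper does not reprove this cited result, but your argument matches the strategy it describes and employs.
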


In the dynamic case, i.e. when considering $\log p_{U(n)}$ also as a function of $t$, the natural candidate for the limit of $\log_{U(n)}(\cdot, \cdot)$ is the infinite-dimensional Orstein-Uhlenbeck process
\begin{align} \label{eqn:Z}
    Z(t,\theta) := \sum_{k = 1}^\infty A_k(t) e^{-ik\theta},
\end{align}
where $A_k(\cdot)$, $k \in \mathbb{N}$, are independent complex Ornstein-Uhlenbeck processes at their stationary distribution, i.e. (up to a linear time change) solutions to the SDEs
\begin{align}
    \label{eq:SDEA}
    \text{d}A_k(t) = - kA_k(t) \text{d}t + \text{d}\left(W_k(t) + i\tilde{W}_k(t) \right),
\end{align}
with the real and imaginary parts of $A_k(t)$ at any fixed time $t$ being independent real Gaussians with variance $1/(2k)$, and $W_k$, $\tilde{W}_k$, $k \in \mathbb{N}$, denoting real standard Brownian motions. The identification of $Z$ as the limit of $\log p_{U(n)}$, and in particular the weak convergence of the dynamic Fourier coefficients $t \mapsto k^{-1} \Tr (U_t^k)$, $k \in \mathbb{N}$, to the independent Ornstein-Uhlenbeck processes $A_k$, $k \in \mathbb{N}$, follows from the work of Spohn \cite{Spohn1998}, as we now explain.\\

For fixed $\epsilon > 0$, and real-valued functions $f \in H_0^{3/2 + \epsilon}(S^1, \mathbb{R})$, Spohn considered the linear statistics 
\begin{align} \label{eq:Spohn xi_n}
    \xi_n(t,f): = \sum_{j=1}^n f(e^{i\theta_j(t)}) = \sum_{j = 1}^n \sum_{k \in \mathbb{Z} \setminus \{0\}} f_k e^{ik\theta_j(t)} = \sum_{k \in \mathbb{Z} \setminus \{0\}} f_k \Tr (U_t^k),
\end{align}
of the eigenvalues $e^{i\theta_1(t)},...,e^{i\theta_n(t)}$ of unitary Brownian motion (in fact he more generally considered interacting particles on the unit circle with different repulsion strengths). Since $H_0^{-3/2 - \epsilon}(S^1, \mathbb{R})$ is the dual space of $H_0^{3/2 + \epsilon}(S^1, \mathbb{R})$, one can consider $\xi_n$ as a random variable in $\mathcal{C}(\mathbb{R}, H^{-3/2-\epsilon}(S^1, \mathbb{R} )):= \{f:\mathbb{R} \rightarrow H^{-3/2-\epsilon}(S^1, \mathbb{R}): \, f \text{ is continuous}\}$, and (with an abuse of notation) represent it as a time-dependent formal Fourier series, and see that
\begin{align}
\begin{split}
    \xi_n(t,\theta) = \sum_{k \in \mathbb{Z} \setminus \{0\}} \Tr (U_t^{-k}) e^{ik\theta} =& - i \partial_\theta \left( \sum_{k \in \mathbb{Z}\setminus \{0\}} \frac{\Tr (U_t^{-k})}{k} e^{ik\theta} \right) = - 2  \partial_{\theta} \Im  \log p_{U(n)}(t,\theta) \\ 
    =& \sqrt{-\partial_\theta^2} \left( \sum_{k \in \mathbb{Z}\setminus \{0\}} \frac{\Tr (U_t^{-k})}{|k|} e^{ik\theta} \right) = 2 \sqrt{-\partial_\theta^2} \Re \log p_{U(n)}(t,\theta),
\end{split}
\end{align}
where $\sqrt{-\partial_\theta^2}$ denotes the operator that multiplies the $k$-th Fourier coefficient by $|k|$. Spohn's result, reformulated to our setting, is then as follows: 

\begin{theorem}[Spohn \cite{Spohn1998}] \label{thm:Spohn}
    Let $\epsilon>0$ and endow $\mathcal{C}(\mathbb{R}, H^{-3/2-\epsilon}(S^1, \mathbb{R}))$ with the topology of locally uniform convergence. Then, as $n \rightarrow \infty$, the random variables $\xi_n$ converge in distribution to the infinite-dimensional Ornstein-Uhlenbeck process (where $A_{-k} = \overline{A_k}$ for $k \in \mathbb{N}$ and $\sgn(k) = 1_{k > 0} - 1_{k < 0}$) %\todo{does a similar statement follow for the real part, using that proof technique, even though it cannot be written as $\sum f(z_j)$?}
\begin{align} \label{eq:xi}
\begin{split}
    \xi(t,\theta) := \sum_{k \in \mathbb{Z} \setminus \{0\}} |k|A_{-k}(t) e^{ik\theta} =& - i \partial_\theta \left( \sum_{k \in \mathbb{Z}\setminus \{0\}}^\infty \sgn (k) A_{-k}(t) e^{ik\theta} \right) = - 2 \partial_\theta \Im Z(t,\theta) \\
    =& \sqrt{-\partial_\theta^2} \left( \sum_{k \in \mathbb{Z}\setminus \{0\}} A_{-k}(t) e^{ik\theta} \right) = 2 \sqrt{-\partial_\theta^2} \Re Z(t,\theta).
\end{split}
\end{align}

\end{theorem}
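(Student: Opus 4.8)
This is a reformulation of Spohn's central limit theorem for the fluctuation field of Dyson's Brownian motion on the circle \cite{Spohn1998}, so the quickest route is to make the correspondence with his setting explicit and invoke that result. The only translation needed is \eqref{eqn:UBM eigenvalues}, which identifies the eigenangle process $(\theta_1(t),\dots,\theta_n(t))_t$ of $(U_t)_t$ in our normalisation with the interacting particle system Spohn studies (in the case corresponding to the Vandermonde interaction), together with \eqref{eq:Spohn xi_n}, which writes $\xi_n$ as the recentred empirical measure $\sum_{j=1}^n\delta_{e^{i\theta_j(t)}}$ and hence as a time-dependent formal Fourier series with modes $\Tr(U_t^{\pm k})$; the limiting object in \eqref{eq:xi} is then exactly Spohn's limiting infinite-dimensional Ornstein--Uhlenbeck process, written in Fourier modes. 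For completeness I would also indicate the ingredients of the underlying argument, which follows the standard scheme of finite-dimensional convergence together with tightness.

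\emph{Dynamics of the modes.} Applying It\^o's formula to $\Tr(U_t^k)$, using the stochastic differential equation \eqref{eqn:UBM SDE} and the explicit orthonormal basis of skew-Hermitian matrices given there, produces a semimartingale decomposition
\begin{equation}
\d\,\Tr(U_t^k)=-k\,\Tr(U_t^k)\,\d t+\d M_t^{(k)}+\frac1n\,\d A_t^{(k)},
\end{equation}
where $M^{(k)}$ is a local martingale whose brackets $\langle M^{(k)},\overline{M^{(\ell)}}\rangle$ converge, as $n\to\infty$, to those of the Ornstein--Uhlenbeck family \eqref{eq:SDEA} (in particular the cross-brackets vanish in the limit), the drift rate $-k$ being precisely the mean-reversion rate of $A_k$, and $A^{(k)}$ is a polynomial in $\Tr(U_t^j)$, $1\le j\le 2k$, quadratic in these traces. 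Fixing $K\in\mathbb N$, the trace estimate $\mathbb E|\Tr(U_t^k)|^2\le\min\{k,n\}$, a consequence of Theorem \ref{thm:DiaconisShahshahani}, shows that the correction $\frac1n A_t^{(k)}$ is $\mathcal O(1/\sqrt n)$ in $L^2$, uniformly on compact time intervals, for each fixed $k$. Combining the convergence of the brackets, the convergence of the initial data $(\Tr(U_0^k))_{k\le K}$ from Theorem \ref{thm:DiaconisShahshahani}, and a stability theorem for stochastic differential equations driven by converging semimartingales, one obtains that $t\mapsto(\Tr(U_t^k))_{1\le k\le K}$ converges in distribution, in $\mathcal C(\mathbb R,\mathbb C^K)$ with the locally uniform topology, to the corresponding family of independent Ornstein--Uhlenbeck processes; this is the convergence of all finite-dimensional truncations of $\xi_n$ to those of $\xi$.

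\emph{Tightness.} It remains to show $(\xi_n)_n$ is tight in $\mathcal C(\mathbb R,H^{-3/2-\epsilon}(S^1,\mathbb R))$. The spatial marginals are controlled by the trace estimate: $\sup_n\mathbb E\|\xi_n(t,\cdot)\|_{H^{-3/2-\epsilon}}^2\le C\sum_{k\ge1}k^{-2-2\epsilon}<\infty$, locally uniformly in $t$. Temporal equicontinuity follows from a Kolmogorov--Chentsov criterion in the time variable with values in $H^{-3/2-\epsilon}$, for which one needs increment bounds $\mathbb E\|\xi_n(t,\cdot)-\xi_n(s,\cdot)\|_{H^{-3/2-\epsilon}}^{2p}\le C|t-s|^{p}$ for some $p>1$; via the Burkholder--Davis--Gundy inequality for $M^{(k)}$ and H\"older's inequality together with the trace estimate for the drift and $\frac1n$ terms, these reduce to mode-by-mode bounds $\mathbb E|\Tr(U_t^k)-\Tr(U_s^k)|^{2p}\le C_{k,p}|t-s|^p$ with $C_{k,p}$ polynomial in $k$. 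The regularity index $-3/2-\epsilon$ is taken generously so that, once these are in hand, the sum over $k$ converges with room to spare; the sharper value is the static regularity $H^{-1-\epsilon}$ — obtained by differentiating the $H^{-\epsilon}$-valued limit of $\log p_{U(n)}$ in Theorem \ref{thm:HughesKeatingOConnell} — and recovering it requires the mode-by-mode improvement $\mathbb E|\Tr(U_t^k)-\Tr(U_s^k)|^2\lesssim\min\{k^2|t-s|,k\}$, which is the route taken in the stronger Theorem \ref{thm:FS}.

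The genuine difficulty, and the one point that is not a routine Gaussian computation, is obtaining the increment bounds uniformly in $n$: one must verify that the quadratic drift term $A^{(k)}$ and the curvature corrections of the unitary Brownian flow do not spoil the diffusive scaling $|t-s|$ of each mode's increment. This is exactly where the uniformity in $n$ of the trace estimate $\mathbb E|\Tr(U_t^k)|^2\le\min\{k,n\}$ — and of its higher-moment analogues, which in the unitary case again follow from Theorem \ref{thm:DiaconisShahshahani} once $n$ exceeds the relevant power — is essential.
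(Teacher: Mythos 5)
Your primary route coincides with what the paper itself does: Theorem \ref{thm:Spohn} is not proved there but cited from Spohn, with the only work being the dictionary between his particle system and the eigenangle dynamics \eqref{eqn:UBM eigenvalues}, the Fourier expansion \eqref{eq:Spohn xi_n} of $\xi_n$, and the identification of his SPDE-defined limit with the Ornstein--Uhlenbeck Fourier series \eqref{eq:xi} (handled in Remark \ref{remark:SPDE}), which is exactly the correspondence you make explicit before invoking his result. Your additional sketch of a self-contained argument (It\^o formula for the modes, bracket convergence, Kolmogorov-type tightness in $\mathcal{C}(\mathbb{R},H^{-3/2-\epsilon})$) goes beyond what the paper provides and is plausible in outline, but it is not needed for the statement and is not part of the paper's treatment.
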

\begin{remark} \label{remark:SPDE}
    Spohn defined $\xi$ as a Gaussian process indexed by $(t,f) \in \mathbb{R} \times H^{3/2 + \epsilon}(S^1,\mathbb{R})$ (as in (\ref{eq:Spohn xi_n})), i.e. $\xi(t,f) = \langle \xi(t,\cdot), f \rangle$, which satisfies a certain stochastic partial differential equation. To see that the two definitions are equivalent, one can either use Gaussianity and do a covariance calculation, see Section 2.2 in \cite{Bourgade2022}, or see that the SDEs for $\xi(t,e^{ik \cdot}/k) := \langle \xi(t,\cdot), e^{ik \cdot }/k \rangle = \langle \xi(t, \cdot) , \cos (k \cdot)/k) \rangle + i \langle \xi (t, \cdot), \sin (k \cdot )/k \rangle$, $k \in \mathbb{Z} \setminus \{0\}$, that follow from Spohn's representation, are exactly the SDEs (\ref{eq:SDEA}) that define $A_k$. \\
    In particular, Spohn's result implies that for any $l \in \mathbb{N}$
    \begin{align}
        \left( \xi_n(\cdot,e^{ik\cdot}/k) \right)_{k = 1}^l = \left( |k|^{-1} \Tr(U^{-k}_{(\cdot)}) \right)_{k = 1}^l \implies \left( \xi(\cdot,e^{ik\cdot}/k) \right)_{k = 1}^l = \left( A_{k}(\cdot) \right)_{k = 1}^l,
    \end{align}
    weakly as random variables in $\mathcal{C}(\mathbb{R},\mathbb{C}^l)$, i.e. it implies convergence of the dynamic Fourier coefficients of $\log p_{U(n)}$ to those of $Z$.
\end{remark}

We now recall Theorem \ref{thm:FS}, which on the one hand can be seen as the natural dynamic analogue of Theorem \ref{thm:HughesKeatingOConnell}, and on the other hand as an analogue to Spohn's result (see below Remark \ref{remark:covariance}): 

\begin{theorem} \label{thm:UBM}
For any $s \in [0,\frac{1}{2})$, $\epsilon > s$, and $T>0$, the sequence of random fields $ \left( \log p_{U(n)} (\cdot,\cdot) \right)_{n \in \mathbb{N}}$ converges in distribution in the tensor product of Hilbert spaces $H^{s}([0,T]) \otimes  H^{-\epsilon}_0(S^1)$ (see Section \ref{section:Sobolev} for a definition) to the generalized Gaussian field $Z(\cdot,\cdot)$ in (\ref{eqn:Z}). \\
Furthermore, those regularity parameters $s$ and $-\epsilon$ are optimal, in the sense that for $s = 1/2$ or $s \geq \epsilon \geq 0$ the sequence $ \left( \log p_{U(n)} (\cdot,\cdot) \right)_{n \in \mathbb{N}}$ almost surely does not converge and $Z(\cdot,\cdot)$ is almost surely not an element of  $H^{s}([0,T]) \otimes  H^{-\epsilon}_0(S^1)$. 
\end{theorem}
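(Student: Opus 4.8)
The plan is to exploit that the tensor‑product norm is diagonal in the spatial Fourier variable: writing $\log p_{U(n)}(t,\theta)=\sum_{k\geq1}c_{n,k}(t)e^{-ik\theta}$ with $c_{n,k}(t)=-\Tr(U_t^k)/k$, and likewise $Z(t,\theta)=\sum_{k\geq1}A_k(t)e^{-ik\theta}$ with $A_k$ the complex Ornstein--Uhlenbeck process solving (\ref{eq:SDEA}), one has
\begin{equation}
\|\log p_{U(n)}\|_{H^s([0,T])\otimes H^{-\epsilon}_0(S^1)}^2=\sum_{k\geq1}k^{-2\epsilon-2}\|\Tr(U_{\cdot}^k)\|_{H^s([0,T])}^2.
\end{equation}
So everything rests on the single estimate
\begin{equation}
\mathbb{E}|\Tr(U_t^k)-\Tr(U_{t'}^k)|^2\leq C\min\{k^2|t-t'|,\,\min\{k,n\}\},\qquad t,t'\in[0,T],
\end{equation}
uniformly in $n,k$. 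The cap $\min\{k,n\}$ is the static bound (\ref{eqn:trace bound}). The factor $k^2|t-t'|$ on small time scales comes from Itô calculus applied to the Dyson equation (\ref{eqn:UBM eigenvalues}): the martingale part of $\mathrm{d}\Tr(U_t^k)$ has predictable bracket exactly $2k^2\,\mathrm{d}t$, while the drift, after using the left--increment property $U_t\overset{d}{=}U_{t'}V_{t-t'}$ with $V$ an independent unitary Brownian motion from the identity, reduces to expectations of traces of words in $U_{t'}$ and the skew--Hermitian increments of $B_n$; these are handled by integrating out the Gaussian matrix via the Wick--type identity of Proposition \ref{prop:orbits} and then applying Theorem \ref{thm:DiaconisShahshahani}.

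\textbf{Convergence.} Inserting the increment estimate into the fractional seminorm $\int_0^T\!\!\int_0^T|g(t)-g(t')|^2|t-t'|^{-1-2s}\,\mathrm{d}t\,\mathrm{d}t'$ of $H^s([0,T])$ and splitting the integral at $|t-t'|=1/k$ gives $\mathbb{E}\|\Tr(U_{\cdot}^k)\|_{H^{s}([0,T])}^2\leq C_{s,T}\,k^{1+2s}$ for every $s\in[0,1/2)$ (with a logarithmic factor at $s=0$); the restriction $s<1/2$ is precisely what makes $\int_0^{1/k}u^{-2s}\,\mathrm{d}u$ finite. Hence $\sup_n\mathbb{E}[\|\log p_{U(n)}-\Pi_N\log p_{U(n)}\|_{H^s([0,T])\otimes H^{-\epsilon}_0(S^1)}^2]\leq C\sum_{k>N}k^{2s-1-2\epsilon}$, which tends to $0$ as $N\to\infty$ exactly when $s<\epsilon$; here $\Pi_N$ truncates the $\theta$--Fourier series to $|k|\leq N$. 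For fixed $N$ the truncation $\Pi_N\log p_{U(n)}$ is the finite vector $(c_{n,1},\dots,c_{n,N})$ of $H^s([0,T])$--valued processes, and it converges to $(A_1,\dots,A_N)$: in $\mathcal{C}([0,T])$ this is Spohn's theorem (Theorem \ref{thm:Spohn}), and the uniform bound $\sup_n\mathbb{E}\|c_{n,k}\|_{H^{s'}([0,T])}^2<\infty$ for some $s'\in(s,1/2)$, together with the compact embedding $H^{s'}([0,T])\hookrightarrow\hookrightarrow H^{s}([0,T])$ and a standard tightness/identification argument, upgrades this to convergence in the finite direct sum of copies of $H^s([0,T])$. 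Since moreover $Z\in H^s([0,T])\otimes H^{-\epsilon}_0(S^1)$ almost surely when $s<\epsilon$ (the same series with $A_k$ in place of $c_{n,k}$ has finite expectation, using $\mathbb{E}\|A_k\|_{H^s}^2\asymp k^{2s-1}$), the approximation principle Theorem \ref{thm:approximation}, applied with $\xi_n=\log p_{U(n)}$, $\xi=Z$, $\eta^N_n=\Pi_N\log p_{U(n)}$ and $\eta^N=\Pi_N Z$, yields $\log p_{U(n)}\Rightarrow Z$ in $H^s([0,T])\otimes H^{-\epsilon}_0(S^1)$; that $\Re Z$ and $\Im Z$ are Gaussian free fields on the cylinder is then the covariance computation of Remark \ref{remark:covariance}.

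\textbf{Optimality.} For $s=1/2$ the $k=1$ mode already obstructs: $A_1$ is an Ornstein--Uhlenbeck path with the Brownian local modulus of continuity, so $\int_0^T\!\!\int_0^T|A_1(t)-A_1(t')|^2|t-t'|^{-2}\,\mathrm{d}t\,\mathrm{d}t'=+\infty$ almost surely, whence $\|Z\|_{H^{1/2}([0,T])\otimes H^{-\epsilon}_0(S^1)}=+\infty$ a.s.\ for every $\epsilon$. For $0\leq\epsilon\leq s<1/2$ one runs the norm computation in reverse: $\mathbb{E}\|A_k\|_{H^s([0,T])}^2\asymp c\,k^{2s-1}$ (up to a logarithm when $s=0$), and since the $\|A_k\|_{H^s}^2$ are independent nonnegative Gaussian quadratic forms with $\sum_k k^{-2\epsilon}\mathbb{E}\|A_k\|_{H^s}^2\asymp\sum_k k^{2s-1-2\epsilon}=+\infty$, a second--moment (Kolmogorov three--series) argument gives $\sum_k k^{-2\epsilon}\|A_k\|_{H^s}^2=+\infty$ almost surely, i.e.\ $Z\notin H^s([0,T])\otimes H^{-\epsilon}_0(S^1)$ a.s. As any limit of the sequence $\log p_{U(n)}$ in one of these spaces would have to coincide with $Z$ (the limit is already identified in the weaker topologies in which it exists), the sequence almost surely does not converge there.

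\textbf{Main obstacle.} The decisive point is the uniform increment estimate. The martingale bracket is immediate, but the drift contribution requires bounding expectations of traces of arbitrarily long words in $U_t$ and the Gaussian increments with constants that stay sharp even when $k$ is comparable to or larger than the matrix size $n$ --- the regime where Theorem \ref{thm:DiaconisShahshahani} no longer applies verbatim and only the crude bound $\mathbb{E}|\Tr(U^k)|^2\leq Cn$ is available. Establishing it is exactly the purpose of the Wick--type identity of Proposition \ref{prop:orbits}, which performs the Gaussian integration explicitly and reduces the whole computation to moments of traces of powers of $U_t$.
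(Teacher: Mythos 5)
Your overall architecture is essentially the paper's: diagonalize the tensor norm in the spatial Fourier modes, reduce everything to a uniform-in-$(n,k)$ second-moment bound for the increments of $\Tr(U_t^k)$, identify the limit through Spohn's theorem, and prove optimality mode by mode (your $s=1/2$ argument is the paper's Girsanov-to-Brownian-motion reduction in disguise, and your $s\geq\epsilon$ argument is fine once you make explicit the Brownian-scaling representation $A_k(t)=\tfrac{e^{-kt}}{\sqrt{2k}}(B_k+i\tilde B_k)(e^{2kt})$, since divergence of $\sum_k k^{-2\epsilon}\mathbb{E}\|A_k\|_{H^s}^2$ alone does not force a.s.\ divergence of a sum of independent nonnegative terms --- the paper closes this with stochastic domination and Kolmogorov's zero--one law). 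Your truncation-plus-Theorem \ref{thm:approximation} scheme also works and differs only cosmetically from the paper's tightness argument (compact embedding $H^{s'}\otimes H_0^{-\epsilon'}\hookrightarrow H^{s}\otimes H_0^{-\epsilon}$, Markov, Prokhorov, then identification by finite-dimensional distributions).

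The genuine gap is exactly where you locate it, and it is not closed by what you cite. You need $\mathbb{E}|\Tr(U_t^k)-\Tr(U_{t'}^k)|^2\leq Ck^2|t-t'|$ on the scale $|t-t'|\lesssim k^{-1}$, uniformly in $n$ and $k$, including $k\gtrsim n$. Your proposed route via the Dyson/It\^o expansion makes the martingale bracket harmless, but the drift of $\mathrm{d}\Tr(U_t^k)$ contains the term $\tfrac{k}{n}\sum_{j=1}^{k-1}\Tr(U_t^j)\Tr(U_t^{k-j})$, so squaring the time-integrated drift produces \emph{fourth-order} mixed trace moments $\mathbb{E}\bigl[\Tr(U^j)\Tr(U^{k-j})\overline{\Tr(U^{j'})\Tr(U^{k-j'})}\bigr]$. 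Proposition \ref{prop:orbits} does not deliver these: it computes second moments of traces of words with GUE insertions (suited to the one-step expansion $U_{t+s}\approx(1+\sqrt{2s}H)U_t$), and after the Gaussian integration it still leaves you with expectations of \emph{products} of traces of powers of $U$, which Theorem \ref{thm:DiaconisShahshahani} controls only when the total degrees are at most $n$ --- precisely the regime you must leave. With the bounds that are actually available there ($|\Tr U^m|\leq n$, $\mathbb{E}|\Tr U^m|^2\leq \min\{m,n\}$) one only gets $\mathbb{E}|\mathrm{drift}|^2\lesssim k^4$ when $k\gtrsim n$, hence an increment bound of order $k^3|t-t'|$, and feeding that into the Slobodeckij seminorm forces $\epsilon>s+\tfrac12$, strictly weaker than the theorem. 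The paper avoids this entirely: it takes the exact two-time covariance $\mathbb{E}\bigl(\Tr(U_t^k)\overline{\Tr(U_0^k)}\bigr)=e^{-k(k\vee n)t/n}\sinh\bigl(\tfrac{k(k\wedge n)t}{n}\bigr)/\sinh\bigl(\tfrac{kt}{n}\bigr)$ from Corollary \ref{thm:multi time} (Bourgade--Falconet) and extracts $\mathbb{E}|\Tr(U_t^k-U_0^k)|^2\leq 4k^2t$ for $t\leq k^{-1}$ by elementary estimates on $\sinh$; the Wick-type identity of Proposition \ref{prop:orbits} is stated in the paper only as a byproduct of independent interest and is not used in the proof. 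So either import the Bourgade--Falconet covariance as the paper does, or supply genuinely new uniform bounds on mixed trace moments in the regime $k\gtrsim n$ --- as written, your "decisive point" is asserted, not proved.
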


\begin{remark} \label{remark:covariance}
A straightforward calculation shows that the covariance functions of $\Re Z$ and $\Im Z$ are given by
\begin{align}
\begin{split}
    & \mathbb{E}(\Re Z(t,\theta), \Re Z (t',\theta')) = \mathbb{E}(\Im Z(t,\theta), \Im Z (t',\theta'))  \\ 
    =& \sum_{k = 1}^\infty \frac{e^{-k|t - t'|}}{2k} \cos (k(\theta - \theta')) =
    \frac{1}{2} \log \frac{\max \{e^{-t}, e^{-t'}\}}{|e^{-t}e^{i\theta} - e^{-t'}e^{i\theta'}|}.
\end{split}
\end{align}
This implies that $\Re Z$ and $\Im Z$ are Gaussian free fields on the infinite cylinder $\mathbb{R} \times \mathbb{R} / 2\pi \mathbb{Z}$, since 
\begin{align}
    \frac{1}{2} \log \frac{\max \{e^{-t}, e^{-t'}\}}{|e^{-t}e^{i\theta} - e^{-t'}e^{i\theta'}|} = \pi (-\Delta)^{-1}(t,\theta,t',\theta'),
\end{align}
where $(-\Delta)^{-1}$ denotes the Green's function associated to the Laplacian $\Delta = \partial_t^2 + \partial_\theta^2$, see \cite[Section 2.2]{Bourgade2022}.
\end{remark}

%\begin{remark} \label{remark:fdd}
%    As Spohn's result already implies convergence of the finite dimensional distributions of the stochastic process $(t,k) \mapsto k^{-1} \Tr (U_t^{-k})$ to those of the stochastic process $(t,k) \mapsto A_k(t)$, proving Theorem \ref{thm:UBM} boils down to showing tightness of $\log p_{U(n)}$ in $H^{s}([0,T]) \otimes  H^{-\epsilon}_0(S^1)$.
%\end{remark}

To compare our Theorem \ref{thm:UBM} to Spohn's Theorem \ref{thm:Spohn}, observe that since by the latter it holds that $\sqrt{-\partial_\theta^2} \Re Z(t,\theta), \, \partial_\theta \Im Z \in \mathcal{C}([0,T], H_0^{-3/2 - \epsilon}(S^1, \mathbb{R}))$, for $\epsilon > 0$, it follows that $\Re Z, \, \Im Z \in \mathcal{C}([0,T], H_0^{- \epsilon}(S^1, \mathbb{R}))$ for $\epsilon > 1/2$. Further note that the inclusions %\todo{interpolation}
\begin{align}
\begin{split}
    H^{s}([0,T]) \otimes H_0^{- \epsilon}(S^1, \mathbb{R} ) \subset & \mathcal{C}([0,T], H_0^{- \epsilon}(S^1, \mathbb{R})) , \\
    H_0^{\epsilon}(S^1, \mathbb{R}) \subset & \mathcal{C}(S^1,\mathbb{R}),
\end{split}
\end{align}
are valid if and only if $s > 1/2$, and $\epsilon > 1/2$, respectively. Thus Spohn's Theorem \ref{thm:Spohn} implies that $Z$ can be considered as a continuous function in $t$, but then it is of regularity $-\epsilon < -1/2$ in $\theta$, and can thus only be integrated against functions of regularity $\epsilon > 1/2$, which are in particular continuous. Our Theorem \ref{thm:UBM} treats the case where $Z$ is considered to be of regularity $s < 1/2$ in $t$, in which it is not continuous in $t$ anymore, and shows that then $Z$ is of regularity $-\epsilon > -s > -1/2$ in $\theta$, and can thus be integrated against functions of regularity $\epsilon < s < 1/2$, which includes discontinuous functions.\\

Recently Bourgade and Falconet proved the first multi-time generalization of Fisher-Hartwig asymptotics \cite{Bourgade2022}[Theorem 1.2] which allowed them to show that for $\gamma \in (0,2\sqrt{2})$ the random measures on $\mathbb{R} \times \mathbb{R} / 2\pi \mathbb{Z}$
\begin{align}
    \frac{|p_n(t,\theta)|^\gamma}{\mathbb{E}\left( |p_n(t,\theta)|^\gamma \right)} \text{d}t \text{d}\theta,
\end{align}
converge weakly to a Gaussian multiplicative chaos measure, which can be formally written as
\begin{align}
    \frac{e^{\gamma \Re Z(t,\theta)}}{\mathbb{E}\left( e^{\gamma \Re Z(t,\theta)} \right)} \text{d}t \text{d}\theta.
\end{align}
Their dynamic Fisher-Hartwig asymptotics are related to this chapter since they in particular imply that the exponential moments of the linear statistics
\begin{align}
    \xi_n(t,f) = \langle \xi_n(t, \cdot), f \rangle,
\end{align}
converge to the exponential moments of $\xi(t,f) := \langle \xi(t, \cdot), f \rangle$, jointly for any finite number of $f$ in a certain class of functions, and times $t \in \mathbb{R}$, thus providing another proof for the convergence of the dynamic Fourier coefficients of $\log p_{U(n)}$ to those of $Z$. The specific result from Bourgade and Falconet that we use in our proof of Theorem \ref{thm:UBM} is the following \cite[Corollary 3.2]{Bourgade2022}:

\begin{corollary}[Bourgade, Falconet] \label{thm:multi time}
    Let $(e^{i\theta_1(t)},...,e^{i\theta_n(t)})_{t \geq 0}$ denote the eigenvalue process of unitary Brownian motion, started at Haar measure. For real functions $f,g \in H_0^{1/2}(S^1, \mathbb{R})$, $n \in \mathbb{N}$ and $t \geq 0$, it holds that 
    \begin{align}
        &\mathbb{E} \Big[ \xi_n(0,f) \xi_n(t,g) \Big]
        = \hspace{-0.2cm}\sum_{|k| \leq n - 1} \hspace{-0.2cm} f_k g_{-k} \sgn (k) e^{-|k|t} \frac{\sinh (\frac{k^2t}{n})}{\sinh (\frac{kt}{n})} + \sum_{|k| \geq n} f_k g_{-k} e^{-\frac{k^2t}{n}} \frac{\sinh (kt)}{\sinh (\frac{kt}{n})}.
    \end{align}
\end{corollary}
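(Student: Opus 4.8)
The plan is to reduce the two-time covariance of the linear statistics to the two-time covariance of traces of powers of unitary Brownian motion, and then to evaluate the latter explicitly via the representation theory of $U(n)$ and the heat semigroup. First I would unfold the linear statistics into Fourier modes: since $f,g\in H_0^{1/2}(S^1,\mathbb R)$ have vanishing zeroth coefficient and the eigenvalues lie on the unit circle, $\xi_n(t,f)=\sum_{k\neq 0}f_k\Tr(U_t^k)$ with $f_{-k}=\overline{f_k}$, $\Tr(U_t^{-k})=\overline{\Tr(U_t^k)}$, and $\xi_n(t,f)\in L^2$, so
\begin{align}
\mathbb E[\xi_n(0,f)\,\xi_n(t,g)]=\sum_{j,k\neq 0}f_j g_k\,\mathbb E[\Tr(U_0^j)\Tr(U_t^k)].
\end{align}
The crucial symmetry is that $(e^{i\phi}U_t)_{t\ge0}$ has the same law as $(U_t)_{t\ge0}$ for every fixed $\phi\in\mathbb R$: Haar measure is invariant under multiplication by the scalar unitary $e^{i\phi}I$, and the SDE $\mathrm dU_t=\sqrt2\,U_t\,\mathrm dB_n(t)-U_t\,\mathrm dt$ is left-invariant. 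Hence $\mathbb E[\Tr(U_0^j)\Tr(U_t^k)]=e^{i(j+k)\phi}\mathbb E[\Tr(U_0^j)\Tr(U_t^k)]$ for all $\phi$, so this expectation vanishes unless $j+k=0$. Writing $p_k(U):=\Tr(U^k)$ and conditioning on $U_0$, for $k\ge1$ we obtain $\mathbb E[\Tr(U_0^{-k})\Tr(U_t^k)]=\langle P_tp_k,p_k\rangle_{L^2(U(n))}=:c_k(t)$, where $P_t=e^{t\Delta}$ is the heat semigroup on $U(n)$, whose generator is the Laplace--Beltrami operator $\Delta$ with the excerpt's normalisation (the $\sqrt2$ in the noise cancels the Stratonovich correction, leaving $\Delta$ rather than $\tfrac12\Delta$). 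Self-adjointness of $P_t$ makes $c_k(t)=\|P_{t/2}p_k\|^2\ge0$ real, and the modes with $j=-k<0$ follow by complex conjugation (or by reversibility of the stationary process), giving
\begin{align}
\mathbb E[\xi_n(0,f)\,\xi_n(t,g)]=\sum_{k\neq 0}f_k g_{-k}\,c_{|k|}(t).
\end{align}

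Next I would compute $c_k(t)$ by decomposing $p_k$ into irreducible characters of $U(n)$. The Frobenius--Murnaghan--Nakayama identity $p_k=\sum_{r=0}^{k-1}(-1)^r s_{(k-r,1^r)}$ in the ring of symmetric functions, specialised to $n$ variables, together with the vanishing of Schur functions indexed by partitions of length $>n$, gives $p_k=\sum_{r=0}^{\min(k-1,n-1)}(-1)^r\chi_{(k-r,1^r)}$, a signed sum of pairwise distinct irreducible polynomial characters. Since $\Delta\chi_\lambda=-\frac1n C_\lambda\chi_\lambda$, where $C_\lambda=\sum_i\lambda_i(\lambda_i+n+1-2i)$ is the quadratic Casimir eigenvalue and the factor $\frac1n$ comes from the metric $\langle A,B\rangle=n\Tr(AB^*)$ used to define $B_n$, and since a short computation gives $C_{(k-r,1^r)}=k(n+k-1-2r)$, orthonormality of characters yields
\begin{align}
c_k(t)=\sum_{r=0}^{\min(k-1,n-1)}e^{-\frac tn k(n+k-1-2r)}.
\end{align}
Evaluating this geometric sum in $r$ (splitting into $1\le k\le n-1$ and $k\ge n$, and using $\sum_{r=0}^{m-1}e^{2ar}=e^{a(m-1)}\sinh(am)/\sinh(a)$) produces $c_k(t)=e^{-kt}\sinh(k^2t/n)/\sinh(kt/n)$ for $1\le k\le n-1$ and $c_k(t)=e^{-k^2t/n}\sinh(kt)/\sinh(kt/n)$ for $k\ge n$. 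Substituting into the mode expansion and using $\sinh(kt/n)=\sgn(k)\sinh(|k|t/n)$ and $k^2=|k|^2$ rewrites it in exactly the stated form; the boundary value $t=0$, with the $\sinh$-ratios read as the limit $\min(|k|,n)$, is recovered by continuity and is consistent with the Diaconis--Shahshahani identity $\mathbb E|\Tr(U^k)|^2=\min(k,n)$.

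The conceptual input is light, so I expect the only genuine difficulty to be bookkeeping: making the three normalisation constants — the $\sqrt2$ multiplying the noise, the factor $n$ in the metric $n\Tr(AB^*)$, and the resulting $\frac1n$ in the Casimir eigenvalue — interact correctly, and correctly handling the truncation of the hook decomposition when $k\ge n$. As a consistency check, $k=1$ gives $\Delta p_1=-p_1$, hence $\mathbb E[\Tr(U_t)\mid U_0]=e^{-t}\Tr(U_0)$ and $c_1(t)=e^{-t}$, matching the stated formula with $\sinh(t/n)/\sinh(t/n)=1$. One could alternatively shortcut the middle step by invoking known closed forms for the heat kernel on $U(n)$ (e.g.\ from work of L\'evy or Rains), but the character computation sketched above is self-contained.
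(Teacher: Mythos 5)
The paper does not prove this statement: it is imported verbatim as \cite[Corollary 3.2]{Bourgade2022}, so there is no in-paper proof to compare against. Your derivation is correct and complete as a self-contained proof. The reduction to diagonal modes via invariance of the process under $U_t\mapsto e^{i\phi}U_t$ is right (both Haar measure and the SDE are invariant under left multiplication by a fixed unitary), the identification of the generator as $\Delta$ rather than $\tfrac12\Delta$ matches the paper's Remark on normalisation ($\tilde U_{2t}=U_t$), and the character computation checks out: the hook expansion $p_k=\sum_{r=0}^{\min(k-1,n-1)}(-1)^r\chi_{(k-r,1^r)}$, the Casimir value $C_{(k-r,1^r)}=k(n+k-1-2r)$, and the geometric sum all give exactly $c_k(t)=e^{-kt}\sinh(k^2t/n)/\sinh(kt/n)$ for $k\le n-1$ and $e^{-k^2t/n}\sinh(kt)/\sinh(kt/n)$ for $k\ge n$, and the $\sgn(k)$ factor in the stated formula is precisely what makes the $k<0$ terms equal $c_{|k|}(t)$. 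Your sanity checks at $t=0$ (recovering $\min(|k|,n)$, consistent with Diaconis--Shahshahani) and at $k=1$ are the right ones. Two points you gloss over but which are easily repaired: the interchange of the double Fourier sum with the expectation should be justified by noting that $\xi_n(t,f)$ is an $L^2$-limit of the partial sums (using $\mathbb E|\Tr(U^k)|^2=\min(|k|,n)$ and orthogonality of distinct modes, which is where $f,g\in H_0^{1/2}$ is used), and the phrase about the $\sqrt2$ ``cancelling the Stratonovich correction'' is loose --- the clean statement is that scaling the noise by $\sqrt2$ doubles the generator, i.e.\ the time change $\tilde U_{2t}=U_t$ already recorded in the paper. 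Neither affects the validity of the argument.
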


\section{Sobolov spaces and their Tensor Product} \label{section:Sobolev}

Consider the space of square integrable $\mathbb{C}$-valued functions on the unit circle, with vanishing mean:
\begin{equation}
    L^2_0(S^1)= \left\{ f(\theta) = \sum_{k \in \mathbb{Z}} f_k e^{ik\theta} : \sum_{k \in \mathbb{Z}} |f_k|^2 < \infty, f_0=0 \right\}.
\end{equation}
For $s \geq 0$, we define $H^s_0(S^1)$ as the restriction of $L^2_0(S^1)$ w.r.t. the functions for which the inner product
\begin{equation}
    \langle f, g \rangle_s = \sum_{k \in \mathbb{Z}} |k|^{2s} f_k\overline{g_k}
\end{equation}
is finite. For $s <0$, we define $H^s_0(S^1)$ as the completion of $L^2_0(S^1)$ w.r.t. the scalar product $\langle \cdot , \cdot \rangle_s$. Thus we see that for all $s \in \mathbb{R}$ it holds that
\begin{align}
    H_0^s(S^1) = \left\{ f(\theta) = \sum_{k \in \mathbb{Z}} f_k e^{ik\theta} : \sum_{k \in \mathbb{Z}} k^{2s} |f_k|^2 < \infty, f_0=0 \right\},
\end{align}
i.e. $H^s(S^1)$ can be seen as a space of (formal) Fourier series. Note that $\left(H_0^s(S^1), \langle \cdot , \cdot \rangle_s \right)$ is a Hilbert space for all $s \in \mathbb{R}$. For $s \geq 0$ it is a subspace of $H^0_0(S^1) = L_0^2(S^1)$, i.e. the space of square-integrable functions with zero mean, while for $s < 0$, $H_0^s(S^1)$ can be interpreted as the dual space of $H_0^{-s}(S^1)$, i.e. as a space of generalized functions defined up to an additive constant. We denote by $H_0^s(S^1,\mathbb{R})$ the subset of $H_0^s(S^1)$ consisting of real-valued (generalized) functions, i.e. those for which $f_k = \overline{f_k}$.\\

Fix $T>0$. For $s = 0$ we set $H^s([0,T]) := L^2([0,T])$, and for $s \in (0,1)$, we define the fractional Sobolev space $H^s([0,T])$ as the subspace of $L^2([0,T])$, where the Slobodeckij inner product 
\begin{equation}
(f, g)_s := \int_0^T f(t) \overline{g(t)} \text{d}t + \int_0^T\int_0^T \frac{(f(t)-f(u))(\overline{g(t)-g(u)} )}{|t-u|^{1+2s}} \text{d}t \text{d}u
\end{equation}
is finite. Note that $(H^s([0,T]),(\cdot, \cdot)_s)$ is a Hilbert space for all $s \geq 0$.

\begin{remark}
For the fact that the fractional Sobolev spaces defined through Fourier series or through the Slobodeckij norm agree, the reader can consult e.g. \cite{DiNezza2012}.
\end{remark}

For $s \geq 0$ and $\epsilon > 0$ we let $H^s([0,T]) \otimes H^{-\epsilon}_0(S^1)$ denote the tensor product of Hilbert spaces $H^s([0,T])$ and $H^{-\epsilon}_0(S^1)$. Since the inner product on that space is determined by 
\begin{align}
\begin{split}
&\langle f\otimes g, h\otimes k\rangle_{s,-\epsilon} \\
&=
(f,h)_s \langle g, k\rangle_{-\epsilon}\\
&= \int_0^T f(t)\overline{h(t)} \text{d}t \langle  g, k \rangle_{-\epsilon}
+ 1_{s \neq 0} \int_0^T\int_0^T \frac{ (f(t) -f(u) )  \overline{(h(t) -h(u) )} }{|t-u|^{1+2s}} \text{d}t \text{d}u \langle  g, k \rangle_{-\epsilon}, \\
&= \int_0^T \langle f(t) g, h(t)k \rangle_{-\epsilon} \text{d}t
+ 1_{s \neq 0} \int_0^T\int_0^T \frac{\langle (f(t) -f(u) )g , (h(t) -h(u) )k \rangle_{-\epsilon}}{|t-u|^{1+2s}} \text{d}t \text{d}u,
\end{split}
\end{align}
we obtain that
\begin{align}
\begin{split}
    \langle F,G\rangle_{s,-\epsilon} =& \int_0^T \langle F(t,\cdot),G(t,\cdot)\rangle_{-\epsilon} \text{d}t \\
    &+  1_{s \neq 0} \int_0^T\int_0^T \frac{\langle F(t,\cdot)-F(u,\cdot), G(t,\cdot)-G(u,\cdot)\rangle_{-\epsilon}}{|t-u|^{1+2s}} \text{d}t \text{d}u,
\end{split}
\end{align}

first when $F$ and $G$ are linear combinations of pure tensor products, and then for all $F,G\in H^s([0,T])\otimes H^{-\epsilon}_0(S^1)$ by density and continuity. 

\section{Proof of the main result Theorem \ref{thm:UBM}}

The strategy for proving convergence is as in the stationary case in \cite{Hughes2005}: we treat $\left( \log p_{U(n)} \right)_{n \in \mathbb{N}}$ as a sequence of random variables in $H^{s}([0,T])\otimes  H^{-\epsilon}_0(S^1) $, and first establish that if any of its subsequences has a limit then that limit has to be $Z$. We do this by showing that the finite-dimensional distributions of $\log p_{U(n)}$ converge to those of $Z$, i.e. that 
\begin{align} \label{eqn:fdd convergence}
    \left( k_j^{-1} \text{Tr}(U_{t_j}^{k_j}) \right)_{j = 1}^l \implies \left( A_{k_j}(t_j) \right)_{j = 1}^l,
\end{align} 
for any $l \in \mathbb{N}$, and $(t_j,k_j) \in [0,T] \times \mathbb{N}$, $j = 1,...,l$. This follows immediately from Spohn's result, here Theorem \ref{thm:Spohn}, as it implies in particular that (see Remark \ref{remark:SPDE})
\begin{align} \label{eqn:fdd convergence2}
    \left( k_j^{-1} \text{Tr}(U_{(\cdot)}^{k_j}) \right)_{j = 1}^l \implies \left( A_{k_j}(\cdot) \right)_{j = 1}^l,    
\end{align}
in $C(\mathbb{R},\mathbb{C}^l)$, for any $l \in \mathbb{N}$, and $k_j \in \mathbb{N}$, $j = 1,...,l$. Applying the continuous mapping theorem to (\ref{eqn:fdd convergence2}), with the appropriate function from $C(\mathbb{R},\mathbb{C}^l)$ to $\mathbb{C}^l$, then implies (\ref{eqn:fdd convergence}). \\

We then show that the set $\left( \log p_{U(n)} \right)_{n \in \mathbb{N}}$ is tight in $H^s([0,T])\otimes  H^{-\epsilon}_0(S^1)$ for $s \in [0,1/2)$ and $\epsilon > s$. Since $H^{s}([0,T])\otimes  H^{-\epsilon}_0(S^1) $ is complete and separable, Prokhorov's theorem implies that the closure of $\left(\log p_{U(n)}\right)_{n \in \mathbb{N}}$ is sequentially compact w.r.t. the topology of weak convergence. In particular this means that every subsequence of $\left( \log p_{U(n)} \right)_{n \in \mathbb{N}}$ has a weak limit in $H^{s}([0,T])\otimes  H^{-\epsilon}_0(S^1) $. Since any such limit has to be $Z$ it follows that the whole sequence $\left(\log p_{U(n)} \right)_{n \in \mathbb{N}}$, must converge weakly to $Z$. This implies in particular that $Z$ is almost surely an element of $H^{s}([0,T])\otimes  H^{-\epsilon}_0(S^1)$ for $s \in [0,1/2)$ and $\epsilon > s$. \\
We will conclude by proving optimality of the regularity parameters $s$ and $-\epsilon$, by showing in Lemma \ref{lemma:optimal} that when $s = 1/2$ or $s \geq \epsilon \geq 0$ the field $Z$ is almost surely not an element of $H^{s}([0,T]) \otimes  H^{-\epsilon}_0(S^1)$. Thus for $s = 1/2$ or $s = \epsilon > 0$ the sequence $\left( \log p_{U(n)} \right)_{n \in \mathbb{N}}$ almost surely cannot converge (and cannot be tight) in $H^{s}([0,T])\otimes  H^{-\epsilon}_0(S^1)$ as its limit would have to be $Z$.\\

We now show tightness of $\left( \log p_{U(n)} \right)_{n \in \mathbb{N}}$ in $H^s([0,T]) \otimes H^{-\epsilon}_0(S^1)$, i.e. that for every $\delta > 0$ we construct a compact $K_\delta \subset H^s([0,T]) \otimes H^{-\epsilon}_0(S^1)$ for which
\begin{equation}
\sup_{n \in \mathbb{N}} \mathbb{P} \left( \log p_{U(n)} \notin K_\delta \right) < \delta.
\end{equation}
We let $0 \leq s' < \epsilon'$ be such that $0 \leq s < s' < \epsilon' < \epsilon$, and choose
\begin{align}
    K_\delta = \left\{ F \in H^{s}([0,T])\otimes H_0^{-\epsilon}(S^1): ||F||^2_{s',-\epsilon'} \leq C_\delta \right\},
\end{align}
for a $C_\delta$ depending on $\delta$. By Lemma \ref{lemma:compact inclusion} below we see that $K_\delta$ is compact in $H^{s}([0,T])\otimes H_0^{-\epsilon}(S^1)$, and by Lemma \ref{lemma:boundedness} below we see that $\sup_{n \in \mathbb{N}} \mathbb{E} \left( ||\log p_{U(n)}||^2_{s',-\epsilon'} \right) < \infty$. Thus, by choosing $C_\delta$ large enough and then using Markov's inequality, we see that
\begin{align}
\begin{split}
    \sup_{n \in \mathbb{N}} \mathbb{P} \left( \left( \log p_{U(n)} \right) \notin K_\delta \right) =& \sup_{n \in \mathbb{N}} \mathbb{P} \left( || \log p_{U(n)} ||_{s,-\epsilon}^2 > C_\delta \right) \\
    \leq& \frac{\sup_{n \in \mathbb{N}} \mathbb{E} \left( ||\log p_{U(n)}||^2_{s',-\epsilon'} \right)}{C_\delta} \\
    <& \delta,
\end{split}
\end{align}
which shows tightness of $\log p_{U(n)}$.
\begin{lemma}\label{lemma:compact inclusion}
    Let $0 \leq s < s' < \epsilon' < \epsilon$. Then, the inclusion of $H^{s'}([0,T])\otimes H_0^{-\epsilon'}(S^1)$ into  $H^{s}([0,T])\otimes H_0^{-\epsilon}(S^1)$ is compact.
\end{lemma}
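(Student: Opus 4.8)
The plan is to exhibit the claimed inclusion as a tensor product of two compact maps and then invoke the fact that the Hilbert tensor product of two compact operators is compact. Concretely, write $\iota_1 \colon H^{s'}([0,T]) \hookrightarrow H^{s}([0,T])$ and $\iota_2 \colon H_0^{-\epsilon'}(S^1) \hookrightarrow H_0^{-\epsilon}(S^1)$ for the coordinate inclusions. Both are bounded (since $s \le s'$ and $\epsilon' \le \epsilon$), and on simple tensors $f \otimes g$ the inclusion $H^{s'}([0,T]) \otimes H_0^{-\epsilon'}(S^1) \to H^{s}([0,T]) \otimes H_0^{-\epsilon}(S^1)$ agrees with $\iota_1 \otimes \iota_2$; by continuity and density of finite linear combinations of simple tensors, the inclusion of the two tensor-product Hilbert spaces equals $\iota_1 \otimes \iota_2$ on the nose. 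So it suffices to show $\iota_1$ and $\iota_2$ are compact.

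First I would check that $\iota_2$ is compact. The family $\phi_k(\theta) := |k|^{\epsilon'} e^{ik\theta}$, $k \in \mathbb{Z}\setminus\{0\}$, is an orthonormal basis of $H_0^{-\epsilon'}(S^1)$, and $\|\iota_2 \phi_k\|_{-\epsilon} = |k|^{\epsilon'-\epsilon} \to 0$ as $|k|\to\infty$, precisely because $\epsilon' < \epsilon$. Hence $\iota_2$ is unitarily equivalent to a diagonal operator on $\ell^2$ with entries tending to $0$, so it is the operator-norm limit of its finite-rank truncations $\iota_2^{(N)}$ (retaining the modes $|k|\le N$), and is therefore compact. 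Next I would note that $\iota_1$ is compact: this is the Rellich--Kondrachov embedding for fractional Sobolev spaces on the bounded interval $[0,T]$, valid precisely because $s < s'$ — one may, for instance, extend functions periodically and use the Fourier-series description together with the equivalence of the Slobodeckij and Fourier norms recalled in Section~\ref{section:Sobolev}, or appeal directly to \cite{DiNezza2012}.

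Finally, I would conclude from the general principle that a tensor product of compact operators is compact: choosing finite-rank $\iota_i^{(N)}$ with $\|\iota_i - \iota_i^{(N)}\| \to 0$, the operator $\iota_1^{(N)} \otimes \iota_2^{(N)}$ is finite rank and $\|\iota_1\otimes\iota_2 - \iota_1^{(N)}\otimes\iota_2^{(N)}\| \le \|\iota_1 - \iota_1^{(N)}\|\,\|\iota_2\| + \|\iota_1^{(N)}\|\,\|\iota_2 - \iota_2^{(N)}\| \to 0$, so $\iota_1\otimes\iota_2$ is a norm limit of finite-rank operators, hence compact. The conceptual crux — and the only place where genuine care is needed — is that one must use the \emph{strict} inequality $\epsilon' < \epsilon$ so that $\iota_2$, and not merely $\iota_1$, is compact: the tensor product of a compact operator with the identity on an infinite-dimensional space fails to be compact, so compactness of $\iota_1$ alone would not close the argument. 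The hypothesis $s < s' < \epsilon' < \epsilon$ is exactly what makes both factors compact; the remaining points (density of simple tensors, the elementary norm estimate above) are routine.
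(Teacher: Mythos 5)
Your proof is correct and follows essentially the same route as the paper: both factor the inclusion as $\iota_1\otimes\iota_2$ and invoke compactness of each factor together with the fact that the Hilbert tensor product of compact operators is compact. The only cosmetic differences are that you establish compactness of the circle-factor inclusion directly via the Fourier diagonalization (the paper instead applies Kondrachov to $H_0^{\epsilon}(S^1)\hookrightarrow H_0^{\epsilon'}(S^1)$ and passes to the dual operator) and that you prove the tensor-compactness fact by finite-rank approximation rather than citing it.
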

\begin{proof}
    From the Kondrachov embedding theorem, the inclusion $\iota_1$ of $H^{s'}([0,T])$ into $H^{s}([0,T])$ is compact, as well as the inclusion $\iota_2$ from $H_0^{\epsilon}(S^1)$ into $H_0^{\epsilon'}(S^1)$. Then the dual operator $\iota_2^*: H_0^{-\epsilon'}(S^1) \rightarrow H_0^{-\epsilon}(S^1)$ is also compact. On Hilbert spaces, the tensor product of two compact operators is also compact (see e.g. \cite{Kubrusly2011} \footnote{In \cite{Kubrusly2011}, the result is stated for endomorphisms, but this extra assumption is not used in the proof.}), so that $\iota_1\otimes \iota_2^*$ is compact indeed.
\end{proof}

\begin{lemma}\label{lemma:boundedness}
    For all $s \in [0,1/2)$ and all $\epsilon > s$, it holds that $\sup_{n \in \mathbb{N}} \mathbb{E} \left( ||\log p_{U(n)}||^2_{s,-\epsilon} \right) < \infty$.
\end{lemma}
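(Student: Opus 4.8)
The plan is to expand the squared norm using the explicit form of the tensor-product inner product from Section \ref{section:Sobolev},
\[
\mathbb{E}\,\|\log p_{U(n)}\|^2_{s,-\epsilon}
= \int_0^T \mathbb{E}\,\|\log p_{U(n)}(t,\cdot)\|^2_{-\epsilon}\,\mathrm{d}t
+ 1_{s\neq 0}\int_0^T\!\!\int_0^T \frac{\mathbb{E}\,\|\log p_{U(n)}(t,\cdot)-\log p_{U(n)}(u,\cdot)\|^2_{-\epsilon}}{|t-u|^{1+2s}}\,\mathrm{d}t\,\mathrm{d}u,
\]
and to bound each of the two pieces by a constant that does not depend on $n$. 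Using the Fourier expansion $\log p_{U(n)}(t,\theta)=-\sum_{k\geq 1}k^{-1}\mathrm{Tr}(U_t^k)e^{-ik\theta}$ and the definition of the $H_0^{-\epsilon}(S^1)$-norm, the first piece equals $T\sum_{k\geq 1}k^{-2\epsilon-2}\,\mathbb{E}|\mathrm{Tr}(U_t^k)|^2$ (invoking stationarity of $(U_t)_{t}$ under Haar measure so that the summand is independent of $t$). Since $\mathbb{E}|\mathrm{Tr}(U_t^k)|^2=\min(k,n)\le k$ by Theorem \ref{thm:DiaconisShahshahani} and its Diaconis--Evans extension, this is at most $T\sum_{k\geq 1}k^{-2\epsilon-1}<\infty$ because $\epsilon>s\geq 0$, with a bound uniform in $n$. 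If $s=0$ this already completes the proof, so from now on I assume $s\in(0,\tfrac12)$.

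\textbf{The key two-time increment estimate.} For the second piece the central quantity is $\mathbb{E}|\mathrm{Tr}(U_0^k)-\mathrm{Tr}(U_\tau^k)|^2$ with $\tau=|t-u|$, which by stationarity and $\overline{\mathrm{Tr}(U^k)}=\mathrm{Tr}(U^{-k})$ equals $2\min(k,n)-2\,\mathbb{E}\big(\mathrm{Tr}(U_0^{-k})\mathrm{Tr}(U_\tau^k)\big)$. The cross term is exactly the covariance of Corollary \ref{thm:multi time}: that statement is bilinear in the test functions and therefore extends to complex $f,g$, and with $f(\theta)=e^{-ik\theta}$, $g(\theta)=e^{ik\theta}$ it gives, for $1\le k\le n-1$,
\[
\mathbb{E}\big(\mathrm{Tr}(U_0^{-k})\mathrm{Tr}(U_\tau^k)\big)= e^{-k\tau}\,\frac{\sinh(k^2\tau/n)}{\sinh(k\tau/n)},
\]
and for $k\ge n$,
\[
\mathbb{E}\big(\mathrm{Tr}(U_0^{-k})\mathrm{Tr}(U_\tau^k)\big)= e^{-k^2\tau/n}\,\frac{\sinh(k\tau)}{\sinh(k\tau/n)}.
\]
(Both are real; as a consistency check, they are nonnegative and tend to $\min(k,n)$ as $\tau\to0$, and the orthogonality $\mathbb{E}(\mathrm{Tr}(U_0^k)\mathrm{Tr}(U_\tau^k))=0$ also follows from the rotation invariance $(U_t)_t\overset{d}{=}(e^{i\alpha}U_t)_t$ of stationary unitary Brownian motion.) Expanding $\sinh(mx)/\sinh(x)=\sum_{j=0}^{m-1}e^{(m-1-2j)x}$ for $m\in\mathbb N$ and using $0\le 1-e^{-x}\le\min(1,x)$, a short computation in each of the two regimes $k\le n-1$ and $k\ge n$ produces the uniform bound
\[
\mathbb{E}|\mathrm{Tr}(U_0^k)-\mathrm{Tr}(U_\tau^k)|^2 \;\le\; C\,k\min(1,k\tau), \qquad k,n\in\mathbb N,\ \tau\ge0,
\]
for an absolute constant $C$.

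\textbf{Conclusion.} Substituting this into the Fourier expansion yields $\mathbb{E}\,\|\log p_{U(n)}(t,\cdot)-\log p_{U(n)}(u,\cdot)\|^2_{-\epsilon}\le C\sum_{k\ge1}k^{-2\epsilon-1}\min(1,k|t-u|)=:C\,S(|t-u|)$, and splitting the series at $k\approx 1/|t-u|$ shows that $S(\tau)\le C_\beta\,\tau^{\beta}$ for every $\beta<\min(2\epsilon,1)$ (the borderline value $2\epsilon=1$ carrying only an inessential logarithmic factor). Hence the second piece is at most $C_\beta\int_0^T\!\int_0^T|t-u|^{\beta-1-2s}\,\mathrm{d}t\,\mathrm{d}u$, which is finite as soon as $\beta>2s$; since $2s<\min(2\epsilon,1)$ by hypothesis, such a $\beta$ exists, and the resulting bound is again independent of $n$. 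Adding the two pieces gives $\sup_{n}\mathbb{E}\,\|\log p_{U(n)}\|^2_{s,-\epsilon}<\infty$. The main obstacle is the increment estimate of the second paragraph: extracting the clean, $n$-uniform bound $\mathbb{E}|\mathrm{Tr}(U_0^k)-\mathrm{Tr}(U_\tau^k)|^2\lesssim k\min(1,k\tau)$ from the somewhat unwieldy closed form of Corollary \ref{thm:multi time}, correctly controlling all sizes of $k$ relative to $n$ and all sizes of $\tau$.
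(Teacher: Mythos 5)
Your proposal is correct and follows essentially the same route as the paper: the same decomposition of the $\|\cdot\|_{s,-\epsilon}$-norm into stationary and Slobodeckij pieces, the Diaconis–Shahshahani moment $\mathbb{E}|\Tr(U^k)|^2 = k\wedge n$ for the first piece, and Corollary \ref{thm:multi time} together with elementary $\sinh$-inequalities for the two-time increment. The only difference is bookkeeping: the paper splits the time integral at $t=k^{-1}$ and bounds the two regimes separately, whereas you package both regimes into the single uniform increment bound $Ck\min(1,k\tau)$ and split the $k$-sum at $k\approx 1/\tau$ instead — the underlying estimates are identical.
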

\noindent \textbf{Proof:} We see that
\begin{align}
    \mathbb{E} \left( ||\log p_{U(n)}||^2_{s,-\epsilon} \right) =& \mathbb{E} \left( \int_0^T ||\log p_{U(n)}(\cdot , t)||_{-\epsilon}^2 \text{d}t \right) \\
    &+ 1_{s \neq 0} \mathbb{E} \left( \int_0^T \int_0^T \frac{||  \log p_{U(n)}(\cdot, t) -  \log p_{U(n)}(\cdot, u)||^2_{-\epsilon}}{|t-u|^{2s + 1}} \text{d}t \text{d}u \right). \nonumber
\end{align}
For the first summand it holds that (with $k \wedge n$ denoting $\min \{k,n\}$)
\begin{align}
\begin{split}
\mathbb{E} \left( \int_0^T ||\log p_{U(n)}(\cdot , t)||^2_{-\epsilon} \text{d}t \right) =& \int_0^T \mathbb{E} \left( \sum_{k=1}^\infty k^{-2\epsilon} \frac{ \left| \Tr (U_t^k) \right|^2}{k^2} \right) \text{d}t \\
=& T \sum_{k = 1}^\infty k^{-2 - 2\epsilon} \mathbb{E} \left( \left| \Tr (U_0^k) \right|^2 \right) \\
=& T \sum_{k = 1}^\infty k^{-2 - 2\epsilon} (k \wedge n) \\
< & T \sum_{k = 1}^\infty k^{-1-2\epsilon} < \infty.
\end{split}
\end{align}
For the second summand it holds that:
\begin{align}
\label{eq:split}
    &\hspace{-2cm}\mathbb{E} \Big( \int_0^T \int_0^T \frac{||  \log p_{U(n)}(\cdot, t) -  \log p_{U(n)}(\cdot, u)||^2_{-\epsilon}}{|t-u|^{2s + 1}} \text{d}t \text{d}u \Big) \nonumber \\
    =& \sum_{k = 1}^\infty k^{-2 - 2\epsilon} \int_0^T \int_0^T \frac{ \mathbb{E} \left( |\Tr (U_t^k - U_u^k|^2 \right) }{|t - u|^{2s+1}} \text{d}t \text{d}u \nonumber \\
    \leq & CT\sum_{k = 1}^\infty k^{-2 - 2\epsilon} \int_0^T \frac{ \mathbb{E} \left( |\Tr (U_t^k - U_0^k)|^2 \right) }{t^{2s+1}} \text{d}t \nonumber \\
    \leq & CT\sum_{k = 1}^\infty k^{-2 - 2\epsilon} \int_0^{k^{-1}} \frac{ \mathbb{E} \left( |\Tr (U_t^k - U_0^k)|^2 \right) }{t^{2s+1}} \text{d}t\\
    & + CT\sum_{k = 1}^\infty k^{-2 - 2\epsilon} \int_{k^{-1}}^{\infty} \frac{ 4 \mathbb{E} \left( |\Tr (U_0^k )|^2 \right) }{t^{2s+1}} \text{d}t. \nonumber
\end{align}
For the second summand in (\ref{eq:split}) we get
\[
\int_{k^{-1}}^{\infty} \frac{ 4 \mathbb{E} \left( |\Tr (U_0^k )|^2 \right) }{t^{2s+1}} \text{d}t
= 8 s (n\wedge k) k^{2s},
\]
which is sufficient since $\sum_{k = 1}^\infty k^{-2 - 2\epsilon+1+2s}$ is finite as soon as
$s < \epsilon $. \\

For the first sum in \eqref{eq:split} we use Corollary \ref{thm:multi time} with $f(z) = g(z) = z^k$, which implies that for all $k \geq 1$
\begin{align}
\begin{split}
    \mathbb{E}\left( \Tr (U_t^k) \overline{\Tr (U_0^k)} \right) =& 1_{k < n} e^{-kt} \frac{ \sinh (\frac{k^2t}{n})}{\sinh (\frac{kt}n)} + 1_{k \geq n} e^{-\frac{k^2t}{n}} \frac{\sinh(kt)}{\sinh(\frac{kt}{n})} \\
    =& e^{-\frac{k(k \vee n) t}{n}} \frac{\sinh \left( \frac{k(k\wedge n)t}{n} \right)}{\sinh \left( \frac{kt}{n} \right)},
\end{split}
\end{align}
with $k \vee n := \max \{k, n\}$. Using this, and the fact that $\sinh x \geq x$ and $1/\sinh x \geq 1/x - x/6$ for all $x > 0$, we see that for $t < k^{-1}$:
\begin{align}
\begin{split}
    &\mathbb{E} \left( |\Tr (U_t^k - U_0^k)|^2 \right) \\
    =& \mathbb{E} \left( |\Tr (U_0^k)|^2 \right)+ \mathbb{E} \left( |\Tr (U_t^k)|^2 \right) - 2\mathbb{E} \left( \Tr (U_t^k) \overline{\Tr (U_0^k)} \right) \\
    =& 2(k \wedge n) - 2e^{-\frac{k(k \vee n)t}{n}} \frac{\sinh \left( \frac{k(k\wedge n)t}{n} \right)}{\sinh \left( \frac{kt}{n} \right)} \\
    \leq & 2(k \wedge n) - 2e^{-\frac{k(k \vee n)t}{n}} \frac{k(k\wedge n)t}{n} \left( \left(\frac{kt}{n} \right)^{-1} -  \frac{kt}{6n} \right) \\
    = & 2(k \wedge n) - 2e^{-\frac{k(k \vee n)t}{n}} \left( k \wedge n - \frac{k^2t^2 (k \wedge n)}{6n^2} \right) \\
    =&  2e^{-\frac{k(k \vee n)t}{n}} \frac{k^2 t^2 (k \wedge n)}{6n^2} + 2(k \wedge n )(1 - e^{-\frac{k(k \vee n)t}{n}}) \\
    \leq & 2k^3t^2 + 2(k \wedge n) \frac{k(k \vee n)t}{n} \\
    \leq & 4k^2t.
\end{split}
\end{align}
Thus we see that when $s < 1/2$, the first sum in \eqref{eq:split} is bounded by $
    CT \sum_{k = 1}^\infty k^{-1 - 2\epsilon + 2s}$,
which is finite for $s < \epsilon$. This finishes the proof. \qed 

\begin{lemma} \label{lemma:optimal}
    For $s = 1/2$ or $s \geq \epsilon \geq 0$ the field $Z$ is almost surely not an element of $H^{s}([0,T])\otimes  H^{-\epsilon}_0(S^1)$. 
\end{lemma}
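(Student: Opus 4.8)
The field $Z$ is Gaussian, so by the Kolmogorov zero-one law (or more concretely by Fernique's theorem applied to a Gaussian vector in a separable Hilbert space) the event $\{Z\in H^s([0,T])\otimes H^{-\epsilon}_0(S^1)\}$ has probability $0$ or $1$. Hence it suffices to show that
$$\mathbb{E}\left(\|Z\|_{s,-\epsilon}^2\right)=\infty$$
in the two regimes $s=1/2$ and $s\geq\epsilon\geq 0$; infinite second moment forces the event to have probability $0$. Using the Fourier expansion $Z(t,\theta)=\sum_{k=1}^\infty A_k(t)e^{-ik\theta}$ and the formula for the tensor-product inner product, I would compute
$$\mathbb{E}\left(\|Z\|_{s,-\epsilon}^2\right)=\sum_{k=1}^\infty k^{-2\epsilon}\left[\int_0^T\mathbb{E}\big(|A_k(t)|^2\big)\,\mathrm dt+1_{s\neq 0}\int_0^T\int_0^T\frac{\mathbb{E}\big(|A_k(t)-A_k(u)|^2\big)}{|t-u|^{1+2s}}\,\mathrm dt\,\mathrm du\right].$$
Since $A_k$ is a stationary complex Ornstein–Uhlenbeck process solving \eqref{eq:SDEA}, one has $\mathbb{E}(|A_k(t)|^2)=1/k$ and $\mathbb{E}(|A_k(t)-A_k(u)|^2)=\frac{1}{k}(1-e^{-k|t-u|})$, which are the exact analogues of the covariance computations already carried out for $\log p_{U(n)}$ in the proof of Lemma \ref{lemma:boundedness} (indeed $\mathbb{E}(|A_k(t)|^2)=\lim_{n\to\infty}k^{-2}\mathbb{E}(|\Tr U_0^k|^2)$ for fixed $k$).

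\textbf{The regime $s\geq\epsilon\geq 0$.} Here I would simply keep the first (non-fractional) term and discard the manifestly nonnegative fractional term, giving
$$\mathbb{E}\left(\|Z\|_{s,-\epsilon}^2\right)\geq T\sum_{k=1}^\infty k^{-2\epsilon}\cdot\frac1k=T\sum_{k=1}^\infty k^{-1-2\epsilon}.$$
This diverges precisely when $\epsilon\leq 0$, i.e. when $\epsilon=0$; and when $s=0$ the fractional term is absent so this is the whole quantity. For $0<\epsilon\leq s$ with $\epsilon>0$ the first term alone converges, so here I must use the fractional term: bounding $1-e^{-k|t-u|}\geq c\,\min(k|t-u|,1)$ and integrating, one gets $\int_0^T\int_0^T\frac{1-e^{-k|t-u|}}{|t-u|^{1+2s}}\,\mathrm dt\,\mathrm du\geq c' k^{2s}$ for $s\in(0,1)$ and $k$ large (the integral near the diagonal behaves like $k\int_0^{1/k}r^{-2s}\,\mathrm dr\sim k^{2s}$ when $s<1$, with an extra logarithm when $s=1/2$, but in any case it is $\gtrsim k^{2s}$). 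Feeding this back in yields
$$\mathbb{E}\left(\|Z\|_{s,-\epsilon}^2\right)\geq c'\sum_{k=1}^\infty k^{-2\epsilon}\cdot\frac1k\cdot k^{2s}=c'\sum_{k=1}^\infty k^{2s-2\epsilon-1},$$
which diverges whenever $s\geq\epsilon$. This covers the whole region $s\geq\epsilon\geq 0$.

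\textbf{The regime $s=1/2$.} Now $\epsilon>1/2=s$ is allowed, so the tail sum $\sum k^{2s-2\epsilon-1}=\sum k^{-2\epsilon}$ converges and the crude bound above is insufficient; the divergence must come from the logarithmic enhancement at the critical exponent $s=1/2$. The key estimate is the sharp lower bound
$$\int_0^T\int_0^T\frac{1-e^{-k|t-u|}}{|t-u|^{2}}\,\mathrm dt\,\mathrm du\;\geq\; c\,k\log k$$
for large $k$: changing variables $r=|t-u|$, the integral is comparable to $\int_0^T\frac{1-e^{-kr}}{r^2}\,\mathrm dr$, and splitting at $r=1/k$ the tail $\int_{1/k}^T r^{-2}\,\mathrm dr\sim k$ combines with the logarithmic divergence $\int_{1/k}^{T}\frac{\mathrm dr}{r}$ coming from the region where $1-e^{-kr}\approx 1$, producing the factor $k\log k$. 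Therefore
$$\mathbb{E}\left(\|Z\|_{1/2,-\epsilon}^2\right)\geq c\sum_{k=1}^\infty k^{-2\epsilon}\cdot\frac1k\cdot k\log k=c\sum_{k=1}^\infty k^{-2\epsilon}\log k,$$
which still converges for $\epsilon>1/2$... so this bound alone is \emph{not} enough, and I need to be more careful. The resolution is that the fractional seminorm with $s=1/2$ is genuinely borderline: one should instead argue directly that $t\mapsto A_k(t)$ — being (a time-change of) an Ornstein–Uhlenbeck process, hence locally like Brownian motion — almost surely fails to lie in $H^{1/2}([0,T])$, and this failure, propagated through the direct-sum structure of $Z=\sum_k A_k(\cdot)e^{-ik\cdot}$, forces $Z\notin H^{1/2}([0,T])\otimes H^{-\epsilon}_0(S^1)$ for every $\epsilon$. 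Concretely: fix $k=1$; if $Z\in H^{1/2}([0,T])\otimes H^{-\epsilon}_0(S^1)$ then, testing against the $\theta$-mode $e^{-i\theta}$, one obtains $A_1\in H^{1/2}([0,T])$ almost surely. But a standard fact about Gaussian processes with Brownian-type local regularity is that $\mathbb{E}\big((A_1,A_1)_{1/2}\big)=\infty$ because $\int_0^T\int_0^T\frac{\mathbb{E}|A_1(t)-A_1(u)|^2}{|t-u|^2}\,\mathrm dt\,\mathrm du=\int_0^T\int_0^T\frac{1-e^{-|t-u|}}{|t-u|^2}\,\mathrm dt\,\mathrm du=\infty$ (the integrand is $\sim 1/|t-u|$ near the diagonal, which is not integrable in one dimension). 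By the zero-one law this already shows $\mathbb{P}(A_1\in H^{1/2}([0,T]))=0$, hence $\mathbb{P}(Z\in H^{1/2}([0,T])\otimes H^{-\epsilon}_0(S^1))=0$.

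\textbf{Main obstacle.} The $s=1/2$ case is the delicate one: a naive ``infinite second moment of the full tensor norm'' argument does not immediately close because the $\theta$-regularization $k^{-2\epsilon}$ can be summable; the clean way is the projection-to-a-single-mode argument just described, reducing to the well-known failure of a single Ornstein–Uhlenbeck (or Brownian) path to belong to $H^{1/2}$ of an interval, combined with the Gaussian zero-one law. For $s\geq\epsilon\geq 0$ the obstacle is only bookkeeping — splitting into the sub-cases $(s=0)$, $(\epsilon=0)$, and $(0<\epsilon\leq s)$ and in each producing a divergent series from the lower bounds on $\mathbb{E}(|A_k(t)|^2)$ and on the fractional double integral, exactly mirroring the convergence computations in Lemma \ref{lemma:boundedness} but run in reverse.
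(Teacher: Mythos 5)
Your proposal is correct in substance but follows a genuinely different route from the paper. The paper argues pathwise: at $s=\epsilon=0$ it uses Fubini in $t$ together with $\mathbb{P}(\|Z(t,\cdot)\|_0<\infty)=0$ for each fixed $t$; for $0<\epsilon\le s$ it writes $A_k(t)=\frac{e^{-kt}}{\sqrt{2k}}\bigl(B_k(e^{2kt})+i\tilde B_k(e^{2kt})\bigr)$, rescales time by $k$ to produce i.i.d.\ Brownian functionals $Y_k$, and shows $\sum_k k^{-1}Y_k=\infty$ a.s.\ by a truncation plus Kolmogorov's zero-one law; for $s\ge 1/2$ it projects onto the first Fourier mode by duality and invokes Girsanov together with the classical fact that Brownian motion is a.s.\ not in $H^{1/2}([0,T])$. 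You instead reduce everything to showing $\mathbb{E}\bigl(\|Z\|^2_{s,-\epsilon}\bigr)=\infty$ and then combine the Gaussian dichotomy with Fernique's theorem (a.s.\ membership of a Gaussian element of a separable Hilbert space forces a finite second moment of the norm). That reduction is legitimate, and your moment computations in the regime $s\ge\epsilon\ge0$ are right (up to the harmless constant: $\mathbb{E}|A_k(t)-A_k(u)|^2=\tfrac2k(1-e^{-k|t-u|})$). Your route buys brevity and uniformity across the regimes; the paper's buys an elementary, purely pathwise argument that never appeals to Fernique.

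Two points to repair. First, in two places you claim that ``the zero-one law'' converts an infinite expectation into probability zero, most explicitly when you conclude $\mathbb{P}(A_1\in H^{1/2}([0,T]))=0$. The zero-one law only yields the dichotomy (probability $0$ or $1$); ruling out probability one from an infinite second moment is exactly where Fernique (or the integrability of Gaussian seminorms) must be cited, so make that appeal explicit — the paper sidesteps it by transferring, via Girsanov, the known a.s.\ non-membership of Brownian motion in $H^{1/2}([0,T])$ to the Ornstein--Uhlenbeck path. Second, your perceived obstacle at $s=1/2$, $\epsilon>1/2$ rests on a miscalculation: $\int_0^T\int_0^T\frac{1-e^{-k|t-u|}}{|t-u|^{2}}\,\mathrm{d}t\,\mathrm{d}u$ is not of order $k\log k$ but equals $+\infty$ for every fixed $k$, since the integrand behaves like $k/|t-u|$ near the diagonal, which is not integrable in one dimension. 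Hence your moment criterion already covers $s=1/2$ for every $\epsilon$, and the detour is unnecessary — although the detour you propose (projection to a single mode) is precisely the paper's argument for $s\ge1/2$, differing only in how the failure of a single Ornstein--Uhlenbeck path to lie in $H^{1/2}([0,T])$ is certified.
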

\begin{proof}
    We see that
    \begin{align} \label{eq:expected norm finite}
    \begin{split}
        || Z ||_{s,-\epsilon}^2  =& \sum_{k = 1}^\infty k^{-2\epsilon} \int_0^T |A_k(t)|^2 \text{d}t + 1_{s \neq 0} \sum_{k = 1}^\infty k^{-2\epsilon} \int_0^T \int_0^T \frac{|A_k(t) - A_k(u)|^2}{|t - u|^{1 + 2s}} \text{d}t \text{d}u.
    \end{split}
    \end{align}
    For $s = \epsilon = 0$ we see that
    \begin{align}
        ||Z||^2_{0,0} = \int_0^T ||Z(t,\cdot)||_0^2 \text{d}t.
    \end{align}
    If this were almost surely finite, then $\mathbb{P}\left( \text{for almost all } t: \, ||Z(t,\cdot)||_0 < \infty  \right) = 1$, and  
    \begin{align}
        T = \mathbb{E} \left( \int_0^T 1_{||Z(t,\cdot)||_0 < \infty} \text{d}t \right) = \int_{0}^T \mathbb{P}\left( ||Z(t,\cdot)||_0 < \infty \right) \text{d}t,
    \end{align}
    which is a contradiction since for all $t$ we know that $\mathbb{P}(||Z(t,\cdot)||_0 < \infty) = 0$. Thus we see that almost surely $||Z||_{0,0} = \infty$. \\    
    
    For $s = \epsilon > 0$ we use the fact that stationary Ornstein-Uhlenbeck processes $A_k(t)$, $k \in \mathbb{N}$, can be represented as scaled, time-changed Brownian motions:
    \begin{align}
        A_k(t) = \frac{e^{-kt}}{\sqrt{2k}} \left( B_k(e^{2kt}) + i\tilde{B}_k(e^{2kt}) \right),
    \end{align}
    where $B_k$, $\tilde{B}_k$, $k \in \mathbb{N}$, are real standard Brownian motions. Thus it follows that the second summand in $||Z||_{s,-\epsilon}^2$ is bounded below by
    \begin{align}
        & \frac{1}{2} \sum_{k = 1}^\infty k^{-1-2\epsilon} \int_0^T \int_0^T \frac{|e^{-kt} B_k(e^{2kt}) - e^{-ku} B_k(e^{2ku})|^2}{|t - u|^{1 + 2s}} \text{d}t \text{d}u \nonumber \\ 
        =& \frac{1}{2} \sum_{k = 1}^\infty k^{-1-2\epsilon + 2s} \int_0^{kT} \int_0^{kT} \frac{|e^{-t} B_k(e^{2t}) - e^{-u} B_k(e^{2u})|^2}{|t - u|^{1 + 2s}} \text{d}t \text{d}u \\
        \geq & \frac{1}{2} \sum_{k = 1}^\infty k^{-1} \int_0^{T} \int_0^{T} \frac{|e^{-t} B_k(e^{2t}) - e^{-u} B_k(e^{2u})|^2}{|t - u|^{1 + 2s}} \text{d}t \text{d}u. \nonumber 
    \end{align}
We see that $Y_k :=\int_0^{T} \int_0^{T} \frac{|e^{-t} B_k(e^{2t}) - e^{-u} B_k(e^{2u})|^2}{|t - u|^{1 + 2s}} \text{d}t \text{d}u$, $k \in \mathbb{N}$, are an i.i.d. family of almost surely positive random variables. Let $\delta>0$ be such that $\mathbb{P}(Y_k\geq \delta)\geq \frac{1}{2}$, and set $Z_k= \delta 1_{Y_k\geq \delta}$, so that $Y_k\geq Z_k$ for all $k$. Notice that $(\delta-Z_k)$ is stochastically dominated by $Z_k$, so $\mathbb{P}( \sum_{k = 1}^\infty k^{-1} Z_k <\infty)\leq  \mathbb{P}( \sum_{k = 1}^\infty k^{-1} (\delta-Z_k) <\infty)  $. Since $\sum_{k = 1}^\infty k^{-1} =+ \infty$, these two events cannot happen at the same time, thus $\mathbb{P}( \sum_{k = 1}^\infty k^{-1} Z_k <\infty)\leq \frac{1}{2}$. By Kolmogorov's zero-one law,  $\mathbb{P}( \sum_{k = 1}^\infty k^{-1} Z_k <\infty)=0$. Since $Y_k\geq Z_k$, we deduce that, for $\epsilon\leq s$, $\| Z\|^2_{s,-\epsilon} \geq \sum_{k = 1}^\infty k^{-1} Y_k= + \infty$ almost surely. \\    

For $s\geq\frac{1}{2}$, we consider the event $E$ in which $Z$ is an element of $H^s([0,T]) \otimes H_0^{-\epsilon}(S^1)$. Then, since $\theta\mapsto e^{-i\theta}$ is in $H^\epsilon_0(S^1)$, by duality in $L^2(S^1)$, we deduce that the function $t\mapsto \int_0^{2\pi} Z(s,\theta) e^{-i \theta} \text{d}\theta$ belongs to $H^{\frac{1}{2}}([0,T])$. This function is nothing but $A_1$, which is an Ornstein-Uhlenbeck process. By Girsanov's theorem, its distribution on a bounded set $[0,T]$ is absolutely continuous with respect to the one of Brownian motion restricted to $[0,T]$. Thus, the probability of $E$ is equal to one if and only if Brownian motion almost surely belongs to $H^{\frac{1}{2}}([0,T])$, which is known not to be the case.
\end{proof}

\section{A Wick-type identity} \label{section:Wick}

In this section we prove Proposition \ref{prop:orbits} below, which is about expectations of the form
\begin{align}
&\mathbb{E} \Big( \text{Tr}\left( H U^{\sigma_1} H U^{\sigma_2} \dots H U^{\sigma_j} \right) \overline{\text{Tr}\left( H U^{\sigma_1} H U^{\sigma_2} \dots H U^{\sigma_j} \right)} \Big),
\end{align}
where $\sigma_1,...,\sigma_j \in \mathbb{Z}$, $U \in U(n)$ is Haar-distributed and independent from $H$, which is a $GUE(n)$ matrix, i.e. $H_{ii} \sim \mathcal{N}(0,1)$ for $i = 1,...,n$, and $\Re H_{ij} = \Re H_{ji} \sim \mathcal{N}(0,1/2)$, $\Im H_{ij} = - \Im H_{ji} \sim \mathcal{N}(0,1/2)$ for $1 \leq i < j \leq n$, with entries being independent up to the Hermitian symmetry. 

Such expressions appear rather naturally when we consider powers of a unitary Brownian motion $U$, since for small $s$ it holds that $U_{t+s} \simeq (1+\sqrt{2s}H)U_t$, see (\ref{eqn:UBM SDE}). \\

Let $C_{2j} = \{ \pi\in S_{2j}: \pi^2=\text{Id}, \forall \ell \in \{1,\dots, 2j\}, \pi(\ell)\neq \ell \}$, i.e. $C_{2j}$ is the set of pairings on $\{1,...,2j\}$. Using independence of $U$ and $H$, and Isserlis' theorem (also called Wick's probability theorem), we see that
\begin{align} \label{eqn:C2j}
    &\mathbb{E} \Big( \text{Tr}\left( H U^{\sigma_1} H U^{\sigma_2} \dots H U^{\sigma_j} \right) \overline{\text{Tr}\left( H U^{\sigma_1} H U^{\sigma_2} \dots H U^{\sigma_j} \right)} \Big) \nonumber \\
    =& \sum_{i_1,...,i_{4j}} \mathbb{E} \left[ H_{i_1i_2} H_{i_3i_4} \cdots H_{i_{2j-1} i_{2j}} H_{i_{2j+1} i_{2j+2}} H_{i_{2j+3} i_{2j+4}} \cdots H_{i_{4j-1} i_{4j}}\right] \\
    & \times \mathbb{E} \left[ (U^{\sigma_1})_{i_2i_3} (U^{\sigma_2})_{i_4i_5} \cdots (U^{\sigma_j})_{i_{2j}i_1} \overline{(U^{\sigma_1})_{i_{2j+1}i_{2j+4}} (U^{\sigma_2})_{i_{2j+3}i_{2j+6}} \cdots (U^{\sigma_j})_{i_{4j-1}i_{2j+2} }}\right] \nonumber \\
    =& \sum_{i_1,...,i_{4j}} \sum_{\pi\in C_{2j}} 1_{\forall \ell \in \{1,\dots, 2j\}, (i_{2\ell-1},i_{2\ell})=(i_{2\pi(\ell)},i_{2\pi(\ell)-1} ) } \nonumber \\
    & \times \mathbb{E} \left[ (U^{\sigma_1})_{i_2i_3} (U^{\sigma_2})_{i_4i_5} \cdots (U^{\sigma_j})_{i_{2j}i_1} \overline{(U^{\sigma_1})_{i_{2j+1}i_{2j+4}} (U^{\sigma_2})_{i_{2j+3}i_{2j+6}} \cdots (U^{\sigma_j})_{i_{4j-1}i_{2j+2} }} \right]. \nonumber
\end{align}

The condition $(i_{2\ell-1},i_{2\ell})=(i_{2\pi(\ell)},i_{2\pi(\ell)-1}) \, \, \forall \ell \in \{1,\dots, 2j\} \, \, \forall i_1,...,i_{4j} \in \{1,...,n\}$ allows us to define a map $\pi \mapsto \tilde{\pi}$ from $C_{2j}$ to $C_{4j}$ by the formula
\begin{align} \label{eqn:pi tilde}
    \tilde{\pi}(2\ell-1) =& 2\pi(\ell), \quad \tilde{\pi}(2\ell) = 2 \pi(\ell) - 1, \quad \ell = 1,...,2j.
\end{align}
Further we define the pairing $\rho \in C_{4j}$ as
\begin{align} \label{eqn:rho}
\begin{split}
    \rho :=& (23)(45) \cdots (2j, 1) (2j+1, 2j+4) (2j+3, 2j+6) \\
    & \hspace{2cm} \cdots (2j+2\ell-1, 2j+2\ell+2) \cdots (4j-1, 2j+2).
\end{split}
\end{align}
See Example \ref{example:j = 2} for a list of the pairings $\pi$, $\tilde{\pi}$ and $\rho$, for $j=2$, and Figure 5.1 for their depiction. \\

Note that $\rho$ and all pairings $\tilde{\pi}$ pair even numbers with odd numbers, thus $\tilde{\pi} \rho$ maps even numbers to even numbers and odd numbers to odd numbers.  Using the pairing $\tilde{\pi}$, the even numbers $i_2,i_4,...i_{4j}$ determine all the odd ones. Thus we see that
\begin{align} \label{eqn:C2j to sigma hat}
    &\sum_{i_1,...,i_{4j}} \sum_{\pi\in C_{2j}} 1_{\forall \ell \in \{1,\dots, 2j\}, (i_{2\ell-1},i_{2\ell})=(i_{2\pi(\ell)},i_{2\pi(\ell)-1} ) } \nonumber \\
    &\times \mathbb{E} \left[ (U^{\sigma_1})_{i_2i_3} (U^{\sigma_2})_{i_4i_5} \cdots (U^{\sigma_j})_{i_{2j}i_1} \overline{(U^{\sigma_1})_{i_{2j+1}i_{2j+4}} (U^{\sigma_2})_{i_{2j+3}i_{2j+6}} \cdots (U^{\sigma_j})_{i_{4j-1}i_{2j+2} }} \right] \nonumber \\
    =&\sum_{\pi\in C_{2j}} \sum_{i_1,...,i_{4j}} {1}_{\forall \ell \in \{1,\dots, 4j\}, i_{l} = i_{\tilde{\pi}(\ell)}} \\
    &\times \mathbb{E} \left[ (U^{\sigma_1})_{i_2i_{\rho(2)}} \cdots (U^{\sigma_j})_{i_{2j}i_{\rho(2j)}} \overline{(U^{\sigma_1})_{i_{\rho(2j+4)} i_{2j+4}} \cdots (U^{\sigma_j})_{i_{\rho(2j+2)}i_{2j + 2} }} \right] \nonumber \\
    =&\sum_{\pi\in C_{2j}} \sum_{i_2,i_4,...,i_{4j}} \mathbb{E} \left[ (U^{\sigma_1})_{i_2i_{\tilde{\pi} \rho(2)}} \cdots (U^{\sigma_j})_{i_{2j}i_{\tilde{\pi} \rho(2j)}} \overline{(U^{\sigma_1})_{i_{ \tilde{\pi} \rho(2j+4)} i_{2j+4}} \cdots (U^{\sigma_j})_{i_{\tilde{\pi} \rho(2j+2)}i_{2j + 2} }} \right] \nonumber \\
    =& \sum_{\pi \in C_{2j}} \mathbb{E} \left( \sum_{i_2,i_4,...i_{4j}} \prod_{\ell = 1}^{2j} (U^{\hat{\sigma}_{l}})_{i_{2\ell} i_{\tilde{\pi}\rho(2\ell)}} \right), \nonumber
\end{align}
where
\begin{align} \label{eqn:sigma hat2}
    \hat{\sigma}_\ell = \begin{cases} \sigma_\ell, & \ell = 1,2,...,j, \\ -\sigma_{\ell-j-1}, & \ell= j + 2,...,2j, \\ -\sigma_{j}, & \ell= j+1. \end{cases}
\end{align}

By repeatedly applying $\tilde{\pi}\rho$ to $\{2,4,...,4j\}$, we get a partition of $\{2,4,...,4j\}$ into orbits. The set of these orbits we denote by $\mathcal{O}_{\tilde{\pi}\rho}$. We see that

\begin{align} \label{eqn:sigma hat to orbits}
\begin{split}
    & \sum_{\pi \in C_{2j}} \mathbb{E} \left( \sum_{i_2,i_4,...i_{4j}} \prod_{\ell = 1}^{2j} (U^{\hat{\sigma}_{2\ell}})_{i_{2\ell} i_{\tilde{\pi}\rho(2\ell)}} \right) \\
    =& \sum_{\pi \in C_{2j}} \mathbb{E} \left( \sum_{i_2,i_4,...i_{4j}} \prod_{o \in \mathcal{O}_{\tilde{\pi}\rho}} \prod_{w \in o} (U^{\hat{\sigma}_{w}})_{i_{w} i_{\tilde{\pi}\rho(w)}} \right) \\
    =& \sum_{\pi \in C_{2j}} \mathbb{E} \left( \prod_{o \in \mathcal{O}_{\tilde{\pi}\rho}} \text{Tr}\left( \prod_{w \in o} U^{\hat{\sigma}_w} \right) \right) \\
    =& \sum_{\pi \in C_{2j}} \mathbb{E} \left( \prod_{o \in \mathcal{O}_{\tilde{\pi}\rho}} \text{Tr}\left( U^{\sum_{w \in o} \hat{\sigma}_w} \right) \right).
\end{split}
\end{align}

Putting together (\ref{eqn:C2j}), (\ref{eqn:C2j to sigma hat}), (\ref{eqn:sigma hat2}) and (\ref{eqn:sigma hat to orbits}), we have proven the following proposition:

\begin{proposition} \label{prop:orbits}
Let $H$ be an $n \times n$ matrix from the $GUE(n)$, and let $U \in U(n)$ be independent and Haar-distributed. Let $C_{2j} = \{ \pi\in S_{2j}: \pi^2=\text{Id}, \forall \ell \in \{1,\dots, 2j\}, \pi(\ell)\neq \ell \}$, and for $\pi \in C_{2j}$ define the pairing $\tilde{\pi} \in C_{4j}$ by (\ref{eqn:pi tilde}). Furthermore let $\rho$ be as in (\ref{eqn:rho}), $\hat{\sigma}$ as in (\ref{eqn:sigma hat2}), and let $\mathcal{O}_{\tilde{\pi} \rho}$ denote the set of orbits obtained by repeatedly applying $\tilde{\pi}\rho$ to $\{2,4,...,4j\}$. Then for $j \in \mathbb{N}$ and $\sigma_1, ..., \sigma_j \in \mathbb{N}$,  it holds that
\begin{align}
\begin{split}
    &\mathbb{E} \Big( \text{Tr}\left( H U^{\sigma_1} H U^{\sigma_2} \dots H U^{\sigma_j} \right) \overline{\text{Tr}\left( H U^{\sigma_1} H U^{\sigma_2} \dots H U^{\sigma_j} \right)} \Big) \\
    =& \sum_{\pi \in C_{2j}} \mathbb{E} \left( \prod_{o \in \mathcal{O}_{\tilde{\pi}\rho}} \text{Tr}\left( U^{\sum_{w \in o} \hat{\sigma}_w} \right) \right).
\end{split}
\end{align}
\end{proposition}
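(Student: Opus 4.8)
The plan is to expand both traces into sums over matrix indices, use independence of $H$ and $U$ to factor the joint expectation into a Gaussian average over $H$ times a Haar average over $U$, and then apply Wick's (Isserlis') theorem to the $H$-part. Concretely, I would write $\Tr(HU^{\sigma_1}\cdots HU^{\sigma_j})=\sum H_{i_1i_2}(U^{\sigma_1})_{i_2i_3}H_{i_3i_4}\cdots(U^{\sigma_j})_{i_{2j}i_1}$ and, using $\overline{(U^\sigma)_{ab}}=(U^{-\sigma})_{ba}$, similarly expand the conjugated trace with a fresh block of indices $i_{2j+1},\dots,i_{4j}$. Since $H$ and $U$ are independent, the joint expectation splits as $\mathbb{E}_H[\,H_{i_1i_2}H_{i_3i_4}\cdots H_{i_{4j-1}i_{4j}}\,]$ times a Haar average of a monomial in entries of powers of $U$. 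The GUE covariance is $\mathbb{E}[H_{ab}H_{cd}]=\delta_{ad}\delta_{bc}$, so Wick's theorem turns the first factor into a sum over pairings $\pi\in C_{2j}$ of the $2j$ factors $H_{i_{2\ell-1}i_{2\ell}}$, each pairing imposing the index constraint $(i_{2\ell-1},i_{2\ell})=(i_{2\pi(\ell)},i_{2\pi(\ell)-1})$; this is exactly equation \eqref{eqn:C2j}.

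Next I would repackage these constraints combinatorially. The pairing $\pi\in C_{2j}$ on the $H$-factors lifts to a fixed-point-free involution $\tilde\pi\in C_{4j}$ on the indices via \eqref{eqn:pi tilde}, so the Wick constraint becomes $i_\ell=i_{\tilde\pi(\ell)}$ for all $\ell$; because $\tilde\pi$ always pairs an odd index with an even one, the even indices $i_2,i_4,\dots,i_{4j}$ are free and determine all the odd ones. Meanwhile the cyclic contraction pattern of the two traces (which index of one $U$-entry is shared with the next, including the two wrap-around contractions and the cross-term between the trace and its conjugate) is encoded once and for all in the fixed pairing $\rho\in C_{4j}$ of \eqref{eqn:rho}, with the exponents recorded by $\hat\sigma$ in \eqref{eqn:sigma hat2} (the conjugate block contributing the negative, reversed exponents). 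After substituting $i_\ell=i_{\tilde\pi(\ell)}$, every $U$-entry in the monomial takes the form $(U^{\hat\sigma_\ell})_{i_{2\ell}\,i_{\tilde\pi\rho(2\ell)}}$, and $\tilde\pi\rho$ preserves parity, so summing over the free even indices closes the monomial into a product of traces indexed by the orbits of $\tilde\pi\rho$ acting on $\{2,4,\dots,4j\}$, namely $\prod_{o\in\mathcal{O}_{\tilde\pi\rho}}\Tr(\prod_{w\in o}U^{\hat\sigma_w})=\prod_{o\in\mathcal{O}_{\tilde\pi\rho}}\Tr(U^{\sum_{w\in o}\hat\sigma_w})$, using that the $U^{\hat\sigma_w}$ along a single orbit commute. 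Pulling the Haar expectation back out and summing over $\pi\in C_{2j}$ yields the claim; this is precisely the chain \eqref{eqn:C2j}, \eqref{eqn:C2j to sigma hat}, \eqref{eqn:sigma hat to orbits}.

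The routine part here is the index bookkeeping. The one place demanding genuine care — and which I would verify explicitly on the $j=2$ case (Example~\ref{example:j = 2}) before trusting the general statement — is the precise combinatorial claim that, for each $\pi$, the monomial in $U$-entries surviving the Wick contractions organizes into exactly the traces along the orbits of $\tilde\pi\rho$. This amounts to checking (i) that $\rho$ faithfully encodes the cyclic contraction structure of the two traces, including the wrap-around factors $(2j,1)$ and $(4j-1,2j+2)$ and the cross-term corresponding to $\hat\sigma_{j+1}=-\sigma_j$; (ii) that $\tilde\pi$ is well-defined, fixed-point-free, and consistent with the Wick deltas; and (iii) that $\tilde\pi\rho$ preserves parity, so that the orbit decomposition of $\{2,4,\dots,4j\}$ is meaningful. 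None of this is deep, but getting the index conventions exactly right — especially in the conjugate block — is where an error would most plausibly enter, so this is the step I would write out most carefully.
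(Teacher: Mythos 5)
Your proposal reproduces the paper's own proof: the same expansion into indexed sums, factoring by independence, Wick's theorem over pairings $\pi \in C_{2j}$, the lift to $\tilde\pi$, the encoding of the trace contraction structure in $\rho$ and $\hat\sigma$, and the closure into traces along the orbits of $\tilde\pi\rho$ on the even indices — i.e. exactly the chain (\ref{eqn:C2j}), (\ref{eqn:C2j to sigma hat}), (\ref{eqn:sigma hat to orbits}). The approach and the points you flag for care (the index conventions in the conjugate block, parity preservation of $\tilde\pi\rho$) match the paper's argument, so the proposal is correct and essentially identical to it.
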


\begin{example}\label{example:j = 2} For $j = 2$ we see that $\rho = (14)(23)(58)(67)$, and (see Figure 5.1)
\begin{align}
\begin{split}
    &\pi = (12)(34), \quad \tilde{\pi} = (14)(23)(58)(67), \quad \tilde{\pi} \rho = (2)(4)(6)(8),  \\
    &\pi = (13)(24), \quad \tilde{\pi} = (16)(25)(38)(47), \quad \tilde{\pi} \rho = (28)(46),  \\
    &\pi = (14)(23), \quad \tilde{\pi} = (18)(27)(36)(45), \quad \tilde{\pi} \rho = (26)(48),
\end{split}
\end{align}
and that $\hat{\sigma}_2 = \sigma_1$, $\hat{\sigma}_4 = \sigma_2$, $\hat{\sigma}_6 = -\sigma_2$ and $\hat{\sigma}_8 = - \sigma_1$. Thus from Lemma \ref{prop:orbits} it follows that
\begin{align}
\begin{split}
    &\mathbb{E} \Big( \text{Tr}\left( H U^{\sigma_1} H U^{\sigma_2} \right) \overline{\text{Tr}\left( H U^{\sigma_1} H U^{\sigma_2} \right)} \Big) \\
    =& \mathbb{E} \left( \text{Tr}U^{\sigma_1} \text{Tr}U^{\sigma_2} \text{Tr}U^{-\sigma_2} \text{Tr}U^{-\sigma_1} \right) \\
    &+\mathbb{E} \left( \text{Tr}U^{\sigma_1-\sigma_1} \text{Tr}U^{\sigma_2-\sigma_2} \right) \\
    &+\mathbb{E} \left( \text{Tr}U^{\sigma_1-\sigma_2} \text{Tr}U^{\sigma_2-\sigma_1} \right) \\
    =& \begin{cases} 2\sigma_1^2 + n^2 + n^2 & \sigma_1 = \sigma_2 \\ \sigma_1\sigma_2 + n^2 + |\sigma_1 - \sigma_2| & \sigma_1 \neq \sigma_ 2 \end{cases},
\end{split}
\end{align}
where the last equality holds for large enough $n$ by Theorem \ref{thm:DiaconisShahshahani}.
\end{example}

\begin{figure}[H] \label{figure:j = 2}
\center
\begin{tikzpicture}[scale = 0.8]
\node at ({2*cos(0)-6},{2*sin(0)}) [circle,fill,inner sep=1.5pt]{};
\fill ({2.5*cos(0)-6},{2.5*sin(0)}) node[] {$3$};
\node at ({2*cos(45)-6},{2*sin(45)}) [circle,fill,inner sep=1.5pt]{};
\fill ({2.5*cos(45)-6},{2.5*sin(45)}) node[] {$2$};
\node at ({2*cos(90)-6},{2*sin(90)}) [circle,fill,inner sep=1.5pt]{};
\fill ({2.5*cos(90)-6},{2.5*sin(90)}) node[] {$1$};
\node at ({2*cos(135)-6},{2*sin(135)}) [circle,fill,inner sep=1.5pt]{};
\fill ({2.5*cos(135)-6},{2.5*sin(135)}) node[] {$8$};
\node at ({2*cos(180)-6},{2*sin(180)}) [circle,fill,inner sep=1.5pt]{};
\fill ({2.5*cos(180)-6},{2.5*sin(180)}) node[] {$7$};
\node at ({2*cos(225)-6},{2*sin(225)}) [circle,fill,inner sep=1.5pt]{};
\fill ({2.5*cos(225)-6},{2.5*sin(225)}) node[] {$6$};
\node at ({2*cos(270)-6},{2*sin(270)}) [circle,fill,inner sep=1.5pt]{};
\fill ({2.5*cos(270)-6},{2.5*sin(270)}) node[] {$5$};
\node at ({2*cos(315)-6},{2*sin(315)}) [circle,fill,inner sep=1.5pt]{};
\fill ({2.5*cos(315)-6},{2.5*sin(315)}) node[] {$4$};

\draw ({2*cos(90)-6.05},{2*sin(90)}) -- ({2*cos(315)-6.05},{2*sin(315)});
\draw ({2*cos(0)-6.05},{2*sin(0)}) -- ({2*cos(45)-6.05},{2*sin(45)});
\draw ({2*cos(135)-6.05},{2*sin(135)}) -- ({2*cos(270)-6.05},{2*sin(270)});
\draw ({2*cos(180)-6.05},{2*sin(180)}) -- ({2*cos(225)-6.05},{2*sin(225)});

\draw[red] ({2*cos(90)-5.95},{2*sin(90)}) -- ({2*cos(315)-5.95},{2*sin(315)});
\draw[red] ({2*cos(0)-5.95},{2*sin(0)}) -- ({2*cos(45)-5.95},{2*sin(45)});
\draw[red] ({2*cos(135)-5.95},{2*sin(135)}) -- ({2*cos(270)-5.95},{2*sin(270)});
\draw[red] ({2*cos(180)-5.95},{2*sin(180)}) -- ({2*cos(225)-5.95},{2*sin(225)});

\node at ({2*cos(0)},{2*sin(0)}) [circle,fill,inner sep=1.5pt]{};
\fill ({2.5*cos(0)},{2.5*sin(0)}) node[] {$3$};
\node at ({2*cos(45)},{2*sin(45)}) [circle,fill,inner sep=1.5pt]{};
\fill ({2.5*cos(45)},{2.5*sin(45)}) node[] {$2$};
\node at ({2*cos(90)},{2*sin(90)}) [circle,fill,inner sep=1.5pt]{};
\fill ({2.5*cos(90)},{2.5*sin(90)}) node[] {$1$};
\node at ({2*cos(135)},{2*sin(135)}) [circle,fill,inner sep=1.5pt]{};
\fill ({2.5*cos(135)},{2.5*sin(135)}) node[] {$8$};
\node at ({2*cos(180)},{2*sin(180)}) [circle,fill,inner sep=1.5pt]{};
\fill ({2.5*cos(180)},{2.5*sin(180)}) node[] {$7$};
\node at ({2*cos(225)},{2*sin(225)}) [circle,fill,inner sep=1.5pt]{};
\fill ({2.5*cos(225)},{2.5*sin(225)}) node[] {$6$};
\node at ({2*cos(270)},{2*sin(270)}) [circle,fill,inner sep=1.5pt]{};
\fill ({2.5*cos(270)},{2.5*sin(270)}) node[] {$5$};
\node at ({2*cos(315)},{2*sin(315)}) [circle,fill,inner sep=1.5pt]{};
\fill ({2.5*cos(315)},{2.5*sin(315)}) node[] {$4$};

\draw ({2*cos(90)},{2*sin(90)}) -- ({2*cos(315)},{2*sin(315)});
\draw ({2*cos(0)},{2*sin(0)}) -- ({2*cos(45)},{2*sin(45)});
\draw ({2*cos(135)},{2*sin(135)}) -- ({2*cos(270)},{2*sin(270)});
\draw ({2*cos(180)},{2*sin(180)}) -- ({2*cos(225)},{2*sin(225)});

\draw[red] ({2*cos(90)},{2*sin(90)}) -- ({2*cos(225)},{2*sin(225)});
\draw[red] ({2*cos(45)},{2*sin(45)}) -- ({2*cos(270)},{2*sin(270)});
\draw[red] ({2*cos(0)},{2*sin(0)}) -- ({2*cos(135)},{2*sin(135)});
\draw[red] ({2*cos(315)},{2*sin(315)}) -- ({2*cos(180)},{2*sin(180)});

\node at ({2*cos(0)+6},{2*sin(0)}) [circle,fill,inner sep=1.5pt]{};
\fill ({2.5*cos(0)+6},{2.5*sin(0)}) node[] {$3$};
\node at ({2*cos(45)+6},{2*sin(45)}) [circle,fill,inner sep=1.5pt]{};
\fill ({2.5*cos(45)+6},{2.5*sin(45)}) node[] {$2$};
\node at ({2*cos(90)+6},{2*sin(90)}) [circle,fill,inner sep=1.5pt]{};
\fill ({2.5*cos(90)+6},{2.5*sin(90)}) node[] {$1$};
\node at ({2*cos(135)+6},{2*sin(135)}) [circle,fill,inner sep=1.5pt]{};
\fill ({2.5*cos(135)+6},{2.5*sin(135)}) node[] {$8$};
\node at ({2*cos(180)+6},{2*sin(180)}) [circle,fill,inner sep=1.5pt]{};
\fill ({2.5*cos(180)+6},{2.5*sin(180)}) node[] {$7$};
\node at ({2*cos(225)+6},{2*sin(225)}) [circle,fill,inner sep=1.5pt]{};
\fill ({2.5*cos(225)+6},{2.5*sin(225)}) node[] {$6$};
\node at ({2*cos(270)+6},{2*sin(270)}) [circle,fill,inner sep=1.5pt]{};
\fill ({2.5*cos(270)+6},{2.5*sin(270)}) node[] {$5$};
\node at ({2*cos(315)+6},{2*sin(315)}) [circle,fill,inner sep=1.5pt]{};
\fill ({2.5*cos(315)+6},{2.5*sin(315)}) node[] {$4$};

\draw ({2*cos(90)+6},{2*sin(90)}) -- ({2*cos(315)+6},{2*sin(315)});
\draw ({2*cos(0)+6},{2*sin(0)}) -- ({2*cos(45)+6},{2*sin(45)});
\draw ({2*cos(135)+6},{2*sin(135)}) -- ({2*cos(270)+6},{2*sin(270)});
\draw ({2*cos(180)+6},{2*sin(180)}) -- ({2*cos(225)+6},{2*sin(225)});

\draw[red] ({2*cos(90)+6},{2*sin(90)}) -- ({2*cos(135)+6},{2*sin(135)});
\draw[red] ({2*cos(45)+6},{2*sin(45)}) -- ({2*cos(180)+6},{2*sin(180)});
\draw[red] ({2*cos(0)+6},{2*sin(0)}) -- ({2*cos(225)+6},{2*sin(225)});
\draw[red] ({2*cos(315)+6},{2*sin(315)}) -- ({2*cos(270)+6},{2*sin(270)});

\end{tikzpicture}
\caption{The pairing $\rho$ is in black, the three pairings $\tilde{\pi}$ in $\tilde{C}_{8}$ are in red.}
\end{figure}
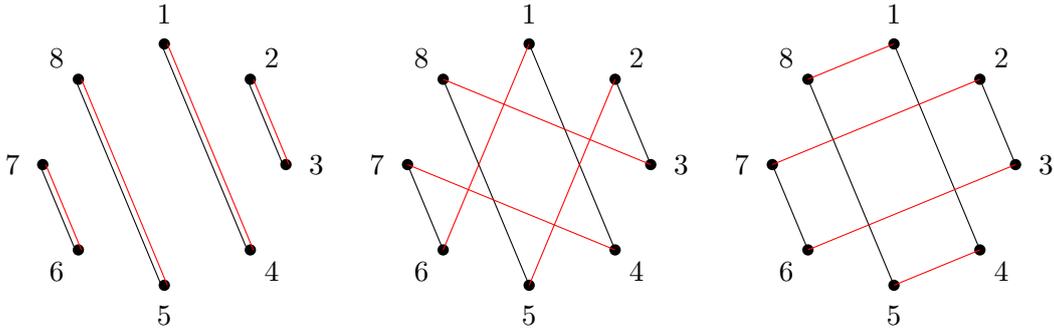

%\section*{Acknowledgments}
%Our work was supported by ERC Advanced Grant 740900 (LogCorRM). Additionally, Isao Sauzedde was funded by the EPSRC grant EP/W006227/1 during the later stage of the writing process. We are most grateful to Jon Keating and Hugo Falconet for very helpful discussions and suggestions, and to Thierry Lévy for his valuable comments.

%now enable appendix numbering format and include any appendices
%\appendix
%\include{appendix1}
%\include{appendix2}

%next line adds the Bibliography to the contents page
\addcontentsline{toc}{chapter}{Bibliography}
%uncomment next line to change bibliography name to references
%\renewcommand{\bibname}{References}
\bibliography{main.bbl}       
\bibliographystyle{alpha}  %use the plain bibliography style

\end{document}